\tikzstyle{line}=[draw]
\tikzstyle{arrow}=[draw, -latex]
\theoremstyle{plain} 
\newtheorem{thm}{Theorem}[section]
\newtheorem{theorem}[thm]{Theorem}
\newtheorem{definition}[thm]{Definition}
\newtheorem{corollary}[thm]{Corollary}
\newtheorem{lemma}[thm]{Lemma}
\newtheorem{conventions}[thm]{Convention}
\newtheorem{observation}[thm]{Observation}
\newtheorem{example}[thm]{Example}
\newtheorem{thesis}[thm]{Thesis}
\newcommand{\myglsentry}[1]{\texttt{\textit{#1}}.\\}
\newcommand{\myglsdesc}[1]{\textit{#1}}
\begin{document}
\xyoption{all}

\title{A Logical Programming Language as an Instrument\\ for Specifying and Verifying Dynamic Memory\\\includegraphics[width=3cm]{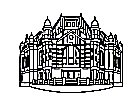}}

\author{
  \IEEEauthorblockN{Ren\'{e} Haberland}
  \IEEEauthorblockA{Saint Petersburg Electrotechnical University "\textit{LETI}"\\
  Saint Petersburg, Russia\\
  email: haberland1@mail.ru\\
  (original compatible with standard GOST R.7.0.11-2011)\\
  (permission to air granted on 29th January 2021; self-translated into English)\\
  \huge{2017}}
}

\maketitle

\tableofcontents

\pagestyle{plain} 

\begin{abstract}
This work proposes a Prolog-dialect for the found and prioritised problems on expressibility and automation.
Given some given C-like program, if dynamic memory is allocated, altered and freed on runtime, then a description of desired dynamic memory is a heap specification.
The check of calculated memory state against a given specification is dynamic memory verification.
This contribution only considers formal specification and verification in a Hoare calculus.\\

Issues found include: invalid assignment, (temporary) unavailable data in memory cells, excessive memory allocation, (accidental) heap alteration in unexpected regions and others.
Excessive memory allocation is nowadays successfully resolved by memory analysers like Valgrind.
Essentially, papers in those areas did not bring any big breakthrough.
Possible reasons may also include the decrease of tension due to more available memory and parallel threads.
However, starting with Apt, problems related to variable modes have not yet been resolved -- neither entirely nor in an acceptable way.
Research contributions over the last decades show again and again that heap issues remain and remain complex and still important.
A significant contribution was reached in 2016 by Peter O'Hearn, who accepted the G\"{o}del prize for his parallel approach on a spatial heap operation.\\

An essential issue of dynamic memory is when two languages are involved: one for describing and another for checking.
Languages are often inhomogenous.
Models that are supposed to correspond are tough to see are plausible.
Languages include functional or other very special, which end up in limitations, e.g., both sides' expressibility.
Other issues regard variability and extensibility of languages involved as well as models.
The results are bloated notations and complex checks, lack of expressibility or it may eventually require specific non-common conventions.

Restrictions include, e.g. the use of symbols (not variables in an imperative sense), expressibility of heap terms, logical conjuncts and less intuitive heap model and rules.
Thus, a verification may become very hard to perform and debug, even with elementary examples due to the full context.

Furthermore, the input programming language and other involved languages, are often (artificially limited) by language-specific features present or absent and become the verification rules.
Universal approaches seem to be hard to implement.
Separation of concerns between all languages involved looks urgent and challenging.
In real-life applications, the input language is inhomogenous and often represented by triads or the Polish-inverse notation, which bloats specification and verification rules.
%
\end{abstract}

\begin{flushleft}
\textbf{Keywords.}
\textit{\textbf{heap logic, points-to, heap specification and verification,
heap conjunction, heap disjunction,
Hoare calculus,
partial heap specification,
dynamic memory,
variables mode,
Pointer calculus,
axiomatic semantics,
Prolog,
Horn-rules,
attributed grammar,
Warren's abstract machine,
semi-structured data,
term transformation,
spatial heap operation,
operation ambiguity,
memory models,
object calculi,
heap graph,
verifier architecture,
unification of processes,
narrowing down languages,
abstract predicates,
completeness by incompletness,
fold and unfold,
static analysis,
multi-paradigmal programming,
parsing as proving.}}
\end{flushleft}

\IEEEpeerreviewmaketitle

\section*{Actuality} 
Errors related to dynamic memory are one of the most expensive and intricate to locate for several decades.
Often the program location where a problem arises, if any, does not correspond to the real.
Consequently, development and approaches in improving error localisation and avoidance are up to date research objectives.

Expressibility of assertions, completeness of rules and automation define success for heap verification.
Assertion expressibility refers to assertion formalism and verification rules.
Recently, it is found descriptions of dynamic memory to be rather intricate and not intuitive.
Logical reasoning improvements towards correctness and completeness often represent isolated implementations that not always cooperate in theory and practice with recent approaches.
Due to an ever-growing number of conventions specification and verification languages, they do
not grow in expressibility opportunities.
Even the opposite is the case.

Verification often remains non-trivial and overwhelming.
All the unique but different languages involved make it counter-intuitive and very challenging to handle in practice.

\section*{Research Objective}
In this work, the expressibility level is increased for describing and checking dynamic memory by narrowing down heap specification and verification, excluding ambiguities and prototypical implementation.

\section*{Dedication}
The work was funded by research grant no.2.136.2014/K from the Ministry of Science of the Russian Federation.
I would like to thank I.L. Bratchikov, N.K. Kosovsky and A.N. Terekhov (Saint Petersburg) for the excellent preparation and motivation given prior to taking the big challenge.
I would like to thank Andrew Pitts (Cambridge), Peter O'Hearn (London), Uday Reddy (Birmingham) and Graham Hutton (Nottingham) for their very motivating talks and valuable discussions.
I would like to thank Paul Bowen-Hugget, Con Bradley, Rob Lougher (Bristol) and Andy Thomason (Oxford) for being excellent working colleagues. Special thank to Con for recommending never to give up.
Thank you and forever in rememberance S.A. Ivanovskiy.
Apart from the need to earn for a living, unfortunately, I had to fight with (inner-)German racism without which this works would not be.

\section*{Declaration}
The research presented is the result of only my own initiative during the 'spare time' whilst working at the same time full-time on an unrelated job.
This thesis is my own work and contains nothing which is the outcome of work done in collaboration with others, except as specified in the text.
It is not substantially the same as any that I have submitted for a degree, and does not exceed 250 pages.

Saint Petersburg, 2017

  
\section{Introduction}
\label{chapter:intro}

This section introduces this dissertation's topic: "\textit{A Logical Programming Language as an Instrument for Specifying and Verifying Dynamic Memory}". It introduces Hoare's calculus, basic definitions and existing approaches.

For comparison and discussion, the \index{quality ladder} quality ladder from fig.\ref{fig:QALadder} is referred to, where the higher level includes the lower level.
This ladder is the author's favourite clustering applying to any computer program.
The more a program obeys the listed \index{quality criteria} criteria, the higher is its quality.
For example, if a program miscalculates a result, completeness and optimality may not be that important anymore.
Though, an essential bit in that example would be correctness.
If software obeys correctness only then, the reached level is \index{basically stable} "basically stable".

Once correctness proves to an acceptable level, additional features may be requested from a function then and only then, including a broader input variable domain.
Once the basic set of input proves to work correctly, all other input variables follow for consideration -- this is the second quality level named \index{completely stable} "\textit{completely stable}".
The second criterion includes the first and therefore is stricter.
If the first two criteria hold, then the considered function is sound and complete from a computability's perspective.
If so, then the third quality level asks for how good a function computes.
Satisfying a given specification of the third level achieves \index{optimal stable} "\textit{optimal stability}".
Completeness \index{completeness} and \index{soundness} correctness remain invariant to optimality.
Optimally stable implementations may include visual effects and processing files but are not considered any further in this work.

\begin{figure}[h]
 \begin{center}
  \begin{tabular}{c|c|c}
    Level & Criterion & Quality Predicate\\
    \hline
    3 & Optimality & optimally stable\\
    2 & Completeness & completely stable\\
    1 & Correctness & basically stable
  \end{tabular}
 \end{center}
 \caption{Prioritised quality ladder}
 \label{fig:QALadder}
\end{figure}

Each of the proposed criteria may be resolved and accompanied accordingly by appropriate metrics to measure achievement.
It appears pointless in practice to discuss, for instance, soundness and optimality when completeness violates.
The reason may be an unsound implementation, which for an essential domain is just not visible and though working correctly, even when all other cases seem to work.
A bad case may require a total rewriting of the considered program.
Alternatively, as a compromise, the input domain may be bound, and the callers' arrangements may be required.
Next, this quality ladder is applied numerous times as an orientation gadget in favour or against discussed features throughout the following techniques and sections.

This work considers imperative and hypothetical programming languages (PL) at a draft stage.
For application purposes, \index{object orientation} object-orientation is the focus.
Hence this section also introduces object calculi (OC), particularly to the \index{theory of objects} \textit{Theory of Objects}.
Next, theoretical as well as practical frontiers on dynamic memory are considered.
Later \index{dynamic memory} representation models such as \index{SA} \textit{Shape Analysis} (SA), \index{RC} \textit{Region Calculus} (RC) and others are discussed.
Separation Logic (SL) is considered essential for the theoretical model of dynamic memory w.r.t. dynamic memory because of the locality property.
Afterwards, an overview is given on proof automation, and its restrictions are discussed.
Hoare calculus is chosen to understand the proofs and their confluency better, and abstraction is introduced.
A short agenda on these competing issues is found in \cite{haberland16-5}.
Neighbouring disciplines like \index{alias analysis} alias analysis (AA) (sec.\ref{sect:AliasAnalysis}), \index{GC} garbage collection (GC) and \index{introspection} code introspection (CI) are briefly introduced.
Existing provers are briefly introduced at the end of this section.



\subsection{Hoare calculus}
\label{sect:HoareCalc}

The \textit{Hoare calculus} is a \index{verification} formal verification method originating back to the computer scientist Charles Anthony Hoare, which allows checking correctness according to a given \textit{specification}, namely, a formal description of a program's properties.
Specifications denote \index{calculation state} calculation states by mathematical formulae since those are believed to be at most precise and where a conclusion is made according to a given set of rules and axioms within a defined course of discussion.

The objective of a program verification using Hoare's calculus is the formal check whether acclaimed properties are correct indeed or not.
By doing so, claims towards quality may be confirmed or rejected.
If a program obeys previously claimed properties, the program may be compared with other programs and other routines to perform a more complex task.

A rule in Hoare's calculus \index{judgement} is a logical judgement of the kind $A \Rightarrow B$ reads as: "\textit{if} "antecedent \index{antecedent} $A$ is true, \textit{then} the \index{consequent} consequence can be derived".
Let us consider fig.\ref{fig:HoareCalc}.

\begin{figure}[h]
 \begin{center}
\begin{tabular}{ccc}
  \inference[(P1)]{A}{B} & \inference[(P2)]{B \vee \neg B}{C} & \inference[(P3)]{\{P\}C\{Q\}}{\{P'\}C'\{Q'\}}\\\\
  \multicolumn{3}{c}{ \inference[(P4)]{\{P \wedge \Pi\}A\{Q\}\qquad (P \wedge \neg \Pi) \Rightarrow Q}{\{P\} \ \texttt{if} \ (\Pi) \ A \ \{Q\}}  }
\end{tabular}
 \end{center}
 \caption{Logical rules in Hoare's calculus}
 \label{fig:HoareCalc}
\end{figure}

Rule $(P1)$ is an \index{axiom} axiom.
Let $A$ locally be an empty antecedent.
The main idea behind an axiomatic system is a check whether the finite application of rules ends in axioms or not.
We do not distinguish between \index{rule!verification} rules and axioms for simplicity from now on since the latter specialises earlier.
Rule $(P2)$ is an axiom if \textit{antecedent} $B \vee \neg B$ can always be derived in analogy to \textit{propositional logic} and can be replaced by \index{truth} "$true$" or just be dropped.
However, initially, no artificial restrictions existed in the Hoare calculus, e.g. regarding judgements.
Predicates may replace assertions.
A \index{logical derivation} logical derivation becomes a \index{judgement} judgement over predicates.
Arbitrary methods may be applied as logical reasoning \index{predicate!first-order} as first-order predicates, for instance, \index{natural deduction} natural deduction \cite{troelstra00}, \index{resolution} Robinson's resolution or the \index{Tableaux method} Tableaux method \cite{davis94}, \cite{troelstra00}.
The named methods are founded upon \index{deduction} deduction.
Besides deduction, \index{abduction} abduction and \index{induction}induction may be used as a verification primitive.
Induction proposes the introduction of not ultimately yet proven assertions that are hard to prove directly but make sense to assume are right in a context.
An induced assertion can be considered as part of the verification rule set.
An inductive rule added may appear strange on the first look but often may not be rejected due to a lack of \index{counter-example} contradiction, though it may be beneficial on a proof.
A classic example of induction is \index{Popper's method} Popper's method of falsification: Given the sentence: "\textit{All swans are white}" may not always be falsifiable in practice because the number of entities to be checked may either be immense or (temporarily) unavailable (in that classic example only in Australia seem to habit black swans naturally).
More general, the problem applied to logic states: occasionally, not all entities may be checked in order to decide a predicate is true or not.
Therefore, an inductive statement may be chosen as a "right" rule until proven differently in the area of discourse.
For example, if our discourse is about Europe only then, we are safe in asserting such an induction unless we extend our discourse to Australia --- this would finally falsify our induction. 
Induction allows bypassing very hard or even unexplainable phenomena within a discourse by introducing a new rule or axiom, in which terms our model seems complete.
So, induction allows us potentially infinite structures to be described and finally be checked by finite formulae.
Whenever at least one counter-example is discovered, we either are required to reformulate our rule set or discard it entirely.

In contrast to that, abduction searches for necessary and sufficient conditions, s.t. the consequence is derivable from a fixed rule set.

A Hoare calculus may be characterised as a \index{deduction} deductive series of judgements applied to an imperative \index{program statement} program statement (as it was proposed by Hoare initially).
The state of calculation describes each derivable judgment before and after executing that particular program statement (see section \ref{Intro:HoareTriple}).
Regarding \index{confluency} confluency of proof (see later, sec.\ref{fig:CRTonHoareTriples}), abduction and induction may be included as proof primitives for the sake of automation.
Otherwise, a missing rule may disrupt a proof unfinished. 
Whenever the reasoned universe is generated by strict application of rules, and there is no interleaving of antecedents, the \index{universe} set of derivable consequences is closed (regarding the description).
For example, the interpretations of possible logical statements from $(P3)$ and $(P4)$ from fig.\ref{fig:HoareCalc} is enclosed.\\

\subsubsection{Hoare Triple}
\label{Intro:HoareTriple}

Hoare \cite{hoare69} introduces a triple for specifying \index{formal verification} the calculation states and the program statement.

\begin{definition}[Hoare Triple]
\label{def:HoareTriple}
A Hoare triple $\{P\}C\{Q\}$ consists of a \textit{precondition} $P$ before and a \textit{postcondition} $Q$ after invoking a program statement $C$.
\end{definition}

Declarative \index{paradigm!declarative} PLs, e.g. \textit{functional}, differ from \index{language!programming} \index{language!imperative} imperative ones by varying calculation states depending on the variables' content.
Unless said otherwise, it is agreed that by default, \index{evaluation ordering} the evaluation order is non-strict.
The \index{paradigm!declarative} declarative programming paradigm differs from the \index{paradigm!imperative} imperative paradigm, for instance, memory use, especially on symbols and variables.
By default, this work considers imperative PLs only as input PLs.
Initially, Hoare recommended an elementary imperative PL close to \textit{Pascal} and the slightly different triple notation $P\{C\}Q$.
Since superfluous parentheses became unpopular, the positions at which parentheses are put switched as introduced at the beginning around $P$ and $Q$.
Both $P$ and $Q$ \index{calculation state} denote the \textit{calculation state} as assertions before and after the execution of $C$.
$C$ may be composed of an arbitrary number of imperative statements.

So it becomes clear why $C$ change the calculation state stepwise.
By (logical) \index{RAM} random-access memory (RAM), we understand \index{register} \textit{a CPU registers file}, \index{stack} \textit{stack} and \index{dynamic memory} \textit{dynamic memory} as depicted in fig.\ref{fig:ProcessSectionLoader}.
\index{Hoare triple} A Hoare triple is interpreted as the following:
Assuming a calculation state is denoted by $P$, and a \index{program statement} program statement $C$ is given, the corresponding computation is terminated and transits into $Q$.
In other words, $P$ and $Q$ describe the calculation step before and after $C$.
If the Hoare triple is sound, then the transition between memory states is sound.
Otherwise, the following reasons may be given for unsound behaviour:

\begin{enumerate}
 \item If $C$ does not terminate, then the rule set is "\textit{incomplete}", and state $Q$ is unreachable.
It can formally be described as $\vdash \{P\}C\{Q\} \rightarrow (\llbracket C \rrbracket \neq \bot) \wedge Q$, where $\llbracket . \rrbracket$ is the denotational semantics for a given statement \cite{allison89}, \cite{abramsky94}, \cite{winskel93}.
 \item The axiomatic rules are incomplete.
 There is no such rule that applies for a given state $P$.
 \item The obtained actual calculation state  $Q$ is not the expected state $Q'$.
\end{enumerate}

Fig.\ref{fig:ProofRulesEx1} shows essential axioms and rules for imperative PLs.
Programs listed in an imperative paradigm may be rewritten in a \index{paradigm!declarative} declarative paradigm as long as both have the equivalent potency of computation model, e.g. by simulating a universal \index{Turing machine} Turing machine.
A \index{CFG}control-flow graph (CFG) may abstract an imperative program. 
Subroutine calls can be performed on common stack \index{stack} architectures.
Hoare's initial calculus \cite{hoare69} does not impose additional restrictions on the expressibility of specifications as this was primarily not addressed, nor is it a topic in his successors' work, e.g. by Apt \cite{apt93}.
Apt suggests a Hoare calculus for single- and multi-threaded application. Furthermore, he suggests a classification of assertions of input and output variables.
Later we are going to assess the advantages and disadvantages of such propositions in more detail.

\begin{figure}[h]
\begin{tabular}{c}
  \inference[(SEQ)]{\{P\}A_1\{Q\}\quad\{Q\}A_2\{R\}}  {\{P\}A_1;A_2\{R\}}\\\\
  \inference[(LOOP)]{\{P\wedge B\}S\{P\}}{\{P\}\texttt{while} \ B \ \texttt{do} \ S \ \{\neg B \wedge P\}}\\\\
  
  \inference[(ASN)]{}{\{P[e/x]\}x:=e\{P\}}\\\\
  \inference[(CONSEQ)]{P' \Rightarrow P \quad \{P\}C\{Q\} \quad Q \Rightarrow Q'}{\{P'\}C\{Q'\}}\\\\

  \inference[(SP)]{R \Rightarrow P\quad \{P\}A\{Q\}}{\{P\}A\{Q\}}\\\\
  \inference[(WP)]{\{P\}A\{Q\}\quad Q \Rightarrow R}{\{P\}A\{Q\}}
\end{tabular}
 \caption{An incomplete rule set for verification}
 \label{fig:ProofRulesEx1}
\end{figure}

For example, the \index{rule!sequential} sequential rule (SEQ) from fig.\ref{fig:ProofRulesEx1} requires if the compound statement $A_1;A_2$ is given with precondition $P$ and postcondition $R$, then in order to prove \index{soundness} correctness, the existence of an intermediate $Q$ shall be shown, s.t., first, $A_1$ with \index{precondition} precondition $P$, and, second, that $Q$ serves as a precondition for the proof of $A_2$.
A sequence's correctness is proven after $A_1$ is proven, and $A_2$ is proven to have the postcondition $R$.

The rule of logical consequence \index{rule!logical consequence} (CONSEQ) serves for demonstration purposes as a second example.
This rule is a generalisation of (SP) and (WP).
Concretisation \index{concretisation} either strengthens a precondition or generalises a postcondition.
If the predicate is "\textit{true}" in general, then for some arbitrary well-defined (sub-)set $V$, the predicate is also true $\forall v \in V$.
Assume, $V_1 \subseteq V$.
$V$ is a \index{generalisation} \textit{generalisation} of $V_1$.
In this case, $V_1$ is a \textit{concretisation} of $V$.
It can easily be grasped the \index{implication} implication $V \Rightarrow V_1$ holds.
However, $V_1 \Rightarrow V$ does, in general, not hold.
W.l.o.g. the correctness of a predicate may be concluded for a concrete case.
Therefore, rules (SP) and (WP) hold.

When unifying (SP) and (WP) (CONSEQ) may be obtained.
However, renaming must be taken into consideration carefully.

As a third example, the \index{rule!loop} loop rule (LOOP) can be considered a critical feature in any minimalistic and Turing-mighty PL.
This rule states that if we have a loop as a program statement with $P$ as precondition and $Q$ as postcondition, then in order to prove correctness, it is sufficient to prove the correctness of the loop's body $S$ accordingly with $P$ as precondition and $\neg Q$ \index{expression negation} as postcondition, where $\neg$ negates the postcondition.

It is worth noting that the loop from (LOOP) and the \index{rule!assignment} \textit{assignment rule} both together allow expressing all other kinds of loops due to expressibility aspects (cf. fig.\ref{RulesLoopReplacements}).

\begin{figure}[h]
\begin{center}
\begin{tabular}{c}
  \inference[(REP)]{\{P\}S\{P'\} \qquad \{P'\}\texttt{while} \ B \ \texttt{do} \ S \ \{Q \wedge \neg B \}}{\{P\}\texttt{do} \ S \ \texttt{while} \ B\{Q \wedge \neg B\}}\\\\
  \inference[(FOR)]{\{P\}i:=1; \texttt{while} \ B \ \texttt{do} \ (S;i:=i+1) \ \{Q\}}{\{P \}\texttt{for} \ i \ \texttt{from} \ 1 \ \texttt{to} \ n \ \texttt{do} \ S \;\{Q \}}
\end{tabular}
\end{center}
 \caption{Loop rules replacing \texttt{while}}
 \label{RulesLoopReplacements}
\end{figure}

For simplicity, let us agree upon (FOR) $n$ denotes the number of iterations that is known before executing the loop, $i$ denotes a new variable in $P$ and $Q$.
If $i$ was not fresh, then another new variable name should be chosen according to all stacked variables' visibility scope.
This issue is profoundly investigated to a full extent within the so-called "\index{numeral logic} Numeral Logic in \cite{pitts02}, \cite{debruijn72}.
So, (REP) may be transformed w.l.o.g. into (LOOP) and vice versa.
However, in general, (FOR) should be transformed only unidirectional since for the opposite transformation, only a pre-calculated integer should bound the number of iterations which is often not the case, therefore (FOR) should then not be used. 
In the case of primitive recursion, it actually may not even be the case.
An in-depth insight into generalised \textit{$\mu$-recursions} and termination issues with mutual recursion is best considered in \cite{bekic84}.

A \index{conditional branch} conditional branch operator is not considered separately because w.l.o.g. may entirely be replaced by (LOOP).
Further reading on the minimalistic but complete PL "\textit{PCF}" \cite{plotkin77}, \cite{cohn83} is recommended.
The essential concept of simulating Turing-computable functions lies in a minimalistic set of program statements.
The crucial difference between a simple statement and a loop is its body is repeated zero or more times, and assigned variables in the body change every time the body is entered.
All variable \index{data dependency} dependencies need to be analysed to catch all modifications in a basic block (BB) until the loop exits.
Often this requires a profound abstraction.
Usually, this can be done manually only.
Such an abstraction also requires auxiliary variables to be introduced to keep the invariant part of that body description smooth and straightforward.
Therefore, the description approximates some general target function containing target variables (cf. fig.\ref{fig:CFGEx1}), making up the \textit{loop invariant}.
The generation of an \textit{invariant} often is not a trivial part and can often not be decided.

For this reason, a manual refinement of the invariant specification is vital.
The postcondition of the loop considered represents an invariant.
In analogy to a fixed mapping \index{projective geometry} from projective geometry, when one point remains fixed, a loop invariant is an assertion that remains invariant regardless of how often the loop is iterated.
The invariant formula $\Phi$ must obey $\Phi \circ Y = Y \circ \Phi \circ Y$, where $Y$ is a \index{fix-point combinator} \textit{fix-point combinator} and $\circ$ a binary function concatenation.
$Y$ is some syntactic notation simulating repetition.
Its objective is to define some search minimum, widely used with \index{$\lambda$-calculus} $\lambda$-calculi (for more details, refer to \cite{barendregt93}).
Alternatively, but equivalent w.r.t. expressibility, Kleene's \textit{$\mu$-operator} \cite{davis94} may implicitly be applied to variables to be minimised, s.t. a predicate still holds.

Let us consider the example from fig.\ref{fig:ExampleSimpleGCD}.

\begin{figure}[h]
\begin{center}
\begin{minipage}{6cm}
\begin{verbatim}
a:=0; b:=x;
while b>=y do  b:=b-y; a:=a+1; do
\end{verbatim}
\end{minipage}
\end{center}
 \caption{Code example for the remainder of two integers}
 \label{fig:ExampleSimpleGCD}
\end{figure}

Here, $a \cdot y + b = x^b \ge 0$ may be invariant since this equality describing the \index{remainder} remainder on \index{division} division by integer $y$ does not alter.
$y$ is the divisor of $x\ge 0$, $a$ an integer, and $b$ is the remainder of $x$ divided by $y$.
The \index{rule!assignment} assignment rule (ASN) denotes: if a well-defined expression $e$ is assigned to a compatible variable $x$, then before and after assignment, the state in $P$ remains without changes.
The state before assigning requires a variable binding extension, including $x$.
In case of a \index{name clashing} name clashing, it is required to perform a renaming first in the specification.\\

\index{completeness} \textit{Completeness} depends on the completeness of \index{Hoare triple} Hoare triples of kind $\{P\}C\{Q\}$, which depends on complete coverage of $C$ and all fully covered preconditions of $P$ (cf. def.\ref{def:HoareCalculusCompleteness}).
Postconditions like $Q$ are logical consequences if met by $P$ and $C$.
Since verification rules may be applied in potentially any order they occur, the question arises: Can one rule accidentally or intentionally exclude or overlap any other rule?
It can.

Nevertheless, the answer on how to decide to do this automatically does not seem too trivial.
Naturally, if a given $Q$ and given program statement $C$, different preconditions $P_1$ and $P_2$ are inferred, then this clearly shows the rule set is not sound.
Besides correctness and completeness, according to the \index{quality ladder} quality ladder from fig.\ref{fig:QALadder}, optimality matters most.
So, the question related to an effective rule set concerns compactness and simplicity as well.
If the question is about automation, then the question of the chosen architecture is essential too.
This section shows that the invariant generation may be tricky because an accurate and the most general conclusion requires all variables modified in the loop body and could include all referred variables in memory.
Let us also pay attention to automatically allocated variables with a different \index{visibility scope} scope than dynamically allocated variables.
Moreover, it can be noticed the comparison between a given and expected specification may require some previously agreed formalisation, which enables us to decide equality.\\

\subsubsection{Logical Reasoning}
\label{Intro:LogicalReasoning}

The rule $(P1)$ from fig.\ref{fig:HoareCalc} represents the most general logical rule ever.
Verification means a formalised process (see obs.\ref{obs:DeductionWithBacktracking}).
A \index{verification result} \index{verification} verification is either valid or not.
For the latter, when execution is undetermined, it is the same as when the postcondition is not specified or a statement does not terminate.
Otherwise, verification is false.
When applying rules, \index{proof!unfolding} the antecedent may branch into several (sub-)verifications, which all need to be verified.
Hence, any formal verification is \textit{tree-structured}.

\begin{definition}[Input Programming Language]
\label{def:IncomingProgrammingLanguage}

\index{language!programming} A \textit{PL} is a \index{language!formal} formal language whose \index{word} words are programs consisting of statements.
The execution of a \index{program statement} program statement may alter \index{memory state} memory (stack, heap).
By default, a C-like imperative PL with object-oriented extensions is considered, though not \index{C++} C++ nor Peyton-Jones' C\--\--.
\end{definition}

An imperative language is selected for several reasons.
C or Java are among the most popular PLs according to the TIOBE index, and why redefining apriori more and more new languages?
If there is no need, then it would certainly be better not.
However, the difficulty seems to decide whether this is the case or not.
Second, in this work, a C-dialect is chosen to minimise primarily (re-)defining legacy syntax and semantics from a fragment of existing languages.
Naturally, C is platform dependent -- this seems to be a disadvantage.
However, the dependency will be discussed further to get a qualitative judgement on this possible issue.
The selected C-dialect is a subset of ISO C99 (cf. fig.\ref{ScreenshotGUI}, see \cite{haberland19-1}, \cite{haberland16-6}).
As being PCF-compliant \cite{plotkin77}, it supports a basic while-loop, as discussed in fig.\ref{RulesLoopReplacements}.
It supports object calculus type (1) according to fig.\ref{fig:ObjectCalculi}

\begin{definition}[Specification Language]
\label{def:SpecificationLanguage}

\index{language!specification} Any \textit{specification language} for dynamic memory considered here is formal.
This language refers to variables and the input language's features, \index{symbol} \textit{symbolic expressions}, \textit{quantifiers} and \textit{auxiliary program units} for facilitating proofs.
A specification language must follow apriori agreed \textit{formal logic}.
In contrast to an input PL, the specification language is of \index{paradigm!declarative} declarative paradigm, but indeed not of imperative paradigm (cf. sec.\ref{sect:PrologAsReasoningSystem}).
\end{definition}

\begin{figure}[h]
 \begin{center}
   \begin{tabular}{c}
     \inference{\inference{$A_1$ \qquad \inference{\texttt{false \textbf{(!)}}}{A_2}}{A_3}\qquad \inference{\inference{\texttt{true}}{B_1}}{B_2}}{B}
   \end{tabular}
 \end{center}
 \caption{Example of a proof refutation}
 \label{fig:ProofTreeEx1}
\end{figure}

A specification language describes each BB of the associated CFG (cf. fig.\ref{fig:CFGEx1}) the calculation state based on memory (see fig.\ref{fig:ProcessSectionLoader}).
Let us choose an arbitrary \index{derivation tree} derivation tree from fig.\ref{fig:ProofTreeEx1} with the assertions $\{A_1,A_2,A_3,B_1,B_2,B\}$.
Initially, the \index{Hoare triple} Hoare triple $B$ ought to be shown accordingly to def.\ref{def:HoareTriple}.
Thus, the given rule is applied.
The antecedent shall contain $A_3$ and $B_2$, both of which need to be proven separately using respective replacements and strictly according to given rules, (e.g. local symbols), shall be generated using $B$ as its \textit{consequent}.

Furthermore, applying the rules, we show that $B_1$ is a precondition to $B_2$ and $B_1$ according to the rule is a tautology, e.g. $\{n=0 \wedge n \ge 0\}a=5;\{n=0\}$ whenever it becomes evident that $a$ \index{variable!free} is not bound (but a \textit{free} variable with $n$).
Next, $A_3$ is proven.
However, this would imply that $A_2$ contradicts itself.
This condition is enough in order $A_3$ to derive to false and as such $B$ to the same.
So, we have just shown $B$ is incorrect, and the reason for that lies in $A_2$.
In the example, there is no need to track further down the proof on $A_1$, even if it is correct or not.

\begin{definition}[Logical Consequence]
\label{def:LogicalJudgement}

A \textit{logical consequence} $A\vdash B$ denotes an assertion $A$ to which some given rule from a given axiom system is applied once.
It leads to assertion $B$ (according to Frége \cite{frege}).
If $B$ is obtained after rule application several times, including none at all, the consequence is noted as $A \vdash^{*} B$.

If we want to express that, a triple $A$ is always valid under a given rule set $\Gamma$.
Then we use $\models A$ if it is clear from the context that only $\Gamma$ is considered or $\models_{\Gamma} A$.
\end{definition}

\begin{observation}[Proving as Searching]
\label{obs:ProofAsSearching}
For a given rule set $\Gamma$ a proof of some theorem $B$ in Hoare calculus can be formulated as a \textit{search for axioms}, so $\models_{\Gamma} B$.
\end{observation}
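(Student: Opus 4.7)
The plan is to recast a Hoare-calculus derivation as a finite proof tree and then observe that constructing such a tree is, by definition, a search problem whose success condition is that every leaf be an axiom of $\Gamma$. First, I would recall from the preceding discussion that any formal verification under $\Gamma$ is tree-structured: each application of an inference rule of $\Gamma$ to a consequent $B$ produces a finite (possibly empty) set of antecedents to be discharged, and the process iterates on each antecedent. A valid derivation of $B$ is therefore a finite tree whose root is labelled $B$, whose internal edges are rule applications, and whose leaves are instances of rules with empty antecedent, i.e.\ axioms in the sense of $(P1)$.

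Next, I would translate the semantic predicate $\models_{\Gamma} B$ into the syntactic statement ``there exists such a tree''. By def.\ref{def:LogicalJudgement}, $\models_{\Gamma} B$ is witnessed by a finite chain of $\vdash$-steps from axioms of $\Gamma$ to $B$; because rules such as $(P3)$, $(P4)$, $(SEQ)$ and $(LOOP)$ branch on several antecedents, this chain naturally organises into the tree just described. The two views are interdefinable, so $\models_{\Gamma} B$ holds if and only if some such finite tree exists.

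Having established this equivalence, I would describe the search explicitly. Starting from $B$ as an open goal, at each stage pick a rule $r \in \Gamma$ whose consequent matches the currently selected open goal, replace the open goal by $r$'s antecedents, and iterate until all open goals are discharged by axioms (success) or no rule applies (failure on that branch). A successful branch exists if and only if $\models_{\Gamma} B$, which is precisely the content of the observation.

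The main obstacle I expect is the treatment of meta-level non-determinism: rules such as $(SEQ)$ carry an existentially quantified intermediate assertion $Q$, and $(CONSEQ)$ is applicable at every step, so backward search faces both unbounded branching and potentially infinite derivation depth. Since the observation only asserts \emph{expressibility} of proving as a search, not its decidability, I would finish by remarking that a fair enumeration over rule instances with unification of schematic meta-variables suffices for the stated equivalence; questions of termination and efficiency are deferred to the later sections on proof automation.
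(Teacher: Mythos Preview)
Your proposal is sound and considerably more detailed than what the paper actually supplies. In the paper this statement is labelled an \emph{observation}, not a theorem, and it is not given a proof in the usual sense: the text immediately following it consists of a few informal remarks clarifying that $\Gamma$ is taken to be a well-defined and sound Hoare calculus (referring back to def.~\ref{def:IncomingProgrammingLanguage} and def.~\ref{def:LogicalJudgement}), noting that the formulation is adopted ``for the sake of a better understanding and approximation'', and pointing forward to thes.~\ref{thes:ReasoningAsProving} as the place where the idea is developed. There is no explicit construction of a proof tree, no explicit backward-search description, and no discussion of the existential meta-variables in $(SEQ)$ or $(CONSEQ)$ at this point.

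Your argument---recasting derivations as finite trees rooted at $B$ with axiom leaves, identifying $\models_{\Gamma} B$ with the existence of such a tree, and then reading off goal-directed search---is exactly the standard justification one would give, and it is consistent with how the paper uses the observation later (e.g.\ fig.~\ref{fig:ProofTreeEx1} and obs.~\ref{obs:DeductionWithBacktracking}). Your closing caveat about non-determinism and decidability is also appropriate and matches the paper's own deferral of these issues. In short: your proposal is correct and more thorough than the paper's own treatment, which simply asserts the observation and moves on.
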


First, we intentionally do so for the sake of a better understanding and approximation, not exclude possible inaccuracies in the rule set to be considered.
For example, referring to $\Gamma$, we also refer to a  \index{formal logic} formal logic, which consists of a closed set of dependencies regarding given rules and a \textit{carrier set} and base logical constants.
By default, we agree upon any arbitrary rule set used for logical reasoning.
We insist on a well-defined and sound Hoare calculus with triples according to def.\ref{def:IncomingProgrammingLanguage} based on def.\ref{def:LogicalJudgement}.
This observation is a precursor to syntax definition as proof (see thes.\ref{thes:ReasoningAsProving}).

\begin{definition}[Soundness of a Hoare Calculus]
\label{def:HoareCalcCorrectness}
A Hoare calculus is \textit{sound} whenever cases are guaranteed to be excluded when a syntactical sound program $C$ with a rule set $\Gamma$ derives (semantically) different results.
\end{definition}

If for one valid logical reasoning results in $B_1$, and another valid reasoning results in $B_2$, although neither of which may derive from the other, then $\Gamma$ is not sound.
This derivation is denoted as
$\{P\}C\{Q\} \vdash^{*} \{P1\}C_1\{Q1\}$ and $\{P\}C\{Q\} \vdash^{*} \{P1\}C_2\{Q2\}$ where $\{P1\}C1\{Q1\} \nvdash^{*} \{P2\}C_2\{Q2\}$ and $\{P2\}C2\{Q2\} \nvdash^{*} \{P1\}C_1\{Q1\}$ (see fig.\ref{fig:CRTonHoareTriples}).

The theorem just formulated upon a Hoare calculus is often called \index{diamond property} \textit{diamond-property} due to its shape and originates back to rule-based rewriting systems and $\lambda$-calculi \cite{barendregt93}.
That theorem goes back to \index{Church-Rosser theorem} Church-Rosser, hence called CRT (see \cite{peirce10}).
If at least one assertion whilst reasoning may result in ambiguity, then the entire rule set $\Gamma$ is \textit{not sound} (see def.\ref{def:HoareCalculusCompleteness}).
\index{confluency} \textit{Proof confluency} --- a severer restriction than soundness, is not always obeyed in practice, for instance, due to general non-decidability on the grounds of non-termination (see \cite{steinbach94}).
Soundness is a required precondition to confluency.
If some calculus is not sound, then at least one case exists, so two or more different results may be derived -- and that is certainly not correct.
According to Steinbach's \cite{steinbach94} \index{term rewriting system} \textit{term rewriting system} \cite{baader98}, decidability on termination strongly determines confluency.
This outcome is a bold statement.
He represents logical rules as term rewriting rules and uses bound \textit{partially sets} to prove termination of his newly obtained rewriting system. However, that system does not work for all cases since termination is only semi-decidable due to foundational limitation.
It can be noticed that the approximation of an upper bound for the reasoning is in general not bound due to \index{term!self-applicable} self-application or to recursive symbolic terms \cite{plaisted85}.
The idea of a \index{descending chain} descending \index{poset} poset chain is strongly intertwined with \index{domain} Domain Theory's concept \cite{scott76}.
Steinbach applies poset chains for determining the least upper bounds on rule sets to decide termination, which is a precondition to soundness.

\begin{figure}[h]
\begin{center}
\begin{tabular}{c}
\xymatrix{
  & \{P\}C\{Q\} \ar@{~>}[dl]^{\vdash^{*}} \ar@{~>}[dr]^{\vdash^{*}} & \\
  \{P1\}C_1\{Q1\} \ar@{~>}[dr]^{\vdash^{*}} &  & \{P1\}C_2\{Q2\} \ar@{~>}[dl]^{\vdash^{*}}\\
  & \{P3\}C_3\{Q3\} &
}
\end{tabular}
\end{center}
 \caption{Church-Rosser theorem applied to Hoare triples}
 \label{fig:CRTonHoareTriples}
\end{figure}
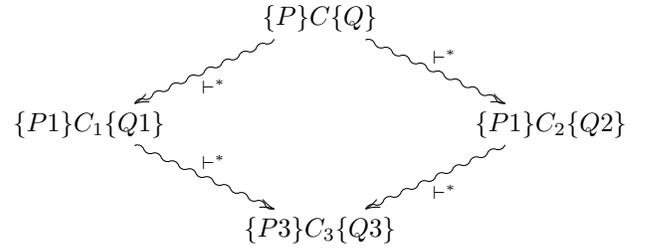

Cook \cite{cook78} investigates both, soundness and completeness of Hoare calculi as competing quality measures.
It is worth remarking that tiny changes towards one objective may heavily change the other objective, for example, in modifying variables in procedures.
Cook's sense of completeness is the full coverage of some function's domain, so a \textit{total mapping} of all input programs, including those earlier mentioned in \cite{apt81}.
"\textit{Soundness}" is defined as an equivalence between \textit{observed and expected behaviour} as recommended by \cite{davis94}, \cite{nielson99}.
There it is defined by \index{operational semantics} operational semantics over Hoare triples.
\index{automaton!abstract} Abstract automata then interpret calculations.
A convenient solution by Cook means either a complete or a sound calculus.
His propositions to overcome some imperative input PL restrictions are supposed to back the objective up, such as:

\begin{itemize}
 \item \textbf{Restriction 1} \index{variable!global} Global variables are allowed.
 However, it is strictly prohibited to pass them to procedures as actual parameters.
 
 \item \textbf{Restriction 2} \index{parameter!by call} \index{parameter!by reference} Parameters by call (or by reference) are forbidden.

 \item \textbf{Restriction 3} No recursive procedures and no \index{functional} \textit{functionals} (\textit{higher-order functions}).
\end{itemize}

As some of those essential restrictions may come as a surprise (but they undoubtedly exist after all), how can disobedience of those lead to unsoundness or incompleteness, since those restrictions must be taken into consideration when designing a PL most fit for a heap verification?
A classic stack requires \textit{parameters by-reference} to address where to return once a stack window is fully processed. 
As soon as a return is initiated, all existing local variables stop existing.
A pass by reference allows altering a unique content at different places among the \index{stack!window} stack windows, but the \index{pointer} pointer to an object may change during a subroutine call.
If we permit \textit{parameters by call} or recursive functions, then data may alter but not how the stack is processed on calls.
For example, \textit{globals} may alter almost everywhere.
In such cases, "na\"{\i}ve" specifications may quickly become invalid in general.
Invalidation may be excluded by restricting input and output parameters to a procedure.
\index{hierarchical specification} Hierarchical specifications \cite{schwinghammer09}, \cite{birkedal06} may blow up specifications.
Moreover, since those specifications may occur anywhere in a program, the use of finer-grained specifications might diminish quickly to nought, just because the program's location may dramatically change and expand even more level.
Consequently, the depth of those nested specifications increases too.
So, in the end, finer modularity may become useless in practice.
For example, built-in procedures would need to be fully specified.
Furthermore, objects may lose their identity and closure because passing a procedure as a parameter may finally lead to unpredictable behaviour towards a class-objects memory layout.\\

Clarke \cite{clarke79} outlines only limitations still a problem to date w.r.t. Hoare calculi.
Rigid constraints refer to:

\begin{itemize}
 \item \textbf{Restriction 4} \index{expressibility} Expressibility and \index{completeness} completeness of the \textit{input PL}

 \item \textbf{Restriction 5} Loop \index{invariant} invariants (e.g. transformation)

 \item \textbf{Restriction 6} \index{recursion} Recursive procedures which do (not) use globals and \textit{static variables}
 
 \item \textbf{Restriction 7} \index{co-procedure} Co-procedures as input and output parameter to a procedure
 
 \item \textbf{Restriction 8} \index{variable!dynamic} Dynamic allocation and deallocation (also see \cite{raman12})
\end{itemize}

Both Cook and Clarke understand guarantees under soundness, s.t. all syntactical sound theorems derive correctly and all unsound theorems false.
As any formal system, Hoare calculi underpin theoretical limitations.
For example, \index{theory of integers} one can always define different theories on integers which will eventually be sound but cannot be proven under the surrounding proof facilities if, e.g. certain axioms are modified or dropped.
This phenomenon is better known under \index{theorem! G\"{o}del's} \textit{G\"{o}del's incompleteness theorem}.
Clarke notices as Cook before him that if a Hoare calculus contains a non-terminating cycle, then there must be a (mutual) recursion in the rule set.
He suggests banning \index{alias} aliasing pointers (see later), and for the sake of soundness to ban unbound recursion.
Clarke's problem may be relaxed if substitution rules in the Hoare rules are applied from outermost to the innermost direction (this corresponds to lazy evaluation).
At the design level of a PL as an input language, exactly this \index{evaluation ordering} evaluation order should be taken into account, ideally.

PLs with \index{lazy evaluation} strict evaluation, like \index{OCaml} OCaml, in contrast to non-strict languages like \index{Haskell} Haskell, should ideally extend its parameter passing method and ABI.
According to Clarke, expressibility always primarily refers to the assertion language, which can be noticed from his work effectively.
Restriction 6 on recursive procedures may be excluded in two cases.
(i) When recursion steadily terminates, so a \index{descending chain} descending chain always exists \cite{steinbach94}.
(ii) parameter passing ordering and parameter \index{typing} type checking \cite{cardelli96-2} on a procedure call must always precisely match between caller and callee, and \index{variable!automated} variables from the current stack window shall permanently be excluded, so non-global and non-static variables, for instance.

Kaufman \cite{kaufmann04} considers expressibility and practical need as the most critical problems that hinder a popular currently vital break-through of \index{formal method} formal verification based on Hoare calculi, especially w.r.t. specification and PL.
He sees in \index{induction} induction and abstraction the single most potent verification techniques.
To distinguish better between a calculation error committed by an input program and an inaccurate specification, he urges a most generic method for constructing \index{counter-example} counter-examples automatically. 
His motivation is to automatically also derive a concrete example as a counter-example for better understanding and faster troubleshooting.

Gerhart \cite{gerhart76} compares existing tools to that time, which today are still not resolved nor resolved satisfactorily.
To name the most important open problems:

\begin{enumerate}
 \item Lack of generalised approach in decidability of program \index{termination} termination
 \item \index{variable!static} Support for distinct variable modes (like static, globals, cf.\cite{clarke79})
 \item A handy usage and generalised IR of all components of a verification system
\end{enumerate}
She notices \index{inductive definition} inductive definitions are an exemplary apparatus to numerous issues simultaneously, e.g. with problems on expressibility.
A further insight makes one agree with Gerhart's remark in practice.
Gerhart insists on simple proofs must primarily refer to universal and global assertions.
From \cite{clarke79} also follows that problems from (ii) are defying and of global nature, and it can hardly be believed that a tiny modification in a Hoare calculus will suddenly allow a solution.

\begin{definition}[Completeness of a Hoare Calculus]
\label{def:HoareCalculusCompleteness}
According to its rules, a \index{Hoare calculus} Hoare calculus is \textit{complete} whenever syntactically well-defined program \index{soundness} soundness can be shown, and unsoundness of not well-defined programs can be rejected.
\end{definition}

In case of lack of at least one rule till termination, the calculation is considered \textit{incomplete} and the verification in total \textit{undetermined}.
In practice, a tiny modification in an (input) program (or specification) may already significantly change a program's behaviour \cite{clarke79, cook78, cook71}.
Not surprisingly, this can happen and witnesses a high complexity of the verification has to take into consideration.
Example 1: According to restriction 6, the absence of static variables and recursive procedures can still lead to a complete program, which depends on the program fragment is considered.
Example 2: The completeness of program verification with \index{inner procedure} internal procedures is violated whenever restriction 6 or restriction 7 is violated.
Whenever two rules lead to two different results, state $A$ implies $A \vdash B_1$ and $A \vdash B_2$, where $B_1$ and $B_2$ are syntactically different, but \index{diamond property} the diamond-property is obeyed from fig.\ref{fig:CRTonHoareTriples}, then $B_1$ and $B_2$ denote just intermediate states and both states confluence.
From a practical perspective, a full \index{confluency} confluency check may, in general, become quite complicated due to the exponential rise of the number of sub-verifications to be performed.
This problem may significantly be reduced by determining all rules.
Clarke also advises imposing the following restriction in order to fight incompleteness:

\begin{itemize}
 \item \textbf{Restriction 9} Introduction of partially-computed data-structures.
\end{itemize}

For \index{infinite data structure} partially-computed (possibly infinite) data structures, all \index{object field} fields are calculated only on accessing those (this is a lazy evaluation).
An example is taken from \cite{thompson97} in \index{Haskell} Haskell and deals with \index{linear list} linear lists.

\begin{center}
\begin{tabular}{l}
\begin{minipage}{7cm} 
\begin{verbatim}
 take 10 [ (i,j) | i <- [1..], let k = i*i, j <- [1..k] ]
\end{verbatim}
\end{minipage}
\end{tabular}
\end{center}

, which computes
$$\texttt{[(1,1),(2,1),(2,2),(2,3),(2,4),(3,1),(3,2),(3,3),(3,4),(3,5)]}.$$

However, this \index{linear list} linear list definition, as \index{\texttt{take}} \texttt{take}'s second argument, does not have an upper bound.
The obtained data structure only has a lower bound (the integer 1).
Thanks to the \index{infinite data structure} partially-computed data structure, a procedure's termination behaviour to be investigated on each occasion individually may finally terminate in case of \index{strict evaluation} non-strict evaluation.
"\textit{strict}" means a given procedure evaluates all incoming parameters \index{parameter!incoming} first and only then starts to \index{stack!push} store to memory.
Non-strict \index{non-strict} procedures mean \index{parameter!incoming} incoming procedure parameters are evaluated partially and only if needed in the current algorithm (see \cite{thompson91}, \cite{thompson97}).
Thus, for instance, \index{termination} termination may be checked using a potentially infinite input data structure as a quick test.

Wand \cite{wand76} understands under a \textit{complete function} the same as \index{Cook's completeness} Cook does.
Namely, every sound input parameter corresponds to a sound output parameter, and that every unsound input parameter corresponds to an error.
The transition function between input and output
parameters must be \index{totality} total and \index{partial correctness} non-termination apriori excluded (see def.\ref{def:HoareCalcCorrectness}).
Wand shows that when a program's specification is represented immediately by a \index{CFG} CFG, then the specification may not be complete in general or just not always defined properly whenever \index{higher-order logic} higher-order predicates are used.
Wand urges \index{predicate!quantified} quantified predicates used in specifications must give an intuitive explanation in the \index{pattern matching} best way possible, s.t. a human reading that specification has a full understanding of the predicate and its dependencies immediately.

Cook \cite{cook71} compares the SAT-problem of boolean functions  \index{boolean denotation} \index{SAT-problem} with theoretical estimates on associated problems of formal verification.
The paper gives an overview of the theoretical complexity involved.
From a practical point of view, this paper, unfortunately, is futile for two reasons.
First, the established levels are too coarse and, therefore, cannot be helpful in practice.
Second, the paper's main objective seems to be epistemological rather than have concrete results that could be applied in practice, not even as a design recommendation for future projects.

Landin \cite{landin64} suggests turning formal calculi elements in an expression containing \index{$\lambda$-term} $\lambda$-terms as it was proposed \index{Church typing} initially.
A Hoare calculus is indeed a formal calculus.
Landin demonstrates, by example, \index{conditional branch} conditional branches and \index{recursion} recursion are tractable and tractable elegantly.
From a computability perspective, this is already sufficient.
Moreover, he shows, \index{paradigm} \index{paradigm!functional} functional programs \cite{thompson97}, \cite{bird88} may be represented in an imperative program with the help of \index{closure} "\textit{closures}" which are well-founded on \index{operational semantics} operational semantics.
Generalised \index{model of computation} models of computation are presented that attract Hoare calculi as well.\\

The best practices in verification in Hoare calculi gained over past decades is carefully analysed by Apt \cite{apt81} towards \index{completeness} soundness and \index{soundness} completeness of a \index{C} C-like dialect and with nearly minimal modifications \index{program statement} on all aspects considered, mainly procedures and variable modes.
In addition to all reviewed papers, Apt sees general recursion as the central issue in divining the matching calculation state of an executed program w.r.t. specification.
Hence, he suggests restricting ourselves to \index{recursion!primitive} primitive recursion whenever possible to reduce the overall verification complexity.
Also, Apt recommends restricting procedure calls to those in which actual and \index{parameter!formal} formal parameters match and in which so-called "\textit{incorrect}" parameters are \index{parameter!incorrect} excluded because those can lead to a series of anomalies.
Harmless looking program features were proven wrong and therefore are suggested for a ban: 
for example, \index{uninitialised} uninitialised fields, \index{functional} functionals gained by undermined parameters can even totally change a function's semantic and syntax but at the same time does not extend expressibility in its foundation.
\index{parameter cutting}

Cook \cite{cook78} and authors already consider two problems as the most important:

\begin{itemize}
 \item \textbf{Completeness 1} \index{non-termination} Procedure Non-Termination.
 \item \textbf{Completeness 2} \index{expressibility} Expressibility restrictions \index{language!assertion} in the assertion language (for instance, predicates \index{invariant} and invariants).
\end{itemize}

A sound and \index{Cook's completeness} complete example is chosen in fig.\ref{fig:SoundNCompleteExample} to discuss modifications and their effect on completeness and soundness.
$A_j$ denotes program statements, $D$ denotes a declaration block of local variables, $\sigma$ denotes a variable environment and $\star$ denotes \textit{Kleene's star}.
$\sigma$ is of type "$variable\ name \rightarrow value$".

\begin{figure}[h]
\begin{center}
\begin{tabular}{c}
  \inference[(VAR)]{P\ y/x\{begin\ D^{*};\ A^{*}\ end\}\ Q\ y/x}{P\ \{begin\ new\ x;\ D^{*};\ A^{*}\ end\}\ Q}\\\\
  \inference[(SEQ)]{P > R,\ R\{A\}S,\ S > Q}{P\{A\}Q}\\\\
  
  \inference[(ITE-1)]{P\{A\}Q,\ Q\{begin\ A^{*}\ end\}R}{P\{begin\ A;A^{*}\ end\}R}\\\\
  \inference[(ITE-2)]{}{P\{begin\ end\}P}\\\\
  
  \inference[(CON)]{P\ \& R\{A_1\}Q,\  P\ \&\ \neg R\{A_2\}Q}{P\{if\ R\ then\ A_1\ else\ A_2\}\ Q}\\\\
  \inference[(ASN)]{}{P\ e/x\{x:=e\}\ P}\\\\

  \inference[(CALL-2)]{p(x:v)\ proc\ K,P\{K\}Q}{P\{call\ p(x:v)\}Q}\\\\
  \inference[(LOOP)]{P\ \&\ Q\ \{A\}\ P}{P\{while\ Q\ do\ A\}\ P\ \&\ \neg Q}\\\\
  
  \inference[(PAR)]{P\{call\ p(x:v')\}Q}{P\ u,e/x',v'\ \{call\ p(u:e)\}\ Q\ u,e/x',v'}\\\\
  \inference[(CALL)]{P\{call\ p(u:e)\}Q \qquad $where$\ \sigma = z'/z}{P\sigma\ \{call\ p(u:e)\}\ Q\sigma}\\\\
  \inference[(PROC)]{D,\ P\{begin\ D^{*};\ A^{*}\ end\}Q}{ P\{begin\ D;D^{*};A^{*}\ end\}Q}
\end{tabular}
\end{center}
 \caption{Example of a complete and sound rule set taken from \cite{cook78}}
 \label{fig:SoundNCompleteExample}
\end{figure}

In addition to restriction 6, a further restriction remark shall be noted:

\begin{itemize}
 \item \textbf{Restriction 10} Non-automatically allocated variables
\end{itemize}

As mentioned earlier, the introduction of global, static and dynamic variables may invalidate rule sets in Hoare calculi.
Moreover, \index{variable!local} locals in \index{thread} multi-threaded applications may also invalidate rule sets because parallel execution (\textit{thread-local variable mode}) are not considered by Cook.
This work is dedicated to single-threaded execution only.

For a fully-fledged industrial use, additional language features would be required not yet considered and not in the next future, namely, \index{exception} exceptions that seem to disintegrate with any Hoare calculus (Clarke mentioned).
So there is no wonder why a sound and incomplete rule set integrating exceptions may be challenging since the stack would have to behave fundamentally different when invoking \index{program statement} program statements inside a try-catch block, for instance (see \cite{dedinechin00}, \cite{goodenough75}).
Up to date, not a single proposition was made in terms of a \index{Hoare calculus} Hoare calculus allowing a sound or/and complete \index{stack!rewinding} stack rewinding that would include, for example, dynamic variables \cite{goodenough75}.

Hoare \cite{hoare69} considers \index{abstraction} abstraction the essential verification technique, especially for young engineers who are new to verification and would like to play with some simple examples for the beginning.
From Hoare's original work, it can be implied that "goto" labels, unconditional jumps and passing parameters by name are hard to formalise.
Hoare does not speak in favour or against higher-order logics.
However, he sees a declarative assertion specification as a pivotal element to success for all kind of Hoare calculi in general.\\


\subsubsection{Automated Reasoning}
\label{sect:LogicalReasoningAutomation}

This section aims to introduce and briefly illustrate theoretical problems \index{automated verification} on proof automation.
In this section, examples with the theorem prover \index{Coq} Coq are considered.
Coq \cite{bertot04} can prove previously specified assertions by theorems, different kind of \index{inductive structure} inductively defined structures and commands based on the typed \index{$\lambda$-expression} $\lambda$-expression calculus of second order.
Assertions are specified in the functional PL \index{Gallina} "\textit{Gallina}".
Proofs are specified in the command language \index{Vernacular} "\textit{Vernacular}".
\textit{Coq} often cannot independently search for proofs automatically.
It is a proof assistant that helps the user debug and record proofs.
Coq has a core asset of \index{formal theory} theories, e.g. on integers or \index{real numbers} floating-point numbers, which may swiftly be loaded to the \index{verification core} verifier core as modules.
The \index{verification assistant} proof assistant allows tracking and record-keeping proofs and finds out intermediate proof states.

The sequence of recorded proof commands may manually be enriched by \index{tactics} tactical commands (short for "\textit{tactics}").
Tactics try to apply several default proof steps, including normalisation and simplification, at once in order to reduce or even bring the current proof to the desired goal to be found.
Coq is a rare, compelling and widely accepted \index{verification platform} proof assistant platform.
Successful applications include compiler phase verifications \cite{rideau08}, \cite{leroy09}, \cite{blazy06}, \cite{blazy05}, \cite{leroy06}, \cite{leroy09-2}, \cite{raman12}.

Formulae based on predicate logic \index{predicate logic} are specification expressions.

\begin{definition}[Logical Formulae in the First-Order Predicate Logic]
\index{logical formula} A formula $\Phi$ in first-order predicate logic is defined as

$\Phi ::= true \ | \ false \ | \ x \ | \ REL(f(\vec{x})) \ | \ P(\vec{x}) \ | \
\neg \Phi \ | \ \Phi \circ \Phi \ | \ \forall x.\Phi[x] \ | \ \exists x.\Phi[x]$

where $x$ \index{variable!boolean} \index{boolean denotation} is a \textit{boolean variable}, $f$ \index{functor} is a functor, $P$ \index{predicate!assertion} an assertion predicate, $REL$ a relation having term arguments \index{relation} (relation), and $\circ$ is logical \index{conjunction} conjunction, or operator, or \index{disjunction} disjunction.
Vector $\vec{x}$ defines a term vector, which by default contains components regardless of which context they are.
It is assumed formulae containing quantifiers bind free occurrences of $x$ in $\Phi$.
\label{def:FirstOrderPredicateLogicFormula}
\end{definition}

Coq makes use of normalised formulae on reduction.
These formulae may become undetermined or \index{specification!undetermined} partially determined as the calculation proceeds.
Coq's proof schemas \index{lazy evaluation} are founded on \textit{lazy reduction} \index{$\lambda$-calculus} equivalent to the second-order $\lambda$-calculus (see \cite{cardelli96-2}, \cite{peirce10}, \cite{mitchell96}).
Because rules may be arbitrary at their definition, \index{completeness} completeness is not guaranteed.
Typing \index{typing} given $\lambda$-expressions in Coq, as is by default, avoids \index{typing paradox} antinomies that would otherwise arise due to inductive definitions over \index{Cantor set} Cantorian sets (cf.\cite{barendregt93}, \cite{bird97}), e.g. \index{term!self-applicable} self-applicable terms.
The absence of typing leads to typing ambiguities related to \index{Russel's paradox} \textit{paradox}.

\begin{definition}[Terms $T_{\lambda2}$]
\label{def:TypedLambda2ndOrder}
The type $T_{\lambda2}$ is a set in second-order $\lambda$-calculus types and is defined as
\begin{center}
\begin{tabular}{lcl}
 $t\in T_{\lambda2}$, & \qquad & $t\in V$\\
 $(t_1 \rightarrow t_2)\in T_{\lambda2}$, & \qquad & $t_1,t_2\in T_{\lambda2}$\\
 $\forall a.t\in T_{\lambda2}$, & \qquad & $a\in V,t\in T_{\lambda2}$
\end{tabular}
\end{center}
\end{definition}

, where $V$ denotes the set of variable symbols, $\rightarrow$ denotes type application.
Examples of sound $T_{\lambda2}$-types include, for example, $\forall a.a$ or $(\forall a_1.a_1\rightarrow a_1)\rightarrow (\forall a_2.a_2\rightarrow a_2)$.

\begin{definition}[Set of $\Lambda_{T_{\lambda2}}$ terms]
\label{def:Lambda2ndOrderTermTypes}
$T_{\lambda2}$-terms $\Lambda_{T_{\lambda2}}$ are defined as
\begin{center}
\begin{tabular}{lcl}
  $\Lambda_{T_{\lambda2}}$ & $::=$ & $V \ |$\\
  && $\Lambda_{T_{\lambda2}} \Lambda_{T_{\lambda2}} \ |$\\
  && $\lambda x:t\in T_{\lambda2}.\Lambda_{T_{\lambda2}} \ |$\\
  && $\Lambda x.\Lambda_{T_{\lambda2}} \ |$\\
  && $\Lambda_{T_{\lambda2}} t\in T_{\lambda2}$
\end{tabular}
\end{center}
\end{definition}

, where $\Lambda x.t$ denotes an abstraction of term $t$ by some variable $x\in V$.
Sound definitions may be $\Lambda_{T_{\lambda2}}$-types, such as $\Lambda a.\lambda x: a.x$ or 
$\lambda x:(\forall a.a\rightarrow a).x (\forall a.a\rightarrow a) x.$
Reduction ($\beta$-reduction, see \cite{barendregt93}) of $\lambda$-terms is an application of a possibly undefined term to a given $\lambda$-abstraction.
For example,
$$(\Lambda a.\lambda  x: a.x) \ Int \ 3$$
can be reduced to $(\lambda x: Int.x)\ 3$, which finally reduces to "$3$", where $Int$ denotes integers (equals $V$ from def.\ref{def:Lambda2ndOrderTermTypes}).

The objective of \index{term reduction} reducing $T_{\lambda2}$-terms is: (1) evaluate a final result and (2) \index{type checking} type checking.
Type checking differs from verification by the absence of the meaning of a state and expressibility limitations.

\begin{definition}[Type Checking]
\label{def:TypeChecking}
Type checking (after Hindley-Milner \cite{hindley69}) of a term $e$ having type $t$ for a given rule set $\Gamma$ is noted as $\Gamma \vdash e:t$.\\\\
The type of a term is checked by \textit{structural typing rules} and further by \textit{reduction} rules for a given formal theory (skipped here):
\begin{center}
\begin{tabular}{c}
\inference[($\forall$-Intro)]{\Gamma \vdash e:t}{\Gamma \vdash \Lambda a.e:\forall a.t}\\\\
\inference[($\lambda$-Intro)]{\Gamma, x:t_1\vdash e:t_2}{\Gamma \vdash \lambda x:t_1.e:t_1\rightarrow t_2}\\\\
\inference[($\forall$-Elem)]{\Gamma \vdash e:\forall a.t}{\Gamma \vdash e \ t':t[a:=t']}\\\\
\inference[($\lambda$-Elem)]{\Gamma \vdash e_1:t_1 \rightarrow t_2\quad \Gamma \vdash e_2:t_1}{\Gamma \vdash e_1 \ e_2:t_2}
\end{tabular}
\end{center}
\end{definition}

For the sake of simplicity, in the definition, base rules are skipped because universality \index{untypable term} allows them to define more generalised and in non-typed $\lambda$-calculi (see \cite{barendregt93}).
For instance, the axiom $\overline{\Gamma \vdash x:t}$, the rule ($\exists$-Intro) or ($\exists$-Elim), can be defined in analogy to ($\forall$-Intro) or ($\forall$-Elem).

Since Coq bases on $T_{\lambda2}$ and \index{redex} \textit{redex} reduction goes from outermost to innermost, type checking becomes decidable.
The checking complexity is linear.
For reductions from innermost to the outmost, type checking is in general undecidable (cf. \cite{peirce10}, \cite{bertot04}).\\

Let us recapitulate the previously said with fig.\ref{code:CoqPeircesTautology}.
For example, the tautology $p \vee \neg p$ seems intuitively straightforward, but when the \index{Tableaux method} Tableaux method is absent and when implications only may be used to reason, any successful proof may become anything than trivial --- so it is with an \index{intuitionistic judgement} \textit{intuitionistic judgement attempt} here.
Depending on the rules chosen, \textit{Peirce's tautology} may become impossible or hard to prove, as shown in fig.\ref{code:CoqPeircesTautology} facilitating a Coq-proof.

\begin{figure}[h]
\begin{center}
\begin{minipage}{10cm}
\begin{verbatim}
 Definition peirce := forall (p q: Prop), ((p->q)->p)->p.
   
 Definition lem := forall p, p \/ ~p.
 
 Theorem peirce_equiv_lem: peirce <-> lem.
 
 Proof.
   unfold peirce, lem.
   firstorder.
   apply H with (q:=~(p \/ ~p)).
   firstorder.
   destruct (H p).
   assumption.
   tauto.
 Qed.
\end{verbatim}
\end{minipage}
\end{center}
 \caption{\index{Peirce's theorem} Peirce's theorem about the \index{excluded third} excluded third in Coq}
 \label{code:CoqPeircesTautology}
\end{figure}

We need to prove the first definition "\texttt{peirce}" can turn into the definition "\texttt{lem}", where "\texttt{peirce}" may not contain disjunction nor negation but may contain implications of a kind $\rightarrow$.
Let us pay attention to both definitions contain quantified assertions.
Any recorded proof would be a sequence of proof commands in Coq.
Any sequence would start with the keyword \index{\texttt{Proof}} "\texttt{Proof}" and finish with the keyword \index{\texttt{Qed}} "\texttt{Qed}" (short form of "\textit{quod erat demonstrandum}", which translates into "\textit{what was to demonstrate}").
Initially, there is only the equality of both theorems to be shown.
Then both sides of the equality are unfolded.
The latter "\texttt{first-order}"-keyword tries to replace $\forall$-quantified assertions into \index{Skolem normal-form} Skolemised \textit{normal-form}.
We need to show on the right-hand side of the definition "\texttt{lem}" for any true proposition "$p$": $p \vee \neg p$ holds.
In this case, the left-hand side is a \index{hypothesis} provable hypothesis $H = \forall p,q. ((p\rightarrow q)\rightarrow p)\rightarrow q$, which must be shown for any propositions $p$ and $q$.
Now, $p \vee \neg p$ is replaced by $p$, and $q$ is replaced by $p \vee \neg p$.
Thus, we obtain $(p \vee \neg p \rightarrow \neg (p \vee \neg p)) \rightarrow p \vee \neg p$ as a hypothesis to be proven still under the condition that $p$ holds.
The consequence of "\texttt{peirce}" must be separated from its precondition in order to prove the hypothesis.
It is required to abstract the hypothesis and then skolemize it to normal-form, s.t. $q$ is no more quantified.
Now we obtain another more comfortable hypothesis, $H_0 = (p \rightarrow q) \rightarrow p$, which can prove if we assume $H = \forall p. p \vee \neg p$, s.t. the right-hand side of the disjunction holds.
So, we introduce a new hypothesis, $H_1 = \neg p$.
Such a choice is arbitrary, and once applied $\neg p$ to $H_0$, we immediately find $p$ holds because the remaining \index{implication} implication always holds as soon as the left-hand side of the implication is false.
Since we assumed $p$ holds initially, we have just confirmed the theorem holds.
The \index{tactics} tactic's application \index{\texttt{tauto}} helps them prove to accomplish since simple mechanical normalisations are sufficient to quit the proof successfully.

Although this example is intuitive and candid, proofs still require extra-ordinary resources and are not intuitive whatsoever, for example,  the definition of intermediate \index{term normalisation} normal-forms and hypotheses.

The application of tactics is no guarantee at all.
Tactics do not recognise non-obvious \index{abstraction} abstractions.
It is not hard to notice that automating this kind of \textit{non-standard transformation} must eventually fail and most likely not even be recognised in the first place.

\begin{definition}[Formal Proof]
\label{def:FormalProof}
A proof is a sequence of applied equalities (rules and axioms) applied to a given theorem based only on valid axioms.
\end{definition}

Remark: If Hoare rules (rules of some \index{Hoare calculus} Hoare calculus) are complete and proof exists upon defined axioms, proof can be \textit{automated}.
If rules are complete, then any valid theorem may be proven within the considered context (w.r.t. exceptions related to G\"odel's incompleteness theorem).
\index{semiotics} Semiotics, \index{syntax} syntax (e.g. see def.\ref{def:FirstOrderPredicateLogicFormula}), and \index{semantics} semantics must be defined to express some formal theory.
Whenever languages are meant, not only a formal theory is determined by expressions, but also the \index{pragmatics} pragmatics of that language shall be defined.
Semantics, e.g. \textit{axiomatic semantics}, decide which meaning is derivable from a formal system or not, particularly by a Hoare calculus.
Signs \index{sign} and the interconnection between formulae are abstractions of some tangible object representatives or physics.
Hence, for historical reasons, \index{meta-physics} meta-physics often replace \textit{physics}.
An abstracted physics, or \index{logics} logics, whenever not a factual matter is meant, but properties about them indeed -- as is the case with Hoare calculi describes program properties.
It must be noted that any \index{formal logic} \textit{formal logic} is also, by definition, \index{algebra!formal} \index{formal algebra} \textit{formal algebra}.

In analogy \index{modular programming} to modular programming, proofs may be composed to simplify and strengthen the essential thoughts from a designer's perspective.
A more formal definition is given in the coming sections.

For analogy purposes, the units used in a proof are \index{lemma} \textit{lemmas} (are auxiliary theorems), \index{theorem} \textit{theorems} and \index{inductive definition} \textit{inductive definitions}.
The simplest example of inductively defined sets are natural \index{natural numbers} numbers (see fig.\ref{code:NaturalNumbers}).

\begin{figure}[h]
$$\texttt{Inductive nat : Set := O : nat | S : nat -> nat}$$
 \caption{Inductive Coq definition of Church's numbers}
 \label{code:NaturalNumbers}
\end{figure}

Since \index{tactics} tactics are arbitrary sequences of verification commands, we will not consider these in more detail here.
They are just like macros, just another modular unit.

In contrast to that, predicates' abstraction has a definite practical meaning: the more straightforward a proof is, the better.
Symbols and predicates may be \index{unfold} unfolded by need.
Facing the current situation and accompanied rules leading to proof would be highly recommended to automate the quest.
One essential question remains: When to fold and when to unfold an inductive definition, how often may this be required?
Unfortunately, the practice taught us a lesson that may be challenging to decide and is almost impossible to predict on an automated schedule.
Due to the reasons mentioned earlier, it is better to reduce terms \index{lazy evaluation} lazily and in outermost-to-innermost order.
Otherwise, the proof may suddenly stop unfinished even if redex(es) still exist.

\begin{observation}[Verification Model of Computation]
\label{obs:ModelOfComputationVerification}

Verification units remind one unit (or modules) in programming.
Lemmas correspond to procedures.
The main theorem to be proven corresponds to the main-function of a program.
\end{observation}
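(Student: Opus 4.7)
The plan is to justify this observation by exhibiting a structural correspondence between the units of a modular imperative program and the units of a formal proof script in the style of the Coq example above, rather than by a calculational derivation. Since the statement is analogical in nature (not an equation, not a Hoare triple), what is needed is to make the three claimed correspondences precise and then check that each one preserves the feature of modular composition already discussed in the section on verification primitives.

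First I would fix the notion of a verification unit. Drawing on def.\ref{def:FormalProof}, a proof is a finite sequence of rule applications whose leaves are axioms or previously established results. I would call a \emph{verification unit} any named entity of this sequence that exports a typed conclusion and hides its internal derivation, exactly as obs.\ref{obs:ProofAsSearching} casts proving as searching for axioms. The analogy to def.\ref{def:IncomingProgrammingLanguage} is then clear: an imperative program exports typed procedures and hides their bodies behind their signatures, while the caller only sees the pre/postcondition pair in the sense of def.\ref{def:HoareTriple}.

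Second, I would argue the lemma--procedure correspondence. A lemma has a declared statement (its signature), a local proof body (its implementation), and is invoked from other proofs by name with instantiated parameters, which matches exactly the call schema formalised by rule (CALL) and (PAR) in fig.\ref{fig:SoundNCompleteExample}. The substitution of formal by actual parameters on a procedure call mirrors term instantiation in the ($\forall$-Elem) rule of def.\ref{def:TypeChecking}; this is the key step, because it identifies parameter passing with universal elimination and thereby turns the informal analogy into a type-theoretic correspondence. Third, for the main theorem versus main-function leg, I would observe that both are the unique entry point whose precondition is the ambient axiom set and whose postcondition is the overall specification; the proof script terminates at \texttt{Qed} just as a program terminates at \texttt{return}, and non-termination on either side corresponds to the incompleteness case in def.\ref{def:HoareCalculusCompleteness}.

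The hard part will be that this is not really a theorem with a deductive proof but a didactic observation, so rigour has to come from fixing the abstraction level. Two pitfalls in particular must be avoided: tactics (as remarked after fig.\ref{code:CoqPeircesTautology}) have no clean procedural analogue because they can backtrack and rewrite the goal, so I would explicitly exclude tactic application from the module analogy and treat tactics as meta-programs, not units; and inductive definitions (fig.\ref{code:NaturalNumbers}) correspond to \emph{data-type} declarations rather than to procedures, so I would separate the observation into the proof-skeleton layer where the analogy is tight and the data layer where it is informative but imperfect. With those caveats the three correspondences hold by construction, which is all the observation claims.
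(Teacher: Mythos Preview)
The paper offers no proof for this observation at all: it is stated as a bare didactic analogy and the text immediately moves on to list problems derivable from fig.\ref{code:CoqPeircesTautology}. Your proposal is therefore not wrong, but it is far more elaborate than anything the paper attempts; you are constructing a type-theoretic justification (via ($\forall$-Elem), the (CALL)/(PAR) rules, and a careful separation of tactics and inductive definitions) for a remark that the author treats as self-evident motivation rather than as a claim requiring argument.

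If anything, your caveats about tactics and inductive definitions are more careful than the surrounding text, which a few lines earlier explicitly \emph{includes} tactics among the modular units (``they are just like macros, just another modular unit'') before the observation is stated. So your exclusion of tactics from the analogy, while defensible, actually diverges from the paper's own informal stance. There is no gap to fix here; the only mismatch is one of register---you are proving where the paper is gesturing.
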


From fig.\ref{code:CoqPeircesTautology}, the following problems may be derived:

\begin{enumerate}
 \item A description of a problem competes with its expressibility.
 This race can lead to severe expressibility limitations and bloated descriptions.
 \item Often, a simplification of equalities of a related theory (e.g. integer theory) bloats the amount and complexity of the significant theory (e.g. theory on dynamic variables).
Although, e.g. an \index{theory of integers} arithmetic theory is not directly related to memory layout or any specific programming statements, this can also be considered a significant hinder to prove automation.
 \item On proof refutation, a reasonable explanation desires better or is absent.
 \index{counter-example} Counter-examples as explanations, if any, are not intuitive and not general enough.
\end{enumerate}

Wos \cite{wos91}, \cite{wos88} gives a more detailed explanation of automated theorem proving regardless of age.
The problems Wos addresses can be categorised into three classes: (i) \index{assertion} assertion model representation, (ii) logical rules representation, and (iii) choice of optimal \index{strategy} proof strategy and tactics.
Wos suggests parallelism, data/knowledge base indexation, and considering modified reasoning techniques to raise proof efficiency.
A more in-depth analysis of articles related to Wos' remarks, the following can be summarised:

\begin{itemize}
 \item \index{type checking} Type checking is helpful to avoid static analysis errors at an earlier stage.
 However, its use beyond that stage tends to be, in fact, nearly zero.
 \item The actual formula notation is not essential as long as the \index{semantics} program semantics (of an imperative program) is caught.
 \item Even if Robinson's \index{resolution method} resolution method is widely discussed in Wos' contributions, the Natural Deduction \index{natural deduction} can be considered more important when it comes to verification in Hoare calculi, presumably due to the proximity to definitive proofs from classic math proofs and is therefore preferred.
 \item Competition can be witnessed in proof searches between local and global optimal values.
\end{itemize}

Wos \cite{wos91} supersedes all others from all problems, namely the need to strip bloated formulae in specifications and repetitions from verification descriptions.
Wos's motto is: "\textit{It is better to simplify assertions than to analyse all different variants}".
Nobody can disagree with such a position.
Wos's heuristic suggests applying simplifications at most with a polynomial complexity in the worst case rather than risk an exponential rise in search space.
Wos expects a reduction of approximately 90\% on average.
Preston \cite{preston88} refers to Wos \cite{wos88}, expecting good research on all of Wos' raised issues will eventually fill a PhD thesis each, especially since not much work concentrated on those topics.
Overbeek \cite{overbeek88} refers to Woss too.
He provides for each of Wos' issue at least one concise practical example.
Mackock's \cite{macock75} contribution may seem old, but the addressed issues are close to Wos' and are, nevertheless, still up to date.
Gallier \cite{gallier03} is a more recent monography on automated theorem proving.
The issues he raises are mostly congruent with Wos'.
Leroy \cite{leroy11} questions the overall verification process altogether.
He provides critical measures from a practical view.
His remarks are heavily covered by the mentioned.
He, Appel and Dockings \cite{appel07} give the prover developers and researchers recommendations to obey common principles they suggest for prover development so that researchers can compare existing provers years later, mainly focusing on rule categorisation.\\

\textbf{\underline{Abstract Interpretation}.}
The Cousots \cite{cousot77}, \cite{cousot92} suggest for the first time a \index{formal method} formal method (see def.\ref{def:FormalProof}) \index{abstract interpretation} a universal \index{static analysis} static code analysis based on approximation of stepwise interpretation calculations for a given program.
The method analyses a CFG (cf.\cite{nielson99}).
If the CFG cannot be normalised (e.g. no unique entry or exit), it is normalised, e.g. by adding auxiliary uniting nodes.
The \index{program statement} program statements are partially ordered in a non-strict way according to an \index{lattice} algebraic lattice in the order of the operator's appearance.
Abstraction helps to approximate the limit \cite{abramsky94} by introducing new dependent parameters using bound intervals (\textit{narrowing down visibility interval}) and by analysing the most common interval (\textit{widening visibility}).
Thus, uninitialised variables are assigned the interval $- \infty \ to \ \infty$.
\index{incremental approximation} Stepwise approximation of interpretations terminates as soon as two consecutive steps yield the same result.
Interpretations renew after each program statement.
Due to the \textit{Halting problem}, interpretations may have unpredictable, so arithmetically huge 
\index{limit} limits.
Alternatively, limits might be tiny, for instance, when entering a loop with an apriori undetermined value.
\textit{Static evaluation of allowed array bounds} \cite{nielson99} is a classic application for it.
The family of methods \index{Cousout's method} may be altered and applied to very different applications, for instance, in \cite{gcc15} for an optimised branching based on heuristics in compiled code.
In order to compare two interpretations, an \index{isomorphic mapping} isomorphism over mappings can be defined.
Two interpretations, $I_1, I_2$, are considered isomorphic, iff two \index{mapping} mappings $\alpha, \alpha^{-1}$ exist that are \index{injection} injective, \index{surjection} surjective and for which $I_1$ maps onto $I_2$ by using $\alpha$, and $I_2$ maps back onto $I_1$ by $\alpha^{-1}$.
Even if Cousots' method is universal, it still lacks restrictions that would allow proof automation for many reasons.
First, the \index{approximation} approximation proposed mismatches the exact nature of Hoare's calculus and the exact symbolic values.
One could consider Hoare calculus as a singularity of Cousouts' method.
However, their method focuses, after all, on arithmetic approximation.
Second, within a Hoare calculus, verification bases on terms and assertions, a specification language, and so forth – this all is not the case with Cousouts' method.
A commented example for their \index{abstract interpretation} application can be found in fig.\ref{fig:CFGEx1}.

\begin{figure}[h]
\xymatrixrowsep{15pt}
\xymatrixcolsep{40pt}
\begin{center}
\begin{tabular}{lll}

\multicolumn{3}{c}{
\xymatrix{
 \txt{$\triangledown$} \ar[d]^{C_0}\\
 *+[F]\txt{x:=1} \ar[d]^{C_1}\\
 *+[o]+[F]\txt{} \ar[d]^{C_2}\\
 *+[F-:<8pt>]\txt{$x\le 100$} \ar[r]^{C_5}_{false} \ar[d]^{true}_{C_3} & \txt{$\vartriangle$}\\
 *+[F]\txt{x:=x+1} *\ar@/^3pc/[uu]^{C_4} 
}}\\\\
%
\multicolumn{3}{c}{\parbox[t]{7cm}{The method \index{program statement} of assigning assertions to \index{CFG} CFGs was first proposed by Floyd \cite{floyd67}.}}\\\\

\quad \begin{tabular}{l}
 $C_0 = [,]$\\
 $C_1 = [1,1]$\\
 $C_2 = C_1 \cup C_4$
\end{tabular} &&
\begin{tabular}{l}
 $C_3 = C_2 \cap [-\infty,100]$\\
 $C_4 = C_3 + [1,1]$\\
 $C_5 = C_2 \cap [101,+\infty]$\\
\end{tabular}\\\\
\multicolumn{3}{c}{\parbox[t]{7cm}{The method \index{program statement} assigning assertions to a \index{CFG} CFGs was first proposed by Floyd \cite{floyd67}.}}
\end{tabular}\\
\end{center}
 \caption{CFG with assigned intervals}
 \label{fig:CFGEx1} 
\end{figure}

Steinbach \cite{steinbach94} illustrates \index{termination} termination as a firm criterion, even a \index{precondition} precondition for \index{confluency} proof confluency (e.g. applying \index{Knuth-Bendix algorithm} the Knuth-Bendix algorithm) and \index{completeness} completeness.
Termination may be \index{decidability} decidable in some instances.

Apart from the mentioned restrictions, another theoretical restriction exists, dating back to \index{arithmetic!Presburger} \index{Presburger's arithmetics} Presburger's arithmetic.
Presburger's arithmetic is a real subset of \index{arithmetic!Peano} \index{Peano's arithmetics} Peano's very popular arithmetic on \index{natural numbers} natural numbers.
Peano's arithmetic only knows "+" (or its inverse operation "-") as an operator \cite{presburger29}.
This discrete structure leads to \textit{usual incoming programs} containing a "+" that may be decided in a finite amount of time in first-order \index{predicate logic} predicate logic (e.g. when calculating object offsets, see sec.\ref{chapter:expression}, sec.\ref{chapter:logical}).
However, this cannot be generalised for expressions under Peano's arithmetic or its extensions.
Notably, this is of interest for automated proof resolvents \cite{cooper72}, \cite{cherniavsky76} (see obs.\ref{obs:ModelOfComputationVerification}).
Expressions in Presburger's arithmetic are decidable with complexity $\Theta(m,n)=2^{2^{n \cdot m}}$, where $n$ denotes the minimal amount of search branches and $m\geq 1$ is some constant factor (see \cite{fisher74}).
It can easily be noticed complexity raises fast for incrementing $n$.
It is also important to notice that ignoring the expressibility of Presburger's arithmetics, e.g. including further operators, may cause in practice non-decidability in certain cases.
However, obeying Presburger arithmetic axioms will \index{supremum} \index{limit} dramatically reduce the upper bound for theorems to be proven.

Pommerell \cite{pommerell94} evaluated theoretical complexity and finds ineffective implementations of related \index{formal theory} formal theories in theorem provers is the biggest hinder in an overall verification barrier.
For example, he demonstrates encoding as \index{Hoare triple} huge matrices to be solved rather than ineffective, despite scarce matrix algorithms.
The situation can be overcome when applied theories can be separated from the core verification rule set.
It is recommended to use \index{SMT-solver} \index{SMT-solver} SMT-solvers instead, which ease tension w.r.t. \index{completeness} completeness issues and further structural rules.
Practice shows \cite{nelson78} solvers may significantly improve the resolution of related theories.
A solver could be initiated whilst verification by need.
Those solvers include \index{language!functional} \index{Why} "\textit{Why}" \cite{why15}, which is, in fact, a functional PL or its successor \index{Y-not} "\textit{Y-not}" \cite{nanevski08-2}.\\

\subsubsection{Alternative Approaches}

As mentioned at the beginning, \index{verification} verification is a \index{formal method} formal method that checks if a given program follows a given specification or not.
Verification \index{Hoare triple} of Hoare triples is a static method that does not require the given input program to be run.
It is worth noting that reducing the rule set may always happen when rules are unified, generalised or both.
Apart from verification presented earlier, other approaches like \index{static analysis} Static Analysis or \index{model checking} Model Checking or localising problems based on stripping syntactic and environmental \index{error localisation} code.
Other approaches exist, e.g. dynamic approaches like AT, bug-free development (hypothetical), and static approaches like some \index{type checking} type checking.

\textbf{Approach 1 --- Testing.} \index{testing} AT is a dynamic attempt in checking cases that used to be problematic or are suspected to be such.
AT checks whether a specified scenario is calculated correctly by a program or not with an expected result.
Scenarios are implemented as test cases by the test developer.
Each scenario should be minimal and self-containing.
One or more tests are associated with a function that all need to be successful to state anything about quality.
As soon as at least one test fails, the overall testing fails.
When a testing scenario is known, all input and output data is specified and then compared.
If a test requires manual interaction, then the test is not automated.
Automated tests significantly contribute to software development as they minimise risks on each stage of development since there is always one step back possible to find a potential code modification that broke the system.
\index{automated testing} Automating Testing (AT) \cite{beck15} yields significant advantages: high efficiency, so for a short period, a considerable amount of tests may be checked.
So, errors based on incorrect or fuzzy input data may immediately be detected.
If a scenario requires many objects involved, this could indeed happen but would require explicit specification.
If tests are simple, then modules can easily be understood, and the risk for errors is lowered.
Ideally, each test specification is simple.
If not, this is always a reason to believe something is over-engineered or error-prone -- this comes from practical experiences.

For a given program $p$ and a test asset $\forall i.t_i(p)$, each test is launched for $p$, and results are observed.
A disadvantage of AT is the manual test setting which can be pretty challenging, especially when the CFG is getting complex per $p$.
The number $n$ of needed tests $p_1,p_2,...p_n$ to cover all \index{test} test branches rises exponentially.
The more complicated a \index{CFG} CFG gets, the more edges linking back to higher BBs it contains, the more tests are needed, and the more complex tests will get in general.

\textbf{Approach 2 --- Automated Model Checking.}
The main problem of approach 1 is the exponential rise in \index{test} test cases needed, which is hard to automate to get a representable amount of tests.
The idea behind Model Checking lies in a formula describing a given program using assertions and temporary expressions, which are checked stepwise.
The problem is to find a solution for the discrete formula depending on multiple (logical) variables systematically.
If the formula is probed too detailed, the AT will not even terminate within a long period.
If, in contrast, the formula is analysed only superficially, then the result will be useless.
The responsibility of certain constraints as equalities is to direct Model Checking.
If equalities are too much restricted, then results may be skipped.
If equalities are too generous, then results may be skipped too.
If there are too few restrictions, AT may take too long or find too many possible unimportant problems first.
An ineffective search strategy might be equivalent to non-termination or an avoidable, very slow \index{model checking} model checking \cite{kohli94}.
Several modifications can be exploited for higher performance or expressibility, namely \index{symbol} introducing formulae \index{model checking} \cite{clarke99} or static methods deciding which part of the model formula to be tested in more detail next based on heuristics \cite{kwiatkowska07}.
Implementations include \index{VDM++} VD\-M++ \cite{weissenbacher01}, \cite{weissenbacher04}.

So, effective alternatives to Hoare-based verification are AT \cite{marinov12}, Model Checking as introduced first by Peled and Clarke \cite{clarke99} 
(\cite{kroening09} \cite{weissenbacher04} give an actual review on recent tools).
The advantage of testing in contrast to Model Checking is its simplicity and execution without explicit formulae, but the problem of a full test specification and execution every time whilst software development is required.

\textbf{Approach 3 --- Bug-free software development.}

Naturally, bug-free software development has only advantages and resolves all other problems presented and, most importantly, does not require verification at all.
After all, this is only a dream since humans make errors.
However, model-based software development approaches try to exploit the minimal distance between a design blueprint and an actual reference implementation -- this difference would be much easier to check, and \index{pattern} patterns seem to have a single significant impact on this \cite{kerievsky05}.
As long as the number of reported errors is tiny and (or) mainly related to minor issues, e.g. regarding performance in non-critical code areas only, then the overall number should approximate zero over time.
Another crucial observation is that the shorter a program gets, the smaller the number of committed errors are on average during the development process.
A program with zero lines has guaranteed zero errors -- this natural law always holds independently from a concrete implementation and does not require further explanation: an empty program has a minimal number of errors, full-stop.
Humans always introduce errors, and by the nature of programming --- this will not be avoided regardless of how strong we try to.
Even if a program was perfect, it might be the specification then which might be not accurate, and as a consequence, both parts require corrections.
Humans do debugging and error localisation, and it cannot be fully automated.
\index{ADT} Abstract data types (ADT), algorithms and intuition, in general, are all human reasoning too.
The software development process, module description and interfaces are all defined by humans.
So, defining a brand-new, fully automated approach  (even with a very high level of artificial intelligence involved) will still not be feasible by computer software only.\\

\textbf{\underline{Software Development}.} 
The developer should think about robust and at most error-free design and stability issues at an early stage supported by the modelling language \index{UML} \textit{UML} and its extension \textit{OCL}. 
The practice has shown on multiple occurrences that proper modelling can avoid crucial design flaws and keep expenses low when it comes to maintenance or new feature implementation.
Unfortunately, not all issues can be resolved apriori, and this is why:
\begin{itemize}
 \item Every developer is an individual, software development.
 After all, it requires creativity, and many skills computers currently cannot be done yet.
 \item It is hard to predict the area where an error may occur
in the future.
 Apart from that, even bug localisation can be tricky sometimes.\\
\end{itemize}

\textbf{\underline{Model transformation}.} 

\index{graph transformation} Graph transformation means a correct initial program is given and needs to be converted into another correct program.
A graph transformation is just for this job, and the challenge is to keep the transformation process safe \cite{dodds08}.
This approach is similar to the last approach presented earlier on approaches on dynamic memory verification.
However, Dodds' approach is based on a \index{term rewriting system} term rewriting system only and therefore is beyond the scope of this work.
Another term rewriting system, but with solid task priority queues, is \index{Stratego XT} \textit{Stratego XT}, \cite{johann03} and \cite{haberland08-1}.
\cite{katz73} suggests heuristics for applying rewrite rules for better confluency.

\textbf{Approach 4 --- Type Checking.} 
\index{type checking} Type checking excludes numerous mistakes in expressions at an early development stage due to incompatible types.
For example, an assignment of a \index{alignment} 32-bit integer to an 8-bit floating-point, in the best case, copies just that small value and causes no harm, if lucky.
In a bad case, it may store any pseudo-random value.

If type checking succeeds, then the proceeding semantic analysis gets valid input.
However, type checking cannot state, e.g. a loop is correctly implemented.
To state just that, one would first specify what "\textit{correct}" means.
Second, check this specification then.
Unfortunately, type checking is insufficient for verification purposes and can only be considered a required precondition.
Type checking e.g. does not contain information \index{data dependency} about data dependencies.
It also does not know which variables will be allocated.
It also does not know how many objects might be in memory. 
The \index{static analysis} static analysis does this, which is an independent phase during software development:\\

\begin{figure}[h]
\begin{center}
\begin{tabular}{l|l|l}
  No. & Phase & Input representation\\
  \hline 
  0 & & command line\\
  1 & Syntax analyis & tokens, derivation tree\\
  2 & Semantic analysis  & yes/no\\
  3 & Code Generation & derivation tree\\
  4 & Static Analysis & yes/no, any data structure\\
  4* & Linking & Target file\\
  5 & Program Invocation & terminal text and errors
\end{tabular}
\end{center}
 \caption{Phases until code generation}
 \label{fig:PhasesOfCodeGeneration}
\end{figure}

Type checking may be formalised as $\Gamma \vdash e:t$, where $\Gamma$ is a term environment containing an associative map onto types, $e$ is a term expression to be checked, and $t$ is an associated type to be checked against (see \cite{barendregt93}).
$e$ requires decomposition to its base statements, and each statement requires a separate check to validate whether a given compound statement $e$ is of type $t$ in $\Gamma$.

In contrast to Hoare triples, terms corresponding to statements are assigned a type.
In contrast to \index{Hoare calculus} Hoare's calculus, \index{type checking} type checking only checks if expressions have a specific type that must match a declaration.

On type checking, \index{program statement} program statements and expressions are checked against expected types for compatibility.
Type checking does not know of \index{calculation state} a calculation state as verification does.

In analogy to Hoare calculi, type checking is a bottom-up approach, so the checking starts with a given expression $e$ and terminates in case of success. 
Type checking, if considered as proof of program properties, is bound to a tiny check only.
It is a single and mandatory check whilst semantic analysis (cf.\cite{isocpp14}).
It does not cover nor overlap with \index{dynamic memory} verifying dynamic memory since type checking is not extensible and has no easy way to represent the dynamic memory model (see later).

\begin{observation}[Phase Type Checking]
\label{obs:TypeCheckingPhases}
Type checking \index{type checking} must be performed before all other semantic phases (especially before verification), which rely on correct typing available.
\end{observation}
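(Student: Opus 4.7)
The plan is to argue by structural dependency: every semantic phase listed in fig.\ref{fig:PhasesOfCodeGeneration} that follows syntax analysis takes a well-typed program as input, and this well-typedness is precisely what type checking establishes. First I would identify, for each verification rule given in fig.\ref{fig:ProofRulesEx1}, the typing assumptions that are silently made. The assignment rule (ASN), for instance, only makes sense when $e$ and $x$ have compatible types, since otherwise the substitution $P[e/x]$ may produce an ill-formed assertion; (SEQ) requires both $A_1$ and $A_2$ to be well-typed statements; the loop rule (LOOP) demands that $B$ be of boolean type. Without a prior type check, the applicability of these rules is itself undecidable at the verification stage, which would contradict obs.\ref{obs:ProofAsSearching}.

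Second, I would invoke def.\ref{def:TypeChecking} and its structural typing judgements to show that type checking is a strict preprocessing step: the judgement $\Gamma \vdash e:t$ must be established before any Hoare triple $\{P\}C\{Q\}$ can be meaningfully interpreted, because the assertions $P$, $Q$ (according to def.\ref{def:FirstOrderPredicateLogicFormula}) refer to variables and functors whose denotation depends on their type. The same argument transfers mutatis mutandis to static analysis, code generation and linking, since each of these phases manipulates program fragments in a type-directed manner, e.g. for offset computation, storage class selection or alignment decisions. Together with Clarke's restriction 6 and Apt's remark on parameter matching, this confirms that verification rules presuppose a globally consistent typing environment.

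The main obstacle I expect is showing that type checking and verification cannot be productively interleaved rather than strictly ordered. The natural counter-proposal is an on-demand typing strategy that types each sub-expression only when a verification rule requires it. I would refute this by observing that verification rules may reference the same variable across multiple assertions in different branches of the derivation tree (see fig.\ref{fig:ProofTreeEx1}), so a local typing decision taken during one sub-verification can cascade into global inconsistencies in another, violating the soundness criterion of def.\ref{def:HoareCalcCorrectness}. A secondary concern is diagnostic clarity: with interleaved typing, one cannot cleanly separate a type error from a semantic verification failure. Hence the minimal prerequisite for applying any rule from fig.\ref{fig:ProofRulesEx1} is a completely typed program, which only a finished type-checking phase can guarantee, establishing the observation.
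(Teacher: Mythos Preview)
The paper provides no proof for this statement. It is labelled an \emph{observation}, and in this paper observations are asserted, not proved: the text simply states the sentence and moves on to the next paragraph about Hindley--Milner typing. There is nothing to compare your proposal against.

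Your argument is a plausible justification for why the observation is reasonable, but it goes far beyond anything the paper attempts. A few remarks on its content: the appeal to obs.\ref{obs:ProofAsSearching} is loose---that observation says proof is search, not that rule applicability must be decidable---and the claim that interleaved typing would violate def.\ref{def:HoareCalcCorrectness} is not obviously true (many systems do type-check on demand without soundness issues; the real argument here is engineering convenience and phase separation, not a formal impossibility). The strongest part of your proposal is the first paragraph: rules like (ASN) and (LOOP) silently presuppose type compatibility, so a verifier that has not type-checked first may apply rules to ill-formed triples. That alone is enough to motivate the observation at the informal level the paper operates on; the rest is over-engineering for a statement the author treats as self-evident.
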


When \index{Hindley-Milner calculus} type checking (after Hindley and Millner)  \cite{hindley69}, \cite{peirce10}, \cite{bruce02} \index{Hoare triple} Hoare triples, then it must be noted that "\textit{true}" is the most common precondition.
Furthermore, type checking has fundamental restrictions equivalent to simplified Hoare triple checks (see sec.\ref{chapter:logical}, cf. fig.\ref{fig:HoareCalcVSTypeSystem}).

\begin{figure}[h]

\begin{center}
\begin{tabular}{|cc|c|}
 \hline
 \parbox{6cm}{\ \\Type checking (see \cite{milner78}): $\vdash_{\Gamma} e:t.$\\} && 1\\
 \hline
 \hline
 \parbox{6cm}{\ \\Type checking. Given $\Gamma,e,t$, show $\vdash_{\Gamma} e:t$.\\} && 2\\
 \hline
 \parbox{6cm}{\ \\Type inference. Given $\Gamma,e$, find $t$ ?\\} && 3\\
 \hline
 \parbox{6cm}{\ \\Inhabitant problem. Given $\Gamma,t$, find $e$ ?\\} && 4.\\
 \hline
\end{tabular}
\end{center}
\begin{center}
\begin{tabular}{|cc|c|}
 \hline
1 && \parbox{7cm}{\ \\Hoare Calculus: $\vdash_{\Gamma} \{P\}C\{Q\} \equiv B$\\}\\
 \hline
 \hline
 2 && \parbox{7cm}{\ \\Proof checking, where $\overline{A_{i,j}}$ is an axiom, $A_{i,j}$ is a rule: $B \vdash_{\Gamma} A_{n,0..} \vdash_{\Gamma} A_{n-1,..} \vdash_{\Gamma} \overline{A_{0,..}}$, given: everything.}\\
 \hline
 3 && \parbox{7cm}{\ \\Logical inference, given: $B$, $\Gamma$, wanted:\\$B \vdash_{\Gamma} A_{n,0..} \vdash_{\Gamma} A_{n-1,..} \vdash_{\Gamma} \overline{A_{0,..}}$.}\\
 \hline
 4 && \parbox{7cm}{\ \\Inhabitant problem: $\{P\}$, $\{Q\}$, wanted: $C$}\\
 \hline
\end{tabular}
\end{center}
 \caption{Comparison between type checking and Hoare calculus}
 \label{fig:HoareCalcVSTypeSystem}
\end{figure}
Extensions are made, which now allow checks on \index{record} structs, type strengthening, properties of non-assignment, and static bounds of struct size \cite{flanagan02}.
Regrettably, \index{dynamic memory} dynamic memory properties may not be checked either.
For example, let us \index{sequentialiser} consider

\begin{center}
\begin{tabular}{c}
  \inference[($\rightarrow$-elim)]{\Gamma \vdash e_1:(s\rightarrow t)\ \Gamma \vdash e_2:s}{\Gamma \vdash (e_1 e_2):t}
\end{tabular}
\end{center}

The rule ($\rightarrow$-elim) demonstrates a type-specification (e.g. $s \rightarrow t$) may be used as a simple specification (see \cite{chatterjee00}, \cite{nanevski08-1}).

In \cite{barendregt93}, all crucial theorems and \index{lemma} lemmas to the \index{$\lambda$-calculus} $\lambda$-calculus are introduced and their proofs.
The paper \cite{plaisted85} is dedicated to \index{non-decidable} non-decidable problems with $\lambda$-terms in untyped calculi (e.g. in \index{Curry's calculus} Curry's calculus, but not in \index{Church's calculus} Church's calculus).
For instance, $\lambda x.x x$ cannot be \index{typing} typed and cannot generally be solved in applications.
Let us assume that $\lambda x.x x$ was typable, then $x$ would have type $t_2$, and the first $x$ would be a \index{functional} functional type $t_0 \rightarrow t_1$.
However, that does not apply to $t_2$ in the first-order typed $\lambda$-calculus, in which \index{type} types are not further parameterised. 
Hence, $\lambda x.x x$ \index{untyped term} is a not typable term.
However, if assuming infinite types \cite{macqueen84}, then an approximating \index{limit} limit may be found.
This type corresponds to the recursively defined \index{record} ADT, which has further (possibly again recursive) dependencies and may be resolved by a union type (a struct or record as in C or Pascal).

\begin{observation}[Verification Non-Reducibility towards Typing]
\label{obs:NonReducibilityToTypeChecking}
In general, type checking can be reduced to verification in a \index{Hoare calculus} Hoare calculus. However, the opposite does not hold.
Fig.\ref{ExampleProofDerivation} clarifies this for the ALC from fig.\ref{fig:ObjectTermsAbadiLeinoSimplified}.
\end{observation}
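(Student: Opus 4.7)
The plan is to split the observation into its two halves. For the reducibility direction I would construct a translation $\tau$ from typing judgements to Hoare triples and verify it is sound; for the non-reducibility direction I would exhibit a concrete Hoare triple whose content demonstrably cannot be recovered from any first-order typing judgement.

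For the forward direction, given a derivation of $\Gamma \vdash e : t$ in the sense of def.\ref{def:TypeChecking}, I would set $\tau(\Gamma \vdash e : t) = \{P_\Gamma\}\, C_e\, \{Q_t\}$, where $P_\Gamma$ is the conjunction of atomic predicates $\mathrm{type}(x, s)$ over all bindings $(x{:}s) \in \Gamma$, $C_e$ is an imperative rendering of $e$ (sequencing of subterm evaluations followed by an assignment to a fresh result variable $\mathit{res}$), and $Q_t$ is the assertion $\mathrm{type}(\mathit{res}, t)$. The proof of soundness proceeds by induction on the type derivation: each rule of def.\ref{def:TypeChecking} is matched by one or two of (ASN), (SEQ), (CONSEQ) of fig.\ref{fig:ProofRulesEx1}, so that whenever $\Gamma \vdash e : t$ is derivable, $\tau(\Gamma \vdash e : t)$ is derivable in the Hoare calculus. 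This makes precise the table-level correspondence already sketched in fig.\ref{fig:HoareCalcVSTypeSystem}.

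For the non-reducibility direction I would argue by counter-example. Consider the valid triple $\{x = 0\}\, x := x + 1\, \{x = 1\}$, derivable directly from (ASN). In any type system in the sense of def.\ref{def:TypeChecking}, the variable $x$ receives a single type (say $\mathrm{int}$), and the atomic assertions $x = 0$ and $x = 1$ are not expressible as typing predicates over $T_{\lambda 2}$ from def.\ref{def:TypedLambda2ndOrder}: a judgement $\Gamma \vdash e : t$ binds terms to types, not calculation states before and after execution. Hence no map from Hoare triples to type judgements can separate the triple above from the invalid triple $\{x = 0\}\, x := x + 1\, \{x = 42\}$, since both collapse to the same typing information. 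The loop example of fig.\ref{fig:CFGEx1} and the ALC example announced by fig.\ref{ExampleProofDerivation} would only sharpen this point in settings closer to the rest of the work.

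The main obstacle lies in the forward direction when $e$ contains higher-order features: the universal quantifier $\forall a.\, t$ and the $\Lambda$-abstraction of def.\ref{def:Lambda2ndOrderTermTypes} have no direct syntactic counterpart among (SEQ), (ASN), (LOOP). One must either restrict $\tau$ to a first-order fragment of $T_{\lambda 2}$, or else extend the Hoare calculus with quantifier-introduction rules over logical variables, compatible with the specification language of def.\ref{def:SpecificationLanguage} and admitting the predicate $\mathrm{type}(\cdot, \cdot)$. Once this is fixed, the remainder of the induction and the counter-example argument are routine.
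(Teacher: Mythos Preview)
Your proposal is sound, but it is considerably more elaborate than what the paper itself offers. In the paper this is an \emph{observation}, not a theorem with a proof block: the only justification given is the pointer to fig.\ref{ExampleProofDerivation}, where the ALC judgement form $E \Vdash a : A :: T$ already separates the typing component $A$ from the specification component $T$. The worked example there derives $\emptyset \Vdash ([f=\texttt{false}].f:=\texttt{true}).f : \texttt{Bool} :: (r=\texttt{true})$; the type $\texttt{Bool}$ is recoverable by projecting away $T$, whereas the value assertion $r=\texttt{true}$ is not recoverable from $\texttt{Bool}$ alone. That is the entirety of the paper's argument.

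Your route differs in two respects. First, you make the forward direction constructive via an explicit translation $\tau$ and an induction over typing rules, where the paper simply relies on the syntactic inclusion of $A$ inside the richer ALC judgement. Second, for non-reducibility you use an arithmetic counter-example ($\{x=0\}\,x:=x+1\,\{x=1\}$ versus $\{x=0\}\,x:=x+1\,\{x=42\}$) rather than the paper's object-calculus example; both make the same point, namely that value-level postconditions collapse under the map to types. Your version is more self-contained and would survive outside the ALC setting; the paper's version is shorter because the ALC judgement already carries both components side by side, so no translation $\tau$ needs to be invented. The higher-order obstacle you flag is real for your construction but does not arise in the paper's treatment, since it never attempts a rule-by-rule simulation.
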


\textbf{Approach 5 --- Automated problem reduction.}

An \textit{automated problem reduction} \index{reduction} \index{program slicing} tries to cut off fragments (e.g. lines) from a given program listing, s.t. a given error or any other specifiable program behaviour remains.
Other specifiable function behaviour might be, e.g., "\textit{function test123 always returns 5}".
If a program calculates some mapping, but we are interested only in a single value from its domain, then all code may be replaced by a fixed constant except for that singleton.
Often a program error is related to small program fragments due to modularity design.
Unfortunately, bulky or unknown programs may dramatically increase error localisation efforts, which is often the reality.

Assume we build a software project by a single shell command.
For the sake of simplicity, \index{GNU make} \textit{GNU make} \cite{make16} is replaced by a primary \index{program simplification} project build tool, called \index{builder} "\textit{builder}", which may be used for any software project to be built \cite{haberland15-4}.
The slicing tool \index{shrinker} "\textit{shrinker}" \cite{haberland15-3} is used to automate an incoming program reduction.
The program works repeatedly: it builds the software project from sources, analyses it and launches it, then it compares \index{observed behaviour} observed with expected behaviour.
If the program after reduction still builds, runs, and comparison succeeds, then reduction succeeds, and another reduction step is tried recursively.
If the reduction fails, then another reduction is tried.
If no other reduction is available, the overall reduction exits with the previous successful reduction as final.
The program under investigation dumps all it would usually do to \index{stdout} stdout  and \index{stderr} stderr.
By default and w.l.o.g. it is agreed upon that any behaviour may be directed to stdout or stderr -- this may, under the circumstances, may require tracing information issued directly by the program at that place that interest when analysing the problematic code regions (so-called symptom).

A symptom is specified behaviour of a program, either desired behaviour or undesired.
If a symptom still appears after a reduction step, then behaviour remains \index{invariant} invariant, and we continue with reduction until the invariant is violated.
This behaviour is very close and related to how a while-loop is defined (see axiomatic semantic of (LOOP) in fig.\ref{fig:ProofRulesEx1}, cf. fig.\ref{RulesLoopReplacements}).
This approach is best described in algo.\ref{algo:AlgorithmProblemReduction}.
The algorithm is na\"{\i}ve and not optimal as e.g. line selection $linesseq$ is not determined and may contain any number of lines equals to at least one or more lines.
It must be stated, the number of selected lines may fluctuate over time depending on if reduction succeeds or fails for several consecutive times. 
The default reduction strategy may also be tweaked.

\begin{algorithm}[h]
 \caption{Na\"{\i}ve line-based reduction attempt \cite{haberland15-3}}
 \label{algo:AlgorithmProblemReduction}
\begin{algorithmic}[1]
\Procedure {naiveShrinker}{$prog$, $symptom$}
 \State $newProg \leftarrow \emptyset$
 \State $oldProg \leftarrow prog$
 \ForAll {$loc(newProg) \le loc(oldProg)$}
   \State $(newProg, oldProg) \leftarrow (\texttt{REDUCE}(oldProg, symptom), newProg)$
 \EndFor
\EndProcedure
\Procedure {reduce}{$prog$, $symptom$}
 \State $\exists lines \leftarrow lines\-seq(prog)$
 \If {$\texttt{RUN}(prog \setminus lines) == symptom$}
   \State $\texttt{return}\ prog \setminus lines$
 \Else
   \State $\texttt{return} \ prog$
 \EndIf
\EndProcedure
\end{algorithmic}
\end{algorithm}


\subsection{Object Calculi}
\label{sect:TheoryOfObjects}

\index{theory of objects} \index{object calculus} OCs are formal rules that impose how representations of and operation on objects such as type checking and assertion checks.
OCs \index{Abadi-Cardelli calculus} can be divided into two groups, namely Abadi-Cardelli calculi (ACC) and \index{Abadi-Leino calculus} Abadi-Leino calculi (ALC) (cf. fig.\ref{fig:ObjectCalculi}).

\begin{figure}[h]
 \begin{center}
  \begin{tabular}{|c|l|}
    \hline & \\
    Type & Calculus\\
    \hline & \\
    (1) & \parbox[t]{6.7cm}{Abadi-Cardelli \cite{cardelli96}, class object, Example:}\\
    & \\
    & 
\begin{minipage}{6.7cm}
\begin{verbatim}
class MyClass{
  int A; int B; MyClass C;
};
MyClass o1 = new MyClass();
o1.A=1; o1.B=2; o1.C=NULL;
MyClass o2= new MyClass();
o2.A=5; o2.B=3;
o1.C=o2;
\end{verbatim}
\end{minipage}\\
    \hline & \\
    (2) & \parbox[t]{5cm}{Abadi-Leino \cite{leino98}, \cite{abadi97}, \cite{abadi93}, non-class instance, Example:}\\
    & \\
    &
\begin{minipage}{6.7cm}
\begin{verbatim}
object o1 = [ A=1; B=2; C=[A=5; B=3] ];
\end{verbatim}
\end{minipage}\\
    \hline
  \end{tabular}
 \end{center}
 \caption{Object calculi}
 \label{fig:ObjectCalculi}
\end{figure}

So, why do we need a formalisation of objects?
Why can objects not just be replaced by bare variables instead?
%
%
Second question first: yes, it could be replaced at the end.
Nevertheless, as objects have fields and more interaction may happen, it is, after all, better to abstract more and therefore to use objects as an abstraction of some data container and operations over it.
Algorithms and specification may greatly simplify, and in terms of verification, another indirection may be required --- but there is no apparent reason that could outweigh the advantages.
A class of an object is an \index{ADT} (ADT) and not just a data container.
ADT is also a \index{carrier set} carrier set of objects foreseen for complex problem modelling (see \cite{leino02}).
Operations over objects are closures.
Thus, an object may not suddenly decompose itself while calling a method.
Hence, a developer defines an object only with operations, which are characteristic and presumably affect only its inner state.
Encapsulation is very close to a split-up as a modelling technique with union-types in pre-object-oriented time.
Operations characteristic to objects are in practice class methods, where an object is a class instance.
The \index{paradigm} \index{paradigm!object-oriented} \textit{object-oriented paradigm} has undoubtedly been prevalent, as can be checked periodically with industry's TIOBE index.
The ADT concept can be assessed at a notable single successful contribution -- independently belief or personal preferences anyone may have gathered over the last years.
ADT indeed supersedes all possible refinements, such as encapsulation, polymorphism, message sending, event handling.


So, one advantage of ADT is an improved modularisation concept which allows building even more reliable, flexible, more intuitive, more complex and faster software due to a well-understood set of modelling techniques applicable to objects, for instance,  \textit{object-oriented modelling} (OOM) \cite{dsouza98}, \cite{rumbaugh91} or patterns \cite{kerievsky05}.
Those best-practices are well accepted and wide-spread in the industry.
The importance of patterns can hardly be overestimated.
Positive effects include the decline in error probability compared to traditional development approaches due to a strict separation of concerns.
Patterns are \index{stereotype} stereotypes.
"\textit{Stereotypes}" are \index{programming idiom} idioms that allow humans to grasp
particular objects and their dependencies on other objects faster.
However, it is an essential precondition for a better understanding of software, including software maintenance and development -- without thoroughly reading the code.
Many philosophical concepts can be considered a precursor of object-orientation, as (logical) \index{atomism} atomism, \index{connectionism} connectionism and many others.
Without any claim of completeness, \index{pattern} patterns could intuitively be interpreted as object comparison OOM with roles assigned to \index{actor} \textit{actors} of a scene: each actor acts according to a global and well-understood written manuscript, a specification for a full shot.
This picture is what a pattern makes a software pattern as we know it.
When it is said, a software developer thinks in patterns, the developer thinks in pictures, which can be fully grasped within just a few seconds.
It is no mystery nor magic at all.
It is common human psychology -- this also counts for how objects shall be structured and how many fields or methods shall be allowed, which entities or compartments shall be chosen better as long as each unit does not have more than seven everything seems fully compatible with the human mind.
A pattern's \index{pragmatism} pragmatism increases when a scenario (e.g. use-cases or tests) can be described as more comfortable and clearer.

The \index{object calculus} OC can be used for \index{semantic analysis} semantic analyses, e.g. for type checking or verification.
Apart from that, it may be facilitated to analyse \index{object!field} object field dependencies, e.g. by a compiler \index{register allocation} to ease the tension regarding spills on processor registers \cite{muchnick07}.
Occasionally it may be useful to merge object fields into a \index{array} contiguous array block or split them into \index{variable!local} local variables if both array bound and object size are known apriori to be small (cf.\cite{bornat00}).
The data dependency determines register candidates (may include general-purpose registers and stack cells) whilst \index{register} register allocation.
CPU registers are fast.
However, they occur in a tiny amount and can usually accommodate only tiny words regardless of new architectures trying to lower the pressure.
On the other side, CPU registers depending on a concrete \index{hardware architecture} platform may, under the \index{ABI} \textit{ABI} (application binary interface, see \index{GCC} \cite{gcc15}, \cite{llvm15}, \cite{leroy11}), store only a \index{ABI} few words.
Words are restricted, and even so, an \textit{ABI} usually does not split contiguous blocks of data chunks representing class objects because compilation makes it nearly impossible to predict which object fields will be used and how their memory layout finally looks.
The C (and \index{C++} ISO C++, too) specifications state that objects shall be treated as contiguous memory lump and further details remain unspecified \cite{isocpp14}.
However, if there was an effective way to predict reliably, then the current, solely conservative \textit{GCC} approach could significantly improve the runtime and reduce memory consumption in the ABI on the caller's side.
However, the reality keeps objects always contiguous, even if 99\% of all fields may not even be touched.
If memory would lead to a speed issue, then all memory access would add additional stack store and load operations, strictly speaking, which may not be necessary.

Ehrig's records in his "\textit{Record Calculus}" \cite{ehrig80} only differ by definition.
They are not ADT but just data containers.
His record is formalised as a \index{tuple} tuple (related to Codd's relations) and applied to parallel executions.

Scott \cite{scott76} considers data types (generalised as tuples) as algebraic \index{lattice} lattice, which has initially often an \index{uninitialised} uninitialised \index{limit} \index{infimum} infimum, a final value as \index{supremum} supremum, and specific Galois-joints defined upon set (tuple component) inclusion.
All values a variable may be assigned to between those extremums, infimum at its lower bound and supremum at its upper bound, is encoded as a state tuple that obeys \index{evaluation ordering} evaluation order.
For more details for verification purposes with non-dynamic objects, refer to \cite{chatterjee00}, \cite{mueller02}.
%
For further discussion, refer to \cite{barnett04}, where verification conditions are formulae of \index{predicate logic} first-order predicate logic.
\index{field!attribute} Attributed fields are modelled separately and are allocated to dedicated memory regions.
Alternatively, an object's field is fully embedded into another object -- here, all fields and types must be known apriori.
However, this work's main contribution are two predicates dealing with predicate definitions with an intuitive meaning.
In analogy to \cite{barnett04}, \cite{chatterjee00}, dependent fields are highly recommended to be modelled as dependent objects instead.
Both papers suggest input of new assertions for object invariants, which are always valid as long as an individual object lives.
None of the approaches listed is related to \index{pointer} pointers (see sec.\ref{chapter:DynMemProblems}).\\

Class-based OCs are mentioned only briefly.
This family of calculi dominates recent \index{language!programming} PLs, e.g. Java \index{Java} or \index{C++} C(++) \cite{gcc15}, \cite{isocpp14}.
A class is used to instantiate \index{object instance} objects on runtime.
This section is not primarily interested in memory allocation yet, but \index{\texttt{new}} "\texttt{new}", \index{\texttt{malloc}} "\texttt{malloc}", and \index{\texttt{calloc}} "\texttt{calloc}" will be used.
After a short characterisation, we will decide in favour of the object-based OC.
Both calculi, however, allow type checking and enable program specification and verification.
\index{Abadi-Cardelli calculus} Abadi and Cardelli motivate their calculus \index{object calculus} with \index{subclass} \textit{subclasses}, \index{polymorphism} \textit{polymorphism} and \index{recursive type} \textit{closure/recursion through objects}.
Both authors refer to \index{typing} typed \index{$\lambda$-calculus} $\lambda$-calculi as an example of second-order calculus.
It is worth noting that \index{$\lambda$-term} $\lambda$-terms representing \index{object instance} objects may be used as quantified types (so, polymorphic, a type class set implicitly bound to variables by $\forall$, see. \cite{peirce10}, \cite{bruce02}) by using unbound symbols.
Next, recursive and recursion-free classes require a denotational structure.
In other words, \textit{types dependent on new parameters} --- represent an abstraction of $\lambda$-types of higher-order over \index{object type} object types in ACC.

For instance, w.l.o.g., the question of which region in \index{RAM} RAM is used to store a specific object cannot directly be answered since pointers are the only abstraction ACCs have.
The same holds for the question where is an object stored relatively to some top-level object.
$x$ denotes an object in fig.\ref{fig:ObjectPointsTo}, which has one inner object with two content cells, $abc$ and $def$, and some content $ghi$ again.
Apart from that, pointer $y$ points to some object containing $jkl$, where $x$ has some field (or any other subfields from a field), which points to the content of $y$, namely $jkl$.

\begin{figure}[h]
\begin{center}
\begin{minipage}{10cm}
\xymatrixrowsep{15pt}
\xymatrixcolsep{10pt}
\xymatrix{
 x \ar[r] & \txt{\fbox{\parbox{3.7cm}{\fbox{\parbox{2cm}{\fbox{abc} \quad \fbox{def}}} \quad \fbox{ghi} }}} \ar@{.} \ar@/^2pc/[rr]^{} && \txt{\fbox{\parbox{0.7cm}{jkl}}} & & \ar[ll] y
}
\end{minipage}
\end{center}
 \caption{$x$ points to an object which points to $y$}
 \label{fig:ObjectPointsTo}
\end{figure}
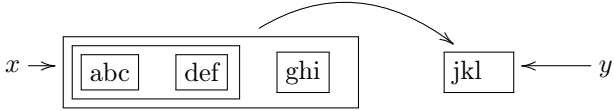

A class hierarchy by inheritance \index{poset} denotes a \textit{poset}, whose \index{infimum} infimum by default is an empty class, e.g.  \texttt{Object} or \texttt{[]}.
Thus, any two classes in the hierarchy are comparable.
Even two unrelated classes may be comparable, according to $\sqsubseteq$, $\not\sqsubseteq$.

ACL allows inheritance, as well as \index{delegation} \textit{delegation}, \textit{inner} fields \index{object!field} and \index{object instance} objects.
The use of keywords: \index{\texttt{self}} "\texttt{self}" and \index{\texttt{super}} "\texttt{super}" as defined in the \index{C++} C++ language \cite{isocpp14} are permitted.
However, their use is underlain by prior semantic checks in the context of the related class.
This additional security measure and further measured required for life-cycle management quickly bloat in \index{object calculus} OCs.
Cardelli notices that additional checks due to actual and pointed classes and just using the keyword \index{\texttt{self}} "\texttt{self}" leads to compulsory checks of the whole \index{inheritance hierarchy} \index{class} class hierarchy.
Extensive unneeded checks can easily be avoided if each variable is fully qualified at an early semantic analysis stage.
Subclassing and reference checks do imply (full) searches on the class hierarchy, which has an acceptable bound of $\Theta(n)=log(n)$.
However, a severe restriction lurks regards \textit{late binding} and virtual methods. 
Practical and theoretical implementation details on inheritance and class hierarchies for \index{C++} C++ can be found in Raman's paper \cite{raman11}.\\

Object methods contain code which by default, is visible only to the object itself.
For simplicity reasons, we will model methods accessible only after \index{\texttt{new}} \texttt{new} creates the object in contrast to \index{static} static methods.
There is no need to access the previously allocated memory region from a particular object in static classes because variables in the stack window are allocated independently from an object.
By convention, every object has a unique method, a constructor, for initialising fields.
Its code is generated to the \index{assembler} assembly file's section \index{constructor} \index{.ctor} \texttt{.ctor} in \index{C++} C++ \cite{isocpp14}.
For destruction, \index{.dtor} "\texttt{.dtor}" is used.
Thus, the following object phases may be defined (see fig.\ref{fig:ObjectLifeCycles}), which will serve us later in specification:

\begin{figure}[h]
 \begin{center}
  \begin{tabular}{rl}
    1. & Constructor (c-tor)\\
    2. & Destructor (d-tor)\\
    3. & Perpetual Presence\\
    4. & Precondition to a Class Method\\
    5. & Postcondition to a Class Method\\
    6. & Arbitrary Verification Condition
  \end{tabular}
 \end{center}
 \caption{Specifiable life-cycle of an object}
 \label{fig:ObjectLifeCycles}
\end{figure}

Points 1 and 2 have just been discussed.
Point 3 relates to any state of a non-dead object.
An object is non-dead iff it is life -- this is after construction but before \index{memory deallocation} destruction.
Points 4 and 5 imply specifications for procedures.
For the sake of application, a forced specification of the whole program shall be avoided.
That refers to the lightweight approach, which will not be considered further yet.
It is agreed upon that any arbitrary assertion may be inserted between statements as we go, which are optional (see sec.\ref{chapter:APs}), in order to increase flexibility and \index{localisation of errors} problem localisation by default.
In both calculi, the methods are not artificially bound apriori.
However, unbound methods may also include code modification, removal of existing and insertion of new methods.
Obviously, in unbound calculi, \index{specification} specifications grow in complexity, and changes are complex and error-prone.
Even worse, unbound calculi may become theoretically not decidable.
Thus, maximum opportunities here lead to too harsh verification limitations.

As mentioned earlier, unbound methods provide neither do they extend \index{Turing-computability} Turing-computability.
Hence, methods with mutable code are no more considered in this work by default.
Object-based calculi have the logical judgement
$$E \vdash obj : A :: B$$
where $E$ is an object environment containing all created objects.

The location $obj$ is an object represent (a variable in calculus 1, and a whole object in the case of calculus 2).
$A$ is of type $obj$ (so the class names $obj$).
$B$ is a specification of object $obj$.
The specification may contain, for example, \index{object!state} the object's internal state description.
If $obj.m()$ is called instead of $obj$, then $B$ has state $obj$ before and after $m$ is called, but $B$ may also include assertions from fig.\ref{fig:ObjectLifeCycles}.
The assertions may not contradict the actual runtime state.
Both the \index{register!special purpose} embedded auxiliary operators and the special-purpose register enable access to the results before and after statement execution.
The register may either contain the whole object (in case 2) or an object field.
To any considered OC, the judgement on objects, type and assertion, is essential.

Remarkably, on the one side, $B$ may alter over time, but on the other side typing of $A$ may not.
That becomes a problem when a \index{signature} remote method alters its signature, but its state does not.
For instance, a web-server or driver are loaded when its specification is fixed, but its behaviour is unknown.

Both \index{Abadi-Cardelli calculus} ACC and \index{Abadi-Leino calculus} ALC do not support pointers (see later).
Pointers avoid copying redundant memory regions, which leads to performance increase unless genuine copies are needed -- this question may not always be obvious to answer in practice (see later).

A \index{class type} class type $T$ is inductively defined as $T$ type, integer or another \index{class} class.
The class object includes fields $f_i$ and methods $m_j$, where all class components are unique, and each field is assigned some type $T$.
Methods have $$T_j \rightarrow T_{j+1} \rightarrow \cdots \rightarrow T_k$$ as a type signature.\\

Next, calculus 2 is considered (ALC).
The PL \index{Baby Modula 3} Baby Modula 3 \cite{abadi93} is introduced by Abadi and the company DEC as an experimental modular PL.
A minimal feature set adopted from \index{Modula-2} Modula-2 is very close to \index{Pascal} Pascal by its syntax and semantic, but with extensions as an OC. 
The language proposed is close to the languages suggested by Abadi, Cardelli and Leino.
It contains all is needed for \index{$\mu$-recursive schema} $\mu$-recursive schemas, particularly assignment and unbound recursion.
\cite{gunter94} contains a detailed discussion on object expressibility, even if some chapters seem old-fashioned at first glance, but it is still relevant today.
Originally calculus 2 was introduced as an alternative representation to the existing widely-known.
Both calculi are dual, which can formally be shown by \index{full abstraction} full abstraction \cite{mitchell96}, \cite{plotkin77}, \cite{cohn83}, \cite{honda05}.
Full abstraction requires to prove two different kinds of semantics given are equivalent, namely \index{denotational semantics} denotational and \index{operational semantics} operational \cite{allison89}, \cite{lavrov01}, \cite{abramsky94}, \cite{winskel93}, \cite{tennent76}, \cite{bird97}, regarding the observed (object) behaviour.
\cite{reus02}, \cite{reus05} is provided for full reference.
Although equivalent, in ACC soundness, it is worth noting that it is much easier to prove (especially the denotational part) and often is preferred in practice.
However, strong supporters of the calculus may dispute the latter remark.
The reason why object-based calculi with operational semantics are so much more difficult is explained by recursion.
If recursion is denoted symbolically, then it is much easier to detect and refer to such than guessing boundaries of object boundaries, especially when it comes to object specifications.
Baby Modula 3's syntax is based on term expressions of kind $a$, as shown in fig.\ref{fig:ObjectTermsAbadiLeino}, where $A$ denotes a typed expression.

\begin{figure}[h]
\begin{center}
\begin{tabular}{rll}
 a ::= & $x$ & .. variable\\
       & fun($x:A$)$b$ & .. sub declaration with body $b$\\
       & $b(a)$ & .. call of $b$ with parameter $a$\\
       & \texttt{nil} & .. nil type\\
       & a[f=b,m=c] & .. object extension\\
       & a.f & .. field access\\
       & a.m & .. method invocation\\
       & a.f:=b & .. field assignment\\
       & a.m:=c& .. method invocation\\
       & \texttt{wrong} & .. wrong type\\
\end{tabular}
\end{center}
  \caption{Definition of object terms after Abadi-Leino}
  \label{fig:ObjectTermsAbadiLeino}
\end{figure}

A term expression is very similar to Pascal's expressions, so it does not require further details.
$fun(x:A)b$ defines an \index{anonymous function} anonymous procedure equivalent to $\lambda$-abstractions with a single input $x$ of type $A$.
W.l.o.g. anonymous procedures accept zero or more arguments.
Variable $x$ is in the procedure's body $b$ and bound on call.
A procedure call, e.g. $b(a)$, implies actual and formal parameter types match.
An empty value \texttt{nil} \index{type compatibility} is by definition applicable with all pointer types.
For example, uninitialised object fields contain \texttt{nil} too.
An object extension adds initialised fields not clashing with existing fields.
Methods are extensible similar to field extension as just described.
Methods may be updated.
\index{wrong} "\texttt{wrong}" is a reserved keyword that indicates an incorrect calculation encountered.
An object can always be defined using a vector that consists of a \index{poset} poset in lexicographical order, and its elements are pairs of field/method name and corresponding value.
The content may again be an object, which is defined in analogy to fig.\ref{fig:ObjectTermsAbadiLeino}.

Its authors provide the semantics of \index{Baby Modula 3} Baby Modula 3 as denotational semantics, and top-level has two stages: \index{type checking} type checking (a fixed rule set) and one for verification (another flexible rule set).
Object type-checking must be component-wise and requires some subclass relationships $A<:B$, which holds iff $B$ contains all fields and methods of $A$.
Since no object identification possible in the object-based calculus as with the class-based calculus, recursively-defined objects must be simulated by \index{fixpoint combinator} fixpoint combinators as in the untyped $\lambda$-calculus having then similar restrictions.
$\mu(X)$ may be used to achieve just the said for any object $X$ according to fig.\ref{fig:ObjectTermsAbadiLeino}.
It is assigned after $\mu(X) \ \tau$ is applied the same way as for $\lambda$-terms, e.g. in $\lambda x.x \tau$.
So, object checking includes variable environments, \index{subtype} subtypes (including sub-classes) and common structural rules.
They are used to perform type checking and object verification (see later, cf. (minimalistic) \textit{LCF} \cite{plotkin77}).
For the sake of simplicity (see \cite{leino98}), structural operational semantics are often used \cite{plotkin81}, although, as mentioned earlier, a transformation into/from denotational semantics remains a quite difficult act.

However, $\mu$-objects are a considerable problem because, for the same initial values, reasoning may differ, so \index{soundness} unsoundness becomes evident.
Let us consider the rules from fig.\ref{UnsoundRules1} as an example.

\begin{figure}[h]
\begin{center}
\begin{tabular}{rcl}
 \inference{\vdash \texttt{nil}:\mu(X)\texttt{Root}\\ \texttt{Self}=\mu(X)\texttt{Root}}{\vdash \texttt{nil}:\mu(X)\texttt{Root}} & & \inference{\vdash \texttt{nil:Root}}{\vdash \texttt{nil}:\mu(X)\texttt{Root}}
\end{tabular}
\end{center}
 \caption{Example of unsound rule set}
 \label{UnsoundRules1}
\end{figure}

So, rules containing recursion may be violated or not provable due to non-confluent proofs (cf. fig.\ref{fig:CRTonHoareTriples}).
Here, only the trivial solution may be considered as the only reliable.
This situation is, unfortunately, not satisfactory.
That is why an object in ALCs, in general, may not be \index{typing} typed in first-order and second-order calculi.
This hinder may be overcome by enriching verification and typing rules with additional context-sensitive rules.
However, this dramatically wor\-sens sim\-pli\-ci\-ty.
In ALC, classes as such do not exist.
Objects may have an arbitrary number of fields with other objects as content.
So, a class may depend on other classes, including \index{typing!higher-order} itself.
This additional dependency motivates the introduction of the third-order $\lambda$-calculus $\lambda^{\rightarrow}_3$ (see def.\ref{def:TypeChecking}).
Thus, verification confluency and type closure significantly impact static analyses with class-types \cite{abadi93}.
If confluency is decidable in a finite amount of time and verification is countable, all named types may be joined together in a union-typed \index{record} struct.
Thus, the newly obtained struct unions all \index{depedent type} dependent types.
The authors of \cite{abadi93} mention united types as described are all enclosed, and all gained (sub-)types may again be used for object-composition.
The gained object type is an algebraic \index{type ideal} ideal based on object comparison \index{type relationship} "$<:$", see further.\\

\begin{figure}[h]
\begin{center}
\begin{tabular}{lll}
 $a,b$ & $::=$ $x$  & .. variable\\
       &     $| \ \texttt{false} \ | \ \texttt{true}$ & .. logical predicate\\
       &      $| \ \texttt{if }x\texttt{ then }a_0\texttt{ else }a_1$ & .. branch\\
       &      $| \ \texttt{let }x=a\texttt{ in }b$ & .. symbolic assignment\\
       &      $| \ [f_i=x_i^{i=1..n},$ &\\
       &      $  \ \ m_j=\psi(y_j)b_j^{j=1..m}]$ & .. c-tor\\
       &      $| \ x.f$ & .. field accessor\\
       &      $| \ x.m$ & .. method invocation\\
       &      $| \ x.f:=y$ & .. field assignment
\end{tabular}\\[0.3cm]
where $x,y,z,w$ are variables, $f,g$ are field objects of object $x$, and $m$ denotes a method.
\end{center}
  \caption{Simplified Abadi-Leino object definition}
  \label{fig:ObjectTermsAbadiLeinoSimplified}
\end{figure}

The PL presented in \cite{abadi97} (see fig.\ref{fig:ObjectTermsAbadiLeinoSimplified}) simplifies the language from fig.\ref{fig:ObjectTermsAbadiLeino}.
However, assignments do not exist.
Instead, there is a symbolic assignment (one-time only) and a \index{conditional branch} conditional branch, which is factually additional \index{syntactic sugar} syntactic sugar.
Furthermore, dynamic methods are prohibited.
The \index{syntax} syntax is defined as in fig.\ref{fig:ObjectTermsAbadiLeinoSimplified}.
\index{$\alpha$-conversion} \index{renaming} Re\-na\-ming on runtime is allowed, but only if the caller and callee obey calling conventions.
\index{variable!local} Locals are prohibited in procedures.
Method parameters are prohibited too if they alter in a method with no accurate restriction towards \index{expressibility} expressibility since additional fields can always be simulated.
Assignment \index{assignment} returns an object as a result.
An essential difference, though, from \cite{abadi93} is ALC excludes recursively defined objects.
The subtype relationship is defined in fig.\ref{DefinitionClassSubtype}.

\begin{figure}[h]
\begin{center}
\begin{tabular}{lll}
$\vdash A <:A'$ & $\Leftrightarrow$ &  $[ f_i:A_i^{i=1..n+p},m_j:B_j^{j=1..m+q} ]=A \ \wedge$\\
                                   \multicolumn{3}{c}{$\wedge \ [ f_i:A_i^{i=1..n},m_j:B_j^{j=1..m} ]=A', \quad p,q \in \mathbb{N}_0,$}
\end{tabular}
\end{center}
 \caption{A possible definition of subclassing}
 \label{DefinitionClassSubtype}
\end{figure}

\index{type relationship} where $A'$ is some class, and $A$ is its subclass.
The relation $<:$ facilitates \index{type checking} type checking, and most importantly, the object \index{rule!object constructor} construction rule (CONS).

\begin{center}
\begin{tabular}{c}
\inference[(CONS)]{A \equiv [f_i::A_i^{i=1..n},m_j::B_j^{j=1..m}]\\ E \vdash \diamond \quad E \vdash x_i::A_i^{i=1..n}\\ E,y_j::A \vdash b_j::B_j^{j=1..m}}{E \vdash [f_i=x_i^{i=1..n},m_j=\psi (y_j)b_j^{j=1..m}]:A}
\end{tabular}
\end{center}

This \index{type relationship} relationship is defined over $A$ being some \index{object type} object type (non-class here), $E$ an object environment, $x_i$ some field value, and $b_j$ some method value (procedure body).
Verification follows in ALC this judgement

$$\sigma,S \vdash b \rightsquigarrow r,\sigma'$$

where $\sigma$ is an \index{initial stack} initial stack, $S$ denotes \index{stack} stack state, $b$ given statement sequence ("\textit{the program}"), $r$ denotes the result which accommodates in the \index{virtual register} \textit{virtual register} of potentially unlimited size, and $\sigma'$ denotes the final \index{memory state} memory state.
\index{specification} Specification of \index{object instance} objects is recursively defined component-wise as

$$[f_i:A_i^{i=1..n},m_j:\psi (y_j)B_j::T_j^{j=1..m}] ,$$

where $A_i$, $B_j$ are specifications, $y_j$ is some parameter \index{anonymous function} of anonymous procedures $\psi$ used in $B_j,T_j$, and where $T_j$ denotes the specification transition from one memory state in the following.

\index{specification} \index{judgement} Narrowing specifications is defined among $E \Vdash a:A::T$, where $E$ is an object environment, $a$ denotes a program, $A$ is some type, $T$ is the transition description.
So, type checking $A$  and transition states in $B$ are applied simultaneously, even if spread around different stages.
It is not hard to see that $B$ is too bloated, and even tiny changes may lead to enormous consequences.
The \index{type relationship} type relationship $<:$ \cite{abadi97} proposed is accordingly to specifications.
Analogous problems affect specifications.
For example, let us verify the program:
$\emptyset \Vdash ([f=\texttt{false}].f:=\texttt{true}).f:\texttt{Bool}::(r=\texttt{true})$.
The corresponding annotated \index{proof!tree} proof tree can be found in fig.\ref{ExampleProofDerivation}.

\begin{figure}[h]
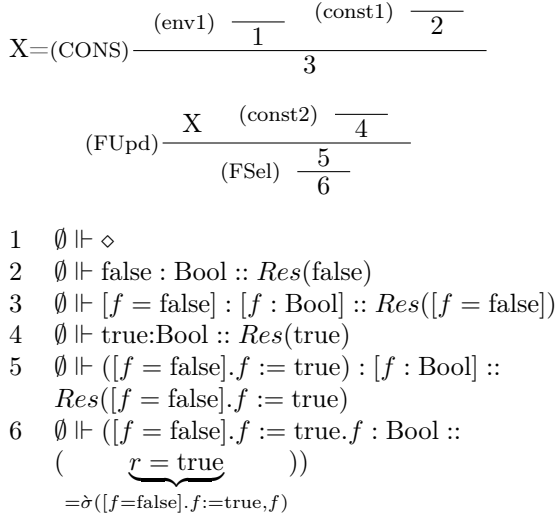

\begin{center}
\begin{tabular}{l}
\begin{tabular}{c}
X=\inference[(CONS)]{\inference[(env1)]{}{1} \quad \inference[(const1)]{}{2}}{3}\\\\
\inference[(FUpd)]{X \quad \inference[(const2)]{}{4}}{\inference[(FSel)]{5}{6}}
\end{tabular}\\\\

\begin{tabular}{ll}
 1 & $\emptyset \Vdash \diamond$\\
 2 & $\emptyset \Vdash \texttt{false}:\texttt{Bool}::Res(\texttt{false})$\\
 3 & $\emptyset \Vdash [f=\texttt{false}]:[f:\texttt{Bool}]::Res([f=\texttt{false}])$\\
 4 & $\emptyset \Vdash \texttt{true:Bool}::Res(\texttt{true})$\\
 5 & $\emptyset \Vdash ([f=\texttt{false}].f:=\texttt{true}):[f:\texttt{Bool}]::$\\
   & $Res([f=\texttt{\texttt{false}}].f:=\texttt{true})$\\
 6 & $\emptyset \Vdash ([f=\texttt{false}].f:=\texttt{true}.f:\texttt{Bool}::$\\
   & $(\underbrace{r=\texttt{true}}_{= \grave{\sigma}([f=\texttt{false}].f:=\texttt{true},f)}))$
\end{tabular}
\end{tabular}
\end{center}
 \caption{Proof tree example for an object-based calculus}
 \label{ExampleProofDerivation}
\end{figure}

The definitions of used rules are in fig.\ref{ExampleRulesetObjectCalculus}.

\begin{figure}[h]
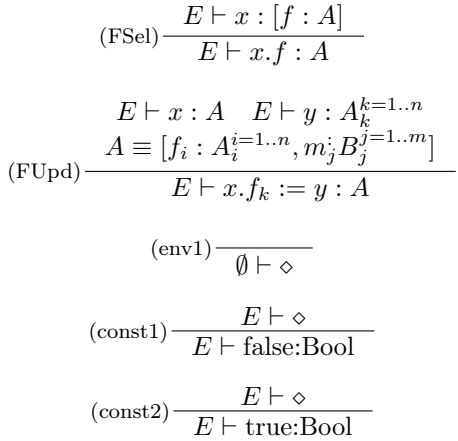

\begin{center}
\begin{tabular}{c}
\inference[(FSel)]{E \vdash x:[f:A]}{E \vdash x.f:A}\\\\

\inference[(FUpd)]{E \vdash x:A \quad E \vdash y:A_k^{k=1..n}\\ A \equiv [f_i:A_i^{i=1..n}, m_j^:B_j^{j=1..m}]}{E \vdash x.f_k :=y:A}\\\\

\inference[(env1)]{}{\emptyset \vdash \diamond}\\\\

\inference[(const1)]{E \vdash \diamond}{E \vdash \texttt{false:Bool}}\\\\

\inference[(const2)]{E \vdash \diamond}{E \vdash \texttt{true:Bool}}
\end{tabular}
\end{center}
 \caption{Example rule set for an object calculus}
 \label{ExampleRulesetObjectCalculus}
\end{figure}

Assertion $\diamond$ denotes true.
More on the chosen rule set's \index{soundness} soundness property can be found in Abadi's paper \cite{abadi93} and its accompanying technical report with the same title.
It is worth noting \index{Abadi-Leino calculus} ALCs lack \index{abstraction} abstraction techniques, as introduced in the previous sections.
The parameter visibility only goes from outside in, but not vice versa.
For instance, in $b_1 \equiv \texttt{let }x=(\texttt{let }y=\texttt{true in }[m=\psi (z)y]) \texttt{ in }x.m$, the inner $y$ cannot be checked outside the procedure.
According to the rule set from \cite{abadi93} and the difference between predicates (whichever order and restrictions it may finally be) and \index{assertion} assertions, this comes as a surprise.
Abadi and Leino suggest future work improvement of specification abstraction, further research on mutable parameters, especially w.r.t. completeness, pointers over objects and object fields, and research on recursively defined objects.
None of the selected calculi, neither ACC nor ALC, pay attention to pointers or \index{alias} aliases (also see later).
In \cite{cardelli96}, essential extensions are named but not yet been proposed: multi-threaded execution, addressing \index{field} fields, use of abstraction in specifications as initially was proposed by Hoare \cite{hoare69}.


Banerjee \cite{banerjee08} presents a language that supports objects accommodated in the stack \index{stack} within the same memory region (cf.\cite{tofte97}).
Banerjee's approach moves all \index{variable!local} local variables into the stack.
However, \index{dangling pointer} dangling pointers are forbidden by default.

Recursive predicates over objects are also forbidden.
Unique about \index{global invariant} global invariants is the dependencies between objects that remain invariant.
He warns about a strongly emerging problem of abstraction and supports the initiative of an object-centric predicate view.\\


\textbf{\underline{Programming Threads}}. In \cite{abadi97}, \index{object} objects qualify as \index{ADT} ADT but not as a usual record (struct) \cite{ehrig80}.
The difference is that records are defined and identified not necessarily by their structural components, not primarily by their fields.

\index{graph transformation} \textit{Graph transformation} can be considered as \index{reduction over records} \textit{reductions over records}.
Ehrig and Rosen suggest a subcategory of procedures known to be safe, so it does not mix dependent object fields with analysis using Category Theory \cite{pierce91}.
Their approach may be considered safe towards single-threading application if access to a subset of object fields does not depend on each other.
That is how the authors grant sufficient field access without \index{thread blocking} blocking an entire thread.
Thus, the inherited procedure and \index{GC} GC may simultaneously run if they do not influence the program's behaviour.
Similar approaches in accessing object are in \cite{jones11}.

Huisman \cite{huisman07} also investigates multi-threaded access to an object, as do Hurlin \cite{hurlin09} and \cite{ehrig80}.
By \index{design by contract} design-by-contract, Huisman tries to grant \index{method} method access, but he also attempts \index{temporal assertion} temporary access control.
Huisman suggests tracking a specification's history of accessed objects and \index{object invariant} methods.
It is also suggested to enrich object descriptions with class \index{stereotype} stereotypes of collaborating object commands, due to a lack of \index{abstraction} abstraction level, especially in multi-threaded method calls.\\

\textbf{\underline{Frameworks}}.
Alternatives or modified \index{formal method} formal methods are needed (see def.\ref{def:FormalProof}) to achieve a specification of whole programs, one of which Meyer's approach about Separation of Concerns \index{separation of concerns} \cite{meyer98}.
Meyer \cite{meyer92} suggests, as many did before him, a methodology \index{Liskov's principle} based on Liskov's modularity approach, in analogy to \index{Parnas' principle} Parnas' hiding principle.
In tight algorithms, \index{actor} actors are assigned \index{stereotype} stereotypes according to which they act.
The definition of a role in a given \index{ontology} \textit{ontology} of many participants and collusions allows a better understanding of the given object structure.
Meyer's theorem may be summarised as:

\begin{center}
 The responsibility is divided [among ontology units], the more such derived objects may be trusted.
\end{center}

The intention is: the smaller a program gets, the more its \index{complexity} complexity is reduced.
There, the initial program is not touched, except for the performed reduction.
A metric, e.g. \index{metrics} McCabe's metric, is a real number that measures the complexity of a given program w.r.t. its \index{CFG} CFG.
The more a program is split to easy to control and logically related units, the simpler the CFG becomes.
Metrics correspond to programs, which becomes easier to compare.
We notice that an object that may occur simply at first glance may have many dependencies with an arbitrary number of other objects, which may not be known before whether the program is rather complicated.
Such \textit{unprecedented}, unexpected dependencies often are covered and hard to grasp totally at first glance.
Besides, they often appear as sources of errors due to poor design and are a clue in serious mismatch to specifications.
Often remote objects being not related to a scenario are involved, which to simple models and algorithms are unlikely and lead to situations where two related objects actually may not be connected.

Riehle \cite{riehle08} suggests tests be introduced covering \index{specification} specifications, s.t. localising specification mismatches on object-level is simplified within (object) \index{collaborations} \textit{collaborators}.
From his role-based inheritance models, variability can be observed as the single most important property besides extensibility, which is granted by zero or more aggregations.
Variability and extensibility together can be summarised as essential platform features (API) for (re-)designing, and modelling frameworks --- the heap verifier in this work, is a framework.

Interconnection between object modules are checked against specifications, e.g. using \index{modular algebra} Modular Algebra \cite{bergstra90} or \index{component algebra} Component Algebra \cite{feijs02}, \cite{sifakis05}, \cite{tonella01}.
These two approaches are well-founded and formalise interfaces between class objects \index{class instance} and other objects.
The interface description, however, is bound by input and output method parameters.
\cite{fischer00} describes a navigational attempt over objects according to a given specification.
In \cite{bergstra90}, \cite{feijs02} and \cite{assman06}, the method calls are recommended to track and be callable in constraints, so, e.g. a method may not be called until a list of all previously called methods is completed.
This approach seems very similar to Huisman's approach.
Since program verification often insists on a full \index{full specification} specification, it makes the overall verification and specification a nightmare since only a single gap would invalidate the whole specification.
When it comes to bigger programs, all that becomes close to absurd.
However, nearly all the approaches presented rely on a full specification \index{language!programming} and do reason about \index{pointer} dynamic memory.
A method call sequence may be useful not only inside an object as proposed but also as an \index{temporal assertion} extension that might be beneficial for the web of objects, e.g. exclusion of method calls to $m_2$ on any object first unless method $m_1$ is called on the third object.
Currently, the programmer must \textit{know} somehow a certain method may not be called unless a certain condition is met -- the same holds for the initialisation of field objects or arbitrary other calls and order.
This learning experience is crucial for any \index{framework} framework.
Instead, a static approach in analysing the object calling order would be convenient.

The \index{OCL} \index{Object Constraint Language} \textit{Object Constraint Language (OCL)} \cite{oclspec} is an extension of the \index{UML} \textit{Unified Modeling Language (UML)}.
OCL is graphical and textual and is the de-facto standard when modelling static and dynamic aspects of the desired software.
OCL allows insertion of formulae describing object interdependencies, e.g. it may restrict types in object relations.
Formulae describe mainly \index{$\lambda$-term} second-order predicate logic terms (see def.\ref{def:TypedLambda2ndOrder}), such as atomic types.

Apart from the earlier mentioned restrictions from previous sections, the lack of pointer specification solely remains.
Safonov \cite{safonov10} introduces and analyses many examples in compiler construction and related field areas.
His analysis accumulates in postulated criteria upon \index{trustworthy compilation} trustworthy compilation.
The author chooses compilers due to their very high complexity comparing their vast state transitions with others.
The complexity can easily be validated, for example, by a thorough code inspection generated, especially regarding branch instructions, or by analysing metrics for some PL processor, e.g. for \index{C++} C++.
The complexity analysis might be quite challenging when showing any given transformation is sound and optimal (also complete, reminding the \index{quality ladder} quality ladder from fig.\ref{fig:QALadder}).
In order to better achieve this objective, Safonov refers to Meyer's methodology.
However, as a concrete counter-argument against Meyer's approach, Leroy's CompCert \cite{bertot06} shall be named, which researches soundness, but only partially does the research performance.
Although the methodology proposed seems promising, unfortunately, it is tough to derive a significant practical meaning out of it because of the harsh practical restrictions.
Hence, Safonov considers the following criteria reasonable in practice instead:

\begin{itemize}
 \item Correct and plausible error diagnosis, including \index{counter-example} counter-example generation.
 \item \index{model transformation} Model transformation, including objects, is preferred due to small specification efforts.
 \item The option to extend and adjust \index{object calculus} object-based calculi.
 \item Use of formal descriptions if usability does not suffer.
 Particularly, Safonov considers complete coverage of a formal model and \index{locality} locality of a specification's most critical single criteria.
 \item \index{ADT} ADTs describe object invariants.
 \item IRs must be flexible enough and abstracted, s.t. they might be used on different verification stages.
\end{itemize}


\subsection{Dynamic Memory Models}
\label{sect:HeapModels}

\subsubsection{Transformation into Stack}
\label{sect:StackAlignment}

In sec.\ref{sect:TheoryOfObjects}, a discussion on OCs and transformation into stack took place.
\cite{tofte97}, \cite{tofte94}, \cite{regionmem10} research the \index{paradigm!functional} functional paradigm for verification where all variables are moved to the \index{stack!window} stack windows (in \cite{calcagno01}, a formal proof of soundness is demonstrated by Tofte and Talpin \cite{tofte97}).
Grossman's C-dialect called Cyclone is presented in \cite{grossman02} and implemented in ML.
It reasons about so-called regions (in the so-called memory Region Calculus) and is defined as a set of corresponding heap elements, which may be utilised at once in a release phase.
Particularly the transformation into stack must now pay special attention w.r.t. variable visibility, s.t. accidental destruction is avoided by any means.
In \cite{fluet06}, variable visibilities beyond a current stack window are discussed in more detail, and an extension towards Cyclone is proposed.
If the \index{visibility scope} variable visibilities are not carefully transformed, then derived \index{alias} aliases, particularly the "\textit{maybe-aliases}", may become unsound.
The functional approach excludes \index{dynamic parameters} dynamic parameters, recursively defined structures and procedure which return \index{functional} functionals.
All function types must be defined at compilation time.
In Meyer's \cite{meyer1-03}, \cite{meyer2-03} opinion \index{GC} GC (see \cite{larson77}) is most important beside soundness.
Hence, he proposes to fully avoid dynamic memory and re-arrange algorithms to work only with the stack instead.
As mentioned in \cite{tofte97}, the push into the current \index{stack!push} stack window does not occupy more than through a minimal size.
He insists object abstraction is dedicated to more attention and suggests letting dedicated variables be in a specification.
However, the full proof for \index{soundness} soundness is missing.
Even a simple \index{assignment} assignment may be unsound -- as not mentioned in the paper, it may alter the program code whilst program runtime.
However, this sporadic case can, under usual circumstances, be ignored.
Access to stack content may be faster than to dynamic memory but always requires overhead for pushing in and popping out data, which depending on the considered algorithms, may significantly reduce the overall performance, s.t. the use of dynamic memory may again be faster (cf.\cite{appel87}).

\subsubsection{Shape Analysis}

The main objective of \index{alias analysis} AA \cite{sagiv02, nielson99, pavlu10, distefano06, shapeanal10} is to find \index{invariant} invariants in a \index{heap} heap on program runtime, e.g. for the localisation of \index{alias} aliases.
Shapes are corresponding invariant and flexible partitions of the dynamic memory \cite{shapeanal10}, e.g. for a simply-linked list \cite{distefano06} or a doubly-linked list \cite{cherem07}.
A \index{dependency graph} dependency graph is always fully described by its \index{transfer function} transfer functions like empty shape, \index{field} field or pointer assignment, new \index{dynamic memory} dynamic allocation.
Sagiv et al. \cite{sagiv02} introduce relationships between two \index{pointer} pointers: \index{alias} "\textit{is alias}", "\textit{is not alias}" and "\textit{maybe alias}".

Pavlu \cite{pavlu10} noticed first that \cite{sagiv02} and \cite{nielson99} might lead to an imprecise (meaning unsound) reasoning depending on a given program.
Namely, if for a given if-then-statement in one case "\textit{maybe alias}" and in the alternative case "\textit{alias}" is inferred.
Since the overall "alias" defined by Sagiv's calculi would not be sound, the correct answer would be "\textit{maybe alias}". 
Moreover, \cite{pavlu10} contains a detailed comparison between \cite{sagiv02} and \cite{nielson99}.
\cite{pavlu10} assesses \cite{nielson99}.
The finding of \cite{nielson99} is a more precise method and allows a path optimisation of \index{shape graph} shape graphs with identical beginnings and aliases.
These shape graphs may be reused as more effective and shorter.
The optimisation proposed has an upper bound of $\Theta(n^2)$ and reduces the calculation on average by approximately 90\%, which is quite impressive.
Pavlu suggests complex AAs is simulated extra-procedural, s.t. procedure calls and \index{variable!global} globals are turned into local \index{intra-procedural alias analysis} intra-procedural elements as much as possible by renaming.
His proposition has already been applied to the current \textit{GCC} implementation.
Both, Pavlu and the author of this work consider context-independent attempts in exact AAs have no future (cf.\cite{hind01}) since characteristics seem evident and devastating.
Furthermore, he suggests an optimised separation of merged vertices, which build up sub-graphs from the vertices, which strictly do not contain aliases.

\cite{parduhn08} represents a tool for visualising heap shapes for fast analysis and localisation of \index{invariant} invariants and rarely used bridge-edges between loosely coupled shapes for navigation in a graph \index{graph abstraction} abstraction and \index{graph concretisation} concretisation.
Here, concretisation allows to fold \index{fold} and unfold \index{unfold} subgraphs (all manually).
Visualisation of \index{transfer function} transfer functions leaves much to be desired.
A transfer function implies a \textit{folds} or \textit{unfolds} of or into \index{subgraph} subgraphs.
However, a principal improvement is not expected soon due to fundamental restrictions here.

\cite{calcagno09} presents an attempt based on \index{heap separation} heap separation to shapes.
The approach compares potentially matching beginnings of rules by length (longest first).
Special is the \index{approximation} approximation of both sides of considered rules for a rule selection based on \index{abduction} bound abduction.

\subsubsection{Pointer Rotation}

\cite{suzuki82} suggests \index{pointer rotation} "\textit{pointer rotation}", which is safe under these conditions: (1) the content of heaps remains invariant, (2) all elements in memory remain valid before and after rotation, (3) the total number of variables remains invariant.
The advantage of a safe rotation is, first, the absence of GC -- this can easily be defined, and, second, effective list operations, e.g. copying.
Although pointer rotations do not insist on a direct specification, a minimal parameter modification can still lead to a hard-to-predict situation, especially when pointers are \index{alias} aliases.
\cite{suzuki82} suggests a base asset of rotations.
However, this may be insufficient in practice, as more flexibility and a plausible framework upon base rotation composition are crucial.
Furthermore, rotations are very sensitive to minimal changes on calls.
For example, changing the order of compatible parameters may look safe at first glance -- but it may fatally destroy or even free the existing heap totally as a surprise and, in fact, as an undetected corner-case.
Surprise also occurs with harmless-looking input data -- e.g. when the first parameter is an empty list -- or when rotations are sequenced.
Any rotation may be represented as an element of a group permutation, where its components $x_j$ denotes some pointer location -- for instance:

\[
  \begin{pmatrix}
   x_1 & x_2 & x_3 & \cdots x_{n-1} & x_n\\
   x_2 & x_3 & x_4 & \cdots x_n & x_1
  \end{pmatrix}
\]

Naturally, rotations may be composed.
For example, let us consider a modified sample from \cite{suzuki82} which is illustrated in fig.\ref{ExamplePointerRotationSuzuki}.

\begin{figure}[h]
\begin{center}
\begin{minipage}{7cm}
\begin{tabular}{ccc}
\begin{minipage}{3cm}
\begin{verbatim}
var temp: T;
y:=NIL;
while x!=NIL do begin
  temp:=x^.next;
  x^.next:=y;
  y:=x;
  x:=temp;
end
\end{verbatim}
\end{minipage}
 &
 $\Leftrightarrow$
 &
\begin{minipage}{5cm}
\begin{verbatim}
y:=NIL;
while x!=NIL do
  rotate(x,x^.next,y);
\end{verbatim}
\end{minipage}
\end{tabular}
\end{minipage}
\end{center}
 \caption{Code example for a pointer rotation after Suzuki (in Pascal)}
 \label{ExamplePointerRotationSuzuki}
\end{figure}
 
The code illustrated operates with a single-linked list with three difficult elements to transform and consists of steps 1 to 4 (see fig.\ref{HeapWatchOnProgrammExecutionPointerRotation}).

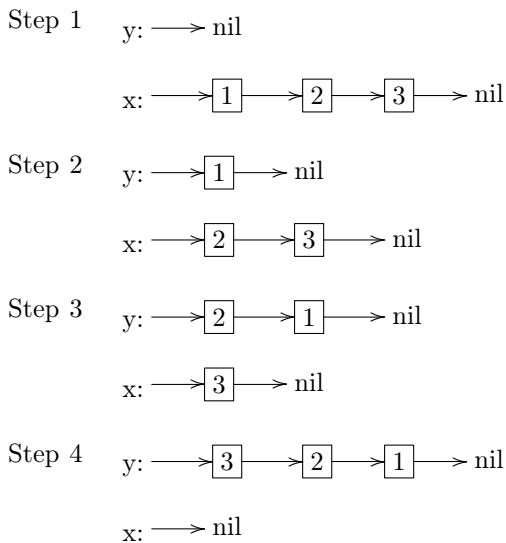
\begin{figure}[h]
\begin{center}
\begin{tabular}{ll}
 Step 1 &
 \xymatrix@C=2em@R=1em{
  \txt{y:} \ar[r] & \texttt{nil}\\
  \txt{x:} \ar[r] & *+[F] \txt{1} \ar[r] & *+[F] \txt{2} \ar[r] & *+[F] \txt{3} \ar[r] & \texttt{nil}
 }\\[1.5cm]
Step 2 &
 \xymatrix@C=2em@R=1em{
  \txt{y:} \ar[r] & *+[F] \txt{1} \ar[r] & \texttt{nil} \\
  \txt{x:} \ar[r] & *+[F] \txt{2} \ar[r] & *+[F] \txt{3} \ar[r] & \texttt{nil}
 }\\[1.5cm]
Step 3 &
 \xymatrix@C=2em@R=1em{
  \txt{y:} \ar[r] & *+[F] \txt{2} \ar[r] & *+[F] \txt{1} \ar[r] & \texttt{nil} \\
  \txt{x:} \ar[r] & *+[F] \txt{3} \ar[r] & \texttt{nil}
 }\\[1.5cm]
Step 4 &
 \xymatrix@C=2em@R=1em{
  \txt{y:} \ar[r] & *+[F] \txt{3} \ar[r] & *+[F] \txt{2} \ar[r] & *+[F] \txt{1} \ar[r] & \texttt{nil}\\
  \txt{x:} \ar[r] & \texttt{nil}
 }\\
\end{tabular}
\end{center}
 \caption{Dynamic Memory whilst rotating pointers}
 \label{HeapWatchOnProgrammExecutionPointerRotation}
\end{figure}

Depending on input data, one or more properties are observed, some of which are quite hard to generalise, others are not.
For instance, \texttt{rotate(x,x,y)} is a congruential mapping, s.t. 
$$\texttt{tmp:=x; x:=y; y:=x; x:=temp;}$$
Even a tiny modification may invalidate previously considered properties.
So, naturally appearing symmetries suddenly may not hold, which would occur alienating and spooky to a verifier, e.g. $rotate(x,y,x^{\wedge}.next)$, since $x$ and $x^{\wedge}.next$ are connected, and its intersection is not empty apriori.


\subsubsection{File System}

As we understand right now, a heap graph can also be simulated by a file system, s.t. vertices are files, and links are edges (e.g. "\texttt{ln -s}" creates a so-called \textit{symbolic link} under \index{linux} Linux).
Directories may be considered as objects, data containers literally.
The content of a vertex is in its file so that content may be huge.
Thus, a file system imitates a heap graph.
Now a schema may describe a given file system without tracing symbolic links.
The proposed structure is a tree and maybe encoded as semi-structured, such as an \index{XML} XML-document \cite{haberland07-1}, \cite{haberland07-2}.
In Linux, all operational resources are files.
Operations over files now correspond to operations over heaps.
The objective of verification may be defined as a check of a given file (system).
From another perspective, a file operation may be checked too by a soundness check (e.g. by an XML-schema), which may immediately reduce errors and defects with big and many files.
For the sake of simplicity, "\texttt{sha1sum}" or "\texttt{md5sum}" shall be used, especially for big and many files.

For check operations \cite{haberland08-1} and knowledge representation \cite{haberland08-2}, a declarative language might be used, ideally Prolog.
The outlined example \cite{haberland08-1} calculates two files from the top-level directory $a$ and matches both cases.
The top-level directory may be prefixed.

\begin{minipage}{12cm}
\begin{verbatim}
template(element(top,_,[A,A]),[text(T)]):-
     A=element(a,_,_), transform(A//p#1,T).
\end{verbatim}
\end{minipage}\\

A maximal approximation in \index{XSL-T} XSL-T would be:\\

\begin{minipage}{12cm}
\begin{verbatim}
<xsl:template match="top[count( child::*)=2] and 
   a[1] and a[2]">
  <xsl:text>
    <xsl:value-of select="//a//p"/>
  </xsl:text>
</xsl:template>
\end{verbatim}
\end{minipage}


\subsection{Further Areas}

Discussion and basic definitions on aliases and GC can be found in sec.\ref{sect:HeapModels}.
Furthermore, related areas are presented, which may be added to the framework suggested from sec.\ref{sect:ArchitectureVerificationSystem} and can be characterised as pointer analysis.\\

\subsubsection*{\textbf{\underline{Alias Analysis}}}
\label{sect:AliasAnalysis}

Weihl \cite{weihl80} gives an illustrative and still up-to-date review on this topic regardless of its age.
According to Weihl, AA is an emerging \index{pointer approximation} approximation process of pointers to the same memory cell, which is mutable, and as such, the cardinality of a dependent set of aliases rises exponentially.
AA beyond a procedure's boundaries is much more difficult in general because all possible calls and continuations of incoming pointers may be altered anywhere else.
Pointed content may alter in continuations, and this makes the overall analysis inequally more complex.
Weihl stratifies aliases to simplify, s.t. its complexity weights each program statement.

\cite{muchnick07} contains a full review of past and present \index{alias analysis} AAs, including his well-established too.
The papers from Muchnick and Weihl can, without exaggeration, be considered as the single most influential and most important in AA.
He parts analyses into dependent from a given \index{CFG} CFG and independent.
He also divides analyses by their result into "\textit{alias}" and "\textit{maybe alias}", and into closed subprograms and subprogram calls.
Naturally, although not done yet, Muchnick's AAs may be distinguished even further.
For example, Landi \cite{landi91} focuses on pointer visibility scopes.

Opposite to Muchnick, Cooper \cite{cooper89} suggests first an independent AA.
The primary motivation behind it is a substantial simplification because this is indispensable to track pointers and operations.
Imprecision is a significant drawback here.

It is worth noting that \index{GCC} GCC \cite{gcc15} allows the \index{C} C-programmer to tweak directives and pragmas manually, s.t. (strict) AA is skipped for selected functions.
Thus, generated code may become much more performant.

The approach in \cite{horwitz89} can be considered as Muchnick's extension.
By replacing the connectivity data structure with \index{bitfield} bitfields, this approach becomes more effective.
A general improvement may be derived from the analysis framework proposed by Khedker \cite{khedker09}.
Moreover, the presented framework is dedicated to assignments too.
It is extensible, might be generalised and is expected to incorporate even \index{SSA-form} \textit{SSA-form} \cite{cytron91}, \cite{ssabook15}, \cite{vert13}, \cite{sethi75}.
Naeem \cite{naeem09} asserts that access is improved by hashing (all) subsets of calculated aliases.
Notably, Naeem's approach contains a hidden proposition in rewriting \index{alias} AAs as a problem towards \textit{SSA}.
Notably, pointers are not just the only variables, which are hard to locate whilst static analysis \cite{srivastava92} beyond a procedure's boundaries --- they are one of the hardest problems ever.
Srivastava's approach \cite{srivastava92} is useful, apart from all other discussed work, because global optimisations during linking, e.g. whilst dead-code elimination, often may not be fully triggered on earlier compilation stages.

Pavlu \cite{pavlu10} divides \index{alias analysis} AAs into two groups:
a) unification-based \index{unification} \cite{steensgaard96}, which finds more \index{alias} aliases,
b) \index{floating approach} floating-based approaches that are much faster for the cost of higher imprecision.

Both \cite{landi92} and \cite{ramalingam94} reasonably question why \index{alias} alias-analysis is not yet resolved satisfactorily.
The answer is found in its complexity in locating aliases beyond procedure boundaries and its current limitation for the case "\textit{alias}".
In compressed dependency subgraphs, naming remains an unexpected but hard problem \cite{sagiv02}, \cite{nielson99}.
Moreover, these approaches are restricted in numerous ways: pointer access by index only allows constant expressions; pointers with objects are forbidden; C arrays with a dynamic length are forbidden (as so-called variable-length arrays introduced by C99 and available by default in Clang/LLVM); interleaving memory regions are forbidden (as \index{union} union-structures in C \cite{isocpp14}).

Clint \cite{clint73} deals with program verification and the complexity of soundness proofs related to \index{alias} aliases dealing with co-procedures.
Clint's contribution can be considered as first in the area.
A generalisation of Clint's work is the verification, including functionals, where \cite{mehta05} must be mentioned at least towards pointers in higher-order logic, as well as Paulson's technical report \cite{paulson93}, which tries to apply abstraction over procedures as substitution decision for a more effective pointer-based verification rule.

Gotsman \cite{gotsman06} presents an approximation approach towards \index{alias} inter-aliases based on SL.
Although the paper presents runtime characteristics and compares basic operations, the relationship is missing towards Muchnick's and Cooper's results but would be very helpful.\\

\subsubsection*{\textbf{\underline{Garbage Collection}}}

Jones et al. \cite{jones11} present an overcomplete review on \index{GC} GC focusing on parallel algorithms.
In contrast to the first edition, the second edition of \cite{jones11} is a total rewriting and almost exclusively parallel approaches.
Withington \cite{withington91} and Doligez \cite{doligez93} also give a review and can be considered as a genuine subset of the 2nd edition from \cite{jones11}.
Blackburn \cite{blackburn04} summarises the most important recent GCs and a whole chapter in \cite{jones11}.

Appel \cite{appel87} states that there are stack access limitations on code level in practice, especially w.r.t. copying onto and from the stack. 
The von-Neumann architecture and \index{ABI} \textit{ABI} primarily cause them.
Appel further suggests in that case, the use of dynamic memory might be better under the condition no additional overhead is needed or at least only a few resources are used for the \index{stack} stack housekeeping (see \cite{tofte97}, \cite{meyer2-03}).
Appel's multi-threaded contribution \index{GC} \index{copying GC} \cite{jones11} acts only on writing data, which may need seven times more free memory for the heap.

Regardless of its age, Larson's paper \cite{larson77} proposes a division between a fast \index{caching} cache and slower but massive memory is critical still valid up-to-date, not only with dynamic memory but with memory management in general.
Larson's \index{GC} \index{compacting GC} Compacting GC as a function of the number of memory regions to be freed ($R$), the amount of operational data ($A$) and the amount of available fast memory ($H$).
He postulates two optimal \index{proof!strategy} allocation/deallocation strategies: strategy no.1) maximise $R$, if $A\ll H$ does not hold, and strategy no.2) set $R=H$ when $A \ll H$.

Apart from the mentioned restrictions from \cite{appel87}, GC is also restricted due to addressing.
When \index{xor} XOR-calculated edges link memory regions, then the \index{garbage} garbage cannot be localised the traditional way because addresses belonging to objects are not absolute but are relative.
For example, this may affect object fields or pointers to some next element in a list.
Thus, the next and previous elements in a \index{doubly-linked list} doubly-linked list may be calculated, if any, by just one relative \index{offset} offset instead of two absolute addresses (namely pointers).

\cite{sun06} suggests a Generational GC, so allocated regions are assigned an age value that encodes a period that object may remain in memory until being \index{GC} collected.
If an object is rarely used, then after some iterations, that object may be moved into a bigger memory space which may be slower to access.
However, if an object is often referenced, then that object is moved to a faster \index{memory region} memory region -- this is in analogy how a program is managed by OSs and other memory managed facilities \cite{tristan09}, \cite{muchnick07}, \cite{tristan08}, \cite{reif80}.
Runtime performance is approximated close to the optimal solution based on extensive practical heuristics tracked and Sun's experiments.

\cite{hu10} assesses current \textit{SSD}-drives regarding different access operations.
The characteristics of access operations on \textit{HDD}-hardware drive behaviour is much closer to \index{heap} heap access than those of \textit{SSD}-drives – if implemented as such.
Essentially, heap writing is, on average, ten times slower than reading.
GC on SSD-based drives using a greedy-strategy is worst at a waiting rate of 45\%.
The worst significant \index{strategy} performance experience is achieved by writing through all-new blocks.
Writing blocks of consecutive data chunks achieve the best performance with GC turned off.

Calcagno et al. \cite{calcagno03} notice programs with a high abstraction level do not always have an advantage over programs with a low abstraction level altering heap.
For example, it is found that GC often is ineffective in functional PLs. 
Functional programs often cause small but dangerous programs.
On the one side, the danger is due to inaccurate uses of pointers.
On the other side, it is due to side-effects that cripple down the performance.

Waite-Schoor \cite{schorr67} suggest the first and most popular wide-spread approach.
A counter characterises the approach for each heap element.
The counter changes when elements are either freed (holding no further references) or when new elements are allocated.
The allocation is due to resources that the OS manages.
The same goes obviously for deallocation.

About 50 different well-defined GC techniques in total can be counted \cite{jones11}.
The more recent approaches are all focused on parallel execution.
The chances of further research on GC brings a break-through soon are nearly nought and not expected.
The meaning of such research is even more diminished by recent significant hardware improvements and available analysis tools.
After all, as also discussed, GC's overall role due to plenty of memory available changed over the years and is no more such a hot topic as it used to be in the previous 2-3 decades.\\

\subsubsection*{\textbf{\underline{Code Introspection}}}

Introspection means on runtime static program units are available in a dynamic environment so that code may be loaded, modified and added on runtime, e.g. for an object method.
Naturally, introspection makes basic reasoning much more complicated due to the Halting problem and, finally, the program verification in total.
All CI issues are theoretically bound by decidability and depend on PL and assertion feature and what is allowed during runtime.
In any case, a dependency mechanism must be implemented which controls communication between caller and callee.
For instance, in \index{C++} C++, CI is based on RTTI \cite{isocpp14} and in Java \cite{flanagan02}, each object has a unique identifier of a complex struct, which is then read on runtime.

Forman \cite{forman04} defines \textit{CI} in Java as a possibility to read and modify data structures and the loaded program itself during runtime.
Thus, code not only needs to be loaded but also compiled beforehand or interpreter on runtime.
The loading of objects is performed within a particular container providing all life-cycle operations needed.
Statements with reserved meaning perform alteration and access to classes, objects, and components.

Cheon \cite{cheon04} gives a short review of recent CI issues with Java and provides tractable solutions. 
Questions are hardly up-to-date, and almost all questions are resolved by now.
As executing unknown code is not only a security threat, it may also degrade soundness and completeness.
Here only a concise specification could take countermeasures.
It would be unfortunate to have a method with a specification that looks complete but would eventually not run, and the reason for that would be that specification is still missing some essential parts, especially when the specification is already bloated.
Imagine that there was just one case that would not work or work wrong among many working cases.
In order to tackle this problem, Cheon formulates critical circumstances that always must be taken into consideration:
\begin{enumerate}
 \item Loading must be possible at any time, where name clashings must be excluded in a container.
 \item On runtime, object accessors must obey typing.
 \item It is tough to track the specifications of inherited objects altogether. It is hard to understand huge hierarchies --- therefore, keep hierarchies flat.
 \item By its nature, specifications usually refer to objects in stack or heap --- how then to behave exactly w.r.t. introspection and stack windows, since stack unrolling may invalidate and roll back to a previous stable state.
\end{enumerate}

Nevertheless, what exactly is then considered "stable"? That would need to be considered from the user's perspective as well.
Major application fields on introspection are application loadings and web-servers, loading and access facilities for dynamic runtime libraries, e.g. \cite{zakharov15}.


\subsection{Existing Tools}
 
\cite{reynolds02}, \cite{berdine05-2}, \cite{reynolds09} present an excellent introduction to \index{SL} SL, accompanying tools.
SL is a \index{substructural logic} Substructural Logic \cite{restall00}, \cite{restall94}, \cite{dosen93}, which is free of constants, e.g. \index{boolean denotation} boolean values and uses symbols as structural replacements, so as constants.
In SL as structural rules serve according to \cite{restall00} rules of \index{rule!thinning} thinning, \index{rule!contraction} contraction, \index{rule!substitution} substitution.
Constants are \index{dynamic memory} dynamic memory cells (see sec.\ref{chapter:expression}).
In the rules "\textbf{,}" is replaces by $\star$, which separated two non-interleaving and differing from each other heaps, except said differently.
Dynamic memory heaps are inductively defined.
The operator $\rightarrow$, e.g. in the expression $a\rightarrow b$, defines a relation between symbol variables on the left-hand side and some value expression to the right (e.g. an object).
Examples on SL imply the conventions lists, as well as other SL-specifics, hold.
The \index{frame} \textit{frame rule} states that if a subprocedure call does not alter parts of the heap, so there exists some frame denoted by $F$ catching this invariant behaviour, then for the \index{antecedent} antecedent, it is sufficient to prove that the Hoare triple without $F$ also holds.
Let us consider an example of the frame-rule on heaps (definition later):

\begin{center}
\begin{tabular}{c}
\inference[(FRAME)]{\{P\}C\{Q\}}{\{P\star R\}C\{Q\star R\}}
\end{tabular}
\end{center}

Here, $P$ and $Q$ are pre-and postconditions, $R$ denotes the heap frame.
This principle is a basic modularisation rule, and we will refer to it when needed, except when stated differently.
If recursive specifications are allowed together with functionals, then the core frame rule might be violated because the context may be modified by any calling instance (can be implied from \cite{birkedal04} and \cite{pottier08}).
Thus, recursive specifications must be generalised.

\cite{berdine05-2} performs verification examples based on SL for an unbound \index{arithmetic!pointer} arithmetics over \index{offset} offsets to \index{pointer} pointers with increasing \index{array} arrays and recursive procedures.
For example, \texttt{p+x}, where \texttt{p} is a pointer and \texttt{x} an integer variable, is not decidable in general.
It attempts to define the dynamic memory recursively by a closed built-in rule set.
Berdine et al. \cite{berdine05-2} question whether \index{typing} typing is a self-esteemed verification (cf. obs.\ref{obs:NonReducibilityToTypeChecking}).
The answer is that verification is not typing, as discussed earlier.
The paper implies non-decidability of unbound pointer use leads to fuzzy trigger events in \index{GC} GC, even for very trivially looking expressions as offsets.
Furthermore, it leads to an ambiguous rule set selection for verification based on greedy \index{heuristics} heuristics.

Bornat \cite{bornat00} suggests a similar model to \index{SL} SL, which he calls "\textit{remote separation}", and transforms data objects into an array.
Thus, any object becomes a list.
Naming convention, the object identifiers, and field functions slightly differ.
Nevertheless, all object fields are interchangeable between both models.
For a general uncurbed definition of heaps in this model, Bornat postulates that first-order predicates are required.

Hurlin's \cite{hurlin09} main contribution to SL are newly introduced \index{pattern} patterns for multi-threading.
He suggests unblocking functions that improve performance for the sake of a partial blockage of complex cells.

Not all heaps insist on a full specification, which might be shortened by using the anonymous operator "\_".

Parkinson \cite{parkinson05}, \cite{parkinson06}, \cite{parkinson05-2} presents an object-oriented extension of SL \cite{reynolds02} (before Hurlin), using \index{Java} Java as incoming language.
Modularity and inheritance are modelled using the \index{inversion of control} "\textit{inversion of control} and 
\index{abstract predicate family} "\textit{abstract predicate}" (AP) family.
Bornat's \cite{bornat00} object accessors can easily be implemented in terms of preexisting rules from an existing Hoare calculus supporting arrays -- there are no needs to break (accidentally) rules in this case.
For instance, the \index{frame} \index{rule!frame} frame rules on objects remain unchanged since the actual difference may be handled in additional rules critical for objects only.
He mentions, the dependency between predicates stratifies predicate calls.
His paper implies the built-in predicates he suggests cover most of the stack and heap, but they cannot be used to define user-defined predicates.
Assertion predicates, which heavily differ by syntax and semantics from the input language, are not restricted in types.
However, the use of symbols has a series of restrictions due to the non-symbolic use of "\textit{ordinary}" variables decorated with additional constraints to mimic symbolic variables used in class logics.
Factually, predicate symbols are used as \index{variable!local} locals in imperative PLs, and they cannot appear arbitrarily --- this, however, is a hard limitation.
As future work, he pleas to research in more detail: method calls from inherited classes,  \index{static field} static fields, \index{introspection} CI of objects, \index{inner class} inner classes and quantification of those in predicates.

The theorem prover \index{Smallfoot} "\textit{Smallfoot}" \cite{berdine05-2} is the first to implement \index{verification assistant} SL by reference.
It currently works experimental and has a minimal set of \index{predicate!built-in} built-in predicates for heap definitions.
Apart from that, no further predicates are allowed, including user-defined.
The current implementation permanently crashes, hangs, and tons of related errors are present.
Arbitrary uses of some tactics eventually lead to non-termination or other system errors.
After all, it is still a great tool to get familiar with SL, and some examples make its help to understand the theory behind much better.
The verifier system \index{SpaceInvader} "\textit{SpaceInvader}" \cite{calcagno09} is Smallfoot's successor and includes simple \index{abduction} abductive reasoning elements.
\index{jStar} "\textit{jStar}" \cite{distefano08}, \cite{parkinson06} can be considered as an object-oriented extension of Smallfoot.
jStar supports class invariants and some heavily restricted APs.
It turns all program statements into \index{JIMPLE} JIMPLE (IR), which is close to \index{IR} IR from \textit{GCC} (GCC's original IR was first and called \index{GIMPLE} GIMPLE \cite{merrill03}) and is used internally by a corresponding Java-object to perform the verification, which is all Java-implemented.
For industrial use, this makes perfect sense.
\index{Verifast} Verifast \cite{jacobs11} incorporates an input \index{C} C-dialect.
It allows user-defined definitions with nearly the same restrictions as SpaceInvader in \index{abstract predicate} APs.
In case of a proof refutation, the user must manually modify proof commands and provide hints to proceed with the proof.
Hurlin's prover \cite{hurlin09} is experimental also and very closely related to \cite{distefano08}.

\index{Cyclone} Cyclone \cite{grossman02} is a \index{dynamic memory} heap verifier based on \index{RC} RC (see sec.\ref{sect:StackAlignment}).
\index{SATlrE} SATlrE \cite{pavlu10} is a verifier based on \index{shape analysis} Shape-Analysis \cite{sagiv02} and is implemented in Prolog, which is very unusual compared to all others.
"\textit{SATlrE}" differs from the other tools introduced by shapes together with their dependent shapes are not calculated again from scratch, instead only altering edges are.
\index{Y-not} "\textit{Y-not}" \cite{nanevski08-2} is an \index{SMT-solver} SMT-solver implemented fully in \index{OCaml} OCaml.

Finally, the approaches (1) to (4) and program utilities shall be honourably mentioned.
(1) \index{KeY} \index{VDM++} KeY/VDM++ \cite{barnett04}, \cite{weissenbacher01} allows an industrial use of object-oriented specifications and an integration with \index{UML} \textit{UML}, which initially does not support proofs of \index{dynamic memory} dynamic memory.
(2) debugging and profiling tools, most prominently \index{Valgrind} \textit{Valgrind} \cite{valgrind} or \index{ElectricFence} \textit{ElectricFence} \cite{sun06}, \cite{electricfence} \cite{kirsch03}.
(3) integrated into programming packages and libraries, such as within compiler projects, such as \index{LLVM} \textit{LLVM}, \index{SAFECode} \textit{SAFECode} \cite{safecode14}.
(4) transforming programs and assertions into \index{semi-structured data} semi-structured data \cite{badros00} with optional post-mortem transformations \cite{johann03}, \cite{dodds08} (see sec.\ref{sect:HeapModels}).
Sec.\ref{sect:LogicalReasoningAutomation} contains a detailed review of abstract interpretation.
\textit{Valgrind} loads an executable program containing debugging symbols with calling arguments into memory and replaces all dynamic memory access with special instructions.
Thus, each invocation records each invalid access to memory as well as memory leaks.

\textit{GCC} \cite{gcc15} and \textit{LLVM} \cite{llvm15} are compiler frameworks.
Both contain modules on static analysis, such as \cite{lattner03}, \cite{safecode14}.
Khedker's \cite{khedker09} proposition is a static framework based on transitive closures upon graphs to analyse data dependencies.
The CompCert-project \cite{blazy09}, \cite{leroy12} proposes a purely academic framework for soundness analyses based on transformations for the PowerPC architecture only using the general-purpose verifier platform Coq.
Other general-purpose verifiers that are relatively robust include PVS, Proof General, Isabelle and others.
A concise comparison of verifiers, including those mentioned, may be found in \cite{emanuell08}.
Verifiers specialised in dynamic memory may be found in \cite{haberland16-4}.
Appel \cite{appel12} lists recent issues that are resolved successfully.
Articles from the \textit{iX}-journal \cite{kirsch95}, \cite{kirsch94} consider it most comfortable to use dynamic verifiers (all of which mentioned earlier for the problems described in sec.\ref{chapter:DynMemProblems}).
The static analyser \textit{ESC Java} (stands for "\textit{Extended Static Checking for Java}") \cite{flanagan02} introduces a modular specification fully in Java (see \cite{giorgetti10}, \cite{berg01}).
Unfortunately, pointers are absent, and the memory models assume that all data is in a heap.
ESC does not consider CI.
Java-ML \cite{badros00} proposes yet another Java-based specification.

"\textit{Jahob}" \cite{zee08} proposes a prototypical verification of linked lists for Java-programs based on functionals and is implemented in a LISP-dialect.
A heuristic is used, which imposes on rules a strengthened filter.
This project's objective is automation and runtime improvement due to specification abstraction using higher-order functions (cf. with Paulson's approach \cite{paulson93}).

Verifiers that fulfil industrial use include, e.g. KeY \cite{mueller02}.
The popular but mainly academic-only verifier Baby Modula 3 \cite{abadi93}, \cite{cardelli97} may be applied to some more complex examples but suffers from the scalability restrictions described in sec.\ref{sect:TheoryOfObjects}.


\section{Issues related to Dynamic Memory}
\label{chapter:DynMemProblems}


This section formulates typical problems with dynamic memory verification by carefully analysing the essentials from the previous section.
Afterwards, a short motivation is given for each problem on why research on that particular problem is actually from a theoretical and practical perspective.
Under \index{dynamic memory} dynamic memory (see fig.\ref{fig:ProcessSectionLoader}), that part of \index{RAM} RAM is meant the OS loads for a \index{process}process \cite{levine99}, \cite{love10}.

A \index{heap} heap is an \index{unorganised memory} unorganised memory portion (see fig.\ref{fig:ExampleHeap}) compared to \index{organised memory} organised memory, e.g. \index{stack} stack.
Organisation implies automatic allocation and deallocation of \index{variable!local} local variables w.r.t. \index{visibility scope} scope visibility (see fig.\ref{fig:ExampleStack}), with an agreed implicit linkage between elements.
In the case of a stack, the \textit{ABI} ABI forces an implicit and strict ordering of element accommodation into the local variables' current stack windows.
For example, declared locals are visible on a \index{conditional branch} conditional branch and maybe overlayed by other local variables with the same name.

While program execution, local variables are not visible outside the block in which they are defined.
The same can be noted with procedures and subprograms.
\index{stack!window} Visible locals, as well as \index{call by value} call-by-value parameters, are \index{stack!pushing} pushed onto the stack when entering a procedure.
All local variables from the current window are \index{stack!popping} popped when leaving a procedure.
Additional deallocations may be required invoked by destructors due to objects' life-cycle \cite{gcc15}.
In fig.\ref{fig:ExampleStack}, a stack window, basically a memory interval, is illustrated with some local $o$, an integer "$1$" in this example.
There is also an \index{array} array (local, not determined) of pointers to "$o$" and the upper element.
All elements are compactly pushed to the stack, so there are no gaps between stack elements.

In contrast to this, structs do not have to be compact, except the keyword "\texttt{packed}" is used.
However, in practice, often, this is not the case.
Hence, \index{performance} optimisation for runtime is the best default behaviour than optimisation for \index{code size} size.

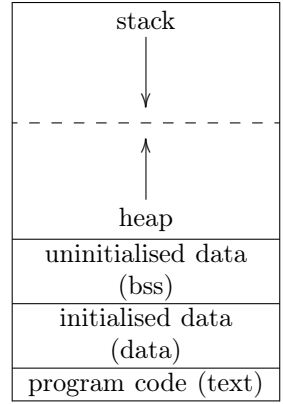
\begin{figure}[h]
\begin{center}
\begin{tabular}{|c|}
 \hline
 \parbox{10cm}{\xymatrix{\textnormal{stack} \ar[d]\\ \txt{}}}\\
 \hdashline
 \parbox{10cm}{\xymatrix{\txt{} \\\textnormal{heap} \ar[u]}}\\
 \hline
 uninitialised data\\
 (\texttt{bss})\\
 \hline
 initialised data\\
 (\texttt{data})\\
 \hline
 program code
 (\texttt{text})\\
 \hline
\end{tabular}
\end{center}
 \caption{A typical memory layout for a Linux process}
 \label{fig:ProcessSectionLoader}
\end{figure}

In contrast to this, \index{heap!cell} \textit{heap cells} of a process may be spread all over the heap memory segment.
The visibility of heap cells does not depend on syntax blocks.
It depends only on statements with a special meaning for allocation and \index{memory deallocation} deallocation.
If after a program statement or at the end of a block statement, a heap is not freed, then by default, the entire heap is freed with the most global \index{destructor} destructor available to the process before passing control back to the OS.
In C++, \texttt{.dtor} does this task when all (global) objects' destructors are called in declaration order.
The OS defines concrete stack and heap layouts.
Both are in \index{virtual memory} virtual memory space.
At this stage, we are not further interested in allocation/deallocation techniques.
Neither \index{stack} stack nor \index{heap} heap does have an apriori fixed boundary on process initialisation.
The boundary is depicted by a dashed line in fig.\ref{fig:ProcessSectionLoader} and is not fixed.
If the stack size may increase vastly, then the available heap size may be balanced.
\index{.bss} BSS (block start symbol) denotes a memory segment, which contains all \index{variable!global} global variables without an initial value assigned (occasionally, other non-local variables may be included depending on the compiler implementation \cite{isocpp14}).
All other \index{variable!local} non-locals and \index{variable!dynamic} non-dynamic variables are assigned to the "\texttt{.data}" segment.

Program code is loaded from the "\texttt{.text}" segment and has a starting point called the starting process memory address.
Absolute addresses substitute all relative addresses representing program variables after process loading.
In practice, the program code must have different stages of program execution, e.g. destructors.

Any cell from any segment from fig.\ref{fig:ProcessSectionLoader} may be \index{addressation} addressed linearly.
It implies that access to memory content may be obtained in consecutive increasing order.
Any memory cell has a successor at address $+\underline{1}$, where "$\underline{1}$" is a symbolic value here denoting one type's actual width to be iterated over once.
However, according to fig.\ref{fig:ProcessSectionLoader}, there is an imaginary floating boundary, which in practice is never reached, which is good for a process' stability.
For example, if some integer is of type \texttt{uint16_t}, then "$\underline{1}$" may become "$2$".

\begin{figure}[h]
  \begin{tabular}{c}
 {
  \begin{minipage}{3.5cm}
\xymatrix @W=3pc @H=1pc @R=0pc @*[F-] {%
1 \save+<-4pc,1pc>*\hbox{\it o}
\ar[]
\restore \\
{\bullet}
\save*{}
\ar`r[dd]+/r4pc/`[dd][dd]
\restore \\
{\bullet}
\save*{}
\ar`r[d]+/r3pc/`[d]+/d2pc/
`[uu]+/l3pc/`[uu][uu]
\restore \\
456 }
    \caption{Example for a stack}
    \label{fig:ExampleStack}
  \end{minipage}}
  \end{tabular}
  \begin{tabular}{c}
  {
  \begin{minipage}{3.6cm}
  \begin{tabular}{c}
\xymatrix{
 o \ar[r] & *++[Fo]\txt{B} \ar[dd] \ar[dr] & &\\
 *++[Fo]\txt{A} \ar[rr] \ar[ur] & & *++[Fo]\txt{D}\\
 o2 \ar[r] & *++[Fo]\txt{C} \ar[ur] &
}
  \end{tabular}
   \caption{Example for a heap}
   \label{fig:ExampleHeap}
  \end{minipage}}
  \end{tabular}
\end{figure}

Let us pay attention, that \index{exclusive or} "\textit{exclusive-or}" (XOR $\oplus$) allows us to express two \index{pointer} pointers in a doubly-linked list just by one \index{jump field} "\textit{jump field}".
The content of the field stores $a \ xor \ b$, where $a$ denotes the start address of the field, and "$b$" denotes the next field, s.t. $a \oplus (a \oplus b) \equiv b$ holds, as well as $(a \oplus (a \oplus b)) \oplus (a \oplus b) \equiv b \oplus (a \oplus b) \equiv a$ (see \cite{parlante01}, \cite{sinha04}, \cite{haberland16-6}).
Thus, one pointer may be saved, although GC requires a different algorithm.
Assume a doubly-linked list is enqueued, then reachability of all list addresses needs to be calculated first.

The memory models of \index{C} C(++) \cite{gcc15}, \cite{llvm15} and \index{Java} Java \cite{sun06} differ significantly.
There is even a different memory organisation and translation between C and C++, although the latter's syntax may be considered an approximate superset.
So, for instance, PLs compatible with \index{ISO-C++} ISO C(++) do not consider a  \index{GC} GC.
In Java, GC often leads to overhead.
Fortunately, garbage collectors operate in parallel \index{Java} Java threads in a \index{virtual machine} VM environment.
The Java VM, as a compromise solution, integrates \index{generational GC} Generational GC, which, for a practical need, is an optimal compromise solution.
In Java, all variables and data are stored in a \index{heap} heap compared to C or C++ (but not C\#).
C-dialects generally allow \index{weakly typed} weak typing \cite{isocpp14}.
Incompatible types upon varying bit interpretations may be transformed without a cast in \index{C} C, e.g. a \index{byte} byte into a real by a \index{\texttt{union}} \texttt{union}.

\subsection{Motivation}

The technical report \cite{miller90} contains a research study over several decades of traced back and committed errors in numerous open-source and commercial software development projects.
The characteristics found in \cite{miller90} hardly change from year to year.
This reluctance can be confirmed by several newer research studies, e.g. \cite{miller95}, \cite{hind01}, \cite{abiteboul05}.
The results in \cite{miller90} imply commercial \index{Unix} Unix distributives contain over 23\% of incorrect code, wherein open-source platforms, the rate is significantly lower for \index{GNU} GNU-Linux it is 7\%.
The authors note that the rates are stable for years.
Moreover, their conservative estimation is that commercial software has even more errors than open-source.
According to Miller, these are the most frequent errors committed during programming:

\begin{enumerate}
 \item Errors related to pointers and fields.
In commercial applications, the authors emphasise bugs are either not located at all or are located too late, e.g., porting to a different \index{platform} platform.
The former happens during software development and at early testing phases due to a lack of \index{test coverage} test coverage.
The authors are very concerned about the errors are not obvious to detect and delay later projects, including portation projects due to unforeseen issues that need to be fixed first and affect several products at once.
Here invalid memory access is the most often problem -- an indicator of vague and contradictory design decisions, which precise specifications may catch.
 \index{variable!uninitialised} Uninitialised data cells or \index{memory leak} memory leak also occur and lead sooner or later to \index{invalid access} unreachable and lost data, or even program crashes.
 \item \index{invalid access} Either invalid memory access to \index{array} arrays beyond valid boundaries or modified data structures even if a \index{pointer} pointer remains unmodified.
 \item Insufficient checks when working with files.
 Often reading is done not thoroughly enough.
 Implicit assumptions are made that do not match with invalid files.
\end{enumerate}

Dynamic memory verification is problematic because the \index{heap state} heap is described in one way, but the program is described differently and still corresponds \cite{jones75} in an indirect manner.
If interpreting a heap as a graph (see next sections) and a program as an instruction sequence of that graph, then the description of just the same heap may be done in different ways.
While the heap is checked, both sides must correspond.
Hind \cite{hind01} asks the fair question considering the huge efforts already put into over several decades, why \index{alias} AA is not solved yet.
He calls for research on \index{soundness} soundness and \index{performance} performance (cf. fig.\ref{fig:QALadder}) since \index{alias} aliases not yet found may degrade compiled code.
If during compilation it was safe to assume some two local variables must alias, then \index{register allocation} register allocation \cite{muchnick07}, \cite{kennedy02}, \cite{hack11}, \cite{sethi75}, \cite{rideau10} could reuse one CPU register instead of spilling several registers, and it would consequently reduce the need of synchronising two or more registers (see \cite{ssabook15}, \cite{lattner03}, \cite{cytron91}, \cite{muchnick07}).
Even with a conservative estimate, it would mean a great runtime improvement and a significant code simplification.

Hind \cite{hind01} is certain about effective alias localisation requires a thorough analysis of the incoming program.
Landi \cite{landi91} proposes a grouping of \index{alias} AAs.
Analyses are \index{NP-hard problem} NP-hard problems.
He suggests the following groups: (1) aliases that \index{must-alias} "\textit{must-alias}" or \index{may-alias} "\textit{may-alias}" and (2) \index{alias} intra-procedural \index{alias analysis} AAs as well as inter-procedural AAs.
Intra-procedural AAs are effective \cite{hind99}, where interprocedural analyses are ineffective and still require improvement.
Hind \cite{hind01} proposes, in general, to use only very poor visibility scopes of variables when programming.
It can be observed that Hind's proposition is also beneficial from a performance viewpoint since long-living locals may, in general, degrade register allocation and scheduling effectiveness.
Hind also proposes to minimalise visibility towards dynamic allocation and \index{alias} aliases.

In his paper \cite{hu10}, Hu considers the technical correlation between writing cycles and reading from recent \index{flash memory} flash memories.
Hu's approach is of interest here because of the permanent storage characteristics in large, giving theoretical upper bounds in the small.
The OS allocates heap randomly when needed, and several patterns should be considered when designing and reasoning about heap-managed memory regions.
Reading is approximately ten times faster than writing on average, and reading and writing handy memory blocks is dramatically faster than addressing individual blocks due to cache coherency and other architectural facilities.
Hu's main contribution in dynamic memory is that \index{GC} GC is not needed for almost all considered applications.
Significantly, turning GC off extends the lifespan of flash devices remarkably.
Hu equitably questions whether GC nowadays is still an issue for non-critical applications.
Except for being used as \index{permanent memory} permanent memory, flash memory can also be used as RAM disks \index{virtual memory} in \index{embedded device} embedded devices.
It may crucially influence the runtime speedup of programs.

In order to improve the unsoundness of results in several verifiers, Bessey \cite{bessey10} suggests: (1) getting rid of programs with \index{introspection} CI, (2) always stay with the principle: "\textit{simple errors shall be found simply}", (3) accept finally that numerous iterations do not necessarily improve code quality, after all, this is because verification is a qualitative technique and not quantitative.
As already mentioned, point (1) is not essential, after all.
It only changes the time when a particular program code is defined.
On one side, CI increases flexibility, but on the other side, the semantic analysis may be cribbled, for instance, type checking or other phases -- or those that need to be implemented on a new as-you-execute base, so moving it towards a full-scaled interpreter with all its consequences.
Naturally, verification is theoretically bound towards static checks mainly due to the Halting problem.
Due to the restrictions mentioned (see \cite{reus06-2}, \cite{birkedal08}, \cite{schwinghammer09}, \cite{honda05}) hot-code updates (code that is updated during runtime) is not considered any further.
In general, when speaking about erroneous access to dynamic memory, these immediate consequences are implied:

\begin{itemize}
 \item access to \index{invalid access} inaccessible memory
 \item access to uninitialised memory
 \item lack of dynamic memory \index{memory leak} by the need
 \item decrease of runtime characteristics
 \item memory leak (see following).
\end{itemize}

Each of those is discussed next.
Errors may lead to very unpredictable situations.
In the worst case, a program may continue execution with corrupt data.
In the best case, the program may \index{termination} terminate immediately when the first corrupt data shows up with an appropriate hint.
Termination may indeed be considered the best of all errors since a defined behaviour is still better than any undefined.
Termination signals a fault, which complies with soundness and completeness.

Bessey \cite{bessey10} summarises his analyses on successful verifiers in general and explains why they are successful.
Most essentially, Bessey considers in favour of a verifier:

\begin{itemize}
 \item Theorem \index{verification} \index{theorem} proving is an exact science.
  Therefore, regardless of heuristics and tactics applied, the key to success lies in soundness and how quick a valid proof can be derived.
  Quantitative results shall always be considered very critical.
  A verifier that finds fewer known errors than another is undoubtedly weaker.
  Bessey remarks heuristics towards rule selection such as "\textit{take the longest rule first}" in practice regularly succeed, and there is reason to believe this is some meta-pattern.
Although not explained by Bessey, his experience might be explainable by this observation: Long rules are very concrete cases that are often only applied to prepare a shorter but more aggressive rule.
  It is recommended to choose representative examples as suggested by \cite{zakharov15}, such as \index{OS} OSs as input program verified in practice or open-source projects.
 \item Non-trivial proofs shall be standardised.
 If possible, sub-processes shall divide detailed verifications and join them by handing over IRs in a most normalised and most simplified form before continuing the primary verification process.
 Concrete rules must be separated from structural rules.
 The main motto shall be: "\textit{simple errors shall be detected simply}".
 If, however, the motto is not obeyed, then any verification may be considered useless.
 \item Exclude as early and as often as possible \index{language extension} extensions to input languages, which may be excluded w.l.o.g. of expressibility, e.g. the "\textit{hot update}" of code.
 Full specifications of program systems shall be forbidden.
Same counts for code fragments that are not subject to specification (see def.\ref{def:SpecificationLanguage}) at all or can only be specified by tremendous efforts, which would not be in balance with the expected specification size.
 Everything should be simple and plausible.
 Surprises in proofs are always nasty.
\end{itemize}

As mentioned earlier, \index{dynamic memory} heap programming may slow down \cite{larson77}.
There are cases when an algorithm implemented using the heap rather than stack may be faster \cite{appel87}, especially when \index{GC} GC is turned off.
However, this observation may not be lifted in general (see sec.\ref{chapter:intro}) since each algorithm would require a separate investigation.
Efficiency using heap also significantly depends on freed and active lists heavily managed by the \index{OS} OS.
A stack-based decrease in performance is often due to superfluous instructions \index{stack!push} pushed into the stack and \index{stack!pop} popped from the stack as an ABI-guaranteed convention.

This problem worsens with big objects that do not fit in ABI-reserved registers.
Since then, everything must go to the stack and requires additional copying routines that otherwise would not be necessary.
Every copying of data structures may be potentially superfluous, either it is a word or a complex class type, especially when talking about pointers.
If its value entirely passes an object(s), any duplicate is redundant and wasted.
Unfortunately, redundancy detection is intentionally kept too conservative in \index{GCC} GCC and \index{LLVM} LLVM \cite{llvm15} due to some scarce situations where it might be essential to avoid unsound removal.
It can be found that any algorithm implemented with heap may be rewritten solely by using stack, as suggested by Meyer (see \cite{meyer1-03}, \cite{meyer2-03}).
Addresses of arbitrary objects may freely be moved from heap (to stack) by assigning them to a fixed stack window under the condition that every stack element remains uniquely addressable.
This conversion requires a different convention on consecutive type ordering.
The maximum threshold for pushing objects to the stack must be obeyed --- the threshold increases when the heap decreases (see fig.\ref{fig:ProcessSectionLoader}).
Array elements, e.g. uniform objects on the heap, may be pushed to the stack by copying elements consecutively.
It is essential to obey the stack type sequence.
Thus, the only unanswered questions relate to:
(1) Is it indispensable for the sake of an algorithm's readability?
(2) Is this variant always valid or not?
The first question is highly disputable, especially when comparing tree elegancy in heaps (see \cite{parlante01}).
The second question is not always decidable, mainly when limits are known during runtime.
Of course, a sizeable pre-allocated array may fit all needs.
However, a lavish memory allocation is not a reliable approach and not helpful in practice.
Everything that may be pre-calculated statically is only of use in terms of the stack.
Unfortunately, often there are situations when this is unacceptable, e.g. for technical reasons.
It must be stated that both stack and heap are finite.

Let us consider some concrete \index{dynamic memory} dynamic memory problems in a not yet defined \index{C} C-dialect.

\subsection{Issues related to correctness}

\textbf{Example no.1-- Memory leak.}\\
\index{memory leak}
A leak occurs whenever memory is allocated to dynamic memory but never freed.
Often this does not cause an (immediate) crash.
However, this is far from satisfactory.
A program may crash any time after start without further notice.
Often this happens at an unexpected moment.
Whenever the OS finds a shortage of available memory or the \index{OS} OS is not granted access to a memory region requested.
It may also happen when the OS runs out of free memory.
However, this may happen not too often nowadays since the memory available is much more than it used to be one or two decades ago.
In a typical scenario, some flaky program runs a couple of minutes or weeks without any incident whatsoever. 
Then suddenly, it may be caused by some external event registered by the OS, that flaky program crashes without any prior notice.
This kind of problem might be a nightmare to reproduce, and these problems happen in real life.

Often, debugging, if possible at all, cannot localise the origin of the error, but random code places every time executed.
So, the genuine problem may come from somewhere else actually -- what is observed on a crash is just the symptom, but not necessarily the cause for the symptom(s).
Occasionally to diagnose the genuine problem might be quite challenging sometimes.

\begin{figure}[h]
\begin{center}
\begin{minipage}{6cm}
\begin{verbatim}
 MyClass object1 = new MyClass();
 ...
 object1 = new MyClass();
 ...
 (end of program)
\end{verbatim}
\end{minipage}
\end{center}
 \caption{Example for classically instantiated objects}
 \label{ExampleObjectInstantiation1}
\end{figure}

For example, the program's problem from fig.\ref{ExampleObjectInstantiation1} is the content of $object1$ never \index{memory deallocation} utilises after the next assignment.
Assume according to ISO-C++ \cite{isocpp14}, the OS would start to process "\texttt{.dtor}", and all globals call their destructors where applicable, then by lousy coincidence, that phase may crash and could be extraordinarily difficult to debug.
This situation occurs because, on every run, the thresholds responsible for crash or non-crash could change dynamically.
So, the problem is about the analysis of "\textit{danger-zones}", which may leave occupied cells reserved.\\

\textbf{Example no.2 -- Invalid memory access.}\\
From all examples in this section, \index{invalid access} invalid memory access to dynamic memory occurs relatively often.
This problem occurs because some object is uninitialised or has a temporary value that semantically becomes invalid at a specific code position.
It is worth noting that objects are occasionally handled differently among different OS if they are in \index{.bss} "\texttt{.bss}".
For example, OS \index{Windows} \textit{Windows} often does not initialise locals.
That leads to cells having \index{uninitialised value} uninitialised values on runtime.
If this is not handled safely by the algorithm, the behaviour gets erratic.
It may also mean a \index{security breach} security breach to the process.
The reason is incorrectly assigned or uninitialised variables for a given algorithm (see fig.\ref{ExampleObjectInstantiation2}).

\begin{figure}[h]
\begin{center}
\begin{minipage}{6cm}
\begin{verbatim}
...
// object1.ref equals NULL 
// (no assignment, or left NULL initially)
...
value = (object1.ref).attribute1;
\end{verbatim}
\end{minipage}
\end{center}
 \caption{Example for an incorrect assignment of an object pointer}
 \label{ExampleObjectInstantiation2}
\end{figure}

Naturally, debugging should focus on all possible assignments after the variable declaration of \texttt{object1}.
\index{invalid access} Invalid memory access may include access to objects that may have been \index{memory deallocation} freed before access.
Invalid memory access may also refer to an incorrect data representation of address or type, e.g., a processor \index{(CPU) word} word regarding an incorrect \index{typing} type (cast) with \index{pointer} pointers.
E.g. typecasting (down- and upcasts) in terms of the C-language \index{type cast} between \texttt{(int*)} and \texttt{void*}, alternatively \texttt{(char*)} or \texttt{(int8_t*)} may go utterly wrong if specific platforms do not consider those are equivalent.
Moreover, depending on the platform
  $$\texttt{sizeof(struct(int a, char b))}, \ \texttt{sizeof(int)+sizeof(char)}$$
does not have to be identical as described.
Struct sizes may also vary depending on \index{alignment} bits to represent.

Assuming $a$ occupies 2 bytes, and $b$ occupies 1 byte (depending on a platform may differ again if platform-dependent types are being used), the content may become one of the variants in fig.\ref{ExamplesBitmask}.

\begin{figure}[h]
\begin{center}
\scalebox{0.6}{
\begin{tabular}{l}
\begin{tabular}{|l|l||l|l|}
\hline
$a_0 a_1 a_2 a_3 \ a_4 a_5 a_6 a_7$ & $a_8 a_9 a_{10} a_{11} \ a_{12} a_{13} a_{14} a_{15}$ & $0 \ 0 \ 0 \ 0 \ \ \ 0 \ 0 \ 0 \ 0$ & $b_0 b_1 b_2 b_3 \ b_4 b_5 b_6 b_7$\\
$a_0 a_1 a_2 a_3 \ a_4 a_5 a_6 a_7$ & $a_8 a_9 a_{10} a_{11} \ a_{12} a_{13} a_{14} a_{15}$ & $b_0 b_1 b_2 b_3 \ b_4 b_5 b_6 b_7$ & $0 \ 0 \ 0 \ 0 \ \ \ 0 \ 0 \ 0 \ 0$\\
\hline
\end{tabular}\\\\

\begin{tabular}{|l|l||l|l|}
\hline
$0 \ 0 \ 0 \ 0 \ \ \ 0 \ 0 \ 0 \ 0$ & $b_0 b_1 b_2 b_3 \ b_4 b_5 b_6 b_7$ & $a_0 a_1 a_2 a_3 \ a_4 a_5 a_6 a_7$ & $a_8 a_9 a_{10} a_{11} \ a_{12} a_{13} a_{14} a_{15}$\\
$b_0 b_1 b_2 b_3 \ b_4 b_5 b_6 b_7$ & $0 \ 0 \ 0 \ 0 \ \ \ 0 \ 0 \ 0 \ 0$ & $a_0 a_1 a_2 a_3 \ a_4 a_5 a_6 a_7$ & $a_8 a_9 a_{10} a_{11} \ a_{12} a_{13} a_{14} a_{15}$\\
\hline
\end{tabular}\\\\

\begin{tabular}{|l|l||l|l|}
\hline
$a_{15} a_{14} a_{13} a_{12} \ a_{11} a_{10} a_9 a_8$ & $a_7 a_6 a_5 a_4 \ a_3 a_2 a_1 a_0$ & $0 \ 0 \ 0 \ 0 \ \ \ 0 \ 0 \ 0 \ 0$ & $b_7 b_6 b_5 b_4 \ b_3 b_2 b_1 b_0$\\
$a_{15} a_{14} a_{13} a_{12} \ a_{11} a_{10} a_9 a_8$ & $a_7 a_6 a_5 a_4 \ a_3 a_2 a_1 a_0$ & $b_7 b_6 b_5 b_4 \ b_3 b_2 b_1 b_0$ & $0 \ 0 \ 0 \ 0 \ \ \ 0 \ 0 \ 0 \ 0$\\
\hline
\end{tabular}
\end{tabular}}
\end{center}
 \caption{Examples of bit/byte ordering}
 \label{ExamplesBitmask}
\end{figure}

That is why making arbitrary assumptions on a specific platform is wrong.
When \index{Intel} Intel is assumed, but \index{PowerPC} Pow\-er\-PC or \index{ARM} ARM is found with a 64 or 32bit processor word, assumptions will eventually fail.
If a field is not initialised for some reason, then a harmless looking program such as in fig.\ref{ExampleUninitialisedFails} may not \index{termination} terminate or terminate not correctly.

\begin{figure}[h]
\begin{center}
\begin{minipage}{6cm}
\begin{verbatim}
object1.next = object;
...
root=object1;
while (root.next!=NULL){
  printf("%d", object.data);
  root=root.next;
}
\end{verbatim}
\end{minipage}
\end{center}
 \caption{Example of an incorrectly linked pointer and non-terminating program}
 \label{ExampleUninitialisedFails}
\end{figure}

\textbf{Example no.3 -- Dangling pointers and aliases.}\\
\index{dangling pointer} A dangling pointer \cite{afek07} appears when initially several pointers (possibly \index{alias} aliases) share an object, and some pointer manipulations lead to one of the \index{heap!connected} pointers is mistakenly separated from the remaining.
Although intuitively clear, this may be \index{alias} challenging and imprecise in practice.
An analysis requires not only to check the procedure where a pointer is declared but also to check the whole \textit{fan-in} and \textit{fan-out}.
Containing objects of dangling pointers shall be disposed of since those are garbage by definition.

Pointers that become dangling may be turned into non-dangling during program execution and vice versa.
However, heap elements becoming garbage are lost forever and must not become addressable again.
Dangling pointers may be the result of the situations described.
Furthermore, a pointer may become dangling by different memory cell interpretation using \index{\texttt{union}} "\texttt{union}", as described earlier.\\

\textbf{Example no.4 -- Side-effects.}\\
Instead of copying unneeded data structures, often pointers may be more effective.
However, this approach has the risk unaffected data regions may (accidentally) be corrupted.
This problem is a generalisation of example 2 and 3: Pointers do not alter on runtime, but the content suddenly gets changed.
It can be generalised: modification suddenly alters other variables.
A modification of a heap shall not alter other heaps.

\subsection{Issues related to completeness}

\textbf{Example no.5 -- Problems with expressibility.}\\
Some issues with \index{expressibility} expressibility have already been presented in this section.
Essential expressibility issues are:
 (1) Can all valid heaps be \index{specification} specified under the condition that all valid heaps are syntactical correct for a given rule set?
 (2) Do invalid heaps always verify false?
 (3) Which predicates may be used?
 (4) Which constraints must predicates obey?
 (5) Which dependencies between heaps exist, and how adequate are formulae it?
 (6) Which restrictions are there w.r.t. symbols in heap definitions?
 (7) How to define single or multiple heaps best?
 (8) How to express dependency between \index{alias} aliases?
 (9) Are existing heap definitions ambiguous, and if so, how to exclude ambiguity effectively?
 (10) Which \index{abstraction} abstraction level is required for a precise and handy heap definition, particularly for verification?
 (11) How much further can a heap be abstracted, s.t. is well understood by specifier, verifier and developer?
 (12) Can heap formulae be transformed, s.t. there is no need to re-specify on local program code updates?
 (13) Can more familiar concepts from programming be \index{transducer} reused, s.t. specification and programming become more familiar to a programmer?

\textbf{Example no.6 -- Issues related to a complete representation.}\\
According to fig.\ref{fig:QALadder} in \cite{suzuki82}, Suzuki proposes \index{safe pointer operations} "\textit{safe}" pointer operations.
As described earlier, its use must be handled with care since the \index{pointer rotation} clockwise \textit{rotation} of one data structure may quickly lead to corruption or \index{memory deallocation} disposal of another data structure, for instance.
Apart from that, implicit assumptions are too often not immediately apparent.
Suzuki's \index{Suzuki's memory model} approach and similar approaches introduced almost always suffer from incomplete rule sets and incomplete heap definitions. 
Too often, the situation is this: Given a rule set of 25 rules.
Questions: (1) Are those 25 rules enough, or are any further needed?
Do even existing rules need to be shrunk?
(2) What is the overall policy with overlapping or even \index{duplicate} redundant rules?
As any heap state depends on a given \index{program statement} program statement, heap subsets need to be formulated and then compared with the expected.
(3) Does it make sense to limit expressibility, s.t. termination is decidable but with acceptable limitations?
(4) Is there a possibility to specify and verify only parts of a heap?
(5) May a simple(r) heap model be proposed, s.t. a proof using a simple data structure and incoming program shall be simple?
For instance, \index{list inversion} list inversion \index{linear list} should be simple (cf.\cite{reynolds09}).

Preferably, a verification should not insist on a complete specification since a full specification and verification is not economic in practice and is in harsh disbalance to the benefit ever gained on average.
It is often sufficient to specify only focus on procedures and class objects one may be interested in the practice.
That would mean leaving unrelated program fragments unspecified.
An engineering requirement is to add additional \index{predicate!auxilary} constraints at random code places \index{error localisation} wherever the engineer wants to perform a more detailed analysis (see the algo.\ref{algo:AlgorithmProblemReduction}).
A developer may also be interested in adding checks at random places in code in addition to pred- and postconditions.\\

\textbf{Example no.7 -- Issues related to the level of automation.}\\
According to the \index{quality ladder} quality ladder from fig.\ref{fig:QALadder}, these problems belong to the second category.
Main automation problems from earlier sections are related to defining dynamic memory rules and separating them from other non-related rules.
If rules related only to heap are put into a separate \index{formal theory} formal theory, then the whole verification may be better logically structured, and the overhead might be diminished and more reliable towards future modifications.
Such a separation attempt would also diminish the amount and complexity of rules.
There is a need in comparing specifications with heap states.
Unfortunately, currently, all existing approaches make comparison manually (cf. following sections).
It is hard to predict when a \index{theorem} theorem is used and when to be applied with the needed symbols.
The same problem exists while transforming one heap state into another.
A locally optimal proof may still lead to non-decision.

While transforming a heap \index{deduction} deductively, confluency shall be improved (cf. fig.\ref{fig:CRTonHoareTriples}) -- this can converge by abduction.
The necessary theoretic restriction might be expressibility.
Terms may be partially defined during static analysis, or they are non-decidable in general.
Offset limitations in arithmetic expressions leads to expressibility limitation, but also a higher automation level.
So, the question arises how generous are algebraic heaps then?
How proper is strict typing for an input PL?

\subsection{Issues related to optimisation}

\textbf{Example no.8 -- Issues related to runtime performance.}\\
Runtime performance problems directly compete with \index{soundness} soundness (see fig.\ref{fig:QALadder}).
Assertions about \index{must-alias} "\textit{pointers alias}" or \index{can-alias} "\textit{pointer do not alias}" are more important than \index{may-alias} "\textit{may alias}", but those are more difficult to find.
The second and third assertion has a more substantial effect on \index{code generation} generated code.
The faster the generated code is, the more effective it becomes because a lack of need to store \index{register} CPU registers to stack means improvement.
Unfortunately, \index{data dependency} data dependency analysis over pointers is much more complicated than with locals because \index{pointer content} a pointer's $p$ content may alter at potentially any place in code.
Therefore, \index{alias analysis} AA is hard.
Often it can be observed from \index{framework} frameworks with pointers that some function on pointers works only for a very \index{generalised assertion} restricted domain since pointer have a very high expressibility, and it is challenging to guarantee the most general assertion upon pointers.

Remarkably, if a data structure is used only once and there are no duplicates, far less effort is needed for immediate modifications than in cases with duplicates.
In implementations, two main approaches can be observed: either complex and straightforward variables are passed manually (though including auxiliary remarks, e.g. in C by using the keyword \index{\texttt{register}} \texttt{register} \cite{gcc15}), or in exceptional cases, \index{alias analysis} AA is done (mainly intra-procedural only \cite{llvm15}).
This case is not excluded from \index{C} C \cite{gcc15} and can be achieved using \index{const} \texttt{const}.
Let us consider a small motivating example: \index{list inversion} list reverses with alteration \cite{reynolds09}, \cite{parlante01}.
Here, the initial list vanishes.
Depending on the concrete context, this might be desired behaviour.
\index{Reynolds' algorithm} \index{linear list} Reynolds' algorithm traverses the list only once in the forward direction.
There is no need for next or previous pointers.

Thus, a list does not need to be copied.
All operations are performed on the given list, which heavily speeds up.
If only the data structure changed and the original is safely not needed afterwards, then the old list may be dropped, and the algorithm's overall behaviour would speed up.
GCC and LLVM \cite{gcc15}, \cite{llvm15} do not perform such a test yet.
May the \index{ABI} ABI be modified, s.t. non-referenced objects are popped from the stack by the \index{caller} caller \cite{isocpp14}, \cite{gcc15} unconditionally?
May existing objects in \index{dynamic memory} dynamic memory not just be used immediately, mostly when objects are known for sure they are not changed on calls?
It must be mentioned that the proposition made in \cite{sinha04} is non-standardised.

However, the average efforts needed on linear lists seem to be strongly reduced, and the GC changed drastically.

When the number of iterations \index{limit} is known apriori to entering a cycle after static analysis (see \cite{appel87}), then the code portion may be optimised responsible for memory allocation and alignment.
All problems related to optimal GC, beginning with \index{White-Schorr's algorithm} White-Schorr's algorithm \cite{schorr67} and continuing until today, are more than extensively researched \cite{jones11}.
Micro-operations related to dynamic memory are done by the OS, which traces memory resources, including stack and stack for each process.\\

\textbf{Example no.9 -- Issues related to program security and access protection.}\\
In analogy to \index{stack!overflow} \textit{stack overflow} \cite{Kim15}, when the stack is flooded with undesired dangerous data in order to compromise a program by moving the actual instruction pointer beyond the actual stack window on procedure call or return from the procedure --- there is also an attempt of pushing a return address to malicious code during GC the heap \cite{kaempf06}, \cite{afek07}.
Obviously, in contrast to stack, the heap does not contain program locations nor addresses of (modified) program code.
Consecutively, the attack based on malicious code injection may not work on-the-fly as with the stack.
Stack overflow may lead to security risks for an individual program and comprise the whole OS.

When at stake, the problem worsens even more when remote servers and services and specifications are interfaces when access via device driver is acquired (cf.\cite{corbet05}).
Since several instances may load drivers simultaneously, the risks of not being able to restore due to a heap issue are susceptible and delicate.
In the worst-case, non-reliability may crash the core OS as it is an existential thread to monolithically designed OSs \index{GNU Linux} \index{linux} GNU-Linux.

\section{Expressibility of Heap Formulae}
\label{chapter:expression}

This section researches heap models, namely heap graph and heap predicates, which will later be reasoned in logical programming.
First, the differences between heap and stack are elaborated, and crucial properties are analysed.
Then, occurring phenomena are observed based on the fundamental problems from the previous section.
Finally, based on a pre-existing model, an (extended) heap term is introduced, and properties are investigated.
The heap term definitions are implemented in the Prolog-dialect (see sec.\ref{sect:Implementation}).\\

Assume some \index{paradigm} \index{paradigm!imperative} imperative program is given, which has the \index{CFG} CFG \cite{khedker09} as shown in fig.\ref{fig:ControlFlowGraph}.
Let us explain another example from \cite{cytron91} the difference between \index{variable!automated} automatically allocated and \index{variable!dynamic} dynamically allocated variables.
Automated variables are automatically allocated and deallocated.
These are bound to a \index{stack!window} stack window.
The stack window contains all \index{variable!local} locals and parameters passed in before and out after a procedure call.
These stack-local variables shall not be mixed up with "\textit{fan-in}" or "\textit{fan-out}" parameters.

\begin{figure}[h]
\begin{center}
\begin{tabular}{c}
 \xymatrix@C=3em@R=1em{
  *+[F]\txt{Entry} \ar[rr]  &   & *+[F-:<8pt>]\txt{1} \ar[d]\\
                                & & *+[F-:<8pt>]\txt{2} \ar[dr] \ar[dd] \\
                                & &  & *+[F-:<8pt>]\txt{3} \ar[d] \ar@/^1pc/[dr] &\\
                                & & *+[F-:<8pt>]\txt{7} \ar[dd]  & *+[F-:<8pt>]\txt{5} \ar[d] & *+[F-:<8pt>]\txt{4} \ar@/^1pc/[dl]\\
                                & & & *+[F-:<8pt>]\txt{6} \ar@/^1pc/[dl] &\\
                                & & *+[F-:<8pt>]\txt{8} \ar[d] & \\
                                & & *+[F-:<8pt>]\txt{12} \ar[rr] \ar@/^5pc/[uuuuu] & & *+[F]\txt{Exit}
 }
\end{tabular}
\end{center}
 \caption{Example CFG.} \index{CFG}
 \label{fig:ControlFlowGraph}
\end{figure}
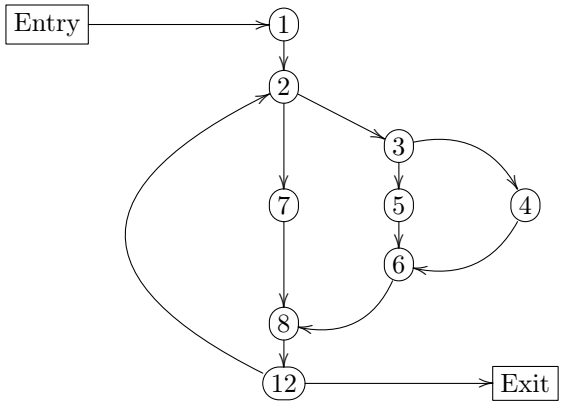

According to fig.\ref{fig:ControlFlowGraph}, \index{visibility scope} visibility scopes can be established.
A scope interval of the whole procedure [\textit{Entry,Exit}] denotes a \index{lattice} \textit{simple directed graph} in which passed variables are visible to all vertices between Entry and the ending vertex Exit.

A \index{block} "\textit{block}" is defined as a joining sequence of non-branching program statements.
Next, in order to avoid possible naming clashes with existing definitions, the \index{SSA-form} \textit{SSA-form} as defined by \cite{cytron91}, \cite{hack11}, \cite{ssabook15} shall be used.
A branching might be any \index{conditional branch} branching or unconditional jump.
However, immediate, unconditional jumps shall not be allowed for the sake of expressibility considerations in the following.
Assume some \index{block} blocks have \index{variable!local} locals represented in \index{SSA-form} SSA-form as in fig.\ref{ExampleSSABlocks}.

\begin{figure}[h]
\begin{center}
\begin{tabular}[t]{l}
block 2: \texttt{b$_0$=4; a$_0$=b$_0$+c; d$_0$=a$_0$-b$_0$;}\\
block 7: \texttt{b$_1$=a$_0$-c;}\\
block 3: \texttt{c$_0$=b$_0$+c;}\\
block 5: \texttt{c$_1$=a$_0$*b$_0$; f(a$_0$);}\\
block 4: \texttt{a$_1$=a$_0$+b$_0$;}
\end{tabular}
\end{center}
 \caption{Example of SSA-placements by basic blocks}
 \label{ExampleSSABlocks}
\end{figure}

Let us assume further, all other blocks are either artificial, empty or do not contain relevant just introduced variables.
Block 5: $f$ is a block that accepts one variable.
At first glance, there is nothing unusual.
However, \index{call by value} if inside $f$ access to a parameter's content is performed, which in C can be done using the \textit{de-reference operator} "\texttt{\&}", variables may be altered even beyond $f$'s boundaries.
Fortunately, this scenario can be caught with the earlier analysis of $f$'s incoming and outgoing variables.
Even if the scenario described may not frequently occur in practice, this scenario is the reason why a vast majority of code optimisation remains slow for the sake of soundness in the most general case.
If in block six $a$ was redefined, then $a_2= \phi(a_1,a_0)$.
Function $\phi$ is auxiliary.
It means a union of different definitions that can be made over a variable.
\index{$\phi$-function} $\phi$ is an implicit function (hence its origin: \textit{\underline{ph}oney}).
The whole concept of implicit functions seems intriguing on the first look, as we may see later when it comes to heap definitions.
What the phoney functions do is express \index{data dependency} data dependencies of variables compactly.
It may be applied to arbitrary automatic data in \index{paradigm} \index{paradigm!imperative} imperative programs.
The data dependency structure, in general, cannot be a tree.
\index{$\phi$-function} $\phi$-functions forbid this when there are dependencies, which refer to some previous block, which might be outside a loop.

Regardless, we are most interested in existing algorithms on scope visibility using $\phi$-functions \cite{ssabook15} of locals.
We rely on the remark that upper \index{limit} bounds exist regarding visibility scopes.
These bound may be defined recursively over the stack.
Bounds are calculated using the CFG and $\phi$-functions of each local (dominator algorithms may be found in \cite{ssabook15}, and the classic survey is \cite{cytron91}).
If variables are redefined in two different branches, then the content after a join may differ and its index increases.

When attempting to apply a given \index{SSA-form} SSA-form to dynamic variables, it must fail due to the following reasons: (1) there are program statements that \index{memory allocation} allocate and \index{memory deallocation} deallocate dynamic memory explicitly.
These statements must not necessarily be in the bounds of a procedure.
Heap cells may unconditionally exist after some procedure is called and even after the stacked pointer is freed.
So, the definition place in code (same goes for redefinitions and deallocation), and its place(s) of use differ from those for automatic variables.
So, the two places often do not match.
Pointer's content may alter even when \index{pointer} pointers are not referenced immediately.
(2) Size, frequency and the context of heap memory allocation are, in general, not always defined (see fig.\ref{fig:ProcessSectionLoader}).

In analogy to CFG, not only the content of a variable may be annotated in graph vertices \cite{floyd67}, but also \index{predicate!assertion} assertions about dynamic memory may be annotated.
An example may be \index{framework} Khedker's framework \cite{khedker09} which analyses for each pointer using a very conservative algorithm -- assuming a \textit{may-alias} for all not found pointers based on Floyd's transitive closure algorithm.
Thus, after every program statement, not only is the considered pointer re-calculated, but potentially everything related to it.
Khecker's algorithm is based on a long \index{bitfield} bitfield. 
Some optimisations may be done towards the bitfield.
However, it is not done because the estimated improvement would not necessarily be significant and, most importantly, beneficial and a base to many flow analysis techniques described.
Mandatory checks of all dependencies can hardly be simplified further due to transitive comparisons and connectivity checks.
The comparison of \index{alias} aliases chosen is based on Horwitz' \cite{horwitz89} and Muchnick's \cite{muchnick07} algorithms.

\begin{observation}[Organised Memory and Fresh Context]
\label{obs:OrganizedMemoryFreshContext}
The \index{stack} stack is organised memory.
The order of elements in it is well-defined.
Every time a procedure is called allocation and deallocation are performed (may also be the case for compound statements).
Every time in a new context, a new allocation means containing all elements from the surrounding procedure and is initially not linked.
\end{observation}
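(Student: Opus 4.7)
The plan is to establish each of the four informal claims of the observation in turn, appealing to the ABI conventions discussed earlier in this section (around fig.\ref{fig:ProcessSectionLoader} and fig.\ref{fig:ExampleStack}) and to the well-definedness of visibility scopes in the chosen imperative input PL from def.\ref{def:IncomingProgrammingLanguage}. First I would fix notation: a stack state is a finite sequence $\sigma = \langle w_0, w_1, \dots, w_k\rangle$ of \emph{stack windows}, each window $w_i$ a finite tuple of typed cells whose layout is fully determined by the ABI. Under this definition, ``organised memory'' is formalised as the existence of a total, computable address function $\mathrm{addr}: \mathbb{N} \times \mathbb{N} \rightarrow \mathrm{Addr}$ mapping the pair (window index, cell index) to a unique absolute address, in contrast to heap cells whose addresses are produced nondeterministically by the OS allocator (cf. fig.\ref{fig:ExampleHeap}).

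Second, I would derive the well-defined ordering from the LIFO discipline: since windows are only ever pushed onto or popped from the top, and cells inside a window are laid out by the ABI in declaration order (respecting alignment, as discussed around fig.\ref{ExamplesBitmask}), the order of elements is uniquely determined by the execution trace up to the current point. This is essentially a structural induction on the length of the execution trace, with the base case being the initial empty stack and the inductive step a case split on whether the current statement enters a procedure, leaves one, or is an ordinary straight-line statement.

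Third, for the allocation/deallocation claim, I would invoke the rules (SEQ), (LOOP), and the absence of \texttt{goto} in the chosen C-dialect to argue that procedure call and return are the only statements that change the number of stack windows; compound block statements with local declarations behave identically at the level of window management, as noted in the discussion preceding the observation. Finally, for the ``fresh context, initially not linked'' claim, the argument is that the freshly pushed window inherits no pointer-level connectivity to prior windows: every cell is either uninitialised (in platforms that leave \texttt{bss}-like regions untouched) or holds an ABI-specified initial value such as a return address, and in particular no cell of the new window dereferences into existing heap structure until an explicit assignment occurs. This should follow from the (ASN)-rule from fig.\ref{fig:ProofRulesEx1} together with the observation that no earlier Hoare rule introduces latent pointer aliases.

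The main obstacle will be the fourth clause: making precise what ``not linked'' means in the presence of \emph{call-by-reference} and of parameters that are themselves pointers. Since such parameters legitimately establish links from the new window back into pre-existing stack or heap structure, the clean statement I can actually prove is that \emph{no implicit link} is created — i.e., the only inter-window connectivity after a push is that induced by the actual-to-formal parameter binding and by the return-address slot. Accordingly, I would state and prove this sharpened form, and then remark that under restriction~2 of Cook (forbidding by-reference parameters, recalled earlier in sec.\ref{sect:HoareCalc}) the sharpened form collapses to the stronger informal claim as originally stated.
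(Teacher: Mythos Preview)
The paper does not supply a proof for this statement at all: it is an \emph{observation}, used in the same way as obs.~\ref{obs:UnorganizedMemoryUniqueContext}, obs.~\ref{observation:RemoteAlternation}, and obs.~\ref{obs:VariablesScope} immediately following it --- a descriptive remark setting up terminology (``organised'' vs.\ ``unorganised'' memory) rather than a claim to be derived. The surrounding text simply contrasts stack windows, which are pushed and popped by the ABI in a deterministic order, with the heap, whose cells are placed by the OS allocator; that contrast \emph{is} the content of the observation.

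Your proposal is therefore not wrong so much as disproportionate. You formalise a stack state as a sequence of windows, introduce an address function, and sketch an induction on execution traces; none of this is needed to support what the paper treats as a terminological baseline. Where your write-up does add something is in the fourth clause: you correctly notice that ``initially not linked'' is false as stated once call-by-reference or pointer-typed parameters are admitted, and you propose the honest repair (only ABI-mandated links --- actual/formal bindings and the return address --- are present after a push). That caveat is more precise than anything the paper says here, and it anticipates exactly the kind of restriction (Cook's restriction~2) the paper invokes elsewhere. If you want to keep any of this, keep that paragraph as a clarifying remark; the rest can be dropped, since the observation is not meant to carry a proof obligation.
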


Next, an analogy might be shown in which every state of \index{dynamic memory} dynamic memory may be recorded as a single state.
Any state transitions may exist, but there must always be an entry and an exit, even when the \index{heap} heap is empty.
On exit auxiliary \index{heap!state} states may be modelled, s.t. all finite states are merged with the final exit state.

\begin{observation}[Unorganised Memory and Unique Context]
\label{obs:UnorganizedMemoryUniqueContext}
The \index{heap} heap (as the unit of dynamic memory) is unorganised.
On one side, elements may be in arbitrary order and allocated anyhow.
On the other side, \index{heap graph} the heap graph (yet to be introduced in sec.\ref{sect:HeapGraph}) alters after each program statement, and a new context linked to the current state is absent.
There is only one unique context. 
\end{observation}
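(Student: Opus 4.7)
The plan is to argue the observation by direct contrast with Observation~\ref{obs:OrganizedMemoryFreshContext} (organised memory and fresh context), reusing the process-memory layout of Figure~\ref{fig:ProcessSectionLoader} and the allocation/deallocation semantics reviewed in Chapter~\ref{chapter:DynMemProblems}. First I would fix precise meanings for the two informal terms carried over from the previous observation: \emph{organised} shall mean that the memory segment admits a fixed relative addressing discipline imposed at compile time (here, the ABI-driven stack-window layout with its implicit ordering of locals and parameters), and \emph{context} shall mean a delimited, newly-allocated region whose lifetime is bracketed by a pair of matched entry/exit events and which is initially unlinked to previously existing regions.

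Second, for the \emph{arbitrary order} half of the claim, I would show that heap cells inherit none of these structural guarantees. Allocations via \texttt{new}, \texttt{malloc} and \texttt{calloc} are serviced by the OS allocator out of the free list, so two consecutive allocations need not produce adjacent or even topologically related addresses; conversely, intervening deallocations may fragment the segment arbitrarily. Since neither the compiler nor the ABI fixes any relative layout, the partial function from addresses to values induced by the current heap admits no compile-time-predictable ordering, which is precisely the negation of the property granted to the stack in Obs.~\ref{obs:OrganizedMemoryFreshContext}. This part of the argument is essentially an unfolding of the allocator specification together with Figure~\ref{fig:ProcessSectionLoader}.

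Third, for the \emph{unique context} half, I would model the heap as a single partial map $h : Addr \rightharpoonup Val$ that lives at process scope and is shared across all stack frames. Whereas a procedure call pushes a fresh, initially-unlinked stack window (hence the ``fresh context'' clause of the prior observation), no analogous \texttt{new}-event creates a fresh heap; every allocation and every mutation operates on the same global $h$, possibly extending or restricting its domain but never spawning a parallel copy. Composing with the step semantics of imperative statements then yields the second half: after every heap-affecting statement, $h$ is replaced in place, so the ``heap graph'' (to be defined in Sec.~\ref{sect:HeapGraph}) evolves as a sequence of mutations on one shared object rather than as a stack of nested contexts.

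The main obstacle will be purely definitional: because \emph{organised} and \emph{context} are not formally pinned down in the text preceding the observation, the claim is only as sharp as the definitions chosen for them, and those definitions must remain compatible with the yet-to-be-introduced heap graph of Sec.~\ref{sect:HeapGraph}. Once the definitions are fixed, no combinatorial work remains; the remaining risk is that corner cases such as \texttt{union}-based reinterpretation (cf.\ Example~2 of Sec.~\ref{chapter:DynMemProblems}) or XOR-linked fields might appear to reintroduce an implicit ordering, and I would handle these by observing that such structure is programmer-imposed rather than allocator- or ABI-imposed, and hence does not contradict unorganisedness at the level of the memory model itself.
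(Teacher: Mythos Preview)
Your proposal is sound, but you should know that the paper does not actually prove this statement: it is declared as an \emph{observation}, stated without argument, and the text immediately moves on to illustrative remarks about functional languages and continuations. The only justification the paper offers is implicit, namely the juxtaposition with the immediately preceding Observation~\ref{obs:OrganizedMemoryFreshContext} on the stack; the reader is expected to accept the heap claims as the evident negation of the stack properties already discussed.

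Your plan follows exactly that implicit line of reasoning but makes it explicit and considerably more careful: you pin down working definitions for \emph{organised} and \emph{context}, ground the arbitrary-order claim in allocator behaviour and the memory layout of Figure~\ref{fig:ProcessSectionLoader}, and formalise the unique-context claim via a single process-global partial map mutated in place. This is more than the paper provides, and your caveat about programmer-imposed structure (XOR links, \texttt{union}) not contradicting allocator-level unorganisedness is a nice touch the paper does not address. There is no gap in your argument; if anything, you are over-delivering for what the paper treats as a self-evident remark.
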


For example, \index{paradigm!functional} functional PLs assume \index{Principle Data Independence} the \textit{principle of independent data} as their central concept.
It guarantees a fresh context permanently.
It contains initial parameters without dependencies.
Afterwards, the result is assigned and returned to the caller.
\index{continuation} A continuation (see \cite{joel70}, \cite{wiki23-02-2010}, \cite{dargaye07}, \cite{thompson97}) is a calculation state which is passed to another instance by need.
The transiting state is not interrupted since the context does not change.
Equal states mean the whole state of calculation, which contains stack and heap (see fig.\ref{fig:ProcessSectionLoader}).
In the case of \index{predicate!higher-order} higher-order functions, a continuation can nicely be described by \index{denotational semantics} denotational semantics.
Denotational semantics, in general, describes a program as a function, so it is assumed the correspondence to any environment is superfluous or not existent.
Continuations do not change, but process stack windows do.
Jo\"{e}l \cite{joel70} considers \index{continuation} continuations in functional-logical contexts by the example of \index{LISP} common LISP.
Introductory remarks are: (1) continuations lead to copying, pushing and popping of memory regions, transition addresses and returns to the stack, (2) readability and algorithms' modelling may significantly be improved.
Remark (1) causes concerns about performance (cf.\cite{appel87}).
Remark (2) has priority and is in contrast to the first remark.
Therefore, a meaningful compromise must be found in practice.

\begin{observation}[Phenomenon of Remote Manipulation]
Due to the separability of memory cells and due to dynamic memory in general, the phenomenon of \index{remote manipulation} "\textit{remote manipulation}" may be encountered, as was described earlier.
The phenomenon is that the evaluation of some \index{expression} (sub-)expression in some \index{program statement} program statement may "\textit{magically}" alter some unrelated content being addressed by some \index{pointer} pointer which is not part of that expression and hence may come as a total surprise to the caller of that procedure.
This phenomenon is neither locally bound to a specific BB only nor a procedure but can modify variables even beyond a procedure's boundaries.
 \label{observation:RemoteAlternation}
\end{observation}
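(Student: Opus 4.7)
The plan is to establish this observation constructively: exhibit a minimal code fragment that forces the phenomenon, then argue by structural scaling that no locality assumption (neither BB nor procedure boundary) can contain it. Because the statement is qualitative rather than a quantitative bound, the burden is to produce a witness scenario that is irreducible, i.e.\ cannot be refactored away within the assumed input PL from def.\ref{def:IncomingProgrammingLanguage}, and then to argue that the witness admits arbitrary embedding into surrounding contexts.

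First, I would construct a witness by allocating a single heap cell $c$ and binding two pointer variables $p_1$ and $p_2$ to it (so that $p_1$ and $p_2$ are aliases in Muchnick's sense, cf.\ sec.\ref{sect:AliasAnalysis}). Then, within a program statement whose textual form mentions only $p_1$, e.g.\ an assignment $p_1\mbox{-}\!\!>\!f := e$, the content reachable through $p_2$ changes. Syntactically, $p_2$ does not occur in the statement, hence any reader restricting attention to the free variables of the statement has no way to anticipate the alteration — this matches the ``magic'' alteration described in the observation and ties back directly to example no.~4 on side-effects and to the dangling-pointer discussion in example no.~3.

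Next, I would lift the witness beyond local scopes. By passing $p_2$ through a call chain $f_1 \to f_2 \to \cdots \to f_k$ as a by-reference parameter (restriction~2 of Cook, sec.\ref{sect:HoareCalc}) or by promoting it to a global (restriction~1), the alias $p_2$ remains live while the mutating statement on $p_1$ is executed in an entirely different stack window. Since nothing in the ABI or in Hoare's (SEQ), (ASN), (FRAME) rules prevents this configuration — indeed (FRAME) only holds when $R$ is genuinely heap-disjoint — the alteration propagates across BB and procedure boundaries. This is precisely the failure of the SSA-based scope reasoning noted right before the observation: the $\phi$-function apparatus is sound for automatically allocated variables but is not sound for the content addressed through pointers.

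The main obstacle, I expect, is not producing the example but making precise what ``unrelated'' means, since without a formal separation notion (such as the $\star$ of SL introduced at the end of sec.\ref{chapter:intro}) the word ``unrelated'' is informal. I would sidestep this by taking ``unrelated to statement $C$'' to mean ``not syntactically reachable from the free program variables of $C$ via a single dereference,'' which is the weakest reading that still makes the observation non-trivial; under any stronger semantic notion of separation the witness above only becomes sharper. A secondary, minor difficulty is showing non-removability: one must argue that no sound static rewriting (type checking per obs.\ref{obs:NonReducibilityToTypeChecking}, nor AA per Landi \cite{landi91}) can eliminate the alias in general, which follows from the undecidability of inter-procedural may-alias as cited via Hind \cite{hind01}.
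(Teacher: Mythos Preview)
Your proposal is sound, but it over-engineers the target: in the paper this statement is an \emph{observation}, not a theorem, and it carries no proof at all. The paper simply states the phenomenon and moves immediately to the next observation on variable visibility scopes. The informal justification is scattered across the preceding discussion of SSA-form failing for dynamically allocated variables and the earlier examples no.~3 and no.~4 in sec.~\ref{chapter:DynMemProblems}, which is exactly the material you draw on.

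What you have written is therefore strictly more than the paper offers: you give a concrete alias witness, lift it through a call chain to break BB and procedure locality, and appeal to undecidability of inter-procedural may-alias to argue non-removability. That is a legitimate and correct argument, and your handling of the informal word ``unrelated'' by fixing a syntactic free-variable reading is a reasonable way to make the claim precise. The only caveat is that since the paper treats this as a phenomenological remark rather than a provable claim, there is no ``paper's proof'' to compare against; your contribution is effectively supplying a justification the paper leaves implicit.
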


\begin{observation}[Variable Visibility Scope]
\label{obs:VariablesScope}

A pointer's visibility scope, in contrast to \index{variable!static} static and \index{variable!logical} logical variables, may from the place of \index{memory allocation} allocation till \index{memory deallocation} deallocation be interrupted one or more times, even if the pointer does not change (cf. fig.\ref{fig:VisibilityScopes}, where the axis denotes the line number for some given program. The solid line indicates stacked variables may overlay visibility) because of progressing stack alteration over BBs.
It may happen accidentally, intentionally, or even temporary whilst the invocation of some program statements only.
\end{observation}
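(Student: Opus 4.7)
The plan is to treat this as a constructive observation rather than a formal theorem, so the task reduces to (i) pinning down what \emph{visibility scope} of a pointer means at a given program point, and (ii) exhibiting a canonical scenario in which that scope is demonstrably interrupted while the pointed-to heap cell persists unchanged, and finally (iii) arguing that the same phenomenon cannot arise for static or logical variables. I would lift the visibility notion from the CFG/SSA machinery already introduced around fig.\ref{fig:ControlFlowGraph} and fig.\ref{ExampleSSABlocks}: for each BB $B$ let $Vis(B)$ be the set of identifiers bound in the current stack window reachable at $B$ (i.e.\ computed by the usual dominator-based scoping walk over the CFG enriched with $\phi$-functions). A pointer $p$ is \emph{visible} at $B$ iff some name in $Vis(B)$ resolves to the heap cell allocated at the original \texttt{new}/\texttt{malloc} site of $p$.

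First I would build the canonical witness: a two-procedure program where $P$ executes $p:=\texttt{new}\;T$, then calls $Q(\,)$ that neither receives $p$ as a by-value/by-reference parameter nor shares a global alias with it, and finally after control returns to $P$ executes $\texttt{free}(p)$. By obs.\ref{obs:OrganizedMemoryFreshContext} the call to $Q$ opens a fresh stack window in which $p \notin Vis(B)$ for every BB $B$ interior to $Q$; by obs.\ref{obs:UnorganizedMemoryUniqueContext} the heap cell addressed by $p$ is nevertheless still allocated, unaltered, throughout the execution of $Q$. Thus there exists an interval of BBs strictly between the allocation and deallocation sites during which $p$ is invisible — establishing the claim that visibility is \emph{interrupted} rather than monotone. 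To cover the three modalities ``accidentally, intentionally, or temporary'', I would decorate the same witness: the intentional case is the direct call above; the temporary case inserts a compound-statement block (or a nested $Q'$) that shadows $p$ via a homonymous local and restores visibility on block exit; the accidental case rebinds $p$ through a rename or an aliasing \texttt{union} so that $Vis(B)$ no longer yields the original cell even though no explicit scope change was written.

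Next I would discharge the contrast clause. For static variables the appropriate $Vis$-function is not the stack-window one but a globally-defined $Vis_{stat}$ that, by definition in sec.\ref{sect:HoareCalc}, is invariant across stack windows in the lexical module; hence no interruption can occur between allocation (program load) and deallocation (process termination / \texttt{.dtor}). For logical/auxiliary variables the visibility is fixed by the specification language (def.\ref{def:SpecificationLanguage}) — they live in the proof context $\Gamma$ and are never popped by a stack-window transition, since stack transitions are operational, not derivational events. Combining these with the witness above yields the observation.

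The main obstacle I foresee is not constructive but definitional: the paper uses ``visibility scope'' in a slightly informal mix of lexical, semantic, and specification senses, so the proof's force depends on fixing \emph{one} $Vis$-operator and showing the interruption with respect to it. If instead one works with a permissive ``reachability-from-any-live-name'' notion, the witness must be strengthened so that \emph{every} alias of $p$ is simultaneously out of scope in $Q$ — which forces a by-reference-free, globals-free call and matches exactly restrictions 1 and 2 from Cook's list reproduced earlier. I would therefore state the observation relative to the strict $Vis$ induced by the current stack window and note the alias-closure variant as a corollary under Cook's restrictions; the remainder is just the diagram chase over the CFG.
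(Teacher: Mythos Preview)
Your construction is sound and the witness you build (allocate $p$ in $P$, call $Q$ without passing $p$, observe $p\notin Vis(B)$ inside $Q$ while the heap cell survives, then free on return) correctly exhibits the interruption phenomenon; your handling of the contrast with static and logical variables is also adequate. However, you have over-engineered the argument relative to the paper. The paper does \emph{not} give a formal proof of this observation at all: it is stated as an observation, immediately illustrated by fig.~\ref{fig:VisibilityScopes} (three number-line diagrams showing how a dynamically allocated variable's visibility interval in~c) may be broken into disjoint pieces $[i,j]$ and $[k,l]$, whereas the stacked locals in~a) and~b) have contiguous lifetimes), and then supported by the brief example that follows concerning \texttt{free(o.B)} and the consequences for pointer $o2$. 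There is no $Vis(B)$ operator, no CFG/dominator walk, no appeal to Cook's restrictions, and no explicit witness program --- the paper treats the point as self-evident once one compares organised (stack) versus unorganised (heap) memory from obs.~\ref{obs:OrganizedMemoryFreshContext} and obs.~\ref{obs:UnorganizedMemoryUniqueContext}.

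What your approach buys is an actual argument where the paper offers only an illustration: your formalisation via $Vis$ and the two-procedure witness makes the claim checkable, and your closing remark about the definitional ambiguity of ``visibility scope'' (lexical vs.\ reachability-from-any-live-name) identifies a genuine looseness in the paper's statement that the paper itself never addresses. If anything, your proposal would strengthen the exposition; but as a comparison, the paper's ``proof'' is the figure and nothing more.
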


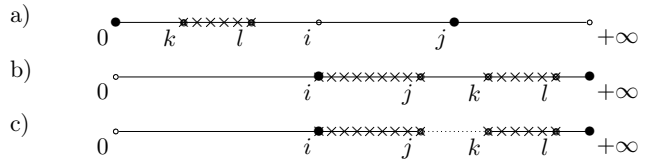
\begin{figure}[h]
\begin{center}
\scalebox{0.9}{
\begin{tabular}{lcr}
a) & & \begin{xy}
 (0,0) ="minZero" *\cir<1pt>{} *+!UR{0}*+{\bullet},
 (70,0) ="maxInf" *\cir<1pt>{} *+!UL{+\infty},
 (10,0) ="k" *\cir<1pt>{} *+!UR{k},
 (20,0) ="l" *\cir<1pt>{} *+!UR{l},
 (30,0) ="i" *\cir<1pt>{} *+!UR{i},
 (50,0) ="j" *\cir<1pt>{} *+!UR{j}*+{\bullet},
 "minZero";"maxInf"**{} -/1.5pt/;+/1.5pt/ **@{-},
 {\ar@{{o}{x}{o}} "k"*{};"l"*{}}
\end{xy}\\
b) && \begin{xy}
 (0,0) ="minZero" *\cir<1pt>{} *+!UR{0},
 (70,0) ="maxInf" *\cir<1pt>{} *+!UL{+\infty}*+{\bullet},
 (55,0) ="k" *\cir<1pt>{} *+!UR{k},
 (65,0) ="l" *\cir<1pt>{} *+!UR{l},
 (30,0) ="i" *\cir<1pt>{} *+!UR{i}*+{\bullet},
 (45,0) ="j" *\cir<1pt>{} *+!UR{j},
 "minZero";"maxInf"**{} -/1.5pt/;+/1.5pt/ **@{-},
 {\ar@{{o}{x}{o}} "k"*{};"l"*{}},
 {\ar@{{o}{x}{o}} "i"*{};"j"*{}}
\end{xy}\\
c) && \begin{xy}
 (0,0) ="minZero" *\cir<1pt>{} *+!UR{0},
 (70,0) ="maxInf" *\cir<1pt>{} *+!UL{+\infty}*+{\bullet},
 (55,0) ="k" *\cir<1pt>{} *+!UR{k},
 (65,0) ="l" *\cir<1pt>{} *+!UR{l},
 (30,0) ="i" *\cir<1pt>{} *+!UR{i}*+{\bullet},
 (45,0) ="j" *\cir<1pt>{} *+!UR{j},
 "minZero";"j"**{} -/1.5pt/;+/1.5pt/ **@{-},
 "k";"maxInf"**{} -/1.5pt/;+/1.5pt/ **@{-},
 "j";"k"**{} -/1.5pt/;+/1.5pt/ **@{.},
 {\ar@{{o}{x}{o}} "k"*{};"l"*{}},
 {\ar@{{o}{x}{o}} "i"*{};"j"*{}}
\end{xy}
\end{tabular}}
\end{center}
 \caption{Visibility of locals in a),b) and dynamically allocated variable in c)}
 \label{fig:VisibilityScopes}
\end{figure}

In the example in fig.\ref{fig:ExampleHeap}, if \texttt{free(o.B)} frees object $C$ from the \index{dynamic memory} heap, access to $C$ cannot be provided via $o2$.
If $o.B$ redefines a freshly allocated object, then $C$ is not notified about it, except if the new object is assigned the same address the previous old object used to be.
For the sake of formalism, it can be agreed upon that \index{dangling pointer} dangling pointers are assigned "\texttt{nil}" and a universal model, even if, in reality, the pointer contains the old address.
So, all other pointers need to be analysed too.

Other pointers may be \index{alias} aliases, which are vertices in a corresponding \index{heap graph} heap graph.
Those may be $o$ or $o.B$ (see issues from sec.\ref{chapter:DynMemProblems}).
Heap cell \index{heap!interpretation} interpretation depends on a pointer's type, which is checked accordingly during \index{semantic analysis} semantic analysis.
The type does not change anything except for subclassing (see sec.\ref{chapter:logical}).

\subsection{Heap Graph}
\label{sect:HeapGraph}

Now the fundamental aspects of dynamic memory are discussed in detail (including obs.\ref{obs:OrganizedMemoryFreshContext}, obs.\ref{obs:UnorganizedMemoryUniqueContext}), it is time to think about a heap graph representation.
Early but still essential to agree upon detailed \index{dynamic memory} heap models are introduced in sec.\ref{chapter:stricter} and sec.\ref{chapter:APs}.
The essential pair of heap allocating operations are "\texttt{malloc}" and "\texttt{free}".
Essentially, compactness and flexibility determine the success of a heap graph model.\\

\textbf{\underline{Heap graph as regular expression}.}
\index{heap graph} \index{regular expression} In the beginning, let us consider the simple graph $\cal A$$_1$, which obeys the constraints discussed earlier.
This graph could be described by a regular language and can be recognised by a simple finite automaton, as shown in fig.\ref{ExampleNFA1}, and was suggested by the author, e.g. by \cite{haberland08-2}, \cite{haberland07-2} in the context of semi-structured data as Prolog terms.
Regular expressions can also be justified because they describe a heap graph consisting of states and explicit state transitions of some calculation.

\begin{figure}[h]
\begin{center}
\begin{tabular}{c}
\xymatrix{
  \txt{s} \ar[r]       & *++[o][F-]\txt{$q_0$} \ar@(l,d)[]^b \ar[r]^a  & *++[o][F-]\txt{$q_1$} \ar@(l,d)[]^a \ar[r]^b  & *++[o][F-]\txt{$q_2$} \ar@/^1pc/[l]^a  \ar[r]^b  &*++[o][F=]{\{\forall q_F\}}\\
   *\txt{} & *\txt{} & *\txt{} & *\txt{} & *\txt{}\\
   *\txt{} & *\txt{} &  & *++[o][F-]\txt{D} \ar[uur] & *++[o][F-]\txt{E} \ar[uu]
}
\end{tabular}
\end{center}
 \caption{Example finite automaton $\cal A$$_1$}
 \label{ExampleNFA1}
\end{figure}
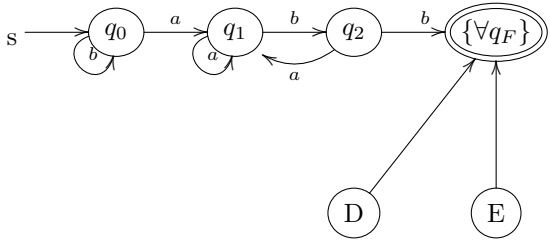

\index{Adren's lemma} Arden's lemma \cite{davis94} turns the determined \index{automaton!finite} finite automaton into this regular expression: $b^{*}(a^{+}b)^{+}b$.
So, what do this heap graph and the corresponding expression denote?
The graph's vertices represent a set of non-interleaving heap cells.
The graph's united final state is $\{q_F | \forall q_F \in F \subseteq Q \}$.
Naturally, all finite states may be merged into one finite state.
For example, in the graph, vertices $D$, $E \in Q$ denote some states that merge to $\{\forall q_F\}$.
The graph's \index{graph edge} edges are directed links encoded as addresses of a processor word's width in an edge's source.
So, the first question that arises: How are heap edge labellings represented?
These may be \index{pointer} pointers or \index{object field} object fields, so locations in both cases.

Both cases are not satisfactory because pointers always have to be dealt with individually from all other memory \index{memory cell} cells since those are in the \index{stack} stack.
Second, labelling all edges must be distinguishing.
Assume this is the case, then \index{regular expression} regular expressions cannot be described by the proposed compact representation (see following).
However, this means the initial very compact notation suffers dramatically.
Third, it is not yet clear what exactly are "\textit{initial}" and "\textit{final}" states?
At least, a pointer considered at one moment to be initial may be final at another moment or neither nor.
However, it could serve if certain strict constraints were invariant, e.g. by default, the last function parameter always denotes a final state.
An initial state may always be denoted as any stack variable transition - this is always valid to do.
Final states are more challenging to describe, e.g. whether labelling is the last one for some given calculation or not.
If $q_F$ were dropped, then the expression would be undefined.
A new state might be introduced which accepts those transitions that signal a final calculation of the data structure.

Now assume, all mentioned problems are resolved by now acceptably, and we continue with the question on \index{dynamic memory} heap alteration by using program statements after finding all (dis-)advantages of \index{regular expression} regular expressions.
If that notation were compact, then it would be required to check whether this is a stable solution regarding alteration.
If a tiny modification was made, then only one edge should change, and the overall regular expression should not change much -- if so, then this notation would be tractable indeed.
If this notation does not meet the requirements mentioned, then it shall be reasoned, and perhaps another model should be chosen.

For example, the last graph is added a new edge $b$.
The new graph becomes $\cal A$$_2$ and is illustrated in fig.\ref{ExampleNFA2}.
This graph is semantically equivalent to the regular expression $b^{*}a^{+}b((a^{+}b)^{*}+bba^{*}b(a^{+}b)^{*})^{*}b$.
Now we remove \index{graph edge} edge $a$ and obtain $\cal A$$_3$ (see fig.\ref{ExampleNFA3}).

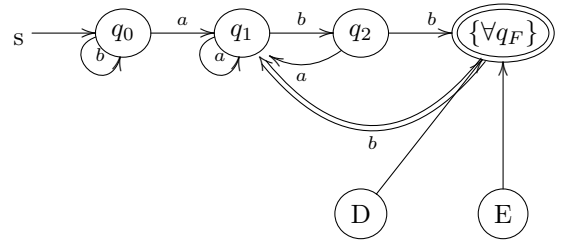
\begin{figure}[h] 
\begin{center}
\begin{tabular}{c}
\xymatrix{
  \txt{s} \ar[r]       & *++[o][F-]\txt{$q_0$} \ar@(l,d)[]^b \ar[r]^a  & *++[o][F-]\txt{$q_1$} \ar@(l,d)[]^a \ar[r]^b  & *++[o][F-]\txt{$q_2$} \ar@/^1pc/[l]^a  \ar[r]^b  & *++[o][F=]{\txt{$\{\forall q_F\}$}} \ar@2{->}@/^3pc/[ll]^b\\
   *\txt{} & *\txt{} & *\txt{} & *\txt{} & *\txt{}\\
   *\txt{} & *\txt{} &  & *++[o][F-]\txt{D} \ar[uur] & *++[o][F-]\txt{E} \ar[uu]
}
\end{tabular}
\end{center}
 \caption{Example finite automaton $\cal A$$_2$}
 \label{ExampleNFA2}
\end{figure}

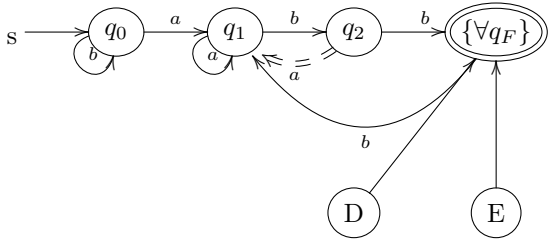
\begin{figure}[h]
\begin{center}
\begin{tabular}{c}
\xymatrix{
  \txt{s} \ar[r]       & *++[o][F-]\txt{$q_0$} \ar@(l,d)[]^b \ar[r]^a  & *++[o][F-]\txt{$q_1$} \ar@(l,d)[]^a \ar[r]^b  & *++[o][F-]\txt{$q_2$} \ar[r]^b \ar@2{-->}@/^1pc/[l]^a  &*++[o][F=]{\{\forall q_F\}} \ar@/^3pc/[ll]^b\\
   *\txt{} & *\txt{} & *\txt{} & *\txt{} & *\txt{}\\
   *\txt{} & *\txt{} & & *++[o][F-]\txt{D} \ar[uur] & *++[o][F-]\txt{E} \ar[uu]
}
\end{tabular}
\end{center}
 \caption{Example finite automaton $\cal A$$_3$}
 \label{ExampleNFA3}
\end{figure}

We get the regular expression $b^{*}a(a^{*}bb(\varepsilon + ba^{*}bb))^{+}$.
All three expressions may be rewritten and further simplified -- here, however, this is not the critical factor.
The real problem is if an edge is inserted to any place in the graph (and this could happen indeed in practice), then the \index{regular expression} regular expression given may heavily change -- in both cases, on average and on worst case quite severely.
The bigger the heap graph gets, the more complex the regular expression gets. It would be optimal and similar to the previous expression.
The linear equalities system based on \index{Arden's lemma} Arden's lemma does not change much on the alteration of a single edge in general.
For example, the first graph can be described as shown in fig.\ref{ExampleArdenLemma}.

\begin{figure}[h]
\begin{center}
\begin{tabular}{lcl}
\begin{tabular}{lcl}
 $Q_0$ & = & $aQ_q + bQ_0$\\
 $Q_1$ & = & $aQ_1 + bQ_2$\\
 $Q_2$ & = & $aQ_1 + bQ_F$\\
 $Q_F$ & = & $\varepsilon + \underline{bQ_1}$
\end{tabular}&
\parbox[b]{4cm}{$\underline{bQ_1}$ denotes insertion of an outgoing edge from $q_F$.}
\end{tabular}
\end{center}
 \caption{Equalities describing the automaton}
 \label{ExampleArdenLemma}
\end{figure}

It is not difficult to see, the system of linear equalities has a solution, but it heavily differs from the previous, although almost all equalities remain the same.
However, the corresponding regular grammar heavily changes.\\

\textbf{\underline{Heap graph upon alteration}.}
First, it is required to research the sequence of joint \index{program statement} program statements to discuss some "adequate" representation of the \index{dynamic memory} dynamic memory and the situation with pointer and transition descriptions. 
Let us consider \index{list inversion} \index{simply-linked list} list inversion as proposed in \cite{reynolds09}.
Both Reynolds \cite{reynolds09} and especially Parlante \cite{parlante01} bring numerous examples, and we decide on one.
Given the program from fig.\ref{ExampleListInversion} in a \index{C} C-dialect with a special syntax for \index{pointer} pointers.
In the program listing, $i$, $j$ and $k$ denote pointers.
$*(i+1)$ can grant access to the memory cell following $i$, which depends on $i$'s type.
Naturally, elements are linked to each other.
They are supposed to be a unit and monolithic and a contiguous \index{memory region} memory region in a \index{dynamic memory} heap, even if a similar \index{syntax} syntax reminds us about it.
The example may clarify the semantics from fig.\ref{ExampleListInversionSteps}, where a number denotes the iteration step until the loop is visited.

\begin{figure}[h]
\begin{center}
\begin{minipage}{10cm}
\begin{verbatim}
 j:=nil;
 while (i!=nil){
   k=*(i+1);       // access to the element following i
   *(i+1)=j;       // the pointer following i alters
   j=i;
   i=k;
 }
\end{verbatim}
\end{minipage}
\end{center}
 \caption{C code example for list inversion}
 \label{ExampleListInversion}
\end{figure}

\begin{figure}[h]
\begin{center}
\begin{tabular}{c}
\begin{tabular}{lr}
  1: \xymatrix{
    \txt{i} \ar[r] & *+=[o]+[F]{1} \ar[r] & *+=[o]+[F]{2} \ar[r] & *+=[o]+[F]{3} \ar[r] & nil\\
    \txt{j} \ar[r] & nil
  }\\\\
  2: \xymatrix{
    \txt{j} \ar[r] & *+=[o]+[F]{1} \ar[d] & *+=[o]+[F]{2} \ar[r] & *+=[o]+[F]{3} \ar[r] & nil\\
                   & nil                  & \txt{i=k} \ar[u]
  }\\\\
  3: \xymatrix{
                   & *+=[o]+[F]{1}        & *+=[o]+[F]{2} \ar@/^/[l]        & *+=[o]+[F]{3} \ar[r] & nil\\
                   & & \txt{j} \ar[u] & \txt{k=i} \ar[u]
  }\\\\
  4: \xymatrix{
     *+=[o]+[F]{1}  & *+=[o]+[F]{2} \ar[l] & *+=[o]+[F]{3} \ar@/^/[l]       & nil\\
                    & & \txt{j} \ar[u] & \txt{k=i} \ar[u]
  }\\\\
  \multicolumn{2}{c}{
  5: \xymatrix{
     \txt{j} \ar[r] & *+=[o]+[F]{3} \ar[r] & *+=[o]+[F]{2} \ar[r] & *+=[o]+[F]{1} & \txt{i=k} \ar[r] & nil
  }}
\end{tabular}
\end{tabular}
\end{center}
 \caption{Example heap trace for list inversion}
 \label{ExampleListInversionSteps}
\end{figure}
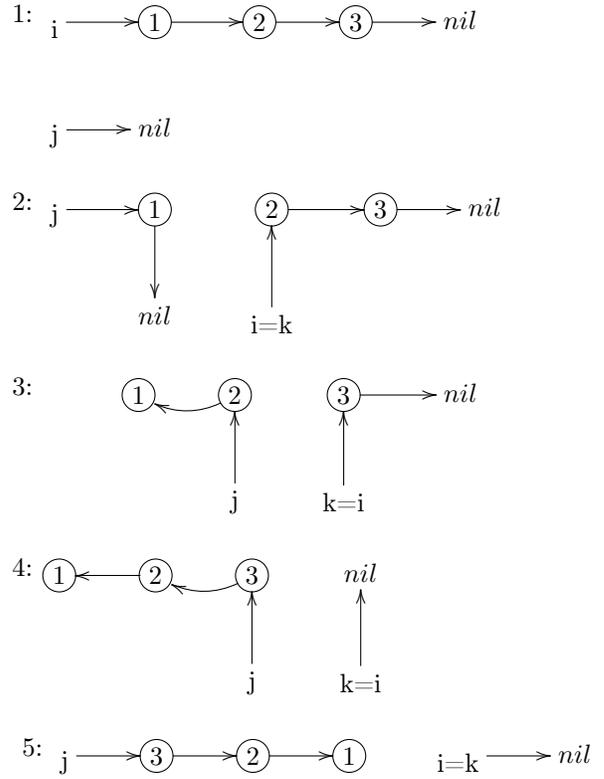

$i$ denotes a \index{linear list} list until entering the loop (cf. fig.\ref{RulesLoopReplacements}).
On leaving the loop, $i$ is empty, and $j$ contains the initial list of $i$ in reverse ordering.
No copies are created.
The input list is iterated precisely once.
We notice edges are not labelled in the example due to the selected implicit definition statement over pointers.
Of course, in general, \index{heap edge} heap edges may be labelled.
However, pointers exist, which are \index{variable!local} local variables.
Those are (not) assigned to heap vertices (when \index{pointer} a pointer is \index{uninitialised} uninitialised).
$k$ is an uninitialised pointer until entering the loop.
It is not hard to notice that compact notation is barely used, for example, from (1) to (2).
The main reasons are \index{expressibility} expressibility and adequate graph representation.
Although the presented regular notation is bare of any use, compact notation still played an essential role in formulating an automated approach (see sec.\ref{chapter:APs}).

It may be obvious why other global approaches, like \index{RC} RC, \index{SA} SA and others introduced in sec.\ref{chapter:intro}, failed.
The reason is the same as demonstrated by the last two examples.
Dodds' approach \cite{dodds08} is also not further considered for the same reason, even if here, instead of the \index{graph transformation} heap graph transition is described.
Still, his approach is not specific to pointers and does not consider imperative PLs (see sec.\ref{chapter:intro}).
Regarding graph transitions as specifications, there are further harsh reasons against (see next section).

\begin{observation}[Graph-based Representation of Dynamic Memory]
\label{obs:GraphIRHeap}

The graph in dynamic memory (heap) can be defined as triple $(V,E,L\times V\cup \{nil\})$, where $V$ are the heap graph vertices, $E$ is the edges set, $L$ is the set of pointer labellings.
A pointer points either to some $v\in V$ or to $nil$, e.g. when it is uninitialised.
\end{observation}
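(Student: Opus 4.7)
The plan is to verify that every reachable dynamic-memory state produced by the input PL of def.~\ref{def:IncomingProgrammingLanguage} can be encoded as, and reconstructed from, a triple of the stated form, and conversely that every such triple arises from some legitimate heap configuration. Since the statement is cast as an observation rather than a self-contained theorem, the obligation is really one of adequacy: the triple must capture exactly the information exploited in the examples of sec.~\ref{sect:HeapGraph}, no more and no less.

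First, I would fix an arbitrary execution trace and, at each program point, extract the live heap cells; these become $V$. For every stored pointer word inside a cell $v\in V$ I would add an edge $(v,v')\in E$ carrying a label $\ell\in L$ that records either the object-field name or the constant offset within $v$; if the stored value is not a valid heap address, the edge endpoint is set to $nil$. Pointer variables residing on the stack are accommodated by the third component $L\times V\cup\{nil\}$: each labelled pointer presently in scope is mapped to its heap target, or to $nil$ when uninitialised or dangling, matching the convention stated just above the claim.

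Second, I would check that the encoding is closed under the primitive heap operations discussed earlier. For \texttt{malloc} one adjoins a fresh vertex to $V$ with every outgoing slot set to $nil$; for \texttt{free} one removes a vertex and redirects all incident labels to $nil$, which is the formal counterpart of the dangling-pointer treatment from sec.~\ref{chapter:DynMemProblems}; for a field assignment $x.f:=y$ one updates the single edge labelled $f$ originating at the vertex denoted by $x$. Each operation preserves the triple structure, and the list-inversion trace in fig.~\ref{ExampleListInversionSteps} would serve as a worked instance to confirm that no information about the five intermediate shapes is lost. Adequacy with respect to obs.~\ref{obs:OrganizedMemoryFreshContext}--\ref{obs:VariablesScope} then follows: remote manipulation (obs.~\ref{observation:RemoteAlternation}) is captured because edges and labels are independent of the syntactic site of the statement performing the update, and visibility scopes are orthogonal, affecting only which $L$-labels are currently accessible rather than which vertices exist.

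The hard part will be justifying that a single label set $L$ is simultaneously rich enough to encode declared object-field selectors and the arithmetic offsets such as \texttt{*(i+1)} used in fig.~\ref{ExampleListInversion}, yet compact enough to avoid the instability under tiny edits that doomed the regular-expression representation in the preceding subsection. I plan to resolve this by letting $L$ be the disjoint union of declared field identifiers and of those constant offsets derivable from the static type environment, and by appealing to the Presburger-style decidability restriction invoked earlier in sec.~\ref{sect:LogicalReasoningAutomation} to keep the offset fragment tractable. The remaining minor obstacle is the treatment of \texttt{union}-style reinterpretation: here I would either forbid it at the language-restriction level in the spirit of restriction~9, or model it by allowing several labels in $L$ to share a target vertex, which the triple already permits.
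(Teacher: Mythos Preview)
The paper offers no proof for this statement: it is labelled an \emph{observation} and functions purely as a definition. Immediately after stating the triple $(V,E,L\times V\cup\{nil\})$, the text moves on to compare component-wise versus mixed representations (approaches (1), (2a), (2b)) without any adequacy argument, closure-under-operations check, or reconstruction claim. Your proposal is therefore not so much a different proof as a proof where the paper has none.

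Your adequacy programme---encoding live cells as $V$, checking closure under \texttt{malloc}/\texttt{free}/field assignment, reconciling field selectors with arithmetic offsets via a Presburger-bounded label set---is coherent and more rigorous than anything the paper attempts at this point. But it is also far beyond what the observation asserts or what the surrounding discussion requires. Several of the issues you worry about (the treatment of \texttt{union}, the tractability of offsets in $L$, stability under edits) are either deferred to later conventions (conv.~\ref{conv:RestrictedObjects}, conv.~\ref{conv:HeapAlignment}) or simply left open in the paper. If you want to match the paper, a one-line remark that this is a definitional choice motivated by the failure of the regular-expression encoding in the preceding paragraphs would suffice.
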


We observe the \index{heap graph} graph represented: (1) separately by each component, or (2) in a mixed form.
Obviously, (1) is not straightforward because it describes the vertices separately (in this case, independent from pointer labelling), and as such, the final specification might become too bloated and thus too hard to read.
A heap's specification must be easy to read, short and adequate.
Approach (1) requires denoting each vertex separately and insert all of them into the \index{specification} specification.
As mentioned in the introductory by numerous examples, the disadvantages outweigh all benefits; hence this approach is dismissed.
Approach (2) assumes naturally either (2a) \index{grap vertex} describe by \textbf{vertices} or (2b) \index{graph edge} by \textbf{edges}.
The problem with (2a) is that there are duplicates in \index{specification} specifications since edges fully identify vertices, and because of two vertices, there is always at most a single \index{graph edge} edge.
Each edge related to a given vertex must be annotated, identical to having each vertex a list of all neighbouring vertices.
Unfortunately, such an approach is very cumbersome.
For example, one vertex is connected to two or more vertices.
So, that all neighbouring vertices also need to store the source vertex into their \index{linear list} lists.
Moreover, if a heap is altered, then \index{specification} specification massively changes -- this is something that needs to be avoided at any cost.

In contrast, approach (2b) only describes \index{graph edge} edges, and \index{graph!vertex} vertices are inserted into these.
This approach contains fewer \index{duplicate} duplicates than with approach (2a). 
If only directed edges were allowed, then the total amount of checks would be halved.
Several outgoing edges per vertex are prohibited because one pointer may not point to more than one cell simultaneously.
Also, on a closer view, \index{programs statement} program statements more frequently alter the calculation state than vertices do.
Moving data, \index{memory deallocation} GC and \index{memory allocation} allocating heap memory are quite expensive, which may invoke system calls to the \index{OS} OS.
Conversely, pointer manipulations are cheap since no additional conditions need to be met, and no resource management is involved.
In the worst case, the approach mentioned \index{memory allocation} allocates and utilises garbage, but pointers hardly change.
Then it is essential to recall whether this approach can be corrected.

Fig.\ref{fig:GraphIsomorphisms} (a) illustrates a \index{regular graph} regular graph whose maximal vertex \index{graph!vertex degree} degree is three.
Here, we assume, each vertex represents an object with three fields that are \index{pointer} pointers.
For approach (2a), 11 vertices need to be specified, each of which has three edges, where the number of incoming and outgoing edges may differ, and this needs to be considered separately.
In total, there are 18 edges.
In approach (2b), only 18 \index{graph edge} edges need to be specified.
Here, vertices connected with more than one edge may be just symbols.
The more a given graph differs from the \index{graph!full} full graph, the fewer edges need to be specified.
If an edge alters, then a \index{specification} specification needs to change less, namely the altered edge only.
If \index{graph!vertex} a vertex changes, then all connected edges need to be checked.
A directed graph may be traversed in $\Theta(n)=n$ to scan both sides of an edge.
According to approach (2b), given some rule set using predicates, proof might be found faster and friendlier to handle.
\index{pointer} Pointers and \index{path accessor} access expressions to \index{object field} fields denote graph \index{graph!vertex} vertices (see sec.\ref{chapter:logical}).
Inaccessible fields during \index{specification} specification are beyond our interest because, by definition, those \index{memory cell} memory cells are \index{garbage} garbage (see sec.\ref{chapter:intro})
and lost forever.
Nevertheless, in \index{specification} specifications, our concern is on locating ambiguous places.\\

Jones \cite{jones11} defines a \index{heap} "\textit{heap}" as a contiguous leap of \index{RAM} RAM with size $2^k$, where $k\ge 0$.
The leap may be interpreted as any \index{data structure} data structure depending on the application.
Alternatively, it is a sequence of non-contiguous blocks of non-contiguous words.
For example, a tree may have fillings between its vertices in the memory layout, but individual vertices  may not be interpreted other than \index{(CPU) word} (CPU) words.
According to Jones, an \index{object instance} object is a set of memory cells that are not necessarily interconnected but whose \index{object field} fields are \index{addressation} addressable. 
Each allocated memory cell has a \index{pointer} pointer.
The question may arise whether an object is alive or not between \index{memory allocation} allocation and deallocation?
It is hard to disagree with Jones that fragmentation is a problem.
However, intra-object fragmentation is excluded.

Cormen claims, on page 151 in \cite{cormen09} as Burstall \cite{burstall72} does, any \index{data structure} data structure in \index{dynamic memory} dynamic memory is automatically a \index{binary tree} tree.
Here, a tree does not only implies some relation "$\le$" to its child vertices $V_j$, but it also obeys the \index{poset} ordering $f(V_{parent}) \le f(kid(V_{parent},j)), \forall j$.
Atallah \cite{atallah98} considers a \index{heap} heap as an array interpreted as a tree with wide memory usage but with high flexibility.
In Atallah's monography, several definitions of \index{heap} heap might be found.
First, a heap is defined as a \index{priority queue} priority queue (see page 79).
\index{Fibonacci heap} Fibonacci heaps (cf.\cite{fredman87}) are presented as specialised and effective queues for insertion and deletion.
Next, on page 105, a heap is defined as a \index{binary tree} binary tree preserving all queue elements with priority.
In contrast to \index{RAM} RAM, Atallah emphasises on page 111 the importance to reference on pointers only.
Thus, any random \index{addressation} access is excluded.
The responsibility of \index{heap} heap separation is assigned to connect heaps (in an implicit way) rarely and shall be supervised by a programmer and software designer.
Rarely connected heaps may effectively be separated and can effectively be processed by different methods.
It is hard to disagree with Cormen on random addressing because of the previously mentioned regulations regarding expressibility.
However, the programmer's responsibility in creating a data structure can hardly be disputed, even if occasionally data structures shall be trees and not graphs.
Sleator \cite{sleator86} suggests for the sake of minimalistic GC and faster heap access to balance trees and their sibling vertices, which leads to an overall fast search below $\Theta_{min}(n)=1$ and deletion below $\Theta(n)=log(n)$.
Even if his proposition is intriguing, still this approach is not further considered because it is currently not needed enough w.r.t. a small number of conjuncts and the possibility of linear search by locations.

Reynolds defines a heap set as a union of mappings from address space onto non-empty \index{memory cell} memory cells' values.
Following this definition, a heap is some address set that refers to some well-defined \index{data structure} data structure (without further notice).
Reynolds' definition is \index{structuralism} \textit{structuralistic} (the word origins to philosophical structuralism) because, strictly speaking, a single heap as an individual and an independent unit does not exist in Reynolds' terms (cf. sec.\ref{chapter:stricter}).
Pavlu \cite{pavlu10} defines a heap as an arbitrary \index{heap graph} graph.


\subsection{Predicates}
\label{sect:ExprPredicates}

Apart from the previous remarks, the most critical requirements towards predicates are:
(1) after each \index{program statement} statement touches the \index{dynamic memory} heap implies the corresponding graph alters too minimally
(2) heap specification also allows minimal alteration only.

When obeying these requirements, then the next question relates to adequacy representation.

In ancient Greece, \index{Plato} Plato's philosophical school taught the concept of epistemological definition and research.
The object is to describe some object of interest by following a well-known allegory in a strictly defined order of questions and answers in dialogue form between two sides: a human who has some object in front of himself but is not able to show it to others due to isolation, and another human in freedom.
As often is the case with Greek legends, the earlier human is massively hindered from his luck and captured in a cave.
This man is keen on the understanding that object but has minimal possibilities.
He communicates with his outside world only by his voice and some eternal torch.
The torch's light hits the object and casts only the shadows on the cave's wall visible to the outside world.
However, the shadows are not exact, but the captured man's voice helps understand that object.
This allegory is better known as the \index{allegory of the cave} "\textit{allegory of the cave}".
It proposes to describe an object as a two-sided dialogue whose primary goal is to find properties of that object which may not be evident in the context of minimal communication bandwidth.
There are several philosophical concepts very close to the allegory, so-called "\textit{Gedankenexperimente}" (from German "\textit{thought experiments}"), for instance, a dialogue that is speechless and written only and better known as Searle's "\textit{Chinese Room}".
For simplicity, we restrict ourselves to Plato's classic cave myth, which already contains all we need to understand the phenomenon of heap predicates here.
Intuition may narrate to us what a heap abstracted predicate may tell us, but comparing previous definitions of it will lecture us the opposite.
An abstract explanation of a heap in practice is a concrete placement of incoming terms and symbols, but it hardly tells us about incoming properties (so-called  \index{reification} reification --- a concept of abstraction for a concrete realisation).
These philosophical concepts will help us bring some light to heap representation and processing in this section and sec.\ref{chapter:stricter}, sec.\ref{chapter:APs}.
A modern philosophical discourse in idealism can be found in Hegel's classic (philosophical) method.
First, a thesis is established founded on numerous observations made.
Second, more sophisticated analyses are performed and checked against the established thesis -- here, additions may be added to the theory build-up.
However, counter-positions are formed, s.t. a counter-thesis is established.
Third, thesis and counter-thesis are examined thoroughly, and \index{thesis-antithesis-synthesis} synthesis is formulated -- an assertion based on previous theses.
One can notice all heap properties are thoroughly analysed at the current stage so that a first proposition thesis might be formulated here.
An in-depth dialectic analysis (after Hegel) is required to localise the potential imprecision -- all very close obviously to the latter methodology mentioned.

Regarding \index{reification} reification, \index{heap} heaps must have both \index{syntax} syntax and at least one \index{semantics} semantics -- both will be defined soon, generalising the heap term as a (heap) \index{predicate} predicate.
Therefore, a summary is needed to find the essence starting with \index{Aristotle} Aristotle's predicates called \index{syllogism} syllogisms.
A syllogism is a logical rule containing three assertions (also compare with Hoare triples): if $A$ and $B$, then $C$ follows.
This oversimplification is just fair for our objective.
One remark must be made.
Most deductive systems nowadays do not use the over 2000 years old syllogisms but on slightly modified variants (see later), as may be of help in sec.\ref{chapter:APs}.

As was seen, the previous attempt in using \index{regular expression} regular expressions failed because modifications affected a heap graph's transitions.
A minimal requirement refers to a rule set but not to a given assertion to be proven itself because the representation is not too stable for the given problem.
In general, reasoning about and transforming abstract notations (cf. sec.\ref{sect:LanguageCompatibility}) is widely discussed and under constant research.
For example, Conway's cell automata \cite{wolfram02} attempt to establish invariant expressions --- to mention one example.

In sec.\ref{chapter:APs}, the inverse method is observed: patterns are observed from which properties of rules are derived.
Predicates connect heap graph vertices regardless of whether in the context of a \index{language!formal} formal or \index{language!natural} natural language first of all.
According to \index{semiotics} semiotics, a predicate may have two meanings:
(i) an intuitive meaning, this relates to the question: \index{heap} "\textit{What exactly is a heap supposed to mean}"?
(ii) a \index{connotation} connotative meaning --- \textit{What is a heap associated with exactly}?
Next, the question related to heap representation arises: may/must a heap use \index{symbol} symbols and relations and what exactly are they denote? May perhaps a heap be defined \index{full definition} partially?

By assessing (i), it is found that a heap represents a set of \index{pointer} pointers, which are somehow connected and point to \index{dynamic memory} heap objects.
Since previously issues were found related to \index{ambiguity} ambiguities and massive potential restrictions, it is wise to keep definitions as atomic as possible and reduce everything to a minimum.
By assessing (ii), the input program determines the connection between all components, and the heap is not organised.
That implies \index{graph!vertex} graph vertices may occur in any order, at any place in specifications, and the \index{memory region} memory layout they fit in must not be contiguous. 
\index{heap} Heaps may be linked to others.

It is important to note that a \index{predicate} predicate must provide a possibility to express a connection between two heap graph vertices but must also express that a link is forbidden.
If there is no such possibility, then \index{heap!separation} separation automatically becomes an implicit result of an analysis of all heaps, which is unfortunate due to effectiveness concerns.
The definition of \index{heap!connected} connectivity has different levels of graininess: "\textit{connected}", "\textit{maybe connected}", and "\textit{not connected}", "\textit{maybe not connected}".
With time, a heap's \index{heap!modality} mode is not of immense significance: first, because "\textit{connected}" and "\textit{not connected}" may be checked in linear time.
Second, the search efforts are discrete, especially before and after \index{program statement} each program statement.
In heap predicates, due to \index{logical operator} logical operators, varieties must all be considered.
So, for instance, logical \index{disjunction} disjunction must be expressible regardless of the \index{non-repetitiveness} non-repetitiveness principle and \index{predicate!negation} predicate negation.
The solution to this question is solved in sec.\ref{chapter:logical}, where a proof over heaps will be based upon \index{language!logical} logical programming.
We shall not lose track that pointers may be arbitrary, including \index{variable!local} locals and \index{variable!dynamic} dynamic variables and \index{object field} \index{object instance} object fields.

In the very first work on \index{SL} SL, Reynolds \cite{reynolds02} introduces the $\star$-operator over \index{heap} heaps and introduces the following properties:

\begin{theorem}[Reynolds' Spatial Properties]
For propositions about heaps $q$, $p$, $p_1$, $p_2$, the set of \index{variable!free} free \index{symbol} symbols $FV(.)$ and a binary spatial disjunction $\star$ the following rules (1-6) hold:

\begin{tabular}[t]{ll}
 (1) & Non-repetitiveness:\\
     & $p \not \Rightarrow p \star p$, $p \star q \not \Rightarrow p$, if $\exists q,q \not \equiv \texttt{emp}$\\
 (2) & Commutatitivity:\\
     & $p_1 \star p_2 \Leftrightarrow p_2 \star p_1$\\
 (3) & Associativity:\\
     & $(p_1 \star p_2) \star p_3 \Leftrightarrow p_1 \star (p_2 \star p_3)$\\
 (4) & Neutral element:\\
     & $p \star \texttt{emp} \Leftrightarrow \texttt{emp}  \star p \Leftrightarrow p$\\
 (5) & Distributivity:\\
     & $(p_1 \vee p_2) \star q \Leftrightarrow (p_1 \star q) \vee (p_2 \star q)$\\
     & $(p_1 \wedge p_2) \star q \Leftrightarrow (p_1 \star q) \wedge (p_2 \star q)$\\
 (6) & Quantification:\\
     & $(\exists x.p) \star q \Leftrightarrow \exists x.(p \star q)$, if $x \not \in FV(q)$\\
     & $(\forall x.p) \star q \Leftrightarrow \forall x.(p \star q)$, if $x \not \in FV(q)$
\end{tabular}
\label{theo:ReynoldsHeapProperties}
\end{theorem}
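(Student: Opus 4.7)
The plan is to discharge all six properties directly from the semantic definition of $\star$, i.e.\ from the clause
$(s,h) \models p \star q$ iff there exist $h_1,h_2$ with $h = h_1 \uplus h_2$ (disjoint union of heap fragments), $(s,h_1) \models p$ and $(s,h_2) \models q$. Once that definition is fixed, each of (2)--(6) reduces to an elementary property of disjoint union of partial functions, and (1) reduces to producing a witness heap.

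First I would settle the underlying model: heaps as finite partial maps from addresses to values, $\uplus$ defined only on domain-disjoint heaps, $\texttt{emp}$ satisfied exactly by the empty heap, and a stack $s$ handling variable bindings. Then for (2) I would use that $\uplus$ is commutative on its domain of definition. For (3), associativity of $\star$ follows from associativity of $\uplus$ together with the observation that in both groupings the overall disjointness constraint $\mathrm{dom}(h_1),\mathrm{dom}(h_2),\mathrm{dom}(h_3)$ pairwise disjoint is the same set of conditions. For (4), unfold $p \star \texttt{emp}$: the only split is $h = h \uplus \emptyset$, so $(s,h)\models p\star\texttt{emp}$ iff $(s,h)\models p$. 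For (5), each direction unfolds the splits and the propositional connective independently: a split witnessing $p_1 \star q$ or $p_2 \star q$ also witnesses $(p_1\vee p_2)\star q$, and conversely, in a split witnessing the left-hand disjunction, either $(s,h_1)\models p_1$ or $(s,h_1)\models p_2$; the conjunctive case is handled by using the same split on both sides since the $h_2$-part satisfies $q$ in both.

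For (6), the side condition $x \notin FV(q)$ is used as follows: semantically $(s,h)\models q$ is independent of the value $s$ assigns to $x$, so for any variant $s[x := v]$ we still have $(s[x := v],h_2)\models q$. The equivalence $(\exists x.\,p) \star q \Leftrightarrow \exists x.\,(p\star q)$ then follows by pulling the existential witness $v$ out of the split, using the same $h_1,h_2$; the universal case is dual. Without the side condition these fail because witnesses would have to be chosen uniformly for both $p$ and $q$.

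For (1), the key point is that $\star$ demands domain-disjoint witnesses, so I would exhibit a canonical counterexample: take $p \equiv (x \mapsto 1)$ and any heap $h$ with $\mathrm{dom}(h) = \{s(x)\}$; then $(s,h)\models p$ but $p \star p$ is unsatisfiable because two disjoint heaps cannot both contain the cell at address $s(x)$. The weakening failure $p \star q \not\Rightarrow p$ for $q\not\equiv\texttt{emp}$ is shown by choosing $q \equiv (y \mapsto 1)$ with $s(y) \neq s(x)$: a heap with two allocated cells satisfies $p\star q$ but not $p$, since $p$ holds only on a singleton heap. The main obstacle I anticipate is purely bookkeeping: fixing a precise enough definition of $\models$ and the free-variable set $FV(.)$ so that substitution-style reasoning in (6) is rigorous, and stating (1) carefully enough that it becomes a genuine non-entailment rather than a vacuous one, which requires exhibiting a model of the assertion language rich enough to separate $p$ from $p \star p$.
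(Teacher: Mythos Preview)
Your semantic approach via the satisfaction relation $(s,h)\models p\star q$ and disjoint heap union is the standard route and is considerably more concrete than the paper's proof, which argues at the level of heap-graph intuition (rules (2)--(4) ``understood so that the heap graph definition may be well-founded''), sketches (5) by ``induction on the remaining rules'' with a clash/no-clash case split, and handles (6) by appeal to $\alpha$-conversion. So methodologically you are on firmer ground.

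There is, however, a genuine gap in your treatment of the conjunctive half of (5) and the universal half of (6). For the direction $(p_1\star q)\wedge(p_2\star q)\Rightarrow(p_1\wedge p_2)\star q$ you obtain \emph{two} splits of the same heap, $h=h_1\uplus h_2$ witnessing $p_1\star q$ and $h=h_1'\uplus h_2'$ witnessing $p_2\star q$. Your phrase ``using the same split on both sides since the $h_2$-part satisfies $q$ in both'' silently assumes $h_1=h_1'$ and $h_2=h_2'$, which nothing in the semantics guarantees. In Reynolds' own presentation this direction in fact fails; only the forward implication is listed. Exactly the same problem hits the $\forall$-case of (6): from $\forall x.(p\star q)$ you get, for each value $v$, a possibly \emph{different} split $h=h_1^v\uplus h_2^v$, and ``the universal case is dual'' glosses over the need for one uniform split independent of $v$. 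So either the biconditionals in (5) for $\wedge$ and in (6) for $\forall$ must be weakened to one-way implications, or you need an extra hypothesis such as precision of $q$ (uniqueness of the sub-heap satisfying $q$) to force the splits to coincide. The paper's proof does not confront this point either, but your argument as written does not go through for those two reverse directions.
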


\begin{proof}
\index{non-repetitiveness} Non-repetitiveness implies that $p$ does not occur twice -- this matches the non-repetitiveness \index{compactness} of heap \textit{descriptions} \cite{restall94}.
However, in contrast to \index{assertion} classic logical assertions, $p \star q \not \Rightarrow p$ does not imply strict $p$ as a separate assertion.
The reason is not $p$ is mandatorily connected with $q$.
As supposed, descriptions and intuition about the separating operator might be referred to as introduced on separability.
Though this a solely scenario only.
The problem is in describing the element search on the \index{dynamic memory} heap.
However, this does not imply that one heap may suddenly vanish from an assertion about two arbitrary heaps.
This difference must be considered.

Rules (2-4) are understood so that the \index{heap graph} heap graph definition may be well-founded.
The soundness of rule (5) may be shown by applying \index{induction} induction to the remaining rules.
Here two cases may be distinguished.
First, when $p_1$ and $p_2$ clash and second when there is no clash with $q$.

Rule (6) may seem obvious.
Hints may be possible naming clashes that may be avoided by \index{$\alpha$-conversion} variable renaming, which in terms of \index{$\lambda$-calculus} $\lambda$-calculus is an \index{$\alpha$-conversion} $\alpha$-conversion.

Neither rule (1), nor (6), nor any other rule do not exclude \index{variable!free} free variables as in verifiers like \index{Smallfoot} Smallfoot or jStar.
One variable may be used, e.g. as a \index{pointer} pointer in one \index{heap} heap and as a referenced pointer as subexpressions in another heap.
\end{proof}

Let us consider the heap from fig.\ref{ExampleHeap1}.

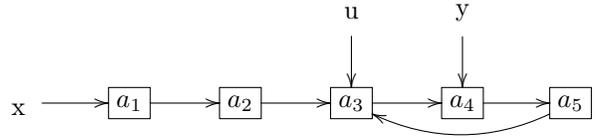
\begin{figure}[h]
\begin{center}
\begin{displaymath}
\xymatrix @W=1pc @H=1pc @R=0pc{
  &&& \txt{u} \ar[dd] & \txt{y} \ar[dd]\\\\
  \txt{x} \ar[r] &   *+[F]\txt{$a_1$} \ar[r] & *+[F]\txt{$a_2$} \ar[r] & *+[F]\txt{$a_3$} \ar[r] & *+[F]\txt{$a_4$} \ar[r] & *+[F]\txt{$a_5$} \ar@/^1pc/[ll]
}
\end{displaymath}
\end{center}
 \caption{Example heap with pointers \texttt{x,u,y}}
 \label{ExampleHeap1}
\end{figure}

The heap has references to the following pointers $x$, $u$, $y$.
The rectangles frame its content which is $a_1$, $a_3$ and $a_4$.
\index{memory model!Burstall} Burstall suggests not to reference the content directly but by an address.
He also suggests noting all intermediate pointers and labellings in some minimal model, s.t. a big heap may be equivalent to a single compact expression \xymatrix{ x \ar[r] ^{a_1,a_2,a_3} & y}.
This expression has overlayed meanings: pointers $x$ and $y$ exist, a path between these is guaranteed, and naturally, the whole path describes part of the overall \index{heap graph} heap graph.
In linear lists and trees pointers, "\texttt{next}", if any, are skipped by default.
One expression after Burstall denotes a whole \index{linear list} linear list.
The heap graph describes the union of such expressions.
Let us consider the following data structure, a "\textit{cactus}" (see fig.\ref{ExampleCactusExample}).

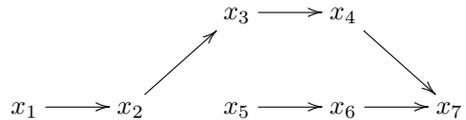
\begin{figure}[h]
\begin{center}
\begin{displaymath}
 \xymatrix{
              &             & x_3 \ar[r] & x_4 \ar[dr] \\
   x_1 \ar[r] & x_2 \ar[ur] & x_5 \ar[r] & x_6 \ar[r] & x_7
 }
\end{displaymath}
\end{center}
 \caption{Example \index{cactus} cactus in dynamic memory}
 \label{ExampleCactusExample}
\end{figure}

For its description, only two expressions are required.
$x_1$ is a \index{pointer} pointer, which is on the \index{stack} stack.
Assume, all other cells are not related to any pointer locations.
Consecutively, it is all in \index{dynamic memory} dynamic memory.
A handy, though compact, notation destroys the previously established minimality criteria.
Hence, Burstall's notation is no further consideration.
However, the \index{abstraction} abstraction of predicates is considered towards a more flexible description based on an \index{graph edge} edge-based graph.\\

For full pointer and \index{expressibility} expressibility analyses, \index{addressation} random access to dynamic memory is not required.
All accessory operations may be fully substituted, as mentioned earlier, by either read or write upon valid non-random locations only.
If, however, random heap access is needed for performance, most often access to consecutive memory cells, this would be a typical case of an algorithm using the \index{stack} stack.

Allocation of arbitrary memory sizes is not dedicated to heap only.
It may be implemented by both stack and heap.
Checks are going to be performed neither for dangling pointers nor for garbage nor invalid cells.
It is not hard to imagine \index{predicate!auxiliary} auxiliary functions over memory regions that may perform any computable transformation.
\index{.-operator} \index{path accessor} Pointers can only provide path accessors and be well-defined.
C(++) Objects are special \index{data structure} struct, where visibility is private by default rather than public \cite{isocpp14}.
Objects can either be fully on the stack or heap, where the pointer location itself is most likely in the \index{stack} stack, except when pointers of pointers are taken into consideration for a full heap accommodation.
On occasions, when stacked objects are tiny, they may even be put into \index{(CPU) word} CPU registers.
The latter variant fundamentally does not exclude the use of dynamic memory.
It must be stated, $e_1.f_1 \mapsto val_1$ may mean object $e_1$ may have two or more \index{object field} fields, and in generated code, $e_1.f_1$  may be different by use and memory layout then, for instance, $e_1.f_2$ due to \index{code generation} optimisation heuristics \cite{kennedy02}, \cite{muchnick07}, \cite{gcc15} regardless of previous assumptions.
However, this does not imply a required division on the \index{language!input} input PL or \index{language!specification} specification language, where any \index{object instance} object must be modelled as a contiguous \index{memory region} memory region whose \index{object field} fields may have references to \index{pointer} further objects.\\

As introduced to \index{SL} SL there were made numerous contributions and some propositions regarding improvement \cite{reynolds02}, \cite{reynolds09}, \cite{burstall72}, \cite{hurlin09}, \cite{parkinson05-2}, \cite{parkinson05}, \cite{parkinson06}, \cite{bornat00}, \cite{yang02}, \cite{berdine05-2}, \cite{berdine05}, \cite{ohearn04}, \cite{scholz99} (see sec.\ref{chapter:intro}).
Reynolds \cite{reynolds02} introduces \index{sequentialiser} \index{,-operator} the "\textit{sequentialiser}-operator" "\textbf{,}" for an implicit \index{linear list} linear list construction.
"\textit{Implicit}" means no assumption is made on how the concrete memory layout looks.
It only insists addressing has to be consecutively obeying the order of the "\textit{,}"-operator.
Thus, from the earlier example, a \index{cactus} cactus may be defined as $x_1 \mapsto x_2,x_3,x_4,x_7 \wedge x_5 \mapsto x_6,x_7$.
Alternatively to the implicit operator, some explicit pointers may be used, e.g. by pointer containing objects or Bozga's \cite{bozga08} length-delimited definition.

As discussed earlier, higher expressibility may be achieved by parameterisation.
Parameterisation may fundamentally make existing \index{predicate!parameterisation} heap predicates more flexible.
In order to achieve this, \index{variable!symbolic} symbolic variables must be introduced to \index{assertion} assertions.
An assertion may be valid or invalid for a given \index{heap} heap.
\index{symbol} Symbols are not typed apriori, as terms are in \index{type calculus!Church} Church's type calculus.
Initially, heap predicate-argument types seem closer to \index{type calculus!Curry} Curry's type calculus.
The introduction of (symbolic) variables, predicate definition, is closely related to parameterised terms in the $\lambda$-calculus.
So, a predicate is abstracted and must be applied to other \index{predicate} predicates.
In contrast to the functional paradigm, a predicate may return more than just true or false.
It may unify partially defined unground terms (see sec.\ref{chapter:logical}).

Let us consider a recursive example \index{binary tree} of a binary tree taken from \cite{reynolds02}:

$$tree(l)::=\texttt{nil} \ | \ \exists x.\exists y:\ l \mapsto x,y \ \star \ tree(x) \ \star \ tree(y)$$

According to Reynolds' \index{memory model!Reynolds} definition, the operator \index{$\star$} $\star$ defines a \index{heap} heap, consisting of two \index{disjunction} separated heaps.
It shall be mentioned, as stated earlier, that the initial definition of the $\star$-operator may still contain connecting elements.
Thus, starting with one tree $x$, there may be a path leading to some adjacent tree $y$ regardless of $tree(x) \star tree(y)$.
It is assumed the whole region pointed by $x$ indeed does not overlap with the analogous region pointed by $y$.
It is essential that without any change of the given predicate, this was made impossible.
However, during further parameterisation and predicate alteration (cf. predicate \texttt{tree} in sec.\ref{chapter:logical}), the problem does not change.
More illustratively, this is depicted in fig.\ref{ExampleSchemaHeapSeparation}.
Abstraction of predicates will be formalised and a new approach introduced in sec.\ref{chapter:APs}.

\begin{figure}[h]
\begin{center}
\begin{displaymath}
  \xymatrix{
    l \ar[rr] \ar@{.>}[drr] && *+[F] \txt{x} \ar[d]_{``,''} \ar[r] &  *+[F] \txt{...}  \\
    && *+[F] \txt{y} \ar[r] & *+[F] \txt{...} \ar@{|<<.>>|}[u]_{\txt{\Huge{\Leftscissors}}}
  }
\end{displaymath}
\end{center}
 \caption{Example of schematic separability of heap}
 \label{ExampleSchemaHeapSeparation}
\end{figure}
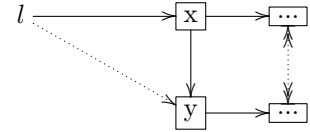

Based on Reynolds' definition, Berdine \cite{berdine05} introduces the heap satisfaction relation as shown in def.\ref{def:HeapSatisfactionRelation}.

\begin{definition}[Heap Satisfaction Relation]
\label{def:HeapSatisfactionRelation}

"$\models$" denotes the model relationship over heap formulae $s \in \Sigma$, where \index{stack} a pointer is defined on the stack and pointing to content in the heap is denoted by $h \in \Pi$.
$r(t_i)$ is the $t_i$-th component of structure $r$ (see fig.\ref{FormalDefinitionStackHeap}).
$E$, $F$ denote arbitrary assertions.
$\llbracket . \rrbracket$ is the functional denotation for some assertion mapping onto the boolean set.

\begin{figure}[h]
\begin{center}
\begin{tabular}[t]{lcl}
 $s\models E=F$ & if & $\llbracket E\rrbracket s = \llbracket F \rrbracket s$\\
 $s\models E\neq F$ & & $\llbracket E\rrbracket s \neq \llbracket F \rrbracket s$\\
 $s\models \Pi_0 \wedge \Pi_1$ & & $s\models \Pi_0$ and $s\models \Pi_1$\\
 $s,h\models E_0 \mapsto t_1: E_1,$  && $h=[\llbracket E_0 \rrbracket s\rightarrow r]$ \\
   $\cdots , t_k:E_k$ & &\\
 $s,h\models \texttt{emp}$ & & $h=\emptyset$\\
 $s,h\models \Sigma_0 \star \Sigma_1$ & & $\exists h_0,h_1.h=h_0\star h_1$,\\
    && $s,h_0 \models \Sigma_0$, $s,h_1 \models \Sigma_1$\\
 $s,h\models \Pi \wedge \Sigma$ & & $s\models \Pi$ and $s,h\models \Sigma.$
\end{tabular}
\end{center}
 \caption{Formal heap definition according to SL}
 \label{FormalDefinitionStackHeap}
\end{figure}
\end{definition}

A \index{denotational semantics} denotational semantics is a function $\llbracket . \rrbracket$ of \index{type} type $\Phi \times \Sigma \rightarrow Bool$, where $\Phi$ is a set of assertions on a heap, and $Bool$ \index{boolean denotation} is the boolean set.
For a given \index{linear list} linear list $s,h \models E_0 \mapsto t_1,\cdots , t_k$ with corresponding types $\forall i,j \in \mathbb{N}_0.E_j$, $r(t_i)=\llbracket E_i \rrbracket s$, $1 \le i \le k$.
Left of $\models$ is the calculation state, which has type $\Pi \times \Sigma$, right of it is some arbitrary \index{boolean denotation} boolean assertion.

In \cite{berdine05-2}, Berdine refers to problems, particularly the \index{frame} frame may remotely alter (see obs.\ref{observation:RemoteAlternation}).
However, deleting a memory cell's content referenced by some \index{variable!global} global pointer is a separate issue (see sec.\ref{chapter:intro}).

For the same reason, \index{subprocedure} subprocedures are neither considered further.
It must be noted that embedded procedures may only increase the complexity of a \index{specification} specification and verification w.r.t. \index{expressibility} expressibility.
They may, however, not increase computability.
Berdine fairly notices that the application of the frame-rule may lead to \index{non-determinism} non-determinism of \index{symbol} symbols to be substituted.
However, when heaps are under investigation, non-determinism may be excluded by renaming and further constraints (cf. sec.\ref{chapter:stricter}, sec.\ref{chapter:APs}).
Moreover, logical conjuncts fit into the \index{language!logical} logical PL (see sec.\ref{chapter:logical}).\\

Further, the \index{heap graph} heap graph is refined according to \cite{haberland16-2}, and afterwards, syntax and semantics are introduced, obeying the previous complex analysis.

\begin{definition}[Finite Heap Graph]
A finite \index{heap graph} heap graph is directed and connected.
The graph may contain cycles and must be simple.
It is transferable to dynamic memory by compilation, but also vice versa.
Each vertex contains content, has some \index{type} type and a \index{memory cell} memory address and occupies some contiguous \index{memory region} memory region.
In the case of object types, the target address (of any field) can be calculated from some offset base address, and the size an object occupies -- both are derived from the object's type.
Graph vertices do not overlap.
For the sake of simplicity, but w.l.o.g. each edge's destination refers to some absolute address in \index{dynamic memory} dynamic memory.
Locations may be assigned to vertices.
Vertices are finite.

In contrast to sec.\ref{sect:HeapGraph}, it is not intended to distinguish yet initial state and final states explicitly.
\label{def:finiteHeapGraphDefinition}
\end{definition}

Implicit definitions, for example, in \index{SSA-form} SSA-form, often succeed as they fully cover, though implicitly, \index{data dependency} data dependencies.
The initial definitions contain implicit definitions, which are investigated in this work regarding \index{heap} heaps.
Relations, spatial operators, and partially implicit definition refer to graph vertices, \index{typing} typing and \index{symbol} symbolic variables.
In sec.\ref{chapter:stricter}, \index{spatial operator} spatial heap operators \index{strengthening} are strengthened, \index{constant function} constant functions and other conventions are introduced for \index{object instance} objects.\\
From theo.\ref{theo:ReynoldsHeapProperties}, def.\ref{def:finiteHeapGraphDefinition}, and previous conventions introduced (see \cite{haberland16-2}), a "\textit{heap term}" may be defined.

\begin{definition}[Heap Term]
Heap term $T$ is inductively defined as follows and denotes \index{heap graph} a heap graph.

\begin{center}
\begin{tabular}{lll}
 $T::=$ & $loc \mapsto val$ \qquad \qquad & .. \textit{simplex (simple heap)} \\
        & $| \ T \star T$ \qquad \qquad & .. \textit{conjunction} \\
        & $| \ \underline{true} \ | \ \underline{false} \ | \ \underline{emp}$ & .. \textit{constant predicates}\\
        & $| \ ( \ T \ )$ \qquad \qquad &
\end{tabular}
\end{center}

where $loc$ denotes some location.
A complicated expression or some \index{symbol} symbol representing a heap may serve as a \index{location} location.
$val$ is a type, a compatible graph vertex with some value.
$\star$ denotes the spatial operator proposed by Reynolds (cf. theo.\ref{theo:ReynoldsHeapProperties}, but replaced later (see sec.\ref{chapter:stricter}, sec.\ref{chapter:APs}))
\label{def:HeapTermDefinition}
\end{definition}

$\underline{true}$ denotes \index{tautology} a tautology independent from some concrete heap.
The same holds for $\underline{false}$.
\index{predicate} Predicate $\underline{emp}$ is valid only when a given heap is empty.
Otherwise, it is false.
In sec.\ref{chapter:stricter}, \index{conjunction} conjunction is strengthened and resolves in two operations.
For logical assertions, logical conjunctions are introduced into the recursive definition of $T$.

\begin{definition}[Heap Term Extension]
The heap term extension \index{term} $ET$ extends $T$ from def.\ref{def:HeapTermDefinition} by \index{conjunction} logical conjunctions and is defined as follows.

\begin{center}
\begin{tabular}{lll}
 $ET::=$ & $T$ \qquad \qquad & .. \textit{heap term}\\
         & $| \ p(\alpha)$ \qquad \qquad & .. \textit{abstract predicate call}\\
         & $| \neg ET$ & .. logical negation\\
         & $| \ ET \wedge ET$ & .. logical conjunction\\
         & $| \ ET \vee ET$ & .. logical disjunction
\end{tabular}
\end{center}
\label{def:HeapTermExtendedDefinition}
\end{definition}

Logical conjunctions "$\wedge, \vee, \neg$" do not require further explanation.
A predicate call requires it has previously been defined in $\Gamma$ (considering def.\ref{def:PredicateRuleSetDefinition} and cor.\ref{corollary:PredicateEnv}).
Whilst a predicate call actual and expected \index{heap} heaps are compared.
Free \index{symbol} symbols from the predicate declaration need to be \index{term unification} unified.
Unified terms only contain bound variables.
Otherwise, the call is undetermined (cf. sec.\ref{chapter:logical}).\\

If a \index{language!logical} logical PL is used to reason about \index{symbol} symbolic variables, then many restrictions, such as one-sided assignments, impossibility to refer to symbols instead of values, and many others (cf.\cite{berdine05}, \cite{berdine05-2}, \cite{parkinson05}, \cite{parkinson05-2}) may be lifted finally.
When unifying terms, then comparisons become simplistic, and gaps get filled by meaningful content.
Otherwise, all subterms need to be compared manually and placed into the right places, but this is tractable, though it requires further differentiated conditions.
It may result in new mistakes and even further limitations, but in practice, limitations are often observed when for the sake of the non-logical language, whole comparisons are dropped, or symbolic use degrades, for instance, when call-by-value.
Usually, this can be observed mostly in \index{language!imperative} imperative PLs, but also some \index{language!functional} functional ones.

Let us consider the predicate call (more details in sec.\ref{chapter:logical})

\begin{center}
\begin{tabular}{c}
"\texttt{?-pred1(s(s(zero)),\_)}"
\end{tabular}
\end{center}

where for the sake of simplicity, the first \index{term!incoming} term is incoming, but the second \index{term!outgoing} is outgoing.
Conclusively the question arises: Does some anonymous variable "\texttt{\_}" exist, s.t. predicate \texttt{pred1} holds for incoming term \texttt{f(a)}?
If the answer is true, then the goal succeeds, and the result is dropped.
If not, then predicate \texttt{pred1} does not hold.
\texttt{s(s(zero))} represents integer "$2$" in \index{arithmetic!Church} \index{Church arithmetics} Church's arithmetic (cf. fig.\ref{code:NaturalNumbers}).
If, for instance, instead of \texttt{s(s(zero))}, say, \texttt{s(s(\_))} is chosen, then the result may be derived, for instance, to \\
\texttt{s(s(s(zero)))}, then the query without alteration \index{Prolog!query} (see def.\ref{def:QueryToProlog}) might be replaced by\\
"\texttt{?-pred1(s(s(\_)),} \texttt{s(s(s(zero))))}", assuming the predicate has double-sided definition.
In general, this does not only affect two but several directions.
Prolog has \index{evaluation ordering} a strict LR evaluation ordering.
So, symbols may replace concrete heaps.
If some goal requires a concrete heap and that heap is assigned in some later subgoal only, either the subgoal order may be changed, or the overall calculation does not terminate (see sec.\ref{chapter:logical}).\\

Now, it is required to investigate the properties of the mapping from a heap definition onto some heap graph and properties of separate pointers.

\textbf{Property 1 -- Soundness.}

Whenever from def.\ref{def:HeapTermDefinition} a strict difference is encountered between \index{heap!connected} connected and non-connected heap (see sec.\ref{chapter:stricter}), then syntactic description covers any heap graph.
If normalising according to theo.\ref{theo:ReynoldsHeapProperties}, including, for instance, \index{prenex-normal-form} prenex-normal-form, then obtained formulae commute.
\index{constant function} Constant predicates may be exclusions, and therefore may be single-sided enrichments: the set of satisfiable heaps \index{mapping} maps onto one set representative.
It is not hard to find out what a \index{inverse mapping} non-inverse mapping denotes in this context.
Furthermore, when excluding \index{anonymous function} anonymous symbols, then the syntactical description fully corresponds to a \index{heap graph} heap graph.
By excluding the only sources of \index{non-determinism} non-deterministic representations, it is easy to find the mapping is now \index{isomorphic mapping} isomorphic, and consecutively, two differing heap graphs may not be expressed by one description, and vice versa.

\textbf{Property 2 -- Completeness.}
\index{completeness} A heap graph can be fully described by \index{atomic assertion} simple heaps and enriched by pointers.
Knowledge representation of heap vertices may lead to \index{paradox} syntactic antinomies if addresses of objects to be referenced are not distinguished.
For example, if there co-exist $a \mapsto 3$ and  $b \mapsto 3$, then it does not mean that both memory cells containing "$3$" are identical in practice.
For modelling purposes, this means precisely it.
Namely, if a new object is not allocated and no pointer points to that cell, then the pointer becomes an \index{alias} alias.
Annotations may avoid this situation in the term-based object.

Pointers to pointers have an \index{integers} integer as address and therefore do not differ from other pointers.
The difference between some integer and a given address is only up to its interpretation, namely a variable's type.
On mappings from graph onto formula, the evaluation order and the predicate structure naturally may not be derived uniquely.
The mapping might be generated, but it may differ.
Assume a specific set of \index{abstract predicate} abstract predicate definitions is given, then the mapping described would, in general, not be decidable because of the \index{Halting-problem} Halting-problem.
The mapping from a heap graph to $T$ is fully defined, obeying mentioned remarks.
The inverse mapping is complete.
If in $loc\mapsto val$, the left-hand side is no pointer, then the annotation may, in general, be done by an additional field $f$: $loc\mapsto_{f} val$.
For a generalised structure of a heap graph, the additional annotation has no significance.
Hence, additional fields are not announced by default.

\textbf{Property 3 -- Equivalence relation.}
\index{equivalency relationship} For comparison might be used that "$\mapsto$" is a binary \index{functor} functor, and $a\mapsto b$ is a heap.
Thus, it might be confirmed that the equivalence relation "$\sim$" maybe well-founded by showing properties (i-iii).
It is agreed upon "$\mapsto$" has a higher priority than "$\sim$".
(i) Reflexivity: $a\mapsto b \sim a \mapsto b$.
(ii) Symmetry: $a \mapsto b \sim c \mapsto d$, then $c \mapsto d \sim a \mapsto b$.
(iii) Transitivity: $a \mapsto b \sim c \mapsto d$ and $c \mapsto d \sim e \mapsto f$, then $a \mapsto b \sim e \mapsto f$.

\textbf{Property 4 -- Locality.}
The heap graph, its corresponding term IR and its content hardly change on \index{locality} deletion, alteration or insertion of a \index{pointer} pointer.
Edge removal leads to a reduction to some simple heap.
Alternatively, it leads to \index{predicate!parameterisation} \index{abstract predicate} abstract predicate parameterisation or alteration of used abstract predicates because the relation between formula and graph is undecidable in general (see earlier), and therefore, predicates need to be considered separately.
A maximal alteration may lead to \index{graph!vertex} vertex removal if no predicates because it means a single vertex removal and all its \index{graph edge} connected edges.
An analogous insertion is tractable because edge insertion is done stepwise.
In the worst-case scenario, predicates may lead to a formula, which is very different from the previous step.
Hence, \index{recursive type} recursive structures are naturally and effectively described by recursive schemes, and one minor modification in a fragment does not affect all other elements, except neighbouring elements.
This estimate is just \index{heuristics} heuristics and depends on a concrete algorithm and graph separability of most independent subgraphs.
The same holds for the heuristic: "\textit{Some heap graph can be described better by abstract predicates than by compact descriptions}".
The separability problem is too general and beyond this work, particularly the integration of \index{SMT-solver} SMT-solvers (see sec.\ref{chapter:stricter}, \ref{chapter:APs}, cf. \cite{distefano06}).\\

In conclusion to this section, some notes shall be made on \index{predicate!higher-order} higher-order predicates.
As found in the previous section, they do not have a theoretical meaning for comparing heap.
Eventually, predicates having predicates as parameters seem to be promising regarding the \index{expressibility} expressibility of concise descriptions.
However, considered \index{heap} heap predicates are \index{inductive structure} inductively-defined.
So, arbitrary higher-order predicates may destroy these properties if not imposing further constraints.
Higher-order predicates over heaps may absorb recursion from rules, so recursion can no longer be noticed immediately within the rules --- this conflicts with \index{LALR-recogniser} LALR, LL(1) and some SLR-parsers.
However, theoretically, it does not break CF-mighty recognisers (cf. sec.\ref{chapter:APs}).
Thus, it is considered not targeted when using anything different than \index{grammar!formal} formal grammars (see sec.\ref{chapter:APs}).
It must be noted by introducing \index{predicate!higher-order} higher-order predicates essential properties break, such as the CFG and calls with alternating evaluation order and \index{subgoal} subgoal processing, as well as static \index{typing} typing.
However, from a heap expressibility's view, nothing changes.
According to the \index{quality ladder} quality ladder from fig.\ref{fig:QALadder}, properties 3 and 4 belong to property 1, and the last note on higher-order predicate recognition and the second half of property 4 belong to optimal stability.

\section{Logical Programming and Proof}
\label{chapter:logical}

This section researches how a heap (see sec.\ref{chapter:expression}) shall be represented in Prolog and obeyed criteria.
Afterwards, general theorems on heaps using Prolog are outlined.
The syntax of terms and rules is defined.
The cuts and applicability of recursive definitions are discussed.
All this enables logical reasoning, which will eventually solve verification as designed in more detail in sec.\ref{sect:Implementation}.
The abstract predicates' declarative characteristics are analysed in more detail reused later in sec.\ref{chapter:stricter} and sec.\ref{chapter:APs} to unify specification and verification.
The languages are represented in terms of Prolog based on relations and accompanied by software metrics.
Finally, a Prolog-based verification system is proposed.
The representation and integration of objects w.r.t. Prolog is discussed.

\subsection{Prolog as Logical Reasoner}
\label{sect:PrologAsReasoningSystem}
This section does not introduce \index{Prolog} Prolog whatsoever.
Instead, the core Prolog features are probed to reach this section's objective in designing a heap verification architecture based on Prolog.
This section refers to \cite{sterling94} and \cite{bratko01} as the first sources on Prolog, highly recommended.
Prolog has already been found as a convenient general-purpose logic PL for academic and practical needs. 
For example, Feigenbaum's theorem was proven with a Prolog verifier \cite{koch96}.
Furthermore, a rare but very severe bug with \index{real numbers} real numbers critical for Intel Pentium could be proven incorrect using the Prolog-based theorem prover \index{ACL2} ACL2 \cite{kaufmann00}.
Though considered by many purely academic, Prolog found application in verifier HOL Light \cite{price95}, validating and transforming terms and \index{semi-structured data} semi-structured data in general \cite{haberland08-1}.

A Prolog program is a \index{knowledge base} knowledge base, a rule set of \index{Horn-rule} Horn rules.
One or more \index{subgoal} subgoals might request an inquiry to the knowledge base.
In order to define rules and subgoals, Prolog expression terms need to be defined.

\begin{definition}[Generic Prolog-term]
 \index{term} A \textit{term} $T$ in Prolog is defined as:
 
 $T ::= 
 \left\{
 	\begin{array}{ll}
 		x & \mbox{symbol } x \in \overline{X} \mbox{ = alphabet}\\

 		X & \mbox{symbolic variable } X \in \overline{X}\\
 		
 		[] & \mbox{empty list}\\
 		
 		[ \ T \ | \ Ts \ ] & \mbox{head of list term } T \in \overline{X},\\
 		
 		 &   \mbox{ where } Ts \mbox{ is a list}\\
 		
 		[ \ T_0 \ , \ \dots \ , \ T_n ] & \mbox{list with with } T_j \mbox{, } \forall j \in [0,n]\\

 		f(T_0, \dots, T_n) & f \mbox{ functor, } T_j \mbox{, } \forall j \in [0,n]\\
 		
 		p(T_0, \dots, T_n) & p \mbox{ predicate, } T_j \mbox{, } \forall j \in [0,n]
 	\end{array}
 \right.
 $
 \label{def:PrologTerm}
\end{definition}

A \index{symbol} symbol denotes some logical object, e.g. "\textit{me}", "\textit{Santa Clause}", the integer "$33$", or just some \index{variable!local} local variable.
In \index{Prolog} Prolog, a symbol starts with a lowercase letter followed by an arbitrary sequence of letters and digits.

In contrast to symbols, \index{variable!symbolic} a symbolic variable always starts with a capital letter or is the unique character \index{\_} "\_".
For example, the variables $X$ is assigned \index{term unification} (or \textit{unified}) some value, e.g. "$33$", afterwards $X$ might be used in (complex) terms or subgoals (see def.\ref{def:PrologRule}).
Whenever "\_" is used, then referencing it will not be an issue at all.
Thus, "\_" is used exclusively in cases where precisely one term is applied, and further use of that value is not needed -- as it usually is the case with \index{pattern matching} \textit{pattern matching} over terms.
The visibility of \index{scope visibility} variables is bound to a rule.

\index{linear list} Lists defined over binary operations "\texttt{,}" or "\texttt{|}" must have at least two components.
The check of whether a list is linear is performed by the \index{predicate} predicates itself, which accept terms.
Example lists are \texttt{[12|[]]} and \texttt{[1,2,[4|5]}.
So, the base type of any list is a struct.

A functor is a structural operator, which links terms and denotes a new complex value.
An object or struct may be (partially) symbolic.
For example, the list constructor functor \index{.-operator} "\textbf{.}" applied to head $H$ and \index{linear list} list $Hs$ works the same as $[H|Hs]$.
Another example is the successor \index{natural numbers} of a natural number $succ$.
Its arity is one and accepts either one \index{arithmetic!Church} \index{Church arithmetics} Church-term to $succ$, or which is $zero$ with \index{arity} arity 0 (cf. fig.\ref{code:NaturalNumbers}).
The latter case denotes a \index{constant function} constant (see obs.\ref{obs:SimplificationByGeneralisation}).

The predicate returns "\textit{yes/no}" depending on if some given terms coincide according to the predicate or not.
If the \index{predicate} predicate is defined for any terms combination, then the predicate is \index{totality} total (see next section).
For example, \index{assertion} the assertion $older(plato,aristotle)$ denotes \index{Plato} \index{Aristotle} the proposition "\textit{Plato is older than Aristotle}".

\begin{definition}[Horn Rule]

\index{Hoare-rule} A \textit{Prolog-rule} consists of \index{predicate!head} head $p$ and \index{predicate!body} body $q_0, q_1, \dots, q_n$ of \index{subgoal} subgoals $q_j$ and $j,n \in \mathbb{N}_0$, $j\le n$.
 Subgoals are evaluated consecutively with increasing $j \ge 0$.
 Head $p$ may contain an arbitrary number of \index{term} terms (term \index{term vector} vector), which may be used in the body.
 A rule with empty body, where $j=n=0$, is called a \index{fact} fact.
 \label{def:PrologRule}
\end{definition}

The syntax of some rule $pred$ in \index{EBNF} EBNF is defined by

\begin{center}
\begin{tabular}{l}
  \begin{minipage}[t]{7.3cm}
  \begin{grammar}
<head> ::= <ID> ‘(’ <term> \{ ‘,’ <term> \} ‘)’

<rel> ::= ‘=’ | ‘!=’

<call> ::= <ID> ‘(’ <term> \{ ‘,’ <term> \} ‘)’
  \end{grammar}
  \end{minipage}\\\\
  \begin{minipage}[t]{6.5cm}
  \begin{grammar}
 <goal> ::= <term> <rel> <term> | <call>

 <body> ::= \{ <goal> ‘.’ \} <goal>

 <pred> ::= <head> [ ‘:-’ <body> ] ‘.’
  \end{grammar}
  \end{minipage}
\end{tabular}
\end{center}

Here $ID$ is a symbol identifier -- but it is not a symbolic variable.
Binary operator "\texttt{=}" denotes unification, and "\texttt{!=}" impossibility of term unification.
"\textbf{.}" denotes the end of a predicate definition.
Symbolic variables are visible inside a rule definition only.
The separator "\textbf{:-}" defines $head$ to the left and $body$ of a rule to the right.

The example from fig.\ref{fig:PrologExample1} shall be considered for illustration.
Fig.\ref{fig:PrologExample1} a) contains two obvious facts from Ancient Greece: (1) Socrates is a human, and (2) Zeus is immortal.
In analogy, other facts might be defined: the designer's responsibility of a (closed) \index{knowledge base} knowledge base.
A fact by default is an indisputable \index{assertion} assertion from the considered context of discourse.
The third rule states: "\textit{every human is mortal}".
In more technical detail, this means: if some term $X$ has the predicate "\textit{human}", then term $X$ is unconditionally assigned the predicate "\textit{mortal}".
Unconditionally implies in an immediate sense the absence of further subgoals, except for the subgoal $human(X)$.

\begin{figure}[h]
 \begin{center}
  \begin{tabular}{ccc}
  \begin{minipage}[t]{4cm}
\begin{verbatim}
   human(socrates).
   noneternal(zeus).
   mortal(X):-human(X).
\end{verbatim}
  \end{minipage}\\
  (a)\\\\
  \begin{minipage}[t]{7cm}
\begin{verbatim}
   a2(0,M,Res):-Res is M+1.
   a2(N,0,Res):-N1 is N-1, a2(N1,1,Res).
   a2(N,M,Res):-N1 is N-1, M1 is M-1,
                    a2(N,M1,Res2),
                    a2(N1,Res2,Res).
\end{verbatim}
  \end{minipage}\\
  (b)
  \end{tabular}
 \end{center}
 \caption{Example on Prolog rules and facts}
 \label{fig:PrologExample1}
\end{figure}

Terms obeying def.\ref{def:PrologTerm} may be noted in \index{infix notation} infix-notation in the case of the "\texttt{rl}"-operator modifier. 
An even more generalised relation can be written by using predicate $p(T_0, \dots, T_n)$.
Next, fig.\ref{fig:PrologExample1} b) deals with \index{Ackermann function} \textit{Ackermann's function}, for example.
It demonstrates that any \index{recursion} \index{recursion!schema} recursion (e.g. left, right, primitive, \index{recursion!mutual} mutual) may be defined in Prolog, which shows full theoretical expressibility.
The operator \index{\texttt{is}} "\texttt{is}" is an arithmetic functor, a Carnap functor \cite{carnap68}, which denotes arithmetic term calculations.
They are not logical, after all.
Ackermann's function is defined and \index{totality} total.
However, due to memory limitation and physical device management, a calculation might be interrupted in practice.
Thus, a predicate made of a function may be helpful in termination check for inductively-defined \index{data structure} incoming data.
Examples might include \index{natural numbers} natural numbers and \index{linear list} lists.

\index{subgoal} Subgoals $goal$ as countable sub-constraints of a predicate can be defined as follows.

\begin{definition}[A Query in Prolog]
\label{def:QueryToProlog}
A \index{Prolog!query} \textit{query of subgoals in Prolog} is defined as a sequence of \index{term unification} unification \index{predicate!call} of defined (Horn) rule calls.
Symbolically bound \index{symbol} symbols \index{stack!pushing} are pushed to the local symbol environment and are unified for each \index{subgoal} subgoal call.
A call implies a matching \index{predicate} predicate is found with a fixed \index{arity} arity, and all passed terms obey def.\ref{def:PrologTerm}, though not necessarily fully bound, at the moment of the call.
 In case several predicates match, then w.l.o.g. the first matching predicate occurrence is selected first, the following matching predicates may be considered alternatives if the first matching subgoal fails (cf. fig.\ref{fig:BoxModelPredicateCall}).
The syntax is described by EBNF rule "\texttt{goal}", the semantic of a single subgoal can be described as a call to a Horn rule (see def.\ref{def:PrologRule}).
\end{definition}

Alternative subgoals might be excluded by \index{cut} cuts (see later).
Alternatives are considered in the order of appearance in the program.
Besides its use, the essential difference between a procedure and a predicate is the calling convention, the passed terms and the permanent and mandatory search for alternatives (see later, see obs.\ref{obs:ProofAsSearching}).

\index{introspection} CI in \index{Prolog} Prolog \cite{diaz12} is rather limited, but still, it allows the check and definition of user-specified \index{type} class types, so-called "kinds" for a given term, which may be: \index{\texttt{var}} \texttt{var}, \index{\texttt{atom}} \texttt{atom}, \index{\texttt{number}} \texttt{number}, \index{\texttt{compound}} \texttt{compound} and other less important \index{predicate!built-in} embedded \index{predicate} predicates.
The merge of common case terms into one is allowed, so predicates may considerably ease predicate definitions.
"\texttt{atom}" checks the property of a symbol, e.g. if \texttt{atom(a)} or \texttt{atom([])} are correct, but \texttt{atom([1])} or \texttt{atom([1,2])} would be not.
\index{\texttt{var}} "\texttt{var}" checks whether a given term is \index{variable!symbolic} an \index{variable!free} unbound symbolic variable.
Hence, \texttt{var(X)} is correct, but \texttt{X=1,var(X)} is not.
"\texttt{list}" checks whether a given term \index{weak typing} is a (weakly-typed) list (as it is by default in \index{Prolog} Prolog, cf.\cite{diaz12} with \cite{isocpp14}).
So, both, \index{\texttt{list}} \texttt{list([])} and \texttt{list([1,2,3])} are correct, where \texttt{list(a)} is not.
To define properties over a given list or for compound terms (check may be done using built-in predicate \index{\texttt{compound}} \texttt{compound}), in both cases, \index{=..} "\texttt{=..}" is used.
This operator splits a given term into the associated \index{functor} functor to the left-hand side and passed \index{term} terms as a list to the right-hand side.

Predicate calls are considered later.
However, \index{term unification} term unification by default is not done in \index{Prolog} Prolog due to a full evaluation of \index{expression} sub-expressions before processing them.
Thus, by default, \texttt{X=X} may be unified, where \texttt{X=f(g)} may not.
Depending on Prolog implementation to be used, recursion may or may not detect \index{recursion} recursion in rules.
That is why unification may occasionally only be performed on the top term level, alternatively not at all.
Nevertheless, for the example, \texttt{X=f(g(X,X))} unification may often lead to failure or termination in the best case, and in the worst case, may crash execution of \index{WAM} WAM (due to stack-overflow since version 1.3.0 "GNU Prolog" \cite{diaz12}).
Conservative estimates show that 95\% of all cases do not require \index{term unification} unification term checks.
However, these are the essential percentages for generalised assertions about term representations and transformations since often complex terms may contain symbolic variables that may (by accident) contain mutual recursion.
Often limits of given algorithms need to be checked.
Here \index{Ackermann function} Ackermann's predicate might be quite helpful in AT the incoming domain.
That is why now a simple algorithm is provided to exclude that critical 5\% entirely.
Those problems may indeed correspond to, e.g. object definition according to sec.\ref{sect:TheoryOfObjects}, see fig.\ref{CodeUnificationWithOccursCheck}.

\begin{figure}[h]
\begin{center}
\begin{minipage}[t]{12cm}
\begin{verbatim}
 unify_with_check(X,Y):-var(X),var(Y),X=Y.
 unify_with_check(X,Y):-var(X),nonvar(Y),
                          not_in(X,Y),X=Y.
 unify_with_check(X,Y):-nonvar(X),var(X),
                          not_in(Y,X),Y=X.
 unify_with_check(X,Y):-nonvar(X),nonvar(Y),
                          X=..[H|L1],Y=..[H|L2],
                          unify_list(L1,L2).
\end{verbatim}
\end{minipage}
\end{center}
 \caption{Example code on occurs-checks}
 \label{CodeUnificationWithOccursCheck}
\end{figure}

\index{term} Term structures and lists must be defined as in fig.\ref{CodeUnificationList}.
 
\begin{figure}[h]
\begin{center}
\begin{minipage}[t]{12cm}
\begin{verbatim}
 unify_list([],[]).
 unify_list([H1|L1],[H2|L2]):-
     unify_with_check(H1,H2),
     unify_list(L1,L2).

 not_in(X,Y):-var(Y),X\==Y.
 not_in(X,Y):-nonvar(Y),Y=..[_|L],not_in_list(X,L).
 
 not_in_list(X,[]).
 not_in_list(X,[H|L]):-
    not_in(X,H), not_in_list(X,L).
\end{verbatim}
\end{minipage}
\end{center}
 \caption{Example code on list unification}
 \label{CodeUnificationList}
\end{figure}

In sec.\ref{chapter:APs} and fig.\ref{fig:mapFunctionalExample}, an example is provided when a predicate is used as a passed term.
In analogy to \index{functor} functor analysis, a \index{predicate!call} predicate's call is also done within a given \index{stack!window} stack window's boundaries.
The difference is in the computation model and reference implementation of Prolog and its stack synchronisation (see next section) regarding \index{caller} caller and \index{callee} callee.

Next, let us consider the example from fig.\ref{fig:PrologExample1} b) with \index{subgoal} subgoals and their  \index{predicate!call} calls.
Given a program and subgoal "\texttt{?-a2(X,1,3)}" (see fig.\ref{fig:PrologExample2}), which is put in by interactive interpretation.
First, rule (1) is taken when unifying with $\sigma_1$.
The substitution $\sigma_1$ leads to contradiction \index{term unification} because of \texttt{Res=2}.
In fig.\ref{fig:PrologExample3}, a set of unification bindings is given for the considered example's derivation tree. 
The search is stopped for the given predicate because it failed, and therefore the next \index{rule!alternative} alternative is chosen (see obs.\ref{obs:ProofAsSearching}).
Selecting the third rule with unification $\sigma_2$, applying the second rule with $\sigma_5$, and finally applying the first rule with $\sigma_9$ finishes the search successfully.
Since there is no \index{cut} cut (see further), \index{rule!alternative} alternatives are checked, and finally, yet another solution for the given \index{subgoal} subgoal is found.
The path from the first subgoal over all-new intermediate subgoals till successful termination is called the path to a solution.
Every path to a solution in fig.\ref{fig:PrologExample3} is marked with a black square.
For the example given, there are two, and both solutions are identical.
Here it shall be noticed that even if paths to solutions vary, the search could be cut as soon as a first solution is found.
However, this is very example-specific and cannot be generalised.

\begin{figure}[h]
 \begin{center}
\scalebox{0.6}{
\begin{tabular}{l}
 \xymatrix{
  && \txt{? - a2(X,1,3)} \ar[dll]^{(1) \cdot \sigma_1} \ar[dddl]^{(3) \cdot \sigma_2} \ar[dddr]^{(3) \cdot \sigma_3} &&\\
  \txt{Res $\approx$ 2} \ar[d] &&&&\\
  \txt{\underline{fail}} &&&&\\
  & \txt{? - a2(X,0,Res2)} \ar[dl]^{(1) \cdot \sigma_4} \ar[d]^{(2) \cdot \sigma_5}  && \txt{? - a2(N1,Res2,3)} \ar[d]|{(1) \cdot\sigma_6} \ar[ddr]^{(2) \cdot \sigma_7, \ (3) \cdot \sigma_8}  &\\
  \txt{Res $\approx$ 1} \ar[d] & \txt{? - a2(N1,1,Res2)} \ar[d]^{(1) \cdot \sigma_9} \ar[dr]^{(2) \cdot \sigma_{10}} && \txt{Res $\approx$ 2,\\Res $\approx$ Res2+1} \ar[d] &\\
  \txt{\underline{fail}} & \txt{Res $\approx$ 2} \ar[d] & \txt{\underline{fail}} & \txt{\underline{X=1} $\blacksquare$} & \txt{\underline{fail}}\\
  & \txt{\underline{X=1} $\blacksquare$} &&& 
}
\end{tabular}}
 \end{center}
 \caption{\index{derivation tree} Derivation tree for the Ackermann-predicate}
 \label{fig:PrologExample2}
\end{figure}
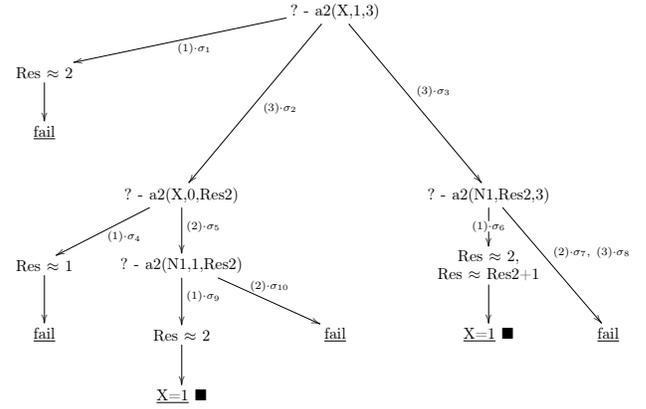

\begin{figure}[h]
 \begin{center}
\begin{tabular}{ll}
 $\sigma_1$ = & X $\approx$ 0, M $\approx$ 1, Res $\approx$ 3\\
 $\sigma_2$ = & N $\approx$ X, M $\approx$ 1, Res $\approx$ 3\\
 $\sigma_3$ = & Res $\approx$ 3, N $\approx$ X, M $\approx$ 1, N1 $\approx$ X-1, M1 $\approx$ 0\\
 $\sigma_4$ = & X $\approx$ 0, Res $\approx$ Res2, M $\approx$ 0\\
 $\sigma_5$ = & N $\approx$  X, M $\approx$ 0, Res $\approx$ Res2, N1 $\approx$ X-1\\
 $\sigma_6$ = & N1 $\approx$ 0, M $\approx$  Res2, Res $\approx$  3\\
 $\sigma_7$ = & N $\approx$  N1, Res2 $\approx$  0, Res $\approx$  3, N1 $\approx$  N1-1\\
 $\sigma_8$ = & N1 $\approx$ N, M $\approx$ Res2, Res $\approx$ 3, N1 $\approx$ N1-1\\
 $\sigma_9$ = & N1 $\approx$ 0, M $\approx$ 1, Res $\approx$ Res2\\
 $\sigma_{10}$ = & N $\approx$ N1, M $\approx$ 1, Res $\approx$ Res2, N1 $\approx$ N1-1
\end{tabular}
 \end{center}
 \caption{Unifications \index{derivation tree} for the proof tree from fig.\ref{fig:PrologExample2}}
 \label{fig:PrologExample3}
\end{figure}

Considering fig.\ref{fig:PrologExample3}, we notice that the evaluation \index{strategy} strategy may essentially change with \index{cut} cuts when \index{stratification} stratifying \index{subgoal} subgoals in a \index{paradigm! declarative} declarative paradigm.
For example, if cutting logical reasoning in fig.\ref{fig:PrologExample2} after subgoal "\texttt{?-a2(N1,1,Res2)}", we are confident the whole branch $(3) \cdot \sigma_3$ may be skipped.
For this, the \index{cut} cut needs to be defined.

\begin{definition}[Cutting Solutions]
\label{def:CuttingSolutions}
\index{cut} \textit{Cutting solutions} prevents \index{proof!search} a local search for alternatives at a given subgoal position in a given query according to def.\ref{def:QueryToProlog} after finding the \index{proof!solution} first solution till that subgoal position.
\end{definition}

\index{locality} Locality always applies to all \index{subgoal} subgoals inside a considered predicate until the \index{cut} cut operator "\texttt{!}".
However, cutting does not imply that any algorithm becomes more compelling or automatically better somehow since this depends on how the cut is applied.
Cut only provides a possibility depending on the given rules to skip further constraints if a solution found is obvious, and further constraints shall not be considered anymore.
It may be used to exclude \index{duplicate} duplicate solutions.
A cut is effectively a built-in predicate used as a \index{subgoal} subgoal and interprets the current calculation's state, namely the Prolog execution stack for non-deterministic solution branches.
Let us consider the cut's \index{predicate!semantics} semantics by the factorial \texttt{fact} example \index{factorial} from fig.\ref{CodeExampleFactorial}.

\begin{figure}[h]
\begin{center}
\begin{minipage}[t]{12cm}
\begin{verbatim}
fact(0,1).
fact(N,Res):-N1 is N-1,fact(N1,Res2),Res is N*Res2.

fact2(0,1):-!.
fact2(N,Res):-N1 is N-1,fact2(N1,Res2),Res is N*Res2. 
\end{verbatim}
\end{minipage}
\end{center}
 \caption{Example code for the factorials in Prolog}
 \label{CodeExampleFactorial}
\end{figure}

The first term represents an incoming \index{integers} integer, and the second \index{term} term represents the outgoing result term.
A predicate is a relationship between all \index{term!incoming} incoming and \index{term!outgoing} outgoing terms passed.
In the example, the order \textit{input-to-output} used is from left to right.
The predicate \texttt{fact2} almost equals the first predicate, except the base case contains a \index{cut} cut as the only \index{subgoal} subgoal.
Cut affects the first alternative of \texttt{fact2} is called only whenever \index{term!incoming} incoming and \index{term!outgoing} outgoing terms \index{term unification} unify for the \index{predicate!call} subgoal query according to def.\ref{def:QueryToProlog}, e.g. "\texttt{?-fact(5,R).}".
--- This happens only once for \index{inductive structure} recursion descent when the incoming number equals zero.
Whenever this situation happens, \index{cut} cut guarantees in all processed, then in  \index{subgoal} subgoals for no further alternatives will be searched.
On the \index{recursion!arise} recursive rise, alternatives are not considered.
If there were no matching subgoals on the recursive rise, then alternatives would be considered, except for cut alternatives, if any.
Therefore, \index{cut} the cut in \texttt{fact2} is essential.
It improves the understanding of the calculation of all its \index{subgoal} subgoals and its optimisation.
It is noted that the rules order heavily influences \index{soundness} soundness.
For example, heavy influence is reached by changing the ruling order or introducing a \index{cut} cut, see fig.\ref{CodeExampleFactorial2}.

\begin{figure}[h]
\begin{center}
\begin{minipage}[t]{12cm}
\begin{verbatim}
 wrong_fact2(N,Res):-!,N1 is N-1,
                      wrong_fact2(N1,Res2),Res is N*Res2. 
 wrong_fact2(0,1).
\end{verbatim}
\end{minipage}
\end{center}
 \caption{Counter-example code on factorial in Prolog}
 \label{CodeExampleFactorial2}
\end{figure}

In this example, \index{base case} the base case is unreachable because $0$ and $1$ both also \index{term unification} unify in the first alternative.
Cut leads here to ignoring the second alternative.
Next, \texttt{N1} is assigned -1 when calling "\texttt{?-wrong_fact2(0,1).}".
Thus, \index{recursion} left-recursion is present.
So, it does not \index{Halting-problem} terminate.
In this case, the \index{cut} cut is superfluous because \index{recursion!left} left-recursion does not allow to view any alternatives, although all logical \index{fact} facts and rules are defined.
The derivation tree contains one infinite branch since the corresponding calculation does not terminate.
In Prolog, this Halting-derived problem is better known as the \index{prioritisation issue} prioritisation issue.
Thus, for readability, rules must be defined in the order, s.t. base cases occur first.
It seems trivial, but it is not --- especially not when later Prolog is tuned (see sec.\ref{chapter:APs}).
So, particular and specialised rules, as well as facts, must be defined first.
Otherwise, they are suppressed by more generalised ones.

On the other side, non suppressed rules mean that rules corresponding to alternatives may alter according to their position.
Here, in addition to simplicity reasons, the preference shall always be given to simple and less constrained rules over more complicated rules.
In order to achieve this, it is mandatory, as much as possible, to differentiate incoming term vectors (see later on \index{heap!substraction} heap calculus in sec.\ref{chapter:stricter}).

It must be stated that \index{cut} cut is essentially \index{syntactic sugar} syntactic sugar.
It is sugar because it may always be \index{predicate} entirely rewritten (proof would be inductively over Prolog predicates), s.t. all possible continuations of the cut may be un-factored (contrary to rule factorisation).
Soundness may also be shown by excluding cut by introducing an additional argument to the predicate, which stores the temporary calculation result.
If the result is reachable, then there is a unique \index{base case} base case.
In all other cases, a strictly monotone raising \index{arising chain} chain exists.
The search may be generalised and expressed \index{fixpoint combinator} for a fixpoint in terms of $\mu$-expressions.
A full prove may be found in \cite{troelstra00}, also refer to discussion in sec.\ref{chapter:intro}.
Logics that allow cuts must implement and obey (CUT) formalisation (as a rule):

$$\inference[(CUT)]{\Gamma \vdash \Phi,\Omega \quad\Phi,\Delta \vdash \Lambda}{\Gamma, \Delta \vdash \Omega,\Lambda}$$

\index{Prolog} Prolog upper \index{consequent} consequents of form $A \vdash B$ may be considered  $A$ is the subgoal of \index{rule body} a rule's body, and $B$ is the \index{predicate!head} predicate head, so the logical consequent.
Consequents are specific and abstract simultaneously and may cause unnecessary distraction and, therefore, are not mentioned later except explicitly needed.
\index{cut} (CUT) cuts \index{determinism} non-determination.
Cuts do not guarantee termination nor full determination whatsoever (see mentioned earlier example).
Alternatives may occur even with cuts, for instance, in following "\texttt{!}" subgoals.
Whenever an (intentional) cut does not threaten \index{duplicate} to drop the right solutions, the cut is \index{green cut} "\textit{green}".
Otherwise, that cut is \index{red cut} "\textit{red}".

Discussed solutions, constraints, rule modelling and subgoals will be applied later, particularly in sec.\ref{chapter:APs}.
\index{assertion negation} Negated assertions \index{assertion} shall be discussed in more detail next \index{expression negation} since they are an essential \index{literal} logical operation to (heap) predicates (cf. def.\ref{def:PrologTerm}).
Negation may be applied to predicates, and meaningful definitions may be found for relations, however symbolic variables (not propositions) may not have a meaningful negation in general.
Apart from the singleton case, it is trivial to note that negating a set (also infinite sets, but inductively-defined) is not decidable.
The method presented in sec.\ref{chapter:APs} is decidable, even if \index{inductive definition} inductive definitions are allowed due to the properties mentioned, mainly due to \textit{non-repetitiveness} and heap locality.
Naturally, a similar problem of undecidability encounters for predicates.
The situation is undefined for functors in general: 
if any function with any unspecified domain and any unspecified co-domain is negated -- what would this hypothetical function then denote?
-- This is more a rhetoric question, and it is obvious there is no obvious answer since it is undetermined.
That would change if the functor would act as a predicate and its co-domain would be \index{boolean set} boolean.
So, a meaningful \index{assertion negation} negation shall be found for \index{predicate} \index{Prolog} Prolog predicates.

A possible semantics of negation in Prolog is to \index{fail} fail since, in Prolog, a predicate searches for some solution and must find it in a negative sense only.
Prolog implements a \index{predicate!built-in} built-in predicate with arity 0 called \texttt{fail}, and from now on, this shall be used unless stated differently.


\subsection{Reasoning as Proof Search}
\label{sect:LogicalReasoningAsProof}

Prolog \index{Prolog} has successfully been used as a language processor to \index{language!formal} formal as well as \index{language!natural} natural languages \cite{pereira12}, \cite{matthews98}, \cite{kulik01}.
As shown later, some formal words must be generated and validated according to a given rule set.
Natural languages are characterised by mutations \cite{pereira12}, \cite{kallmeyer10} compared to formal languages due to cultural-sociological development.
Ambiguity is the primary factor for differences between both languages.
Even so, methods on natural language processing have been well studied over the last decades \cite{kallmeyer10}, \cite{pereira80}, which leads to intriguing results in \index{artiticial intelligence} Artificial Intelligence and \index{natural speech recognition} Natural Speech Recognition of human speech.
However, this work is focused on relatively easy \index{lexical analysis} lexical analyses.
This work's main objective is to find the most straightforward and adequate dynamic \index{dynamic memory} memory model and checks on it.
For a prototype's implementation but also for research of crucial topics, \index{Horn-rule} Horn rules will be used.
Warren's \index{WAM} proposition of the "\textit{use of predicate logic in order to solve logical tasks}" is put into life, even if he warns about his thoughts about Prolog's use may not be fully capable and tractable in real-life due to limitations of predicate logic's expressibility (cf.\cite{kowalski74}).
These limitations are due to \index{Turing-computability} Turing-computability.
A simple example demonstrates this complexity --- assume some function with \index{term!incoming} incoming and \index{term!outgoing} outgoing term vectors encoded as a predicate, as it may be done in Prolog.
Then, for a given incoming vector, the result may be hardly tractable, e.g. in the case of \index{one-time function} one-time functions for message encryption or functions that may be partially defined only, e.g. a 2D-mapping onto an ellipse.
The paper \cite{pereira80} researches similar \index{syntax analysis} parsing methods, but, unfortunately, the theoretical limitations of syntax analysis are not (entirely) considered. 
That is why parsing is heavily constrained, mainly due to recognisers of the \index{LL-analyser} LL(1) category or purely extended analysers based on it.
Unfortunately, it impairs expressibility in total.
\cite{haberland08-2} shows whenever generation and checks of data (in the contributions only terms are considered for common data) are of interest as two competing processes, then:

\begin{enumerate}
 \item \index{syntax analysis} Syntax analysis is bound by expressibility, first of all.
 If expressions were widened and artificial limitations were lifted in previously mentioned work, then  w.l.o.g. both syntax and \index{semantics} semantics of the whole checking process may aggressively be simplified.
 \item Simplification may also be achieved by a generalised \index{IR} IR.
 \item Particularly due to the \index{paradigm} \index{paradigm!functional} functional paradigm, a comparably bloated description is the result.
 In total, only five \index{predicate!higher-order} functionals are used, like \index{\texttt{concatMap}} \texttt{concatMap}, \index{\texttt{foldl}} \texttt{foldl}.
The probability is high that the check process semantics could have been formulated even closer in mathematical-logical terms similar to this work (see later).
\end{enumerate}

The papers \cite{haberland08-1} and \cite{haberland08-2} show a non-standard solution at first glance may lead to a friendly and straightforward but effective overall solution based on genuinely logical terms.
The reasons for that are diverse.
First, there is the representation of term transformations as rules and compactness.
Widefield analysis is made w.r.t other \index{language!functional} functional languages as \index{language!imperative} imperative languages worsen all characteristic metrics regarding quality and compactness.
Second, simplicity is due to the comparison made between a given program and its description, which may be caught by \index{regular expression} regular expressions in the most general form, even by some \index{context-sensitive} context-sensitive expressions in rare cases, as shown by examples on macro calls.
Qualitative analyses are taken out on each input operator and a quantitative comparison based on \index{metrics} software metrics.
The result is perhaps unexpected because nearly by all parameters, \index{Prolog} Prolog leads followed by \index{paradigm!functional} \index{paradigm} functionals \cite{haberland08-1}. 
Next, results are discussed in sec.\ref{sect:LanguageCompatibility}.


\begin{thesis}[Proof over Heaps is equivalent to Syntax Analysis]
\label{thes:ProvingEqualsParsing}
Logical programming on dynamic memory is the same as syntax analysis.
\end{thesis}

\begin{proof}
A heap verifier based on a \index{Prolog} Prolog-dialect is introduced, allowing to resolve the issues from sec.\ref{chapter:DynMemProblems} \index{automated verification} \index{theorem} by the theorems of this and the following sections, particularly on proof automation in \ref{chapter:APs} (cf. sec.\ref{chapter:intro}).
When talking about proofs over heaps, it is assumed that heaps are represented as Prolog terms, and those are automatically entailed \cite{haberland16-1} to compare equality (see following sections).
Naturally, core verification rules, e.g., subtraction, are not directly affected by automation since those still co-exist.
\end{proof}

A prerequisite to this thesis is that \index{syntax analysis} syntax analyses allow resolving issues related to \index{dynamic memory} dynamic memory proofs.
This section induces how to do it, which preconditions have to be obeyed and why.
Prolog does not only qualify as a theoretical apparatus, but it also qualifies as an instrument for solving a problem complex at once (see this and following sections).
Prolog \index{Prolog} is exceptional because only assertions derive true derived from a given Horn-rule set (see obs.\ref{obs:ModelOfComputationVerification}).
If a theorem was not derivable, then the reason may be in the rules (e.g. lack of \index{abstraction} abstraction), reasoning techniques, or the way \index{subgoal} subgoals are represented.
Naturally, there are practical bounds w.r.t. \index{syntax analysis} syntax analysis.

The idea to use a syntactical analyser as a \index{SMT-solver} solver of heaps (cf. sec.\ref{chapter:expression}) came into mind after several observations made (see sec.\ref{chapter:APs}):
\begin{enumerate}
 \item \index{Prolog rule}  Prolog rules are very compact and are unique among similar approaches, especially when it comes to logical tasks.
 \item A program in Prolog is a set of rules.
 Successful reasoning constructs, in fact, a \index{derivation tree} tree.
 The proof of some \index{theorem} theorem has the structure analogous to a program (cf. isomorphism  \index{isomorphic mapping} discussed earlier).
The same goes for queries that span a tree.
 \item Expressions may be \index{syntax} syntactically analysed if further needed.
The \index{locality} locality of pointers constraints memory models, namely its \index{expressibility} expressibility of corresponding generated \index{grammar!formal} grammars.
 So, some formal grammar here for a pointers theory may not supersede \index{grammar!context-free} \index{context-free} CF grammars in general.
 That is very useful to keep in mind.
 \item However, on one occurrence, a propagandist of \index{semantic web} Semantic Web attempted to advertise his and other persons contribution would be \textit{lingua franca} for at least the next 25 years.
The argument presented was used as self-defence that the Semantic Web solves almost all actual technical and scientific problems.
 As an argument of superiority, Semantic Web was described as higher than \index{abstraction} abstraction and, as he called it "legacy methods", as explicitly mentioned syntax recognition, he later added that it would have no future.
 As can be seen and understood today --- that presenter overemphasised, at least.
 Regardless of that, it can be found that syntax analysis may still be found useful today, not to forget about the base of any semantic web application since syntax still has to be recognised in a first instance.
 Apart from that, real problems, e.g. related to rule-based recognition, still emerge sometimes.
\end{enumerate}

Now it is up to note and compare the objectives of the discipline in more detail:

\begin{center}
\begin{tabular}{ll}
 1. Programming & $\longrightarrow$ find solution\\
 2. Logical Programming \index{logical programming} & $\longrightarrow$ find (logical) solution\\
 3. Syntax Analysis \index{syntax analysis} & $\longrightarrow$ find (syntactic) solution
\end{tabular}
\end{center}

When we try to type the mappings of 2 and 3, then characteristics emerge as shown in fig.\ref{CharacteristicsTyping23}.

\begin{figure}[h]
\begin{center}
\begin{tabular}{ll}
 \multicolumn{2}{c}{Logical Programming}\\
 \hline
 Input: & Prolog rules, query\\
 Output: & "\textit{yes/no}", \index{proof!tree} proof tree with all unifications\\\\
\end{tabular}

in comparison to that we have\\

\begin{center}
\begin{tabular}{ll}
 \multicolumn{2}{c}{Syntax Analysis}\\
 \hline
 Input: & formal grammar, input word\\
 Output: & "\textit{yes/no}", syntax tree (AST)\\
         &   (and symbols as attributes replaced)
\end{tabular}
\end{center}
\end{center}
 \caption{Characterisations of typings from points 2,3}
 \label{CharacteristicsTyping23}
\end{figure}

So, whenever the attempt in narrowing down input sets, the solution obtained by one approach may be replaced by a solution obtained by a different approach.

The approaches mentioned earlier for recognising \index{language!natural} natural languages differ from those for \index{language!formal} formal languages by some ambiguity and deviation level.
In the languages considered, a deviation is nearly absent.

\begin{lemma}[The Double-Semantics of Prolog Predicates]
\label{lem:DoubleSemanticsOfProlog}
A Prolog predicate has two semantics: (1) a procedural semantics based on calls, and (2) a relational semantics based on relational interpretations.
\end{lemma}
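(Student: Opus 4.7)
The plan is to exhibit each of the two semantics explicitly and verify that both are well-defined for an arbitrary predicate built according to Definition~4.2 (Horn rule) and the grammar for \texttt{pred} given earlier in the section. Since the lemma only asserts the existence of two semantics (not their equivalence), the proof is constructive: I give both interpretations and check that each one is a total function of the predicate's syntactic form.

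For the procedural semantics, I would take a predicate $p$ with defining clauses $p(\vec{t}_i) \mbox{ :- } q_{i,1}, \dots, q_{i,n_i}$ listed in source order, and define its meaning as the SLD-style evaluation function already described informally around Definitions~4.3--4.4 and illustrated by Figures~4.5--4.6 (Ackermann example): on a call $p(\vec{s})$, try to unify $\vec{s}$ with $\vec{t}_1$ under the current substitution; on success, push the subgoals $q_{1,1},\dots,q_{1,n_1}$ onto the goal stack and continue left-to-right; on failure, backtrack to $\vec{t}_2$, and so on. Cuts are interpreted as described in Definition~4.5. This yields a computation procedure whose observable output is ``yes/no'' plus the accumulated unifier $\sigma$; crucially, the order of clauses and of body subgoals is part of the definition, which is what makes it \emph{procedural}.

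For the relational semantics, I would interpret the same syntactic object logically: each clause $p(\vec{t}_i) \mbox{ :- } q_{i,1},\dots,q_{i,n_i}$ becomes the universally closed first-order Horn formula $\forall \vec{X}_i.\; (q_{i,1} \wedge \cdots \wedge q_{i,n_i}) \Rightarrow p(\vec{t}_i)$, and the whole predicate becomes the conjunction of these formulae. The meaning of $p$ is then the least Herbrand model generated by these clauses (equivalently, the least fixed point of the associated immediate-consequence operator, in line with the $\mu$-expression viewpoint mentioned in Section~4.1). Under this reading $p$ is simply a relation on ground terms; clause order, subgoal order and cuts are invisible. This is a well-defined total function from Horn rule sets to relations, so every Prolog predicate admits such an interpretation.

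The main obstacle, and the only subtle point, is that the two semantics are not in general interchangeable: as the counter-example \texttt{wrong\_fact2} in Figure~4.8 shows, reordering clauses or inserting a cut preserves the relational meaning but can destroy the procedural one (non-termination, lost solutions). I would therefore close the proof by stressing that the lemma claims coexistence, not coincidence, and by noting the classical soundness direction --- every answer substitution computed procedurally is a logical consequence of the relational reading (Kowalski--van~Emden) --- which justifies using either viewpoint when convenient, a fact that will be exploited in Section~5 and Section~6 where heap predicates are specified relationally but executed procedurally.
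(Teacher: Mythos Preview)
Your proposal is correct and in fact considerably more substantive than the paper's own argument. The paper's proof of this lemma is essentially a citation: it observes that Horn rules and Prolog predicates are syntactically identified (via Definition~4.2) and then defers entirely to \cite{sterling94} and \cite{warren83} for the actual semantics, with a cross-reference to Definition~4.3. No construction of either semantics is carried out in the paper itself.

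You, by contrast, exhibit both semantics explicitly --- SLD-resolution with backtracking for the procedural reading, and the least Herbrand model (via the immediate-consequence operator) for the relational reading --- and you also supply the soundness direction (Kowalski--van~Emden) and a careful remark that the lemma asserts coexistence rather than coincidence, illustrated by the \texttt{wrong\_fact2} example. This is the standard textbook development and is exactly what the cited references contain; it makes the lemma self-contained rather than delegated. The paper's approach buys brevity; yours buys an actual proof.
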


\begin{proof}
Since any Horn rule (see def.\ref{def:PrologRule}) can be represented almost identically by its syntax in Prolog, Prolog predicates are considered synonymous for Horn rules.
Full semantics and explanations of Prolog predicates can be found in \cite{sterling94}, \cite{warren83} (cf. def.\ref{def:QueryToProlog}).
\end{proof}

For the sake of this work's objective, only (2) will be reviewed and slightly modified later.
According to lem.\ref{lem:DoubleSemanticsOfProlog}, a (Prolog) semantics (1) based \index{predicate!call} on calls denotes a \index{paradigm!declarative} classic procedure (as in Pascal or C) with some execution \index{stack} stack. 
So, a \index{stack!window} stack frame is pushed to the top (see fig.\ref{fig:ExampleStack2}) when entering a predicate and removed on \index{GC} leaving.
In analogy to \index{parameters by reference} \textit{reference parameters}, \index{term unification} unified terms are used amongst stratified \index{stratification} predicate layers and passed to the calling predicates on leave.
Any violation of consecutive stack frames leads to term invalidation and even other severe errors.
In contrast to imperative PLs, a \index{symbol} symbolic variable may not just be passed immediately but may also be used as a genuine \index{term unification} logical part in a unifiable term.
So, any term can be passed in any direction across the stack.
This passing style allows any \index{object instance} class object to be used with high flexibility and compactness.
It allows the transformation to keep processes extraordinary complex and flexible if needed, but compact at the same time, which is remarkable.

The anonymous operator \index{\_} "\_", the built-in predicate \index{\texttt{concat}} "\texttt{concat}" as well as \index{predicate!higher-order} higher-order predicates in general, support practical \index{expressibility} expressibility on a high level, so recursion does not necessarily have to be defined through the predicates but allow flexible and shorter notations (as can be seen later and in the reference implementation of the prototype).
Expressibility increases whenever expressions get significantly simpler based on \index{functor} functors and shorter data models used.
The latter includes \index{linear list} linear lists.
The need to specify all list elements is dropped.
In contrast to \index{pattern matching} pattern matching, the remainder is not dropped but just unused, allowing simpler operations to be performed.
Let us remind \index{verification} verification of \index{heap} heaps is performed \index{stepwise verification} stepwise after each \index{program statement} program statement, so usually, only a small portion alters, often only the part referred.

\begin{figure}[h]
\begin{tabular}{c}
\begin{minipage}{5cm}
 \begin{center}
 \begin{tabular}{l}
 \begin{minipage}{5cm}
 \begin{center}
\xymatrix{
call \ar[r] & \ar[ld] \mbox{$\blacksquare$} \ar[r] & exit &\\
fail        &  & redo \ar[lu]
}
 \end{center}
 \end{minipage}
 \end{tabular}
 \end{center}
 \caption{Calling subgoals}
 \label{fig:BoxModelPredicateCall}
\end{minipage}\\\\\\
\begin{minipage}{8.3cm}
\begin{center}
\scalebox{0.7}{
\begin{tabular}{c}
\xymatrix @W=3pc @H=1pc @R=0pc @*[F-]{
A \save+<-4pc,1pc>*\hbox{\it Res4}
\ar[]
\restore\\ \cdots \\
{\bullet \save*{} \save+<-4pc,1pc>*\hbox{\it Res3}}
\ar[]
\restore
\ar`r[uu]+/r4pc/`[uu][uu]
\restore\\ \cdots \\
B \save+<-4pc,1pc>*\hbox{\it Res2}
\ar[]
\restore\\ \cdots \\
{\bullet \save*{}}
\ar`r[uu]+/r4pc/`[uu][uu]
\restore\\
{\bullet \save*{} \save+<-4pc,1pc>*\hbox{\it Res}}
\ar[]
\restore
\ar`r[ddd]+/r4pc/`[ddd][ddd]
\restore\\
\\\\ C\\
}
\end{tabular}}
\end{center}
    \caption{Example \index{stack} of stack windows and related subgoal references \index{predicate!call} with ingoing and outgoing \index{subgoal} terms}
    \label{fig:ExampleStack2}
\end{minipage}
\end{tabular}
\end{figure}

Prolog treats \index{predicate!call} predicate calls, as illustrated in fig.\ref{fig:BoxModelPredicateCall}, namely a black box with four traceable options.
A predicate call, so a subgoal, is generic and has these four options:
(a) $Call$ implies a \index{stack!window} stack window (frame) is created, which contains all passed terms.
(b) $Exit$ returns on success only when all subgoals succeed and the \index{predicate!head} predicate head fully unifies.
(c) $Fail$, when at least one \index{SAT-problem} subgoal fails or
(d) $redo$, a retry when (c) happens to one \index{subgoal} subgoal, then an alternative to the current predicate is searched, if it exists.
The danger with inaccurately formulated predicates is inefficiency due to a too broad range of alternatives.
Another extreme is when a \index{cut} cut is too restrictive, s.t. no (valid) result remains.
Both cases need to be avoided.
Predicates \texttt{call} and \texttt{fail} from fig.\ref{fig:BoxModelPredicateCall} indeed exist in \index{GNU} GNU \index{Prolog} Prolog \cite{diaz12}, although the fifth case, \index{exception} "\textit{exceptions}", is removed for the reasons discussed in sec.\ref{chapter:intro}.

Semantics (2) of predicate interpretations implies that any predicate is a generator of non-determinism.
A common \index{predicate} predicate (as discussed in sec.\ref{chapter:intro} and sec.\ref{chapter:APs}) may have a) different definitions for different combinations of \index{parameter!incoming} incoming and \index{parameter!outgoing} outgoing vectors, but also b) different derivable results for the same \index{subgoal} subgoal.

For example, if there was a \index{predicate} predicate "\texttt{mortal}" from fig.\ref{fig:PrologExample1} (a) and a few more \index{fact} facts are added, according to b) "\texttt{?-mortal(X)}" \index{derivable result} implies \texttt{X=socrates} ";" \texttt{X=plato}.
If for \index{Ackermann function} Ackermann's function "\texttt{?-acker(0,X,15)}" or "\texttt{?-acker(1,1,Res)}" was called, then depending on the concrete implementation of the WAM, the predicate is either defined (although it may take a very long time, or is decidable in theory, but not in practice for reasons of bound resources) or (partially) undefined as can be seen in fig.\ref{fig:PrologExample1} (b).
In the example for the given \index{parameter!incoming} outgoing term "\texttt{Res}" the predicate does not determine the incoming vector, even if the mathematical-logical definition certainly exists according to the given schema, which is in the rules factually without a change (see obs.\ref{obs:ProofAsSearching}, obs.\ref{obs:ModelOfComputationVerification}).
The discussion regarding this limitation continues later in this section, as the invertibility and representation of a \index{relation} relational model do.

\begin{observation}[Equality of Proof Elements]
\label{obs:EqualityOfProofElements}
The search for proof is a search for a subgoal solution.
Both search structures, the theorem-proof about a heap and the subgoal query shape a directed search tree.
\end{observation}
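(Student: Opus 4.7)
The plan is to establish the claim in two stages: first argue that each proof step in the heap calculus corresponds to a subgoal resolution step in Prolog, and second argue that the resulting search structure is a finitely branching directed tree in both cases.

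First I would fix the correspondence between the two worlds. By thes.\ref{thes:ProvingEqualsParsing}, a heap proof is carried out as a syntax analysis over the heap term representation from def.\ref{def:HeapTermDefinition}/def.\ref{def:HeapTermExtendedDefinition}. Each elementary step of such a proof consists of (i) selecting a heap term pattern to match against a rule head in the ambient rule set $\Gamma$, and (ii) discharging the remaining premises as new obligations. By lem.\ref{lem:DoubleSemanticsOfProlog}, a Prolog subgoal call selects a Horn clause whose head unifies with the current goal and replaces the goal by the body. I would therefore define a map $\Phi$ sending each heap-proof step $\{P\}C\{Q\} \vdash \{P'\}C\{Q'\}$ to the subgoal call that unifies the encoded triple with the corresponding predicate head, and I would check that $\Phi$ respects the rule-application order induced by obs.\ref{obs:ProofAsSearching}.

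Second, I would argue the tree property. A heap proof in a Hoare-style calculus is by def.\ref{def:FormalProof} a finite sequence of rule applications whose antecedents may branch; there is a unique root (the theorem) and each interior node has its children determined by the premises of the chosen rule. Hence the proof structure is a finitely branching rooted directed tree. On the Prolog side, the box-model from fig.\ref{fig:BoxModelPredicateCall} together with the LR evaluation order yields exactly the same shape: the root is the initial query of def.\ref{def:QueryToProlog}, children are the subgoals of the clause body selected at each \emph{call}, and alternatives induced by several matching heads are handled by \emph{redo}, which merely reorders the exploration of siblings but does not create cycles. Soundness of the encoding of heap terms (property 1 in sec.\ref{sect:ExprPredicates}) ensures that no two distinct proof steps collapse to one subgoal step, so $\Phi$ is injective on steps and lifts to a tree homomorphism.

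The main obstacle I expect is the treatment of alternatives and cuts. A single heap-rule application is deterministic once the rule is chosen, whereas a Prolog subgoal can spawn several branches due to multiple matching clause heads and the \emph{redo} transition of def.\ref{def:CuttingSolutions}. I would handle this by quotienting the search tree modulo \emph{green} cuts (which preserve solutions) and showing that the remaining branching reflects genuine non-determinism already present in $\Gamma$; \emph{red} cuts would be excluded by a side condition on the rule encoding. A secondary difficulty is to ensure that on non-terminating queries (possible by the Halting argument recalled in sec.\ref{sect:LogicalReasoningAutomation}) the correspondence still yields a well-defined, possibly infinite, directed tree rather than a graph with back-edges; this follows because each subgoal call opens a fresh stack window (obs.\ref{obs:OrganizedMemoryFreshContext}), so recursive invocations produce distinct nodes. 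Combining the two stages gives the observation.
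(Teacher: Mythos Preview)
Your proposal is substantially more elaborate than anything the paper does here. In the paper this statement is an \emph{observation}, not a theorem, and it receives no formal proof at all. What follows the observation in the text is a short informal paragraph that simply lines up proof elements with Prolog constructs: theorems and lemmas are assertions expressed as terms and formulated as subgoals; recursively-enumerable definitions and lemmas are expressed by abstract predicates; conclusions are valid subgoal sequences with unified symbols. That is the entire justification the paper offers. The tree-shape claim is then picked up again informally in obs.~\ref{obs:DeductionWithBacktracking} and the surrounding prose, where the box model (fig.~\ref{fig:BoxModelPredicateCall}) is invoked to say that backtracking spans a proof tree.

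Your two-stage plan (a step-level correspondence $\Phi$, then a tree-homomorphism argument handling cuts and non-termination) is a reasonable way to make the observation rigorous, and nothing in it is obviously wrong. But two points are worth flagging. First, invoking thes.~\ref{thes:ProvingEqualsParsing} is delicate: in the paper that thesis is itself justified by forward reference to later sections, so using it here risks circularity if you intend your argument to be self-contained. Second, your injectivity claim for $\Phi$ via ``property~1 in sec.~\ref{sect:ExprPredicates}'' is a stretch---that property concerns soundness of the heap-graph/formula correspondence, not uniqueness of proof steps, so it does not directly give you that distinct rule applications map to distinct subgoal steps. For the level of rigor the paper operates at, neither issue matters; for a genuine proof you would need to close both.
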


Proof elements are theorems, lemmas, (recursively enumerable) definitions, conclusions --- those need to be expressed in Prolog.
Theorems and lemmas are assertions, which are expressed by terms, and formulated as subgoals.
APs can express lemmas and recursively-enumerable definitions (see sec.\ref{chapter:APs}).
Conclusions are valid subgoal sequences with concrete symbols being unified, e.g. proven theorems.
\index{Prolog} Prolog is no general-purpose PL, apriori just by definition (cf. with \cite{warren83}, \cite{warren84}), even if it is nicely applicable in specific logical domains.
Not every kind of proof may unconditionally be solved with Prolog because of the fundamental limitations (see sec.\ref{chapter:intro}).
The critical problem is reducing a given problem, which might be better solved by narrowing down languages, as will be seen soon.
A subset of \index{language!logical} Prolog is used to solve complex \index{verification} verification problems based on \index{logical reasoning} logical reasoning based on WAM architectures.
Let us characterise subgoals and argument terms:

\begin{itemize}
 \item \index{predicate!built-in} 
 Built-in predicates, for instance, \index{\texttt{concat}} "\texttt{concat}", may be used anywhere in programs.
These (universal) predicates may be added to arbitrary \index{formal theory} formal theories written in \index{Prolog} Prolog, as successfully demonstrated by \cite{haberland08-2}, \cite{haberland07-2}.
 Reserved predicates and functors may be defined as \textit{multi-paradigmal}.
 \item Singleton and  \index{overloaded definition} overloaded definitions.
 Predicate definitions may appear as singleton or in a row but with an identical \index{arity} arity.
 This mechanism allows an introduction of new \index{Prolog theory} (Prolog) theories, including \index{multi-paradigmal} \textit{multi-paradigmal}.
 The effect might be \index{extensibility} extensibility, \index{variability} variability, flexibility and \index{polymorphism} polymorphism (the last two are on a different abstraction level compared to the first two criteria).
 Apart from that, overloaded predicates, as "\texttt{concat}", enable a flexible ordering of incoming and outgoing term substitutions.
 \item Operations over \index{heap} heaps are terms.
 Together with \index{heap instance} (class) objects, the relative position of a heap may be expressed by terms.
 Virtually, a heap graph can fully be represented by a heap term.
 Terms are arguments to subgoals, so are lemmas and theorem (sub-)proofs over heaps.
\end{itemize}

The universal call mechanism described in fig.\ref{fig:BoxModelPredicateCall} implies \index{backtracking} \textit{backtracking}.
Backtracking on predicates manifests a \index{recursion} \index{recursion!schema} (mutual) recursion scheme, and by doing so, exactly describes a \index{logical reasoning} proof tree (see CF imposed by Prolog rules in sec.\ref{chapter:APs}).
The limitations mentioned earlier must not be forgotten.
When a (sub-)search succeeds, then that (sub-)branch is considered proven.
When a branch \index{proof!refutal} is rejected, it is either not proven or just not provable in principle or may lead to an apparent contradiction.
Both variants, however, are sufficient to establish a refusal, as the universality of the formula must be shown.
If interpreting predicates \index{paradigm!declarative} declaratively, then predicates may easily be interpreted as procedures.

Consecutively, logical programming in Prolog may, in reality, be approximated to proving.
Even if a program's properties are specified and verified, the incoming programming is neither debugged nor executed.
As will be shown soon, Prolog is excellently qualified to be used as a \index{dynamic memory} heap verifier.
\index{expressibility} Expressibility is increased in this context by the heap IR, rule abstraction, inductively defined predicates, genuine logical \index{symbol} symbols and (quantified) formulae in terms of Horn-rules.
For example, in contrast to the lifted formulae in \index{predicate!first-order} first-order predicate logic, a compromise is sought, s.t. the heap graph construction is performed stepwise.
Here, the heap is described \index{bottom-up} bottom-up based on separate \index{heap!base} simplices, namely simple heaps.
A complete representation of logical expressions in formulae allows maximum flexibility towards heap expressibility and flexible strategies to be added and modified.

\begin{observation}[Deduction with Backtracking]
\label{obs:DeductionWithBacktracking}
The \index{backtracking} search with backtracking spans a formal proof tree in \index{Prolog} Prolog.
\end{observation}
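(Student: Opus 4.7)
The plan is to establish the observation by exhibiting an explicit correspondence between the operational trace of Prolog's SLD-resolution with backtracking and the formal proof trees described earlier (cf. def.\ref{def:FormalProof}, obs.\ref{obs:ProofAsSearching}, obs.\ref{obs:EqualityOfProofElements}). First I would fix the notion of ``formal proof tree'' in this setting: a finite rooted tree whose root is the initial query, whose internal nodes are labelled by Horn-rule applications (def.\ref{def:PrologRule}), whose edges encode the unifications $\sigma_i$ carried out on the corresponding subgoal call, and whose leaves are either facts (rules with empty body) or \texttt{fail}-labelled dead ends. Under this reading, fig.\ref{fig:PrologExample2} together with fig.\ref{fig:PrologExample3} is already the prototypical instance of the claim: the annotated derivation tree it exhibits is literally the object whose existence the observation asserts.

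Next I would argue the correspondence by structural induction on the depth of the SLD-derivation, using the box model from fig.\ref{fig:BoxModelPredicateCall} as the inductive step. The base case is a subgoal that unifies with a fact; this becomes a leaf labelled by the applicable axiom of $\Gamma$, matching the definition of a formal proof exactly (def.\ref{def:FormalProof}). For the inductive step, whenever a \textit{call} enters a rule $p \leftarrow q_0,\ldots,q_n$ with unifier $\sigma$, we create an internal node labelled $(p,\sigma)$ whose children are, by induction hypothesis, the proof (sub-)trees obtained for each $q_j$. An \textit{exit} event closes the node as proven; a \textit{fail} event prunes its branch; a \textit{redo} triggered by backtracking corresponds to adding a sibling branch generated by the next matching alternative, which is the standard way non-determinism is folded into a single search tree (cf.\ def.\ref{def:QueryToProlog}). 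Thus the aggregate of call/exit/redo/fail transitions traced by the WAM is precisely the tree described.

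The main obstacle, I expect, is reconciling two features of real Prolog execution with the clean notion of a proof tree: (i) \textit{cuts}, which prune subtrees declared inaccessible (def.\ref{def:CuttingSolutions}); and (ii) non-terminating branches arising from the left-recursion / prioritisation issue illustrated by fig.\ref{CodeExampleFactorial2}. For (i) it suffices to note, as already observed in sec.\ref{sect:PrologAsReasoningSystem}, that cut is syntactic sugar and can be eliminated by a sound program transformation; the proof tree of the cut-free equivalent is then the tree we want, and any ``green'' cut merely suppresses redundant sibling branches without altering the established sub-proof. For (ii) I would restrict the statement to terminating queries, or equivalently to finite initial segments of the search; on these the induction goes through. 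Finally, I would remark that leaves labelled \texttt{fail} play the role of refutations in the tree (as in fig.\ref{fig:ProofTreeEx1}), so that both successful and unsuccessful proofs are captured uniformly, which is exactly what obs.\ref{obs:ProofAsSearching} and thes.\ref{thes:ProvingEqualsParsing} require for the later reduction of heap verification to syntactic analysis.
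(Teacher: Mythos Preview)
Your proposal is correct and considerably more rigorous than what the paper itself provides. In the paper this statement is an \emph{observation}, not a theorem, and accordingly it receives no formal proof: the text that follows it merely invokes obs.~\ref{obs:EqualityOfProofElements} to claim the proof structure is tree-shaped, adds informal remarks about open versus closed cycles (the latter signalling non-termination/regression), and notes that Prolog reasons purely by Modus Ponens over facts and rules. There is no induction, no explicit use of the box model, and no systematic treatment of cuts or \texttt{fail}-leaves.

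Your approach---structural induction on SLD-derivation depth, with the four box-model events mapped to node creation, closure, pruning, and sibling-branch insertion---is a genuine proof where the paper offers only a justificatory gloss. The handling of cuts via the sugar-elimination remark and the restriction to terminating fragments for (ii) are both sound and match the paper's informal caveats about closed cycles. What the paper's lighter treatment buys is brevity appropriate to an ``observation''; what yours buys is an argument that would actually survive scrutiny as a lemma. Nothing in your proposal conflicts with the paper's intent, and the references you cite (fig.~\ref{fig:PrologExample2}, fig.~\ref{fig:BoxModelPredicateCall}, def.~\ref{def:CuttingSolutions}) are exactly the supporting material the paper has in mind.
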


From obs.\ref{obs:EqualityOfProofElements} follows, the (proof) structure spans \index{derivation tree} a tree in Prolog compatible with general heap searches (cf. sec.\ref{chapter:intro}).
If some open cycle occurs, this might be related to sub-proofs or symbols reapplied.
However, closed cycles shall not occur.
Otherwise, the proof does not terminate, and this would be an observed regression.
Regression is an indicator an applied rule did not succeed with a new calculation state, so there is no progress, so all consecutive regression states shall ideally be rejected.
Invalid states, in general, may only be implied from an unsound rule set.
If proof is found, then for a subgoal, all related results shall be bound accordingly (unbound \index{symbol} (symbolic) terms shall not further be considered since they result in questionable \index{predicate} predicate semantics and shall not be considered crucial for proof).

Prolog's search solely bases on rule \index{deduction} deduction (cf. sec.\ref{chapter:intro}).
Only what is derivable by Modus Ponens upon \index{fact} facts and rules is valid.
Other modes, as Modus Tollens, are invalid since not considered in ISO-Prolog.
In practice this means, that \index{falsification} predicate falsification may only be decidable in general if considered sets are finite (e.g. by introducing common symbols for class object definitions instead) and are defined.
Thus, self-contradictory cases are excluded.
Thus, proofs may become bloated, and some proofs cannot be taken out in classical deduction schemes (see the example of \index{Peirce's theorem} Peirce's third excluded in sec.\ref{chapter:intro}).
Predicates may be defined finitely, even if the incoming set is infinite.
Then \index{cut} cut-as-fail may represent a predicate cut or any other explicit definition upon it.

The motivation behind search with \index{backtracking} backtracking is that the search \index{continuation} continues with the first failing predicate relative to the parent's axis.
It works by guarding a minimal nominal value and implies permanent stack context switching.

In contrast to the method \index{return on fail} "\textit{return on fail}", the method "\textit{return on directed focusing}" may be considered generalised, s.t. in the most common cause, the focus is set to some extreme integer value.
That extreme value directs the choice of the next subgoal.
It leads to the insight that this kind of problem may be fully reduced by the strategy "\textit{branch and bound}" \cite{mitchell96}.
The generalisation of the extreme value is an (extreme) term, namely some bound term, s.t. it fits with the heads of the left-hand sides of rules by pattern-matching towards the most general unification.
Regardless of the bound looks, the stack-switching naturally involves backtracking already available to the Hoare calculus verification scheme (cf. fig.\ref{fig:PrologExample2}).

\begin{observation}[Stack-based Predicate Calls]
\label{obs:StackBasedCalls}
Prolog's underpinning \index{WAM} WAM is characterised on the top-level neatly by a declarative semantics, the Prolog-dialect itself.
For example, stacked terms may be referenced and partially unified among several stack windows -- in order to do that in an \index{language!imperative} imperative PL, this is much more complicated, especially since stack requires occasional state flips, as is the case with \index{backtracking} backtracking.
Prolog's syntax and semantics are relatively small compared to languages like C(++) to perform just what was stated.
However, backtracking and term processing are core language features that cannot merely be mimicked by C(++).
\end{observation}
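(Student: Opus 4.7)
The plan is to establish the observation in two complementary directions: first, a constructive correspondence between WAM primitives and Prolog's declarative rule form, and second, an expressibility comparison grounded in the compactness discussion already developed in sec.\ref{sect:PrologAsReasoningSystem}. The goal is not theoretical Turing-expressibility (which would be trivial, since C(++) can simulate any WAM), but the syntactic and semantic economy with which stack-window term references, partial unification, and backtracking are expressed.

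First I would fix notation for the WAM state: the local (environment) stack holding frames analogous to those in fig.\ref{fig:ExampleStack2}, the global heap for compound term cells, the trail recording variable bindings for undo on failure, and the choice-point stack capturing alternative clauses as per fig.\ref{fig:BoxModelPredicateCall}. Then I would give an operational rule for each of the four transitions \emph{call}, \emph{exit}, \emph{fail}, \emph{redo}, and exhibit that each rule can be read off directly from the Horn-clause syntax of def.\ref{def:PrologRule} together with def.\ref{def:QueryToProlog}. This yields the first half of the observation: the top-level declarative description of the machine coincides with the dialect it implements, which is exactly the sense used in obs.\ref{obs:EqualityOfProofElements} and obs.\ref{obs:DeductionWithBacktracking}.

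Second, I would argue that stacked terms referenced across several windows are handled uniformly by the trail mechanism: a symbolic variable bound in one frame remains visible, through its heap cell, to unifications performed in descendant frames, and is automatically reset on \emph{redo}. I would demonstrate this on the Ackermann trace of fig.\ref{fig:PrologExample2} using the substitutions of fig.\ref{fig:PrologExample3}, where the bindings $\sigma_1,\dots,\sigma_{10}$ cross stack windows without any manual copying, as contrasted with the \emph{call-by-value}/\emph{call-by-reference} dichotomy of imperative ABIs discussed in sec.\ref{chapter:intro}. The third claim -- about state flips on backtracking -- would follow by noting that in an imperative PL one must explicitly materialise either continuations (cf. Jo\"el \cite{joel70}) or an explicit choice-point data structure, because the ABI forbids rewinding a stack frame and then re-entering it with a partially altered binding environment.

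The main obstacle, and the part requiring the most care, is the final clause that C(++) cannot \emph{merely mimic} backtracking and term processing. Since Turing-completeness forbids a purely computational separation, I would make this precise as a statement about \emph{compactness metrics} in the sense of \cite{haberland08-1,haberland08-2}: for a family of heap-manipulating specifications, the number of syntactic constructs and the depth of the derivation tree required in C(++) grow strictly faster than in the Prolog-dialect, once unification across stack windows and choice-point management are accounted for. The argument would compare, clause by clause, the WAM transitions with their unavoidable imperative counterparts (explicit trail arrays, manual undo, explicit continuation-passing for \emph{redo}), and conclude that no finite set of C(++) macros can reproduce the uniform, declarative reading that Prolog offers for the very constructs used throughout sec.\ref{chapter:expression} and sec.\ref{chapter:stricter}. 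This reduces the observation to a quantitative statement comparable with those used elsewhere in this chapter, and simultaneously justifies the choice of Prolog as the implementation vehicle for the verifier in sec.\ref{sect:Implementation}.
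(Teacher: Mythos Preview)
The paper provides no proof for this statement. It is labelled an \emph{observation}, and like every other observation in the dissertation (obs.\ref{obs:ProofAsSearching}, obs.\ref{obs:EqualityOfProofElements}, obs.\ref{obs:DeductionWithBacktracking}, obs.\ref{obs:ComparisonDeclarativeParadigms}, etc.) it is asserted as an informal remark woven into the surrounding discussion, not followed by a \texttt{proof} environment. The text immediately continues with how Hoare triples are encoded as arity-3 predicates; there is no argument, sketch, or justification attached to obs.\ref{obs:StackBasedCalls} itself.

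Your proposal is therefore not wrong so much as misplaced in scope. You have built a two-part programme --- a constructive correspondence between WAM transitions and Horn-clause syntax, followed by a compactness-metric argument in the style of \cite{haberland08-1,haberland08-2} --- to formally substantiate what the author treats as a motivating aside. The individual ingredients you invoke (fig.\ref{fig:BoxModelPredicateCall}, fig.\ref{fig:ExampleStack2}, the Ackermann trace, the H\r{a}lstead metrics) are all present in the chapter and are indeed what the author leans on informally across the section, so your reconstruction is faithful to the spirit of the text. But none of it is assembled into a proof by the paper, and the final clause about C(++) not ``merely mimicking'' backtracking is, as you yourself anticipate, not a computability claim but a rhetorical one about language economy; the paper never makes it precise. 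If you are asked to match the paper's own treatment, the correct answer is that there is nothing to match: the observation stands unproven.
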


\index{Hoare triple} Hoare triples in \index{Prolog} Prolog may be represented as a predicate with \index{arity} arity $3$, where the first two components are terms, and the third component is encoded as the last \index{term!incoming} outgoing predicate argument.
Of course, in general, any Prolog predicate may be implemented \index{invertible function} invertible, if whatever the concrete semantics might be and if desired.
However, we are not interested in that question yet (see sec.\ref{sect:LanguageCompatibility} and following).
The reasons for dependency will be emphasised in the following.\\

\textbf{Discussion about declarative approaches to the stack}\index{stack}

\begin{itemize}
 \item \textbf{Performance.} \index{performance} It is mainly about improving re-active systems' runtime behaviour (systems under another system's control).
 Although Warren \cite{warren83} emphasises performance is not the only factor, a slowly OS may be the effect of a slow algorithm (this is due to \underline{rule definitions}), e.g. \index{copying memory} frequently copying used memory regions.
 There are two reasons: First, the frequent use of \index{(sub-)expression} (sub-)expressions indicates a high \index{data dependency} data coupling, which may effectively be resolved by an algorithm redesign and thus rewriting the \underline{rule set}.
 Second,  \index{performance} performance improvement may come from outside (e.g. optimisations implemented \index{multi-paradigmal} multi-paradigmal) regardless of overhead costs due to an additional redirection because, in practice, it may be lower by a particular memory-managed technique like e.g. memory-mapped I/O \cite{love10}.
 Thus, the second problem is resolvable due to \index{variability} \underline{variability}.
 So, \index{predicate} a predicate may be replaced by an alternative behaviour but with the exact \index{specification} specification.
 Lingual \index{language extension} extensions and variability are both of permanent interest in this work, especially in sec.\ref{sect:LanguageCompatibility}.
 \item \textbf{Evaluation ordering.} \index{evaluation ordering} \index{infinite data structure} 
 Infinite data structures, like \index{\texttt{take}} "\texttt{take}" (see sec.\ref{chapter:intro}), shall be exempted from \index{operational semantics} operational semantics according to \cite{warren84} for core language features.
However, this does not necessarily exempt other semantics and tests as discussed in the introductory since this will be the essential application field.
 Since operational semantics has to fully accommodate any passed (possibly  \index{linear list} infinite) argument to the \index{stack} stack, hence the Prolog ABI would not terminate, thus would be incomplete.
 A partial push of \index{term} terms to \index{stack, pushing into} the \index{stack} stack would be thinkable.
 However, this would require severe modifications and a total redefinition of how \index{WAM} WAM deals with ABI.
 Since the disadvantages of massive expected changes would far outweigh any benefit gained,  this is not intended yet.
A slight modification of \index{WAM} WAM's original \index{operational semantics} operational semantics may seem possible, but changes would be too massive.
 It must be stated, term unification in Prolog implies naturally a difference check for the most common \index{term} terms that would hold even for \index{heap} heap terms -- and this expected (for further details, see sec.\ref{chapter:APs}).
 Only lazy-expression (term) evaluation outside-in (cf. sec.\ref{chapter:intro}) would w.l.o.g. allow the full support of \index{infinite data structure} infinite data structures natively.
 So the only critical phase remaining would be the invertibility of predicates.
 Thus, the only modifiable factor is \index{evaluation ordering} the evaluation order.
 In case it is not yet modifiable at all in practice, the order may need to be modified manually by rewriting the rule set.
 \item \textbf{Semantic context.} \index{predicate} Predicates may be overloaded or defined \index{multi-paradigmal} multi-paradigmal.
 \item \textbf{IR.} \index{IR} Term-IR \index{term} may evolve, so more and more not properly designed IR may become bloated.
 It may be required to redesign an IR, including more significant deletion and modifications.

 Thus, \index{variability} \underline{variability} and \index{extensibility} \underline{extensibility} (cf. API) are vital properties, which \index{Prolog} Prolog allows, which can easily be checked, e.g. against \cite{diaz12}.
If, for the sake of local optimisation, \index{performance} fast memory access and \index{B-tree} B-trees \cite{cormen09}, \cite{atallah98} shall be introduced instead, then this can quickly be indexed of built-in and hidden predicates, bridged by multi-paradigmal stubs.
\end{itemize}

The question of whether \index{stack} stack may be substituted by heap shall better be negated because past attempts suffered from tremendous performance bottlenecks and restrictions (see certain F\#-implementations, for example).
The reason was mainly \index{recursion!schema} recursion schemas used to be through the stack, suddenly had to be implemented in unorganised heap mimicking stack windows.
Even when some passionate fans may dispute it, that experiment was no real break-through and could be considered failed so far.

Furthermore, recent CPUs come almost always with optimisations that depend immensely on stack allocation and \index{stack!window} disposal, and breaking this rigid convention would require a good excuse since it affects several significant processor designs \cite{intel11}, \cite{raman12}.
\index{paradigm!declarative}
The declarative paradigm for the calculation state of heaps must be logical in order to tackle verification.\\

\textbf{Discussion on proof search using relations}\index{relation}\\

Predicates conjunct \index{term} terms, which consist of \index{symbol} symbols, variables and other terms.
It is not hard to notice that there exist $B^A$ different mappings between $A$ incoming and $B$ outgoing terms -- for the sake of simplicity and w.l.o.g., no mixes are allowed here. 
Several mappings may be bound by only one predicate.
Whether for a given input and output vectors, a predicate is decidable remains a question obviously (again without mixes).
This problem, however, can be reduced to Post's \index{(3-)SAT-problem} (3-)SATisfiability problem of some logical formula.
Naturally, from a practical view, it is the verifier's responsibility to write sound and complete \index{mapping} predicates which are overloaded, implementing a variety of singleton-predicate mappings so.
The more mappings a predicate includes, the higher is its functionality and potential reuse.
Overloaded predicates escalate the question regarding invertibility because non-termination of a particular aspect automatically leads to non-termination of a whole predicate, and so, to non-termination of a proof.
It will be considered later.

When considering some predicate as a \index{Cantor set} Cantor set, this is a discrete Cartesian product of domain (contains all incoming terms) and co-domain (contains all outgoing terms).
As domain and co-domain may contain more than one component, they may still be countable due to bi-directional \textit{diagonalisation}, hence be processed by recursive-schemes.
As domain and co-domain join, a predicate is just another indirection level.
Diagonalisation will do the job here too.
Hence recursive schemas are computable, and they are not another theoretic hinder from complexity' view here.
From a practical perspective, this is not less important but an essential universal consideration.
If we succeed in showing universal properties for all relations, then we might imitate it successfully.
As \index{algebra!relational} \index{relational algebra} \textit{Codd proved, relational algebra} was sound towards universal properties we are interested in most.
All that is required is to establish expressibility equality.
Most modern relational and object-relational query languages are known for their expressibility bounds.

\begin{thesis}[Expressibility of Relations in Prolog]
\label{the:ExpressibilityRelationsInProlog}
The expressibility of Prolog rules can be considered as pure \index{relation} \index{term unification} relational.
\end{thesis}

\begin{proof}
 A full proof can be found in the author's contribution \cite{haberland08-1}.
 The proof is direct and straightforward. All properties from Codd's algebra, e.g. \index{relation!projection} projection, must be replaced by equivalent \index{Prolog} Prolog rules one by one, avoiding naming clashes for two given relations, $R$ and $S$ (see fig.\ref{RelationalAlgebraOverProlog}).

\begin{figure}[h]
\begin{center}
\begin{tabular}{ll}
\index{relation!union} Union:&\\
$T = R \cup S$: & $t(x_1, \cdots, x_m):-r(x_1, \cdots, x_m).$\\
                & $t(y_1, \cdots, y_n):-s(y_1, \cdots, y_n).$\\
\index{relation!minus} Set minus:&\\
$T = R/S$: & $t(x_1, \cdots, x_n):-r(x_1, \cdots, x_n),$\\
  & $not\_s(x_1, \cdots, x_n).$\\
\index{relation!product} Cartesian product:&\\
$T = R\times S$: & $t(x_1, \cdots, x_m, y_1, \cdots, y_n):-$\\
 & $r(x_1, \cdots, x_m), s(y_1, \cdots, y_n).$\\
\index{relation!projection} Projection: &\\
$T = \Pi_S(R)$:& $t(s_1, \cdots, s_n):-r(x_1, \cdots, x_m).$\\
               & $\forall s_i \in \{ x_1, \cdots, x_m \}$\\
\index{relation!selection} Selection: &\\
$T = \sigma_S(R)$: & $t(x_1, \cdots, x_n):-r(x_1, \cdots, x_n),$\\
    &  $s(x_1, \cdots, x_n).$\\
\index{relation!renaming} $\alpha$-conversion: &\\
$T = \rho_S(R)$: & $t(x_1, \cdots, x_n):-r(x_1, \cdots, x_n).$
\end{tabular}
\end{center}
 \caption{Relational model over Prolog predicates}
 \label{RelationalAlgebraOverProlog}
\end{figure}

In analogy, the transformation from Prolog back into Codd's algebra can be done.
Further steps, such as \index{canonisation} canonisation and \index{cut} cut (removal), may be required, though, but which can always be achieved -- as was mentioned earlier, e.g. in def.\ref{def:CuttingSolutions}.
\end{proof}

Here, \cite{laemmer02} shall be mentioned, where an optimal representation of incoming and outgoing data types is questioned.
His contribution is that data types in term notation may be more beneficial for most logical puzzles than any other paradigm, mostly not functional.
\cite{pitts96} researches fundamental properties \index{domain as relations} of domains as relations.
It proves a relation invariant \index{invariant} solely always exists.
Even so, the question regarding invariants remains exciting research and partially still open.
One good example deals with adaptations of \index{caching} caching \index{heap} heaps in \index{specification} specifications (see sec.\ref{chapter:stricter}).

After we just introduced relations and defined their basic properties, the following pleasant properties may be derived:

\begin{itemize}
 \item \textbf{Declarativeness.} When logical reasoning, arithmetic calculations may not bother too much, but objects are essential, so are \index{heap} heaps and their interconnections.
 This selection is not just a question of favour but is essential to the whole deductive system.
 Prolog is already a platform for term and knowledge representation as well as processing, and heaps are atoms or composed terms.
 Rule predicates are heaps.
 \index{alias} Aliases are \index{symbol} symbolic variables, which are used in different places.
All this is sufficient in order to define a theory on heaps.
 \item \textbf{Term-tree.} According to \index{theorem!Birkhoff} \textit{Birkhoff's theorem} (about term products) \cite{pierce91}, \cite{davis94} an abstract algebra implies that each term corresponds to a \index{term!tree} tree representation.
 The transformation, unfortunately, must not necessarily be unique if \index{term!normalisation} canonisation is skipped.
 The inverse transformation can be uniquely defined if canonised earlier.
These properties raise the need to either define \index{canonisation} tree canonisation or to trim.
So, e.g. \index{associativity} associativity helps to make tree canonisation implicit (see sec.\ref{chapter:stricter}).
 Initially, we do not want to restrict ourselves to trees but would like to have unconstrained heaps (cf. sec.\ref{chapter:expression}).
Even a heap graph consisting of a single vertex would be valid.
By a single vertex, we imply either a \index{heap!base} simple, a complex, e.g. \index{object instance} an object, vertex or a placeholder for a sub-graph.
 \item \textbf{Term Generation.}
 Research on this topic \cite{haberland08-1} shows Prolog is brilliantly for processing terms compared to other functional and imperative programming or transformation languages.
 That research includes a field study and so is based on numerous examples.
 Also, a quantitative analysis is included and based on \index{metrics} metrics, such as \index{expressibility} expressibility level and \index{intellectual level} intellectual language level and many more.
For various examples, nearly without exceptions, \index{Prolog} Prolog always supersedes functionals, occasionally even significantly.
 \item \textbf{Term validity.} 
The thematic research \cite{haberland08-2} shows that if both processes, generation and checking of terms, are narrowed down, then the critical limitation becomes the \index{language!validation} validation language's expressibility.
 W.l.o.g. the research considers a generalised regular language.
The validation process of semi-structured data may intuitively be understood to compare terms with \index{gaps} gaps filled on runtime.
 A term is a generalised IR and may widely be used, e.g. to \index{verification} verification.
 Operators describe not the program but the structure of the document to be generated.
 Statically, a loop not always may be bound.
 Hence, some loop containing vertex $a$ in a tree representing an \index{XML} XML-document only denotes $a^{*}$.
 %

 It implies the loop condition may be validated only against the most common assertion.

 Naturally, \index{automaton!finite} a finite automaton may not recognise $a^nb^n$, but that is dispensable.
 The major hinder is always the \index{expressibility} expressibility of the assertion language, namely its expressions.
Most curiously, among all approaches investigated \cite{haberland08-1}, relations come out to fit best in representing knowledge, e.g. for assertions, logical rules, \index{model transformation} transformations.
 For example, functionals must calculate each gap individually, whereas logical relations can be more elegantly and compact due to its term expressions and rule notation.
 In \cite{haberland08-1}, rule-based transformations mainly imply $\tau \rightarrow \sigma \rightarrow \tau$, where $\tau$ represents some term \index{IR} IR, and $\sigma$ denotes some environment with symbolic bindings.
 When interpreting this transformation of a \index{relation} relationship that denotes a \index{predicate} predicate, then the transformation may be extended as $\tau \rightarrow \sigma \rightarrow \tau \rightarrow \mathbb{B}$, where $\mathbb{B}$ is the boolean set.
 Now heap verification may be represented the same way as a family of transformations.
 Its main difference is the use of \index{abstract predicate} (abstracted) predicates and automated reasoning.
 On transforming needed terms, the heap models, however, will vary.
 In contrast to \index{regular expression} regular expressions and appropriate recognisers \index{recogniser} \index{partial derivatives} \cite{brzozowski64}, the \index{specification} specification schemas are mainly CF.
The prediction of the next element is a special operation in both methods.
Due to its context, both may significantly differ after all.
\end{itemize}

\begin{thesis}[Simplification using Term IR]
The use of (Prolog) terms eases the specification \index{heap verification} and verification of heaps.
 \label{thes:PrologMakesHeapSpecSimpler}
\end{thesis}

\begin{proof}
The solution to this thesis is presented in this and the following sections.
Simplification is mainly achieved by a direct heap term IR, which matches patterns straightforward in rules, and by a unified heap data structure to be used for both heap specification and verification (see sec.\ref{sect:LanguageCompatibility}, cor.\ref{cor:MinDiffGenCheckLanguages}).  
\end{proof}

Apart from \textit{terms}, other \index{IR} IRs may be used, for instance, \index{tetrads} tetrads, \index{polish-inverse notation} Polish-inverse notation, \index{triads} triads or others.
The advantage of terms here is the description of a heap state, mainly its simplicity and maximum \index{expressibility} expressibility.
An immediate term IR for \index{program statement} program statements may also be used since both statements and term are tree-like.
Hence, other IRs are not being considered (cf.\cite{opaleva05}, \cite{muchnick07}).
However, term-IR in \index{Prolog} Prolog has the advantage (sub-)terms do not require additional context definitions apriori, and its explanation of the link is straightforward and quite simple after all.
A heap term can "\textit{just be written}", where other IRs cannot, since they would require many additional conventions, constraints holding, be explicitly added.
Not only does this helps in lecturing students and rapid prototyping, but it also simplifies even the whole definition and transformation of terms (cf.\cite{baader98}, \cite{dodds08}).
Analogous implementation on a comparable practical level includes \index{LLVM-bitcode} \textit{LLVM-bitcode} \cite{llvm15}, GCC \index{GIMPLE} GIMPLE \cite{merrill03} and further annotated objects as IRs in terms of the ROSE \cite{rose17}, for instance.
In all these cases that use \index{tetrads} tetrads, the incoming program's temporary IR must first be transformed into an \index{AST} AST.

\index{syntax analysis} Syntax parsing in \index{LLVM} LLVM/CLANG is performed more flexible than in \index{GCC} \textit{GCC} due to a supplementary phase \index{CLANG} CLANG provides \cite{clang17}.
Even if the AST may theoretically be derived without \index{CLANG} CLANG, but in practice, that would fail due to the very \index{IR} bloated LLVM-IR as a human-readable format.
It should be remarked that the binary format is compact, but the manual restriction of \index{code transformation} IR transformations is heavily ineffective, and apart from that, is subject to permanent evolution.
Regardless of bloated namings and \index{duplicate} superficial-looking syntax definitions, variability and \index{extensibility} extensibility are raised in contrast to \index{GIMPLE} GIMPLE.

For the sake of closure and simplicity, and in contrast to \index{LLVM-bitcode} bitcode, the \index{Prolog} Prolog implementation excludes \index{unconditional branch} unconditional branches to arbitrary \index{program statement} program statements.
The implementation does not object to maximising program statements' coverage, even if expected future extensions may introduce new program statements.

Essential questions on the \index{extensibility} extensibility and \index{variability} variability of heap models include:
May a heap be easily modified, s.t. any stepwise heap modification may be mimicked? 
May new phases be added to the existing reasoning framework?

The chosen \index{IR} IR naturally corresponds to \index{Prolog} Prolog, for example, from \index{LLVM-bitcode} bitcode or \index{GIMPLE} GIMPLE.
Nevertheless, none of those is practically considered in this work.
\index{term} Prolog-terms represent \index{program statement} program statements (specification and rule set too), and as earlier in this section mentioned, may be represented as a tree.
Instruction \index{tetrades} tetrads may indeed be chosen (e.g. tetrads of some hypothetical \index{assembly} \index{abstract machine} abstract machine for the moment).
Regardless of the specific IR to be chosen, the \index{syntax analysis} \index{AST} AST is always the first IR directly derived from a given listing (see fig.\ref{fig:PhasesOfCodeGeneration}).
Choosing an IR different from AST or terms could provoke the drop of one or more phases depicted in fig.\ref{fig:PhasesOfCodeGeneration}.
However, this simple remark, was and often still is violated by several projects, including \index{GCC} GCC, \index{LLVM} LLVM, and by \index{jStar} jStar \cite{parkinson05-2}, where verification is performed on optimised \index{triads} triads just shortly before and during \index{code generation} code generation.
Many other projects implementing \index{static analysis} static analysis violate the remark, too, though for different reasons.
Although the remark seems simple, in consequence, more exact feature definitions are required, to which \index{alias} aliases must be added.
The following features seem essential in achieving the goal:
(1) having the possibility to extend and vary existing phases of \index{dynamic memory} heap analysis,
(2) to represent adequately, to avoid bloated semantic contexts as well as assets of external data containers.
The absence of possibility (2) is an indicator of a model with raising imprecision and complexity.
The simplicity of a model depends on a complete representation of all required data.
If needed, representation primarily affects \index{IR} term IRs of incoming programs and \index{semantic field} semantic fields, containing data such as depicted in fig.\ref{fig:PhasesOfCodeGeneration}.
Additional data is not mentioned in fig.\ref{fig:PhasesOfCodeGeneration}, even if it still may be needed in local phases.
Thus, they do not imply changes to the default\index{computability} computation model for \index{dynamic memory} heaps (see fig.\ref{fig:PipelineArchitecture}, see obs.\ref{obs:TypeCheckingPhases}).
Since this work's main objective is to perform verification, its critical IR is bound for static analyses.
Hence, terms ideally cover any hierarchical, tree-structured IRs --- also w.r.t. future extensibility, variability, and possible later implications with GIMPLE or bitcode.

\begin{figure}[h]
\begin{center}
\begin{tabular}{c}
\xymatrix{
  & \phi_1 \ar@{=>}[r] & \phi_1' \cdots \phi_1^{(n-1)} \ar@{=>}[r] & \phi_1^{(n)} \ar@{=>}[dl]\\
 \phi_0 \ar@{=>}[rr] && \phi_1 \ar@{=>}[lu] \ar@{=>}[rr] && \phi_2
}
\end{tabular}
\end{center}
 \caption{Pipeline \index{assembly line} architecture for the prover \index{static analysis} and syntax processors}
 \label{fig:PipelineArchitecture}
\end{figure}

The advantage of terms is their direct notation and direct access to the whole expression and sub-expressions.
If direct notation were absent, it would have to be composed individually, which would be a laborious task.
-- It is disadvantageous but essential what is mimicked by functionals.
Finally, the full description would have been bloated compared to direct term-IR, even if auxiliary functions and \index{semantic field} semantic fields would be introduced (e.g., \index{approximation} approximating limits).

\subsection{Language Compatibility}
\label{sect:LanguageCompatibility}

According to the introduction, \index{verification} verification is often considered over-engineered and too academic.
The reasons are diverse: too many (too rigid) conventions imposed on models and formal theories.
All descriptions involved are too bloated, cluttered and hard to understand before doing anything reasonable at all.
They are too many diverse formalisms to learn.
From thes.\ref{thes:PrologMakesHeapSpecSimpler} and the previous section follow generalisations regarding \index{language!specification} specification and \index{language!verification} verification languages for \index{dynamic memory} heaps (see sec.\ref{chapter:DynMemProblems}, see obs.\ref{obs:ComparisonDeclarativeParadigms}):

\begin{observation}[Comparison of Declarative Paradigms]
\label{obs:ComparisonDeclarativeParadigms}
The \index{paradigm!declarative} declarative paradigm, logical first of all,  fits heap verification better \index{heap} than \index{paradigm} the functional or imperative paradigms due to formula representation and \index{logical reasoning} logical reasoning.
\end{observation}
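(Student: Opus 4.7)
The plan is to justify this observation by a comparative analysis along the quality dimensions already established in this chapter — expressibility, compactness, variability, and extensibility of the heap IR — and then to show that logical programming uniquely satisfies all of them simultaneously. I would first fix the evaluation criteria explicitly, reusing the ones implicit in thes.\ref{thes:PrologMakesHeapSpecSimpler} and thes.\ref{the:ExpressibilityRelationsInProlog}: a direct tree-like IR for heap terms in the sense of def.\ref{def:HeapTermExtendedDefinition}, native support for symbolic variables with unification, a search mechanism that naturally generates a derivation tree (obs.\ref{obs:EqualityOfProofElements}, obs.\ref{obs:DeductionWithBacktracking}), a minimal distance between specification and checking language in the sense of thes.\ref{thes:ProvingEqualsParsing}, and the possibility of overloading predicates to accommodate the multi-paradigmal hooks discussed earlier.

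Next, I would treat the three candidate paradigms in turn. For the imperative paradigm, the argument is largely already made in sec.\ref{chapter:expression}: a pointer's content can change at essentially any program point (obs.\ref{observation:RemoteAlternation}), visibility scopes are not syntactically bounded (obs.\ref{obs:VariablesScope}), and heap terms would have to be reconstructed by explicit semantic-field bookkeeping, with the consequence of bloated semantic context and a loss of direct IR. For the functional paradigm, I would appeal to the prior comparative study in \cite{haberland08-1}, which shows that even for tasks as natural for functionals as term generation, relational/Prolog encodings dominate on the metrics that matter here; for heap checking specifically, functionals force each gap in a partially-defined heap to be computed explicitly, whereas unification resolves those gaps transparently as a built-in facility of the paradigm.

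For the positive direction — that logical programming fits best — I would assemble the building blocks already collected in this section. First, heap terms are Prolog terms verbatim (def.\ref{def:HeapTermDefinition}, def.\ref{def:HeapTermExtendedDefinition}), hence the IR is both immediate and, by Birkhoff, tree-structured. Second, by thes.\ref{the:ExpressibilityRelationsInProlog}, Prolog rules cover Codd's relational algebra, so every relational operation one wants for heap separation, renaming and projection is expressible without leaving the paradigm. Third, obs.\ref{obs:StackBasedCalls} together with fig.\ref{fig:BoxModelPredicateCall} yield backtracking for free, which is precisely the search-with-backtracking required for Hoare-style heap proofs. Fourth, thes.\ref{thes:ProvingEqualsParsing} identifies proof over heaps with syntactic analysis, and Prolog is the most natural carrier for both simultaneously, as already argued for semi-structured data in \cite{haberland08-1,haberland08-2}.

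The main obstacle I anticipate is not any single technical step but the qualitatively comparative nature of the claim: \emph{fits better} is not a statement one can reduce to a single inequality, and so the argument cannot be a theorem in the strict sense. I would therefore explicitly frame the justification at the level of a designer-oriented comparison, anchored to the concrete criteria above and to the prior quantitative field study in \cite{haberland08-1}, and I would flag that strict superiority is conditional on the chosen criteria — leaving open that for goals lying outside this work's scope (for instance proof-carrying compiled code or hard real-time verification) one of the competing paradigms could still be preferable.
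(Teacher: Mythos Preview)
Your proposal is sound and in fact more systematic than what the paper itself does. Note first that this is an \emph{observation}, not a theorem: the paper provides no proof environment for it, only a short paragraph of supporting discussion immediately after the statement. That discussion makes essentially three points: logical predicates describe calculation states without side effects or global state and with symbol-based visibility; rules in an axiomatic system like Prolog can be altered individually; and, citing \cite{haberland15-2}, functionals are less suitable for hierarchical elements such as heap terms, whereas mixed input/output terms (def.\ref{def:QueryToProlog}) make the Prolog representation compact and handy for proofs (obs.\ref{obs:EqualityOfProofElements}).

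Your plan covers all of this and adds structure the paper leaves implicit: you make the evaluation criteria explicit, you treat the imperative case separately via obs.\ref{observation:RemoteAlternation} and obs.\ref{obs:VariablesScope} (the paper does not bother to argue against imperative here at all), and you pull in thes.\ref{the:ExpressibilityRelationsInProlog} and thes.\ref{thes:ProvingEqualsParsing} for the positive direction, which the paper's paragraph does not reference. Your closing caveat --- that ``fits better'' is a designer-level comparative claim conditional on the chosen criteria --- is exactly right and is the honest framing the paper also adopts, just less explicitly. Nothing in your outline is wrong or missing; if anything, you are over-engineering the justification relative to the informal treatment the paper gives.
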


This observation may occur too insignificant on the first look.
It means that verification of \index{Hoare triple} Hoare triples is more comfortable in terms of a logical PL, the specification and verification of triples -- this may appear first evident.
However, only a few verifiers were built based on the logical paradigm, and even those often are too restricted or closed for variability or extensibility (see sec.\ref{chapter:intro}).

\index{logical predicate} Logical predicates describe \index{calculation state} calculation states.
Predicates first do not insist on \index{side-effect} side effects to express \index{calculation state} a state nor on any \index{variable!global} global state. 
\index{visibility scope} Visibility depends on predicate variable \index{variable!symbolic} symbols, not on external memory containers.
In an axiomatic system as Prolog is, rules may be altered individually.
By using (logical) programming, we allow solving problems related to verifying heaps.
From \cite{haberland15-2} follows functionals are less suitable for describing hierarchical elements --- as heap terms are (see def.\ref{def:HeapTermExtendedDefinition}, def.\ref{def:HeapAssertionDefinition}, cf. sec.\ref{sect:Implementation}).
Mixed input and output terms make the representation and query very compact and powerful simultaneously (cf. def.\ref{def:QueryToProlog}) and handy for proofs (cf. obs.\ref{obs:EqualityOfProofElements}).
Hence the Prolog-dialect is more appropriate than the functional paradigm for verification and specification over heap terms (cf. sec.\ref{chapter:APs}).

L\"{a}mmer \cite{laemmer02} suggests a proof framework focused on confluency smoothly tying up program components \cite{feijs02}.
Although it is not related to \index{dynamic memory} dynamic memory, it makes propositions of narrowing down further quasi-minimalistic \index{specification} specification and \index{logical reasoning} verification languages w.r.t. class objects (cf. sec.\ref{chapter:intro}).

\begin{observation}[Simplification of Inductive Proof Primitives by Generalisation]
\label{obs:SimplificationByGeneralisation}
If simple definitions and assertions may be formulated easily, proof searches are often generalised and straightforward.
\end{observation}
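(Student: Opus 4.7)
The plan is to argue this observation by reducing it to properties already established for Prolog-based reasoning in this chapter, rather than treating it as a new formal theorem. Since the statement is inherently heuristic (speaking of proofs being ``often generalised and straightforward''), a direct proof is out of reach; instead I would justify it via three linked claims that together make the implication plausible and, in the restricted setting of heap-term IR, verifiable.

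First I would make precise what is meant by a ``simple definition'' in the Prolog setting: a Horn rule whose head carries only symbolic variables (or shallow functor patterns) and whose body consists of subgoals whose incoming/outgoing vectors overlap maximally with the head. I would then appeal to obs.\ref{obs:EqualityOfProofElements} and obs.\ref{obs:DeductionWithBacktracking}: the search tree for a proof is isomorphic to the predicate-call tree, so any structural simplification of the defining rules shows up directly as a simplification of the derivation tree spanned by backtracking. In particular, rules in which the head is maximally generic delay case-splitting, so a single alternative is tried before the WAM needs to re-enter the ``redo'' edge of the box model (fig.\ref{fig:BoxModelPredicateCall}). That gives the precise sense in which the search becomes ``straightforward'': fewer failed branches, no cuts required to suppress them.

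Next I would address the ``generalised'' half of the claim using thes.\ref{the:ExpressibilityRelationsInProlog} and the term-IR simplification from thes.\ref{thes:PrologMakesHeapSpecSimpler}. Because a Prolog predicate represents a whole relation, and because term unification can simultaneously instantiate variables that appear on both the input and output sides of a subgoal, one definition with free symbolic variables covers an entire family of concrete proof obligations. Thus each simplification of the defining schema widens the set of heap shapes for which a single derivation succeeds, which is exactly the sense of ``generalised'' intended here. I would illustrate this by pointing back to the factorial/Ackermann examples (fig.\ref{CodeExampleFactorial}, fig.\ref{fig:PrologExample2}) where a base case with unbound variables subsumes many queries at once, and by invoking property 4 (locality) from sec.\ref{sect:ExprPredicates}, which guarantees that simple, locally defined heap predicates also yield locally modifiable derivations.

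The main obstacle, and the place where the argument will remain informal, is the absence of a metric that quantifies ``simplicity'' of a rule and ``straightforwardness'' of a search independently of each other, so the observation risks being read as tautological. To counter this I would briefly invoke the software-metric comparisons from \cite{haberland08-1} cited earlier in this section, which give an empirical basis for treating Prolog rule compactness as a genuine proxy for proof-tree compactness, and I would explicitly flag that the statement is offered as a heuristic grounding the design decisions of sec.\ref{chapter:stricter} and sec.\ref{chapter:APs} rather than as a formal theorem; its concrete instances (e.g., heap subtraction and abstract-predicate unfolding) will be verified case-by-case in those later sections.
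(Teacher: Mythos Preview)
Your proposal is reasonable for what is, after all, a heuristic observation; but it is worth noting that the paper itself does not attempt anything like a proof here. The paper follows the observation with a short informal discussion: easier recursively-enumerable definitions mean fewer corner cases and hence higher reuse; fewer paths in the CFG spanned by lemmas, theorems and inductive definitions mean a simpler proof search; and, as the concrete mechanism of ``generalisation'', replacing existentially quantified symbols by all-quantified (polymorphic) ones lifts the hosting formula and increases expressibility, with Prolog's term unification supplying the natural binding discipline. That is the entire justification offered.

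Your approach is genuinely different in that you try to anchor the claim in earlier results (obs.\ref{obs:EqualityOfProofElements}, obs.\ref{obs:DeductionWithBacktracking}, thes.\ref{the:ExpressibilityRelationsInProlog}, thes.\ref{thes:PrologMakesHeapSpecSimpler}) and in the WAM box model, making the link between rule shape and derivation-tree shape explicit. This buys you a more principled story than the paper gives. What you miss, however, is the paper's specific emphasis on quantifier lifting (existential to universal) as the paradigmatic instance of ``generalisation'' --- that is the operational content the paper has in mind, and it later motivates the use of unground symbolic heap terms. Your factorial/Ackermann illustrations and the appeal to locality are sensible but not what the paper relies on here. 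Since the statement is an observation rather than a theorem, neither your argument nor the paper's discussion is a proof in the strict sense; yours is more elaborate, the paper's is more focused on the single mechanism it will actually exploit downstream.
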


The easier some (recursively-enumerable) definition gets from the introductory examples, the less often corner cases exist, the less often exceptional cases need to be implemented and specified, the higher the reuse level.
It is common sense that the fewer paths a CFG spanned by lemmas, theorems, and inductive data structures have, the simpler it gets.
For example, replacing some existentially quantified \index{symbol} symbol in a formula with its polymorphic all-quantified symbol may look simple, but it may impose restrictions that indicate unreliable code, imprecise specification or both.
Restricting some symbol $a$, at least one predicate call and symbol binding is needed.
It implies all-quantified symbols may, by default, significantly increase expressibility.

Replacing atoms by symbols lifts the hosting formula.
Symbols are naturally in Prolog and may be bound by term \index{unification} unification, namely, the finite \index{normalisation} normalised resolution form \cite{diaz12}.
As discussed in the previous section, formal verification checks whether some given program (or the resulting structure of its calculation) matches a specified structure.
As also discussed previously, both processes are heavily influenced when being generalised.
Naturally, both processes are different from the beginning.
The most remarkable difference is due to \index{expressibility} expressibility.

\begin{corollary}[Minimisation of Differences between Generational and Checking Languages]
Simplified and lifted heap terms may be reused in expressions of the same language for describing
 \begin{enumerate}
   \item[(1)] \index{specification} specification,
   \item[(2)] \index{verification} verification and
   \item[(3)] the \index{language!input} incoming PL.
 \end{enumerate}
 \label{cor:MinDiffGenCheckLanguages}
\end{corollary}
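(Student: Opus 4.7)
The plan is to establish the three claims by exhibiting a single term syntax, namely the extended heap term $ET$ of def.\ref{def:HeapTermExtendedDefinition}, and then arguing each of the roles (1)--(3) fits into it. First I would fix the Prolog term language from def.\ref{def:PrologTerm} as the ambient formalism and observe that simplices $loc\mapsto val$, spatial conjunction, constant predicates, and abstract predicate calls $p(\alpha)$ are already Prolog terms built from functors, symbols, and symbolic variables. This is the artefact that has to be shared across the three roles.

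For (1), \emph{specification}, I would argue that since theo.\ref{theo:ReynoldsHeapProperties} and def.\ref{def:HeapTermExtendedDefinition} provide separation, quantification and logical conjuncts, any pre/postcondition of a Hoare triple over dynamic memory (cf. def.\ref{def:HoareTriple}) can be written as a term in $ET$, using free symbolic variables to capture universally quantified pointers. For (2), \emph{verification}, I would invoke obs.\ref{obs:EqualityOfProofElements} and thes.\ref{thes:ProvingEqualsParsing}: a proof is a tree of subgoals whose arguments are exactly the heap terms just used for specification, and whose rule heads/bodies (Horn rules as per def.\ref{def:PrologRule}) also range over the same term syntax. Hence the verifier's working state, its intermediate lemmas, and its rule conclusions live in the same algebra of heap terms. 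For (3), the \emph{incoming PL}, I would use obs.\ref{obs:GraphIRHeap} to show that the heap graph denoted by any C-like program state is in bijection with a heap term: every allocation, pointer write, and field access corresponds to insertion/replacement of a simplex $loc\mapsto val$; conventional pointers are modelled as locations; objects and their fields are modelled as structured $val$ components. Thus the runtime calculation state and the specification state are described with the same syntactic vocabulary.

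The minimisation claim then follows from property~4 (locality) of sec.\ref{sect:ExprPredicates}: a program statement alters only a small fragment of the heap term, and the specification and verification languages, being identical to the heap-term description, inherit this locality. By obs.\ref{obs:SimplificationByGeneralisation} the abstract predicates common to all three roles are the only points where the three usages could drift apart; but since APs are themselves heap terms parametrised by symbolic variables, they remain within $ET$. Consequently the three languages (1)--(3) collapse to one.

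The main obstacle I anticipate is (3): the incoming PL is imperative and has statements that are not heap terms (assignments, branches, loops). The resolution I would push is that the corollary speaks of \emph{heap terms}, i.e.\ the state/data aspect only, not the control aspect; thes.\ref{thes:PrologMakesHeapSpecSimpler} and obs.\ref{obs:ComparisonDeclarativeParadigms} already concede the control skeleton is represented separately as a CFG whose basic blocks are annotated with $ET$-terms (cf.\ fig.\ref{fig:CFGEx1}). Once this scoping of the claim is made explicit, (3) reduces to the bijection between heap graphs and heap terms established earlier, and the three roles genuinely share one syntax.
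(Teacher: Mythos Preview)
Your argument is reasonable and reaches the same conclusion, but it is structured differently from the paper's own proof and diverges on one substantive point.

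The paper proceeds by \emph{pairwise} reconciliation rather than by exhibiting a single carrier syntax and checking each role against it. It first collapses (1) and (3) by observing that the incoming imperative program is turned into a Prolog term IR (so assertions are subgoals and the program itself is a term), then collapses (1) and (2) by noting that formulae, inductive definitions, tactics and auxiliary predicates are all Prolog facts and rules, and finally remarks that (2) and (3) are rival generation/checking processes over the same heap-graph-focused description. A point the paper stresses and you omit is the \emph{asymmetry}: assertions may refer to the program's term, but not conversely; this directionality is what makes the congruency argument for (1)--(2) go through.

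Where you genuinely diverge is in your handling of (3). You scope the claim down to the heap-state aspect and park the control skeleton in a separate CFG with $ET$-annotated basic blocks. The paper does not make this concession: it takes the \emph{whole} incoming program, statements and all, as a Prolog term (the IR), so control is not factored out but absorbed into the same term universe. Your resolution is defensible and arguably cleaner, but it is not the paper's move; if you want to match the paper you should drop the scoping caveat and instead appeal to the program-as-term IR (cf.\ sec.\ref{sect:ArchitectureVerificationSystem}) rather than to a heap-graph/heap-term bijection alone.
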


\begin{proof}
Using one language for all three cases is extreme for the task of minimising the difference between languages.
This extreme position is to be used unless a plausible counter-example rejects it.
Whenever the difference between (1) and (3) is resolved, either by letting the \index{language!input} incoming PL be logical or letting the obtained \index{IR} IR be terms of an \index{input program} incoming (imperative) program, then \index{assertion} assertions are noted as \index{subgoal} subgoals (def.\ref{def:QueryToProlog}) and the incoming program as terms (def.\ref{def:HeapTermExtendedDefinition}).
Contribution \cite{haberland07-2} can be considered a broader field study on the proposed minimisation of a language and with validation being a regular-mighty typing issue (cf. sec.\ref{obs:TypeCheckingPhases}, sec.\ref{sect:HeapGraph}, fig.\ref{ExampleNFA1}).
Assertions about the program refer to the incoming program's term (see sec.\ref{chapter:APs}, sec.\ref{chapter:stricter}).
The other way round is not permitted.
This congruency is sufficient to resolve the difference between (1) and (2).
Formulae and other referenced inductive definitions may all be written as Prolog \index{fact} facts and  \index{Horn-rule} rules.
Those are sufficient as a specification in order to perform a verification.
\index{tactics} Tactics, specification rules, \index{auxiliary predicate} auxiliary predicates, and new user-defined predicates may be added to an empty or existing Prolog \index{formal theory} theory.

Tasks (2) and (3) rival: (2) sets how the \index{process} generation processes some \index{heap graph} heap graph, and specifically (3) attempts to verify based only on provided assertions.
The description must be heap graph focused, where the verification process has the "\textit{understanding}" (recognition) and analysis needed in that role.
In sec.\ref{chapter:APs}, the analysis bases on \index{syntax} syntactic definitions.
The precursors are the \index{term unification} unification of syntactic, semantic and pragmatic heap definitions and their checks.
\end{proof}

Lee's fair remark \cite{lee96} is that \index{higher-order logic} higher-order logics are of utmost importance to applicability.
As initially mentioned, \index{Prolog} Prolog supports it (cf. fig.\ref{fig:mapFunctionalExample}).
Rules may evolve, but not whilst \index{interpretation} runtime (see sec.\ref{chapter:intro}).
Hence, defined rules are accessible in both \index{process} processes (2) and (3).

\begin{observation}[Language Confluency upon Verification]
Given a PL $P$, specification language $S$, and verification language $V$.
$P$'s listings manipulate the dynamic memory.
The relationship, as illustrated in fig.\ref{MVCMetaPattern}, is inspired by Reenskaug's meta-pattern \textit{Model-View-Controller} (MVC) (see fig.\ref{MVCMetaPattern}).

\begin{figure}[h]
 \begin{center}
 \begin{tabular}{c}
\xymatrixrowsep{15pt}
\xymatrixcolsep{15pt}
 \xymatrix{
  Model \approx P & & View \approx S \ar[ll]\\
  & Controller \approx V \ar[ul] \ar[ur]
 }
 \end{tabular}
 \end{center}
 \caption{Roles in the Model-View-Controller meta-pattern applied to heap verification}
 \label{MVCMetaPattern}
\end{figure}
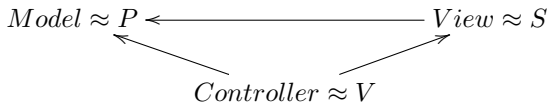
\end{observation}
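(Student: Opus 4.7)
The plan is to establish the correspondence by identifying, for each MVC role, the corresponding artifact in the verification triad and then checking that the interaction arrows of the pattern match the direction of information flow that has already been established earlier in the paper. Since MVC is a design meta-pattern rather than a purely formal object, the argument will not be a derivation inside a calculus but rather a role-assignment argument in the style used in cor.\ref{cor:MinDiffGenCheckLanguages}, combined with the properties of the languages $P$, $S$, $V$ collected in sec.\ref{sect:LogicalReasoningAsProof} and sec.\ref{sect:LanguageCompatibility}.

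First, I would justify the three role assignments separately. For $Model \approx P$, I would argue that $P$ is the only one of the three languages whose statements actually mutate the underlying heap state (see obs.\ref{obs:UnorganizedMemoryUniqueContext} and fig.\ref{fig:ProcessSectionLoader}); the Model in MVC is by definition the holder of mutable state, which fits $P$ uniquely. For $View \approx S$, I would use def.\ref{def:SpecificationLanguage} together with the fact that $S$ is declarative and presentation-oriented: it exhibits a calculation state to the reader without altering it, which is exactly the role of a View. For $Controller \approx V$, I would invoke obs.\ref{obs:ProofAsSearching} and obs.\ref{obs:EqualityOfProofElements}, which present verification as a directed search that mediates between what the program produces and what the specification claims; this is precisely the mediating role of a Controller.

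Second, I would check that the three arrows in fig.\ref{MVCMetaPattern} are consistent with earlier results. The arrow $V\to P$ is witnessed by the verifier consuming the IR of the incoming program (fig.\ref{fig:PipelineArchitecture}, obs.\ref{obs:TypeCheckingPhases}). The arrow $V\to S$ is witnessed by the verifier consuming specification assertions as Prolog subgoals (def.\ref{def:QueryToProlog}, lem.\ref{lem:DoubleSemanticsOfProlog}). The arrow $S\to P$, which reads as ``the View constrains/describes the Model'', follows from cor.\ref{cor:MinDiffGenCheckLanguages}: once the language difference between $S$ and $P$ has been minimised, specification assertions refer directly to terms of the incoming program, so information flows from $S$ into the description of $P$ but not back. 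The absence of an arrow $P\to S$ is important and matches the clause in the proof of cor.\ref{cor:MinDiffGenCheckLanguages} that ``the other way round is not permitted''.

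The main obstacle I expect is the directionality of the $V\to P$ and $V\to S$ arrows versus the $S\to P$ arrow: MVC is usually drawn with notifications flowing back from Model to View, and one has to argue why in the verification setting this back-arrow collapses (the heap state is not observed live by the specification; it is only inspected through $V$). I would address this by appealing to the static nature of Hoare-style verification discussed around def.\ref{def:HoareTriple} and to thes.\ref{thes:PrologMakesHeapSpecSimpler}: because $V$ is a static proof search, any feedback from $P$ to $S$ is routed through $V$, which is precisely the Controller-mediated update characteristic of MVC. A secondary difficulty is that the correspondence is only up to the meta-pattern level, so I would state explicitly that ``$\approx$'' denotes role-equivalence rather than isomorphism, to avoid overclaiming.
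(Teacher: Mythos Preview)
Your proposal is considerably more elaborate than what the paper actually does. This statement is labelled an \emph{observation}, and in the paper it carries no proof environment at all; what follows it is a short paragraph of informal commentary. The paper simply states that the heap model is described by $S$ and processed by $V$, that a user may use $V$ to alter the state and observe results through $S$, and then explicitly flags a \emph{deviation} from Reenskaug's pattern: in classic MVC the Controller directly manipulates the Model, ``which varies here''. It closes by remarking that $P$ may have different graph representations of $S$ and that narrowing the languages makes the interfaces communicate in one language.

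So your role-assignment argument and your arrow-by-arrow justification are broadly in the right spirit and not incorrect, but you are constructing a formal argument where the paper offers only an analogy with an acknowledged imperfection. In particular, where you try to rescue the $V\to P$ arrow by routing feedback through the Controller, the paper does the opposite: it concedes that this part of the MVC triangle does not carry over cleanly. If you want to match the paper, drop the attempt to make every arrow fit and instead note, as the paper does, that the Controller--Model interaction differs from the classical pattern.
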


The heap model is described by $S$ and processed by $V$ (when discussing MVC's components, we refer to particular instances, e.g. a specific program).
Some user may use $V$ to alter the state and observes results by $S$ (see def.\ref{def:SpecificationLanguage}).
However, in Reenskaug's classic pattern, $V$ immediately manipulates $P$, which varies here.
$P$ may have different graph representations of $S$ with the properties mentioned earlier.
As a side-effect, a narrowing down also leads to the interfaces that communicate in one language.

\subsection{Knowledge Representation}

This section considers core Prolog listings as a knowledge base, a well-founded and well-defined formal logical PL.
It is founded on a quantitative and qualitative comparison of terms, rules and subgoal queries.

Initially, rules may be represented in any \index{paradigm!declarative} declarative paradigm: 
\index{paradigm!functional} functional or logical.
For \index{XSL-T} XSLT and \index{Prolog} Prolog, a quantitative analysis was made \cite{haberland08-1} for \index{semi-structured data} semi-structured data (including terms, see the previous section), as well as a qualitative comparison of over 80 selected typical examples (cf. sec.\ref{sect:LogicalReasoningAsProof}, process signatures apply for heap verification).
The latter analysis bases on carefully assembled examples from monographies, tutorials and other relevant online resources available.
Based on H\r{a}lstead metrics (see fig.\ref{fig:HalsteadMetrics}), the comparison showed:

\begin{enumerate}
 \item \index{Prolog} Prolog leads in all, except one, examples on average of approximately slightly about 30\% compared to XSLT, which may be derived from the ratios  $N_T : N$ and $\eta_1 : \eta_2$.
 \item On average, the equivalent Prolog algorithm is 50\% shorter.
 Often it is even shorter.
 It bases on $N$, $\lambda$ and $\Delta_N = \| N_T - N \|$.
 \item Furthermore, functional implementations suffer from closure, so only built-in operators may be used.
 User-defined are not allowed by default but could be user-defined with significant efforts needed, but still tractable as in Kiselyov's SXSL-T implementation \cite{haberland08-1}.
 Prolog may introduce user-defined \index{term} terms,  \index{IR} IR and arbitrary user-defined operators and rules, although a rich list specification covers most spatial elements.
\end{enumerate}

The qualitative analysis explains why expressions and rules may be represented easier based on lingual and expressibility considerations and non-metric measures.
Mainly this is due to the genuinely logical representation of terms and \index{relation} relations.
In contrast to functional languages (having a compact \index{denotational semantics} denotation), logical languages are based on atoms, terms and prioritised rules (see. theo.\ref{the:ExpressibilityRelationsInProlog}).
As denotational semantics are \index{paradigm!declarative} declarative, a complete representation can be reduced to near to minimalistic expression.
Results are calculated on \index{parameter!incoming} input vectors, and no further constraints are needed for this.
Variables are not symbols.
Symbolic variables are variables with dedicated semantics.
In a \index{$\lambda$-term} $\lambda$-abstraction, they are only used as bound and one-way, namely a unique binding.
Their scope visibility is similar to those of program variables.
A general substitution of parameterised function as a replacement for relations does not solve the problem, as found in \cite{birkedal07}.
An \index{axiomatic semantics} axiomatic semantics should be preferred over denotational semantics if the logical inference and its logical units are in focus, as with heap verification.
Prolog listings are a particular way to define axiomatic semantics.
Logical units include \index{atomism} atoms, logical expressions, terms in general and relations over those, and predicates.
Knowledge representation, transformation and comparison succeed whenever the chosen paradigm fits, as for one case, it is demonstrated in \cite{haberland08-1}.

\begin{figure}
\begin{center}
\begin{tabular}{lcl}
$N$ & .. & program length \index{program length}\\
$N_T$ & .. & theoretical program length\\
$L$ & .. &  \index{intellectual language level} intellectual language level\\
$\lambda$ & .. & \index{language abstraction} language abstraction level\\
$\eta_1$ & .. & total amount of operators\\
$\eta_2$ & .. & total amount of operands
\end{tabular}\\[0.3cm]
\begin{tabular}{ll}
 $\eta = \eta_1 + \eta_2$ & $V = N ld(\eta)$\\
 $N_T = \eta_1 ld(\eta_1) + \eta_2 ld(\eta_2)$ & $\lambda = V L$
\end{tabular}
\end{center}
 \caption{Quantitative analysis using \index{metrics} H\r{a}lstead metrics \cite{halstead77}}
 \label{fig:HalsteadMetrics}
\end{figure}

Apart from these, there are further remarks to take into consideration on \index{dynamic memory} heap verification:

\begin{itemize}
  \item \textbf{Prolog Rules are compact} and may be called at any analysis phase.
  Goals and subgoals may be arbitrary.
  Non-termination, in general, is unavoidable and are not artificially bound.
  However, for the sake of efficiency, to the approach presented in sec.\ref{chapter:APs}, a harmless restriction may be introduced.
  \item \index{eager evaluation} \textbf{Eager evaluation} of terms is used. %
  Lazy evaluation is excluded.
For instance, this leads to recursive data that the \index{Ackermann function} Ackermann function (see fig.\ref{fig:PrologExample1}) cannot be evaluated until all passed arguments are thoroughly evaluated first.
  This restriction is powerless since programs may constantly be rewritten, s.t. an infinite data stream is replaced by \index{rule body} finite, except the infinite case itself, but that is excluded by practical means.
  Moreover, \index{term unification} term unification already leads to undefined term portions, so \index{symbol} symbolic variables are assigned.%
  There is no need to recalculate everything again in recursive definitions, but only those \index{term!sub-} subterms affected by the change are in the obedience of the principle from fig.\ref{fig:ExampleStack2}.
  \item The \textbf{base type} of verification is a \index{predicate!parameterisation} term.
  Next, the base type may be considered parameterised as in the \index{type} \index{$\lambda$-calculus} $\lambda$-calculus of third-order, so a type is of $\Lambda_{T_{\lambda3}}$, which allows \index{variable!quantified} quantified variables.
  A type built up from other types is a \index{dependent type} dependent type (see def.\ref{def:TypedLambda2ndOrder}).
  A composition is according to def.\ref{def:PrologTerm} using some \index{functor} functor.
  Thus, heap comparison can be formalised as term comparison.
  However, terms may theoretically in Prolog contain self-application \index{term!self-applicable} by referring to \index{symbol} symbolic variables.
  Self-application may be checked by the predicate \index{\texttt{unify_with_check}} \texttt{unify_with_check} as introduced at the beginning of this section.
  The higher the level of \index{$\lambda$-calculus} $\lambda$-calculus, the higher constraints impact, so the fewer \index{paradox} paradoxes matter.
  \index{functor} Functors may be used for modelling arbitrary \index{data structure} data structures, including lists and tree and \index{object instance} (class instantiated) objects.
  \item In case a \textbf{subgoal succeeds}, the set of all unifications is issued.
  For example: $$\texttt{?-H=pointsto(a,2),VC=pointsto(a,X),H=VC.}$$ issues the result set:
  \texttt{H = pointsto(a,2)}, 
  \texttt{VC = pointsto(a,2)} and \texttt{X = 2}. 
  \index{counter-example} \textbf{Counter-example generation} may be obtained for free from \index{term unification} term unification, e.g. by \index{\texttt{unify_with_check}} "\texttt{unify_with_check}".
  For improved traceability, it should be enriched by calls to \texttt{write} as a subgoal with all relevant \index{symbol} symbols \cite{sterling94}.
  The nearest term shall be evaluated top-down in order to locate the first differing \index{term} sub-expression.
  In general, ISO-Prolog allows several tracings, including built-in predicates for just that purpose \cite{diaz12}.
  If verification continues stepwise, then recording each step into a \index{DOT} \textit{DOT}-file is very recommended for understanding and troubleshooting purposes.
  \index{counter-example} Counter-examples have to reproduce at least one valid case that makes the verification fail at a specified position.
  \index{term unification} Term unification leads to the most common substitution.
Hence, a non-matching \index{functor} functor, \index{arity} arity or not unifiable variables are a perfect shot as counter-example --- this is so universal that no further explanation is needed except the information about the current position in the proof tree traceable via DOT, for instance.
 In general, a particular example reflects a term that makes the verification refute intentionally -- this may be when some atom does not unify with a functor, or \index{atom} atoms differ from functors.
  \item \textbf{Relief of symbolic restrictions} according to the heap definition from \cite{berdine05-2} on runtime (see sec.\ref{chapter:DynMemProblems}), including rule set analysis for deductive reasoning (\textit{(bi-)abductive}) \cite{calcagno09}, \cite{pottier08}, which may be better modelled (also cf. sec.\ref{chapter:stricter}, sec.\ref{chapter:APs}), is Horn-rules.
  Secondly, analysing rules and alternatives may succeed during a search without additional efforts.
  However, due to applicability, the amount of \index{rule!alternative} alternatives needs to be bound.
  If a rule is selected, then \index{abduction} \textit{abduction} may be mimicked by a \index{Horn-rule} Horn-rule of kind "\texttt{b:-a1,a2,...,an.}".
  If any of $\texttt{a}_j$ remains undetermined, so some remain symbolic, then arbitrary different $b$ may be chosen if any alternatives were available.
  That requires preceding terms, and \index{predicate!head} head arguments of \texttt{b} are not excluded.
  It must be noticed, in order to fit Prolog for heap verification, no further (auxiliary) predicates are required to interpret a term since it represents straightforwardly just itself.
  The \index{lexicographical ordering} lexicographical order on \index{pointer} pointer names contributes to normalisation.
The same goes for simple heaps regarding the left-hand location and calls to built-in predicates and lists, for instance, when dealing with \index{\texttt{concat}} "\texttt{concat}".
  \texttt{concat} and further one-pass scanning predicates are effective due to non-repetitiveness.
  Hence, numerous transforming and evaluating predicates may not be needed.
  W.l.o.g. reasoning rules are invertible, iff between all sub-terms of the precondition and the postcondition either an \index{isomorphic mapping} isomorphism exists, or \index{function invertibility} non-invertible built-in predicates are safe not to be called in the inverse case.
  In case isomorphism is violated, either the mapping from the incoming vector to the outcoming is extended, or the inverse case is extended (see \index{abstract interpretation} abstract interpretation in sec.\ref{chapter:intro}).
An extension is problematic in inverse mappings since \index{co-domain} co-domains thin the \index{domain} domain, so the mapping becomes non-continuous.
  \item \textbf{Rules overloading} \index{rules overloading} and prioritisation may be performed and enriched by different bodies.
Auxilary predicates may be written, e.g. in \index{Java} Java.
  Argument terms to those predicates may be \index{term!incoming} incoming, \index{term!outgoing} outgoing or a mixture of both.
  It is not sufficient to use \index{DCG} DCGs for this work's objectives (cf. sec.\ref{chapter:intro}) because the control flow and \index{tactics} tactics are subject to change.
  This work's objective also includes heavy heap \index{Prolog rule} term interpretation, as shown in the following sections.
  The best example for this will be discussed in due course in sec.\ref{chapter:APs}, referring to strategies of top-down or bottom-up \index{syntax analysis} syntactical analysers.
\end{itemize}
   

\subsection{Verification System Architecture}
\label{sect:ArchitectureVerificationSystem}

In \cite{haberland14-2} and \cite{haberland14-1}, a verification architecture is suggested for \index{dynamic memory} dynamic memory reasoning based on \index{Prolog} Prolog.
The architecture is illustrated in fig.\ref{fig:HeapVerificationArchitecture}.
The architecture follows the principles discussed in sec.\ref{chapter:intro}:
\begin{enumerate}
 \item[(1)] automation,
 \item[(2)] \index{openness} openness,
 \item[(3)] \index{extensibility} extensibility and
 \item[(4)] \index{plausibility} plausibility.
\end{enumerate}
(1) states that proof should be found \index{automated proof} automatically without further interaction.
In Prolog, a solution will be found depending on the given rules if there is no infinite recursion and all needed rules are provided.
Otherwise, the proof does not \index{Halting-problem} terminate or does quit prematurely.
Next, thanks to the approach from sec.\ref{chapter:APs}, which contains design and implementation details, automation is provided.
(2) means there shall be no artificial limitations introduced to terms nor rules.
On one side, \index{Prolog} Prolog is open for extensibility and variability so that new rules may be added and existing ones be modified.
On the other side, Prolog is closed to what is \index{deduction} derivable from the rules only.
(2) relates to the architecture and used memory models discussed in sec.\ref{chapter:expression} till sec.\ref{chapter:APs}.
(3) means a memory model is \index{extensibility} extensible and, if needed, variable too in its rules.
Modifications are always possible due to allowed Horn-rule \index{Horn-rule} redefinitions in Prolog.
The extensibility of terms and rules was discussed in full detail in the previous main section.
(4) means any \index{verification} verification step may be traced and checked for plausibility.
DOT-files' \index{DOT} generation visualises each step and serves as evidence in case of proof refutation right with a \index{counter-example} counter-example beside the last corresponding formula.
The possibility to debug \index{plausibility} verification immediately without any additional preparation and overhead tremendously simplifies debugging.
It is an essential contribution to Extreme Programming.
Essential parts include the rule set (acting as "programming listing"), the subgoal sequence to be investigated and a specified memory state, including the stack and heap.
No other conventions need to be obeyed.
No more massively overcomplicated set-up is needed.\\

In fig.\ref{fig:HeapVerificationArchitecture}, the incoming program is at the entry.
\index{C} C or any other Pascal-like imperative PL should be recognised in --- at least immediate IR injection works.
The program is turned into a Prolog term IR which is then together with assertions about the program turned into yet another Prolog \index{term} terms and rules.
First, the incoming listing is checked syntactically, then semantically, including \index{typing} type checking and further IR-related constraint checks.
Every term may at any time be visualised in the \index{DOT} DOT-format.
\index{assertion} Assertions may refer to \index{lemma} lemmas and \index{theorem} theorems, which might be noted immediately as Prolog term and, if needed, may be used then during \index{verification} verification.
During verification, several rules get activated, namely, those provided in Prolog theories and are loaded \index{formula interpretation} while interpreting those, e.g. with tuProlog's environment \cite{denti01}, \cite{denti05}.
Either external SMT-solvers or Prolog itself may be used as \index{formal theory} formal theory solvers.
Let us recapitulate that Prolog is widely prevalent among \index{constraint programming} constraint solvers \cite{sterling94}, \cite{bratko01}.

New phases to the heap verification may be inserted \index{dynamic memory} according to fig.\ref{fig:PipelineArchitecture}.
Minor and significant phases are bridged by handing over the \index{calculation state} calculation state explicitly.
This \index{CFG} data-driven approach (see fig.\ref{fig:HeapVerificationArchitecture}) is by design very closely incorporating with the \index{assembly line} assembly line architecture from \cite{kennedy02}, \cite{gcc15}, but differs in the visibility scope of heap variables (cf. sec.\ref{chapter:expression}).
However, the proposed conveyor may be used under the condition \index{alias} alias-analysis establishes individual \index{visibility scope} visibility scopes, see obs.\ref{obs:VariablesScope}.

\begin{figure}[t]
\scalebox{0.8}{
\begin{minipage}{12cm}
\xymatrix{
  \fbox{\txt{\textit{C-like language}}}  & \ar@{=>}[dd] & \fbox{\txt{\textit{VC}}}\\
  \txt{\parbox{3.3cm}{\index{IR}\textbf{"IR"}}} & & \txt{\index{semantic analysis}Semantic\\ Analysis} \ar[ld]\\
  \ar[r]^<(0){\txt{\index{AST}AST}} & \fbox{\txt{\index{term}\textit{Terms}}} \ar[r] \ar@{=>}[dd] & \txt{\index{DOT}(DOT)} &\\
  \txt{\parbox{3.3cm}{\textbf{\index{proof!automation}"Automation"}}} &&\\
  \fbox{\textit{\txt{\index{rule normalisation}normalisation rules\\/SMT-solver}}} & \fbox{\txt{"yes/\\no"}} \ar[d] \ar[dr] & \fbox{\textit{\txt{subtraction\\ rules}}}\\
  \txt{\parbox{2.8cm}{\textbf{\index{code generation}(Code generation)}}} & \txt{\index{GC}Garbage Collection} & \txt{\index{alias analysis}\index{alias}Alias Analysis}
}
\end{minipage}}
 \caption{CFG on \index{dynamic memory} heap verification}
 \label{fig:HeapVerificationArchitecture}
\end{figure}

Furthermore, in \cite{haberland15-1} are proposed and discussed the required criteria for an \index{extensibility} extensible and variable heap verifier architecture.
The criteria include a minimal \index{IR} IR for statements of the incoming program (PCF-based \index{PCF} \cite{mitchell96}) with object extensions (cf sec.\ref{sect:TheoryOfObjects}).
The presented computation model, namely the \index{class-based calculus} class-based calculus, is \index{typing} typable, so properties from the second and third-order \index{$\lambda$-calculus} $\lambda$-calculus (see def.\ref{def:TypedLambda2ndOrder} and def.\ref{def:Lambda2ndOrderTermTypes}) apply.
However, as discussed earlier, now the hot-code update is excluded.
Furthermore, arguments are passed either arbitrarily,  \index{call by value} by-value or by-call.
In the latter case, an indicator is required, which is implemented by a reserved \index{functor} functor.
\index{extensibility} Extensibility applies to the \index{input program} incoming PL, \index{static analysis} static analysis phases, and rules (including pre-existing ones).

In \cite{haberland15-2}, \index{frame} frames are introduced and reviewed, as well as heap definitions and \index{heap!interpretation} interpretations.
For all considered concepts, the use of Prolog is suggested to address the spotted current restrictions effectively.
As heaps are term interpretations that may be bound not before in subgoals calls, in case of unification succeeds.
Otherwise, unification fails, and non-unified terms illustrate a valid counter-example.
It must be noticed that verification and its type signature are similar to the comparison process of template-instantiation and validation \cite{haberland08-2}.
The theoretic foundation behind is a \index{pattern matching} pattern-matcher, a finite \index{comparison} comparison tree-automaton, a tree graph matcher (cf.\cite{comon07}).
Although the initial model is similar, the comparison differs from \index{predicate!call} predicate calls, \index{abstract predicate} abstract predicates, \index{precondition} pre-and \index{postcondition} postconditions, and the heap graph model.
The previous analogy implies that if an assertion is a schema or type, and a given program builds up the heap graph stepwise for each program statement, and the cell's content is compatible (cf. fig.\ref{fig:HoareCalcVSTypeSystem}), then verification is as robust as type checking.

\begin{corollary}[Minimisation of Incoming Program]
 The result of the inhabitant check is some minimal \index{minimalistic program} \index{input program} incoming program, which alters \index{dynamic memory} dynamic memory.
\label{cor:PrologMinIncomProg}
\end{corollary}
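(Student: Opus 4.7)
The plan is to cast the inhabitant problem in the Prolog framework set up in fig.\ref{fig:HoareCalcVSTypeSystem} (row 4), where precondition $P$ and postcondition $Q$ are given as heap terms (def.\ref{def:HeapTermExtendedDefinition}) and the program $C$ is the unbound term to be synthesised. The corresponding Prolog query has the shape \texttt{?-verify(P,C,Q)}, in which $C$ appears only as a symbolic variable. By lem.\ref{lem:DoubleSemanticsOfProlog}, this query has the two semantics we need: procedurally, Prolog will attempt to unify $C$ against the heads of all applicable Hoare rules from $\Gamma$ (see fig.\ref{fig:ProofRulesEx1}, sec.\ref{sect:ArchitectureVerificationSystem}); relationally, every binding of $C$ obtained via the most general unifier is a witness program satisfying $\{P\}C\{Q\}$.

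First I would make precise what \emph{minimal} means here by tying it to algo.\ref{algo:AlgorithmProblemReduction}: a program $C$ is minimal w.r.t.\ the symptom $\{P\}\cdot\{Q\}$ if no proper sub-listing $C' \subsetneq C$ still yields the same triple. Then I would argue, using the rule-ordering convention established right after def.\ref{def:CuttingSolutions} (base cases and less constrained rules listed first), that Prolog's depth-first search visits the (ASN) and constant-predicate axioms before the compound (SEQ), (LOOP) rules. Combined with the locality property (property 4 in sec.\ref{sect:ExprPredicates}) and the non-repetitiveness of heap terms from theo.\ref{theo:ReynoldsHeapProperties}(1), the first successful branch of the derivation tree must correspond to a synthesis that uses no redundant program statements: each statement emitted into $C$ must be justified by a heap-altering simplex $loc \mapsto val$ on either side of the triple, because otherwise the frame-rule could absorb it and the search would have succeeded one step earlier.

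Next I would show the result indeed alters dynamic memory: by def.\ref{def:HeapTermExtendedDefinition} and obs.\ref{obs:UnorganizedMemoryUniqueContext}, the only way a synthesised statement survives the pruning above is if it contributes a simplex not already present in $P$ or removes one not present in $Q$; a purely stack-local statement would be rejected since it would not narrow the unification between $P$ and $Q$ as a heap term. The minimal program obtained is thus constituted exactly of heap-altering statements, matching the claim. For soundness of the synthesis I would invoke thes.\ref{thes:ProvingEqualsParsing} together with the invertibility remark in sec.\ref{sect:LanguageCompatibility} on rules whose precondition-postcondition mapping is isomorphic.

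The main obstacle will be termination and uniqueness: the inhabitant problem is undecidable in general (sec.\ref{chapter:intro}, Restriction 4 and the G\"odel remark), so the argument only works once one restricts $\Gamma$ to rules in which the incoming $(P,Q)$ pair determines a descending chain in the sense of Steinbach, and in which the search tree from fig.\ref{fig:PrologExample2} has a finite leftmost solution. I would handle this by requiring, as in sec.\ref{chapter:APs} (forward reference), that abstract predicates used in $P,Q$ are inductively defined with a unique base case, so that \texttt{unify\_with\_check} from fig.\ref{CodeUnificationWithOccursCheck} rules out self-applicative traps; minimality of the first witness then follows from the green-cut discipline of def.\ref{def:CuttingSolutions} applied to the synthesis rules.
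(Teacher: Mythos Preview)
Your proposal takes a genuinely different route from the paper, and the central minimality step does not go through as written.

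The paper argues \emph{structurally} from the heap graph: existence is obtained by incrementally inserting edges (each edge a well-defined $loc\mapsto val$), and minimality follows from the correspondence ``one heap-graph edge $\leftrightarrow$ one program statement'' (\texttt{new} for vertices, assignment for content/links). A cycle in the graph need not become a loop in the program; only statements essential for the final graph survive. The Prolog angle enters only at the end, as the remark that minimisation is subsumed by triple verification provided the predicates are invertible.

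Your argument instead relies on Prolog's \emph{operational} search order: because base/atomic rules like (ASN) are listed before (SEQ) and (LOOP), the first successful branch should contain no redundant statements. This inference is the gap. Depth-first left-to-right search with ``base cases first'' does \emph{not} return shortest derivations in general; it returns the \emph{leftmost} one. Once (SEQ) is entered, the decomposition into intermediate states $Q'$ is itself a choice point, and nothing in the rule ordering prevents Prolog from committing to a longer sequential decomposition before a shorter one is ever tried. The frame-rule remark (``otherwise the search would have succeeded one step earlier'') presupposes exactly the minimality you are trying to establish. To make an operational argument work you would need iterative deepening over the size of $C$, or an explicit cost ordering on derivations --- neither of which is provided by the green-cut discipline of def.~\ref{def:CuttingSolutions} alone.

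Your existence sketch via \texttt{?-verify(P,C,Q)} and lem.~\ref{lem:DoubleSemanticsOfProlog} is fine and close in spirit to the paper's final remark on invertible predicates. But for minimality, the paper's edge-to-statement correspondence is both simpler and actually sound; I would recommend replacing your search-order paragraph with that structural argument.
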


\begin{proof}
First, it needs to be shown the claimed program always exists.
Considering fig.\ref{fig:HoareCalcVSTypeSystem}, the inhabitant problem here states: "\textit{Given a heap graph, which program will construct this graph (with a minimalistic number of steps)?}".
The heap graph can inductively be constructed by inserting incrementally (but bound by an upper limit of) edges for well-defined labelling.
Second, it needs to be shown that the program is minimal.
Cor.\ref{cor:PrologMinIncomProg} implies a (minimal) statement sequence may always be derived for a given pre-and a postcondition (corresponding to type).
The calculation steps are essential before comparing with arbitrary heap states.
However, the relation between \index{type} "\textit{type}" and "\textit{expression}" cannot always be unique.
So, the question arises: Which program is finally generated (cf. fig.\ref{fig:HoareCalcVSTypeSystem})?
The program is generated \index{incremental approximation} stepwise when needed according to the assertion provided.
The program is build up in a minimalistic manner because almost every \index{graph edge} edge of the heap graph corresponds with new program statements -- this is under the condition of a minimalistic program only.
So, a cycle in the heap graph does not necessarily have to correspond to a \index{program statement} cycle (a loop) in a compound statement -- it may be the case actually, e.g. it may correspond to two statements.
\index{infinite data structure} Infinitely many edges may be excluded from the graph.
Therefore only similar chains correspond to a cycle as a \index{program statement} program statement whose condition is determined by the \index{heap graph} heap graph.
Whilst construction by the \index{input program} incoming program \index{minimality} minimality means only a set of unrelated statements to the final graph are removed.
The correspondences are as follows: the "\texttt{new}"-operator adds heap vertices, vertex content is altered by assignment statements, which may link two heap vertices in case of pointers.

Only those statements remain that are essential for the final's heap graph.
Thus, the inhabitant check for a given type may generate a \index{minimalistic program} minimal program, which may be compared with a given program.
Program minimisation is (implicitly) subsumed by reasoning Hoare triples, and there it is not considered separately.
On a domain level of Prolog, program minimisation may indeed be subsumed by ongoing triple verification if predicates are invertible (see sec.\ref{obs:StackBasedCalls}).
\end{proof}

The proposed Prolog-based architecture resolves the following issues without any extra costs:

\begin{itemize}
 \item \textbf{Any incoming PL} \index{language!input} is allowed as long as its \index{IR} \index{term} term IR is sound.
 Soundness definitions and new PLs may be added.
 Rule sets may be extended and varied.
\index{input program} An incoming program may even be left empty, and \index{syntax analysis} syntax analysis is dropped totally if some program's IR and accompanying \index{verification} verifications \index{term} terms are manually inserted.

 The architecture illustrated in fig.\ref{fig:HeapVerificationArchitecture} excludes syntax and semantic mismatches because of the most flexible phases (cf. fig.\ref{fig:PipelineArchitecture}).
 \item \textbf{Modest specifications} allow avoiding \index{specification!full} fully-fledged specifications since only those \index{specification!modularity} modules are specified that later must be verified.
 Except for full specification and a very few exceptions \index{footprint} ("\textit{footprint}"), there were almost alternatives that would have avoided tremendous efforts and hard to read assertions in the past, such as partially Smallfoot in SL.
In contrast, Prolog's principle is straightforward:
if some \index{assertion} assertion is found correct, the proof succeeds and terminates unless non-termination within rules is encountered.
Otherwise, the verification continues until a contradiction is found or it succeeds finally.
 Verification is applied only to those modules that contain assertions.
Partial auxiliary predicates, such as \underline{$true$} (see sec.\ref{chapter:stricter}), shall avoid full specification and increase readability. 
\end{itemize}

\index{polymorphism} Polymorphism over classes is excluded \index{program statement} here for the sake of simplicity (see discussion in sec.\ref{chapter:intro}, \ref{chapter:expression} and followings).
The \index{dependency graph} dependency graph of given \index{Prolog} Prolog rules and \index{lemma} lemmas \index{assertion} are analysed by the Prolog interpreter whilst runtime.
It is also checked during \index{static analysis} syntax analysis and \index{compilation} compilation of abstract predicates (see sec.\ref{chapter:APs}).

\subsection{Class Objects}
\label{sect:ClassObjects}

In sec.\ref{sect:TheoryOfObjects}, both two prominent OCs were introduced, namely after \index{Abadi-Cardelli calculus} Abadi-Cardelli and after \index{Abadi-Leino calculus} Abadi-Leino.
Since simplicity of \index{specification} specifications has higher importance for this work's objective, the class-based calculi choice is evident (see sec.\ref{sect:TheoryOfObjects}) \cite{leino98}, \cite{reus02}.
For this work's sake, there is no need to prove the soundness of adapted objects separately, nor a full prove w.r.t. polymorphism, inheritance and encapsulation again --- it can be found, e.g. in \cite{cardelli96}, \cite{bruce02}.

Remarkably, pointers to object theories do not invalidate existing constructs.%
They extend variable references with new semantics that used to be missing.

Consequently, an \index{object instance} object is, first of all, an instance of some class.
For the sake of simplicity, \index{polymorphism} polymorphism is excluded since it does not directly affect core functionality (generic polymorphism, according to Cardelli's taxonomy, is not considered in the C(++)-dialects).
Even "\textit{ad-hoc}"-polymorphism is allowed only in boundaries of subclassing \cite{cardelli96-2}, namely by deriving \index{subclass} subclasses \cite{bruce02} and only for the sake of improved comfortability and modularity (see sec.\ref{sect:TheoryOfObjects}).
In Hoare calculi, polymorphism is tractable and expressible (cf.\ref{chapter:intro}, \cite{nanevski06}).
For the sake of this work, its only contribution does not affect expressibility.
Hence, an object represented as a \index{heap} heap is enclosed a mapped onto some \index{memory region} memory region without gaps containing just \index{object field} (object) fields.
Methods are not stored in \index{dynamic memory} dynamic memory since it is considered the code is static and since the hot-code update is prohibited in the first place.
Code updates on runtime are also prohibited because of the significant problems introduced regarding \index{completeness} completeness and \index{soundness} soundness (see discussion in sec.\ref{sect:TheoryOfObjects}).
As every object is of \index{type} some class type, its methods are entirely determined and typed before execution.
Inherited fields and methods are fully defined too.
W.l.o.g. it is agreed upon objects, their tuple pairs (all fields as \index{tuple} Cartesian product $\times$ its value domains) are always sorted in \index{lexicographical ordering} lexicographical order for simplicity and comparability.

Thus, object definition and transformation (down-cast in C++) into a \index{subclass} subclass may be implemented in two different manners: (1) each field is checked according to the relation \index{$>:$} "$>:$" (see def.\ref{def:TypeChecking}), here the fields in upper and lower subclass \index{object instance} instances diverge or (2) all fields are grouped according to some identifier \index{inheritance hierarchy} inheritance.
Thus, there must be an effective method to compare objects of (sub-)classes.
In order to demonstrate (2), we choose, for example: "\texttt{SubClass1 s1; SuperClass o1=(SuperClass1)s1;}".
Assume, \texttt{s1} contains \texttt{[o1,o2,o3]}, then the calculation of \texttt{o1}, where the upper class from \texttt{SuperClass1} inherits only the fields \texttt{o1} and \texttt{o2}, may be generated copying the first two fields, so by the initial contiguous range of \texttt{s1}.

Object fields \index{object instance} are noted as a list of \index{tuple} tuples of a kind: \textit{(identifier, value)}.
\index{symbol} Symbolic variables may express pointers to objects (including cycling). 
\texttt{A=object(A,A)} is prohibited and may be recognised by \texttt{unify_with_check}.\\
\texttt{A=object([(a,A),(b,A)])} is allowed.
The simplified notation \texttt{A=object((a,A),(b,A))} is assumed because the \index{functor} functor implemented by core \index{Prolog} Prolog already builds the list head containing an \texttt{object}.
Since the tuple contains precisely two elements, the composed term is always well-defined and unique w.r.t. the object web spanned.
Object fields are accessible by the \index{.-operator} "\textbf{.}"-operator and may be used within \index{program statement} program statements and \index{assertion} assertions.
Unsound access paths are recognised during \index{semantic analysis} semantic analysis.
Fields for some objects are (all) specified on the \index{abstract predicate} abstract predicates level.
These may also be defined, particularly particular (see sec.\ref{chapter:stricter} and following).\\\\
Conventions from sec.\ref{chapter:intro} rising precision in this section and conv.\ref{conv:RestrictedObjects} and conv.\ref{conv:HeapAlignment} are introduced to avoid paradoxes and narrow down towards a tractable UML extension.


\section{Strengthening Heap Expressibility}
\label{chapter:stricter}

In this section, \index{SL} \textit{SL} is introduced and analysed w.r.t. the problems generated by the \index{spatial operator} spatial conjunction operator.
One operator may be used to separate \index{heap graph} and link heaps, dependent on pointers and their \index{pointer content} content \cite{haberland16-3}.
Ambiguity allows a comfortable notation but induces several disadvantages.
A major flaw is \index{context-sensitive} \textbf{context-sensitivity}.
In this context, dependency, first of all, means the need to analyse a whole formula.
The extra steps needed may become cumbersome and end up in analysing all possible and unrelated subexpressions.
The dependent notation of the same \index{heap graph} heap graph needs to be probed first (a syntactically context-sensitive) to define an independent heap graph (a semantically CF).
Although this is not a \index{paradox} paradox, it, however, desires better.
Next, w.l.o.g. heap graphs are syntactically defined in a CF manner by excluding ambiguity and improving the \index{proof!automation} verification automation simultaneously from the automation perspective \index{syntax analysis} syntax analysis over \index{formula interpretation} \index{heap} heap formulae means overhead. 
In general, a heap relation may state two heaps are connected, or they are not.
Rewriting an overloaded (ambiguous) formula into a singular \index{non-ambiguous operator} (non-ambiguous) might not be trivial at all since all possible transitions have to be taken into consideration.
However, it is always decidable because any sound heap formula corresponds to some heap graph and vice versa.
An (un-)related heap to some other remains untouched regardless of which context or if it is a hierarchical heap.

Using a CF formula to describe a semantically context-independent model of a heap graph merges both concepts of what a heap is.
\index{object instance} Objects are considered complex SL units, which obey the same rules as simple units and (simple) pointers.
Later are discussed applications of formal properties towards a stricter memory model, the possibilities and limitations of the proposed model.

\subsection{Motivation}

Let us look at the following very generic syntactic definition of some \index{term!expression} term expression $E$ over integers in some classic \index{arithmetic} integer arithmetics as ambiguity problem:

\begin{center}
\begin{minipage}{5cm}
 \begin{grammar}
<E> ::= <k> | <E> ‘$\otimes$’ <E>
 \end{grammar}
\end{minipage}
\end{center}
 
It is relatively easy to conclude that the syntax in \index{EBNF} (E)BNF is inductively defined, and the base case denotes any arbitrary but fixed \index{integers} integer $k$.
Assume $\otimes$ is some \index{binary operator} binary operator that is \index{totality} total and defined over integer, like addition.
If we have the situation when for expressions $e_1,e_2,e_3$: $E_0 \cdots \otimes e_1 \otimes e_2 \otimes \cdots E_n$ and $n \in \mathbb{N}_0$ calculates $e_{1,2}$, where $E'_0 \cdots \otimes e_1 \otimes e_2 \otimes \cdots E'_n$ calculates $e'_{1,2}$, where $e_{1,2} \ne e'_{1,2}$, then either the rules of this calculation are unsound (possibly by design, but not later than that), or calculation is context-sensitive, so depends on $E_0$ and $E_n$, or $E'_0$ and $E'_n$.
It must be noticed that if $E_0 \equiv E'_0$ holds and so forth, then $E_n \equiv E'_n$.
So, the problem of differentiation matches with the problem of \index{soundness} (un-)proper calculation.
Beginning with the base case, namely, $e_{1,2} \ne e'_{1,2}$ where $E_0 \ne E'_0$, $E_n \ne E'_n$, it may be stated that both $E_0$ and $E_n$ are not empty at the same time.
Hence, dependency means that with $(e_1 \otimes e_2) \otimes e_3$, that $e_3$ contains syntactic information, which influences $e_1 \otimes e_2$.
Thus, for each $j$ multiplication, $\otimes_{\forall 0\le j}^{n} e_j$ means a complete analysis of all remaining factors in the worst case.
The \index{polynomial rank} polynomial’s rank $n \choose 2$ binds complexity.
So, which relation does this bound have towards heaps?

\begin{observation}[Operator Overloading]
\label{obs:OverloadedSpatialOp}
The operation $\star$ is \index{ambiguous operation} ambiguous (see def.\ref{def:ReynoldsHeapDefinition}).
It might be used for connecting and separating heaps and impacts the overall logical analysis of \index{heap} heaps.
\end{observation}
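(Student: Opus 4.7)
The plan is to establish the ambiguity of $\star$ by instantiating the generic scheme set up in the motivating discussion on $\otimes$: I would produce two heap formulae of identical syntactic shape built with $\star$, yet yielding semantically distinct heap graphs depending on whether the surrounding context imposes a link between the separated regions. First, I would recall from def.\ref{def:ReynoldsHeapDefinition} (and fig.\ref{FormalDefinitionStackHeap}) that $s,h\models \Sigma_0 \star \Sigma_1$ requires only that the underlying domains $h_0,h_1$ be disjoint, whilst placing no restriction whatsoever on the \emph{values} stored in one fragment — in particular, values in $\Sigma_0$ may well be addresses occurring in $\Sigma_1$.

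Second, I would exhibit two canonical witnesses. On the one hand, the simple formula $a\mapsto 1 \ \star \ b\mapsto 2$ represents a heap graph with two disconnected vertices; here $\star$ behaves purely as a \emph{separator}. On the other hand, the inductive example $tree(l) ::= \mathtt{nil} \ | \ \exists x.\exists y.\ l\mapsto x,y \ \star \ tree(x) \ \star \ tree(y)$ from sec.\ref{sect:ExprPredicates}, together with the schema depicted in fig.\ref{ExampleSchemaHeapSeparation}, shows that the very same $\star$ admits a reading where the left operand names addresses that continue to be reachable from the right operand: an edge may cross the $\star$-boundary while the formula remains satisfied, since disjointness is a constraint on locations, not on reachability. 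Thus $\star$ simultaneously supports the reading \emph{connected} (via content) and \emph{separated} (via domain), which is exactly the overloading claimed.

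Third, I would close the argument by matching this against the generic criterion from the preceding motivation: as with $\otimes$, deciding which of the two readings applies to a given occurrence of $\star$ requires inspecting the surrounding sub-formula (the values bound to the free symbols in $\Sigma_0$ versus the addresses fixed in $\Sigma_1$), which is precisely context-sensitivity in the syntactic sense. I would also cross-reference obs.\ref{observation:RemoteAlternation}, since the phenomenon of remote manipulation is a direct consequence of this same overloading: a program statement touching a region syntactically separated by $\star$ can nevertheless alter reachable content elsewhere.

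The main obstacle will not be technical but presentational: making precise what ``ambiguous'' denotes without drifting into a full syntactic/semantic treatment. My plan is to keep the argument at the level of the $\otimes$-analogy already developed — two derivations of different heap graphs from the same $\star$-form suffice — and to defer any quantitative complexity discussion (the $\binom{n}{2}$ bound on pairwise context analysis) to the subsequent sections, where the strengthened CF variant of the operator is introduced.
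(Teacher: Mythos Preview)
Your proposal is correct and follows the same line as the paper, only in considerably more detail. In the paper this statement is an \emph{observation}, not a theorem, and its entire justification consists of two sentences: a pointer to the structural analogy between $\star$ and the generic $\otimes$ set up in the motivation, and the remark that every heap must be thoroughly analysed during interpretation (i.e., context-sensitivity). The paper relies on def.\ref{def:ReynoldsHeapDefinition}, whose two cases (\textit{Separation} and \textit{Merge}) already encode the two readings you exhibit with your explicit witnesses; your concrete examples ($a\mapsto 1 \star b\mapsto 2$ versus the $tree$ predicate) and the cross-reference to obs.\ref{observation:RemoteAlternation} are all consistent with material appearing elsewhere in the paper but are not invoked at this spot. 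In short: same approach, but you supply an argument where the paper is content with a bare back-reference.
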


There is a structural analogy between the definition of $\star$ in SL and the previous definition of $\otimes$ (cf. def.\ref{def:ReynoldsHeapDefinition}).
Each heap needs to be thoroughly analysed during heap interpretations (it denotes $E$ in the previous example).

\begin{thesis}[Strengthening Expressibility of Ambiguous Spatial Operation]
\label{thes:StricterOpsExpressibility}
 If the \index{expressibility} expressibility of SL's spatial operator $\star$ is strengthened, semantic ambiguity can be diminished.
The exclusion of context-dependency eventually allows automation and eases heap analysis.
\end{thesis}

\begin{proof}

The proof consists of two parts.
First, it needs to show that operation strengthening implies less semantic ambiguity.
Second, it has to be shown that the absence of context-sensitivity implies automation and enables a more accessible heap analysis.

The first part is common sense.
If some expression has many (semantic) meanings, then the context determines its denotation.
The fewer meanings (ideally one) some expression has, the less ambiguous it is.
In the case of a single meaning, the expression is non-ambiguous.
W.l.o.g. this naturally also applies to $\star$ (proofs on properties will follow with the following theses in this section).

The second implication follows from the new heap calculus to be established in this section, which is based primarily on the core properties of $\circ$ and $||$,  as well as the concept that objects are also heaps (see sec.\ref{sect:ClassObjectAsHeap}) and validated later practically by sec.\ref{sect:Implementation}.
The relevant properties of $\circ$ are def.\ref{def:HeapConjunctionDefinition}, theo.\ref{theo:GeneralizedHeapConjunctionTheorem}, and the theorems relevant for heap analysis lem.\ref{lem:HeapConjunctionMonoid}, theo.\ref{lem:HeapConjunctionGroupProperty}.
Due to remarks made in this section w.r.t. duality def.\ref{def:HeapDisjunction} and lem.\ref{lem:monoidOverDisj} suffice for $||$.
For the new heap calculus, it is essential to perform improvement calculations based on spatial distributivity, founded on theo.\ref{lem:DistributivityForConjDisj} and lem.\ref{lem:HeapInversionHomomorphism}.
\end{proof}

\begin{thesis}[Simplification by Differentiating Heaps]
\label{thes:SimplificationByDiffingHeaps}
When the syntactic and semantic unity is obeyed, an adequate representation simplifies the comparison and specification of given and expected heaps.
\textit{Diff(-erantiat)-ing heaps} may conduct the comparison.
\end{thesis}

\begin{proof}
Here it is necessary to show that heap differentiation leads to simplification.
If two and more heap (terms) may be compared, then the difference may be calculated effectively as a term.
The calculation is founded on term unification, the anonymous "\_" operator, heap normalisation in theo.\ref{lem:DistributivityForConjDisj}, inversion (conv.\ref{conv:EmptyHeapInversion}), thes.\ref{thes:StricterOpsExpressibility}, the properties established in sec.\ref{sect:APsProperties} and reference implemented as described in sec.\ref{sect:Implementation}.
That may be applied to a rule set to check completeness (cf. sec.\ref{chapter:stricter}, particularly def.\ref{def:IncompletePredicates}).
Missing and overlapping heaps in rules may be calculated.
Partial heaps simplify specifications due to a shorter notation (see sec.\ref{sect:PartialSpec}).
\end{proof}

This thesis implies that \index{context-free} CF-ness allows defining \index{formal theory} formal theories about equalities and inequalities of heaps.
The integration of \index{SMT-solver} SMT-solvers may further automate those theories.

\begin{thesis}[Complete Heap Rules by Incomplete Specifiers]
\label{thes:IncompletenessForCompleteness}
The notation of \index{heap!incomplete} incomplete heaps allow solving the problem of \index{completeness} complete heap rules.
\end{thesis}

\begin{proof}
This thesis is a specialisation of thes.\ref{thes:SimplificationByDiffingHeaps}.
\end{proof}

\begin{corollary}[Strengthening Modelling]
 \label{cor:StrengtheningOCL}
 \index{strengthening} The strengthening of the spatial operator does not violate the locality principle of (object) heaps.
 The strengthening is a proposition for \index{language!extension} extending the modelling language(s) \index{UML} \index{OCL} \textit{UML/OCL} by pointers.
\end{corollary}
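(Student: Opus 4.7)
The plan is to decompose the corollary into its two assertions and handle each independently. First I would prove that the strengthened operators $\circ$ and $||$ preserve the locality principle (Property 4 from sec.\ref{sect:ExprPredicates}); second I would argue the semantic adequacy of the strengthening as an OCL/UML extension by pointers.

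For the locality claim, I would start by recalling that locality was characterised as: deletion, alteration, or insertion of a single pointer induces only minimal change in the heap formula and its graph. Under the original ambiguous $\star$, a local edge modification could, in the worst case, require reinterpreting arbitrarily many surrounding subterms to determine whether $\star$ means connection or separation in each occurrence (this is precisely obs.\ref{obs:OverloadedSpatialOp} and the motivating arithmetic analogy with $\otimes$). I would show that after strengthening, any edge insertion corresponds to adjusting exactly one $\circ$-subterm, and any region split corresponds to one $||$-subterm, with no need to re-examine neighbouring conjuncts. The key lemmas to invoke are the monoidal/group properties of $\circ$ (lem.\ref{lem:HeapConjunctionMonoid}, theo.\ref{lem:HeapConjunctionGroupProperty}), the dual monoid on $||$ (lem.\ref{lem:monoidOverDisj}), and spatial distributivity (theo.\ref{lem:DistributivityForConjDisj}), since together these guarantee that local rewrites stay local under normalisation. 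Because objects are treated as heaps (sec.\ref{sect:ClassObjectAsHeap}), the same argument lifts verbatim to object heaps: a field-level update touches only the corresponding simplex or its $\circ$-context.

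For the UML/OCL extension claim, I would proceed by matching the expressive gap of OCL against the features provided by the strengthened operators. The excerpt already notes (end of sec.\ref{sect:TheoryOfObjects}) that OCL expresses second-order predicate logic terms over attributed fields but explicitly \emph{lacks} pointer specification. My plan is to exhibit a translation schema: OCL navigation expressions (\texttt{obj.field}) correspond to simple heaps $loc\mapsto val$; OCL \texttt{and}/associative composition of constraints on disjoint object webs corresponds to $||$; and linked object chains correspond to $\circ$. I would then argue that since $\circ$ and $||$ obey the CF, non-ambiguous algebra established in thes.\ref{thes:StricterOpsExpressibility}, the extension integrates cleanly without breaking OCL's declarative evaluation nor the encapsulation assumptions on object heaps that OCL already relies upon.

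The main obstacle is the first part, because locality was only informally characterised as a heuristic (Property 4) rather than stated as a formal invariant. I would therefore first need to fix a precise definition of locality --- most naturally as a bound on the size of the syntactic difference between two heap terms whose underlying graphs differ by one edge --- and then verify that bound is preserved by the $\circ/||$ rewrite rules. The second part is primarily expository and should follow once the correspondence tables above are written out; its only subtlety lies in ensuring that the translation respects OCL's own notion of object identity, which is why relying on the object-as-heap convention from sec.\ref{sect:ClassObjectAsHeap} is essential.
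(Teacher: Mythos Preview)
Your proposal is correct and considerably more detailed than the paper's own argument, which is essentially a two-line sketch. For the locality part, the paper does exactly what you do at the coarse level---it cites the monoid structure of $\circ$ (theo.\ref{theo:GeneralizedHeapConjunctionTheorem}, lem.\ref{lem:HeapConjunctionMonoid}) and the dual monoid on $||$ (def.\ref{def:HeapDisjunction}, lem.\ref{lem:monoidOverDisj})---but it stops there and does not attempt the formalisation of locality as a bounded syntactic difference that you identify as the main obstacle. So your worry is justified, but the paper simply does not discharge it; it treats the monoid/group laws as sufficient evidence that local rewrites stay local.

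The genuine divergence is in the OCL part. You propose a direct translation schema (navigation expressions $\leadsto$ simplices, disjoint constraints $\leadsto$ $||$, linked chains $\leadsto$ $\circ$) and then check compatibility with OCL's declarative evaluation. The paper takes a more abstract route: it observes that OCL's expressibility is equivalent to the third-order $\lambda$-calculus (citing the OCL specification), hence falls under def.\ref{def:TypedLambda2ndOrder}, and therefore can be represented in Prolog terms (sec.\ref{chapter:logical}) and extended by abstract predicates (sec.\ref{chapter:APs}). Your approach is constructive and gives an explicit embedding; the paper's approach is an expressibility argument that delegates the actual encoding to the Prolog/AP machinery developed elsewhere. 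Both are valid; yours would yield a usable translation table, while the paper's buys generality without committing to a specific mapping.
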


\begin{proof}
The idea behind this is introducing strengthened spatial operators, which may either connect (cf. theo.\ref{theo:GeneralizedHeapConjunctionTheorem}, lem.\ref{lem:HeapConjunctionMonoid}) or separate heaps (refer to def.\ref{def:HeapDisjunction} and lem.\ref{lem:monoidOverDisj}).\\
 
It must be noticed that \textit{UML/OCL}'s expressibility is equivalent to third-order $\lambda$-calculus (cf. \cite{oclspec} for a direct proof), so it can be defined as in def.\ref{def:TypedLambda2ndOrder}.
Hence, it can be represented in Prolog terms described in sec.\ref{chapter:logical} and extended by APs described in sec.\ref{chapter:APs}.
\end{proof}

These observations made, and those follow from previous analyses sections and remarks.
The following might be implied:

\begin{enumerate}
 \item A simple model must be represented in simple terms.
 \cite{suzuki82} may serve here as a negative example.
It must be stated that Suzuki's model at the end is only manageable at first glance since the objective complexity is hidden in short but very hard to predict proper pointer semantics.
An incomplete set at first glance represents a complete rule set, which may commit considerably complex \index{pointer} pointer operations. 
 Even very trivial modifications may easily lead to unexpected behaviour.
 So, in a highly dynamical system, minimal edits shall not affect the overall behaviour, mostly not \textit{remote} spatial parts of a heap specification.
 \item  Various previously mentioned memory models, particularly their conventions, are not that important after all.
 It was shown that introducing new conventions does not necessarily extend but restrict expressibility.
 Potential new features from new or improved PLs too often are too specific, s.t. rules will often still need to be modified.
 Hence, variability and extensibility are that important (see sec.\ref{sect:TheoryOfObjects}, sec.\ref{chapter:logical}).
It is more important to describe and provide a given memory model adequately with constraints rather than covering full language features from the practical perspective.
 The \index{heap graph} \textit{heap graph} is agreed to be the primary domain of all possible heap denotations since it is the minimal consent that all possible heap verifiers may agree on unconditionally.
 This decision is motivated by a \index{utilitarian approach} \textit{utilitarian approach}.
 For details, please refer to sec.\ref{chapter:expression} for an epistemological definition of the term "\textit{heap}".
\end{enumerate}

The proposed strengthened model stands for a more effective verification, where the heap theory may be separated from Hoare logical non-heap-related rules.
Verification rules are represented as Horn-rules.
Their interpretation terminates due to Prolog's term interpretation and finite transducers \cite{haberland14-1} (see thes.\ref{thes:PrologMakesHeapSpecSimpler}, cor.\ref{cor:TranducersTerminate}).
In the next step, the strengthened operator $\circ$ replaces $\star$, s.t. \index{abstract predicate} APs may be automatically recognised during \index{syntax analysis} syntax analysis later (see sec.\ref{chapter:APs}).

\subsection{Ambiguity of Spatial Operators}

Hoare initially suggested using mathematic formulae as a most precise \index{language!specification} specification and verification language.
Later, \index{predicate logic} predicate logics was preferred, as well as multiple logics derived.
In the case of \index{heap!specification} heap specification and \index{verification} verification, practice showed that specialised logics might be applied even more successfully (see sec.\ref{sect:HoareCalc}, sec.\ref{sect:HeapModels}).
However, unbound formulae may be more comfortable when it comes to automation, as will be seen later.
One of these more acceptable conditions of automation may be non-ambiguous spatial operators.
Strengthening conditions in a formula leads to restrictions.

The problem with precision and \index{expressibility} expressibility is fundamental and is not only due to logics.
It covers different areas, starting with recognition, passing-over notation, obeying constraints and ending with expressions.
The reason for emerging ambiguity shall be researched (sec.\ref{chapter:expression}, sec.\ref{chapter:stricter}).
For example, it is no surprise when some given expression language differs from its declarative specification, and as such, there are apparent gaps in their semiotics (see sec.\ref{chapter:logical}).

A fundamental question concerns the equality of two heap representations.
It may significantly complicate equality if having different heap normal-forms or \index{canonisation} canonisations.

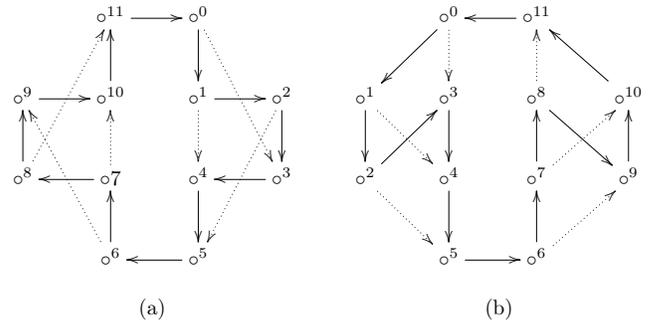
\begin{figure}[h]
\begin{center}
\scalebox{0.8}{
 \begin{tabular}{ccc}
 $\xymatrix{
  & \circ^{11} \ar[r] & \circ^{0} \ar[d] \ar@{..>}[ddr] &\\
  \circ^{9} \ar[r] & \circ^{10} \ar[u] & \circ^{1} \ar[r] \ar@{..>}[d] & \circ^{2} \ar[d] \ar@{..>}[ddl]\\
  \circ^{8} \ar[u] \ar@{..>}[uur] & \circ{7} \ar[l] \ar@{..>}[u] & \circ^{4} \ar[d] & \circ^{3} \ar[l]\\
  & \circ^{6} \ar[u] \ar@{..>}[uul] & \circ^{5} \ar[l]
 }$ &  \qquad &
 $\xymatrix{
  & \circ^{0} \ar@{..>}[d] \ar[dl] & \circ^{11} \ar[l] &\\
  \circ^{1} \ar@{..>}[dr] \ar[d] & \circ^{3} \ar[d] & \circ^{8} \ar[dr] \ar@{..>}[u] & \circ^{10} \ar[lu]\\
  \circ^{2} \ar@{..>}[dr] \ar[ru] & \circ^{4} \ar[d] & \circ^{7} \ar[u] \ar@{..>}[ur] & \circ^{9} \ar[u]\\
  & \circ^{5} \ar[r] & \circ^{6} \ar[u] \ar@{..>}[ur] &
 }$\\\\
  (a) & & (b)
 \end{tabular}}
\end{center}
 \caption{Isomorphic 3-regular heaps with objects}
 \label{fig:GraphIsomorphisms}
\end{figure}

So, the question regarding an \index{heap!isomorphism} isomorphism of two lifted heap graphs can be estimated as essential and further be discussed, referring to fig.\ref{fig:GraphIsomorphisms}.
In general, graph isomorphism is bound by exponential complexity, even for worse predictions.
In practice, exponential algorithms exist, which can be approximated by a third-order polynomial for a small number of graph vertices.
If the heap from fig.\ref{fig:GraphIsomorphisms} (a) contains only solid lines, later during analysis, there might be a connection between vertices $0$ and $5$ and both vertices were specified.
Deciding if isomorphism is present might become difficult.
However, complexity is reduced with types and labellings, which remain invariant.
The isomorphism complexity remains if pointers to a graph are renamed, and apart from that, this heap graph remains unchanged.
For fig.\ref{fig:GraphIsomorphisms}, this might be the case when applying the permutation (0 11)(1 8 2 10 3 9)(4 7)(5 6), starting with the graph from fig.\ref{fig:GraphIsomorphisms} b).
Notice the relatedness of permutations of heap locations w.r.t. arbitrary pointer rotation (see sec.\ref{sect:HeapModels}).
From a practical perspective, the isomorphism problem is only actual when it is necessary to check whether, in principle, a given abstract heap graph may be transformed into another abstract heap graph (cf. def.\ref{def:PredicateFolding}).
This transformation would be relevant only if naming may be parameters -- this is when names may change.

Sec.\ref{chapter:APs} considers APs in more detail.
For example, let us consider fig.\ref{ExampleConnectedHeapGraph}.

\begin{figure}[h]
\begin{center}
\scalebox{0.7}{
\begin{tabular}{c}
\xymatrix{
  && v_2 \ar[dr] &&& v_6 && v_7 \ar[dl] \\
  \ar[r] & v_0 \ar[rr] \ar[ur] && v_1 \ar[r] & v_3 \ar[ur] \ar[rr] && v_4 \ar[ul] \ar[rr] && v_5 \ar[ul]
}
\end{tabular}}
\end{center}
 \caption{Example of a connected heap graph}
 \label{ExampleConnectedHeapGraph}
\end{figure}
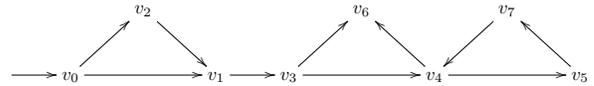

According to a minimal coupling, this graph may be bisected into sub-graphs among the bridge $v_1 \mapsto v_3$, see fig.\ref{ExampleSplitHeapGraph}.

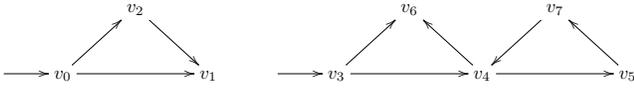
\begin{figure}[h]
\begin{center}
\scalebox{0.7}{
\begin{tabular}{lcr}
\xymatrix{
  && v_2 \ar[dr]\\
  \ar[r] & v_0 \ar[rr] \ar[ur] && v_1
} &  \quad &
\xymatrix{
  && v_6 && v_7 \ar[dl] \\
  \ar[r] & v_3 \ar[ur] \ar[rr] && v_4 \ar[ul] \ar[rr] && v_5 \ar[ul]
}
\end{tabular}}
\end{center}
 \caption{Example of a disconnected heap graph}
 \label{ExampleSplitHeapGraph}
\end{figure}

The graph may be described by separate predicates $\pi_0(v_0,v_1), \pi_1(v_3,v_4)$ and by predicate $\pi_2(v_4,v_5)$.
Alternatively, it may be abstracted to $\pi_0(v_0,v_1), \pi_{1,2}(v_3,v_5)$, where the graphs represented by predicates are connected, and visible vertices appear as arguments by the implicit definition $\pi_j$, see fig.\ref{ExampleSplitVariantsHeapGraphs}.

\begin{figure}[h]
\begin{center}
\begin{tabular}{c}
\xymatrix{
  \ar[r] & v_{0,1} \ar[r] & v_{3,4} \ar[r] & v_{4,5} \ar@/^1pc/[l]
}\\\\
alternatively as:\\\\
\xymatrix{
  \ar[r] & v_{0,1} \ar[r] & v_{3,5}
}
\end{tabular}
\end{center}
 \caption{Example of a possible heap dissection}
 \label{ExampleSplitVariantsHeapGraphs}
\end{figure}
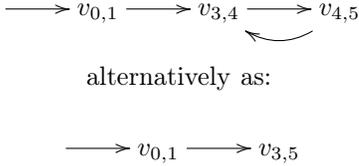

Unfolding, according to the definition of $\pi_j$, leads to the inverse.\\

Now, more generalised questions regarding an adequate heap representation may be asked:

\begin{enumerate}
 \item How to specify non-ambiguous formulae and perform the utmost determined verification?
 \item How to resolve isomorphism, locality and graph abstraction in simple means (w.r.t. heap representation)?
 \item How to restrict mandatory checks of objects in program statements (object representation)?
 \item How to solve equalities effectively of heaps if not all vertices (and edges) of the heap graph are defined (\textit{partial specification})?
 \item How to avoid redundant heap analyses (as well as redundant checks at each heap analysis phase; stepwise verification)?
\end{enumerate}

Sec.\ref{chapter:expression} introduced to \index{memory model!Reynolds} Reynolds' and \index{memory model!Burstall} Burstall's heap models.
This section reviews Reynolds's conclusions and effects and discusses derivable graphs and graphs generated after parameter modifications.
Properties and expressibility are observed.

\begin{definition}[Observable Heap by Reynolds]
 Heap is defined as $\bigcup_{A \subseteq Addr} A \mapsto Val^n$ with $n\ge 1$, where $A$ is some \index{address space} address space, and $Val$ denotes some \index{domain} values domain (e.g. \index{integers} integers or composite).
The observed heap operations behaviour (cf. theo.\ref{theo:ReynoldsHeapProperties}, def.\ref{def:HeapSatisfactionRelation}) may imply the following properties:
Given two heaps $H_1$ and $H_2$, then $H_1 \star H_2$, where $H_1$ denotes some heap assertion $H_1=(V_1,E_1)$ ($H_2=(V_2,E_2)$ in the analogy), where directed heap edges $E=V \times V$ are s.t. $\forall v_1 \in V_1$, $v_2 \in V_2$ with $v1 \ne v2$ and $V_1,V_2 \subseteq V$ hold in the following cases:

\begin{itemize}
 \item \textbf{First case} \index{heap!separation} (\textit{Separation}): $(v_1,v_2) \notin E_1$, and $(v_1,v_2) \notin E_2$.
 \item \textbf{Second case} \index{heap!merge} (\textit{Merge}): $\exists s \in V_1,\exists t \in V_2: (s,t) \in E_1$ or $(s,t) \in E_2$, then $H_1$ or $H_2$ contains $\star$-separated $s \mapsto t$.
\end{itemize}
\label{def:ReynoldsHeapDefinition}
\end{definition}

Both variables and \index{pointer} pointers are stored in the stack.
A pointer's content is stored in a \index{dynamic memory} heap.
The following domain equation holds according to \cite{berdine05}: $Stack=Values \cup Locals$.
As \index{program statement} program statements run, the internal memory state evolves and is checked against the explicit assertion annotations provided to the program.
Heap assertions are either \textit{true} or \textit{false}, depending on a specific heap to be interpreted.
The syntax of assertions is defined by def.\ref{def:HeapTermExtendedDefinition}.
From def.\ref{def:ReynoldsHeapDefinition} implies the \index{binary operator} binary operator $\star$ may be used with two intentions: either to specify two heaps do not overlap, or two heaps do overlap and share at least one common heap vertex.
Operator $\star$ is used as logical conjunction.
It is a spatial operator, which expresses the (relative) memory location and connectivity.
Conjuncting heap formulae define how two heaps are located and related to each other.
Heaps imply dynamic memory is occupied at a variable but fixed address.
The heap's types define their size.
If defining connectivity between two heaps as a \index{bipartite graph} bipartite graph \index{bigraph} (bigraph), then on its left-hand side, there are pointer locations, and on its right-hand side, the assigned content set.
A \index{bigraph!maximal matching} maximal matching is required to reduce the number of \index{conjunction} conjunctions for a given heap formula to describe a connected heap graph fully.
This approach is cumbersome in practice since there is no definite desire from the developer's side to describe a \textit{most compact heap graph} whatsoever (see sec.\ref{sect:HeapGraph}).
Nevertheless, if the obtained heap graph strongly differs from the expected, then unexpected \index{garbage} superfluous regions of the heap are indicators for an incorrectly working incoming program.
Other criteria may be critical too, for example, a good relationship between syntactical representations and a graph to locate vertices and edges and navigation through edges.
A compact does not benefit here, in fact, the opposite is the case, and it would be hard to read.
Concrete dynamic memory cells must be compared.
Hence, \textit{regular expressions} as the proposed comfortable variant are excluded.
Regular expressions, as mentioned, suffer from \index{locality} \textit{non-locality}: as soon as the heap graph changes locally at one place, then often the complete regular expression changes.
Desired behaviour is different.
Ideally, a single edge insertion has to change the formula hardly, and if, then only the affected part.
\index{heap!sub-} \index{heap} (Sub-)heaps not affected shall not impose changes to unrelated parts of the formula.

When comparing def.\ref{def:ReynoldsHeapDefinition} with the definitions from sec.\ref{chapter:expression}, quite considerable complexity may be implied, and definitions are ambiguous as long as only fragments are analysed.
In order to fully resolve dependencies, it is always required to analyse all \index{conjunct} conjuncts comprehensively.
The given definition may be considered an attempt to define a single heap because Reynolds did not define it as a singleton.
He defines a heap set instead.
Unfortunately, other authors (see sec.\ref{chapter:intro}) also only define a heap in fact as plural, although many author's informal remarks and sympathies tend to support more a singular definition even if it is plural.
It must be considered heap definition in def.\ref{def:ReynoldsHeapDefinition} as derived from observations is a direct singular definition and does not contradict its origin plural definition.
That is why the motivation sparks off  to \index{strict operator} strengthen $\star$ and replace it with a direct operation.
Reynolds' $\star$ is a classic example of \index{structuralism} \textit{structuralism}.
As soon as the new two operators are defined ($\circ$, $||$), properties and equalities are researched, and a new \index{term!algebra} \textit{term algebra} is defined.
It will lead to a new quality level regarding progressive \index{confluency} confluency since the heap calculus to be obtained may be decided on enclosed terms.
\index{term!algebra} Term algebras define new \index{formal theory} formal theories over \index{heap} heaps, which may immediately be used as Horn-rules in Prolog (see thes.\ref{thes:PrologMakesHeapSpecSimpler}).

\subsection{Strengthening Spatial Operators}
\label{sect:ConjunctionAndDisjunction}

Due to ambiguity, $\star$ may be used for heap conjunction or disjunction.
Apart from strict differences (thes.\ref{thes:StricterOpsExpressibility}) in different \index{SL} SL implementations, $\star$ is often used as another logical conjunction.
The solution to this problem is presented in this section.
Heap conjunction is formally introduced as well as its properties.
Afterwards, disjunction is introduced.
W.l.o.g. expressibility is not restricted by the new operations (cf. theo.\ref{theo:ReynoldsHeapProperties}).

\begin{definition}[Heap Conjunction]
\textit{Heap conjunction} $H \circ \alpha \mapsto \beta$ is defined as heap graph, where $G=(V,E)$ is the corresponding heap graph for given heap $H$, and where $\alpha \mapsto \beta$ denotes a base (simple) heap:

$\left\{
\begin{array}{lll}
  (V \cup \{\alpha,\beta\} \cup \beta', && \mbox{if } isFreeIn(\alpha,H) \\
   E \cup \{(\alpha,\beta)\} \cup \{ (\beta,b) | b \in \beta'\})  && \mbox{if } H=\underline{emp}\\
								  && \ \ \ (V=E=\emptyset)\\
  \mbox{\underline{false}}  && \mbox{otherwise}
\end{array}
\right.
$
\label{def:HeapConjunctionDefinition}
\end{definition}

Here $\beta' = vertices(\beta) \subseteq V$ defines all the heap graph vertices, which are pointing from $\beta$.
In case $\beta$ is an \index{object instance} object, then also all \index{object!field} object fields are considered that are pointers.
As $\alpha$ may point to just one unique heap graph vertex (e.g. \index{path accessor} \textit{path accessor} to some object), no more than one matching vertex exists in $isFreeIn$ for some given heap $H$.
The general assumption of the definition above is that a \index{stepwise graph construction} stepwise heap graph construction using conjunction always exists. There always exists one matching vertex.
Otherwise, two heaps may not be joined.

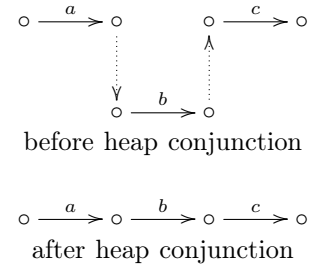
\begin{figure}[h]
\begin{center}
\begin{tabular}{c}
 $\xymatrix{
    \circ \ar[r]^a & \circ \ar@{..>}[d] & \circ \ar[r]^c    & \circ \\
                   & \circ \ar[r]^b     & \circ \ar@{..>}[u]
 }$\\
 before heap conjunction\\\\
 $\xymatrix{
    \circ \ar[r]^a & \circ \ar[r]^b & \circ \ar[r]^c    & \circ
 }$\\
 after heap conjunction
\end{tabular}
 \caption{Heap graph before and after conjunction}
 \label{fig:HeapGraphConjBeforeAfter}
 \end{center}
\end{figure}

For example, three pairs of \index{pointer} pointers need to be joined (pointer points to some content) $a,b,c$ (see fig.\ref{fig:HeapGraphConjBeforeAfter}).
First, heap $a$ has to be expressed, which may either be arbitrary or $\underline{emp} \circ a$.
Only if $a$ exists, $a$ points to some content that equals the initial heap $b$ and is not connected (for the moment being, we consider this to be the initial state), only then both heaps are connectible.
If assuming the opposite, so when in the heap graph there are two same content cells, then by definition, this is not possible, since only one unique content (see later) is allowed though with an arbitrary number of aliases.
Assuming there are the exact contents, but not the same as mentioned, then the accessor path must vary.
Otherwise, a contradiction is found.
In both cases, the possibility of incorrect conjunction is excluded.
So, the result is a connected heap $a \circ b$.
Now conjunction may continue under just mentioned conditions for $c$.
In case of success, we get the heap graph as illustrated in fig.\ref{fig:HeapGraphConjBeforeAfter}.
As we are interested in the conjunction of arbitrary graphs, e.g. binary trees, we allow conjunction in any parts of the connected graph.
For example, $a \mapsto 5$ is valid conjunction.
However, $a \mapsto 5 \circ b \mapsto 5$ is not.
Thus, we may express \index{alias} aliases, for example, heap graph $\xymatrix{ x \ \circ \ar[r] & \circ^z &  \ar[l] \circ \ y }$ as heap term expression  $x \mapsto z \circ y \mapsto z$.

Heaps may be connected in different ways when a graph vertex is an object.
For example, it may be agreed upon object assignment only changes a pointer or just one single qualified field.
The assignment may also be subject to different terms.
For example, when all fields are simultaneously assigned either to some input vector (e.g. array or string with specified delimiters), only a single field is \index{field!assignment} assigned, and all others are dropped (see sec.\ref{chapter:expression}).
For simplicity, but w.l.o.g., for a C(++)-dialect further, only the assignment of kind one-to-one is considered (see sec.\ref{sect:ClassObjectAsHeap}).\\
\textbf{Remark:}
Let $\varPhi_0$ be some heap, then $\varPhi = \varPhi_0 \circ a_0 \mapsto b_0 \Leftrightarrow \exists (a_m \mapsto b_m) \in \varPhi_0 \wedge (a_m=a_0, b_m\neq b_0 \vee a_m \neq a_0, b_m=b_0)$ follows.

\begin{theorem}[Conjunction of Generalised Heaps]
If given two heaps $H_1$ and $H_2$, the conjunction $\circ$ links those together to $H_1 \circ H_2$, if at least one common vertex exists in both \index{heap graph} heap graph representations.
By default, it is agreed upon $H_1 \circ \underline{emp} \equiv \underline{emp} \circ H_1 \equiv H_1$ holds.

In contrast to def.\ref{def:HeapConjunctionDefinition}, the first part of the $\circ$-term is scanned as potentially matching vertex -- this is chosen arbitrarily and does not impose any obligation.
Otherwise, it may freely be changed.
Further, it is agreed upon that $H_1 \circ H_2$ denotes a united heap graph.
\label{theo:GeneralizedHeapConjunctionTheorem}
\end{theorem}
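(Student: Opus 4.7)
The plan is to proceed by structural induction on $H_2$ using Definition~\ref{def:HeapTermDefinition}, since any heap term is either $\underline{emp}$, a simplex $\alpha \mapsto \beta$, or a conjunction of two smaller heap terms (constants $\underline{true}$, $\underline{false}$ being handled by the agreed $\underline{emp}$-convention or collapsing to \underline{false} as in Definition~\ref{def:HeapConjunctionDefinition}). The two stated conventions $H_1 \circ \underline{emp} \equiv \underline{emp} \circ H_1 \equiv H_1$ dispatch the degenerate base case immediately, so there is no graph-level work to do there.

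For the simplex base case $H_2 = \alpha \mapsto \beta$, the claim is exactly Definition~\ref{def:HeapConjunctionDefinition} applied to $H_1$: I would only have to reinterpret the ``at least one common vertex'' hypothesis as the negation of $isFreeIn(\alpha,H_1)$ together with the matching of $\beta$ (or the matching of $\alpha$) against a vertex already present in $H_1$, so that the union $V_1 \cup \{\alpha,\beta\} \cup \beta'$ collapses to a genuine vertex identification rather than producing a disconnected component. The Remark immediately following Definition~\ref{def:HeapConjunctionDefinition} handles precisely the bookkeeping needed to show that the conjunction is well-defined and that no spurious duplicate simplex is created.

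For the inductive step, write $H_2 \equiv H_2' \circ s$ with $s = \alpha \mapsto \beta$ a simplex and $H_2'$ strictly smaller. I would show $H_1 \circ (H_2' \circ s) \equiv (H_1 \circ H_2') \circ s$, i.e.\ push the simplex onto $H_1 \circ H_2'$, and then invoke the induction hypothesis on $H_1 \circ H_2'$ followed by the base case on $s$. The shared-vertex hypothesis is the subtle point: a common vertex of $H_1$ and $H_2 = H_2' \circ s$ may lie in $H_2'$, in $s$, or in the vertex that $H_2'$ and $s$ themselves share. In the first two sub-cases the induction hypothesis (resp.\ the base case) fires directly; in the third sub-case I would use the Remark to re-associate so that the shared vertex becomes boundary-visible before applying the inductive step. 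The bottom-up construction from simplices, which is explicitly assumed just after Definition~\ref{def:HeapConjunctionDefinition}, guarantees such a decomposition always exists.

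The main obstacle will be the well-definedness of this re-association: namely, proving that the \underline{false}-branch of Definition~\ref{def:HeapConjunctionDefinition} is never silently triggered during the induction when the global hypothesis (``at least one common vertex exists'') holds. Concretely, I must check that the $isFreeIn$ predicate used in the simplex definition is consistent with vertex identification happening inside $H_2'$, i.e.\ that conjunction never forces two distinct contents to be assigned to the same location (this is exactly the alias-versus-contradiction distinction illustrated by the examples $a \mapsto 5 \circ b \mapsto 5$ versus $x \mapsto z \circ y \mapsto z$ given just before the theorem). Once this invariant is pinned down as an auxiliary lemma about $\circ$ and $isFreeIn$, the induction goes through and yields the desired united heap graph $H_1 \circ H_2$.
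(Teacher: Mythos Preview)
Your plan is correct and follows essentially the same route as the paper: both arguments reduce the generalised conjunction to repeated applications of Definition~\ref{def:HeapConjunctionDefinition} via induction over a decomposition of $H_2$ into simplices, handle the no-common-vertex case as \underline{false}, and then argue that a single shared vertex suffices to connect. Your structural induction on $H_2$ with explicit re-association is in fact more carefully laid out than the paper's version, which is brief and relies on an informal contradiction about where the ``beginning or end'' of the joined graph can lie; the paper additionally remarks that the degenerate $a \circ a$ case must be filtered (via an ``active set'' of already-processed simplices), a point you touch on only implicitly through the $isFreeIn$ invariant.
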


\begin{proof}
This theorem is a generalisation of def.\ref{def:HeapConjunctionDefinition}.
Both $H_1$ and $H_2$ may be simple heaps $a_1 \mapsto b_1 \circ a_2 \mapsto b_2 \circ \cdots \circ a_n \mapsto b_n$.
If a shared element does not exist in both graphs, then according to def.\ref{def:HeapConjunctionDefinition}, this yields $\underline{false}$ -- if this is shown first, then soundness holds.
This implication, however, matches the expected \index{conjunction} conjunction.
Otherwise, if at least one common element exists, then according to \index{induction} induction, we select one element, and then both heaps connect.
It shall be noted that conjunction only assumes graphs are connectible, and we are not interested in more than one shared elements to be connected.
All vertices except one may hypothetically be used for \index{graph!merge} joining.
The only exception is naturally the one to be used for joining graphs.
In this case, the resulting graph remains \index{graph!simple} simple.
Otherwise, the joined heap graph's beginning or end is either in $H_1$ or in $H_2$.

Furthermore, it must also be in both graphs $H_1$ and $H_2$ simultaneously.
The latter, however, can be excluded.
Due to this contradiction, the validity of the theorem follows.

By definition, $a\circ a$ equals \textit{\underline{false}}, which needs to be filtered for all conjuncts.
The discussion on \index{SMT-solver} $a\circ a$-solver will be later.
It may be implemented using some "\textit{active set}", which contains statically all successfully processed simple heaps.
\end{proof}

\begin{conventions}[Locations]
In \index{abstract predicate} APs, \index{location} \textit{locations} may be \index{symbol} symbols.
For the sake of usability, it is agreed upon APs and locations (mainly w.r.t. object fields and locations), the \index{associativity} left-associative binary operator grants that field access \index{.-operator} ".".
Field access may be inductively recursed.
\label{obs:ObjectPathAccessor}
\end{conventions}

Left-associativity denotes some term $object1.field1.field2.field3$ by default to equal:
$$(((object1).field1).field2).field3.$$
Thus, symbolic variables may replace parts of the access expressions to a field.
That increases modularity, but particularly also expression flexibility.

\begin{lemma}[Monoid over Conjunction]
$G=(\Omega, \circ)$ is a \index{monoid} monoid, where $\Omega$ denotes the set of well-defined \index{heap graph} heap graphs and the binary operator $\circ$ denotes \index{conjunction} heap conjunction.
\label{lem:HeapConjunctionMonoid}
\end{lemma}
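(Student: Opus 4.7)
The plan is to verify the three monoid axioms in turn: closure, existence of a two-sided identity, and associativity. The last is the real obstacle; the first two follow almost immediately from the preceding definitions and from theorem \ref{theo:GeneralizedHeapConjunctionTheorem}.

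For closure, I would argue that whenever $H_1, H_2 \in \Omega$ are well-defined heap graphs, $H_1 \circ H_2$ yields either a well-defined heap graph (by merging vertex sets and edge sets as in def.\ref{def:HeapConjunctionDefinition}) or the constant $\underline{false}$. Under the convention that we restrict attention to those pairs whose conjunction is well-defined (or, equivalently, extend $\Omega$ by the absorbing element $\underline{false}$), closure holds by construction. The identity element is $\underline{emp}$: the clause $H_1 \circ \underline{emp} \equiv \underline{emp} \circ H_1 \equiv H_1$ is explicitly stipulated in theorem \ref{theo:GeneralizedHeapConjunctionTheorem}, so no further work is needed here beyond citing it.

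The main work is associativity: $(H_1 \circ H_2) \circ H_3 = H_1 \circ (H_2 \circ H_3)$. My plan is to reduce to the case where $H_2$ and $H_3$ are simple heaps $\alpha_2 \mapsto \beta_2$ and $\alpha_3 \mapsto \beta_3$, then lift to arbitrary $H_2, H_3$ by induction on the number of simplices, since by def.\ref{def:HeapTermDefinition} every heap term is built inductively from simplices via $\circ$. For the base case, unfold both sides using def.\ref{def:HeapConjunctionDefinition}: each side augments the underlying vertex/edge sets of $H_1$ with the same two simplices, subject to the same freshness/consistency conditions on the locations $\alpha_2, \alpha_3$. Because the union of sets of vertices and edges is itself associative, and because the freshness predicate $isFreeIn$ depends only on which locations are already present (not on the order in which they were added), both sides produce the same graph $(V \cup \{\alpha_2,\beta_2,\alpha_3,\beta_3\}, E \cup \{(\alpha_2,\beta_2),(\alpha_3,\beta_3)\} \cup \ldots)$ whenever either side is defined.

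The delicate point — and what I expect to be the main obstacle — is handling the \underline{false} case symmetrically. It may occur that $H_1 \circ H_2 = \underline{false}$ (because $\alpha_2$ clashes in $H_1$) while $H_2 \circ H_3$ is well-defined, and then one side of the associativity equation is \underline{false} while the other requires further reduction. I would argue that a clash in $H_1$ against $\alpha_2$ is preserved under any extension of $H_2$ by further simplices (freshness is monotone: if $\alpha_2$ is not free in $H_1$, it is still not free in $H_1$ after adjoining $H_3$), so $H_1 \circ (H_2 \circ H_3)$ must likewise reduce to \underline{false}, using the remark immediately after def.\ref{def:HeapConjunctionDefinition} that a location collision forces the result to be \underline{false}. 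Symmetrically for the dual case. Treating \underline{false} as an absorbing element of $(\Omega, \circ)$ then closes the argument, and the monoid axioms are satisfied.
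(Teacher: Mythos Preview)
Your proposal is correct and follows the same overall structure as the paper: verify closure, associativity, and the two-sided identity $\underline{emp}$, drawing on def.~\ref{def:HeapConjunctionDefinition} and theo.~\ref{theo:GeneralizedHeapConjunctionTheorem} for closure and identity just as the paper does.

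The one noteworthy difference is in the treatment of associativity. The paper dispatches it informally by pointing at fig.~\ref{fig:HeapGraphConjBeforeAfter} and observing that the connecting vertex remains invariant regardless of which pair is joined first. You instead run a structural induction on the number of simplices, reduce both sides to the same vertex/edge union, and then separately argue that the $\underline{false}$ case is symmetric via a monotonicity property of $isFreeIn$. Your version is more explicit and actually addresses the partiality issue (what happens when one bracketing yields $\underline{false}$) that the paper's proof glosses over; conversely, the paper's pictorial argument is shorter and relies on the reader accepting that graph union is order-insensitive without spelling out the induction.
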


\begin{proof}
In order to show that $G$ is a \index{monoid} \textit{monoid}, it is necessary to show: (i) $\Omega$ is closed \index{closure} under $\circ$, (ii) $\circ$ is associative, and (iii) $\exists \varepsilon \in \Omega. \forall m \in \Omega: m \circ \varepsilon = \varepsilon \circ m = m$.

Let $\omega \in \Omega$ be some connected heap graph obtained by the binary functor $\mapsto$ according to def.\ref{def:HeapTermDefinition}.
According to def.\ref{def:HeapConjunctionDefinition} $\forall \omega \in \Omega: \omega \circ \omega = \underline{false}$ holds.
Otherwise, for $\omega_1, \omega_2 \in \Omega$ there may be only two cases.
If $\omega_1$ and $\omega_2$ share at least one vertex, then according to theo.\ref{theo:GeneralizedHeapConjunctionTheorem}, the corresponding heap graph is defined, otherwise the result is $\underline{false}$ (denoting $\omega_1$ and $\omega_2$ do not intersect).
Thus, we showed that $\Omega$ is enclosed over $\circ$ and that a heap graph is the result of conjunction.
In that case, a link is successfully established.

Next, associativity must be shown.
Namely, $m_1 \circ (m_2 \circ m_3) = (m_1 \circ m_2) \circ m_3$ holds.
Fig.\ref{fig:HeapGraphConjBeforeAfter} immediately implies the validity of the equation from both sides.
The equation holds because of two possible options.
First, it does not matter if the contents of $a$ and $b$ are connected.
Second, it does not matter whether $a$ is linked to $b\circ c$.
The latter might be because the connecting vertex $b$ remains \index{invariant} invariant, where the ordering of \index{conjunction} conjunction changes.

$G$ builds up a \index{group!sub-} subgroup.
It remains to show the existence of a \index{group!neutral element} neutral element $\varepsilon$ s.t. (iii) holds.
The equations follow from a lifted theo.\ref{theo:GeneralizedHeapConjunctionTheorem}.
\end{proof}

\textbf{Remark:} (i) implies $c \not \in b \wedge c \neq a$: $a \mapsto b \ \circ \ c \mapsto d \equiv \underline{false}$, and $a\mapsto b \ \circ \ a \mapsto d \equiv \underline{false}$ holds.
It is evident if there is a choice, then it does not matter which vertex to connect first -- the result will be the same due to \index{confluency} \textit{confluency} due to (ii) property shown later in lem.\ref{lem:HeapConjunctionGroupProperty} (cf. fig.\ref{fig:CRTonHoareTriples}).

\textbf{Remark:} Closure (i) indicates the \index{non-repetitiveness} \textit{non-repetitiveness} \index{substructural logic} of a substructural logic (SL as such, see sec.\ref{sect:ExprPredicates}), which still holds and will be shown later.

\begin{theorem}[Abelian Group over Conjunction]
$G=(\Omega, \circ)$ is an \index{group!Abelian} Abelian group.
\label{lem:HeapConjunctionGroupProperty}
\end{theorem}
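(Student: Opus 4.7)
The plan is to build on lem.\ref{lem:HeapConjunctionMonoid}, which already establishes closure, associativity, and the neutral element $\underline{emp}$. To upgrade the monoid $G = (\Omega, \circ)$ to an Abelian group, two further properties remain: commutativity of $\circ$, and the existence of a two-sided inverse $h^{-1}$ for every $h \in \Omega$.

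First I would establish commutativity $h_1 \circ h_2 = h_2 \circ h_1$. The natural route is to appeal to theo.\ref{theo:GeneralizedHeapConjunctionTheorem}: the theorem characterises $h_1 \circ h_2$ purely by the resulting heap graph, i.e.\ by the union $(V_1 \cup V_2, E_1 \cup E_2)$ together with the join condition of sharing at least one common vertex. Since set union and the shared-vertex condition are symmetric in their arguments, the resulting graph is identical irrespective of which operand was scanned first. The only subtlety is the case that produces $\underline{false}$ (two simplices clashing at a location, or no common vertex at all): here symmetry is immediate from def.\ref{def:HeapConjunctionDefinition} since the predicate $isFreeIn(\alpha, H)$ and the vertex-sharing requirement do not depend on which side of $\circ$ one starts the scan from. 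I would formalise this by induction over the simplex decomposition of the smaller operand.

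Next, and this is the hard part, I would establish inverses. At first glance heap conjunction looks strictly additive and hence non-invertible, so the proof has to lean on conv.\ref{conv:EmptyHeapInversion} to define $h^{-1}$ as a formal inverse heap such that $h \circ h^{-1} = \underline{emp}$. I would set this up inductively: for a simplex $\alpha \mapsto \beta$ define $(\alpha \mapsto \beta)^{-1}$ as its formal opposite, and extend by $(h_1 \circ h_2)^{-1} = h_2^{-1} \circ h_1^{-1}$ (which collapses to $h_1^{-1} \circ h_2^{-1}$ once commutativity is known). The main obstacle is verifying that the closure condition of $\Omega$ is preserved, i.e.\ that formal inverses still count as legitimate heap-graph objects under the conventions of sec.\ref{chapter:stricter}; this is where I would invoke conv.\ref{conv:EmptyHeapInversion} and the subtraction semantics anticipated by thes.\ref{thes:SimplificationByDiffingHeaps} (the heap diff operation), arguing that the inverses are exactly the elements the diff-calculus produces, and that $h \circ h^{-1}$ cancels edge-by-edge down to the empty graph.

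Finally, I would assemble the four group axioms: closure and associativity from lem.\ref{lem:HeapConjunctionMonoid}, identity $\underline{emp}$ likewise, inverses from the construction above, and commutativity from the first step, concluding that $G = (\Omega, \circ)$ is Abelian. A brief remark would note that the non-repetitiveness constraint $h \circ h = \underline{false}$ does not conflict with the group structure, because non-repetitiveness applies to genuine simplex clashes whereas $h \circ h^{-1}$ is a formal cancellation in the diff sense rather than a repetition of simplices at the same location.
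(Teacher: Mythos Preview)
Your proposal is correct and follows essentially the same route as the paper: lean on lem.\ref{lem:HeapConjunctionMonoid} for the monoid part, then establish commutativity by induction over the simplex decomposition (the paper also argues the base case directly from def.\ref{def:HeapTermDefinition} and the inductive step from the join picture in fig.\ref{fig:HeapGraphConjBeforeAfter}), and finally construct inverses.

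The one place where the paper is more concrete than your outline is the inverse. Rather than invoking conv.\ref{conv:EmptyHeapInversion} and the diff calculus abstractly, the paper introduces an explicit ``negated points-to'' simplex $a \not\mapsto b$ as the inverse of $a \mapsto b$, gives it operational removal semantics (delete the edge, then delete either endpoint if it becomes unreferenced), and spells out two normalisation steps needed so that the result of $H \circ H^{-1}$ stays inside $\Omega$: replacing $\circ$ by $\|$ when a bridge is severed, and pruning isolated vertices. Your phrase ``verifying that the closure condition of $\Omega$ is preserved'' is exactly where these normalisation steps live, so you have identified the obstacle; the paper just makes the fix explicit. The homomorphism $(h_1 \circ h_2)^{-1} = h_1^{-1} \circ h_2^{-1}$ you sketch is deferred in the paper to a separate lemma (lem.\ref{lem:HeapInversionHomomorphism}) rather than proved inside this theorem.
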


\begin{proof}
Due to lem.\ref{lem:HeapConjunctionMonoid}, $G$ denotes a \index{monoid} monoid.
Hence, it is still necessary to show, (i) the existence of an inverse for any element from the \index{carrier set} carrier set over heap graphs, s.t.:
\begin{eqnarray}
\forall \omega \in \Omega. \exists \omega^{-1} \in \Omega: \omega \circ \omega^{-1} = \omega^{-1} \circ \omega = \varepsilon
\label{eqn:InverseExists}
\end{eqnarray}
hold and (ii) $\circ$ is a commuting operator.

Let us show first (ii) holds that in the base case, $loc_1 \mapsto var_1 \circ loc_2 \mapsto var_2 = loc_2 \mapsto var_2 \circ loc_1 \mapsto var_1$ holds due to the inductive def.\ref{def:HeapTermDefinition}.
Also, the inductive case holds as long as conditions hold of $\circ$.
The induction condition may be obtained from fig.\ref{fig:HeapGraphConjBeforeAfter} the $\circ$-operator commutes for two arbitrary connected heaps. 
However, as soon as APs are considered, the denotation of $\circ$-connected terms may be limited by a predicate's boundaries, and those terms may then not be dropped, as there still might be a subgoal in APs.
From a practical perspective, this does not bother.
On the contrary, it requires the developer to obey stricter software modularity, positively affecting the specification.

The proof continues by showing property (i).
An \index{group!inverse element} inverse element, if existing, neutralises any given element.
So, we better answer the following questions:
What specifically hinders us from defining an \index{heap!inverse} inverse heap?
When discussing natural numbers, then the inverse of addition is subtraction enclosed to the same \index{carrier set} carrier set (for simplicity, we skip corner cases).
The same happens to a field over \index{complex numbers} complex numbers, which is an extension of \index{field!real numbers} real numbers.
Moreover, even if real or complex numbers are not countable, in practice, arithmetic field \index{field!arithmetic} extension still leads to a significant simplification of calculation.
Even if the answer to numerous questions related to real numbers might seem vague, such as: "\textit{What does the imaginary number "$i$" stand for in reality}"?
Still, it showed that $i$ is useful for calculations in numerous applications due to the identity $i^2=e^{i\pi}$.
So, it would only be fair then to ask: Why not postulate some inverse heap in eqn.\ref{eqn:InverseExists}, at least so we can be sure to calculate the right thing until shown otherwise?

Thus, what can \textit{intuitively} be understood by some \index{heap!inverse} \index{heap!inversion} \textit{inverse heap}?
If natural or real numbers are meant, then a new imaginary axis is introduced: numbers increase or decrease relative to zero depending on the axis' direction.
However, how to deal with heaps, for example, with ordinary ones of form $a\mapsto b$?
May perhaps an invertible heap be assigned the inverse of an accompanying predicate?
It might not be accurate.
May a heap inversion perhaps be assigned the nil denotation?
However, this may not be correct since any \index{heap!empty} non-empty heap will, by definition, then be empty.
Furthermore, what about \index{heap!inversion} inversion of the empty heap?
Such a na\"{\i}ve definition is counter-productive since it does not seem to cover all heaps comprehensively.
What if sides of heap graph edges flip, e.g. $a\mapsto b$ turns into $b\mapsto a$?
This variant would be just an interesting idea but not tractable yet because \index{conjunction} conjunction does not contradict such a definition. 
What is needed instead is conjunction yields the empty heap.
In such an approach, the left-hand side would be left undefined in the case of an object.
It is easy to imagine that joining heaps removes some positive heap if connected with its negative supplement.
Thus, inversion means some transcendental removal operation of a heap, and the operator may be applied to any heap, including a complex.
$(a\mapsto b)^{-1}$ may denote "$a$ not pointing to $b$" or better as "$a$ points inversely to $b$".
The first explanation fails, because \index{pointer} "\textit{does not point}" by mistake might be considered as $a\mapsto c$, where $\exists c\in \Omega, c\ne b$ -- this implies vertex $a$ exists and between $a$ and some element whose content differs from $b$ --- all this is false because there is no such evidence to trust.
Hence, the hypothetical case "\textit{inversely points to}", namely "\textit{inversely has a reference to}", requires some deletion regardless of the strange value, which may have allowed a clean deletion of all superfluous elements, which require a further check.
At first glance, this looks odd.
Until now, only heap parts are specified existing.
Inversion changes the rules of the play.

\index{group!inverse element} Inversion shall not be mixed up with a heap that may not be in a heap.
It would be predicate negation, but heap inversion is an operation --- currently, our focus is on eqn.\ref{eqn:InverseExists}.
The given equation means that for "\textit{the negated points-to}", $a \not \mapsto b$ holds.
--- It is a predicate relation between $a$ and $b$.
In generalised form \index{heap!negative} a negated heap $H^{-1}$, which by default holds: $a \mapsto b \circ a \not \mapsto b = \underline{emp}$ and $a \not \mapsto b \circ a \mapsto b = \underline{emp}$, and lifted $H \circ H^{-1} = H^{-1} \circ H = \underline{emp}$.
It implies that $\omega \circ \omega^{-1}$ removes a heap.
If needed, the superfluous edge and vertex are removed from the \index{heap graph} heap graph if no more incoming from or outcoming to edges remain w.r.t. still existing graph vertices.
Thus, the given equality about inverse pointers becomes clear.
It is not difficult to check the equality $H \circ H^{-1} \circ H \equiv H$.
In the example from fig.\ref{fig:HeapGraphInversion}, the heap state before \index{group!inverse element} inversion is $d\mapsto a \ \circ \ a\mapsto b \ \circ \ c \mapsto b$, but when inversion $\circ (a\mapsto b)^{-1}$ is applied, this results in $d\mapsto a \ \circ \ a\mapsto b \ \circ \ c \mapsto b \ \circ \ (a\mapsto b)^{-1}$, which equals $d\mapsto a \ \circ \ a\mapsto b\ \circ \ (a\mapsto b)^{-1} \ \circ \ c \mapsto b$ equals $d\mapsto a \ \circ \ c \mapsto b$, which is not evident since both \index{pointer} pointers do not overlap.
So, this calculation is not fully cleaned up and requires further steps to correct superfluous elements.
Thus, two more normalisation steps are needed.

\textbf{Normalisation -- First step}: This step is generic.
If a \index{graph!bridge} bridge exists between the considered heap graphs as the only connection between those, then the operator shall be replaced by \index{disjunction} disjunction (see later).

As a \index{graph!bridge} \textit{bridge} between $a$ and $b$ is found, $\circ$ is replaced by $\|$ in the remaining term.
The result may be considered plausible.
For completeness, it may become needed vertices must be entirely removed from the corresponding \index{heap graph} heap graph.
This need occurs on \index{object field} object field \index{location} locations.

\textbf{Normalisation -- Second step}: Deleting vertex $a$ can be accomplished when there are no more pointers left to $a$ in the remaining heap.

We apply those two \index{heap!normalisation} normalisation steps to avoid the mentioned exclusions (cf. generalised heaps with obs.\ref{obs:SimplificationByGeneralisation}).
\end{proof}

\textbf{Remark:} \index{heap!generalisation} Generalised heaps were not discussed.
For the soundness proof, it is necessary to show $H \circ H^{-1} \equiv \underline{emp}$.
The proof can be taken out inductively over $\circ$ by using $(g_1 \circ g_2)^{-1} \equiv g_1^{-1} \circ g_2^{-1}$, s.t. a \index{homomorphism} homomorphism exists for "$.^{-1}$" relating $\circ$ (see lem.\ref{lem:HeapInversionHomomorphism}).

\begin{conventions}[Heap Invertibility]
\label{conv:EmptyHeapInversion}
Condition (i) implies a partial case $\underline{emp} \circ \underline{emp}^{-1} \equiv \underline{emp}$ because we agree upon $\underline{emp}^{-1} \equiv \underline{emp}$.
\end{conventions}

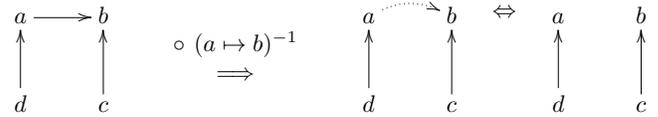
\begin{figure}[h]
\begin{center}
\scalebox{0.9}{
\begin{tabular}{lllll}
 \xymatrix{
   a \ar[r] &  b\\
   d \ar[u] & c \ar[u]
 }
& \parbox[t]{2.7cm}{\begin{center}$\circ \ (a \mapsto b)^{-1}$\\ $\Longrightarrow$\end{center}} &
 \xymatrix{
   a \ar@{..>}@/^/[r] &  b\\
   d \ar[u] & c \ar[u]
 }
 &
  $\Leftrightarrow$
 &
 \xymatrix{
   a  &  b\\
   d \ar[u] & c \ar[u]
 }\\
\end{tabular}}
\end{center}
 \caption{Heap graph before and after inversion}
 \label{fig:HeapGraphInversion}
\end{figure}

\textbf{Explanation:} $H \circ a\mapsto b \circ (a\mapsto b)^{-1}$ denotes:

\begin{enumerate}
 \item remove edge between $a$ and $b$
 \item remove vertex $a$, if there are no references to it in graph $H$
 \item also remove vertex $b$, if there is no reference to it in graph $H$
\end{enumerate}

Group properties \index{group} allow us to establish equalities for heap terms, so improving \index{confluency} proof confluency or transformation into normal-form (see thes.\ref{thes:SimplificationByDiffingHeaps}).
Thus, bloated heap term IR may be reduced.
Partial specifications allow a reduced rule notation (see sec.\ref{sect:PartialSpec}).
Further work includes \index{SMT-solver} SMT-solver integration for practical heap term simplification.
It is required to consider when the content is a pointer again, e.g. $a\mapsto o.f \circ  (a\mapsto o.f)^{-1}$, where $o$ denotes an object containing field $f$.
The field $o.f$ is not removed from dynamic memory.
Otherwise, object encapsulation will be under threat --- further research here is highly appreciated but is beyond this work's scope (see discussion in sec.\ref{sect:ClassObjects}).
That is why by default, it is agreed upon that any field remains in memory as a unit of some contiguous object memory layout and may point to \texttt{nil}.
Thus, all the mentioned steps shall remain without change as is.

\begin{lemma}[Homomorphism over Simple Heaps]
 $(g_1 \circ g_2)^{-1} \equiv g_1^{-1} \circ g_2^{-1}$ holds for any \textit{non-simple heaps} $g_1$ and $g_2$.
 \label{lem:HeapInversionHomomorphism}
\end{lemma}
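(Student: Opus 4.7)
The plan is to lift the Abelian group structure established in theo.\ref{lem:HeapConjunctionGroupProperty} from simple heaps to the general case, using uniqueness of inverses. First, I would invoke the standard group-theoretic fact: in any group with operation $\circ$ and neutral element $\underline{emp}$, an inverse is unique, and for two elements $a, b$ the element $b^{-1} \circ a^{-1}$ serves as an inverse of $a \circ b$ since $(a \circ b) \circ (b^{-1} \circ a^{-1}) = a \circ (b \circ b^{-1}) \circ a^{-1} = a \circ a^{-1} = \underline{emp}$ by lem.\ref{lem:HeapConjunctionMonoid}. Since $G$ is Abelian by theo.\ref{lem:HeapConjunctionGroupProperty}, one further swaps $b^{-1} \circ a^{-1} = a^{-1} \circ b^{-1}$ to recover the claimed identity.

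Second, I would carry out a structural induction on the $\circ$-tree representation of heap terms from def.\ref{def:HeapTermDefinition}. The base case is two simple heaps $loc_1 \mapsto val_1$ and $loc_2 \mapsto val_2$, where the identity reduces to the pointwise group fact just noted. The inductive step would appeal to associativity (lem.\ref{lem:HeapConjunctionMonoid}) and commutativity (theo.\ref{lem:HeapConjunctionGroupProperty}) to regroup conjunctions, allowing the inductive hypothesis to be applied to each subtree independently. Throughout, the case $g_1 = \underline{emp}$ or $g_2 = \underline{emp}$ is discharged by conv.\ref{conv:EmptyHeapInversion}.

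Third, I would verify that the two normalisation steps accompanying inversion (bridge replacement by $\|$ in the remainder, and deletion of vertices with no remaining in- or out-edges) commute with the induction, since each acts only on graph features whose existence is determined by the final set of $\mapsto$-edges present in the conjoined term. This ensures that the heap graph produced by normalising $(g_1 \circ g_2)^{-1}$ coincides with the one produced by normalising $g_1^{-1} \circ g_2^{-1}$.

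The principal obstacle will be handling the shared-vertex situation where a vertex is kept alive by an edge contributed by $g_2$ but not by $g_1$: inverting $g_1$ in isolation could prematurely delete that vertex under the second normalisation step, even though it should persist in $(g_1 \circ g_2)^{-1}$. I would address this by interpreting $g_j^{-1}$ relative to the ambient conjoined context rather than to $g_j$ alone (mirroring how conv.\ref{conv:EmptyHeapInversion} treats the degenerate case), and by showing that the two normalisation steps are idempotent and monotone with respect to the addition of unrelated edges, so any spurious intermediate deletion is recovered once the second inverse is conjoined. The soundness of this contextual reading follows from the remark that $H \circ H^{-1} \circ H \equiv H$, which already exhibits the required cancellation behaviour inside a larger heap.
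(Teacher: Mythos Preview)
Your proposal is correct and in spirit aligns with the paper, but the execution differs in emphasis. The paper's proof generalises the claim to a chain $G = g_1 \circ g_2 \circ \cdots \circ g_n$ and argues by induction on $n$ that $G \circ G^{-1} \equiv \underline{emp}$: the base case $n=1$ is the existence of inverses, and the inductive step splits $G = G_k \circ g_{k+1}$ and collapses $G_k \circ G_k^{-1}$ and $g_{k+1} \circ g_{k+1}^{-1}$ separately to $\underline{emp}$ via conv.~\ref{conv:EmptyHeapInversion}. Your first paragraph reaches the same conclusion more directly by the standard group-theoretic device of uniqueness of inverses plus commutativity from theo.~\ref{lem:HeapConjunctionGroupProperty}; this is cleaner and subsumes the paper's induction in one line. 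Your third and fourth paragraphs, on the interaction of the two normalisation steps with shared vertices and on the contextual reading of $g_j^{-1}$, go well beyond what the paper attempts --- the paper does not address those issues at all and simply relies on the algebraic identity. So: same destination, your route is both shorter at the core and more careful at the periphery.
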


\begin{proof}
 If we succeed in showing the generalised form $G=g_1\circ g_2\circ \cdots \circ g_n$, then the problem would be solved.
 For doing so, $G \circ G^{-1} = \underline{emp}$ must be shown.
 This equation may be proven inductively using $n$.
 In the base case ($n=1$), $g_1\circ g_1^{-1} \equiv \underline{emp}$.
 It holds because an inverse element must exist.
 For the inductive case, we assume
 $$G=\underbrace{(g_1\circ g_2 \circ \cdots \circ g_k)}_{G_k} \circ g_{k+1}$$
Then for $$G\circ G^{-1} = (G_k \circ g_{k+1}) \circ (G_k \circ g_{k+1})^{-1}$$, the inverse part of the heap is a natural extension of the heap.
 The first part of equality holds

$$\underbrace{G_k\circ G_k^{-1}}_{\underline{emp}}  \circ \underbrace{g_{k+1} \circ g_{k+1}^{-1}}_{\underline{emp}} \ \equiv \ \underline{emp}$$
because of the inductive property of \index{group!inverse element} inversion, obeying conv.\ref{conv:EmptyHeapInversion}.
\end{proof}


\begin{definition}[Heap Disjunction]
\label{def:HeapDisjunction}
\textit{Heap disjunction} $H \ \| \ a \mapsto b$ denotes a heap $H$, and some simple heap $a \mapsto b$, which does not overlap.
If $G_H$ is some \index{heap graph} heap graph $H$, then $G_H=(V,E)$, for all edges $(\_,a) \not \in E$ and there is no path from any location of $b$ to $H$, and there is no inverse path from $H$ back to $a$.
"\_" denotes some variable term placeholder.
\end{definition}

Thus, $x.b \ \| \ x.c$ does not hold for any $x$ with fields $b$ and $c$ if at least one common vertex exists for any path from $x.b$ or $x.c$.

Assume, $\Sigma = X_0 \| X_1 \| \cdots \| X_n$ where $n>0$ and $X_j$ has form $x_j \mapsto y_j$, then $\Sigma = \Sigma_0 \ \| \ a_0 \mapsto b_0 \Leftrightarrow \forall (a_j \mapsto b_j) \in \Sigma_0: a_j \neq a_0 \wedge b_j \neq b_0$.\\

\begin{lemma}[Monoid over Disjunction]
\label{lem:monoidOverDisj}
$G=(\Omega, \|)$ is a \index{monoid} monoid and \index{group} group, if $\Omega$ is a set of heap graphs and $\|$ denotes \index{disjunction} \index{heap} heap disjunction.
\end{lemma}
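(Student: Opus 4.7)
The plan is to mirror the structure used for lem.\ref{lem:HeapConjunctionMonoid} and theo.\ref{lem:HeapConjunctionGroupProperty}, exploiting the duality between $\circ$ and $\|$ that the author hints at. I would prove the monoid properties first (closure, associativity, neutral element) and then extend to the group properties (inverse, commutativity), treating each one as the structural dual of the corresponding statement for $\circ$.

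First I would establish closure: given $H_1, H_2 \in \Omega$, by def.\ref{def:HeapDisjunction} the combination $H_1 \| H_2$ exists precisely when no shared vertex and no cross-reference path exists, in which case its heap graph is the disjoint union $(V_1 \cup V_2, E_1 \cup E_2)$; otherwise the result is $\underline{false}$. This is symmetric to the conjunction case, except that the failure condition is inverted (conjunction demands at least one shared vertex, disjunction demands none). Non-repetitiveness $H \| H = \underline{false}$ (for non-empty $H$) follows immediately, so $\Omega$ is closed. Associativity $(H_1 \| H_2) \| H_3 = H_1 \| (H_2 \| H_3)$ reduces to the fact that pairwise disjointness of the three heap graphs is a symmetric ternary condition, and disjoint union of graphs is itself associative. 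For the neutral element, $\underline{emp}$ again serves: vacuously disjoint from every heap, yielding $H \| \underline{emp} = \underline{emp} \| H = H$.

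For the group upgrade, commutativity is immediate since both the disjointness predicate in def.\ref{def:HeapDisjunction} and the underlying disjoint union are symmetric in their arguments, giving $H_1 \| H_2 = H_2 \| H_1$. The main obstacle, as with $\circ$, is the inverse. I would postulate $H^{-1}$ in exact analogy to conv.\ref{conv:EmptyHeapInversion}: $\underline{emp}^{-1} \equiv \underline{emp}$, and for a simplex $a \mapsto b$ (which is trivially disjoint from the empty heap), $(a \mapsto b) \| (a \mapsto b)^{-1} \equiv \underline{emp}$ via the same two normalisation steps used in theo.\ref{lem:HeapConjunctionGroupProperty} (edge removal followed by vertex removal, modulo the object-field convention). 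Lifting to generalised heaps then follows by induction on the number of $\|$-conjuncts, using the dual of lem.\ref{lem:HeapInversionHomomorphism}, namely $(g_1 \| g_2)^{-1} \equiv g_1^{-1} \| g_2^{-1}$, which is even easier to justify here because disjointness ensures the inverses cannot interfere with one another.

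The subtle point — and where I would be most careful — is justifying that the inverse construction really lives inside $\Omega$ rather than in some extension: unlike $\circ$, where the inverse had a clear operational reading as ``remove the specified edge from a connected graph'', here the inverse of a disjoint component must cancel a piece that was already separated from the rest. I would argue this reduces to the conjunction case since once a component is isolated by $\|$, the cancellation only touches that component's own vertices and edges; hence the same normalisation steps apply verbatim within that component. The remaining bookkeeping — checking that the result of $H \| H^{-1}$ is genuinely $\underline{emp}$ and not merely a heap whose graph is empty but whose symbolic form still carries residual locations — I would dispatch by appealing to conv.\ref{conv:EmptyHeapInversion} and the second normalisation step of theo.\ref{lem:HeapConjunctionGroupProperty}.
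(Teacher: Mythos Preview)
Your proposal is correct and follows essentially the same approach as the paper: both argue by analogy to the conjunction case (lem.~\ref{lem:HeapConjunctionMonoid} and theo.~\ref{lem:HeapConjunctionGroupProperty}), establishing closure via the disjointness condition, taking $\underline{emp}$ as neutral element, and handling the inverse by convention in parallel to the $\circ$ case. Your treatment is in fact more detailed than the paper's own proof, which dispatches associativity as ``for apparent reasons'' and the inverse simply by stipulating $s \| s^{-1} = \underline{emp}$; your explicit discussion of the dual homomorphism and the normalisation steps goes beyond what the paper spells out, but is entirely in its spirit.
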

\begin{proof}
In analogy to lem.\ref{lem:HeapConjunctionMonoid}, $\forall m_1,m_2 \in \Omega: m_1 \| m_2$ holds when $m_1$ and $m_2$ do not share a \index{shared vertex} common vertex because when there is no path from $m_1$ to $m_2$, and there is no graph surrounding both $m_1$ and $m_2$.
If $m_1$ and $m_2$ differ, then $m_1 \| m_2$ again is a valid heap in $\Omega$, because $m_1$ is from another part of the heap graph than $m_2$ is, and vice versa.
Hence closure follows.
\index{associativity} Associativity follows for apparent reasons.
$\underline{emp}$ may serve as a neutral element, then $\underline{emp} \| m_1 = m_1 \| \underline{emp} = m_1$.
Let by default $\underline{emp} \| \underline{emp} = \underline{emp}$ hold.
Finally, it is agreed upon $s \| s^{-1} = s^{-1} \| s = \underline{emp}$, which is similar to $\circ$.
In general, heaps follow this convention.
\end{proof}

\index{conjunction} Conjunction and \index{disjunction} disjunction of heap fragments may be expressed by rules and rules that are dual to those in the following manner:

\begin{center}
\begin{tabular}{lcl}
 \inference[$\circ_{[B,C]}$]{U \circ B \ \| \ C}{U\circ B\circ C} & \qquad \qquad &
 \inference[$\|_{[B,C]}$]{U\circ B\circ C}{U \circ B \ \| \ C}
\end{tabular}
\end{center}

\begin{eqnarray}
\|_{[B,C]} ; \circ_{[B,C]} ; \|_{[B,C]} \equiv \|_{[B,C]}
\label{eqn:HeapDisjunctionInvariant}\\
\circ_{[B,C]} ; \|_{[B,C]} ; \circ_{[B,C]} \equiv \circ_{[B,C]}
\label{eqn:HeapConjunctionInvariant}
\end{eqnarray}

\index{dual operation} Operations $\|$ and $\circ$ are dual.
They may be transformed into each other using the dual operation (see eqn.\ref{eqn:HeapDisjunctionInvariant} and eqn.\ref{eqn:HeapConjunctionInvariant}), where "\textbf{;}" denotes the sequential operator.
Equalities hold because the operations are \index{self-inverse operation} self-inverse and because of the provided assertion about the existence of both heap vertices $B$ and $C$.

\begin{theorem}[Distributivity]
\index{distributivity} Distributivity holds for $\forall a,b,c \in \Omega$ for $\circ$ and $\|$:

\begin{tabular}{ll}
 (i) & $a \circ (b \| c) = (a \circ b) \| (a \circ c)$\\
 (ii) & $(b \| c) \circ a = (b \circ a) \| (c \circ a)$ 
\end{tabular}
\label{lem:DistributivityForConjDisj}
\end{theorem}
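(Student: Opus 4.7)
The plan is to prove the two distributive laws by interpreting both sides as operations on the underlying heap graphs (def. 5.7, def. 5.11) and showing that the resulting graphs coincide. Because $\circ$ and $\|$ are defined partially (they yield $\underline{false}$ when the required sharing or non-overlapping condition is violated), the proof has to proceed by case analysis on whether the relevant connection points exist, rather than by raw algebraic manipulation.

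First I would establish (i) and then derive (ii) as a corollary via the commutativity of $\circ$ obtained in theo.\ref{lem:HeapConjunctionGroupProperty} together with lem.\ref{lem:monoidOverDisj} for $\|$. For (i), I would proceed by induction on the structure of $a$ given by def.\ref{def:HeapTermDefinition}, reducing to the case where $a$ is a simple heap $\ell \mapsto v$; the inductive step uses lem.\ref{lem:HeapConjunctionMonoid} (associativity of $\circ$) to peel off one simplex at a time. In the base case I would split on where the anchor vertex of $a$ lies relative to $b \| c$: either it lies in $b$, in $c$, in both (through distinct shared locations, which is compatible with $b \| c$ because disjointness of $b$ and $c$ only forbids shared vertices between them, not shared vertices with a third heap $a$), or in neither.

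In each subcase I would unfold $b \| c$ on the left as the disjoint graph union from def.\ref{def:HeapDisjunction} and unfold the outer $\circ$ via def.\ref{def:HeapConjunctionDefinition}, then do the dual on the right. When the anchor of $a$ lies only in $b$, for instance, the left side attaches $a$ to $b$ while leaving $c$ as a separate component, and the right side produces $(a \circ b) \| (a \circ c)$ where $(a \circ c)$ reduces to $a$ via the rules $\circ_{[B,C]}, \|_{[B,C]}$ applied in the generalised form; I would use eqn.\ref{eqn:HeapConjunctionInvariant} together with the existence of the neutral element from lem.\ref{lem:HeapConjunctionMonoid} to rewrite the residual terms so that the two graphs coincide. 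The symmetric subcase is handled dually. In the both-subcase, $a$ acts as a bridge and the resulting graph on either side is the same connected union of $a$, $b$, $c$ with the two bridging edges, and on the neither-subcase both sides collapse to $\underline{false}$ by theo.\ref{theo:GeneralizedHeapConjunctionTheorem}, preserving the equality.

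The main obstacle I anticipate is the partiality bookkeeping: ensuring that whenever one side is defined the other is too, and that the degenerate $\underline{false}$/$\underline{emp}$ cases are handled consistently with conv.\ref{conv:EmptyHeapInversion} and the conventions of lem.\ref{lem:monoidOverDisj}. In particular I need to verify that the rewriting via $\circ_{[B,C]}$ and $\|_{[B,C]}$ does not silently introduce a forbidden overlap or drop a legitimate bridge — the non-repetitiveness closure property noted after lem.\ref{lem:HeapConjunctionMonoid} is what rules out these pathologies. Once (i) is proven, (ii) follows immediately: applying commutativity of $\circ$ rewrites $(b \| c) \circ a$ as $a \circ (b \| c)$, then (i), then commutativity once more on each conjunct on the right gives $(b \circ a) \| (c \circ a)$.
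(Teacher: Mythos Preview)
Your handling of the ``both'' subcase is wrong, and this is the crux of the argument. Suppose $a$ shares one vertex with $b$ and a different vertex with $c$. On the left, $a \circ (b \| c)$ is indeed the connected union you describe. But on the right, both $a \circ b$ and $a \circ c$ are well-defined heaps that each contain all the vertices of $a$; hence they overlap, and by def.\ref{def:HeapDisjunction} the disjunction $(a \circ b) \| (a \circ c)$ is \emph{not} the connected union --- it collapses to $\underline{false}$. So the two sides do not agree in this case, contrary to what you claim. The paper's proof avoids this trap by arguing the other way round: starting from the right-hand side, the well-formedness of the outer $\|$ forces at most one of $a \circ b$, $a \circ c$ to be non-$\underline{false}$ (precisely because both would otherwise contain $a$), so the ``both'' situation is ruled out rather than analysed. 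You need the same exclusion step; without it the case split does not close.

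Separately, the structural induction on $a$ and the appeal to the transition rules $\circ_{[B,C]}$, $\|_{[B,C]}$ and eqn.\ref{eqn:HeapConjunctionInvariant} are unnecessary detours. The paper argues directly at the graph level using theo.\ref{theo:GeneralizedHeapConjunctionTheorem}, which already lifts $\circ$ to arbitrary heaps: a single case split on where the connecting vertex of $a$ lies (in $b$, in $c$, or nowhere) suffices, with the ``both'' branch discharged by the disjunction constraint as above. Your derivation of (ii) from (i) via commutativity of $\circ$ is fine and is essentially what the paper does.
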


\begin{proof}
  Only (i) has to be shown because (ii) immediately follows from the duality of $\circ$ and $||$.
  Two implications show equality (i).
  "$\Rightarrow$" means that from $a\circ (b||c)$ follows $(a\circ b)||(a\circ c)$.
  "$\Leftarrow$" denotes implication for the opposite direction.\\
  "$\Rightarrow$": $b||c$ implies no connections between $b$ and $c$ exist.
  Thus, $a\circ (b||c)$ means either between $a$ and $b||c$ there is also no connection, which eventually leads to \textit{\underline{false}}, or one connecting vertex exists, and by default, it must be in $b$ or $c$.
  W.l.o.g. a connecting vertex in $b$ is assumed, then $a\circ b$ holds.
  The analogous may follow from $c$.
  Regardless of which, but one graph is to be extended: $b$ or $c$.
  Other graphs do not extend.
  Thus, the second graph is always \textit{\underline{false}}.\\
  "$\Leftarrow$":  Either $a\circ b$ or $a\circ c$, both equal to \textit{\underline{false}} due to def.\ref{def:HeapConjunctionDefinition}, because $a$ may not connect to $b$ and $c$ at the same time, otherwise it would contradict def.\ref{def:HeapDisjunction}.
  W.l.o.g. it may be stated that if $a\circ b$ equals \textit{\underline{true}}, then $a\circ (b||c)$ would be too.
\end{proof}

\textbf{Remark:} Since the \index{group!neutral element} neutral element for both operations $\circ$ and $\|$ is $\underline{emp}$, there is no way to define an \index{algebraic field} \textit{(algebraic) field} (e.g. \index{field!Galois} a Galois field) over these operations since lem.\ref{lem:HeapConjunctionMonoid}, lem.\ref{lem:HeapConjunctionGroupProperty} and lem.\ref{lem:DistributivityForConjDisj} hold and regardless of the fact the \index{carrier set} carrier set $\Omega$ is finite.
Hence, any \index{heap!finite} finite heap and all operations defined over it are also generating a finite heap.

\textbf{Remark:} In analogy to logical conjuncts $\wedge$ and $\vee$, a normalised form using $||$ always exists.
Previously mentioned equalities and lem.\ref{lem:HeapInversionHomomorphism} can achieve this in order to \index{heap!inversion} invert a generalised heap.\\

It is necessary to hoist $\|$ as far out as possible by either applying \index{distributivity} distributivity rules or reordering non-complex heaps to shrink a heap graph.
So, the left-hand sides of pointers are sorted ascending by their location in lexicographical order. 
The idea is to cut redundant searches, e.g. for \index{stepwise verification} \textit{stepwise verification}, s.t. only changing heap parts are re-calculated.

\begin{figure}[h]
\begin{displaymath}
  \xymatrix@C=2em@R=1em{
       &&& \top \ar@{-}[dl] \ar@{-}[dd]\\
       &  & \parbox[t]{12mm}{$(\Delta\_\Delta\Delta)\\\ \ G_1''$} & \\
       & \parbox[t]{7mm}{$G_1'\\(\Delta\_)$} \ar@{-}[ru] &  & \parbox[t]{7mm}{$G_2'\\(\Delta\Delta)$} \ar@{-}[lu] \\
      \parbox[t]{7mm}{$G_1\\(\Delta)$}  \ar@{-}[ru] &  & \parbox[t]{7mm}{$G_2\\(\_)$} \ar@{-}[lu] \ar@{-}[ru]\\
      && &\bot \ar@{-}[lllu] \ar@{-}[lu] \ar@{-}[uu]
  }
\end{displaymath}
 \index{poset}
 \caption{Poset over heap graph inclusion}
 \label{fig:HeapGraphPoset}
\end{figure}
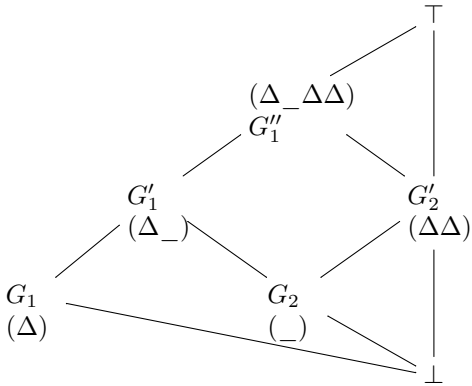

A poset may be established over heap graphs and set-inclusion as \index{join} lattice join operator $\circ$.
The infimum ($\bot$) is the \index{heap!empty} empty heap.
It is the least bound.
The least (upper) bound is the \index{graph!full} full graph ($\top$).
For example, \index{absorption} \textit{absorption} does not hold for \index{heap} heaps.
So, the poset may not be a complete lattice.
Fig.\ref{fig:HeapGraphPoset} the \textit{poset} $G$ contains $\{G_1,G_2,G_1',G_2',G_1''\}$, which hold the following inequalities in ascending ordering $G_1 \sqsubseteq G_1' \sqsubseteq G_1''$, $G_2 \sqsubseteq G_1'$ and $G_2 \sqsubseteq G_2' \sqsubseteq G_1''$.
The heaps $\bot$ and $\top$ are always there and may always be added if not present.
Hence, these may sometimes not be explicitly mentioned.
$G_1''$ is the supremum, and $inf(G)=\underline{emp}$ is infimum, where $\sqsubseteq$ defines the subset relation over \index{graph!relation} graphs.
It is not difficult to realise two not connected heaps having no common vertex under conjunction (so are equal $\underline{emp}$ due to def.\ref{def:HeapConjunctionDefinition}) always remain unconnected according to the \index{Hasse diagram} Hasse-diagram.
Thus, a merge is always $true$ because either (first contradiction) $a\mapsto b \ \circ \ a\mapsto d$ cannot be implied from the first \index{conjunction} conjunction or any other \index{heap!complex} complex heap.
Because of (the second contradiction), $a\mapsto b \| b\mapsto d$ contradicts the definition of $\|$.
However, it is essential to notice a poset's ordering may be violated after \index{heap!inversion} introducing heap inversion (see earlier) if using inversion unbound.
Currently, \index{group!inverse element} inversion satisfies our needs w.r.t. a more comfortable heap comparison, particularly for the specification to be checked, therefore the \index{locality} locality property from sec.\ref{chapter:expression} remains unchanged.

\subsection{Class Object as Heap}
\label{sect:ClassObjectAsHeap}

\begin{figure}
\def\gA#1{\save
[].[]!C="gA#1"*[F]\frm{}\restore}

\def\gB#1{\save
[].[dd]!C="gB#1"*[F]\frm{}\restore}

\def\gC#1{\save
[].[ddr]!C="gC#1"*[F]\frm{}\restore}
\begin{center}
\begin{tabular}{cc}
{
\scalebox{0.8}{
\xymatrix{
  \\
  \circ \ar[r] & \circ \ar[r] & \gA1 \circ \circ \circ
}}}
&
{\scalebox{0.8}{
\xymatrix@C=2em@R=1em{
 & \gB1 \circ \ar[r] & \\
 \circ \ar[r] & \circ \ar[r] & \circ\\
 & \circ \ar[r] &
}}}\\
a) simple &  b) single field\\[0.3cm]
{\scalebox{0.8}{
\xymatrix@C=2em@R=1em{
 & \gB1 \circ \ar[r] & \circ\\
 \circ \ar[r] & \circ \ar[r] & \circ\\
 & \circ \ar[r] & \circ
}}}
&
{\scalebox{0.8}{
\xymatrix@C=2em@R=1em{
 &  \gC1 & \ar[r] & \circ\\
 \circ \ar[r] && \ar[r] & \circ\\
 & & \ar[r] & \circ
}}}\\
c) multi-assignment & d) schematic multiple
\end{tabular}
\end{center}
 \caption{Shapes of object assignments}
 \label{fig:ObjectAttributesScheme}
\end{figure}

Let us consider how class objects are modelled in the heap graph according to fig.\ref{fig:ObjectAttributesScheme}.
In fig.\ref{fig:ObjectAttributesScheme} a), there is a separate assignment presented for an existing graph (without details).
The object in the rectangle refers to a particular \index{class instance} object instance.
In fig.\ref{fig:ObjectAttributesScheme} b), only the second \index{object!field} field of the object instance is assigned to some \index{type compatibility} compatible value.
All remaining fields are assigned \texttt{nil}.
Fig.\ref{fig:ObjectAttributesScheme} c) and fig.\ref{fig:ObjectAttributesScheme} d) have the same meaning as in b), except assignment is performed on all fields.
Following up, we introduce additional restriction w.l.o.g. for better reuse of operators still to be defined:

\begin{conventions}[Modelling of Object Instances] Object instances are modelled w.l.o.g. as
\begin{itemize}
 \item[a)] There are no \index{inner object} inner objects.
 Inner objects may always be modelled equivalently to separate objects associated with an externalised object \cite{bruce02}. 
References are only allowed to those objects that are replaced by non-overlapping.
 Hence, w.l.o.g. \index{\texttt{union}} \texttt{union} are excluded from modelling objects since they do not increase expressibility.
 \item[b)] Object fields differ and may be used by further objects and be defined in \index{inherited class} inherited classes.
 In the case of name clashing amongst the \index{inheritance hierarchy} class inheritance hierarchy, clashing classes closer to a common superclass may be renamed, s.t. all references to the problematic reference may be renamed.
 \item[c)] Due to closure during their life-time, objects do not grow, except cases where the \index{pointer} pointer casts to a new object of a differing (sub-)class.
 The problem of \index{memory fragmentation} \textit{memory fragmentation} may only arise where memory is limited.
It may be worth transforming dynamically allocated objects into \index{stack} automatically managed ones to avoid fragmentation with reasonable efforts in this context since a stack frame dynamically grows when an additional block guards objects.

 Due to inheritance, objects may only monotonously grow according to the $<:$ relationship from fig.\ref{DefinitionClassSubtype} (when intentionally  disobeying enhanced class visibility modes, as it was defined in C++, for instance).
 Consequently, regions in memory layouts may only grow.
 An increase may lead to severe problems, whose solution may involve utilisation and renewed memory allocation but with different addresses.
 In general, this problem is not decidable because of the Halting problem.
 However, if the problem on maximum limits may statically be resolved just in particular cases, an improved variant may be chosen.
 In reality, object fields may also vanish due to visibility and class inheritance.
Therefore a monotony of object fields is only a heuristic, where a fixed class size may be a reasonable estimate when \textit{late object binding} is prohibited.
 Assigning arbitrary objects of a (sub-)class is prohibited.
 Since Burstall's separating heaps \cite{burstall72} do not obey compactness, object fields may not occur more than once in one conjunction in a heap expression as (left-hand side) location.
 \item[d)] \index{array} Arrays as base type is prohibited.
 The heap graph must remain simple.
 However, this does not prohibit an arbitrary number of pointers pointing to a particular address.
 \item[e)] Common heap graph vertices may be complex.
W.l.o.g.\index{attribute} attributes of APs may be generalised.
\end{itemize}
\label{conv:RestrictedObjects}
\end{conventions}
In order to avoid contradictions in the following definitions, two steps are needed. First, to be able to check relations between two vertices real quick.
A given non-empty complex heap contains one or more \index{heap!simple} simple heaps.
To check whether two heap graph vertices are connected, it is required to check all left sides of simple \index{conjunct} conjuncts in the worst case, so a full edge-scan is required.

\begin{figure}[h]
\begin{center}
\begin{tabular}{ccccccc}
\xymatrix@C=1em@R=1em{
 \circ \ar@{<-}[dr]\\
 & \circ \ar@{->}[dl] \ar[dr] \\
 \circ && \circ \ar[dr]\\
 && \circ & \circ \ar[l]
}
& \quad &
\xymatrix@C=1em@R=1em{
 \circ \ar@{<-}[dr]^-{\blacksquare}\\
 & \circ \ar@{->}[dl]^-{\blacksquare} \ar[dr]^-{\blacksquare} \\
 \circ && \circ \ar[dr]^-{\blacksquare}\\
 && \circ & \circ \ar[l]^-{\blacksquare}
}\\
a) && b)\\\\

\xymatrix@C=1em@R=1em{
 \circ \ar@{-}[dr]^-{\blacksquare} \\
 & \circ \ar@{-}[dl]^-{\blacksquare} \ar@{-}[dr]^-{\blacksquare} \\
 \circ && \circ \ar@{-}[dr]^-{\blacksquare}\\
 && \circ & \circ \ar@{-}[l]^-{\blacksquare}
}
& \quad &
\xymatrix{
  *+[F.:<20pt>]\txt{ $v_j$ \ } \ar@{-}[d]\\
  *+[F]\txt{($v_j,v_{j+1}$)}  \ar@{-}[d]\\
  *+[F.:<20pt>]\txt{ $v_{j+1}$ \ }
}\\\\
c) && d)
\end{tabular}
\end{center}
 \caption{Transformation of schematic heap graph}
 \label{fig:InvertedVertexGraph}
\end{figure}
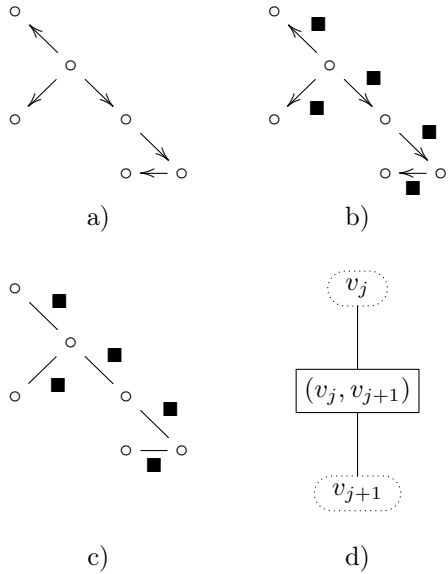

Second, to obtain all neighbouring vertices for a given vertex, a model is beneficial that is quickly traversed by vertices and not by edges.
For a quick but accurate definition of neighbours, e.g. \texttt{reach\-es(x,y)}, \texttt{reach\-es(x,Y)}, \texttt{reach\-es(X,y)}, \texttt{reach\-es(X,Y)} is proposed, where \texttt{x} denotes a well-defined vertex, and \texttt{X} denotes a vertex (sub-)set (to \texttt{y} and \texttt{Y} in the analogy).
The model based on vertices is illustrated in fig.\ref{fig:InvertedVertexGraph}.
Small opaque squares accordingly allow gaining a more effective iteration as described.
Both models, where a) is vertex-based and b) is edge-based, are \index{duality} dual, so both w.l.o.g. are interchangeable.
Vertex pairs, including source and destination, are encoded over the middle of an edge (see (d)).
Then this is interpreted as one vertex with a unified name and is connected to neighbouring vertices (see fig.\ref{fig:InvertedVertexGraph} (a) and (b)).
Turning one model to another and vice versa according to the scheme from fig.\ref{fig:InvertedVertexGraph} c) is possible without losses, so the mapping between both structures is bijective regardless of the edge directions (see (c)).\\

Hence, transforming the graph with conjunctions or disjunctions is allowed.
It is more convenient for normal-forms and optimisations.

\begin{conventions}[Object Fields]
\textit{Object fields} do not overlap, and field addresses are different (cf. conv.\ref{obs:ObjectPathAccessor}).
However, \index{pointer} pointers may have \index{alias} aliases.
An object's content may be expressed as $x \mapsto object(fld_1,fld_2, ...)$.
W.l.o.g. by default, object field pointers may not be used as an L-Val \cite{isocpp14} in arbitrary arithmetic expressions.
It may only be used by locations according to obs.\ref{obs:ObjectPathAccessor}.
\index{late binding} \textit{Late binding} is not considered.
The lack of late binding means, polymorphism mimicked by subclassing during runtime is absent, implying an additional restriction.
However, all accessor fields, including field types, maybe thoroughly analysed \index{static analyis} statically.

Consequently, whilst static analysis, only the most common class amongst the inheritance hierarchy may be used.
The question of which class to use during object instance generate is undecidable in general.
The program statement responsible for generation may be accessible or not.
To predict it earlier is, in general, not possible.
So, only the most common \index{heap!generalised} heap can be built whilst verifying.
\label{conv:HeapAlignment}
\end{conventions}

\subsection{Partial Heap Specification}
\label{sect:PartialSpec}

As mentioned earlier, according to def.\ref{def:HeapTermDefinition}, object instances may be interpreted as data containers for each field $obj.f_1 \mapsto .. \circ obj.f_2 \mapsto .. \circ obj.f_n \mapsto .. $.
All fields generate some heap slices.
In contrast to abstraction, this transformation may be characterised as \index{concretisation} concretisation.
Class object fields have restrictions on naming and types, which both have to match those of the corresponding class.
All fields of one object co-exist.
Initially, they relate to simple heaps only, which later may be extended by $\circ$-conjunctions.
Object fields may not always be freed one by one (see obs.\ref{obs:GraphIRHeap} and the last discussion).
Local object fields must also have the possibility to specify parts, so subclass objects, as do have ordinary local variables in SL-extensions.
In analogy to non-object pointers, constant functions over heaps may be defined.
For example, $\underline{true}(obj)$ or $\underline{false}(obj)$ from def.\ref{def:HeapTermExtendedDefinition} and def.\ref{def:HeapTermDefinition}.
In contrast to non-object pointers, \textit{\underline{true}}(obj) introduces some object term as an additional parameter.
This way, APs may be used to modularise further a specification, which syntactically and semantically insignificantly differs from the non-object case.

In synchronisation with the proposed $a \circ a$-solver based on a stack and a stratified AP (see sec.\ref{chapter:APs}), incoming and outgoing terms of APs may be traced and skipped for redundant or superfluous calculations.
The comparison may be made either on the stack for the same levels or by labels between predicates' arbitrary levels using translating rules (cf. sec.\ref{sect:TranslatingHornRules}).

\begin{definition}[Incomplete Predicate]
 \label{def:IncompletePredicates}
An incomplete predicate is an AP, where the term argument(s) passed if unfolded denote only a partial heap graph, although all remaining fragments are automatically accepted upon heap interpretation.
For simplicity, the AP $\underline{true}(obj)$ is considered next.
 $\underline{true}(obj)$ defines some heap of \index{conjunct} $\circ$-conjuncts of all \index{field} fields of some object $obj$ (includes the \index{heap!empty} empty heap, which represents an empty object).
\end{definition}

 All directly specified fields are calculated first from the set of remaining fields.
 Whilst verification, all fields are meant when referring to \index{constant function} constant functions, which were not directly specified in a considered AP.
More details will follow in sec.\ref{chapter:APs}.
It is not hard to agree on the correctness of $\wedge,\vee,\neg$-conjunctions for \textit{\underline{true}}(obj) and \textit{\underline{false}}(obj).
These constant functions may accept any number of object fields (see thes.\ref{thes:SimplificationByDiffingHeaps}).
Its boolean value does not depend on concrete fields.
However, compound heaps combined with these may cause unexpected behaviour, which can be excluded during context analysis.
Partial specifications, which use constant functions, partially describe fields, but they may cover much more.
Although a specification is partial, its denotation includes more heaps (see thes.\ref{thes:IncompletenessForCompleteness}).
Moreover, a calculation may compare given heaps with expected ones, and it can localise heaps that are missing.
This shortcut allows determining heaps in the rule set fully.

\begin{example}[Incomplete Predicate no.1]
Given an object $a$, which has three fields, $f_1$, $g_1$ and $g_2$. \comp{.}{} denotes \index{semantic function} a semantic (implicit) function over heap terms of type $ET \rightarrow ET \rightarrow \mathbb{B}$, where $ET$ from def.\ref{def:HeapTermExtendedDefinition}, $\mathbb{B}$ is the boolean set, where the first extended heap is the expected and the second is the obtained heap, then:
 \begin{eqnarray*}
   & \comp{$a.f_1 \mapsto x\circ \underline{true}(a)$}{} = & \comp{$a.f_1 \circ a.g_1 \circ a.g_2$}{}\\   
   = & \comp{$\underline{true}(a) \circ a.f_1 \mapsto x$}{} \neq & \comp{$p(a) \circ a.f_1 \mapsto x$}{}
 \end{eqnarray*} 
where $p$ is an AP and denotes $\underline{true}(a)$ (see def.\ref{def:IncompletePredicates}).

However, \comp{$a.f_1 \mapsto x \circ p(a)$}{} would mean equality because recognition, based on the actual \index{stack!window} stack window, finds all remaining fields even if several layers of abstracted calls are hidden beneath.
  \comp{.}{} denotes a \index{homomorphism} homomorphism w.r.t. the discussed constant functions and conjunction $\circ$.
\end{example}

\begin{example}[Incomplete Predicate no.2]
 \comp{$\underline{true}(a) \circ \underline{true}(a)$}{} = \comp{$a.f_1 \circ a.g_1 \circ a.g_2$}{} $\circ$ \comp{$\underline{true}(a)$}{} = \comp{$a.f_1 \circ a.g_1 \circ a.g_2$}{} $\circ \ \underline{emp}(a)$.
\end{example}

\subsection{Discussion}
\label{sect:StricterDiscussions}
Two stricter operations replaced one spatial operation.
The SL did not change initial properties, except for unbound heap inversion, which was discussed.
If replacing $\star$ with $\circ$ in theo.\ref{theo:ReynoldsHeapProperties} (for disjunction replace with $||$), correctness follows immediately due to the exclusion of axiom no.5 for conjunction because of strengthening $\star$.

Given a form allowing normalisation of heap terms, then specifiable heaps may now be analysed linearly (cf. sec.\ref{sect:PartialSpec}).
Due to the \index{non-repetitiveness} non-repetitiveness of simple heaps, complicated heaps may effectively be excluded by applying \index{memoisation} memoisation.
In practice, repeating locations must be localised and excluded (potentially with differing syntactic but semantically equal expressions), as was mentioned in sec.\ref{sect:ConjunctionAndDisjunction}.
Otherwise, heaps and heap graph contradicts as well as SL-properties may be infringed.
The essential principle of non-repetitiveness must be obeyed.
A simple heap specifies one time only.
If principles are not correctly obeyed, then applying rules may lead to incorrectness.

Consequently, from a wrong precondition, any result may be derived, including the correct one.
Hence, repeating heaps must be excluded.
However, this problem may be resolved without running the program (just the verifier is sufficient).
APs may be interpreted procedurally (see sec.\ref{chapter:APs}).
Consequently, a \index{stack!architecture} \textit{stack architecture} is used for its analysis (cf. sec.\ref{chapter:logical}).
Its semantics is very close to \index{WAM} Warren's \cite{warren83} \index{operational semantics} operational semantics, except for \index{parameter by call} parameters by call.
The minor modification of \index{WAM} WAM may be applied to AP interpretation using strengthened \index{conjunction} conjunction and \index{disjunction} disjunction.
For instance, this allows recognising $\forall a \in \Omega.a \circ a$ to check whether non-repetitiveness is obeyed. 
The \index{memoisation} memoiser can only \index{caching} cache those calls of \index{abstract predicate} APs that do not alter the global calculation state.
The memoiser can recall (i) \index{term!incoming} incoming or (ii) \index{term!outgoing} outcoming or (iii) \index{term!incoming/outgoing} both incoming and outcoming variable terms at the same time.
If further unbound symbols inside an AP by some later subgoal are restricted, this must be considered in the order of subgoals (definition must follow \index{evaluation ordering} the LR order, see sec.\ref{chapter:APs}).

Prolog \cite{sterling94} as a general-purpose logic PL may be used (see thes.\ref{thes:PrologMakesHeapSpecSimpler}) as a platform based on \index{recursive scheme} recursively-defined rules and terms for \index{logical reasoning} logical reasoning (cf.sec.\ref{chapter:logical}).
It is based on theorems about heap term extension and APs.
In Prolog, \index{Peano arithmetics} the well-defined  generalised induction scheme after Peano may always be defined as "\texttt{p(0).}" for base cases, and as "\texttt{p(n):-n1 is n-1,p(n1).}" for the inductive cases using \index{predicate!auxiliary} auxiliary predicate \texttt{is} for determining \texttt{n1}.
It is not hard to see in Prolog any generalised \index{$\mu$-recursive scheme} $\mu$-recursive predicate may be expressed.
The proof is straightforward, constructive and direct and is not special at all.
Hence it is skipped here.
In general, predicates do not need to be determined (e.g. when a procedural call does not terminate).

The advantage of \index{Horn-rule} Horn-rules is \index{Prolog} Prolog, in contrast to classic \index{one-way function} one-way functions as used by \cite{parkinson05-2}, can interpret terms as (i), (ii) or even (iii).
By doing so, it effectively unions exponential many different classic one-way functions.
Not every aspect may be defined.
Therefore, \index{function inversion} function inversion requires further attention.
\index{arithmetic} Arithmetical calculations and \index{green cut} \textit{green} and \index{red cut} \textit{red cuts} \cite{sterling94} are possible reasons why a predicate may not be invertible \cite{haberland08-2}.
Invertibility of arithmetic expressions may partially be compensated by replacing \index{natural numbers} natural numbers with Church-terms (see \cite{haberland08-2}, cf. fig.\ref{code:NaturalNumbers}).
Basic operations over naturals are invertible if only using constants, \index{arity} \index{monad} monad-styled unitary term-constructors and \index{unification} unification is used as its main basic operation.
Generally speaking, there must be a strong correlation between input and output terms, becoming an isomorphic mapping between both domains.
Moreover, Prolog's cuts may be replaced with w.l.o.g. and expressibility since they are essentially only \index{syntactic sugar} syntactic sugar (see sec.\ref{chapter:expression}).\\

Expressions in \textit{OCL} \cite{oclspec} are formalised and kept intentionally close to predicate logic.
It supports \index{variable!quantified} quantified variables, arrays and ADTs, particularly classes and "\textit{ad-hoc}" \index{polymorphism} polymorphism by subclassing.
OCL allows specifying the life-cycles of objects and their methods.
However, OCL apriori does not know about \index{pointer} pointers nor \index{alias} aliases.
Expressions about pointers are missing.
Therefore, UML/OCL may currently not be used to model pointer involvement, e.g. in \index{prototyping} \index{rapid prototyping} rapid prototypes, as described in sec.\ref{chapter:intro} (see \cite{haberland14-1}, cor.\ref{cor:StrengtheningOCL}).

Hence, it is proposed that pointers are introduced to OCs, namely into UML/OCL directly.
An object state matches how a stack is executed, and heaps are processed, so it fits in the context of the pre-existing object model in UML/OCL.
Complex objects have fields of simple or complex type.
APs are proposed to be introduced to OCL's logical predicates (see sec.\ref{chapter:logical}).
There is no obligation why one predicate describes only a single object.
Nevertheless, it is strongly recommended to follow this recommendation.
Exceptions may be made when intricate behavioural patterns are described, such as with the Observer design pattern.
Spatial relations between objects is described by $\circ$ and $||$.
Relations do not insist on spatial components if clear from the context it is about OCL.
Future work may involve the integration of \index{abstract predicate} APs into OCL \cite{haberland16-1}.
The begun proposition so far shall be researched further, especially considering conv.\ref{conv:RestrictedObjects} and def.\ref{def:HeapTermExtendedDefinition}.
An expressibility in modularity, increase, and abstraction are expected.
A useful review in calculus extensions with pointers may be found in sec.\ref{chapter:intro} and \cite{parkinson05-2}.

\newpage
\section{Automated Verification with Predicates}
\label{chapter:APs}

Let us consider a \index{doubly-linked list} doubly-linked list as an example from \index{geometry, computational} Computational Geometry \cite{deberg08}.
This \index{doubly-connected edge list} list will have for each \index{graph edge} edge two adjacents for a polyhedron in 2D or 3D space representing a \index{mesh} \index{polyhedron mesh} wired mesh.
After \index{triangulation} triangulation, this mesh will consist of \index{triangle} triangles.
So, each \index{graph!vertex} vertex will be connected to precisely two neighbouring vertices.
Each edge starts and ends with well-defined vertices.
A \index{normal vector} normal vector is required for visibility and shading calculations of geometric objects.
It can be determined by any two edges from any polygon, e.g. a triangle.
In a \index{heap!base} simple heap, the location \index{location} points to its content.
For the example given, this may be vertices or edges.
According to the doubly-linked list, each element has pointers forward to the next edge and a pointer back to the previous edge.
A simple heap contains edges that are interconnected.
Unconnected edges are not specified by default.
It is agreed upon that a mandatory absence of heaps is either excluded (see sec.\ref{chapter:stricter}) or is not considered separately due to Abelian group properties and previously mentioned restrictions.
Each time a \index{heap copy} pointer is dereferenced, copies of all three vertices are allocated to the memory.
It is not hard to see this attempt would be pretty inefficient.
When working with \index{pointer} pointers, this problem is dropped, especially when additional \index{abstraction} abstraction is introduced.
These abstractions, namely APs, allow expressing \index{heap!complex} complicated heaps intuitively.
For example, the Prolog subgoal \texttt{face(p1,p2,p3)} may denote three vertices $p1,p2,p3$, which are connected by a triangle instead of specifying again and again fully $\exists v1.v2.v3$, where \texttt{p1.data $\mapsto$ $v1$} $\star$ \texttt{p2.data $\mapsto$ $v2$} $\star$ \texttt{p3.data $\mapsto$ $v3$} $\star$ \texttt{p1.next $\mapsto$ p2} $\star$ \texttt{p2.next $\mapsto$ p3} $\star$ \texttt{p3.next $\mapsto$ p1} $\star$ \texttt{p1.prev $\mapsto$ p3} $\star$ \texttt{p3.prev $\mapsto$ p2} $\star$ \texttt{p2.prev $\mapsto$ p1}.

First of all, abstract means generalisation by the introduction of additional parameters.
AP here implies a Horn-rule with an arbitrary number of parameters.
Some authors tend to use instead \index{abstract predicate} "\textit{Abstract Predicate}" as a new term as by \cite{parkinson05}.
\index{abstraction} Abstraction does neither require a new definition (see sec.\ref{chapter:expression}) nor the introduction of a new concept because, in both cases, it is already there --- the same counts for predicates.
Well understood concepts as abstraction and predicates do not require new definitions, mostly since \index{predicate!parameterisation} predicates' parameterisation is already covered by \index{predicate logic} predicate logic.
That is why "\textit{abstract}" is an adjective to (genuine logical) predicates.
Essentially, an abstract predicate does not differ from a predicate since there is not a single semantic justification for distinguishing from a predicate, especially not after applying Prolog-dialect to reasoning over heaps.
An AP has an arbitrary number (including nil) of term parameters and may contain an arbitrary number (including nil) of subgoals to built-in or previously declared predicates.
In the given example \texttt{face(p1,p2,p3)} equals the unfold \index{$\star$} $\star$-formula of subgoals mentioned earlier.
Depending on which state of calculation the subgoal \texttt{face} is, it may either fold or unfold.
The predicate \texttt{face} may depend on further predicates.
However, currently, insufficient information is available on when to fold or unfold the AP definition. 
If unfolding fails, then this might be because it is impossible to do so in general or because unfold and fold were performed at the wrong time or in a wrong order.
Next, a new approach is considered, which allows us to resolve this issue with heaps automatically.

Warren \cite{warren83} uses the term \index{programming by proving} "\textit{programming by proving}" to emphasise Prolog as a PL that might be used to find a solution for a problem formulated as \index{Horn-rule} Horn-rules.
The isomorphism \index{Curry-Howard isomorphism} by \textit{Curry-Howard} \cite{mitchell96} states there is a dependency between proof and programming.
W.r.t. heaps, the philosophical motto behind might be characterised as "\textit{proving is a syntax problem}", which means that \index{syntax analysis} syntactic analysis may prove the soundness of a specifiable heap and heap representation is close to programming models -- in fact to \index{Horn-rule} Prolog's rules.
The major clue in this section is that APs describe a \index{language!formal} formal language (see thes.\ref{thes:PrologMakesHeapSpecSimpler}).
Later we will see that the formal language is a logical dialect.
Consequently, problems with the heap, which by its nature are mostly semantical, may be resolved by syntactic methods and recognition.

\subsection{Fold and Unfold}

The approach presented in this section heavily deviates from existing traditionalistic ones.

According to Reynolds \index{memory model!Reynolds} \cite{reynolds02,parkinson05-2}, the spatial operator $\star$ \index{$\star$} conjuncts two separated heaps.
By doing so, the \index{substructural logic} properties of the underpinning \textit{substructural logic} remain, and \index{rule!thinning} the \textit{thinning rule} does not hold.
Sec.\ref{chapter:expression} shows the classic operator $\star$ is ambiguous.
That is why in the following, only the stricter conjunction operator $\circ$ is meant to replace its spatial predecessor predicate.

APs \index{abstract predicate} are used by the user having the specification role the same way as proposed by Verifast \index{Verifast} \cite{jacobs11}.
The \index{conclusion!logical} conclusion may be achieved faster by \index{tactics} tactics, which may be defined inductively in the verifier \index{Coq} Coq \cite{bertot04} and is derived in a \index{conclusion!semi-automated} \textit{semi-automated manner}.
Systems are based on the fold/unfold \index{fold} principle, which first was proposed by Hutton \cite{hutton98}.
Hutton's approach would fully automate if a complete proof hint list were provided.
Proof hints \index{proof!hint} indicate which \index{redex} \textit{redex} shall be used next to get a proof finished and finished quickly.

Prolog \index{Prolog} is used as \index{language!assertion} assertion language.
\cite{kowalski74} illustratively demonstrates the essence use of \index{Horn-rule} Horn-rules as a concrete instance of formulae in predicate logic.
\cite{kowalski74} proposed Prolog to be used as a \index{language!programming} PL, which, unfortunately, is not always possible due to computability (see sec.\ref{chapter:logical}).
\cite{warren83} contains a logical reasoning implementation based on \index{semantics!operational} operational semantics.
Kallmeyer \cite{kallmeyer10} demonstrates Prolog as a recogniser of \index{morpheme} \textit{morphemes} and a so-called \index{grammar!mutation} grammar mutation by the example of some \index{language!natural} natural languages.
Notably, \index{adjoint-trees} adjoint trees are proposed as a processing technique in natural languages based on explicitly defined $\lambda$-terms, which do not occur in \index{language!formal} formal languages, particularly in PLs avoiding default \index{ambiguity} ambiguities.
Examples show overloading and having difficulties in the analysis is due to the exponential growth of required checks of side-effects.
\cite{matthews98} shows in examples of how Prolog might help to resolve ambiguity problems using syntactic analysis.
Matthews widely uses \index{recogniser!recursive} recursively-descendent recognisers, which work on \index{syntax tree} trees and are implemented very effectively and simple in Prolog.
The implementations base on stack automata recognising \index{recogniser!top-down} LL(k)-grammars with modifications: finite states \index{automaton!finite} are explicitly defined, and the stack mimics recursive Prolog rules.
Matthews uses \index{difference lists} difference lists to implement \index{language!regular} regular languages based on an automaton of \index{automaton!partial-derivatives} partially derived regular expressions \cite{brzozowski64}.
Brzozowski proposes Prolog's formal grammar utility DCG \index{grammar!formal} \index{DCG} \index{Prolog!Definite Clause Grammars} and built-in subgoals to Prolog \index{built-in command} for updating knowledge bases \index{knowledge base} he believes shall be extended.
His approach is bound to regular expressibility (cf. sec.\ref{chapter:expression}).
Pereira's contribution \cite{pereira12} may be assessed as a classic on Prolog and natural language processing.
However, there are no apparent limitations.
These include certain unsupported pattern sets and diverging examples, but those can easily be bypassed.
For example, this refers to the inability to effectively resolve \index{recursion!left} left-recursion  in given Horn-rules since LL(1)-recognisers cannot recognise all \index{token!predecessor} predecessor tokens to determine which rule was applied.
Furthermore, Pereira proposes \textit{$\lambda$}-calculi over trees in order to recognise natural languages.
Thus, \index{morpheme} morphemes and \index{lexeme} lexemes are linked and get a dependent denotation by introduced parameters.
Next, we introduce the first prototypical heap definition obeying def.\ref{def:FirstOrderPredicateLogicFormula} and sec.\ref{chapter:expression}, which can immediately be defined in Prolog.

\subsection{Predicate Extension}
\label{sect:PointsToHeaplets}

\begin{definition}[Heap Assertion] A \textit{heap assertion} $H$ is inductively defined as
\begin{center}
\begin{tabular}{ll}
H ::= & $\textbf{emp} \ | \ \textbf{true} \ | \ \textbf{false} \ | \ x \mapsto E \ | \ H \star H$\\
      & $ | \ H \wedge H \ | \ H \vee H \ | \ \neg H \ | \ \exists x.H \ | \ a(\vec{\alpha})$
\end{tabular}
\end{center}
\label{def:HeapAssertionDefinition}
\end{definition}

Assertion \underline{$emp$} denotes an \index{heap!empty} empty heap, which is true if the passed heap is empty (cf. sec.\ref{sect:ExprPredicates}).
Empty heap behaves neutral towards spatial conjunctions (see sec.\ref{chapter:expression}, sec.\ref{chapter:stricter}).
The operator $\star$ separates two heaps into two \index{heap!independent} independent heaps.
For the sake of simplicity, this section does not stress the specific strictness w.r.t. $\circ$ or $\circ$ (see sec.\ref{chapter:stricter}), although it shall be taken into account. 
However, stricter operations shall always be preferred over $\star$.
Assertion \underline{$true$} \index{heap!assertion} denotes any heap (including empty) is accepted, where \textbf{false} always fails regardless of which heap is about to be passed for interpretation.
These definitions are close to Reynolds' \cite{reynolds02}.
Reynolds' foundation to all definitions is the \index{heap!base} \textit{ordinary (simple) heap}: $x \mapsto E$, where $x$ denotes some \index{location} \textit{location} (e.g. object field accessor $o1.field1$), and $E$ is some valid expression that is assigned to a memory cell at location $x$.

Type checking is done at an earlier stage (see \cite{haberland14-1}).
Currently, it does not matter whether an immediate value or a pointer to some memory cell is in memory (see \cite{burstall72}).
Let us consider two arbitrary \index{heap!complex} compound heaps in Prolog as depicted in fig.\ref{ExampleComplexHeaps1}.

\begin{figure}[h]
\begin{center}
\begin{minipage}{10cm}
\begin{verbatim}
p2(X,Y):-pointsto(loc2,X),pointsto(loc3,Y).
p1(X,Y):-pointsto(loc1,val1),p2(X,Y).
\end{verbatim}
\end{minipage}
\end{center}
 \caption{Example of compound heaps}
 \label{ExampleComplexHeaps1}
\end{figure}

Here $p2$ denotes some predicate with two \index{symbol} symbols $X$ and $Y$, representing some value pointed by locations $loc2$ and $loc3$.
In contrast to this, $p1$ is defined by predicate $p2$.
As we call $p2$ with two syntactically correct arguments, we obtain the $a(\vec{\alpha})$ form.

An interpretation of a \index{heap!interpretation} heap formula $H$ for a given heap denotes a mapping of two heaps, namely the calculated and the expected heap, onto the \index{boolean set} boolean co-domain.
It means, if two given heaps are equal, then interpretation succeeds.
Otherwise, it fails.
Only deductively obtained conclusions are considered for a given (heap) interpretation (see sec.\ref{Intro:HoareTriple} and sec.\ref{Intro:LogicalReasoning}).
A search for a logical conclusion succeeds when a query succeeds.
In all other cases, it fails.
Without any doubt, this is what is expected to obtain from the expected behaviour (see thes.\ref{thes:ProvingEqualsParsing}).\\
For simplicity, it is agreed upon that the heap formulae must be \index{heap!canonisation} canonised by lexicographical ordering to
$$a_0 \star a_1 \star \cdots \star a_n   \equiv \prod^n_{\forall j} a_j,  n \ge 0$$
The $\wedge$ and $\vee$-connected heap graphs are noted in \index{Prolog!query} Prolog as queries in the form $s_j, s_{j+1}, \cdots , s_{j+k}$.
Alternatively, \index{Prolog!disjunction} disjunctions of Prolog subgoals may be split into alternatives whose heads differ from each other using the operator "\textbf{;}".
Assertion negation \index{assertion negation} is considered \index{predicate negation} predicate negation.
In general, sequence negation does not mean predicate negation.
Double negation, in general, does not hold w.r.t. subgoals, especially not when a predicate is not defined totally.
Existential variables may be introduced at any place in Prolog rules.
However, it is expected that all introduced variables are bound and used.

The \index{constant function} constant functions \underline{$true$} and \underline{$false$} are  syntactic sugar since these may be rewritten by other rules specifying concrete heaps that would qualify.
However, constant functions simplify specifications towards any matching heaps.
\underline{$true$} may be replaced by \index{true} boolean truth.
\underline{$false$} may be replaced by \index{false} boolean \index{predicate!contradiction} contradiction.

\begin{corollary}[Heap Soundness]
Any syntactically correct heap formula describes a corresponding \index{heap graph} heap graph.
Moreover, any heap graph may be represented by a corresponding \index{heap formula} heap formula.
In general, both sides hold, except infinite heaps, which intentionally are not considered.
\end{corollary}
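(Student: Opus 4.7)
The plan is to prove both directions by structural induction, exploiting the inductive definitions of the heap term (def.\ref{def:HeapTermDefinition}), its extension (def.\ref{def:HeapTermExtendedDefinition}), and the heap assertion (def.\ref{def:HeapAssertionDefinition}), together with the finite heap graph (def.\ref{def:finiteHeapGraphDefinition}). Since the claim excludes infinite heaps, we may assume throughout that the heap graph under consideration has a finite vertex and edge set, which legitimises the induction.

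For the forward direction (formula $\Rightarrow$ graph), I would induct on the structure of $H$. The base cases are straightforward: $\underline{emp}$ maps to $(\emptyset,\emptyset)$, a simple heap $x\mapsto E$ maps to the singleton edge graph with vertices $\{x\}\cup\mathrm{vertices}(E)$ and the edge $(x,E)$, and the constant functions $\underline{true}$, $\underline{false}$ either accept or reject without constraining the graph. For the spatial step $H_1\star H_2$, I would invoke theo.\ref{theo:GeneralizedHeapConjunctionTheorem} (replacing $\star$ by the strengthened $\circ$ in line with sec.\ref{chapter:stricter}) to show that the union of the graphs obtained from $H_1$ and $H_2$ is again a well-defined finite heap graph under the non-repetitiveness guarantee of lem.\ref{lem:HeapConjunctionMonoid}. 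The logical connectives $\wedge,\vee,\neg$ act on the graph set-theoretically (intersection, union, complement within the finite universe), and $\exists x.H$ is handled by picking a witness and appealing to the inductive hypothesis. The call $a(\vec{\alpha})$ reduces, by unfolding the AP's Horn rule in $\Gamma$, to a formula covered by the previous cases.

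For the backward direction (graph $\Rightarrow$ formula), given a finite heap graph $G=(V,E,L)$ from obs.\ref{obs:GraphIRHeap}, I would enumerate the edges $(v_i,w_i)\in E$ and construct the canonical formula $\prod^{n}_{j=0} (v_j\mapsto w_j)$ under the lexicographic canonisation introduced just before this corollary. Property~2 (completeness) from sec.\ref{sect:ExprPredicates} and the edge-oriented representation choice (approach~2b) justify that every vertex is reachable as the endpoint of some simple heap, and property~3 (equivalence) ensures the resulting formula is well-defined up to commutation of $\circ$. Locality (property~4) guarantees that adding or removing one edge corresponds to one simplex in the formula, so the construction is compositional.

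The main obstacle I expect is the case of abstract predicate calls $a(\vec{\alpha})$ in the forward direction, because APs are defined by Horn rules that may be recursive (cf.\ the \texttt{tree} example in sec.\ref{sect:ExprPredicates}); naively unfolding may not terminate, and the \emph{Halting-problem} warning from property~4 shows that the correspondence formula~$\leftrightarrow$~graph is in general not decidable. My plan is to sidestep this by restricting to finite unfoldings — which is legitimate since the corollary explicitly excludes infinite heaps — and arguing that for any finite witness graph, only finitely many unfolding steps are used, so the induction is well-founded on the size of the resulting graph rather than on the syntax of the AP. A secondary subtlety is the treatment of $\underline{true}$ and $\underline{false}$ in the backward direction: since these are not uniquely reconstructible from a graph, I would note that the mapping from graphs to formulae is canonical up to the equivalences in theo.\ref{theo:ReynoldsHeapProperties}, rather than strictly injective, which matches the informal ``both sides hold'' of the statement.
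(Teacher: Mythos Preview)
Your proposal is correct in spirit but considerably more elaborate than what the paper does. The paper's proof is essentially three lines: it states that soundness holds \emph{except for APs}, which are deferred to later sections; it then simply invokes def.\ref{def:HeapAssertionDefinition} together with theo.\ref{theo:ReynoldsHeapProperties}, def.\ref{def:HeapSatisfactionRelation}, and def.\ref{def:HeapTermExtendedDefinition}; and it remarks that $\star$ rather than $\circ$ is assumed here for historical reasons. No explicit induction is carried out, and no backward construction is spelled out --- the corollary is treated as an immediate consequence of the already-established definitions and properties.

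The substantive difference is your handling of the AP case $a(\vec{\alpha})$. You attempt to close this case via finite unfoldings, arguing well-foundedness on the size of the witness graph. The paper takes the opposite route: it explicitly carves APs out of the claim (``Soundness holds, except APs (see later)'') and postpones them to the subsequent material on translating Horn rules and syntax parsing. Your approach buys a self-contained argument at this point in the text; the paper's approach buys brevity and avoids the termination subtlety you identify, at the cost of leaving the corollary conditional on later machinery. A minor point: your treatment of $\wedge,\vee,\neg$ as set-theoretic operations on graphs is a slight category error --- these are logical connectives on assertions, so a single formula with $\vee$ need not describe one graph but rather a set of satisfying graphs --- though the paper itself is equally loose on this distinction.
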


\begin{proof}
Soundness holds, except APs (see later).
Def.\ref{def:HeapAssertionDefinition} may be applied to theo.\ref{theo:ReynoldsHeapProperties}, def.\ref{def:HeapSatisfactionRelation} and in consequence to def.\ref{def:HeapTermExtendedDefinition}.
For the sake of simplicity and for historic reasons of publication ordering, here we are assuming $\star$ instead of $\circ$ --- as was already mentioned.
\end{proof}

\begin{definition}[Informal Heap Graph sketch]
\label{def:NonformalHeapGraph}
 A heap graph denotes a \index{connected graph} connected graph that is directed, simple and resides \index{dynamic memory} in the dynamic memory segment.
Dynamic regions are allocated and freed by the \index{OS} OS.
They can be altered in a program by wrapped special-purpose function calls to some virtual C-library (which is not of interest in this work and therefore not further considered).
Graph vertices are pointed by at least one \index{variable!local} local variable (as a pointer) or are addressable by available locations.
Each graph vertex is corresponding to some address inside the dynamic memory segment.
The width of each vertex is variable but fix and depends on the \index{type variable} \index{pointer} pointer type. 
When one vertex points to another, then both vertices are neighbouring in the corresponding heap graph.
If a vertex is associated with two pointers, then the first pointer \index{alias} \textit{aliases} the second pointer.
\end{definition}

Def.\ref{def:NonformalHeapGraph} is a specialisation of obs.\ref{obs:GraphIRHeap}.

\subsection{Predicates as Logical Rules}

\index{abstract predicate} APs allow abstraction from simple heaps to complex ones, making them more intuitive.
For example, \cite{parkinson05-2} introduces APs, which annotate a given incoming program and are translated together with program statements till assembly output is generated.
The presented approach automatically is an attempt to overcome the gap between specification and logical reasoning.
Prolog in this section is used as a PL, in which assertions and APs over heaps are specified.
However, a semantic mismatch exists between program and specification language: two formalisms exist simultaneously and often two not related implementations using different IRs.
Diverging notations and representations are the consequence.
\index{language!imperative} PLs may and sometimes even must differ from the exemplary language, e.g. chosen in this work.
Following \index{logical formula} Hoare calculi, logical formulae are described in predicate logic, so in a logical paradigm.
Unfortunately, sometimes this is violated (see sec.\ref{chapter:intro}), and further restrictions are unavoidable.
Thus, for example, variables (objects) are used as \index{variable!typed} locals instead of terms, resulting in many restrictions and workarounds to achieve real-term behaviour. 
Thus, a \index{language!specification} specification language, namely its units, are degrading to a statement sequence, and nothing else is left in common with the initial idea of a Hoare calculus. 
Units would not have anything in common with the declarative-logical paradigm (see obs.\ref{obs:EqualityOfProofElements}, obs.\ref{obs:StackBasedCalls}): the calculation state needs to be described referring to symbols and predicates.
Symbol and its \index{scope visibility} visibility scope differ from local variables.
If symbols are suddenly redefined, then those are not symbols anymore.
By definition, a symbol's value is bound exactly once compared to conventional non-constant variables as known from imperative PLs.
The difference \index{paradigm} \index{paradigm!imperative} \index{paradigm!declarative} may not look that big, but this is already a severe unpassable barrier for symbol and predicate definitions.
Logical algorithms often (but not always) rely purely on symbols and not on variables.
Terms \index{term} and \index{predicate} predicates are vital concepts in logical reasoning.
In both models, heap specification and verification, a minimal Turing-computable subset (cf. PCF \cite{mitchell96}) of Prolog (cf. LCF \cite{plotkin77}, \cite{cohn83}) may be chosen.
Verification requires comparing calculation states to decide the proof finished adequately.
The heap comparison does not exclude \index{arithmetic expression} arithmetic expressions, but Hoare's main characteristics remain declarative initially intended (cf. sec.\ref{chapter:intro}).

The approach in \cite{berdine05-2} introduces \index{heap!with symbols} symbols on heaps but with problematic limitations, e.g. the possibility to describe whole heaps as $X \star Y$ is not foreseen (full restrictions are listed in sec.\ref{chapter:intro}).
Here, we allow symbols to remain arbitrarily in a logical sense, so as allowed by Prolog (see sec.\ref{chapter:logical}).

When it comes to non-determinism with rule selection, Berdine \cite{berdine05-2} suggests selecting the longest as a heuristic.
We do the same, which is naturally achieved by prioritising rules by their position in a Prolog listing.
In general, the search strategy of Prolog is preferred for proves.
In cases when a specified depth shall not exceed, \index{branch and bound} branch-and-bound shall be taken into consideration \cite{sterling94}, \cite{bratko01} (cf. obs.\ref{obs:DeductionWithBacktracking}).
\index{WAM} WAM \cite{warren83} is the closest approach to functional or imperative in comparison.
When we describe heap state checks, it is easier to refer to \index{fact} facts and \index{logical rule} rules than program statements or operational semantics (cf. sec.\ref{chapter:expression}).
The program developer is responsible for a compact programmer, the specifier for facts and rules.
The following formalisms help in transforming APs into \index{grammar!formal} formal grammar used for further representation.
All encountered advantages are characteristic for Prolog and are used further in the method proposed.

\begin{definition}[Predicate Rule]
\label{def:AbstractPredicateRule}
  A \textit{predicate rule} is defined as $\forall n.a:-q_{k \times n}.$ $\Leftrightarrow$ $a:-q_{k,0}, q_{k,1}, \ldots, q_{k,n}.$ for $k \in \mathbb{N}_0$, where $a$ denotes the predicate head, $q_{i,j}$ denotes some subgoal $\forall i,j$.
\end{definition}

It is intentional def.\ref{def:AbstractPredicateRule} is close to def.\ref{def:PrologRule}.
The first definition is used for descriptions and calls to heap predicates.
By default, it is agreed upon $a$ is valid iff all \index{subgoal} \textit{subgoals} $q_{k,j}$ in $a$ hold for $0 \leq j \leq n$.
The EBNF from fig.\ref{fig:PrologRulesEBNF} defines the syntax for the predicate rules.
\texttt{<number>} denotes any valid Prolog number, \texttt{<atom> '(' <arguments> ')'} denotes some \index{functor} functor with some simple name \texttt{<atom>} and an arbitrary number of arguments.
\texttt{<var>} denotes some symbol variable, which starts with a capital letter, e.g. $X$.
Fig.\ref{fig:mapFunctionalExample} in sec.\ref{chapter:expression}, and sec.\ref{chapter:logical} both present an example of the useful functional "\texttt{map}" in Prolog.
Using this kind of functionals, together with "\texttt{fold}" and "\texttt{concat}", often allows very compact notations, including recursion-free notations (cf.\cite{haberland08-2}).

\begin{figure}[t]
\begin{center}
\scalebox{0.77}{
\begin{tabular}{c}
\begin{minipage}{11cm}
 \begin{grammar}
 <predicate> ::= <head> [ ':-' <body> ] '.'
 
 <head> ::= <atom> [ '(' <arguments> ')' ]
 
 <body> ::= <sub_goal> \{ ',' <sub_goal> \}*
 
 <sub_goal> ::= '!' | 'fail' | <functor_term> | <term> <rel> <term>
 
 <rel> ::= '=' | '\textbackslash =' | '\textless' | '\textless=' | '\textgreater' | '\textgreater='
  
 <functor_term> ::= <atom> '(' [ <arguments> ] ')'
 
 <arguments> ::= <term> \{ ',' <term> \}*

 <term> ::= <atom> | <var> | <list> | <number> | <functor_term>
 
 <list> ::= '[' [ <term> '|' ] <arguments> ']'
\end{grammar}
\end{minipage}
\end{tabular}}
\end{center}
 \caption{\index{EBNF} EBNF for logical rules}
 \label{fig:PrologRulesEBNF}
\end{figure}

Let $a$ denote some predicate, then its subgoals $q_{k,j}$ are evaluated LR.
The symbolic environment $\sigma$ \index{symbols environment} is updated after each subgoal call inside a \index{predicate!body} predicate's body, according to fig.\ref{fig:BoxModelPredicateCall}.
Non-assigned symbols remain but may be assigned after a subgoal.
The resulting terms of previous subgoals do not require updates as symbols are assigned compatible value.
The semantics of a \index{predicate!semantics} predicate call is defined as:

$$C(a)\llbracket a(\vec{y}):-q(\vec{x}_{k,n})_{k \times n} \rrbracket \sigma =$$
$$D \llbracket q_{k,n} \rrbracket \sigma(\vec{x}_{k,n}) \circ \cdots \circ D \llbracket q_{k,1} \rrbracket \sigma(\vec{x}_{k,1})$$

By default, the \index{term vector} term vector $\vec{y}$ may contain common elements with vector $\vec{x}_{k,n}$, $\forall k,n$, and $C\llbracket . \rrbracket$ of type $atom \rightarrow predicate \rightarrow \ \sigma \ \rightarrow \sigma$, $D\llbracket . \rrbracket$ has type $subgoal \rightarrow \sigma \rightarrow \sigma$, and $\sigma$ is of type $term^\star$ $\rightarrow term$, where $\star$ denotes \textit{Kleene's star} \index{Kleene's star} \cite{davis94}.

The subgoal $q_{k,j'}$ does not necessarily have to define apriori the \index{connected graph} connected graph.
However, if so, then this indicates the heap (sub-)set is fully defined.
At least, it is complete regarding the intuitive understand of a data structure.
\index{modularity} Modularity and \index{separation of concerns} "\textit{separation of concerns}" may always be considered a good virtue of program development.
Thus, an AP forces a designer sooner or later to describe a data structure's intention.

Consequently, it may imply that one AP shall correspond to one heap.
Next, adding more and more $\star$-conjuncts, the corresponding heap graph grows continuously.
$\star$-conjuncts build up heaps, possibly connected ones corresponding to APs.
When reviewing fold and unfold \index{fold} \index{unfold} of APs (in analogy to procedures), it can be noted it has parameters, heap graph vertices to be precise, which may be on both sides, on the callee's side as on the caller's side of a predicate call.
Also, vertices may exist, visible only inside a predicate, which may not be used outside (at least not directly).

W.l.o.g. it is agreed upon object access is \index{field accessor} granted by the functor "\textbf{.}", e.g. \texttt{a.b} (see fig.\ref{fig:PrologRulesEBNF}) or \texttt{oa(object5, fld123)} \cite{haberland14-1}.
For the sake of demonstration and modularity, it is further agreed upon an object with its fields can be used to predicate parameters as well as part of their subexpressions.
The difference between objects and ordinary \index{variable!local} stack-local parameters is nil.
A detailed explanation is given later.

\begin{definition}[Predicate Rule Set]
\label{def:PredicateRuleSetDefinition}
A \textit{predicate rule set} (also called "partition" or "family") $\Gamma_a \subseteq \Gamma$ for some predicate named $a \in T$ and $\forall i.j. q_{i,j} \in (T\cup NT)$, where $T$ are terminals, and $NT$ denotes non-terminals (NT), is defined as:
\begin{center}
\begin{tabular}{ll}
$\Gamma_a ::=$ & $a:-q_{m \times n}$\\
 & $\Leftrightarrow
\begin{array}{llllllll}
 a:-    & q_{0,0} & , & q_{0,1} & , & \ldots & , & q_{0,n}\\
 \vdots & \vdots  &  & \vdots  &  & \ddots &  & \vdots\\
 a:-    & q_{m,0} & , & q_{m,1} & , & \ldots & , & q_{m,n}
\end{array}
$
\end{tabular}
\end{center}

If $n = 0$, then $a$ is a \textit{fact}.
Terms may be associated with $a$ (containing symbols, for instance, when $m=0$, $n>0$).
If $t \in T$, then $t$ is of kind $loc \mapsto val$, otherwise $t \in NT$ denotes a predicate named $t$ in $\Gamma$.
\end{definition}

By default, it is agreed upon the sequence $q_{k,0}, q_{k,1}, \ldots , q_{k,m}$ from $q_{m \times n}$ any line is in \index{rule canonisation} canonised form, s.t. for $\exists s\leq m$ non-trivial elements the first $s$ subgoals are positioned, and for all remaining $m-s$ subgoals it is a \index{tautology} tautology as subgoal each, whose \index{domain} domain is fully defined as truth ($\top$).
Furthermore, it is agreed upon that $$\exists k.a:-q_{k} \preceq a:-q_{k+1}$$ holds, meaning that the predicate listed earlier in $\Gamma_a$ has higher priority than the priority listed later.

\begin{corollary}[Predicates Environment]
\label{corollary:PredicateEnv}
For the predicates environment $\Gamma$ for a given Prolog program, $\Gamma = \bigcup_{t \in T} \Gamma_t$ holds.
All predicates $\Gamma_t$, which depend on each other, must be located in a closed \textit{section of a predicate} $\overline{\Gamma_t}$. $\overline{\Gamma_t} \subseteq \Gamma$ holds.
\end{corollary}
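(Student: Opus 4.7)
The plan is to split the statement into its two assertions and treat them in order, using only def.\ref{def:PredicateRuleSetDefinition} and the syntactic structure of a Prolog program. First I would establish the set equation $\Gamma = \bigcup_{t\in T}\Gamma_t$ by a standard double-inclusion argument. For "$\supseteq$" one observes that each $\Gamma_t$ is introduced in def.\ref{def:PredicateRuleSetDefinition} as a subfamily of predicate rules of the whole program, so every member belongs to $\Gamma$. For "$\subseteq$" one picks an arbitrary rule $r\in\Gamma$; because any rule has a uniquely determined head symbol $h(r)\in T$, by def.\ref{def:AbstractPredicateRule}, it falls into the partition $\Gamma_{h(r)}$, and therefore into the union. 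Disjointness of the partition (which also justifies the word "partition" used in def.\ref{def:PredicateRuleSetDefinition}) follows from the same uniqueness of head names.

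Next I would formalise the notion of closed section $\overline{\Gamma_t}$. The natural way is to introduce a dependency relation $t \rightsquigarrow t'$ whenever some rule in $\Gamma_t$ contains a subgoal calling $t'$ (ignoring built-ins and arithmetic relations from the EBNF in fig.\ref{fig:PrologRulesEBNF}, which are handled by the host system), and then define $\overline{\Gamma_t} := \bigcup_{t'\in R(t)} \Gamma_{t'}$, where $R(t)$ is the reflexive-transitive closure of $\rightsquigarrow$ starting at $t$. Closure under dependency is then immediate: if $t'\in R(t)$ and $t'\rightsquigarrow t''$, then $t''\in R(t)$, so $\Gamma_{t''}\subseteq\overline{\Gamma_t}$. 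The containment $\overline{\Gamma_t}\subseteq\Gamma$ then follows from the first part, since $\overline{\Gamma_t}$ is a union of families $\Gamma_{t'}$ each of which sits inside $\Gamma$.

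The main obstacle, I expect, is not the set-theoretic bookkeeping but justifying that $R(t)$ is well-defined as a finite set and that the partition is indeed disjoint in the presence of overloaded or prioritised rules permitted by def.\ref{def:PredicateRuleSetDefinition} (recall $a\!:\!-q_k \preceq a\!:\!-q_{k+1}$). I would argue disjointness by refining the index $t$ to a pair (name, arity), which matches Prolog's actual resolution behaviour as discussed in sec.\ref{sect:PrologAsReasoningSystem} and does not collide with the convention that overloaded predicates of equal arity are grouped in a single $\Gamma_t$. Finiteness of $R(t)$ follows because $T$ is finite for any syntactically complete Prolog program (only finitely many head symbols occur), so the transitive closure terminates; this also squares with the termination remarks around the finite transducers of cor.\ref{cor:TranducersTerminate}. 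With these two points secured, both inclusions and the closure property fall out directly, and no further machinery beyond def.\ref{def:AbstractPredicateRule} and def.\ref{def:PredicateRuleSetDefinition} is needed.
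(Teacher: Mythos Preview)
Your proposal is correct and follows the same conceptual route as the paper---partitioning $\Gamma$ by head symbol and grouping mutually dependent partitions into closed sections---but you supply considerably more detail than the paper does. The paper's own proof is a three-sentence sketch: it merely states that dependent $\Gamma_t$ sit together in one partition, that all environments lie in $\overline{\Gamma}$, and that non-neighbouring partitions cannot depend on each other. Your double-inclusion argument, the explicit dependency relation $\rightsquigarrow$ with its reflexive-transitive closure, and your treatment of disjointness via (name, arity) pairs and of finiteness of $R(t)$ are all additions that make the argument genuinely rigorous where the paper leaves it informal; none of this machinery appears in the original, but it is the natural way to flesh out the same idea.
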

\begin{proof}
The main idea is to show: all dependent $\forall t.\Gamma_t$ are in one partition of the predicate, and all independent partitions are independent of the dependent \index{predicate environment} predicate environments.
All predicate environments are located in $\overline{\Gamma}$, including both dependent and independent ones.
Predicates $\Gamma_a$ and $\Gamma_b$ from non-neighbouring partitions from $\overline{\Gamma}$ can never depend on each other.
\end{proof}
\textbf{Remark:}
Obviously, due to the \index{Halting-problem} Halting problem, the predicate call from a partition is, in general, not decidable.
Next, the expressibility of predicates is considered.

\textbf{Remark:}
Naming clashes in $\Gamma$ may be dispatched by choosing a proper encoding.
The predicate \index{location} location shall be encoded into the name, for example, the class for which the predicate is foreseen is prefixed, so it becomes possible to distinguish predicates.
By default, predicates with equal \index{name clashing} location are part of the same predicate partition, so there is no more conflict.

\begin{lemma}[Predicate Completeness regarding Expressibility]
\label{lem:APsAreFOPs}
APs cover all first-order predicates for describing a heap.
\end{lemma}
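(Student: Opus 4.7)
The plan is to proceed by structural induction on the first-order formula $\Phi$ of def.\ref{def:FirstOrderPredicateLogicFormula}, exhibiting for each syntactic case a corresponding predicate rule set $\Gamma_a$ (in the sense of def.\ref{def:PredicateRuleSetDefinition}) whose interpretation on a heap $h$ coincides with the interpretation of $\Phi$ on $h$. The heap assertions of def.\ref{def:HeapAssertionDefinition} already give the spatial primitives, and the Horn-rule machinery of sec.\ref{sect:PrologAsReasoningSystem} gives the logical connectives, so the core work is to line these two up.

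First I would dispatch the base cases: the constants $\underline{true}$ and $\underline{false}$ are APs with empty body and with body $fail$ respectively; a simple heap $x\mapsto E$ is encoded by the primitive \texttt{pointsto}-rule (as illustrated in sec.\ref{sect:PointsToHeaplets}); an atomic relation $REL(f(\vec{x}))$ becomes a built-in subgoal using the operators listed in the EBNF of fig.\ref{fig:PrologRulesEBNF}; and a predicate application $P(\vec{x})$ is already of the form $a(\vec{\alpha})$ from def.\ref{def:HeapAssertionDefinition}. For the inductive step, $\Phi_1 \wedge \Phi_2$ is realised by placing the translations of $\Phi_1$ and $\Phi_2$ as consecutive subgoals in a single rule body (def.\ref{def:AbstractPredicateRule}), the spatial conjunction $\Phi_1 \star \Phi_2$ likewise but using the strengthened $\circ$ of sec.\ref{chapter:stricter}; a disjunction $\Phi_1 \vee \Phi_2$ is represented by two alternative rules sharing the same head within one partition $\Gamma_a$ of def.\ref{def:PredicateRuleSetDefinition}; and $\exists x.\Phi[x]$ is modelled by a fresh unbound symbolic variable appearing in the body of the rule, exactly as per the semantics of a Prolog query of def.\ref{def:QueryToProlog}.

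Two cases require more care and form the main obstacle. The first is negation $\neg\Phi$: Prolog offers only negation-as-failure, which in general is weaker than classical negation. I would argue that for heap descriptions this gap closes, because (i) heaps are finite by def.\ref{def:finiteHeapGraphDefinition}, (ii) the carrier set of locations appearing in a heap formula is bounded by the free symbols of $\Phi$, and (iii) the closed-world assumption holds over the active set of simple heaps (as already invoked in the $a\circ a$-solver discussion of sec.\ref{sect:StricterDiscussions}); together these make $\neg\Phi$ decidable and faithfully captured by a \texttt{!}, \texttt{fail} rule or, where convenient, by inversion via conv.\ref{conv:EmptyHeapInversion} and lem.\ref{lem:HeapInversionHomomorphism}. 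The second is universal quantification $\forall x.\Phi[x]$, which is not a Horn clause primitive; here I would appeal to the same finiteness argument to rewrite $\forall x.\Phi[x]$ as $\neg \exists x.\neg \Phi[x]$, then invoke the two preceding cases. This reduction is where one needs to be honest about the scope of the lemma: it holds precisely for heap-describing formulae where the quantifier domain is bounded by the locations active in the heap under interpretation.

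Finally I would bundle the induction by noting that every syntactic construct produced by the translation is itself of the form admitted by def.\ref{def:AbstractPredicateRule} and def.\ref{def:PredicateRuleSetDefinition}, and therefore the resulting $\Gamma_a$ is a well-formed predicate partition in $\Gamma$ by cor.\ref{corollary:PredicateEnv}. Faithfulness of the translation -- that $h \models \Phi$ iff the corresponding query on $\Gamma_a$ succeeds on $h$ -- follows by a second, parallel induction using the call semantics $C\llbracket \cdot \rrbracket$ introduced in sec.\ref{sect:PointsToHeaplets}. I expect the hardest bookkeeping to be tracking free versus bound symbols across $\exists$ and $\forall$ translations so that the unification behaviour of the generated rules matches the intended model-theoretic meaning, rather than any single connective in isolation.
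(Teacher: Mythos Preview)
Your proposal is correct but takes a substantially more laborious route than the paper. The paper's proof of lem.\ref{lem:APsAreFOPs} is a single sentence: it defers the entire argument to Kowalski's classical result \cite{kowalski74} that Horn clauses (and hence Prolog predicates) already capture first-order predicate logic, so since APs are Prolog predicates by construction, the lemma follows immediately. No induction, no case analysis, no heap-specific reasoning is carried out in the paper itself.

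What you do differently is reconstruct, in the heap setting, the content of that cited result via structural induction on the formula grammar of def.\ref{def:FirstOrderPredicateLogicFormula}. This buys you a self-contained argument and forces you to confront the two genuinely delicate points---negation and universal quantification---that the bare citation glosses over. Your treatment of these via finiteness of heaps (def.\ref{def:finiteHeapGraphDefinition}) and the closed-world assumption is the right idea and is in fact more honest than the paper's deferral, since Kowalski's result concerns the Horn-clause fragment and does not directly give you classical negation or $\forall$ either. The paper's approach is terse and relies on the reader accepting the citation at face value; yours is longer but actually engages with where the expressibility claim needs the heap-specific hypotheses. Either is acceptable for this lemma, but be aware that the paper expects the one-line version.
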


\begin{proof}
A complete proof on expressibility of first-order predicates in Prolog is already done.
It can be found in \cite{kowalski74}.
\end{proof}

\begin{lemma}[Higher-order Predicate Expressibility]
\label{lem:APsAreSOPs}
APs may express \index{higher-order logic} second-order and even higher-order.
\end{lemma}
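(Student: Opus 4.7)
The plan is to exhibit a concrete reduction from higher-order predicate use to first-order Prolog machinery, leveraging the reification of predicate names as Prolog terms together with the meta-call mechanism. Since lem.\ref{lem:APsAreFOPs} already delivers first-order expressibility, the task reduces to showing that any $n$-th order schema ($n\ge 2$) over heap assertions can be encoded as a first-order rule set in $\Gamma$.

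First, I would formalise the reification step: any AP named $p$ declared in $\Gamma$ can be represented as a functor term $p(\vec\alpha)$ by def.\ref{def:AbstractPredicateRule}. Thus a predicate is syntactically indistinguishable from any other Prolog term and may occur as an argument of another predicate. Next, I would introduce an auxiliary binary predicate $\texttt{apply}(P,\vec X)$ whose body uses the built-in decomposition operator "\texttt{=..}" (see sec.\ref{sect:PrologAsReasoningSystem}) to unfold $P$ together with $\vec X$ into a callable subgoal, which is invoked via the built-in \texttt{call}. This is exactly the pattern already displayed in the \texttt{map} example referenced in sec.\ref{sect:LogicalReasoningAsProof} and in fig.\ref{fig:mapFunctionalExample}. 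With $\texttt{apply}$ available, any schema of the form $F(p_1,\dots,p_k)$ — where the $p_j$ range over APs — can be written as a first-order rule whose body invokes $\texttt{apply}(p_j,\vec{\alpha}_j)$ for each argument predicate.

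Second, I would iterate this construction: since the encoded predicate-as-term is itself a first-order object in $\Gamma$, the same $\texttt{apply}$ device can be nested to account for predicates ranging over predicates ranging over predicates, etc. By induction on the order $n$, every $n$-th order AP schema corresponds to a finite family of first-order Prolog rules in $\Gamma$, with soundness inherited from def.\ref{def:PredicateRuleSetDefinition} and cor.\ref{corollary:PredicateEnv}. For the heap fragment in particular, the restriction from sec.\ref{sect:ExprPredicates} (that higher-order use must preserve inductive well-definedness of heap predicates) guarantees that the encoded rules remain within the CF-mighty fragment used in sec.\ref{chapter:APs} and do not break termination of the transducer interpretation (cf.\ cor.\ref{cor:TranducersTerminate}, referenced later in the paper).

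The main obstacle, and the place where the argument must be most careful, is distinguishing "higher-order style" (predicate passing via reification, which Prolog trivially supports through \texttt{call}/\texttt{=..}) from genuine higher-order logic in the model-theoretic sense — the latter is known to exceed first-order expressibility in general. I would therefore scope the claim explicitly to what is needed here: heap descriptions whose higher-order use is bounded by the finitely many APs declared in $\Gamma$, so that the $\texttt{apply}$-encoding is faithful. Under that scope, the reduction is constructive and the lemma follows; without it, the claim would collide with well-known incompleteness phenomena already alluded to in sec.\ref{chapter:intro}.
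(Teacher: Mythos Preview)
Your proposal is correct and follows essentially the same route as the paper: both arguments rest on reifying predicate names as Prolog terms and invoking them via \texttt{=..} together with \texttt{call}, with the \texttt{map} functional of fig.\ref{fig:mapFunctionalExample} serving as the canonical witness. The paper's proof is more example-driven (it additionally exhibits \texttt{foldl} and its type), whereas you make the induction on the order $n$ and the scoping caveat about genuine model-theoretic higher-order logic explicit; the latter is a useful clarification that the paper only hints at through its CF-grammar restriction remark.
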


\begin{proof}
First, let us briefly stick to the essence of \index{expressibility} expressibility of first-order Prolog predicates.
Next, we increase the expressibility level.
The higher-order differs from the first-order by further predicates abstraction.
So, e.g. a predicate may be used as a logical expression or variable.
In Prolog, this may happen during calls using some \index{predicate!built-in} built-in predicate \texttt{call}, which accepts \index{term!incoming} incoming and \index{term!outgoing} outgoing terms, e.g. \texttt{pred1(X):-call(pred2,X)}.

Given some predicate \texttt{P} with arguments: list of incoming terms \texttt{[X|Xs]}, list of outgoing terms \texttt{[Y|Ys]} and some terms list used for incoming and outgoing terms.
W.l.o.g. the last mixed-use term list may be empty for the sake of induction here.
The predicate "\texttt{map}" may then be defined as shown in fig.\ref{fig:mapFunctionalExample}, which accepts a predicate to each incoming term consecutively \index{calculation ordering} in the LR order.
Higher-order predicates may be useful in modules and for single compact data structure accessors.
For example, class-based objects alter while execution, according to the CFG.
Here, even the control may be bidirectionally passed between caller and callee instances, as is the case with most \index{pattern} \textit{behavioural patterns}, e.g. the \index{pattern} Observer pattern \cite{kerievsky05}, \cite{krishna09}.
The pattern behaviours are very important use-cases that decide usability of proposed heap definitions, as was investigated by \cite{krishna09}, \cite{krishna08}, \cite{pottier08}.

\begin{figure}[t]
\begin{center}
\begin{tabular}{l}
\begin{minipage}{7cm}
\begin{verbatim}
map([],P,[]).
map([X|Xs],P,[Y|Ys]) :- 
    Goal =.. [P,X,Y],
    call(Goal), map(Xs,P,Ys).
\end{verbatim}
\end{minipage}
\end{tabular}
\end{center}
  \caption{Functional map/3}
  \label{fig:mapFunctionalExample}
\end{figure}

The predicate type of \texttt{map/3} is $$list_a \rightarrow (list_a \rightarrow list_b) \rightarrow list_b$$, so the second incoming type makes a single predicate $a$ \index{abstract predicate family} a predicate family, which accepts a list of some base type as input, and base type $b$ as output.
There is no hint, whether $a=b$ and $a \ne b$.
By $list_a$, we denote a dependent Prolog definition of a list whose elements are homogeneous of type $a$ --- third-order $\lambda$-terms (cf. def.\ref{def:TypedLambda2ndOrder}).
Thus, recursion may be replaced by the third- and higher-order predicates, for example, by using \index{left folding} \textit{left folding} \index{fold} (\textit{foldl}), which applies some \index{predicate!higher-order} predicate $\oplus$ to a given list of incoming terms to the current intermediate result of the calculation.
When starting with the given neutral element,

 $$\texttt{foldl($\oplus::a\rightarrow b \rightarrow a, \ \varepsilon::a, \ X::list_b$)::a}$$

right-folding works in analogy, starting with the right-hand side and continuing calculation in right-to-left order. 
"\texttt{foldl}" determines \index{algebra!initial} \index{initial algebra} the \textit{initial algebra} with some initial value $\varepsilon$ and \index{carrier set} carrier set $X$ and binary operation $\oplus$, which is defined over the same type as $\varepsilon$ is and is applied element-wise for each element from $X$ consecutively.
The operation $\oplus$ determines the result, which is of the same type as $\varepsilon$.
Assume $a$ equals $b$, and both variable integers and $X=[1,2,3]$ are of the kind integer list.
Assume further, the initial value $\varepsilon$ equals $7$, then \texttt{foldl} calculates $((\varepsilon+1)+2)+3)$, which yields $13$, which is an integer.
As we see later, higher-order predicates are not that useful for heap verification as they are for user-defined limits and predominantly arithmetic lists.
It shall be mentioned that higher-order (except parameterised Prolog terms) will explicitly not be considered since they otherwise may not be expressible by some CF-grammar to be parsed.
\end{proof}

For the sake of completeness of syntactical definitions in fig.\ref{fig:PrologRulesEBNF} and the translation (next section), it shall be considered whether transforming Prolog subgoals stepwise using the \index{sequence operator} sequence operator adequately "\textbf{;}" and cut operator "\textbf{!}" is needed.
If a predicate's body contains "\textbf{;}", then the whole sequence after "\textbf{;}" shall be turned into a new predicate with the same left-hand side.
For instance,
 $$b :- a_0, a_1, ..., a_m; a_{m+1}, ... , a_n$$ 
is $\exists m.0\le m \le n$ divided into
 $$b :- a_0, a_1, ..., a_m. \quad \ b :- a_{m+1}, ... , a_n.$$
If in analogy to that "\textbf{!}" occurs in $$b :- a_0, a_1, ..., a_m,!, a_{m+1}, ... , a_n$$, then $a_0$, $a_1, ...,  a_m$ may contain alternatives, which will be considered in case of a fail.
"\textbf{!}" asserts that if only one \index{subgoal} subgoal ranging from $a_{m+1}$ to $a_n$ fails, then $b$ completely fails without any further search for \index{non-terminal} alternatives.
All alternatives may be \index{rule factorisation} left-factorised starting at "\textbf{!}" s.t. other alternatives are excluded.
In short, this is the reason why "\textbf{;}" and "\textbf{!}" may be excluded from Prolog w.l.o.g. and expressibility (see obs.\ref{obs:SimplificationByGeneralisation}).
The question of further predicate generalisation is without any doubt exciting, but w.r.t. this work's main objectives will be skipped.
Paulson \cite{paulson93} researches generalisation by introducing functionals to abstraction and logical rules (cf. with \cite{plaistead79}, \cite{menzies96}, \cite{giunchiglia89}, \cite{degtyarev01} and the resolution method \cite{kalinina01}, \cite{bratchikov98}, \cite{gast08} --- which for this work's main objectives are also not further considered).
The tactics modules in Coq \cite{bertot04} (cf. sec.\ref{chapter:intro}) base on the same principle as \cite{paulson93}.

The essence of lem.\ref{lem:APsAreFOPs} and lem.\ref{lem:APsAreSOPs} is arbitrary equality predicates in Prolog that may be expressed without the significant limitations considered in the introductory.
Due to higher-order predicates, even explicitly stated (mutual) recursion might be replaced.
We do not restrict ourselves and allow \index{$\mu$-recursive predicate} $\mu$-recursive predicates for the cost of partial correctness and the uncertainty if predicate interpretation will eventually halt.

\begin{definition}[Predicate Folding]
\label{def:PredicateFolding}
The \textit{unfold and fold} of some predicate definition $a(\vec{\alpha})$ for/into some predicate $a$ for the given predicates $\Gamma_a$ with the actual term denotation $\vec{\alpha}$ and subgoals $q_k$ is defined as follows.

From lem.\ref{lem:APsAreSOPs}, we allow $\Gamma_a$ w.l.o.g. equals $a(\vec{y}):-q_k$ with $q_k = q_{k,0}(\vec{x}_{k,0}),q_{k,1}(\vec{x}_{k,1}), ... ,q_{k,m}(\vec{x}_{k,m})$.
If $\vec{\alpha} = (\alpha_0,\alpha_1,..,\alpha_A)$ and $\vec{y} = (y_0,y_1,..,y_A)$, then
$$a(\vec{\alpha}) \Leftrightarrow q_{k,0}(\vec{x}_{k,0}),q_{k,1}(\vec{x}_{k,1}), ... ,q_{k,m}(\vec{x}_{k,m})$$
with:
$$\alpha_0 \approx y_0, \alpha_1 \approx y_1, ... , \alpha_A \approx y_A.$$
\end{definition}

In the "$\Rightarrow$"-case for the equality from above $a(\vec{\alpha})$, the predicate is unfolded.
In the "$\Leftarrow$"-case, the right-hand side of the predicate definition is folded into a \index{predicate call} predicate call, namely as a subgoal.
$\approx$ denotes \index{term unification} \textit{term unification}.
It is required predicates are well-defined in their environment according to cor.\ref{corollary:PredicateEnv}.

\subsection{Interpretation of Heap Predicates}

The approach proposed in this section is universal though not classic.

\begin{enumerate}
 \item Transforming the input program and annotated assertions into Prolog terms, which then are processed by the heap verification system within Prolog (see fig.\ref{fig:HeapVerificationArchitecture}, \cite{haberland14-1}).
 \item Definition of APs is lodged into a reserved section of a resulting Prolog listing and IR of the imperative input program in Prolog.
 The rules are part of the heap interpretation and therefore require syntactical correctness.
 As point 1, this is analysed in more detail in sec.\ref{sect:ArchitectureVerificationSystem}.
 
 \item Definition of formal grammar for APs.
The grammar is passed to a language processor, which generates a concrete syntax analyser on success.
 \item The use and integration of previously transformed APs into a corresponding syntax analyser enriched by subgoals as translating attributes during the verification phase.
\end{enumerate}

\begin{observation}[Narrowing down with Formal Languages]
\label{obs:FormalLanguageObservation}
 Looking at the heap structure, it reminds the definition of a formal language.
\end{observation}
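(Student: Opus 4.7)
The plan is to establish a precise structural correspondence between the components of the heap-predicate framework already assembled in the chapter and the standard pieces of a context-free grammar, thereby justifying the claim that heap structures "remind" the definition of a formal language. Since this is only an observation (informal in nature, motivating the syntactic treatment announced in thes.\ref{thes:ProvingEqualsParsing}), the argument need not prove full grammar equivalence; it need only exhibit the correspondence clearly enough to legitimise treating heap interpretation as parsing.

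First, I would identify the alphabets. The simple heap atoms $loc \mapsto val$ from def.\ref{def:HeapTermDefinition} naturally play the role of terminal symbols, since they are the indivisible constituents of any heap formula. The names of abstract predicates declared in a partition $\Gamma_a$ play the role of non-terminals, since each can be rewritten via def.\ref{def:PredicateFolding}. The constant predicates $\underline{emp}$, $\underline{true}$, $\underline{false}$ from def.\ref{def:HeapTermExtendedDefinition} correspond to the empty word and designated terminals. The spatial operators $\circ$ and $\|$ (sec.\ref{sect:ConjunctionAndDisjunction}) serve as the concatenation primitives for right-hand sides, with the Abelian monoid properties established in lem.\ref{lem:HeapConjunctionMonoid} and lem.\ref{lem:monoidOverDisj} giving associativity and a neutral element, so that the right-hand side of a production is a well-defined string up to canonisation.

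Second, I would match the shape of predicate rule sets with CFG productions. Each alternative line of $\Gamma_a$ in def.\ref{def:PredicateRuleSetDefinition} already reads as a production $a \rightarrow q_{k,0}\,q_{k,1}\cdots q_{k,n}$ whose right-hand side symbols lie in $T \cup NT$; the canonisation of trailing tautologies and the priority ordering $\preceq$ correspond to disambiguation conventions familiar from parser generators. Cor.\ref{corollary:PredicateEnv} ensures that $\Gamma$ decomposes into well-scoped partitions exactly as a grammar decomposes by non-terminal. The compositional semantics $C\llbracket \cdot \rrbracket$ on predicate calls then enumerates precisely the strings derivable from a start symbol, which is the usual notion of a language generated by a grammar.

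The main obstacle will be reconciling parameter passing with context-freeness. Term unification on the arguments of APs is inherently capable of context-sensitive behaviour, since a shared symbolic variable can link distant subgoals. A faithful correspondence must therefore either confine argument flow to an attribute grammar layered on a CF skeleton, or rely on the locality property of sec.\ref{sect:ExprPredicates} together with conv.\ref{conv:RestrictedObjects} to argue that in the intended fragment, argument bindings behave as synthesised and inherited attributes rather than genuine cross-context constraints. Because the observation only claims structural resemblance and not strict grammar-theoretic equivalence, it suffices to exhibit the CF skeleton and note that the remaining expressiveness coming from unification can be captured as translating actions attached to productions, which is precisely the route taken in the subsequent construction of the verifier.
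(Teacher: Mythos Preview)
Your proposal is correct and follows the same approach as the paper, which simply notes that $\mapsto$-assertions act as terminals, abstract predicates act as non-terminals, and the spatial connective links them into words. You elaborate the correspondence considerably more than the paper does and anticipate the attribute-grammar treatment of parameters that the paper only develops in the subsequent translation theorems, but the underlying identification is identical.
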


A simple heap assertion of kind $\mapsto$ becomes \index{terminal} \index{non-terminal} a terminal (see def.\ref{def:PredicateRuleSetDefinition}).
An AP becomes an NT, namely a call.
Terminals may be linked consecutively using the binary operator \index{$\star$} $\star$, which commutes (see theo.\ref{theo:ReynoldsHeapProperties}).
A terminal denotes a unit of some sub-grammar, which is equivalent to some edge of a heap.
Assertions of kind $\mapsto$ may effectively be linked together when left locations are sorted in \index{lexicographical ordering} lexicographical order.
Whenever a naming clashes, $\alpha$-conversion resolves that issue depending on which module the clash occurs, e.g., introducing naming \index{prefix} prefixes.

\begin{thesis}[Syntax Recognition is Proving]
 \label{thes:ReasoningAsProving}
 The recognition of APs is interpreted as proof.
\end{thesis}

\begin{proof}
 Particularly, def.\ref{def:FirstSetParsers} and def.\ref{def:FollowSetParsers} allow defining a corresponding syntactic analyser, which is total and terminates according to cor.\ref{cor:TranducersTerminate}.
 The derived "\textit{word}" (in terms of a formal grammar) contains heap terminals and NTs defined in def.\ref{def:AbstractSentence}.
 (see later this section for the constructive part of the proof).
\end{proof}

\begin{corollary}[Context-Freeness of Heap Expressions]
The family of APs describing a set of rules is \index{grammar!formal} CF.
\label{cor:CFOfHeapExpressions}
\end{corollary}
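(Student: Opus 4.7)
The plan is to convert the predicate rule set into a context-free grammar by a direct syntactic translation, then verify that nothing in the definitions of def.\ref{def:AbstractPredicateRule} and def.\ref{def:PredicateRuleSetDefinition} forces context-sensitivity. Concretely, I would construct a grammar $G_\Gamma=(N,T,P,S)$ where the non-terminals $N$ are exactly the predicate names $a$ occurring in $\Gamma$ (together with a top-level start symbol $S$ for the whole heap assertion), the terminals $T$ are the simple heaps of the form $loc\mapsto val$ together with the constant symbols $\underline{emp},\underline{true},\underline{false}$ and the connective $\star$ (alternatively $\circ$ and $\|$ from sec.\ref{chapter:stricter}), and each line $a:-q_{k,0},\dots,q_{k,m}$ of a partition $\Gamma_a$ becomes a production $a\rightarrow q_{k,0}\star\cdots\star q_{k,m}$, with alternatives listed in the priority order of def.\ref{def:PredicateRuleSetDefinition}.

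Next I would verify that the production above has the shape required by a CF grammar, namely a single non-terminal on the left-hand side. This is immediate from the head structure of a predicate rule in fig.\ref{fig:PrologRulesEBNF} and from def.\ref{def:AbstractPredicateRule}. Recursive and mutually-recursive AP definitions translate into recursive CF productions, which is unproblematic. The lexicographic normalisation convention adopted after def.\ref{def:PredicateRuleSetDefinition} together with the commutativity/associativity rules from theo.\ref{theo:ReynoldsHeapProperties} (and their strengthened counterparts lem.\ref{lem:HeapConjunctionMonoid}, lem.\ref{lem:monoidOverDisj}) removes derivational ambiguity due to reordering of conjuncts, so a canonical right-hand side exists per alternative.

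The delicate step is to argue that the parameter passing of APs does not re-introduce context-sensitivity. An AP call $a(\vec{\alpha})$ in a right-hand side transports term arguments that will be unified with the formal parameters $\vec{y}$ of a rule in $\Gamma_a$ per def.\ref{def:PredicateFolding}. I would handle this the standard way: treat the parameters as \emph{attributes} of the non-terminal $a$, so that $G_\Gamma$ underlies an attribute grammar whose context-free backbone is exactly the construction above, while unification is performed as a synthesised/inherited attribute computation during recognition (this matches thes.\ref{thes:ReasoningAsProving} and the translating rules mentioned for sec.\ref{sect:TranslatingHornRules}). The $a\circ a$-exclusion and non-repetitiveness from sec.\ref{chapter:stricter} are likewise semantic side conditions, checked during interpretation against the active set of already-matched simple heaps, not grammatical constraints. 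Higher-order usages that were flagged as problematic at the end of lem.\ref{lem:APsAreSOPs} are excluded by hypothesis, which is precisely the sacrifice needed to keep the backbone CF.

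The main obstacle will be the attribute/unification argument: one must be careful that the semantic side-conditions (non-repetitiveness, object field disjointness from conv.\ref{conv:RestrictedObjects}, the strengthened spatial operators of sec.\ref{chapter:stricter}) can always be relegated to attribute evaluation rather than forcing a production to depend on its surrounding context. I would close the argument by showing, by induction on the derivation depth, that any heap assertion generated by $G_\Gamma$ corresponds to a syntactically well-formed body of some $\Gamma_a$ (and vice versa), so the generated language coincides with the set of abstract sentences of def.\ref{def:AbstractSentence} still to be introduced. This yields the corollary and simultaneously sets up the constructive content of thes.\ref{thes:ReasoningAsProving}.
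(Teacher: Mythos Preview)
Your proposal is correct and follows essentially the same route as the paper: the core observation in both is that a predicate rule has exactly one non-terminal on its left-hand side (the head), so the translation into productions yields a CF grammar, and the term arguments carried by AP heads are relegated to attributes that do not affect the CF backbone (the paper phrases this as ``this situation still does not change the predicates' static dependency'').

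Two small differences are worth noting. First, the paper's proof adds an observation you omit: it argues that the resulting grammar is not merely regular, by pointing to bracket-like patterns $x_0 a^n x_1 b^n x_2$ that APs can express; this sharpens the corollary to say that CF is the \emph{right} level, not just an upper bound. Second, your treatment is considerably more explicit about the semantic side-conditions (non-repetitiveness, the $a\circ a$ check, object-field disjointness) being pushed into attribute evaluation rather than into the grammar itself, and you sketch an induction on derivation depth to tie the generated language to abstract sentences. The paper leaves all of that implicit and defers the constructive details to the transducer in theo.\ref{theo:TransductionIntoAttributedGrammar}. Your version is more self-contained; the paper's is terser but relies on the surrounding section to carry the weight.
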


\begin{proof} 
The left-hand side of the predicate cannot contain by default more than one NT.
Terminals are neither allowed.
Consequently, only one NT is allowed on the left-hand side.
The lack of the requirement, where the right-hand side must strictly be  \index{recursion!right} right-recursive, so the absence of a regular grammar with rules of a kind $S\rightarrow aA$, leads to CF-ness.
Grammar recognition is allowed, which is equivalent to some bracket grammar \cite{grune90} (CF-grammars whose rules may be of a kind $S\rightarrow ( S )$), namely when $n \in \mathbb{N}_0$ for some terminals $a$,$b$,$x_0$,$x_1$ and $x_2$, which might be $\mapsto$-assertions, s.t. $x_0 a^n x_1 b^n x_2$ and $x_0 \ne a$, $x_0 \ne x_1$ and $x_1 \ne b$, $b \ne x_2$.
If a \index{predicate!head} predicate's head contains arguments, then this situation still does not change the \index{data dependency} predicates' static dependency.
Each predicate partition may be assigned an initial \index{non-terminal} NT.
\end{proof}

\begin{observation}[Heap Reduction]
\label{obs:HeapReduction}
 The generation of heaps by APs raises the question about abstraction level and checks since a heap is a generated element subject to checks --- as being generalised from my research \cite{haberland08-2}.
\end{observation}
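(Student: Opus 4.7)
The plan is to justify the observation by leveraging two earlier results established in this section: the context-freeness of AP families (Corollary \ref{cor:CFOfHeapExpressions}) and the thesis that syntactic recognition of APs is itself a proof act (Thesis \ref{thes:ReasoningAsProving}). Together these position a family $\Gamma_a$ simultaneously as a CF generator of heap-term skeletons and as a recogniser against which a candidate heap is matched. The observation then follows by reading Definition \ref{def:PredicateFolding} in both directions: each unfold is a derivation step producing new heap subterms, and each fold is a reduction step in the corresponding pushdown recogniser consuming matching subterms. Because the same rule body determines both steps, abstraction level and checking strength are not independent design choices.

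First I would make the generator reading explicit: an AP head corresponds to a nonterminal, its body to a right-hand side of $\mapsto$-terminals and nonterminal calls (cf.\ the EBNF of fig.\ref{fig:PrologRulesEBNF}), and its repeated unfolding enumerates the heaps permitted by $\Gamma_a$ up to naming. Next I would make the checker reading explicit: a heap term arising from the incoming program's state is presented to the recogniser obtained from the same grammar, and a successful fold sequence certifies membership, i.e.\ satisfaction of the specification. The bridge is that both steps commute with the strengthened spatial operator $\circ$ in the sense of Lemma \ref{lem:HeapConjunctionMonoid} and Theorem \ref{lem:DistributivityForConjDisj}, so coarsening or refining a predicate body moves granularity in both readings at once.

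From here I would invoke the field study of \cite{haberland08-2}, which established that when generation and checking of semi-structured data are narrowed down to a single description, the effective bottleneck is the expressibility of the checking language, not of the generator. Specialising that to heaps, where terminals are simple heaps and nonterminals are APs, recovers exactly the tension the observation points at: every raise of abstraction that shortens generation simultaneously pushes more work into the recogniser's decision procedure, and vice versa. This is precisely the question of abstraction level and check balance that the statement claims.

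The main obstacle will be not a computational step but a faithfulness argument: one must check that the analogy from \cite{haberland08-2} lifts without loss, i.e.\ that heap terms play the role of semi-structured documents, $\circ$ and $\|$ play the role of structural composition operators, and AP unfolding corresponds to template instantiation under the stricter non-repetitiveness and locality constraints of Section \ref{chapter:stricter}. I would handle this by exhibiting the correspondence component by component, relying on Corollary \ref{cor:MinDiffGenCheckLanguages} to equate the generating and checking languages and on Observation \ref{obs:EqualityOfProofElements} to equate derivation trees with proof trees; once those mappings are in place, the earlier generalisation transfers verbatim and the observation is established.
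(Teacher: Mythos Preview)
The statement is labelled an \emph{Observation}, and in this paper observations are not given formal proofs; they are stated as remarks. There is no proof environment following obs.\ref{obs:HeapReduction} in the paper. What follows the observation is a short informal discussion, not a derivation: it notes that both the derived and the expected heap may contain folded predicate definitions requiring unfold for an equality decision, that this is a bi-directional process, and then observes that the general problem reduces to Post's Correspondence Problem and is therefore undecidable in general---hence not pursued further.

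Your proposal is not wrong in spirit, but it is substantially over-engineered relative to what the paper actually does. You invoke cor.\ref{cor:CFOfHeapExpressions}, thes.\ref{thes:ReasoningAsProving}, def.\ref{def:PredicateFolding}, lem.\ref{lem:HeapConjunctionMonoid}, theo.\ref{lem:DistributivityForConjDisj}, cor.\ref{cor:MinDiffGenCheckLanguages}, and obs.\ref{obs:EqualityOfProofElements} to build a component-by-component correspondence argument. The paper does none of this here; it simply records the observation and immediately points to PCP as the limiting obstruction. Your elaborate faithfulness argument (lifting the analogy from \cite{haberland08-2} to heaps) is plausible commentary, but it is not how the paper justifies the remark, and you entirely miss the one concrete technical point the paper does make in the surrounding text: the reduction to PCP and the resulting undecidability of the general fold/unfold matching problem.
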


It implies both the derived and expected heap may contain folded predicate definitions, which may need to unfold for a final decision on equality.
After a closer look, one can find this is a bi-directional process.
Folded heaps may occur in both heaps.
It is essential to notice, this kind of problem can be reduced to \index{Post's problem} \textit{Post's Correspondence Problem} (PCP) \cite{davis94}, which from its theoretical foundation would not be decidable in general, but in some cases, only it is.
Hence, this general problem w.r.t. PCP upon heaps is not being considered further.\\\\

Obs.\ref{obs:FormalLanguageObservation}, together with obs.\ref{obs:HeapReduction} may, however, be considered as a precursor to this formulation: \textit{Given a  $\star$-connected heap.
Question: Does it match or not the given heap specification}?
It may also be valid to ask: \textit{Which heap is the closest to the correct one, s.t. specification holds}?
It must be noted that the answer to this question would eventually answer the problem of \index{counter-example} \textit{generic counter-example generation} and minimal correction.

\begin{lemma}[Heap as Word]
\label{lem:HeapAsWord}
The \textit{word} problem for a partition of APs $P$ is defined as follows.
 Given $\alpha_1, \alpha_2 \in L(G(P))$, is then $\alpha_1 \equiv \alpha_2$ decidable
whenever $\alpha_1$ derives $\exists \alpha_3$, and $\alpha_2$ derives $\alpha_3$ too (cf. fig.\ref{fig:CRTonHoareTriples}), where
$G(P)$ denotes a formal \index{context-free} \index{CF} CF grammar obtained from predicate partition $P$.
\end{lemma}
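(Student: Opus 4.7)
The plan is to reduce the equivalence check $\alpha_1 \equiv \alpha_2$ to a syntactic comparison on the shared derivative $\alpha_3$, exploiting the hypothesis that both $\alpha_i$ derive $\alpha_3$. First, I would use cor.\ref{cor:CFOfHeapExpressions} to fix $G(P)$ as a CF grammar whose terminals are simple heaps of the form $loc \mapsto val$ and whose non-terminals are predicate calls from the partition $P$. Any $\alpha_i \in L(G(P))$ is a finite $\circ$-connected (or, in the original notation, $\star$-connected) heap term, hence a finite Prolog term in the sense of def.\ref{def:PrologTerm}.

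Second, I would apply def.\ref{def:PredicateFolding} iteratively, performing only unfoldings along the derivation witnesses from $\alpha_1$ to $\alpha_3$ and from $\alpha_2$ to $\alpha_3$. Because each unfolding replaces one non-terminal occurrence by the body of a matching rule, and both derivations are assumed to terminate in $\alpha_3$, the two rewriting sequences are finite and effective. Hence $\alpha_3$ is computable from either side by the same procedure that the Prolog interpreter uses to resolve subgoals (cf. obs.\ref{obs:StackBasedCalls}).

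Third, once both sides reach the common $\alpha_3$, the equivalence question becomes whether the two finite Prolog terms representing $\alpha_1$ and $\alpha_2$ actually denote the same heap graph in the sense of def.\ref{def:finiteHeapGraphDefinition}. Here I would invoke the canonisation agreed upon in sec.\ref{chapter:stricter}: lexicographic ordering of locations on the left of $\mapsto$, commutativity and associativity of $\circ$ (lem.\ref{lem:HeapConjunctionMonoid}, theo.\ref{lem:HeapConjunctionGroupProperty}), and non-repetitiveness. Normalising both sides yields unique canonical forms, whose syntactic equality is decidable by the term unification routine \texttt{unify\_with\_check} from fig.\ref{CodeUnificationWithOccursCheck}, which is itself total and terminating on finite terms.

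The main obstacle will be justifying that the choice of derivation witnesses to $\alpha_3$ is irrelevant; this is exactly the diamond property of fig.\ref{fig:CRTonHoareTriples} applied to the heap rewriting induced by fold/unfold. One has to argue that confluence of $\circ$-composition together with the CF structure of $G(P)$ ensures that the obtained $\alpha_3$ (up to canonisation) is independent of the particular unfolding order, so that the two normal forms coincide iff $\alpha_1 \equiv \alpha_2$. Once this is established, decidability of the word problem under the confluence premise follows, which is exactly thes.\ref{thes:ReasoningAsProving} in action: reasoning about heap equality has been reduced to a finite, terminating syntactic recognition task.
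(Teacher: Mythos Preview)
Your reduction to canonical-form comparison is a genuinely different route from the paper's, but it contains a real gap. In your second step you write that you will ``perform only unfoldings along the derivation witnesses from $\alpha_1$ to $\alpha_3$ and from $\alpha_2$ to $\alpha_3$'', and then conclude that $\alpha_3$ is computable because ``both derivations are assumed to terminate in $\alpha_3$''. This is circular: the decidability question is precisely whether such an $\alpha_3$ exists at all. You are not given the witnesses; you must \emph{search} for them or certify their absence. Without a bound on that search, unrestricted unfolding via def.~\ref{def:PredicateFolding} need not terminate --- this is exactly the failure mode the paper flags for the na\"{\i}ve SHIFT/REDUCE procedure in algo.~\ref{AlgorithmEqualityCheck} (``there is always the risk of running into infinity on unfolding or folding''). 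Your confluence remark in the last paragraph addresses \emph{uniqueness} of $\alpha_3$ up to canonisation, not its \emph{discoverability}, so it does not close the gap.

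The paper's argument takes the opposite tack: rather than rewriting both sides toward a hypothetical meeting point, it exploits the CF structure of $G(P)$ (cor.~\ref{cor:CFOfHeapExpressions}) and builds the first- and follow-sets $\pi$ and $\sigma$ (def.~\ref{def:FirstSetParsers}, def.~\ref{def:FollowSetParsers}) once for $P$. The question ``does $\alpha_1 \vdash^{*} \alpha_2$ (or vice versa, or both to some $\alpha_3$)?'' is then handed to a syntax analyser whose start non-terminal is chosen from the predicate calls occurring in $\alpha_1$, $\alpha_2$; parsing over a CF grammar terminates, which is what yields decidability. In other words, the paper trades your unbounded fold/unfold search for a bounded recogniser driven by $\pi$ and $\sigma$ --- this is thes.~\ref{thes:ReasoningAsProving} used as a \emph{tool}, not just as a slogan. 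Your normalisation-and-unify step (lexicographic ordering, lem.~\ref{lem:HeapConjunctionMonoid}, \texttt{unify\_with\_check}) is fine once two concrete terminal strings are in hand, but it only decides equality of already-ground heaps; it does not by itself decide whether two abstract sentences containing NTs share a derivative.
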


\begin{proof}
Here $\alpha = (a+A)^{*}$, $a \in T$, $A \in NT$, $\alpha$ is a \textit{sentence} (cf. def.\ref{def:AbstractSentence}).
$T$ denotes the set of terminals, which are parameterised and noted as units of pairs with both sides' content of corresponding $\mapsto$-assertions.
$NT$ denotes the set of NTs, which contains all predicates and syntactically correct input terms, which may be parameterised.
The predicate partition $P$ establishes a formal grammar $G(P)$.
From thes.\ref{thes:ReasoningAsProving} and cor.\ref{cor:CFOfHeapExpressions} follow CF-grammars.
This and sentences imply the theorem of some CF-generated language $L(G(P))$ regarding the generated language is not context-sensitive, as can formally be seen referring to the pumping lemma.
The start NT denotes a predicate call from $\alpha_1$ or $\alpha_2$.
Initially, it is necessary to calculate and define while analysing the \index{follow-set} follow-set of terminals $\sigma(\alpha)$, as well as \index{first-terminals set} the \textit{set of NT beginnings} (so-called first-terminal set) $\pi(\alpha)$ must be calculated (see def.\ref{def:FirstSetParsers}, def.\ref{def:FollowSetParsers}).
$\sigma(\alpha)$ and $\pi(\alpha)$ are calculated only once for any fixed $P$.
Here it must be noticed that for a given $G(P)$, the initial NT may differ --- if the corresponding analyser is flexible enough to support a start change.
It essentially depends on the subgoals to be chosen from $\alpha_1$ and $\alpha_2$.

Moreover, one path is searched starting with $\alpha_1 \vdash^{*} \alpha_2$ and $\alpha_2 \vdash^{*} \alpha_1$, where $\vdash$ denotes a one-time-only rule application.
Only when no path is found from both sides can it be stated that $\alpha_1$ does not match with $\alpha_2$, and vice versa does not hold.
It is required to build one path between predicates and accept both starting and intermediate terminals of a kind $\mapsto$ to check whether two (formal) sentences over heaps match.
Parameters in terminals and NTs in semantically correct assertions are connected, so there are no unrelated (non-)terminals.
Therefore, a parameter is guaranteed to exist, which assigns some value in both $\alpha_1$ and $\alpha_2$.
The proposed platform (see sec.\ref{chapter:logical}) promotes additional checks on predicates, if needed, maybe toggled user-defined, for example, checking the presence of specific terminals inside a predicate body.
It is better to assume for both $\alpha_1$ and $\alpha_2$ some $\alpha_3$ to which both sentences may derive to, in order to lift terminals and allow abstract sentences, which may contain NTs (def.\ref{def:AbstractSentence}).
Here, $\alpha_3$ may be abstract again, containing NTs --- so, CRT holds for heap words (cf. fig.\ref{fig:CRTonHoareTriples}).
\end{proof}

\subsection{Translating Horn Rules}
\label{sect:TranslatingHornRules}
In this section, the translation of APs into rules of an attributed CF-grammar is considered.
APs are given as \index{Prolog rule} Prolog rules.
Before this, it is required to turn \index{assertion!base} ordinary heap assertions of kind $loc \mapsto val$ into \index{token} tokens.
It is a mandatory lexico-analytical step in interpretation taken by Prolog.
The use of \index{multi-paradigmal programming} \textit{multi-paradigmal programming} \cite{denti05} allows interpreting Prolog rules during execution by different \index{language!processor} language processors.
Thus, the verification process may be initiated, controlled by and terminated by an incoming program.
The translation process of Prolog rules into some formal grammar is unexpectedly simple.
Prolog is well-known and has a simple syntax and semantics.
However, Prolog rules may contain arguments on both sides of "\textbf{:-}" in the rule definition.
Parameters and arguments of the grammar to be generated may be modelled w.l.o.g. as \index{grammar!formal} formal grammar \index{grammar!attributed} attributes.
Consequently, \index{transducer} the translation process $C\llbracket \rrbracket$ may be characterised by the following definition.

\begin{theorem}[Transduction into Attributed Grammar]
\label{theo:TransductionIntoAttributedGrammar}
$C\llbracket . \rrbracket$ denotes a semantic transducing function taking APs as input and turning them into an attributed grammar.
\end{theorem}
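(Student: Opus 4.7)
The plan is to give a constructive definition of $C\llbracket . \rrbracket$ by structural induction on the syntactic categories of Horn rules established in def.\ref{def:AbstractPredicateRule}, def.\ref{def:PredicateRuleSetDefinition}, and fig.\ref{fig:PrologRulesEBNF}, and then to verify that the image is a well-formed attributed CF grammar in the sense of cor.\ref{cor:CFOfHeapExpressions}. I would build $C\llbracket . \rrbracket$ component-wise: every predicate name $a$ yields a non-terminal $A \in NT$; every base assertion $loc \mapsto val$ yields a terminal $t \in T$; the parameter vector $\vec{y}$ of $a(\vec{y})$ is mapped to the attribute list of $A$, and analogously the argument vectors of terminals are turned into attributes of the terminal symbol. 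The term unification convention from def.\ref{def:PredicateFolding} ($\alpha_i \approx y_i$) is realised by attribute equations that couple the formal attributes of the left-hand non-terminal with the actual attributes occurring on the right-hand side.

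The core of the proof is to handle def.\ref{def:PredicateRuleSetDefinition}: for a partition $\Gamma_a = \{a:-q_{k,0},\dots,q_{k,n_k}\}_{0\le k\le m}$, define $C\llbracket \Gamma_a \rrbracket$ as the set of productions $A \to Q_{k,0}\,Q_{k,1}\cdots Q_{k,n_k}$ for each $k$, with the priority $a:-q_k \preceq a:-q_{k+1}$ translating into a rule ordering used by the recogniser. The base case is that of a fact $a(\vec{y}).$ ($n=0$), which maps to a production $A \to \varepsilon$ carrying $\vec{y}$ as synthesised attributes. The inductive case follows the LR evaluation order of subgoals: each $q_{k,j}(\vec{x}_{k,j})$ is already (inductively) mapped to either a terminal or a non-terminal in $C\llbracket \Gamma \rrbracket$, so its attribute vector is obtained by renaming/unification against $\vec{x}_{k,j}$. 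Built-in predicates and relations from fig.\ref{fig:PrologRulesEBNF} (such as \texttt{=}, \texttt{\textbackslash=}) are translated to attribute constraints rather than grammar symbols, since they do not contribute to the heap graph and were already justified as removable syntactic sugar for cuts and disjunctions.

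Having produced the grammar, I would then discharge three verification obligations. First, that $C\llbracket \Gamma \rrbracket$ is CF: this follows directly from cor.\ref{cor:CFOfHeapExpressions}, since every production has exactly one non-terminal on the left and a string of terminals/non-terminals on the right, with no contextual conditions beyond attribute equations. Second, that the attribute system is well-defined: here I would classify attributes on each production into inherited (flowing from $\vec{y}$ into the $\vec{x}_{k,j}$) and synthesised (flowing back), using the mixed input/output nature of Prolog arguments noted in sec.\ref{sect:LogicalReasoningAsProof}; the absence of higher-order constructs (already excluded in lem.\ref{lem:APsAreSOPs} discussion for the recognition setting) ensures that attributes remain first-order terms, so no cyclic dependencies arise beyond those already present in the mutual recursion permitted by cor.\ref{corollary:PredicateEnv}. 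Third, that $C\llbracket . \rrbracket$ is indeed a \emph{transducer} in the semantic sense: the language generated by the image grammar, modulo attribute unification, coincides with the set of heap sentences derivable from $\Gamma$ via def.\ref{def:PredicateFolding}. This last item is shown by a two-sided simulation: one fold/unfold step in Prolog corresponds to one production application with attribute evaluation, and conversely.

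The main obstacle I anticipate is the attribute-coupling step, i.e.\ faithfully mirroring Prolog's bidirectional unification within the attribute system of a CF grammar, where attributes are usually either purely inherited or purely synthesised. Symbolic variables that are only partially bound when a subgoal is entered (so that the same attribute may play both roles depending on the query direction, as in the double semantics of lem.\ref{lem:DoubleSemanticsOfProlog}) force the attribute grammar to allow unification equations rather than oriented assignments. I expect to resolve this by admitting equational attribute constraints, which is standard in attributed translation schemes, and by appealing to the termination argument for transducers referenced via cor.\ref{cor:TranducersTerminate} to guarantee that attribute evaluation halts on every derivation of finite depth.
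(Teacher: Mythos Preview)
Your proposal is correct and follows the same constructive idea as the paper, but you are doing far more work than the paper actually does. The paper's entire proof is a three-clause recursive definition of $C\llbracket . \rrbracket$: the empty program maps to $\emptyset$; concatenation $C_1 . C_2$ maps to the order-preserving union $C\llbracket C_1 \rrbracket \ \dot{\cup}\ C\llbracket C_2 \rrbracket$; and a single rule $a(\vec{x}):-q^{0}(\vec{x}),\dots,q^{n}(\vec{x})$ maps to the single production $\{a_{\vec{x}} \rightarrow q^0_{\vec{x}}\cdots q^n_{\vec{x}}\}$, with the whole argument vector $\vec{x}$ attached as attributes uniformly. That is the proof.

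The three verification obligations you identify (CF-ness, well-definedness of the attribute system, semantic correspondence of fold/unfold with production application) are not discharged inside this theorem in the paper; CF-ness is already cor.\ref{cor:CFOfHeapExpressions}, termination is deferred to cor.\ref{cor:TranducersTerminate}, and soundness/completeness of $C$ together with $C^{-1}$ is a separate corollary. Your careful split into inherited versus synthesised attributes and the handling of relational subgoals as attribute constraints is exactly what the implementation section later does with ANTLR, but the paper does not require it at this point. So your plan is sound, just over-engineered relative to what the theorem statement asks for: here it suffices to \emph{define} the transducer, not to certify it.
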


\begin{proof}
The proof is constructive by providing the actual transducer:

\begin{center}
\begin{tabular}{l}
 $C\llbracket \rrbracket = \emptyset$\\
 $C\llbracket C_1 . C_2 \rrbracket = C\llbracket C_1 \rrbracket \ \dot{\cup} \ C\llbracket C_2 \rrbracket$\\
 $C\llbracket a(\vec{x}):-q^{0}(\vec{x}), ..., q^{n}{(\vec{x})} \rrbracket$ = $\{ a_{\vec{x}} \rightarrow q^0_{\vec{x}} ... q^n_{\vec{x}}  \}$
\end{tabular}
\end{center}

In contrast to the notations earlier introduced in the following \index{subgoal} subgoals are used.
Now the input vector $\vec{x}$ contains all symbolic variables for a comfortable notation inside each predicate.
If some subgoal $q_j$ for $j\ge 0$ does not require all components of $\vec{x}$, then the subgoal does not need it either.
$\dot{\cup}$ is the set union, preserving sequences and duplicates.
Both notations are dual and very similar to each other.
An AP describes a heap.
Thus, $C\llbracket \rrbracket$ transduces a heap, so it interprets some associative heap (see lem.\ref{lem:HeapAsWord}).
The transducer \index{transducer} $C^{-1}\llbracket . \rrbracket$ turns an attributed grammar back to \index{Prolog} Prolog.
\end{proof}

\begin{theorem}[Inverted Transduction from Attributed Grammar]
\label{theo:InvertedTransductionIntoAttributedGrammar}
$C^{-1}\llbracket . \rrbracket$ denotes a transducer of some attributed grammar as input and a set of APs as output.
\end{theorem}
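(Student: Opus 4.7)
The plan is to mirror the construction of $C\llbracket . \rrbracket$ from theo.\ref{theo:TransductionIntoAttributedGrammar} and define $C^{-1}\llbracket . \rrbracket$ by structural induction on attributed grammar rules. First I would set $C^{-1}\llbracket \emptyset \rrbracket$ to the empty Prolog program, define $C^{-1}$ distributively over disjoint rule unions so that $C^{-1}\llbracket R_1 \,\dot{\cup}\, R_2 \rrbracket = C^{-1}\llbracket R_1 \rrbracket \cdot C^{-1}\llbracket R_2 \rrbracket$, and then, for a single attributed production $a_{\vec{x}} \rightarrow q^0_{\vec{x}} \ldots q^n_{\vec{x}}$, emit the Horn clause $a(\vec{x}) :- q^0(\vec{x}), \ldots , q^n(\vec{x})$. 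Since the body of a Horn clause is a left-to-right conjunction of subgoals (def.\ref{def:PrologRule}, def.\ref{def:QueryToProlog}) and an attributed CF-production concatenates right-hand side symbols in the same left-to-right order, the ordering and therefore the intended \index{evaluation ordering} evaluation order is preserved directly.

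Next I would verify that the image is syntactically admissible, i.e. that $C^{-1}\llbracket . \rrbracket$ always lands in the fragment described by fig.\ref{fig:PrologRulesEBNF} when its input is itself in the image of $C\llbracket . \rrbracket$ (guaranteed CF by cor.\ref{cor:CFOfHeapExpressions}). Here the main point of care is the treatment of \emph{terminals}: a terminal in $G$ represents an atomic heap assertion of the kind $loc \mapsto val$, and must be re-emitted as a base subgoal rather than a literal grammar symbol; non-terminals other than the head become recursive predicate calls. The attribute vector $\vec{x}$ is propagated uniformly into each emitted subgoal using the same binding discipline adopted in theo.\ref{theo:TransductionIntoAttributedGrammar}, so that symbolic aliasing among the $q^i$ is preserved.

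Then I would establish the central duality claim, namely $C^{-1} \circ C = \mathrm{id}$ on the set of APs and $C \circ C^{-1} = \mathrm{id}$ on the set of attributed grammars obtainable from APs, both taken modulo the canonisation imposed in def.\ref{def:PredicateRuleSetDefinition}. Both directions reduce, via induction on the number of rules, to the single-clause base case, where the two constructions are manifestly inverse. Because no Prolog-specific features (arithmetic built-ins, cuts, \textbf{;}-alternation) are admitted in the transducible fragment, the correspondence is exact rather than up-to-semantics, and so $C^{-1}$ is a genuine transducer in the sense of cor.\ref{cor:TranducersTerminate}.

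The main obstacle is the correct handling of \emph{alternatives}: several Prolog clauses sharing a head (def.\ref{def:PredicateRuleSetDefinition}) correspond to several alternative productions of the same NT in the grammar, so without a fixed ordering the relation is one-to-many and $C^{-1}$ is only a function after the grammar is presented in the canonical order (priority by listing position, ties broken lexicographically as in sec.\ref{chapter:logical}). A second subtlety is the reconstruction of the parameter vector across the right-hand side: while $C$ trivially propagates $\vec{x}$ from the predicate head into each grammar attribute, $C^{-1}$ must re-identify which attributes across different $q^i_{\vec{x}}$ share a symbol and which are independent; this succeeds precisely because $C$ records this information in the attribute tuples by construction, so no extra side-condition on the input grammar is needed beyond being in the image of $C$.
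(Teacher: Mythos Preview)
Your first paragraph \emph{is} the paper's proof: the paper simply lays down the three defining clauses $C^{-1}\llbracket\,\rrbracket=\emptyset$, $C^{-1}\llbracket C_1\ C_2\rrbracket = C^{-1}\llbracket C_1\rrbracket\,.\,C^{-1}\llbracket C_2\rrbracket$, and $C^{-1}\llbracket a_{\vec{x}}\rightarrow q^0_{\vec{x}}\ldots q^n_{\vec{x}}\rrbracket = \{a(\vec{x}):-q_0(\vec{x}),\ldots,q_n(\vec{x})\}$ and stops, treating the theorem as purely a constructive definition in mirror image to theo.~\ref{theo:TransductionIntoAttributedGrammar}. Everything you add from the second paragraph onward (syntactic admissibility, the two-sided identity $C^{-1}\circ C=\mathrm{id}$ and $C\circ C^{-1}=\mathrm{id}$, ordering of alternatives, attribute reconstruction) is not part of this theorem in the paper; the paper defers termination to cor.~\ref{cor:TranducersTerminate} and the round-trip identities to the subsequent soundness-and-completeness corollary, and even there it only states $C\circ C^{-1}\circ C\equiv C$ and its dual rather than strict inverses, so your stronger $\mathrm{id}$-claims slightly overshoot what the paper asserts.
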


\begin{proof}
Again the proof is given by a direct transducer definition.
\begin{center}
\begin{tabular}{l}
 $C^{-1}\llbracket \rrbracket = \emptyset$\\
 $C^{-1}\llbracket C_1 \ C_2 \rrbracket = C^{-1}\llbracket C_1 \rrbracket \ . \ C^{-1}\llbracket C_2 \rrbracket$\\
 $C^{-1}\llbracket a_{\vec{x}} \rightarrow q^0_{\vec{x}} ... q^n_{\vec{x}} \rrbracket$ = $\{ a(\vec{x}):-q_0(\vec{x}), ..., q_n(\vec{x})\}$
\end{tabular}
\end{center}
\end{proof}

\begin{corollary}[Transducer Termination]
\label{cor:TranducersTerminate}
Both $C \llbracket . \rrbracket$ and $C^{-1} \llbracket . \rrbracket$ permanently terminate for any incoming well-defined finite vector.
\end{corollary}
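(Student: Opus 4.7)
The plan is to prove termination by structural induction on the size of the input, using the recursive definitions of $C\llbracket . \rrbracket$ and $C^{-1}\llbracket . \rrbracket$ given in theo.\ref{theo:TransductionIntoAttributedGrammar} and theo.\ref{theo:InvertedTransductionIntoAttributedGrammar}. Since the corollary asserts a dual statement for two functions with symmetric defining equations, it is enough to establish the result for $C\llbracket . \rrbracket$; the argument for $C^{-1}\llbracket . \rrbracket$ is then obtained by interchanging the roles of Prolog rules and attributed grammar productions.

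First I would fix a measure $\mu$ on the input, namely the number of top-level Prolog rules occurring in it (equivalently, the number of terminating dots). Finiteness of the incoming vector, as stipulated in the corollary, guarantees $\mu < \infty$. The base clause $C\llbracket \rrbracket = \emptyset$ handles $\mu = 0$ in a single step. For the inductive clause $C\llbracket C_1 . C_2 \rrbracket = C\llbracket C_1 \rrbracket \dot\cup C\llbracket C_2 \rrbracket$, both $C_1$ and $C_2$ satisfy $\mu(C_i) < \mu(C_1 . C_2)$, so by the induction hypothesis each recursive call terminates; the set-union $\dot\cup$ is a single constructor application and thus terminates as well. For the atomic clause on a single predicate rule $a(\vec{x}) :- q^0(\vec{x}), \dots, q^n(\vec{x})$, no recursion into $C\llbracket . \rrbracket$ occurs: the translation is a purely syntactic relabelling of a head and a finite, left-to-right enumerated body of subgoals (the body being finite by def.\ref{def:AbstractPredicateRule}), so a single evaluation step suffices.

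The only real subtlety I expect to encounter is ensuring that the atomic clause genuinely terminates in constant time relative to a single rule. Concretely, the attribute vector $\vec{x}$ carried by each $q^j_{\vec{x}}$ must be copied or referenced without triggering a secondary recursion through the AP environment $\Gamma$ of cor.\ref{corollary:PredicateEnv}; otherwise, mutually recursive definitions in $\Gamma$ could in principle force unbounded unfolding and jeopardise termination. The translation given in theo.\ref{theo:TransductionIntoAttributedGrammar} is purely syntactic, however: it rewrites the head $a(\vec{x})$ to the attributed nonterminal $a_{\vec{x}}$ and each subgoal $q^j(\vec{x})$ to $q^j_{\vec{x}}$ without consulting $\Gamma_a$ or expanding any predicate body. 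This separation between translation (syntactic) and interpretation (semantic, via the parser described in lem.\ref{lem:HeapAsWord}) is what secures unconditional termination; the semantic unfolding is deferred to the parser and is not the transducer's concern.

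Combining the three clauses by well-founded induction on $\mu$, $C\llbracket . \rrbracket$ terminates on every finite input. Applying the same argument to $C^{-1}\llbracket . \rrbracket$ with the measure counting attributed productions, and noting that both transducers are total on their respective syntactic domains (APs, attributed CF rules), yields the corollary. A brief closing remark would note that this termination result is precisely what is required to lift thes.\ref{thes:ReasoningAsProving} into a concrete verification procedure: heap reasoning can be delegated to a CF-parser constructed in finitely many steps from the user-supplied Prolog rules.
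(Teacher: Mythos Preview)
Your proposal is correct and captures the same essential idea as the paper's proof: the transducers are purely syntactic rewritings that scan the finite input rule-by-rule without ever consulting or unfolding the predicate environment, so mutual recursion among the APs cannot jeopardise termination. The paper argues this more informally (noting that an infinite cycle in the rules ``would only affect the analysis'' rather than the translation), whereas you make the same point precise via structural induction on the number of rules and an explicit separation of the syntactic transduction from the later semantic unfolding by the parser; your formulation is tighter but the underlying argument is the same.
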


\begin{proof}
 The proof is relatively simple as infinite cycles are excluded.
 Both transducers $C \llbracket \rrbracket$ and $C^{-1} \llbracket \rrbracket$, according to theo.\ref{theo:TransductionIntoAttributedGrammar} and theo.\ref{theo:InvertedTransductionIntoAttributedGrammar}, scan incoming rules stepwise in LR ordering.
Assuming an infinite cycle was in the given rules, e.g. mutual recursion through CF rules was hypothetically allowed, transducers would still terminate because that cycle would only affect the analysis.
 The start of the predicate partition corresponds 1:1 with the start NT of the corresponding \index{grammar!sub-} \textit{sub-grammar}.
As APs may have more than one entry point w.r.t. subgoal call, this behaviour may be mimicked to the corresponding formal grammar by selecting an arbitrary defined NT as starting NT.
Naturally, rules do not change.
A fitting syntax analyser may be used if allowing all NTs from the caller's side, which is the subgoal call level.
The selection of the starting NT without the need to rebuild the recogniser from scratch can be observed in specific parsers, e.g. ANTLR.
It is observed that many other parsers do not allow to change the NT even when the static rule set does not change.
Recognition of different NTs means the recognition of different (sub-)expressions for a given grammar.
\end{proof}

Soundness and completeness still need to be researched $C \llbracket \rrbracket$ and $C^{-1} \llbracket \rrbracket$.

\begin{corollary}[Soundness and Completeness of Transductions]
$C \llbracket \rrbracket$ and $C^{-1} \llbracket \rrbracket$ are complete and sound.
\end{corollary}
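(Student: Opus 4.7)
The plan is to prove soundness and completeness simultaneously by structural induction on the syntactic build-up of APs (respectively of the attributed grammar), exploiting the near-bijective shape of the two transducer clauses. The key observation is that $C\llbracket\cdot\rrbracket$ and $C^{-1}\llbracket\cdot\rrbracket$ are defined by case analysis on exactly three constructors — the empty rule set, the concatenation "$.$" (respectively juxtaposition), and the single predicate (respectively single production) clause — and that the clauses correspond one-to-one. First I would fix precise statements: \emph{soundness} of $C\llbracket\cdot\rrbracket$ means that whenever $C\llbracket R\rrbracket = G$, every abstract sentence $\alpha \in L(G)$ corresponds via lem.\ref{lem:HeapAsWord} to a derivable heap assertion under the Horn-rule partition $R$; \emph{completeness} means every AP call derivable in $R$ is generated by $G$. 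Analogous statements apply to $C^{-1}\llbracket\cdot\rrbracket$.

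Next I would prove the two inversion identities $C^{-1}\llbracket C\llbracket R\rrbracket\rrbracket = R$ and $C\llbracket C^{-1}\llbracket G\rrbracket\rrbracket = G$ by induction on the structure of $R$ (resp. $G$). The base case $C\llbracket \rrbracket = \emptyset$ is immediate, and the concatenation case reduces to $\dot\cup$ preserving sequence and duplicates (this is the reason $\dot\cup$ was chosen over ordinary $\cup$, and is what keeps the transducers injective on predicate partitions). The clause-level case is the interesting one: the subgoal sequence $q^{0}(\vec{x}),\ldots,q^{n}(\vec{x})$ maps to the right-hand side $q^0_{\vec{x}}\ldots q^n_{\vec{x}}$ verbatim, with the argument vector $\vec{x}$ carried as an attribute rather than as an explicit argument tuple, and this mapping is manifestly invertible. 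Termination of both directions is already granted by cor.\ref{cor:TranducersTerminate}, so the inductive step is well-founded.

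With the inversion identities established, soundness and completeness of $C\llbracket\cdot\rrbracket$ follow by transporting the semantic equivalence across the bijection: thes.\ref{thes:ReasoningAsProving} together with cor.\ref{cor:CFOfHeapExpressions} already guarantee that each AP corresponds to exactly one CF production with the attribute vector encoding the term arguments, so the generated language $L(G(P))$ coincides set-theoretically with the set of heap sentences derivable in partition $P$. Soundness of $C^{-1}\llbracket\cdot\rrbracket$ then follows dually, since any grammar in the image of $C\llbracket\cdot\rrbracket$ is mapped back to the unique originating predicate partition, and completeness follows because every well-formed attributed production is in that image.

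The main obstacle I expect is the attribute-carrying clause: one must be careful that $\vec{x}$ on the left-hand side of the Prolog rule and the attributes $\vec{x}$ decorating the grammar symbols denote the same variable binding in both directions, including the case when different subgoals share only part of $\vec{x}$ (the notational shortcut introduced right after theo.\ref{theo:TransductionIntoAttributedGrammar}). To handle this cleanly I would normalise every rule to the uniform form $a(\vec{x}):-q^{0}(\vec{x}),\ldots,q^{n}(\vec{x})$ by padding unused positions with the anonymous symbol "\_", so that the bijection between argument positions and grammar attributes becomes syntactically exact; with this normalisation the inductive cases of the inversion proof become pure rewriting, and the remaining subtlety — naming clashes across partitions — is already addressed by the $\alpha$-conversion remark following cor.\ref{corollary:PredicateEnv}.
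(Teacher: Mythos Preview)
Your proposal is correct and follows essentially the same route as the paper: establish mutual inversion of the two transducers by inspecting their three defining clauses, then invoke termination via cor.~\ref{cor:TranducersTerminate}. The paper's own proof is much terser---it simply asserts that $C\circ C^{-1}\circ C\equiv C$ and $C^{-1}\circ C\circ C^{-1}\equiv C^{-1}$ hold ``by comparing correct definitions'', notes that ``\texttt{!}'' and ``\texttt{;}'' do not affect expressibility (the point you handle through normalisation), and appeals to cor.~\ref{cor:TranducersTerminate} for totality. Your version is more explicit in two respects: you spell out what soundness and completeness mean semantically (derivable heap assertions versus generated language) rather than leaving this implicit, and you claim the stronger full-inverse identities $C^{-1}\llbracket C\llbracket R\rrbracket\rrbracket=R$ rather than the paper's weaker triple-composition form. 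The stronger form does require the normalisation you describe (padding with ``\_'', eliminating cuts), which the paper sidesteps by using the retraction-style identities; either works, and your explicit structural induction is the honest unpacking of the paper's ``trivially''.
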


\begin{proof}
Trivially, that $C \circ C^{-1} \circ C \equiv C$ and $C^{-1} \circ C \circ C^{-1} \equiv C^{-1}$ hold by comparing correct definitions.
The punchline of sec.\ref{sect:PointsToHeaplets} is neither "\textbf{!}" nor "\textbf{;}" as subgoals do not affect expressibility after all.
If for any element $C\llbracket \rrbracket$ does not terminate, then the \index{co-domain} \index{co-domain} co-domain, too, is not defined.
The same affects $C^{-1}\llbracket \rrbracket$ due to cor.\ref{cor:TranducersTerminate}.
\end{proof}

It is not hard to notice that Prolog rules are not the only representation, but it is an IR that is very close to formal grammar.
In general imperative PLs supporting procedures may be included based on automated stack execution.

\subsection{Syntax Parsing as Heap Verification}

For the sake of a simple and intuitive algorithm understanding, constants from def.\ref{def:HeapAssertionDefinition} are still considered further.
Regarding class objects, the predicate \underline{$true$} may mean accepting all $\mapsto$-assertions until some marker, requiring different conventions and depends on a rule.
Such a marker would technically represent a \index{synchronisation point} safe synchronisation point used in \index{error production rules} error-production rules in certain \index{syntax analysis} parser classes (cf.\cite{grune90}).
The marker would allow proceeding syntax analysis for a derived stream of \index{token stream} tokens until it gets stuck due to an invalid or unexpected token sequence.
It is assumed, the input word representing the heap is always finite since we are not interested in infinite verification conditions (cf. discussion in sec.\ref{Intro:LogicalReasoning}).
This assumption is considered fair because memory is finite and linear in address space.
Purely hypothetical, the amount of committed unfolds and folds together approximates infinity.
Later we show that for $\exists j$ functions, $\pi_j$ and $\sigma_j$ are bound by polynomial complexity.

This section further proposes and discusses the basic conventions needed to implement a syntax analyser for heap verification (further considered by \index{LL-analyser} LL(k)-analysers by example).
A LL(k)-parser is a syntactic analyser that can look up ahead $k$ tokens to resolve the ambiguity.

Parsers other than LL(k)-based fulfilling required interfaces are more than welcome to be used but are unfortunately not considered in this work for simplicity.
Those alternative syntax analysers may include, for instance, \index{LALR-recogniser} LALR, \index{SLR-recogniser} SLR, \index{Earley recogniser} Earley analysers.
Jourdan \cite{jourdan12} proves the soundness of his LR(1) analyser using \textit{Coq}.
The question of analyser soundness is undoubtfully essential too, but not considered further, because its overall impact here is too small.
Next, a \textit{sentence} is defined as a composition of separate $\mapsto$-assertions and subgoals as NTs.
Second and third, in analogy to LL(k)-analysers, where separate terminals are given as ordinary assertions, the operations "$first$" and "$follow$" are introduced.
Fourth, both operations \index{SHIFT} (SHIFT) and \index{REDUCE} (REDUCE) are introduced, s.t. generalised analysers may be defined, presumably with minor modification upon existing wide-spread parsers.

\begin{definition}[Abstract Sentence]
\label{def:AbstractSentence}
An \textit{abstract sentence} $\alpha$ is a $\star$-conjunction of heaps.
Simple heaps are noted as $a \mapsto b$, where $a$ is a location and $b$ denotes a value, e.g. an integer or some value, which contains either meaningful (composite) value or \textbf{nil}.
\end{definition}

For example, $$\alpha::= \texttt{[ pointsto(x,nil),} \texttt{pointsto(y,1), member(x,[y])]}$$ describes the actual heap state during verification for a given imperative program.
$\star$ is replaced in the previous list by a comma.
The rule specification may depend on
$$\texttt{[pointsto(Y,1),member(X,[Y|_]),pointsto(X,_)]}.$$
Hence, in order to check an abstract sentence (specification) for a given program (generating the heap), it is needed to compare whether one of both sides is derivable from the other side or not.

An abstract sentence may also contain \index{term unification} term unification, e.g. \texttt{pointsto(X,5),X=Y}.
Term unification must be thoroughly checked and separated from both $\mapsto$-assertions and predicate calls -- so, NTs.
\index{recursion!unbound} Unbound term recursion must be limited, so self-application is excluded by definition.
Therefore, as stated, \index{term!self-containing} \textit{self-containing terms} must be \index{occurs-check} bound, as they often may occur in definitions in Prolog (see sec.\ref{sect:PrologAsReasoningSystem}), where a check for self-containment is absent by default.
Analysers that are forced to process unbound terms tend to be incomplete whenever they run into non-termination.
It is agreed upon that by default, checks on cycles are done (presumably) during semantic analysis or not at all in order to avoid running into a previously mentioned problem.

Let us consider now the na\"{\i}ve approach for comparing equality of two abstract sentences.
Let us consider algo.\ref{AlgorithmEqualityCheck}.
$\pi$ denotes the function of starting terminals for a given rule set and a given sentence, as defined earlier.
The problem with this approach is uncertainty.
Uncertainty arises from when and how often shall \index{unfold} "$unfold$" or \index{fold} "$fold$" be applied.
There is always the risk of running into infinity on unfolding or folding.
It remains uncertain, especially after finding one (sub-)solution candidate since it is not clear the found solution is optimal indeed -- which depends undoubtedly on the given rule set.
Let us assume

$$\alpha_1 = [\underbrace{a \mapsto b}_{i_0},i_1,\cdots, i_{m_1}, \underbrace{q_1(x)}_{p_0}],$$ 
$$\alpha_2 = [\cdots, \underbrace{a \mapsto b}_{j_3},\cdots, \underbrace{q_1(x)}_{q_7}]$$

These two heaps shall be compared.
The shift operations on terms (SHIFT-TERMs) leads to unifications $i_0$ and $j_3$.
The comparison would continue with the remaining terms.
First, the reduce operation (REDUCE-PREDs) checks whether predicates are matching and applicable for unfolding.
Thus, the first terminal of the predicate may be requested first.
The unfolding of $expand(p_k,\alpha_1)$ replaces the subgoal with the body of predicate $p_k$ (see def.\ref{def:PredicateFolding} and fig.\ref{fig:PrologRulesEBNF}) and generates the new abstract sentence $\alpha'$, which may be described in Prolog as \texttt{concat($\alpha$,$[i_7,i_8,i_9],\alpha'$)} if $q_1(x)$ unfolds to list $[i_7,i_8,i_9]$.

\textbf{Remark:} Notice the remarkable similarity between folds and unfolds, expansion and reduction, template instantiation and schema validation, as well as heap construction and verification.

\begin{algorithm}[t]
\caption{Na\"{\i}ve algorithm for equality check of abstract sentences. \textbf{Input:} $\alpha_1$ = $[i_0,...,i_{m_1},$ $p_0,...,p_{n_1}]$, $\alpha_2$ = $[j_0,...,j_{m_2},q_0,...,q_{n_2}]$ with $i$ and $j$ as \texttt{points-to}-terminals and subgoal $p$ and $q$ as NTs. $\Gamma$ contains all defined predicates. \textbf{Result:} \underline{$True$} if $\alpha_1$ equals $\alpha_2$, \underline{$False$} otherwise.}
\begin{algorithmic}[1]
\Procedure {SHIFT-TERMs}{$\Gamma$, $\alpha_1$, $\alpha_2$}
 \ForAll {$k \in \{0, ... , m_1\}$}
   \If {$\exists l. l \in \{0, ..., m_2\} \wedge i_k \approx j_l$}
     \State $\alpha_1 \leftarrow \alpha_1 \setminus i_k$
     \State $\alpha_2 \leftarrow \alpha_2 \setminus j_l$
   \EndIf
  \EndFor
\EndProcedure
\Statex
\Procedure {REDUCE-PREDs}{$\Gamma$, $\alpha_1$, $\alpha_2$}
 \ForAll {$k \in [0 .. n_1]$}
   \Comment{compare terminals}
   \If {$\exists l. l \in [0..m_2] \wedge j_l \in \pi(p_k)$}
     \State $expand(p_k, \alpha_1)$
   \Else
   \Comment{reduce in $\alpha_2$, if possible}
     \If {$\exists l. l \in [0..n_2] \wedge (\pi(q_l) \cap \pi(p_k) \ne \emptyset) $}
       \State $\alpha_1 \leftarrow expand(p_k, \alpha_1)$
       \State $\alpha_2 \leftarrow expand(q_l, \alpha_2)$
     \Else
       \Comment{Match}
       \If {$m_1 = m_2 = n_1 = n2 = 0$}
         \State $true \rightarrow Halt!$
       \Else
         \Comment{No match}
         \State $false \rightarrow Halt!$
       \EndIf
     \EndIf
   \EndIf
  \EndFor
\EndProcedure
\end{algorithmic}
 \label{AlgorithmEqualityCheck}
\end{algorithm}

\begin{definition}[First-Set of Non-Terminals]
\label{def:FirstSetParsers}
The \textit{set of starts of NTs} \index{first set} (first-set) is defined as a co-domain of the total mapping $\pi$ of kind $(T \cup NT) \rightarrow 2^T$ for $m \in \mathbb{N}$, s.t. holds\\
 
$\pi(a) ::=
\left\{
	\begin{array}{ll}
		a  & \mbox{if } a \mbox{ is } X \mapsto Y \mbox{ or } \Gamma_a ::= a.\\\\
		\bigcup_{0 \leq j \leq n} \pi(q_{j,0}) & \mbox{if } \Gamma_a ::= a:- q_{m \times n}, n \in \mathbb{N}.
	\end{array}
\right.
$
\end{definition}

Independent from concrete arguments, $\pi$ defines all terminals, either \index{$\mapsto$-assertion} $\mapsto$-assertions or the first terminal(s) of some predicate \index{subgoal} subgoal.
It is implied that both subgoal unification and calls to built-in subgoals are filtered and are no further considered during $\pi$ and $\sigma$.

\begin{definition}[Set of Following Terminals]
\label{def:FollowSetParsers}
The \index{follow-terminals set} \textit{set of following terminals} (follow-set) $\sigma(t) \subseteq T$ for $t \in (T \cup NT)$ is defined as:\\

$\sigma(t) ::= 
\left\{
	\begin{array}{lll}
		\bigcup_{i,j} \pi(q_{i,j+1})  & \mbox{if } t \mbox{ at } (i,j<n) \mbox{ in } q_{m \times n}\\
		                              &            \wedge \ 0 \leq i \leq m\\
		                              &            \wedge \ q_{i,j+1} \neq \top\\
		                              &            \wedge \ \exists a.\Gamma_a ::= a:-q_{m \times n}\\\\
		\bigcup_{a} \sigma(a)                & \mbox{if }  t \mbox{ at } (i,n) \mbox{ in } q_{m \times n}\\
		                              &            \wedge \ \Gamma_a ::= a:-q_{m \times n}\\
		                              &            \wedge \ \exists b.\Gamma_b ::= b:-q_{m_b \times n_b}\\
		                              &            \wedge \ a \mbox{ at } (i_b,j_b) \mbox{ in } q_{m_b \times n_b}\\\\
		\emptyset                     & \mbox{otherwise}
	\end{array}
\right.
$
\end{definition}

An informal remark may illustrate the key characteristics better.
The set of following \index{terminal} terminals defines all terminals that may follow a given actual $\mapsto$ assertion or a given subgoal if some terminal may be the last of a sentence in a rule.
According to \cite{grune90}, \index{LL(k)-recogniser} LL(k)-recognisers may be fully defined based on just $\pi$ and $\sigma$.

\begin{example}[Rules Ambiguity]
Given the \index{production rules} production rules $ q_1 \rightarrow a, \  q_2 \rightarrow a q_2 \ | \ q_3 b, \ q_3 \rightarrow \varepsilon \ | \ q_3 a$.
As can easily be found these rules are \index{ambiguity} ambiguous, because
$$\pi(q_2)=\{a\}, \sigma(a)=\{ \varepsilon\} \cup \pi(q_2) \cup \pi(q_3) \cup \sigma(q_3).$$
\end{example}

\begin{example}[Example for Word Correctness]
Given the following finite specification
$$[ (loc1,v1), p1(loc1,loc2), (loc2,v2) ]$$ 
as well as the conditional chain fitting to it:
$[ (loc1,v1), (loc2,v2)  ]$.
The chain is a correct word for the given specification only in case $p1$ generates the \index{heap!empty} empty heap.
\end{example}

If the word that is given is not just a sequence of terminals, but it is some \index{abstract sentence} abstracted sentence, so it contains an arbitrary sequence of terminals and one or more NTs, then the comparison problem may arise at different places of the sequence again (cf. validation of semi-structured data \cite{haberland08-2}).
One of these problems described may be the re-calculation of repeating (sub-)heaps and abstract parts of predicates.
Thus, it is advantageous when calls to APs are remembered.
Since this is about temporary associativity, a \index{memoiser} memoiser as a software-based \index{caching} cache-system should be a perfect match.
In Prolog, seldom can be found in \cite{warren99} --- in it, memoisation is called "\textit{tabling}".
Applying memoisers subgoals may be compared much faster if a matching instance as compared previously.
A Prolog-based memoiser would benefit that its size can be diminished significantly since symbols unground to subgoals act as parameters.
The tremendous reduction is for the cost of additional unifications per matching subgoal predicate, which in this context pays off rather than reiterating whole sub-proofs.
Thus, predicates may be compared as procedure calls but without inflicting \index{side-effect} side-effects and without changing the calculation state several times in a row.
All term-arguments (possibly unground) need to be stored, presumably in a table, to memoise a predicate.

Predicate negation does not require further discussions for the reasons provided in sec.\ref{sect:ConjunctionAndDisjunction}, sec.\ref{sect:PartialSpec}, sec.\ref{sect:StricterDiscussions}, sec.\ref{sect:PrologAsReasoningSystem}, and sec.\ref{sect:LogicalReasoningAsProof}.

\subsection{Properties}
\label{sect:APsProperties}
In fig.\ref{fig:heapConfiguration}, an example of a heap configuration is shown.
Every vertex $v_j$ with $j \ne 2, j \ne 5$ has outgoing edges for class instances having more than one field of type pointer.
The configuration consists of \index{triangle} 8 triangles, each drawn with solid, dotted and tortuous lines.
Between $v_0$ and $v_3$, a line starts in the mid.
It ends in $M_1$ and spans a triangle $\Delta(v_0,M_1,v_1)$. 
Thus, fig.\ref{fig:heapConfiguration} demonstrates a variety of the same structure in dynamic memory using various descriptions.
It is assumed that two and more outgoing pointers imply a corresponding number of different fields in the united instance (see sec.\ref{chapter:expression} and fig.\ref{fig:GraphIsomorphisms}).
So, vertices may cover the same graphs but with different \index{abstract predicate} APs, e.g. triangles, dotted or tortuous lines.
All vertices have to be taken into account in different representations --- this is a mandatory criterion, but it is not sufficient.
The question about equality of two (syntactically) different directed graphs can be reduced to \index{graph isomorphism} \textit{graph isomorphism}.

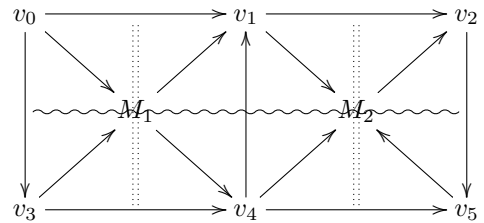
\begin{figure}[b]
\begin{center}
\begin{tabular}{l}
\begin{minipage}{7cm} 
\xymatrix{
 v_0 \ar[rr] \ar[rd] \ar[dd] &  \ar@{:}[dd]        & v_1 \ar[rr]        \ar[rd] & \ar@{:}[dd] & v_2 \ar[dd]\\
        \ar@{~}[rrrr]        & M_1 \ar[ru] \ar[rd] &                            & M_2 \ar[ru] & \\
 v_3 \ar[ur] \ar[rr]         &                     & v_4 \ar[uu]\ar[ru] \ar[rr] &             & v_5 \ar[lu]                         
}
\end{minipage}
\end{tabular}
 \caption{Example of a heap configuration}
 \label{fig:heapConfiguration}
\end{center}
\end{figure}

When the input token stream is analysed, the next state of the analyser may be determined using the current token and following tokens as far as it may be necessary to resolve the ambiguity.
In the worst case, all or at least many input tokens need to be checked --- a decision will eventually be made.
This case is when the heap describes \index{object instance} an object instance, and bounds need to be determined.
If an object were either defined only within the bounds of one predicate, or object fields which are pointers were described by predicates, then this convention would avoid the described problem and would allow generic heap canonisation by predicates.
A syntactical analysis obeying this convention is bound by polynomial complexity \cite{grune90}.
In the following, however, a convention-free \index{syntax analysis} syntax analysis will be traced.
If object boundaries are not considered, then the analysis may become unsound.
Namely, it may overlap with other heaps belonging to other object instances.
This overlap would allow, in practice, \textit{partial or parallel analyses} of different heaps at the same time -- which is believed not to be advantageous after all on the big picture since there may still be many dependencies in between sub-proofs.
First, as Wos mentioned in the introductory, separability was a hard hinder in any parallelisation attempts --- this all is due to tight dependencies, auxiliary definitions and predicate hierarchies and inductive definition dependencies in general as well as local frames.
Second, a realistic estimate shall be questioned, as may seem after all doubtful regarding the benefits, if any.
Thus, parallelisation is from both theoretical and practical consideration not appropriate, at least not now.
In practice, parallel heaps would partially be of interest when separable, and very elementary specifications may only be considered.

Trieber's stack seems to be the most promising approach in an SL-environment \cite{bornat00}, even if it currently fights unsoundness issues.
If all predicate partitions may be parsed under the convention mentioned, then analysing $\mapsto$-assertions are tractable and terminate either with a positive answer or a refusal.
On refusal, parsing fails, and either the invalid token or the concrete sentence for a given NT are yielded.
The error shows the heap state and explains the difference between the currently calculated and expected heap.
Moreover, it is worth mentioning that the confluency of calculations is held while parsing a LL/LR-grammar because the calculation state can be recorded as a tuple $(state, word)$, and for each state, the transition is determined.
This final result is that the main-memory model does not change.
However, it is extended by APs instead.

\subsection{Implementation}
\label{sect:Implementation}

The system is based on \index{GNU Prolog} GNU Prolog \cite{diaz12} and implemented using ANTLR version 4 \cite{parr12}.
The work with APs is foreseen \cite{haberland14-1}, \cite{haberland15-2}.
Initially, both contributions are based on Prolog.
They were selected for the sake of simplicity and lecturing.
For maximum support, implemented packages are based on GNU Prolog-dialect for maximum portability and generic definitions only.
For extension and the possibility of supporting numerous software libraries even in different PLs, the \index{multi-target paradigm} multi-target paradigm \cite{denti05} is applied so that code may be loaded dynamically. 
Therefore, Denti's \cite{denti05} library is chosen, Java-libraries are then loaded as-you-verify.
For the sake of simplicity, it is agreed upon that further references to Java are by example.
Whenever Java is mentioned in Denti's library, this implicitly implies any other loadable modules in arbitrary languages that match the interfaces required.
Procedures in Java communicate via a Proxy-interface which comes as a Java-class file and can be loaded on a Prolog subgoal.
The caller layer "\textit{tuProlog}" acts as \index{pattern} Mediator in the Mediator-pattern as well as Adapter (cf.\cite{kerievsky05}).
Here, call and results may be passed from both sides by implementing the \index{pattern} Proxy-pattern, as the remote-procedure call was performed locally in Prolog.
This technique also allows defining new built-in and auxiliary predicates in Java, but visible to the user as Prolog subgoal.
Immediate Prolog-defined subgoals are, of course, allowed too.
Due to its dynamic context, newly created predicates may call existing predicates and vice versa.
So, a subgoal in Prolog remains in full compliance with the principle from fig.\ref{fig:BoxModelPredicateCall}.

The implementation turns an incoming program into an \index{IR} IR, which is a pure Prolog term.
Afterwards, assertions are copied one by one to the Prolog theory.
APs are turned in a formal ANTLR-grammar as described earlier.
Then, syntax analysis is triggered.
The generated recogniser is activated after the full ANTLR-grammar is available.
The recogniser is generated Java-code, but this is not of primary interest in this work.
Afterwards, the process quits, and control is passed to the calling instance back again.
If needed, APs may dynamically be checked fully automated to preempt the Prolog core to fail due to syntax and semantic errors, which will eventually be checked.
In case of a syntax mismatch, the generated parsing error is issued accordingly.

\index{ANTLR} ANTLR uses two technologies to fight ambiguities in its definitions, "\textit{syntactic predicates}" and "\textit{semantic predicates}" \cite{parr12}.
Both terms sound very similar and may accidentally be mixed up with those terms in this work, but this should thoroughly be avoided.
Except for these two techniques, as an alternative, there is always the possibility to rewrite Prolog rules, s.t. ambiguity is not emerging, e.g. by \textit{factorising rules}.

Unfortunately, in practice, \index{ANTLR} ANTLR does not fully cover the class of all LL(k)-recognisers.
Predicates are limited and often require a \textit{rewriting of (term) rules} in case of a conflict.
Often ANTLR is not able to distinguish automatically complex expressions, especially not regular expressions.

Consequently, this leads to a full code generation for the needed recogniser.
The same problem with limitations of LL(k) is observed in many other compiler-generators (CC), for instance, in \index{YACC} YACC or \index{BISON} BISON \cite{levine09}.
From a practical standpoint, this means a given \index{grammar!formal} grammar is formal and correct, but its recogniser just cannot be built due to some definite limitations that are specific.
Luckily, rewriting that grammar often resolves this practical flaw.
Involved top-down recognisers often can overcome this problem solely by huge additional memory reserves just for all transitions possible, as \index{BISON} BISON does it, for instance, working on the \index{shift-reduce} shift-reduce principle, which is very close to the approach proposed earlier in this section.
Detailed surveys on syntax recognisers may be found in \cite{opaleva05}, \cite{grune90}.

An example with required transformations at a proper level shall be shown for processing lexemes and tokens.
First, $bar\mapsto foo$ is transformed into $pt\_3bar\_3foo$, where the number "$3$" is either the number of names or some complex expression denoting an integer location required to distinguish consecutive namings.
If a location is intricate, e.g. $b.f.g$, then a location together with length determines the assertion expression in \index{name mangling} \textit{mangled} notation (cf.\cite{levine99}).
For example, \texttt{pointsto(X,2)} is transformed into the Prolog \index{atom} atom $p\_X\_2$.
The transformation's objective is to obtain a single atomic symbol, representing a simple heap assertion, so a terminal.
A terminal may be fully restored to a heap assertion whenever needed.
So, all annotated information is available and makes every Prolog atom subject to this encoding unique.
These straightforward and backward processes may be implemented as a built-in predicate.
Afterwards, they are used either in the essential parts of the verification or other built-in predicates in Java (see prerequisites, see \cite{denti05}).

Furthermore, the left-hand side of \index{canonisation} (de-)canonising (in Prolog)

$$\texttt{p1(X,[X|Y]):-... .}$$

is transformed into

$$\texttt{p1(X1,X2):-X1=X,X2=[X|Y],... .}.$$

The Prolog rule "\texttt{p(X,Y):-$\alpha$.}" may be transformed into the following ANTLR-grammar (rule)

\begin{center}
\begin{tabular}{c}
\texttt{p[String X,String Y]:} $\alpha$.
\end{tabular}
\end{center}

In this style, all \index{synthesised attribute} \textit{synthesised attributes} may be passed top-down.
Some predicate \texttt{p} may be decorated with \textit{inherited attributes}.
In \textit{ANTLR}, this can be achieved by adding the keyword "\texttt{returns}", followed by the attribute names before the colon.
Thus, once it is clarified which attribute is synthesised and inherited, all attributes need to be included in the corresponding ANTLR-rule accordingly.
Rules, conflicts and limitations of Prolog on this topic are almost identical to those from ANTLR.

When APs are turned into some concrete grammar, e.g. an \index{ANTLR} ANTLR-grammar, there are significant problems encountered. 
Unified terms, \index{$\mapsto$-assertion} $\mapsto$-assertions (terminals) and NTs are supposed to be transformed according to the provided syntax into grammar together with annotations.
ANTLR also allows \index{translating rule} translating rules, which define a program's semantics and are noted in an ANTLR file definition in curly brackets. 
The negation of a (sub-)sentence (remember, an arbitrary concatenation of terminals and NTs) may be specified by "$\sim$" and parentheses, which denote a given \index{regular expression} regular expression of $\mapsto$"-terminals must or must not follow.
Next, elements of the translation grammar are not discussed here because attributes and translating rules may imitate all other expressions and sentences, including a negative match of a predicate (cf.\cite{opaleva05}, \cite{lavrov01}, \cite{grune90}, \cite{parr12}, \cite{gcc15}, \cite{mitchell96}).

\subsubsection*{\underline{User Interface}}

\begin{figure}[h]
\begin{center}
\includegraphics[width=9cm]{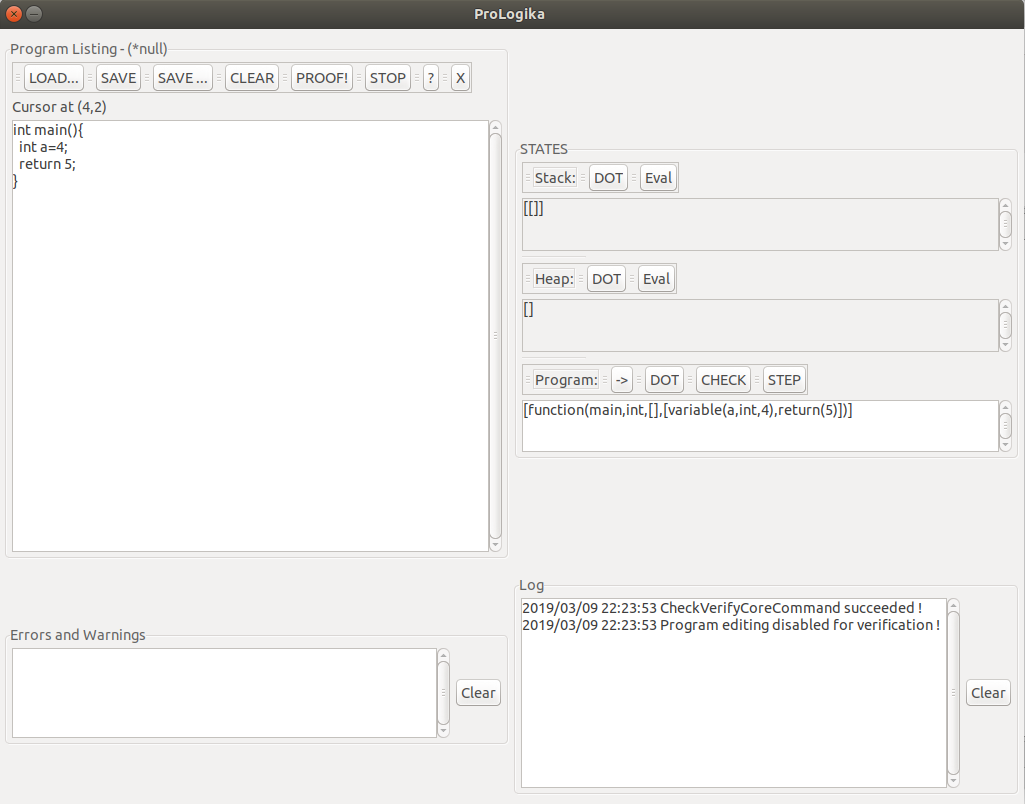}
\end{center}
 \caption{Graphical user interface}
 \label{ScreenshotGUI}
\end{figure}

The graphical user interface (see fig.\ref{ScreenshotGUI}) consists of three major parts: (i) the left part, which contains the program listing in a C-dialect, (ii) the right upper part, which displays the state of the current verification, including the dynamic memory and (iii) right lower part which contains IRs of the program in Prolog terms.
The memory content and the program IR may be turned into a visual IR (and screened as \index{DOT} DOT-file).
Apart from that, the white section informs about errors and warnings and logs major verification news.

The GUI comes in handy when verifying and running demos.
So, automation of transformations, tracing proofs and heap IR may be probed and verification rules, the imperative input program.
It appears more effortless if all it needs to reason about a program's properties are nearby.

\subsubsection*{\underline{Use Cases}}

A programmer can edit a given input program in C-dialect (see fig.\ref{UMLUseCaseProgrammer}).
Numerous checks are implemented, including the syntactic analysis and semantic checks on program IRs, which may also be transformed into \index{DOT} DOT (for visualisation only).

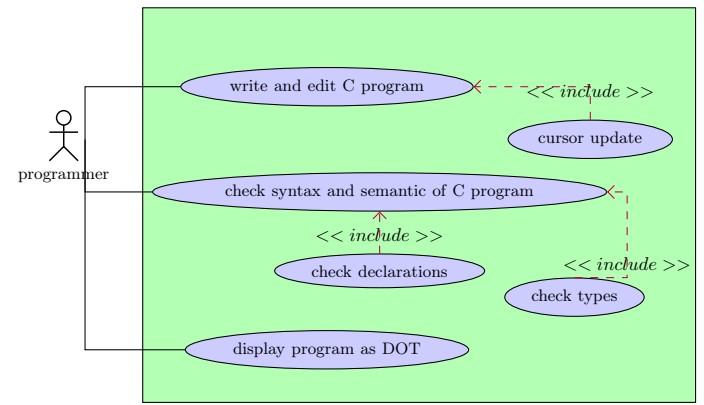
\begin{figure}[h]
\begin{center}
\begin{turn}{0}
\begin{minipage}{\linewidth}
\begin{tikzpicture}[scale=0.7, every node/.style={transform shape}]
 \draw[fill=green!30] (-5.5,1.5) rectangle (5,-6);  
 \umlusecase[name=CASE1,x=-2]{write and edit C program}
 \umlusecase[name=CASE2,x=-1,y=-2]{check syntax and semantic of C program}
 \umlusecase[name=CASE3,x=-2,y=-5]{display program as \index{DOT} DOT}
 \umlusecase[name=CASE4,x=3,y=-1]{cursor update}
 \umlusecase[name=CASE5,x=2.7,y=-4]{check types}
 \umlusecase[name=CASE6,x=-1,y=-3.5]{check declarations}
 \umlactor[x=-7,y=-1]{programmer}
 \draw[umlcd style dashed line,->] (CASE4.north) -- ++(0,0.3) node[above,sloped,black]{$<<include>>$} |- (CASE1.east);
 \draw[umlcd style dashed line,->] (CASE6.north) -- ++(0,0.1) node[above,sloped,black]{$<<include>>$} |- (CASE2.south);
\draw[umlcd style dashed line,->] (CASE5.north) -- ++(1,0) node[above,sloped,black]{$<<include>>$} |- (CASE2.east);
 %
 \draw[-] (programmer.east) |- (CASE1.west);
 \draw[-] (programmer.east) |- (CASE2.west);
 \draw[-] (programmer.east) |- (CASE3.west);
\end{tikzpicture}
\end{minipage}
\end{turn}
\end{center}
 \caption{Programmer's view (UML use case)}
 \label{UMLUseCaseProgrammer}
\end{figure}

The role of a typical user who specifies differs from the traditional role a programmer might have (see fig.\ref{UMLUseCaseSpecifier}).
\textit{Specifications} are very formal logical formulae that catch a program's behaviour.
A specification may be added to the system, edited before and after program execution.
When running a specified program, the program is not run.
Instead, specifications are checked.
In case a  problem encounters, it will be placed to the right lower part.
Specification checks include pre-and postconditions of procedures, APs, class invariants but may also include a check for rules completeness for some predicate partition.
All specification and verification of relevant entities are entirely in Prolog.
The current heap state is checked against an explicitly annotated assertion to the program.

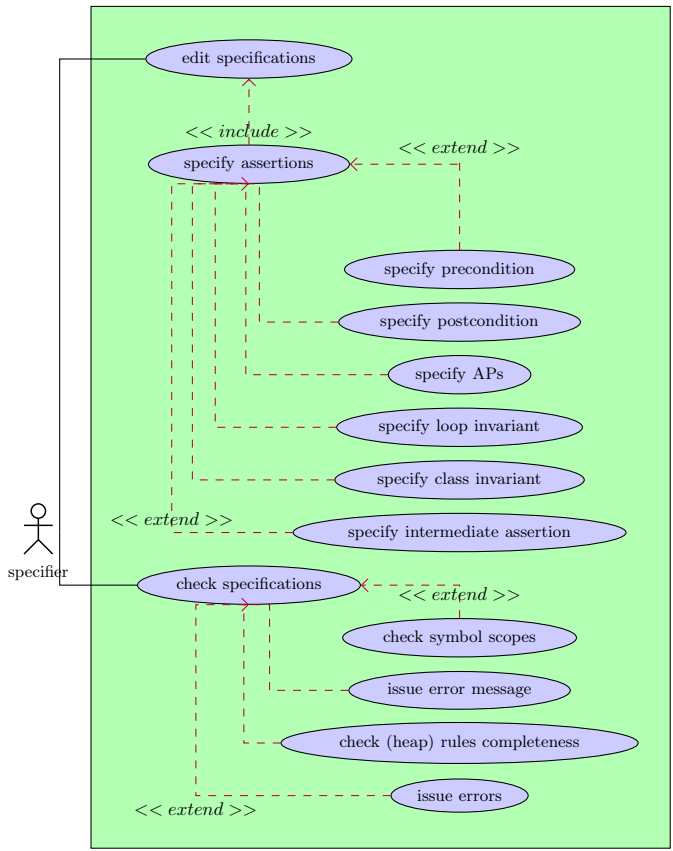
\begin{figure}[h]
\begin{center}
\begin{tikzpicture}[scale=0.7, every node/.style={transform shape}]
 \draw[fill=green!30] (5,5) rectangle (16,-11);  
 \umlusecase[name=CASE1,x=8,y=+4]{edit specifications}
 \umlusecase[name=CASE2,x=8,y=2]{specify assertions}
 \umlusecase[name=CASE3,x=12,y=0]{specify precondition}
 \umlusecase[name=CASE4,x=12,y=-1]{specify postcondition}
 \umlusecase[name=CASE5,x=12,y=-2]{specify APs}
 \umlusecase[name=CASE6,x=12,y=-3]{specify loop invariant}
 \umlusecase[name=CASE7,x=12,y=-4]{specify class invariant}
 \umlusecase[name=CASE8,x=12,y=-5]{specify intermediate assertion}
 
 \umlusecase[name=CASE20,x=8,y=-6]{check specifications}
 \umlusecase[name=CASE21,x=12,y=-7]{check symbol scopes}
 \umlusecase[name=CASE22,x=12,y=-8]{issue error message}
 \umlusecase[name=CASE23,x=12,y=-9]{check (heap) rules completeness}
 \umlusecase[name=CASE24,x=12,y=-10]{issue errors}
 
 \umlactor[x=4,y=-5]{specifier}
 
\draw[umlcd style dashed line,->] (CASE2.north) node[above,sloped,black]{$<<include>>$} -| (CASE1.south);

 \draw[umlcd style dashed line,->] (CASE3.north) -- ++(0,1.7) node[above,sloped,black]{$<<extend>>$} |- (CASE2.east);
 \draw[umlcd style dashed line,-] (CASE4.west) -- ++(-1.5,0) node[above,sloped,black]{} |- (CASE2.south);
 \draw[umlcd style dashed line,->] (CASE5.west) -- ++(-2.7,0) node[above,sloped,black]{} |- (CASE2.south);
 \draw[umlcd style dashed line,-] (CASE6.west) -- ++(-2.3,0) node[above,sloped,black]{} |- (CASE2.south);
 \draw[umlcd style dashed line,-] (CASE7.west) -- ++(-2.7,0) node[above,sloped,black]{} |- (CASE2.south);
 \draw[umlcd style dashed line,-] (CASE8.west) -- ++(-2.3,0) node[above,sloped,black]{$<<extend>>$} |- (CASE2.south);

 \draw[umlcd style dashed line,->] (CASE21.north) -- ++(0,0.2) node[above,sloped,black]{$<<extend>>$} |- (CASE20.east);
 \draw[umlcd style dashed line,-] (CASE22.west) -- ++(-1.5,0) node[above,sloped,black]{} |- (CASE20.south);
 \draw[umlcd style dashed line,->] (CASE23.west) -- ++(-0.7,0) node[above,sloped,black]{} |- (CASE20.south);
 \draw[umlcd style dashed line,-] (CASE24.west) -- ++(-3.7,0) node[below,sloped,black]{$<<extend>>$} |- (CASE20.south);
 
 \draw[-] (specifier.east) |- (CASE1.west);
 \draw[-] (specifier.east) |- (CASE20.west);
\end{tikzpicture}
\end{center}
 \caption{Specifier's view (UML use case)}
 \label{UMLUseCaseSpecifier}
\end{figure}

The essence of heap verification (see fig.\ref{UMLUseCaseVerifier}) is the calculation of heap state and checking against explicitly defined assertions by referring to heap specification being annotated to the program.
Therefore a verification requires some heap to be normalised and checked if needed.
Each of this step can be traced in the GUI parts described.
For simplicity and plausibility of the current verification, additional proof tree visualisation elements, generation (derivation) of counter-example, and memoisation of previously run (sub-)proofs for an increased proof are used.

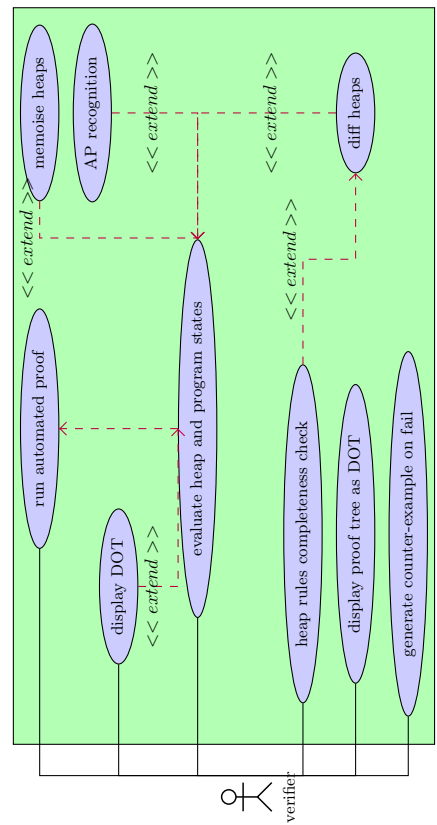
\begin{figure}[h]
\begin{center}
\begin{turn}{90}
\parbox[t]{11cm}{
\begin{minipage}{\linewidth}
\begin{tikzpicture}[scale=0.7, every node/.style={transform shape}]
 \draw[fill=green!30] (4,0.5) rectangle (18,-7.5);  
 \umlusecase[name=CASE1,x=10,y=0]{run automated proof}
 \umlusecase[name=CASE2,x=7,y=-1.5]{display \index{DOT} DOT}
 \umlusecase[name=CASE3,x=10,y=-3]{evaluate heap and program states}
 \umlusecase[name=CASE4,x=16,y=0]{memoise heaps}
 \umlusecase[name=CASE5,x=16,y=-1]{AP recognition}
 \umlusecase[name=CASE6,x=16,y=-6]{diff heaps}
 \umlusecase[name=CASE7,x=8,y=-5]{heap rules completeness check}
 \umlusecase[name=CASE8,x=8,y=-6]{display proof tree as DOT}
 \umlusecase[name=CASE9,x=8,y=-7]{generate counter-example on fail}
 \umlactor[x=3,y=-4]{verifier}
 \draw[umlcd style dashed line,->] (CASE7.east) -- ++(2,0) node[above,sloped,black]{$<<extend>>$} |- (CASE6.west);
 
 \draw[umlcd style dashed line,->] (CASE2.south) node[below,sloped,black]{$<<extend>>$} |- (CASE3.north);
 \draw[umlcd style dashed line,->] (CASE4.west) -- ++(-0.7,0) node[above,sloped,black]{$<<extend>>$} |- (CASE3.east);
 \draw[umlcd style dashed line,->] (CASE5.south) -- ++(0,-1) node[above,sloped,black]{$<<extend>>$} |- (CASE3.east);
 \draw[umlcd style dashed line,->] (CASE6.north) -- ++(0,1) node[above,sloped,black]{$<<extend>>$} |- (CASE3.east);
 
 \draw[umlcd style dashed line,->] (CASE3.north) |- (CASE1.south);
 
 \draw[-] (verifier.east) |- (CASE1.west);
 \draw[-] (verifier.east) |- (CASE2.west);
 \draw[-] (verifier.east) |- (CASE3.west);
 \draw[-] (verifier.east) |- (CASE7.west);
 \draw[-] (verifier.east) |- (CASE8.west);
 \draw[-] (verifier.east) |- (CASE9.west);
\end{tikzpicture}
\end{minipage}}
\end{turn}
\end{center}
 \caption{Verifier's view (UML use case)}
 \label{UMLUseCaseVerifier}
\end{figure}

Example 1) Reverse of a single-linked list\\
Let us assume we are given the list \{1,2,3\}, and pointer $y$ pointing to the last element of a linear list.
Then list inversion may recursively be defined by splitting off the last element and adding it to the beginning of the remaining list.
This algorithm results in the list illustrated in fig.\ref{HeapExampleAP1}.

\begin{figure}[h]
$$x\mapsto 1 \circ x.n\mapsto 2 \circ  \underbrace{x.n.n\mapsto \texttt{nil}}_{optional} \ || \ y\mapsto 3 \circ \underbrace{y.n\mapsto \texttt{nil}}_{optional}$$

\begin{center}
\begin{tabular}{l}
\xymatrix{
  x \ar[r] & *++[Fo]\txt{1} \ar[r] & *++[Fo]\txt{2} \ar[r] & \texttt{nil}\\
  y \ar[r] & *++[Fo]\txt{3} \ar[r] & \texttt{nil}
}
\end{tabular}
\end{center}
 \caption{Example of a simply-linked list no.1}
 \label{HeapExampleAP1}
\end{figure}

Next, its list is retrieved as depicted in fig.\ref{HeapExampleAP2}.

\begin{figure}[h]
$$x\mapsto 1 \circ x.n\mapsto 2 \circ \underbrace{x.n.n\mapsto \texttt{nil}}_{optional} \circ y\mapsto 3 \circ y.n\mapsto x$$

\begin{center}
\begin{tabular}{l}
\xymatrix{
  y \ar[r] & *++[Fo]\txt{3} \ar[r] & *++[Fo]\txt{1} \ar[r] & *++[Fo]\txt{2} \ar[r] & \texttt{nil}\\
           &&  x \ar[u] &&
}
\end{tabular}
\end{center}
 \caption{Example of a simply-linked list no.2}
 \label{HeapExampleAP2}
\end{figure}

\subsubsection*{\underline{Class Definitions}}
Class instances refer mainly to classes in the sense they are used in C++ or Java, except they only have very rudimentary functionality.
Classes have a one-time bound identifier, fields and methods.
Fields and methods are defined at most once in arbitrary order, except forwarding declarations which are allowed.
Visibility is not mandatory, although allowed, for the universality of the model.
Nested methods are prohibited.
"\texttt{this}" grants access to fields of a class's instance.

\begin{center}
\begin{minipage}{8cm}
\begin{grammar}
<class_definition> ::=  [ <class_modifier> ] 'class' <ID> '\{' <class_fields_methods> '\}'

<class_fields_methods> ::= \{ <class_field> | <class_method> \}*

<class_field> ::= [ <class_modifier> ] <variable_declaration> ';'

<class_method> ::=
  [ <class_modifier> ] <function_definition>
  \alt [ <class_modifier> ] <function_declaration> ';'

<class_modifier> ::= ( 'private' | 'public' | 'protected' ) ':'
\end{grammar}
\end{minipage}
\end{center}

Type declaration mainly follows the ISO-C++ standard recommendation, which allows uninitialised and initialised variables.

\begin{center}
\begin{minipage}{8cm}
\begin{grammar}
<variable_declaration> ::=
  <type> <ID> \{ ',' <ID> \}*
  \alt <type> <ID> \{ ',' <ID> \}* '=' <expression>
\end{grammar}
\end{minipage}
\end{center}

Function definitions are straightforward, as known from before C99, so variadic parameters are disallowed.
The calling convention follows "\texttt{cdecl}".
A return type for some method must be stated as either "\texttt{void}" or any other base type (complex types are currently not supported in the prototype).
Object types are scheduled for support in the next future, but all complex modifications are recommended to be passed in function arguments until then.

\begin{center}
\begin{minipage}{8cm}
\begin{grammar}
<function_definition> ::= <function_declaration>
   <block>

<function_declaration> ::=
  <type_or_void> <ID> '(' ( <formal_parameters> | ) ')'

<type> ::= 'int'

<type_or_void> ::= <type> | 'void'
\end{grammar}
\end{minipage}
\end{center}

\subsubsection*{\underline{Built-in Rules for Automated Logical Reason}}
Heap nor\-ma\-li\-sa\-tion is made by built-in predicates defined in Prolog or Java.\\
\texttt{simplify/2}.
This predicate accepts some heap term and returns a heap term in which $||$ is utmost outside.
It is assumed that for an optimised and further processing, some heap is in $||$-normal-form, so of kind $q_0 || q_1 || ... || q_k$, where $\forall j.q_i$ is of kind $X_i\circ X_{i+1}\circ ... \circ X_{k}$.

The "\texttt{subtract}" predicate does \texttt{subtraction} (set comparison).
This built-in predicate accepts a list of Prolog terms and creates a list of missing heap terms.
The predicate may be used to check a given predicate family is complete or if there are any missing heap terms (e.g. this is important to simplify, transform and process with Hoare triples in general).

\textbf{Comparison} is performed component-wise by selecting two terms: the expected heap term and the calculated.

\subsubsection*{\underline{Layered Verification Architecture}}
\label{sect:Layers}
The proposed architecture of this work's verifier called "ProLogika" \cite{haberland19-1} has six main layers (see fig.\ref{LayeredArchitectureVerifier}), which on its peripheral cooperates with libraries, for instance, the logical reasoning library ``\textit{tu\-Pro\-log}'' (its short form is "\textit{2p}"), as well as the CC ANTLR version 4.
The system is based on Java.
Verification, specification and programs are all in (GNU) Prolog and may be defined and edited in the GUI.
ANTLR currently generates Java code, but it may also create code in a different PL, as already mentioned.
It shall also be noted that ANTLR may be replaced since the well-defined IR of an incoming input language may also be generated by other CCs.
Thus, extensibility and variability principles hold by ``\textit{ProLogika}''.
Any replacement for ANTLR is fair as long as its recognition capabilities are compatible with the requirements from earlier in this section.
Most notably is the ability to parse LL(k) or LR-grammars, select different NTs for a given formal grammar, annotate synthesised and inherited attributes, and process incoming and outcoming attributes as termed string or object expression.

\begin{figure}[h]
{\parbox[t]{9cm}{\includegraphics[width=18cm, angle=90]{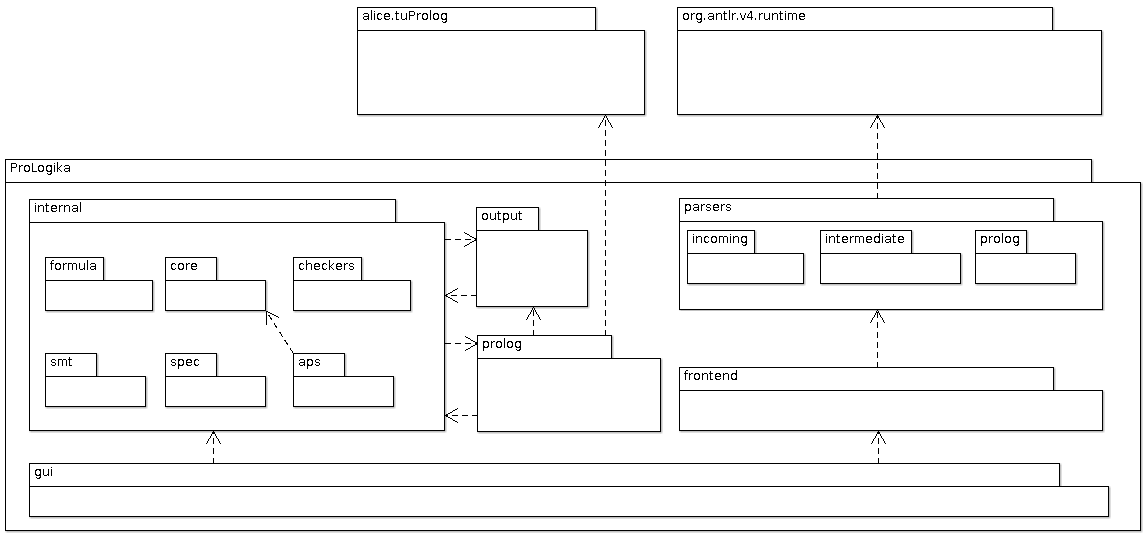}}}\\[-0.7cm]
 \caption{Layered architecture of verifier "ProLogika"}
 \label{LayeredArchitectureVerifier}
\end{figure}

The heap verifier ProLogika consists of \texttt{internal}, \texttt{output}, \texttt{parsers}, \texttt{prolog}, \texttt{frontend} and \texttt{gui}.
\texttt{gui} represents the graphical interface and all controls needed.
Results, IR, and visualisation are (interactively) communicating with the GUI.

The packages \texttt{frontend} and \texttt{parsers} are tightly coupled.
The package \texttt{frontend} contains interfaces and modules of different abstraction level for lexicographical and syntactical analyses.
The parameterisation of syntax analysis is one of the basics of verifying APs.
The package \texttt{parsers} contain apriori generated lexical and syntactic analysers and analysers to be created whilst verification by need.
The interpretation of APs is contained in package \texttt{ProLogika.inter\-nals.aps}.

The \texttt{prolog} package enriches the integrated logical reasoning core by additional Prolog predicates, such as heap term processing, failure location and analysis, and implementing intended side-effects, like I/O-operations.
The implementation of additional built-in predicates is covered by this design, flexible and easy to extend.
Prolog rules may always be added and extended during verification runs.

The package \texttt{output} provides auxiliary functions and predicates for term IR, e.g. for serialisation into \index{DOT} DOT-format or formatted console output.
Currently, the generation of OCL-code is not implemented yet.

The package "\texttt{internal}" implements specification and verification of related functions.
Package "\texttt{core}" contains different modules, which are related to heap term IR processing.
Package "\texttt{aps}" relates to APs and contains all compulsory modules

Consequently, it is also related to the package \texttt{core}.
Package \texttt{checkers} perform heap checks before, after and during a compound program statement is processed.
All remaining packages are subject to permanent corrections and experimental modifications.

\subsubsection*{\underline{Components}}

\begin{figure}[h]
\begin{center}
 {\parbox[t]{8cm}{\includegraphics[width=12cm,angle=90]{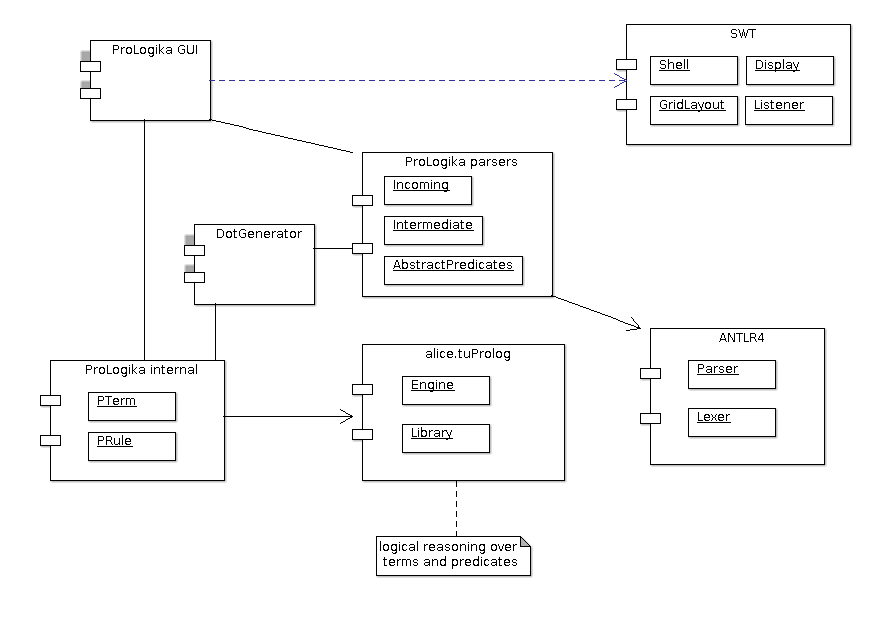}}}\\
 \caption{ProLogika components}
 \label{VerifierComponents}
\end{center}
\end{figure}

Following sec.\ref{sect:Layers}, six main components can be derived (see fig.\ref{VerifierComponents}):
(i) \texttt{ProLogika.GUI} provides all functionalities with associated visualisation model, view and control,
(ii) \texttt{Pro\-Log\-ika.par\-sers} provide all modules that perform syntax analyses, including constraints checkings on syntax and semantic of generated IR, as well as static checks on APs (all  related to packages \texttt{parsers}, \texttt{incoming}, \texttt{intermediate} and \texttt{AbstractPredicates}).
(iii) the Java \texttt{SWT} library, which is stubbed into dependent components.
It is closely related to \texttt{ProLogika.GUI} and is introduced to dock OS-dependencies.
(iv) Package \texttt{ANTLR} already establishes a component for building language processors.
Its interfaces are already on a very high abstraction level.
Hence, it is taken as is, and no further gapping is required here.
(v) Package \texttt{alice.tuProlog} represents a library.
Its primary functions are encapsulated on ProLogika's side, especially what is not intuitive with the logical engine initialisation and state transitions.
(vi) Package \texttt{ProLogika.internal} represents a component representing Prolog terms (\texttt{PTerm}) and rules (\texttt{PRule}, \texttt{PSubgoal}), which are eagerly used by Prolog's other components.

\subsubsection*{\underline{Packages}}

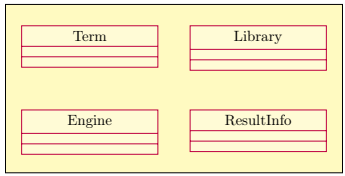
\begin{figure}
\begin{center}
\scalebox{0.8}{
 \begin{tikzpicture}[scale=0.7, every node/.style={transform shape}]
     \draw[fill=yellow!30] (-2,0.5) rectangle (6,-3.5);
     \begin{class}[text width=3cm]{Term}{0,0}
     \end{class}
     \begin{class}[text width=3cm]{Engine}{0,-2}
     \end{class}
     \begin{class}[text width=3cm]{Library}{4,0}
     \end{class}
     \begin{class}[text width=3cm]{ResultInfo}{4,-2}
     \end{class}
 \end{tikzpicture}}
\end{center}
 \caption{Package alice.tuProlog}
 \label{PackageAlice2p}
\end{figure}

The package \texttt{alice.tuProlog} (see fig.\ref{PackageAlice2p}) contains: (i) class "\texttt{Term}", which represents a common Prolog term that is not abstract, (ii) "\texttt{Library}" is a Prolog-based library, which may contain a whole formal theory build of Prolog rules.
"\texttt{BuiltinLibrary}" is a concrete implementation of tuProlog's "\texttt{Library}".
"\texttt{Engine}" is a container and controller for invoking logical reasoning based on Horn-based rules and a given formal theory.
"\texttt{ResultInfo}" represents the result set as a table that may be accessed in a navigating way (by evaluating subgoals only).\\

\texttt{Package parsers:}
 
The package \texttt{parsers} (see fig.\ref{PackageParsers}) contains "\texttt{incoming}", "\texttt{intermediate}", "\texttt{pro\-log}", and the two classes "\texttt{MetaType}" and "\texttt{MetaTypeManager}".
"\texttt{MetaType}" represents the definition of some type in the input program, e.g. class, integer.
"\texttt{Me\-ta\-Type\-Ma\-na\-ger}" is the watchdog instance for all "\texttt{MetaType}" instances derived and referred to during verification.

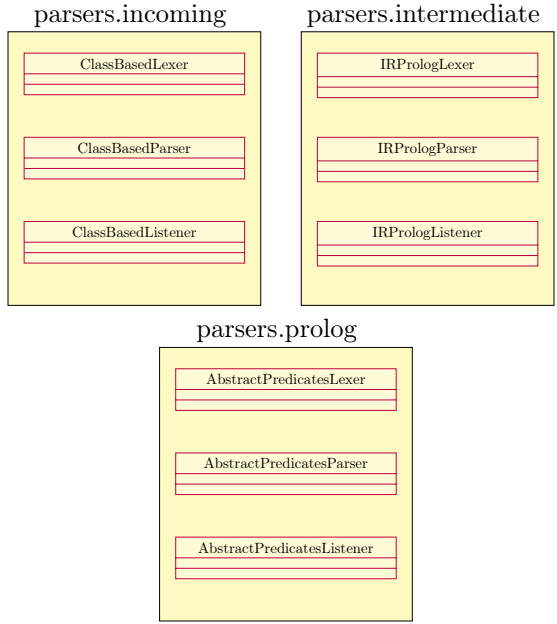
\begin{figure}
\begin{center}
\begin{tabular}{cc}
\texttt{parsers.incoming} & \texttt{parsers.intermediate}\\

\scalebox{0.8}{
\begin{tikzpicture}[scale=0.7, every node/.style={transform shape}]
    \draw[fill=yellow!30] (-3,0.5) rectangle (3,-6);
    \begin{class}{ClassBasedLexer}{0,0}
    \end{class}
    \begin{class}{ClassBasedParser}{0,-2}
    \end{class}
    \begin{class}{ClassBasedListener}{0,-4}
    \end{class}
\end{tikzpicture}}
&
\scalebox{0.8}{
\begin{tikzpicture}[scale=0.7, every node/.style={transform shape}]
    \draw[fill=yellow!30] (-3,0.5) rectangle (3,-6);
    \begin{class}{IRPrologLexer}{0,0}
    \end{class}
    \begin{class}{IRPrologParser}{0,-2}
    \end{class}
    \begin{class}{IRPrologListener}{0,-4}
    \end{class}
\end{tikzpicture}}\\

\multicolumn{2}{c}{\texttt{parsers.prolog}}\\
\multicolumn{2}{c}{
\scalebox{0.8}{
\begin{tikzpicture}[scale=0.7, every node/.style={transform shape}]
    \draw[fill=yellow!30] (-3,0.5) rectangle (3,-6);
    \begin{class}{AbstractPredicatesLexer}{0,0}
    \end{class}
    \begin{class}{AbstractPredicatesParser}{0,-2}
    \end{class}
    \begin{class}{AbstractPredicatesListener}{0,-4}
    \end{class}
\end{tikzpicture}}}
\end{tabular}
\end{center}
 \caption{Package parsers}
 \label{PackageParsers}
\end{figure}

\texttt{Package internal:}
The package contains these sub-packages: "\texttt{core}", "\texttt{checkers}", "\texttt{aps}", "\texttt{spec}".\\\\

\texttt{Package internal.core:}

The package \texttt{core} (see fig.\ref{PackageCore}) contains classes representing the model core for logical reasoning, e.g. terms, rules, symbolic variables, subgoals.
Prefix "\texttt{P}" in the class names indicates each class is a Prolog representation.
Each of the classes has a Prolog-equivalent, which is strictly ISO-conform.

\begin{figure}[h]
\begin{center}
\scalebox{0.8}{
\begin{tikzpicture}[scale=0.7, every node/.style={transform shape}]
    \draw[fill=yellow!30] (-3,1) rectangle (10,-14);
    \begin{class}{PAtom}{0,0}
    \end{class}
    \begin{class}{PBinaryInfixPredicate}{0,-2}
    \end{class}
    \begin{class}{PFunctor}{0,-4}
    \end{class}
    \begin{class}{PList}{0,-6}
    \end{class}
    \begin{class}{PNumber}{0,-8}
    \end{class}
    \begin{class}{PSubgoal}{7,0}
    \end{class}
    \begin{class}{PSubgoalVisitor}{7,-2}
    \end{class}
    \begin{class}{PRule}{7,-4}
    \end{class}
    \begin{class}{PRuleVisitor}{7,-6}
    \end{class}
    \begin{class}{PTerm}{7,-8}
    \end{class}
    \begin{class}{PTermVisitor}{0,-10}
    \end{class}
    \begin{class}{PUnparsed}{0,-12}
    \end{class}
    \begin{class}{PUnparsedVisitor}{7,-10}
    \end{class}
    \begin{class}{PVariable}{7,-12}
    \end{class}
\end{tikzpicture}}
\end{center}
 \caption{Package core}
 \label{PackageCore}
\end{figure}
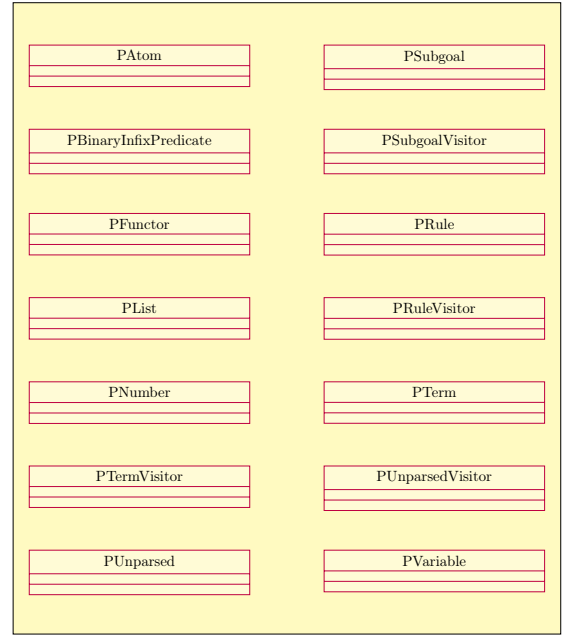

The main base elements are "\texttt{PTerm}", "\texttt{PRule}" and "\texttt{PSubgoal}".
All remaining P-classes correspond to Prolog, except for class "\texttt{Unparsed}", which is a partially calculated Prolog-term.
Partially means it contains gaps, filled in later (cf. gaps-concept for logical terms from \cite{haberland08-1}).\\

\texttt{Package internal.checkers:}

Checks are based on normalisation and subtraction rules, e.g. on APs (\texttt{APCom\-plete\-ness\-Checker}, \texttt{Heap\-Com\-plete\-ness\-Chec\-ker}, \texttt{Pro\-log\-APs\-Chec\-ker}) (see fig.\ref{PackageInternalCheckers}).
Syntactic analyses are performed for some input program and verification conditions (e.g. by \texttt{ANTLR\-Gram\-mar\-Checker}).
All checks need to implement \texttt{Checker\-In\-ter\-face}.\\

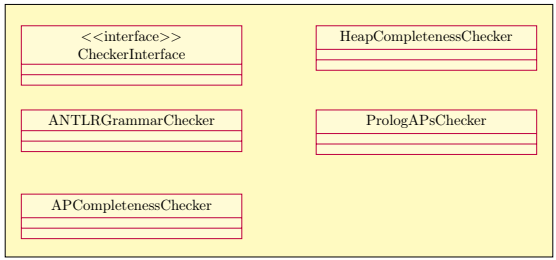
\begin{figure}[h]
\begin{center}
\scalebox{0.8}{
\begin{tikzpicture}[scale=0.7, every node/.style={transform shape}]
    \draw[fill=yellow!30] (-3,0.5) rectangle (10,-5.5);
    \begin{interface}{CheckerInterface}{0,0}
    \end{interface}
    \begin{class}{ANTLRGrammarChecker}{0,-2}
    \end{class}
    \begin{class}{APCompletenessChecker}{0,-4}
    \end{class}
    \begin{class}{HeapCompletenessChecker}{7,0}
    \end{class}
    \begin{class}{PrologAPsChecker}{7,-2}
    \end{class}
\end{tikzpicture}}
\end{center}
 \caption{Packages internal.checkers}
 \label{PackageInternalCheckers}
\end{figure}

\texttt{Package internal.aps:}

The package "\texttt{aps}" (see fig.\ref{PackageInternalAPs}) contains \texttt{ANTLRBuil\-der}, \texttt{ANTLRLauncher}, \texttt{Grammar\-Buil\-der}, and all other AP-processing classes towards a formal attributed grammar (in a sense Knuth and Wagner introduced).
\texttt{ANTLBuilder} is a specialisation of the class \texttt{GrammarBuilder}.
ANTLR is supposed to represent only as an example a whole class of syntax analysers matching the conditions previously mentioned.
Any other CC and hand-written analysers would be acceptable as long as they implement the standard interface needed.

In analogy to what was said, \texttt{ANTLRGrammar} is a specialisation of class \texttt{Formal\-Grammar}.
\texttt{ANTLRBuilder} is a concrete example for a CC meeting the interface needed.
Therefore it can be considered as a promotion for different analysers to be following.

\texttt{ANTLRLauncher} is a driver for the ANTLR-generator (for a fully-fledged ANT\-LR processor).

\begin{figure}[h]
\begin{center}
\scalebox{0.8}{
\begin{tikzpicture}[scale=0.7, every node/.style={transform shape}]
    \draw[fill=yellow!30] (-3,0.5) rectangle (9,-13.5);
    \begin{class}{ANTLRBuilder}{0,0}
    \end{class}
    \begin{class}{ANTLRGrammar}{0,-2}
    \end{class}
    \begin{class}{ANTLRLauncher}{0,-4}
    \end{class}
    \begin{class}{FG Action}{6,0}
    \end{class}
    \begin{class}{FG AP Rule}{6,-2}
    \end{class}
    \begin{class}{FG Att}{6,-4}
    \end{class}
    \begin{class}{FG NT}{6,-6}
    \end{class}
    \begin{class}{FG Sentence}{0,-8}
    \end{class}
    \begin{class}{FG T pointsto}{0,-10}
    \end{class}
    \begin{class}{FG T union NT}{0,-12}
    \end{class}
    \begin{class}{FG T}{0,-6}
    \end{class}
    \begin{class}{FormalGrammar}{6,-8}
    \end{class}
    \begin{class}{GrammarBuilder}{6,-10}
    \end{class}
    \begin{class}{TerminalConstants}{6,-12}
    \end{class}
\end{tikzpicture}}
\end{center}
 \caption{Packages internal.aps}
 \label{PackageInternalAPs}
\end{figure}
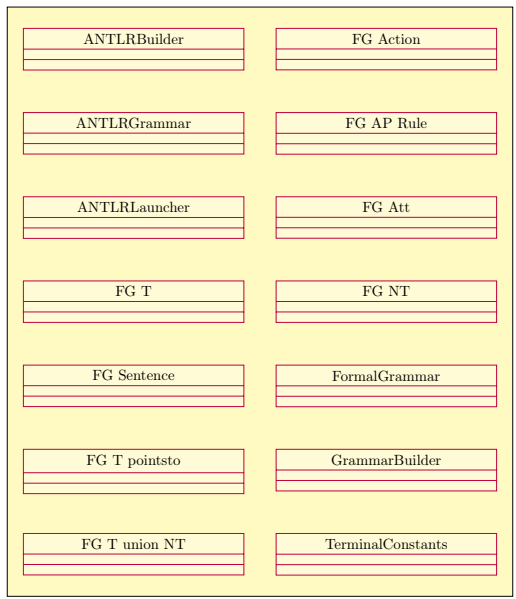

\subsubsection*{\underline{Syntax analysers}}
Class \texttt{Main} (see fig.\ref{PackagesMain} related to \texttt{Parser} from fig.\ref{PackagesParser} and \texttt{SyntaxAnalyzer} from fig.\ref{PackagesSyntaxAnalyzer}) controls three instances of syntax analysers according to the grammars: (1) ClassBased.g4, (2) IRProlog.g4, and (3) AbstractPredicates.g4.\\
 
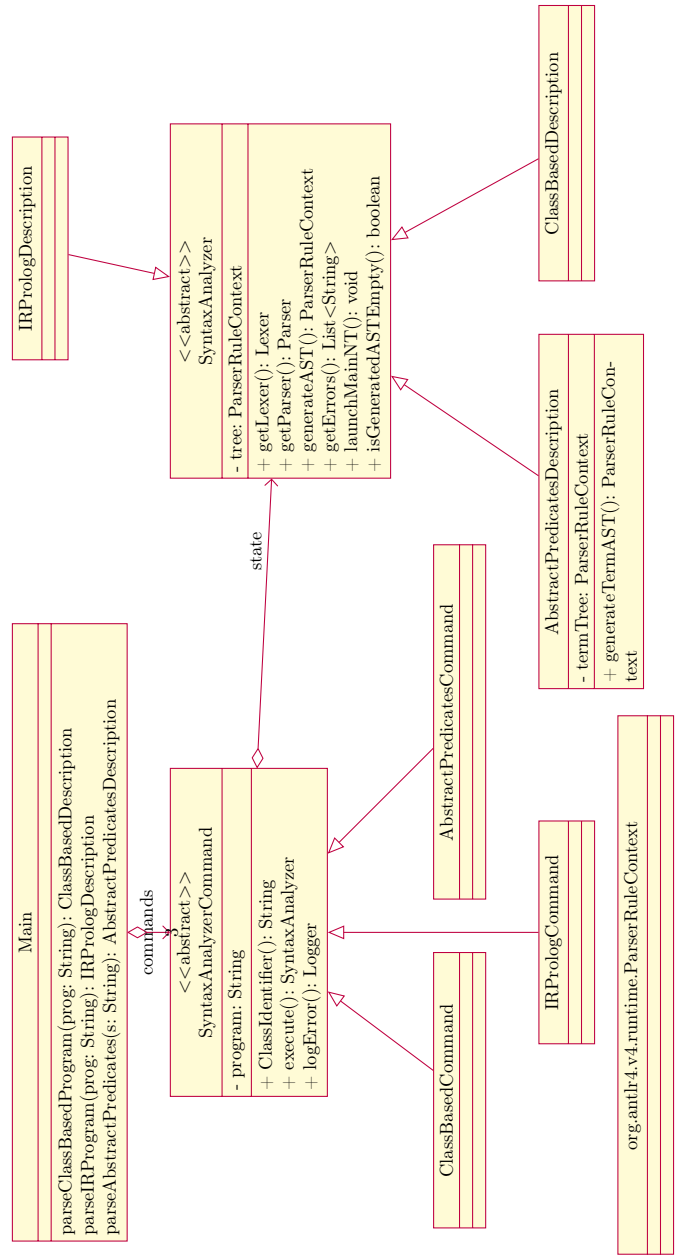
\begin{figure}[h]
\begin{center}
\begin{turn}{90}
\parbox[t]{17cm}{
\begin{minipage}{\linewidth}
\begin{tikzpicture}[scale=0.7, every node/.style={transform shape}]
 \begin{class}[text width=11.5cm]{Main}{0,-2}
   \operation{parseClassBasedProgram(prog: String): ClassBasedDescription}
   \operation{parseIRProgram(prog: String): IRPrologDescription}
   \operation{parseAbstractPredicates(s: String): AbstractPredicatesDescription}
 \end{class}
 \begin{abstractclass}[text width=6cm]{SyntaxAnalyzerCommand}{0,-5}
  \attribute{- program: String}
  \operation{+ ClassIdentifier(): String}
  \operation{+ execute(): SyntaxAnalyzer}
  \operation{+ logError(): Logger}
 \end{abstractclass}
 \begin{abstractclass}[text width=6.5cm]{SyntaxAnalyzer}{12,-5}
  \attribute{- tree: ParserRuleContext}
  \operation{+ getLexer(): Lexer}
  \operation{+ getParser(): Parser}
  \operation{+ generateAST(): ParserRuleContext}
  \operation{+ getErrors(): List$<$String$>$}
  \operation{+ launchMainNT(): void}
  \operation{+ isGeneratedASTEmpty(): boolean}
 \end{abstractclass}
 \begin{class}[text width=5cm]{ClassBasedCommand}{-3,-10}
  \inherit{SyntaxAnalyzerCommand}
 \end{class}
 \begin{class}[text width=4cm]{IRPrologCommand}{0,-12}
  \inherit{SyntaxAnalyzerCommand}
 \end{class}
 \begin{class}[text width=6.5cm]{AbstractPredicatesCommand}{4,-10}
  \inherit{SyntaxAnalyzerCommand}
 \end{class}
 \begin{class}[text width=5cm]{ClassBasedDescription}{15,-12}
  \inherit{SyntaxAnalyzer}
 \end{class}
 \begin{class}[text width=6.5cm]{AbstractPredicatesDescription}{8,-12}
  \inherit{SyntaxAnalyzer}
  \attribute{- termTree: ParserRuleContext}
  \operation{+ generateTermAST(): ParserRuleContext}
 \end{class}
 \begin{class}[text width=4cm]{IRPrologDescription}{13,-2}
  \inherit{SyntaxAnalyzer}
 \end{class}
 \begin{class}[text width=10cm]{org.antlr4.v4.runtime.ParserRuleContext}{-1,-13.5}
 \end{class}
 \aggregation{Main}{commands}{3}{SyntaxAnalyzerCommand}
 \aggregation{SyntaxAnalyzerCommand}{state}{}{SyntaxAnalyzer}
\end{tikzpicture}
\end{minipage}}
\end{turn}
\end{center}
 \caption{Class Main}
 \label{PackagesMain}
\end{figure}

\begin{figure}[h]
\begin{center}
\begin{turn}{90}
\parbox[t]{16cm}{
\begin{minipage}{\linewidth}
\begin{tikzpicture}[scale=0.7, every node/.style={transform shape}]
 \begin{interface}[text width=5cm]{SynthesisInterface}{7,0}
 \end{interface}
 \begin{abstractclass}[text width=5cm]{SyntaxAnalyzer}{0,0}
  \implement{SynthesisInterface}
 \end{abstractclass}
 \begin{abstractclass}[text width=5cm]{Lexer}{-10,-5}
 \end{abstractclass}
 \begin{abstractclass}[text width=5cm]{Parser}{0,-5}
 \end{abstractclass}
 \begin{interface}[text width=5cm]{ParseTreeListener}{7,-5}
 \end{interface}
 \begin{class}[text width=5cm]{ClassBasedLexer}{-10,-12}
   \inherit{Lexer}
 \end{class}
 \begin{class}[text width=5cm]{IRPrologLexer}{-4,-12}
   \inherit{Lexer}
 \end{class}
 \begin{class}[text width=5cm]{AbstractPredicatesLexer}{-4,-8}
   \inherit{Lexer}
 \end{class}
 \begin{class}[text width=5cm]{ClassBasedParser}{0,-10}
   \inherit{Parser}
 \end{class}
 \begin{class}[text width=5cm]{IRPrologParser}{6,-12}
   \inherit{Parser}
 \end{class}
 \begin{class}[text width=5cm]{AbstractPredicatesParser}{6,-8}
   \inherit{Parser}
 \end{class}
 \aggregation{SyntaxAnalyzer}{parser}{1}{Parser}
 \aggregation{Parser}{lexer}{}{Lexer}
 \draw[umlcd style dashed line,->] (Parser) |- (ParseTreeListener);
\end{tikzpicture}
\end{minipage}}
\end{turn}
 \end{center}
 \caption{Class Parser}
 \label{PackagesParser}
\end{figure}
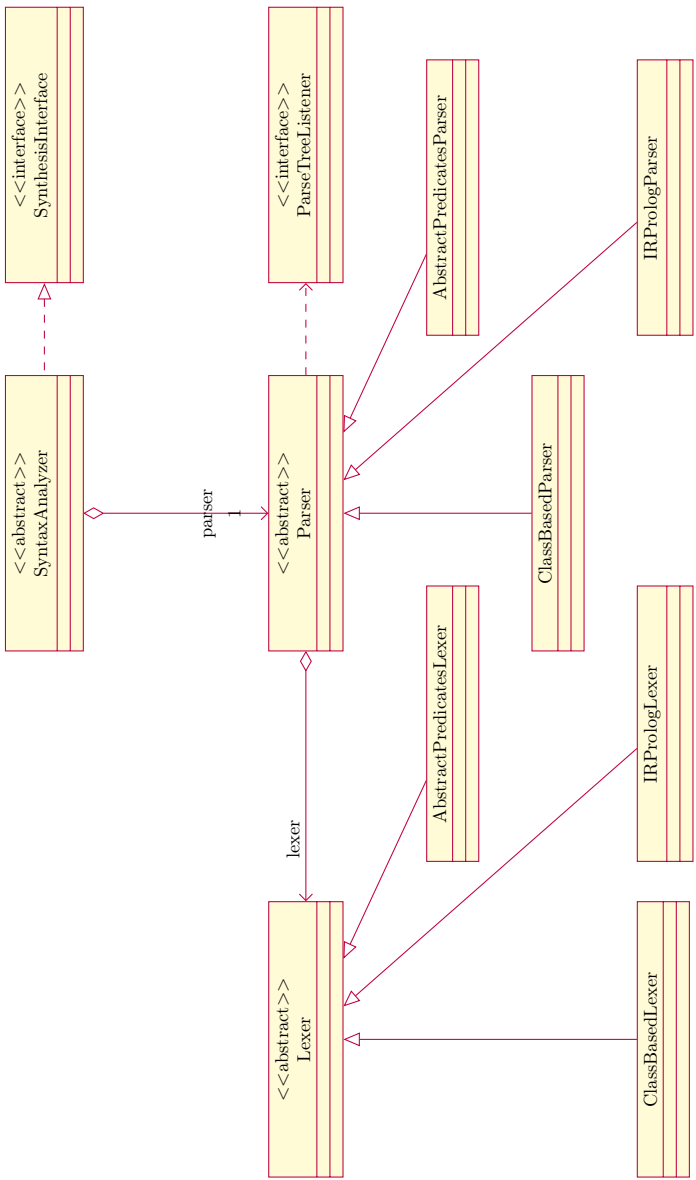

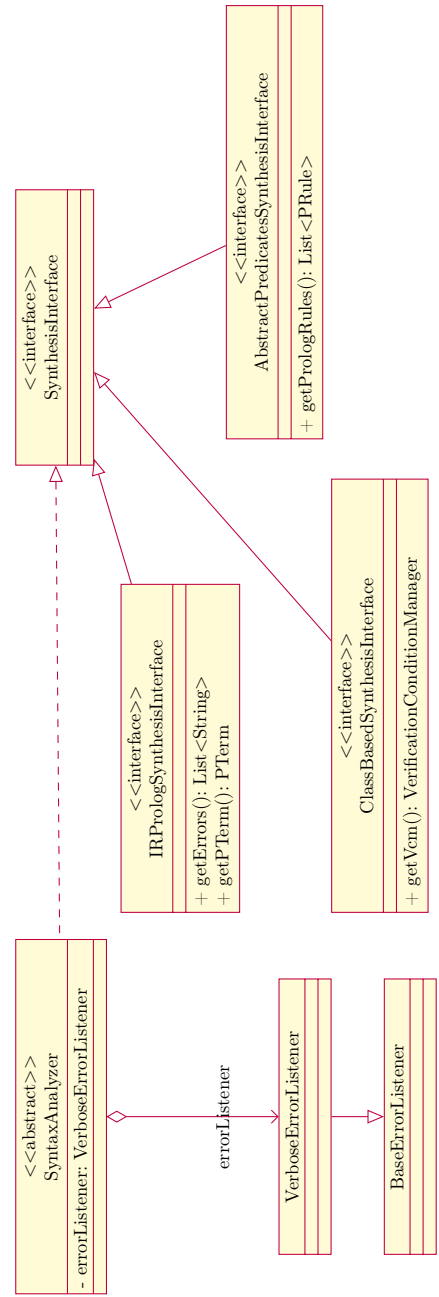
\begin{figure}
\begin{center}
\begin{turn}{90}
\parbox[t]{17.5cm}{
\begin{minipage}{\linewidth}
\begin{tikzpicture}[scale=0.7, every node/.style={transform shape}]
 \begin{interface}[text width=5cm]{SynthesisInterface}{15,0}
 \end{interface}
 \begin{interface}[text width=6cm]{IRPrologSynthesisInterface}{7,-2}
   \inherit{SynthesisInterface}
   \operation{+ getErrors(): List$<$String$>$}
   \operation{+ getPTerm(): PTerm}
 \end{interface}
 \begin{interface}[text width=8cm]{ClassBasedSynthesisInterface}{8,-6}
   \inherit{SynthesisInterface}
   \operation{+ getVcm(): VerificationConditionManager}
 \end{interface}
 \begin{interface}[text width=8cm]{AbstractPredicatesSynthesisInterface}{17,-4}
   \inherit{SynthesisInterface}
   \operation{+ getPrologRules(): List$<$PRule$>$}
 \end{interface}
 \begin{abstractclass}[text width=6.5cm]{SyntaxAnalyzer}{0,0}
  \attribute{- errorListener: VerboseErrorListener}
  \implement{SynthesisInterface}
 \end{abstractclass}
 \begin{class}[text width=5cm]{BaseErrorListener}{0,-7}
 \end{class}
 \begin{class}[text width=5cm]{VerboseErrorListener}{0,-5}
  \inherit{BaseErrorListener}
 \end{class}
 \aggregation{SyntaxAnalyzer}{errorListener}{}{VerboseErrorListener}
\end{tikzpicture}
\end{minipage}}
\end{turn}
\end{center}
 \caption{Class SyntaxAnalyzer}
 \label{PackagesSyntaxAnalyzer}
\end{figure}

\texttt{Package frontend:} see fig.\ref{PackageFrontend1}.

\begin{figure}
\begin{center}
\begin{turn}{90}
\parbox[t]{14cm}{
\begin{minipage}{\linewidth}
\begin{tikzpicture}[scale=0.7, every node/.style={transform shape}]
 \begin{abstractclass}[text width = 5cm]{SyntaxAnalyzerCommand}{0,0}
   \operation{execute(): SyntaxAnalyzer}
 \end{abstractclass}
 \begin{class}[text width = 5cm]{ClassBasedCommand}{-7,-5}
   \inherit{SyntaxAnalyzerCommand}
   \operation{execute(): SyntaxAnalyzer}
 \end{class}
 \begin{class}[text width = 5cm]{IRPrologCommand}{0,-5}
   \inherit{SyntaxAnalyzerCommand}
   \operation{execute(): SyntaxAnalyzer}
 \end{class}
\begin{class}[text width = 6cm]{AbstractPredicatesCommand}{7,-5}
   \inherit{SyntaxAnalyzerCommand}
   \operation{execute(): SyntaxAnalyzer}
 \end{class}
\end{tikzpicture}
\end{minipage}}
\end{turn}
\end{center}
 \caption{Package frontend}
 \label{PackageFrontend1}
\end{figure}

Remark:
\texttt{IRPrologCommand.execute()} returns a \texttt{IRPrologDescription object}.
Given the design pattern roles:

\begin{verbatim}
ADAPTER::(CLASS) ADAPTER:
  SyntaxAnalyzerCommand

ADAPTER::ADAPTEE:
  ClassBasedCommand, IRPrologCommand,
             AbstractPredicatesCommand
\end{verbatim}

\texttt{Package parsers.incoming:}

The following structure is applicable, not just to \texttt{parsers.incoming} (see fig.\ref{PackagesParsersIncoming}) but also to \texttt{par\-sers.in\-ter\-me\-di\-ate} and \texttt{par\-sers.pro\-log}.
(\textit{reduce}) classes with \texttt{Listener}-suffixes were introduced to modify syntax analysis when entering and leaving an NT's rule.\\

\begin{figure}[h]
\begin{center}
\begin{turn}{90}
\parbox[t]{11cm}{
\begin{minipage}{\linewidth}
\begin{tabular}{c}
\begin{tikzpicture}[scale=0.7, every node/.style={transform shape}]
 \begin{interface}[text width = 6cm]{ParseTreeListener}{0,5}
   \operation{+ visitTerminal(): void}
   \operation{+ visitErrorNode(): void}
   \operation{+ entryEveryRule(): void}
   \operation{+ exitEveryRule(): void}
 \end{interface}
 \begin{interface}[text width = 6cm]{ClassBasedListener}{0,0}
   \implement{ParseTreeListener}
 \end{interface}
 \begin{class}[text width = 6cm]{ClassBasedParser}{8,0}
 \end{class}
 \begin{class}[text width = 8cm]{ClassBasedBaseListener}{0,-3}
   \implement{ClassBasedListener}
   \operation{+ enterProg(ctx: ProgContext): void}
   \operation{+ exitProg(ctx: ProgContext): void}
   \operation{+ enterExpression(ctx: ProgContext): void}
   \operation{+ exitExpression(ctx: ProgContext): void}
 \end{class}
 %
 \aggregation{ClassBasedParser}{}{}{ClassBasedListener}
\end{tikzpicture}
\end{tabular}
\end{minipage}}
\end{turn}
\end{center}
 \caption{Package parsers.incoming}
 \label{PackagesParsersIncoming}
\end{figure}


\texttt{Package parsers:} see fig.\ref{PackagesParsers}.

\begin{figure}[h]
\begin{center}
\begin{turn}{90}
\parbox[t]{17cm}{
\begin{minipage}{\linewidth}
\begin{tikzpicture}[scale=0.7, every node/.style={transform shape}]
 \begin{class}[text width = 12cm]{MetaType}{-13,0}
   \attribute{+ \underline{UNDEFINED_TYPE_ID: int = -1}}
   \attribute{+ \underline{INT_TYPE_ID: int = 0}}
   \attribute{+ \underline{VOID_TYPE_ID: int = 1}}
   \attribute{+ TypeID: int}
   \attribute{+ TypeName: String}
   \attribute{+ CoercionType: CoercionType}
   \attribute{+ PSignature: List$<$MetaType$>$}
   \attribute{+ CFields: HashMap$<$String,MetaType$>$}
   \attribute{+ CMethods: HashMap$<$String,LinkedList$<$MetaType$>>$}
   \operation{- MetaType()}
   \operation{- \underline{createIntMetaType(): MetaType}}
   \operation{- \underline{createVoidMetaType(): MetaType}}
   \operation{+ equals(type: MetaType): boolean}
   \operation{+ \underline{equals(left: List$<$MetaType$>$, right: List$<$MetaType$>$): boolean}}
   \operation{+ allButLast(l: List$<$MetaType$>$): List$<$MetaType$>$}
   \operation{+ toString(): String}
   \attribute{+ \underline{INTTYPE: MetaType}}
   \attribute{+ \underline{VOIDTYPE: MetaType}}
 \end{class}
 \begin{class}[text width = 10cm]{MetaTypeManager}{0,0}
   \attribute{+ Types_Table: LinkedList$<$MetaType$>$}
   \attribute{+ Errors: LinkedList$<$String$>$}
   \operation{+ existsVariableType(typeName: String): MetaType}
   \operation{+ existsClassType(typeName: String): MetaType}
   \operation{+ existsProcedureSignature(typeName: String): MetaType}
   \operation{+ lookup(newType: MetaType): MetaType}
   \operation{+ mangle(className: String, idName: String): String}
   \operation{+ demangleClassName(mangledName): String}
   \operation{+ demangleProcedureName(mangledName): String}
   \operation{+ checkUniqueMain(): boolean}
 \end{class}
 \begin{class}{CoercionType}{0,-9}
   \attribute{+ \underline{INT: int = 0}}
   \attribute{+ \underline{VOID: int = 1}}
   \attribute{+ \underline{PROCEDURE: int = 2}}
   \attribute{+ \underline{CLASS: int = 3}}
 \end{class}
 \composition{MetaTypeManager}{Types\_Table}{1}{MetaType}
 \aggregation{MetaType}{CoercionType}{1}{CoercionType}
 \draw[umlcd style dashed line,->] (MetaTypeManager.south) -- (CoercionType.north);
\end{tikzpicture}
\end{minipage}}
\end{turn}
\end{center}
 \caption{Package parsers}
 \label{PackagesParsers}
\end{figure}

The class \texttt{MetaType} represents the built-in and user-defined types found before type checking in an input program.
Each type, though, is identified by some field, "\texttt{typeID}".
However, during type construction, "\texttt{typeID}" is not assigned initially (just the name, but without a full signature, since recursive types are allowed).
Statically defined built-in types are "\texttt{INTTYPE}" and "\texttt{VOIDTYPE}".

Elements of package \texttt{parsers.intermediate} act as DIRECTOR for the class "\texttt{Me\-ta\-Type\-Ma\-na\-ger}".\\[0.7cm]

\texttt{Package frontend} see fig.\ref{PackageFrontend2}.

\begin{figure}[h]
\begin{center}
\begin{tikzpicture}[scale=0.6, every node/.style={transform shape}]
 \begin{abstractclass}[text width = 6cm]{SyntaxAnalyzer}{0,0}
   \attribute{tree: ParserContextTree}
   \operation{generateAST(): ParserRuleContext}
   \operation{launchMainNT()}
   \operation{prelaunch()}
   \operation{postlaunch()}
 \end{abstractclass}
 \begin{class}[text width=7cm]{ClassBasedDescription}{-4,-5.5}
   \inherit{SyntaxAnalyzer}
   \operation{generateAST(): ParserRuleContext}
   \operation{getVcm(): VerificationConditionManager}
   \operation{getResult(): String}
 \end{class}
 \begin{class}[text width=6cm]{IRPrologDescription}{0,-9}
   \inherit{SyntaxAnalyzer}
   \operation{generateAST(): ParserRuleContext}
   \operation{getPTerm(): PTerm}
   \operation{getErrors(): List$<$String$>$}
 \end{class}
 \begin{class}[text width=6cm]{AbstractPredicatesDescription}{4,-5.5}
   \inherit{SyntaxAnalyzer}
   \operation{generateAST(): ParserRuleContext}
   \operation{getPrologRules(): List$<$PRule$>$}
 \end{class}
\end{tikzpicture}
\end{center}
 \caption{Package frontend}
 \label{PackageFrontend2}
\end{figure}

Remarks:

\begin{itemize}

\item %
\begin{verbatim}
STRATEGY::TEMPLATE:
  SyntaxAnalyzer.launchMainNT()

STRATEGY::CONCRETE STRATEGY:
  ClassBasedDescription.generateAST()
  IRPrologDescription.generateAST()
  AbstractPredicates.generateAST() 
\end{verbatim}

 \item The method \texttt{SyntaxAnalyzer.launchMainNT} is:
\begin{verbatim}
prelaunch();
this.tree=generateAST();
postlaunch();
\end{verbatim}

 \item In class "\texttt{IRPrologDescription}", \texttt{get}-methods mean access to synthesised attributes of the corresponding formal grammar.

 \item Class "\texttt{ParserRuleContext}" contains all that is needed by inherited and synthesised attributes.
\end{itemize}

\subsubsection*{\underline{Prolog Terms}}

\texttt{Package internal.core} see fig.\ref{PackageInternalCore}.

This package contains all IR with terms, subgoals and rules.

\begin{figure}[h]
\begin{center}
\begin{turn}{90}
\parbox[t]{13cm}{
\begin{minipage}{\linewidth}
\begin{tikzpicture}[scale=0.5, every node/.style={transform shape}]
  \begin{abstractclass}[text width = 5.2cm]{PTerm}{0,-1}
    \attribute{children : List$<$PTerm$>$}
    \attribute{name: String}
    \attribute{visitors: List$<$PTermVisitor$>$}
    \operation{addVisitor(v: PTermVisitor)}
    \operation{acceptVisitors()}
    \operation{setName(s: String)}
    \operation{getName(): String}
    \operation{getChildren(): List$<$PTerm$>$}
    \operation{setChildren(l: List$<$PTerm$>$)}
  \end{abstractclass}
  \begin{class}[text width=4cm]{PList}{-5, -10}
    \inherit{PTerm}
    \operation{addChild(kid: PTerm)}
    \operation{toString(): String}
  \end{class}
  \begin{class}[text width=4cm]{PFunctor}{5, -10}
    \inherit{PTerm}
    \operation{setName(s: String)}
    \operation{addChild(kid: PTerm)}
    \operation{toString(): String}
  \end{class}
  \begin{class}[text width=3cm]{PVariable}{11, -3}
    \inherit{PTerm}
  \end{class}
  \begin{class}[text width=6cm]{PUnparsedVisitor}{11, -5}
    \operation{+ visitUnparsed(u: PUnparsed)}
  \end{class}
  \begin{class}[text width=5cm]{PNumber}{0, -8}
    \inherit{PTerm}
    \operation{getValue(): String}
    \operation{setValue(s: String)}
    \operation{toString(): String}
    \operation{isInteger(): boolean}
    \operation{getIntegerValue(): int}
    \operation{getFloatingValue(): float}
    \operation{isCoercionSuccess(): boolean}
  \end{class}
  \begin{class}[text width=4cm]{PUnparsed}{11, -8}
    \inherit{PTerm}
    \operation{toString(): String}
  \end{class}
  \begin{class}[text width=5cm]{PAtom}{11, -10}
    \inherit{PTerm}
    \operation{setName(s: String)}
    \operation{getChildren(): List$<$PTerm$>$}
    \operation{setChildren(kids: List$<$PTerm$>$)}
  \end{class}
  \begin{interface}{PSubgoalVisitor}{-8,5}
   \operation{visitSubgoal(s: PSubgoal)}
  \end{interface}
  \begin{class}[text width=6cm]{PSubgoal}{-8,0}
    \implement{PSubgoalVisitor}
    \attribute{relationName: String}
    \attribute{terms: List$<$PTerm$>$}
    \operation{toString(): String}
    \operation{addVisitor(v: PSubgoalVisitor)}
    \operation{acceptVisitors()}
  \end{class}
  \begin{class}[text width=5cm]{PBinaryInfixPredicate}{-8,-7}
     \inherit{PSubgoal}
     \operation{addTerm(t: PTerm)}
     \operation{toString(): String}
  \end{class}
  \begin{class}[text width=6cm]{PRule}{0,5}
    \attribute{name: String}
    \attribute{arguments: List$<$PTerm$>$}
    \attribute{rhs: List$<$Subgoal$>$}
    \attribute{visitors: List$<$PRuleVisitor$>$}
    \operation{addRhs(rhs: PSubgoal)}
    \operation{addArgument(newArg: PTerm)}
    \operation{toString(): String}
    \operation{addVisitor(v: PRuleVisitor)}
    \operation{acceptVisitors()}
  \end{class}
  \begin{interface}{PRuleVisitor}{11,5}
    \operation{visitRule(rule: PRule)}
  \end{interface}
  \begin{interface}{PTermVisitor}{11,0}
    \operation{visitTerm(term: PTerm)}
  \end{interface}
  
  \aggregation{PSubgoal}{terms}{0..*}{PTerm}
  \aggregation{PTerm}{visitors}{0..*}{PTermVisitor}
  \aggregation{PRule}{visitors}{0..*}{PRuleVisitor}
  \aggregation{PRule}{arguments}{0..*}{PTerm}
  \aggregation{PUnparsed}{visitors}{0..*}{PUnparsedVisitor}
\end{tikzpicture}
\end{minipage}}
\end{turn}
\end{center}
 \caption{Package internal.core}
 \label{PackageInternalCore}
\end{figure}
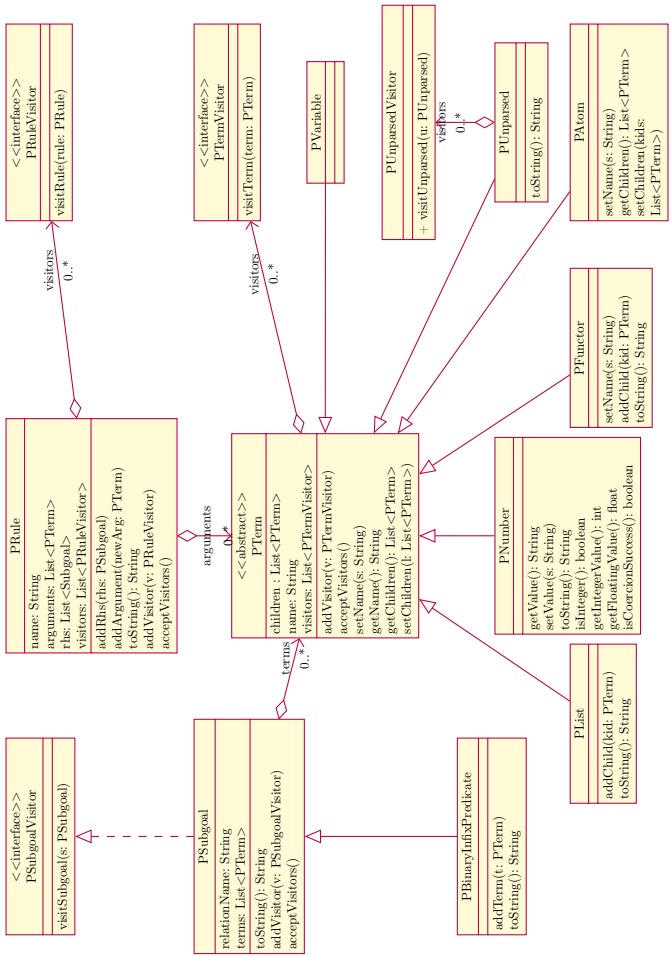

\begin{verbatim}
COMPOSITE::COMPONENT:
 PTerm
COMPOSITE::COMPOSITE:
 PTerm
COMPOSITE::CONCRETE COMPOSITE:
 PAtom, PVariable, PFunctor, PNumber, PList
\end{verbatim}

\texttt{PAtom.getChildren} returns \texttt{null}.
\texttt{PAtom.setChildren} ignores the set and, if possible, issues a warning to the console.
\texttt{PSubgoal} represents a subgoal, which may further contain \texttt{PTerm} as content (non-subgoals).


\begin{figure}[h]
\begin{tabular}{l}
 \begin{minipage}[c]{4cm}
\begin{tikzpicture}[scale=0.6, every node/.style={transform shape}]
  \begin{abstractclass}[text width = 5.2cm]{SyntaxAnalyzerCommand}{10,3}
  \end{abstractclass}
  \begin{abstractclass}[text width = 5.2cm]{SyntaxAnalyzer}{10,0}
  \end{abstractclass}
  \begin{abstractclass}[text width = 5.2cm]{IRPrologDescription}{7,-3}
    \inherit{SyntaxAnalyzer}
  \end{abstractclass}
  \begin{abstractclass}[text width = 6cm]{AbstractPredicatesDescription}{15,-3}
    \inherit{SyntaxAnalyzer}
  \end{abstractclass}
  \aggregation{SyntaxAnalyzerCommand}{}{}{SyntaxAnalyzer}

  \draw (4,-6) -- (18,-6);

  \begin{interface}{PTermVisitor}{7,-12}
  \end{interface}
  \begin{abstractclass}[text width=4cm]{PTerm}{7,-7}
  \end{abstractclass}
  \begin{abstractclass}[text width=4cm]{PRule}{15,-7}
  \end{abstractclass}
  \begin{interface}{PRuleVisitor}{15,-12}
  \end{interface}
  \aggregation{PTerm}{visitors}{}{PTermVisitor}
  \aggregation{PRule}{visitors}{}{PRuleVisitor}

  \draw (4,-15) -- (18,-15);

  \begin{abstractclass}[text width=4cm]{FormalGrammar}{7,-19}
  \end{abstractclass}
  \begin{class}[text width = 6cm]{GrammarBuilder}{7,-16}
     \implement{PTermVisitor}
     \operation{+ loadRules(List$<$PRule$>$ prules)}
  \end{class}
  \begin{class}[text width = 4cm]{Main}{14,-18}
  \end{class}

  \aggregation{GrammarBuilder}{grammar}{1}{FormalGrammar}
  \aggregation{Main}{}{}{GrammarBuilder}
  \aggregation{Main}{}{}{FormalGrammar}
  \draw[umlcd style dashed line,-{Latex[length=2mm]}] (GrammarBuilder.north) -- (PRuleVisitor.south); 
  \draw[umlcd style dashed line,->] (IRPrologDescription.south) -- (PTerm.north);
  \draw[umlcd style dashed line,->] (AbstractPredicatesDescription.south) -- (PRule.north);
  \draw[umlcd style dashed line,->] (Main.east) --node[above,sloped,black]{} ++(2,0) ++(0,23) ++(-6,0) -| (SyntaxAnalyzerCommand.north);
\end{tikzpicture}
 \end{minipage}
\end{tabular}
 \caption{Package layers frontend, internal.core, internal.aps}
 \label{PackagesFrontendInternalcoreInternalaps}
\end{figure}
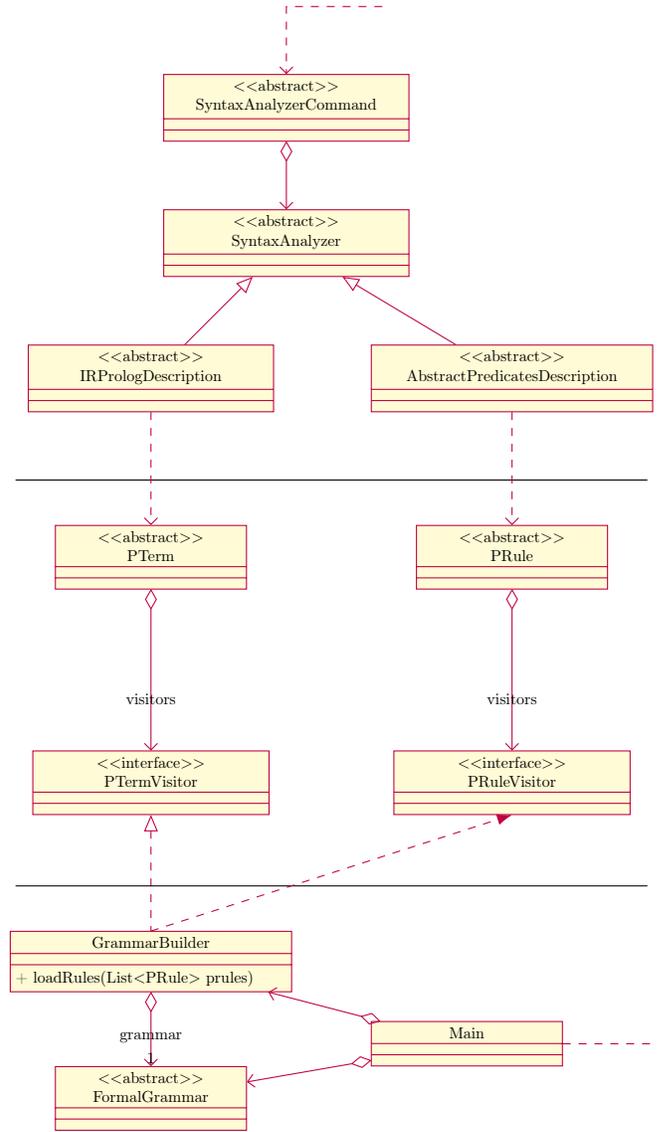

Remarks:

\begin{itemize}
 \item In package "\texttt{internal.core}" (see fig.\ref{PackagesFrontendInternalcoreInternalaps}), there is no need for some "\texttt{PSubgoal}", since it is already included in "\texttt{PRule}", and since subgoals are part of queries (which are not part of the grammar generated by APs).

 \item The method "\texttt{GrammarBuilder.loadRules}" reads all "\texttt{PRule}" and puts them into the formal grammar, where attributes remain the way they are in Prolog.
       Class "\texttt{GrammarBuilder}" implements both interfaces "\texttt{PTermVisitor}" and "\texttt{PRuleVisitor}".
       This distinction is required for rule interpretation and because one starting NT is required.

 \item "\texttt{PTerm}" and "\texttt{PRule}"-instances are generated in dependency of instance descriptions (Description-objects).
\end{itemize}


\subsubsection*{\underline{Abstract Predicates}}

APs are needed for representation and logical processes.
Used analysers all have one starting NT symbol, although an analyser must have provided the possibility to switch starting NTs.
For demonstration purposes, ANTL-grammars are considered.\\

\texttt{Package internal.aps} see fig.\ref{PackagesInternalAPs}.

\begin{figure}[h]
\begin{center}
\begin{turn}{90}
\parbox[t]{13cm}{
\begin{minipage}{\linewidth}
\begin{tikzpicture}[scale=0.5, every node/.style={transform shape}]
 \begin{abstractclass}[text width = 6cm]{GrammarBuilder}{-10,0}
   \attribute{- grammar: FormalGrammar}
   \operation{+ loadRules(rules: List$<$PRule$>$)}
 \end{abstractclass}
 \begin{class}[text width = 6cm]{ANTLRBuilder}{0,0}
   \operation{+ reset()}
   \operation{+ build()}
   \operation{- buildGrammarName(): String}
   \operation{- buildGrammarOptions(): String}
   \operation{- buildGrammarHeader(): String}
   \operation{- buildGrammarTerminals(): String}
   \inherit{GrammarBuilder}
 \end{class}
 \begin{class}[text width = 4cm]{TerminalConstants}{8,-1}
   \attribute{\underline{UNIFY: int = 0}}
   \attribute{\underline{NONUNIFY: int = 1}}
   \attribute{\underline{LPAREN: int = 2}}
 \end{class}
 \begin{abstractclass}[text width = 5cm]{FG T unioin NT}{8,-7}
   \operation[0]{+ toString(): String}
 \end{abstractclass}
 \begin{abstractclass}[text width = 7.5cm]{FormalGrammar}{-10,-5}
   \attribute{- statNonterminal: FG_NT}
   \attribute{- nonterminals: HashMap$<$String,FG_NT$>$}
   \attribute{- terminals: HashMap$<$String,FG_T$>$}
   \attribute{- rules: List$<$FG_AP_Rule$>$}
   \attribute{- errors: List$<$String$>$}
 \end{abstractclass}
 \begin{class}[text width = 5cm]{FG NT}{0,-6}
   \attribute{- nonterminalName: String}
   \attribute{- attributes: List$<$PTerm$>$}
   \attribute{- compactNotation: boolean}
   \operation{+ toString(): String}
   \inherit{FG T unioin NT}
 \end{class}
 \begin{class}[text width = 5cm]{ANTLRGrammar}{-10,-12}
   \attribute{- grammarName: String}
   \attribute{- grammarOptions: String}
   \attribute{- grammarHeader: String}
   \attribute{- grammarText: String}
   \operation{+ toString(): String}
   \inherit{FormalGrammar}
 \end{class}
 \begin{class}[text width = 5cm]{FG T}{0,-10}
   \attribute{- characterName: String}
   \attribute{- definition: String}
   \attribute{- hasDefinition: boolean}
   \operation{+ toString(): String}
   \inherit{FG T unioin NT}
 \end{class}
 \begin{class}[text width = 10cm]{FG AP Rule}{-2,-15}
   \attribute{- ruleName: String}
   \attribute{- alternatives: HashMap$<$List$<$FG_Att$>$, FG_Sentence$>$}
 \end{class}
 \begin{class}[text width = 4cm]{FG Att}{5.7,-12}
   \attribute{- varName: String}
 \end{class}
 \begin{class}[text width = 6cm]{FG Sentence}{8.5,-15}
   \attribute{sentence: List$<$FG_T_union_NT$>$}
 \end{class}

 \aggregation{GrammarBuilder}{grammar}{1}{FormalGrammar}
 \aggregation{FormalGrammar}{nonterminals}{0..*}{FG NT}
 \aggregation{FormalGrammar}{terminals}{0..*}{FG T}
 \aggregation{FormalGrammar}{rules}{0..*}{FG AP Rule}
 \aggregation{FG AP Rule}{alternatives}{0..*}{FG Att}
 \aggregation{FG Sentence}{sentence}{0..*}{FG T unioin NT}
 \aggregation{FG AP Rule}{alternatives}{0..*}{FG Sentence}
 \draw[umlcd style dashed line,->] (ANTLRBuilder.east) -- (TerminalConstants.west);
\end{tikzpicture}
\end{minipage}}
\end{turn}
\end{center}
 \caption{Package internal.aps}
 \label{PackagesInternalAPs}
\end{figure}

Remarks: All these classes are involved in a formal grammar's notation (further implementations shall be considered using ANTLR).
"\texttt{PRule}" may be presented in the form "\texttt{FG\_AP\_Rule}", (and vice versa) in the following sense:

\begin{center}
\begin{tabular}{c}
   $\texttt{p1(X,pointsto(X,value1)):- ... .}$\\
   $\Leftrightarrow$\\
   $p1_{x1,x2} \rightarrow \{X_1 \approx X\}\{X_2 \approx pointsto(X,value1)\}, ... .$
\end{tabular}
\end{center}

Next, \texttt{FG\_NT} has attributes \texttt{List<PTerm>}, except \texttt{List<FG\_Att>}, because it refers only to an existing NT with a concrete sequence of \texttt{PTerm}, by the ordering of attribute definitions.


\subsubsection*{\underline{Heap Terms}}

Prolog terms are defined inductively over operators $\{ \circ, || \}$.
The term obtained is the result of previous syntax analysis.\\

\texttt{Package internal.ht}, see fig.\ref{PackagesInternalHT}.

Subgoals in APs may generate a connected (sub-)graph.
$a\circ a$ is currently done for the general case for simplicity purposes.
$a_1 \circ a_2 \circ a_3$ may directly be noted using $a_1,a_2,a_3$ (as part of some AP) or as APs $a_x :- a_1, a_2.$, wherein the following may act as some subgoal: $..., a_x, a_3, ...$.\\\\
Example 2) $a_1 \circ (a_2 || a_3)$ may be rewritten as AP $$a_x :- a_2,disj,a_3.$$, where there is some sequence as part of an AP $...,a_1,a_x,...$.

\begin{figure}[h]
\begin{center}
\begin{turn}{0}
\parbox[t]{9cm}{
\begin{minipage}{\linewidth}
\begin{tikzpicture}[scale=0.6, every node/.style={transform shape}]
 \begin{abstractclass}[text width = 8cm]{HeapTerm}{-2,0}
   \attribute{- LHS: PathExpression}
   \attribute{- RHS: Expression}
   \attribute{- children: List$<$HeapTerm$>$}
   \operation{+ simplify(): HeapTerm}
   \operation{+ foldl(initial: HeapTerm): HeapTerm}
   \operation{+ foldr(initial: HeapTerm): HeapTerm}
   \operation{+ conj(h: HeapTerm): HeapTerm}
   \operation{+ disj(h: HeapTerm): HeapTerm}
   \operation{+ equals(h: HeapTerm): Boolean}
   \operation{+ normalise()}
   \operation{+ substract(h: HeapTerm): HeapTerm}
   \operation{+ substract(hs: List$<$HeapTerm$>$): HeapTerm}
   \operation[0]{+ toProlog(): String}
   \operation[0]{+ fromProlog(term: HeapTerm)}
 \end{abstractclass}
 
 \begin{class}[text width = 5cm]{DotGenerator}{0,3}
 \end{class}
 
 \begin{interface}[text width = 5cm]{CheckerInterface}{6,-3}
   \operation{+ getNewErrors(): Collection$<$String$>$}
   \operation{+ check()}
 \end{interface}
 
 \begin{class}[text width = 5cm]{HeapSoundnessChecker}{6,0}
   \implement{CheckerInterface}
 \end{class}
 
 \begin{class}[text width = 5cm]{HeapCompletenessChecker}{6,-8}
   \implement{CheckerInterface}
 \end{class}
 
 \begin{class}[text width = 6cm]{AnyHeap}{-3,-12}
   \operation{+ toProlog(): String}
   \operation{+ fromProlog(term: HeapTerm)}
   \implement{HeapTerm}
 \end{class}

 \begin{class}[text width = 6cm]{Heaplet}{5,-12}
   \operation{+ toProlog(): String}
   \operation{+ fromProlog(term: HeapTerm)}
   \implement{HeapTerm}
 \end{class}

 \begin{class}[text width = 6cm]{HeapDisjunction}{-3,-16}
   \operation{+ toProlog(): String}
   \operation{+ fromProlog(term: HeapTerm)}
   \implement{HeapTerm}
 \end{class}

 \begin{class}[text width = 6cm]{HeapConjunction}{5,-16}
   \operation{+ toProlog(): String}
   \operation{+ fromProlog(term: HeapTerm)}
   \implement{HeapTerm}
 \end{class}

 \aggregation{DotGenerator}{}{1}{HeapTerm}
 \aggregation{HeapTerm}{}{1}{HeapSoundnessChecker}
 \aggregation{HeapTerm}{}{1}{HeapCompletenessChecker}
\end{tikzpicture}
\end{minipage}}
\end{turn}
\end{center}
 \caption{Package internal.ht}
 \label{PackagesInternalHT}
\end{figure}

Remarks:
\begin{itemize}
 \item \texttt{HeapTerm} is an internal IR for heaps based on SL.
       Heap terms over $\{\circ, ||\}$ may also contain partial specifications. 
       A heap term is build up stepwise during data analysis of APs.

 \item \texttt{HeapTerm.simplify} refers to \texttt{Built\-in\-Li\-bra\-ry\-.simp\-li\-fy_1()}.
       Heap term simplification using overloaded built-in Prolog predicates\\
       \texttt{simplify\_2}.

 \item A heap term represented by heap term may also be described partially using Prolog's anonymous placeholding operand ``\texttt{\_}''.
       The normalisation of given conjuncts is subject to lexicographical sorting of all left-hand sides on simple heaps.

 \item \texttt{DotGenerator} generates \index{DOT} DOT vector graphics as a single string expression that optionally may be serialised to an output file.
 
 \item Soundness proof is considered, whether (i) non-repetitiveness is obeyed, (ii) a heap is connected.\\
\end{itemize}

\texttt{Package prolog} (see fig.\ref{PackagesProlog}).

The library \texttt{BuiltinLibrary} contains all Prolog predicates, which are granted access for the logical analysis by subgoals.
If an AP is entirely accepted by the recogniser and terminates afterwards, the kind $a\circ a$'s condition is checked.

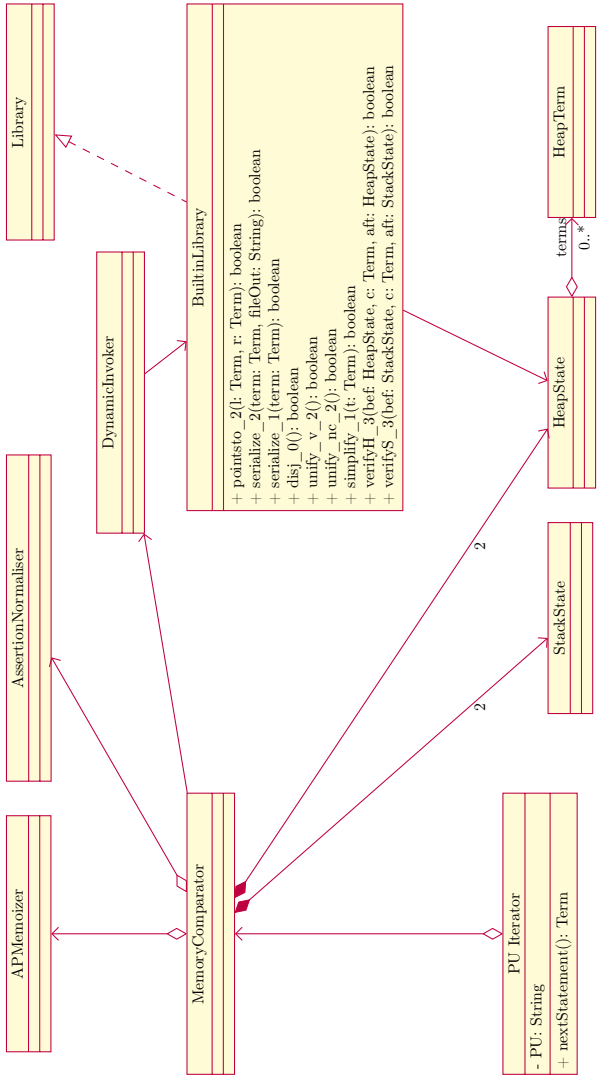
\begin{figure}[h]
\begin{center}
\begin{turn}{90}
\parbox[t]{14.5cm}{
\begin{minipage}{\linewidth}
\begin{tikzpicture}[scale=0.6, every node/.style={transform shape}]
 \begin{class}[text width = 5cm]{Library}{3,4}
 \end{class}
 \begin{class}[text width = 11cm]{BuiltinLibrary}{0,0}
   \implement{Library}
   \operation{+ pointsto\_2(l: Term, r: Term): boolean}
   \operation{+ serialize\_2(term: Term, fileOut: String): boolean}
   \operation{+ serialize\_1(term: Term): boolean}
   \operation{+ disj\_0(): boolean}
   \operation{+ unify\_v\_2(): boolean}
   \operation{+ unify\_nc\_2(): boolean}
   \operation{+ simplify\_1(t: Term): boolean}
   \operation{+ verifyH\_3(bef: HeapState, c: Term, aft: HeapState): boolean}
   \operation{+ verifyS\_3(bef: StackState, c: Term, aft: StackState): boolean}
 \end{class}
 \begin{class}[text width = 4cm]{HeapState}{-3,-8}
 \end{class}
 \begin{class}[text width = 4cm]{HeapTerm}{3,-8}
 \end{class}
 \begin{class}[text width = 6cm]{DynamicInvoker}{-3,2}
 \end{class}
 \begin{class}[text width = 6cm]{MemoryComparator}{-15,0}
 \end{class}
 \begin{class}[text width = 4cm]{StackState}{-8,-8}
 \end{class}
 \begin{class}[text width = 6cm]{PU Iterator}{-15,-7}
   \attribute{- PU: String}
   \operation{+ nextStatement(): Term}
 \end{class}
 \begin{class}[text width = 5cm]{APMemoizer}{-15,4}
 \end{class}
 \begin{class}[text width = 7cm]{AssertionNormaliser}{-8,4}
 \end{class}
 %
 \unidirectionalAssociation{BuiltinLibrary}{}{}{HeapState}
 \unidirectionalAssociation{DynamicInvoker}{}{}{BuiltinLibrary}
 \unidirectionalAssociation{MemoryComparator}{}{}{DynamicInvoker}
 \aggregation{HeapState}{terms}{0..*}{HeapTerm}
 \composition{MemoryComparator}{}{2}{StackState}
 \composition{MemoryComparator}{}{2}{HeapState}
 \aggregation{PU Iterator}{}{}{MemoryComparator}
 \aggregation{MemoryComparator}{}{}{APMemoizer}
 \aggregation{MemoryComparator}{}{}{AssertionNormaliser}
\end{tikzpicture}
\end{minipage}}
\end{turn}
\end{center}
 \caption{Package prolog}
 \label{PackagesProlog}
\end{figure}

Remarks:

\begin{itemize}
 \item  "\texttt{BuiltinLibrary:disj\_0}" accomplishes one step in heap verification.
        Unique markers may be introduced in case of inaccessibility.

 \item "\texttt{BuiltinLibrary:unify}" is a replacement for the classic term unification (\texttt{=/2}) using as wells as by not using checks on repetition.

 \item Class "\texttt{HeapTerm}" refers to "\texttt{Built\-in\-Li\-bra\-ry::\-simp\-li\-fy\_1}".

 \item "\texttt{BuiltinLibrary::serialize\_1}" is almost equivalent to "\texttt{serialize\_2}", except the content is issued to the console.

 \item "\texttt{StackState}" is needed for checking visibility scopes.
       For example, it checks whether a symbol is in scope after all other semantic checks are accomplished.
       Its signature is of type $\sigma:: String \rightarrow Type \rightarrow Value$.
\end{itemize}


\subsection{Conclusions}

The approach introduced presents a new technique for \index{verification automation} automated verification of $\mapsto$-assertions in APs by syntactic analysis.
It avoids manual predicate \index{fold} "\textit{folds}" as well as \index{unfold} "\textit{unfolds}".
If some predicate partition may be represented as the proper rule set of a \index{syntax analysis} syntax analyser, then a solution will be found, if any.
Otherwise, the failing simple heap will be provided automatically by the syntactic analyser as a generic counter-example.
Remarkably, the latter remark is identical with non-AP-related heap comparisons.
After Knuth-Wagner, an attributed grammar is applied \cite{grune90} as a connecting model between Prolog rules on the one side and a sound rule set on the other side.
The \index{grammar!attributed} attributed grammar fully supports inherited and synthesised attributes, translated into head parameters of some Horn-rule.
Apart from the comfortable representation, Prolog allows top-down processing and its symbols are identical with the specification ordering and simple logical reasoning.

Without any modification of Prolog's WAM \index{WAM} core reference implementation (cf.\cite{warren83}, \cite{warren99}, \cite{denti05}), heap verification may be achieved.
Extensions, for example, demonstrated transducers on terms and heap modelling do not contradict fundamental properties, e.g. of \index{SL} SL (see sec.\ref{chapter:expression} and \ref{chapter:stricter}).
That way, properties do not change if the \index{locality} locality is obeyed  in class-object instances.
Non-pointer object fields must be specified in a predicate only once.
Remarks regarding a potential slow interpretation of queries in a \index{abstract machine} \index{WAM} WAM-based interpreting environment may not be taken too seriously.
In general, query interpretation may be replaced by compilation and applied to code optimisations that considerably may improve runtime.
It was proven wrong more than once in practice, including running Java on tiny embedded devices in real-time.
Even so, any attempt in making verification better has the highest priority.

Nevertheless, in the first place, there is still the priority on complexity and expressibility.
Thus, intentionally not the fastest verification is of interest.

In contrast to previous approaches, a new complex approach is proposed.
For example, \index{symbol} symbols are not just called "\textit{symbols}" anymore, but they genuinely are now.
Symbols may be used logically in arbitrary logical terms and predicates.
Assertion language \index{language!assertion} and verification language \index{language!verification} now are identical --- a Prolog-dialect.
Additional transformations are now no more needed, and limitations are lifted, known from previous attempts (cf.\cite{bertot04} and sec.\ref{chapter:intro}).

An open question remains how to integrate \index{rule!syntax error} error production rules to the \index{counter-example} \index{counter-example} counter-example generation effectively.
When a syntactic error is detected, then the default behaviour is to terminate, even if the complete error analysis is provided.
Sometimes, however, it might be worth knowing about the second and further errors to better analyse the \textit{real} problem behind it -- however, this will not incorporate with several compiler generators, especially not with LR-styled ones.
Thus, a \index{automaton!finite} finite automaton based approach is suggested for minimising the \index{edit distance} edit distance between calculated and expected heaps, in analogy to \index{Levenshtein's algorithm} \textit{Levenshtein's algorithm} \cite{levenshtein65}, \cite{wagner74}, \cite{andoni12}.
Its complexity is expected to be bound by $\Theta(n^3)$ for the heap model suggested in this work.
The expected edit trace would include positive and negative (sub-)heaps (cf. sec.\ref{chapter:stricter}) -- this is expected to be fully compatible with Levenshtein's insertion and deletion operations.

Regarding term IR \cite{america89}, \cite{zhang89}, \cite{rutten96} shall be mentioned as proof that Levenshtein's algorithm is tractable to terms.
"\textit{Panic mode}" as described in \cite{grune90} and in \cite{parr12} attempts to continue parsing till some well-known safe marker is put in artificially to a formal grammar in case of an emerging syntax error in the input word.
The effect on heaps shall be researched for a stable error restore and its usability towards proofs.

Furthermore, it is recommended to consider \index{partial specification} partial specification, as was introduced earlier by constant functions "\underline{$emp$}", "\underline{$true$}" or "\underline{$false$}", for the sake of proof simplification.
Notably, the research shall focus on whether partial specifications may locate inaccessible bridges in heap graphs, which connect independent heaps, for the sake of a better specification rewriting.

\section{Conclusions}

\underline{Proof automation:}
\begin{enumerate}
  \item Analysis of gaps between specification languages and verification languages for dynamic memory.
  Analysis of narrowing down both languages.
  \item Expressibility analysis and measurement of term transformations in the logical paradigm using a Prolog-dialect.
  Finding confirmation for logical assertions based on symbols, variables, terms, and relations is more convenient and closer to heap verification in Prolog than other existing approaches.
  \item Prolog terms as IR fit best for the whole static analysis assembly line.
  \begin{itemize}
    \item Foundation and use of a logical language for specification and verification heap.
    The main idea --- logical programming for proving theorems about the heap.
    \item The proposed platform, designed as open, extensible and variable.
    The platform allows the integration of additional external libraries and optionally logical solvers directly in Prolog.
    \item Research done showed numerous advantages of Prolog over competing techniques, including direct SMT-solver injection in Prolog.
   \end{itemize}
   \item Proposition on heap verification automation as a syntactic analysis of abstract predicates based on heap graph edges' stepwise processing.
\end{enumerate}

\underline{Expressibility of specification and verification languages:}
\begin{enumerate}
  \item Now, w.l.o.g. abstract predicates and subgoals to be proven may contain symbols and variables in Horn-rules, which extends its previous limitations.
  Predicates may now have any recursive definitions, where the chosen syntactic analysis method may bound recursion requiring a rule rewriting (e.g. a left-factorisation).
  \item A rise of expressibility is imminent due to a strengthening of heap operations.
  A strict relationship was established between heap expressions and their graph representation.
  Now spatial heap operators build up unique term assertions.
  Algebraic properties were proven, founding a new heap calculus, which obeys monoid and group properties.
  The proposition was made to extend the de-facto standard \textit{UML/OCL} by pointers enriched by a Prolog notation.
  \item Expressibility and completeness were lifted due to partial heap specifications of variables, and within bound of object-instances was observed.
  Comparison of heaps is now enabled, s.t. incoming heaps of one set may be compared to another, so "\textit{gaps}" and "\textit{intersections}" may effectively be spotted.
  Now, rules may be written potentially shorter and more readable and cover potentially more cases simultaneously.
  \item Theoretical possibility is now given, based on a stricter representation, to localise and check a given input program for minimality w.r.t. full heap graph coverage ("\textit{inhabitant problem}") while excluding abstract predicate definitions in general.
\end{enumerate}

\underline{Conclusions:}
\begin{enumerate}
  \item A unified specification and verification language.
  Narrowing down specification languages to verification languages, and vice versa, to a logical programming language, namely a Prolog-dialect, is considered an extreme solution to a target function with numerous unknown parameters, which is considered ambiguous. 
  The unified solution fights ambiguity and resolves several issues analysed w.r.t. expressibility and completeness.
  The fewer languages there are, the fewer exceptions and the less, in general, need to be considered. 
  That is the case when specification and verification are essentially the same w.r.t. syntax and semantics.
  The verification process is represented as a comparison of some expected and a calculated heap.

  During the research of existing heap verifiers and related projects, both tools, "\textit{shrinker}" and "\textit{builder}", were developed as helpful universal analysis tools.
  Research showed that despite claims made, other verifiers still are restricted to a few academic examples only, beyond which its practical meaning sharply drops.
  Tremendous efforts could be seen to be put by scientists worldwide to be put into infrastructural re-definitions of well-known verification concepts, again and again, hindering further break-through massively.

  \item "No" to new and duplicate (re-)definitions.
  Prolog use allows avoiding the (re-)introduction of conventions, specification, programming, and verification language features and software modules based on those.
  There is no need to introduce more and more model heap definitions.
  There is now one adequate heap definition.
  Important properties and criteria were derived, which are now covered by the platform proposed.
  Now heaps and parts of it may be denoted by (sub-)terms and symbols.
  Abstract predicates are now represented by Prolog rules and fully fit in the context of Prolog.
  The term IR is used as the input programming language for a C-dialect or any other imperative programming language, including no programming language.
  Instead, in this case, direct Prolog terms may be injected and interpreted directly by the verification core.
  Specification and verification are now defined in one language instead of two, namely a Prolog-dialect.
  Optionally the input listing may too be provided in the Prolog-dialect.
  \item Increased expressibility by direct and unrestricted use of Prolog symbolic variables.
  Heap predicates became genuinely logical and abstract (but not "\textit{Abstract}"), namely Prolog predicates.
  In predicates, symbols are now allowed at random places, and no more restrictions exist here.
  Logical reasoning is based on symbols.
  Due to unification and a WAM-based execution stack, logical reasoning has by its nature the same deductive structure for no extra costs.
  The discussion showed that the heap graph model presented completely covers all desired heaps.
  Additional emphasis was put on optional higher-order predicates.
  Class-objects are heaps too.
  Class methods as functions do not insist on pre-and postconditions.
  The frame rule is applied to procedures and among the object life-cycle.
  Symbolic parameters may pass objects to abstract predicates.
  Thus, modularity raises, and due to partial constant functions, objects may still be fully specified.

  \item The relational model proposed showed that Prolog implementations are complete and adequate, particularly towards abstract predicates.
  In contrast to competing, the relational model was chosen as the most promising for checking heaps' comfortability.
  References and comparisons to literature references were discussed.
Although quantitative analysis is complicated and occasionally not representative, still representative examples were chosen and analysed using metrics.
  So, the field expertise showed Prolog terms are nearly always better on average by 30\% than equivalent functional notations.
  "\textit{Better}" means especially compacter and more flexible.
  Typical examples were cherry-picked for the sake of demonstration that cannot be represented (at all or with immense difficulties only) in functionals or imperative paradigms.
  \item In Prolog, a new verification platform based on a conveyor/assembly line architecture.
  The platform is minimalistic and centric to terms derived from the incoming program and annotations and verification rules, all in Prolog.
  The platform is from a user's perspective altogether in Prolog from the specification and verification view.
  It is explicitly designed for heap reasoning and allows user-specified definitions and rules to be added dynamically.
  The platform follows the core principles: (1) automation, (2) openness, (3) extensibility, (4) plausibility.
  It can only be derived what follows by the given Prolog rules (or its multi-paradigmal built-in predicates) -- this is a quick but fair and intentional design decision.
  Verification rules, lemmas and inductively-defined abstract predicates are all noted in Prolog.
  A proof tree can be traced (e.g. by DOT-visualisation) and by logging messages.
  Any problematic case can be debugged individually by jump-starting with arbitrary memory state (stack and heap) and some input program statement or Prolog-equivalent term statement -- so debugging is possible at any phase.

  \item Simplification of specification and verification due to strengthening spatial operation.
  A stricter heap definition denoting single simple heaps was found instead of the previous ambiguous heap definition.
  Strengthening means obeying the atomism of heaps.
  More precise, more straightforward heap results, and as such, specification and verification rules become simpler.
  A mandatory comparison of a heap in the context of all other heaps is no more needed.
  Now heaps, including one-to-one as well as one-to-many, can be done more effectively.
  Due to established group properties, now heaps can be calculated.
  Heaps may be differentiated for a rule set, and comparison may indicate missing parts, so a rule set may effectively be checked for completeness.
  Due to partial specification, now the total amount of heap rules that need to be specified can be reduced and simplified.
  Exceptions in heap expressions remain excluded, but the rule set completeness can effectively be checked using partial (incomplete) specifiers.
  In practice, a strengthening means that heap subexpressions do no more need to be extensively analysed.

  As \textit{UML/OCL} does not support pointers, the models introduced are compatible, so it is proposed to enrich it with pointers, and the logical notation introduced.

  \item Possibility defining new heap theories.
  Shown algebraic properties allow the definition of equalities and inequalities over heaps.
  These formal theories may immediately be expressed either in Prolog or in SMT-solvers that can be integrated using multi-paradigmal programming as Prolog subgoals.
  Now rules and (heap) theories can be checked easier and without superfluous checks.

  \item Abstract predicates form an attributed grammar.
In analogy to schema validation and templates processing, semi-structured data bridge verification and specification processes by similar data, namely heap data terms.
  Here abstract predicates are found to be a heap abstraction technique containing "\textit{gaps}" as templates do.
  The gaps are filled during verification with concrete data.
  This meta-concept finally leads to syntax analysis.
  As a result, logical reasoning on abstract predicates may be interpreted as syntax analysis.
  Comparing an actual heap with the expected may be considered as a word problem for formal languages.
  This universal method has recognition limitations.
  However, the method is convenient, regardless of the well-founded theoretic apparatus underneath.

  \item Universal approach in automating assertion checks.
  Predicates and proof are based on a Prolog-dialect and may multi-paradigmal include arbitrary libraries.
  So, proving becomes programming essentially.
The need for redundant process definitions of translation and rule description becomes obsolete because it is implicitly already expressed by Prolog terms and rules, even when the language introduced is minimalistic. 
  There is no more manual fold and unfold needed for recursive definitions applied to abstract predicates.
If a heap is not definable according to given abstract predicates for an individual parser, they may either be rewritten or not definable in principle, e.g. if they are not context-free.
  Syntactic analysis is decidable for heaps, and termination is guaranteed, e.g. for LL-recognisers in polynomial (approximately linear) time.
  An analyser is generated by need and only if rules change.
  Its process is initiated once on-the-fly.
  Depending on the reachability of starting non-terminals, the recognition of subgrammars avoids new recogniser rebuilds.
  In practice, all non-terminals may be available after a recogniser is built once.
  Syntactic analysers do not require further research for this work's objective because they have been researched to a near full extent over the last decades.

  In case of an error, the automated reasoning, either with or without parsing, can immediately derive counter-examples from a simple term comparison of having an expected term founded on simple term unification and without any further efforts required.
\end{enumerate}

\underline{\textit{\textbf{Future Research:}}}
Future works shall include the following theoretic and practical work (not prioritised):

\begin{enumerate}
 \item \textit{Integration of further formal theory solvers}.
 After and during the definition of heap theories, \textit{SMT}-solvers shall be integrated for improved and even more simplified proofs.
 Solvers may naturally be written in Prolog.
 Prolog's constraint programming abilities is one primary reason for its widespread popularity, at least in academia, so this would be considered a bargain.
 \item \textit{I Integration) into existing environments}.
 Besides the mentioned limitations and for the sake of broader popularity, term-IRs shall be turned into \index{GIMPLE} GIMPLE and \index{LLVM} LLVM bitcode, and vice versa for full industrial compliance.
 \item \textit{II Integration) into existing environments}.
 A detailed code analysis, namely detecting "\textit{growing}" points (memory regions bloating), may lead to an improved allocation strategy w.r.t. fragmentation, memory segmentation and layout.
 Here the not initially compatible abstraction interpretation technique shall be further tracked towards and incorporated with the stricter memory model introduced in this work.
 Also, memory addresses with static offsets shall be further integrated into the existing heap model for a broader acceptance and easier integration towards \index{GIMPLE} GIMPLE.
 \item \textit{Heap isomorphism}.
 The check for graph isomorphism relates to how a heap is not fully specified, so containing gaps or symbols is still effectively compared?
 \item \textit{Finding (even minimal) deviation from a specification}.
 During syntax analysis, in case of an error, a minimum global search might be better instead of an exist on first error occurrence, e.g. by using Levenshtein's edit distance.
 Here further research is needed.
 \item \textit{Further Pointer dependency analysis}.
 In contrast to alias analyses, it is highly suggested to research some extended SSA-forms incorporating pointers.
 Nobody has even tried that before, presumably since it is feared it is too complicated, although both SSA and pointers are about analysing data dependencies.
 Initially, a deviation is expected, but if synergies could be found, this would not only ease pointer analysis but would for free provide a powerful optimisation technique with in-depth knowledge of how the heap graph is supposed to be used most efficiently.
 \item \textit{Verifying existing code repositories}.
 The approach proposed here remains academic, after all.
 Verification of more extensive source repositories always was and still is a separate research branch.
 So, eventually, such a field study would be an outstanding practical achievement.
 However, that would require first to cover \index{GIMPLE} GIMPLE entirely for a realistic estimate.
 \item \textit{Further research on transformation languages by practical means}.
 Discussed transformation languages may provide a compact and rule-based notation, usually including the calculation states before and after.
 However, a significant gap is expected between the introduced memory model, the expressibility model and content in dynamic memory, and the input language.
 This gap will need to be clarified first -- and after all, could be an alternative to this work's main contribution.
 \item \textit{Abstract reasoning}.
 Rare Prolog-dialects do support abductive reasoning in addition to Prolog main's reasoning: deduction.
 This issue becomes prominent, especially since O'Hearn's attempt in introducing abduction to separation logic.
In general, abstraction techniques are considered an additional mechanism in matching rules that otherwise would not be selected.
 The ordering of logical conclusions made may be further simplified if the relationship between the incoming and outgoing terms are invertible in terms of predicate inversion, as discussed in section \ref{chapter:logical}.
\end{enumerate}

\underline{\textit{\textbf{Recommendations for practitioners:}}}

\begin{itemize}
 \item A practical heap specification was simplified.
 A given program is annotated by verification conditions and user-defined (abstract) predicates, namely the specification.
 \item Abstract predicates need to be well-defined and valid Horn-rules.
 If those build up in Prolog a proper context-free grammar, then subgoals to these can be recognised.
 Alternatively, rule rewriting may resolve recognition problems.
 \item The use of stricter spatial operators allows extending \textit{UML\-/OCL} specifications by pointers and a well-founded calculus.
 \item The introduction of new programming languages is not mandatory, although it is open potentially to any language obeying requirements.
 A programming listing may be missing if it is injected by a Prolog term instead.
 The extensible and variable IR gains maximum flexibility.
 Verification architecture has the same properties.
 \item The heap comparison of a given expression with the expression from rules allows now to find differences in heaps.
 In more details, it may check all heaps for a predicate partition in order to find missing heaps in rules.
 \item The overloading of relations simplifies logical reasoning, e.g., abduction and different techniques shall be considered in the future.
 \item The use of SMT-solvers for given heap theories, either in Prolog or by a multi-paradigmal implementation, may increase transformation speed in normalised heap forms.
 \item The integration of memoisation (in Prolog called \textit{tabling}) to abstract predicates may lead to an improved performance throughput.
 \item When introducing new or modifying existing phases in heap verification, it is highly recommended to stick to Prolog term IR.
\end{itemize}

\underline{\textit{\textbf{Formal Concept $(G,M,I)$ on Achievements:}}}\\

Set $G$ --- Objects are the achievements:

\begin{enumerate}
  \item[(1)] Unified heap specification and verification language
  \item[(2)] Architecture principles for heap verification
  \item[(3)] Increase of expressibility
  \item[(4)] Relational model for heap description
  \item[(5)] Prototypical reference implementation
  \item[(6)] Interpretation and introduction of heap theories
  \item[(7)] A fresh definition of a singleton heap
  \item[(8)] Simplification of heap specifications based on $\circ$, $||$
  \item[(9)] Pointer-based object calculus proposed
  \item[(10)] APs represent attributed grammars\\
\end{enumerate}

Set $M$ --- Features are the central theses:

\begin{enumerate}
 \item[N1] Prolog-dialect as heap specifier and verifier
 \item[N2] Strengthening approach of ambiguous heaps
 \item[N3] AP verification as attributed grammar parsing
 \item[N4] Unit of utils for heap analyses\\
\end{enumerate}

$I \subseteq G\times M$ --- Incidence relation:

\begin{center}
\begin{tabular}{c|cccc}
  G/M & N1 & N2 & N3 & N4\\
  \hline
  (1) &x&&&x\\
  (2) &x&&&x\\
  (3) &&x&x&\\
  (4) &x&&x&\\
  (5) &&&&x\\
  \hline
  (6) &&x&x&x\\
  (7) &&x&x&\\
  (8) &&x&x&x\\
  (9) &x&x&&\\
  (10) &&&x&
\end{tabular}
\end{center}

The corresponding Hasse diagram for $I$ can effectively be constructed and visualises the dependencies:\\

{\parbox[t]{7.5cm}{
\resizebox{0.5\textwidth}{!}{
\begin{minipage}{8cm}
  \xymatrix@C=1.5em@R=1.5em{
    && \txt{\{N1,N2,N3,N4\}} &&&\\
    && \txt{\{N2,N3,N4\}\\\underline{(6),(8)}} \ar@{-}[u] &&&\\
    \txt{\{N1,N2\}\\\underline{(9)}} \ar@{-}[uurr] & \txt{\{N1,N3\}\\\underline{(4)}} \ar@{-}[ruu] & \txt{\{N2,N3\}\\\underline{(3),(7)}} \ar@{-}[u]  & \txt{\{N1,N4\}\\\underline{(1),(2)}} \ar@{-}[uul]\\
    \txt{\{N1\}} \ar@{-}[u] \ar@{-}[ur] \ar@{-}[urrr] & \txt{\{N2\}} \ar@{-}[ul] \ar@{-}[ur] & \txt{\{N3\}\\\underline{(10)}} \ar@{-}[ul]  \ar@{-}[u] & \txt{\{N4\}\\\underline{(5)}} \ar@{-}[u] \ar@{-}[uul] &&\\
    && \txt{$\bot$} \ar@{-}[ull] \ar@{-}[ul] \ar@{-}[u] \ar@{-}[ur]  &&&
  }
\end{minipage}
}}}

\section{Original}

\newenvironment{rudefinition}[1]{\begin{small}\underline{Определение} \textbf{\textit{(#1)}}. }{\end{small}}
\newenvironment{rucorollary}[1]{\begin{small}\underline{Заключение} \textbf{\textit{(#1)}}. }{\end{small}}
\newenvironment{rutheorem}[1]{\begin{small}\underline{Теорема} \textbf{\textit{(#1)}}. }{\end{small}}
\newenvironment{ruexample}[1]{\begin{small}\underline{Пример} \textbf{\textit{(#1)}}. }{\end{small}}

In the original translation, all figures, algorithms, tables and additional materials are stripped off.
Here, the original text refers to the English translated figures.
In most cases, this is believed to be accurate.


\section*{Введение}

Для введения в предметную область диссертации «\textit{Логический язык про\-грам\-ми\-ро\-ва\-ния как инструмент спецификации и верификации динамической памяти}» при\-во\-ди\-тся этот раздел. Первый раздел посвящается вычислению Хора, основным определениям и постановлению решения вычислением, альтернативным подходам и свойствам систем  вычислений Хора. 

Далее для сравнения различных динамических и статических подходов, можно использовать  «\textit{качественную лестницу}» из рисунка \ref{fig:QALadder} в любой момент.
Эта лестница является моим личным предложением для классификации качества, которое должно соблюдать любое программное обеспечение. Чем больше программа соблюдает  критерии качества, тем искреннее можно считать данную программу качественной. То есть, если программа некорректно вычисляет результат, то на самом деле можно считать неважным, насколько данная функция выполняет критерий входного домена, и тем более безразлично, насколько быстро это сделано. Самое главное - это кор\-рек\-тное вычисление программы. Если это соблюдается, то только тогда можно считать уровень  «\textit{базисно надёжным}».

С растущим спросом в надежности растет спрос на универсальность данной функции, т.е. тогда имеет смысл распространить качество на любые входные функции.
Если базисные операции работают правильно, то тогда можно требовать, что для всех входных данных функций всё работает правильно --- это второе требование, которое мы обозначим  «\textit{полностью надёжным}», более жёстким, чем первое. Если первые два критерия соблюдаются, то с вычислительной точки зрения, данная функция корректна и полна. Тогда имеет смысл, далее вводить оптимизации, которые мы обозначим  «\textit{оптимально надёжными}» и которые не меняют вычислительные свойства   корректности и полноты. Надёжными оп\-ти\-ми\-за\-ци\-я\-ми могут послужить, например, ускорение часто востребованных программных блоков, визуализация эффектов, работа с файлами и т.д. Визуальные эффекты и работа с файлами не будут рассматриваться.

Каждый из установленных критериев можно чётко обозначить и можно пред\-ло\-жить некую метрику для измерения выполнимости.
Не имеет смысла рассуждать о качестве программы, когда программа вычисляет корректно и быстро, но не полностью, потому, что универсальность применения подрывает качество су\-щес\-твен\-но, когда не имеется определение для входного вектора, даже, если все остальные случаи высчитываются очень быстро. Тогда можно, либо ограничить диапазон видимости входного вектора для полного покрытия функции, либо сузить покрытие ради не\-пол\-ной надёжной функции. Когда будут рассматриваться различные подходы, ка\-чес\-твен\-ная лестница будет использована как ориентир для принятия или отклонения пред\-ло\-же\-ний в дальнейшем.

В данной работе рассматриваются императивные и искусственно-гипотетические языки программирования. Для практической применимости необходимо рас\-сма\-три\-вать об\-ъе\-кт\-но-ор\-и\-ен\-ти\-ро\-ван\-ные модели, поэтому в следующем разделе вводятся основные понятия объектно-ориентированных вычислений, в частности,  \textit{Теория Об\-ъек\-тов} (ТО). Далее рассматриваются теоретические и практические проблемы ра\-бо\-ты с динамической памятью, а затем вводятся модели представления  ди\-на\-ми\-че\-ской памяти, как, например,  \textit{анализ образов} (АО),  \textit{вычисление регионов} (ВР), а также другие модели представляющие косвенно-динамическую память. \textit{Логика рас\-пред\-е\-лён\-ной па\-мя\-ти} (ЛРП) является основной теоретической моделью пред\-ста\-вле\-ния динамической памяти этой работы. Затем даётся обзор по автоматизации до\-ка\-за\-тельств и обсуждаются её факторы противостояния. С целью лучшего понимания проводимых далее доказательств с помощью модели Хора и улучшения схо\-ди\-мос\-ти деревьев доказательств, вводятся представления «\textit{абстракции}». С кратким обзором по тематике можно также ознакомиться в \cite{haberland16-5}.
Затем, с целью оз\-на\-ком\-лен\-ия, рассматриваются области применения и связанные с главным направлением работы в этой области, в частности ---  \textit{анализ псевдонимов} (см. раздел \ref{sect:AliasAnalysis}),  \textit{сбор мусора} и верификация кода с  \textit{интроспекцией}. В конце этой главы представляются существующие программные системы и среды.


\subsection*{Вычисление Хора}

\textit{Вычисление Хора} (назван в честь информатика Чарлься Антони Ричард Хора, в литературе также встречается фамилия «Хоаре» вместо «Хор», однако известно, что сам автор пишет свою фамилию как «Хор» на русском, а также это способствует избегать неверные произношения других авторов) - это  формальный метод \textit{верификации}, который позволяет про\-ве\-рить верность данной программы, данной некоторой \textit{спецификации}, т.е. согласно описанию, характеризующему свойство поведения программы. Спецификации опи\-сы\-ва\-ют  \textit{со\-сто\-я\-ни\-е вычисления} с помощью математических формул и теорем, а вывод происходит согласно аксиомам и правилам рассматриваемой области дискуссии.

Цель верификации программы, c помощью вычисления Хора, - это проверка со\-блю\-де\-ния свойств программы, что является повышением качества и надёжности программы. Если программа соблюдает свойства утверждениями, то это позволяет нам характеризовать программу и сравнивать её с другими программами, где одно или иное свойство, возможно, не соблюдается.

Любое \textit{правило Хора} необходимо рассматривать, как  \textit{логическое суждение} $A \Rightarrow B$, которое читается: «\textit{если} суждение  \textit{антецедент} $A$ выполняется, \textit{тогда} выводимое суждение  \textit{консеквент} $B$ следует». Рассмотрим рисунок \ref{fig:HoareCalc}:

Правило Хора $(P1)$ является  \textit{аксиомой}, пусть $A$ будет пустым антецедентом. Главная идея аксиомной системы заключается в проверке верности следствия согласно при\-ме\-не\-нию конечной последовательности определённых правил. Отныне, мы не различаем между  правилами и аксиомами, т.к. первое является обобщением второго. Правило $(P2)$ является аксиомой, если \textit{антецедент} $B \vee \neg B$ выполняется всегда аналогично \textit{логике утверждений} и сопоставляется  «\textit{истинно}». Однако, изначально не имеются искусственные ограничения в вычислении Хора, например, касательно суждений. Утверждения могут быть сопоставлены предикатами, а  \textit{логический вывод} становится \textit{суждением} над предикатами.
Различные методы могут быть применены для вывода с \textit{предикатами первого порядка}, например: \textit{метод естественного вывода} \cite{troelstra00},   \textit{метод резолюции} или  \textit{метод семантических таблиц} («\textit{Tableaux method}»). Перечисленные методы являются  \textit{дедуктивными}. В отличие от дедукции, ещё имеются  \textit{абдукция} и  \textit{индукция}.
Индуктивный метод вывода предлагает ввод ещё не существующие или тяжело выводимые, явным образом, утверждения в набор рассматриваемых правил. Индуктивное правило введённое «кажется» искусственно, нельзя отклонить из-за от\-сут\-ствия  противоречивого примера. Классическим примером индукции служит  \textit{теория опровергаемости по Попперу}: данное утверждение «\textit{все лебеди белые}» на практике, из-за ограничений не может быть проверено целиком за конечный про\-ме\-жу\-ток времени. Поэтому, предлагается исходить из верного утверждения до тех пор, пока не будет замечен противоречивый экземпляр, например, не будет найден чёрный лебедь. Индукция нам предлагает ради преодоления необъяснимых и неотрицаемых феноменов в рамках рассматриваемого ракурса ввод нового правила, согласно которому феномен может быть обоснован. Свойства индуктивно оп\-ре\-де\-лен\-ных перечисляемых, возможно бесконечных структур, проверяются с помощью конечной формулы.
Когда наблюдается противоречивый индукцией экземпляр, мы вынуждены скорректировать наши утверждения о мировоззрении. В отличие от этого, абдукция ищет необходимые и допустимые предпосылки, чтобы следствие было выводимо из набора правил.

Вычисление Хора можно охарактеризовать, как  дедуктивное суждение, применив к императивному  программному оператору (как это было предложено изначально Хором), а каждое выводимое суждение описывается состоянием вычисления, до и после выполнения оператора и программным оператором (см. раздел \ref{Intro:HoareTriple}). Ради улучшения  \textit{сходимости доказательства}, часто можно добавлять индукцию и аб\-дук\-цию в качестве метода вывода. Вывод дедукции интуиционистен \cite{brouwer10} потому, что утверждение только тогда верно, когда даны условия и соответствующий факт предусловия. Когда мир расследуемых выводов производится строго согласно правилам и пересекаемые в антецеденте, пра\-ви\-ла исключены, то  \textit{космос выводимых следствий} является \textit{замкнутым}, к примеру, правила $(P3)$ и $(P4)$ из рисунка \ref{fig:HoareCalc}, интерпретации возможных логических утверждений также замкнуты.

\subsubsection*{Тройка Хора}

Для  \textit{формальной верификации} с помощью спецификации, Хор в \cite{hoare69} предлагает для \textit{императивных языков программирования} определить тройку.

 \textit{Декларативные языки прог\-рам\-ми\-ро\-ва\-ния}, например, \textit{функциональные}, отличаются от   императивных тем, что состояние вычисления меняется в зависимости от содержимого переменных.  Порядок выполнения программных операторов не строгий.  Декларативную парадигму программирования можно отделить от  им\-пе\-ра\-тив\-ного с помощью модели организации памяти касательно символов и переменных. Далее в этой работе рассматриваются исключительно императивные языки прог\-рам\-ми\-ро\-ва\-ния, в качестве входного языка программирования.
Изначально Хор пред\-ла\-гал, в качестве языка программирования простой императивный язык наподобие \textit{диалекта Паскаль} и нотацию $P\{C\}Q$. Скобочная нотация над оператором оказалась не популярной, поэтому скобки стали ставить вокруг $P$ и $Q$.
$P$ и $Q$ оба описывают  \textit{состояния вычисления} в качестве утверждений до и после $C$. $C$ может содержать любое количество императивных операторов. Из сказанного следует, что $C$ меняет пошагово состояние вычисления, если $C$ не делится далее, а следовательно, состояние памяти меняется пошагово. Под  памятью мы подразумеваем  \textit{список процессорных регистров},  \textit{стек} и  \textit{динамическую память} (см. рисунок \ref{fig:ProcessSectionLoader}).  Тройка Хора интерпретируется так: если имеется состояние вычисления $P$ и  программный оператор $C$ выполнен, то вычисление совершено и находится в состоянии $Q$. Другими словами, $P$ и $Q$ описывают состояния памяти до и после $C$. Если тройка Хора соблюдается, то переход состояний памяти верный, в противном случае, имеются следующие причины несоблюдения:

\begin{enumerate}
 \item Если $C$ не завершает работу, то данные правила Хора назовём «\textit{не\-пол\-ной определённой}» и состояние $Q$ не достижимо. Такое поведение $C$ опишем фор\-маль\-но как: $\vdash \{P\}C\{Q\} \rightarrow (\llbracket C \rrbracket \neq \bot) \wedge Q$, где $\llbracket . \rrbracket$ денотационное преобразование программного оператора \cite{allison89}, \cite{abramsky94}, \cite{winskel93}.
 \item Аксиоматические правила не полны. Выбираемое правило отсутствует для данного состояния $P$.
 \item Полученное актуальное состояние вычисления $Q$ не является ожидаемым со\-сто\-я\-ни\-ем $Q'$. 
\end{enumerate}

На рисунке \ref{fig:ProofRulesEx1} указаны наиболее важные аксиомы и правила для императивных языков программирования. Программу написанную императивной парадигмой мож\-но преобразовать в программу  декларативной парадигмы, а также обратно, бла\-го\-да\-ря симуляции  \textit{машиной Тьюринга} о вычислимости. Программы можно более абстрактно пред\-ста\-вить  блок-схемами. Вызовы подпрограмм завершаются «\textit{обычной}» стековой ар\-хи\-тек\-ту\-рой.
Изначальное вычисление Хора \cite{hoare69} не накладывает дополнительные ограничения на описания утверждений в математических формулах, также как и предложенные подходы из \cite{apt93} не накладывают дополнительные ограничения. Апт предлагает вычисление Хора для верификации одно- и многопоточных программ, а также ут\-вер\-жде\-ния классифицировать на входные и выходные переменные. 
Позже мы оце\-ним достоинства и недостатки различных подходов верификации программ.

Итак,  \textit{правило последования} (SEQ) из рисунка \ref{fig:ProofRulesEx1} служит примером и означает: если имеется оператор разделитель «;», то для доказательства  корректности $A_1;A_2$ с предусловием $P$ и постусловием $R$ необходимо доказать, что существует промежуточное утверждение $Q$, если доказать сначала первый оператор $A_1$ с  предусловием $P$, а затем $Q$ служит в качестве предусловия в доказательстве $A_2$. Корректность последования считается доказанным, как только $A_1$ доказано, а $A_2$ доказано с  постусловием $R$.

Вторым примером служит \textit{правило логического последствия} (CONSEQ), которое является обобщением правил (SP) и (WP).  Кон\-кре\-ти\-зи\-ро\-ван\-ные правила, либо ужесточают предусловие, либо обобщают постусловие. Если предикат верный в общем случае, тогда для некоторого множества $V$ предикат также верен в частном случае для любого $v \in V$ или для любого подмножества $V_1 \subseteq V$. $V$ является  \textit{обобщением} от $V_1$, а $V_1$ является \textit{конкретизацией} $V$. Нетрудно убедиться в том, что  импликация $V \Rightarrow V_1$ в силе, однако, $V_1 \Rightarrow V$ не для каждого предиката действительно. Исходя из общего случая, можно выводить верность предиката без ограничения общности для конкретного случая, поэтому в силе пра\-ви\-ло (SP), а также в силе (WP). Если (SP) и (WP) объединить, учитывая переименование переменных состояний вычисления, то как раз получается (CONSEQ).

Третьим примером служит  \textit{правило цикла} (LOOP). Это правило гласит, что если имеется цикл в качестве программного оператора и мы имеем предусловие $P$ и постусловие $Q$, то достаточно доказать корректность блока цикла $S$, как единый про\-грам\-мный оператор и добавить к $P$ условие цикла $B$. При выходе из блока цикла необходимо добавить в постусловие  отрицательное условие цикла.

Отмечается, что благодаря циклу и  \textit{правилу присвоения} возможно преобразовать любой другой \textit{вид цикла} в данный вид цикла (LOOP) как изложено в рисунке \ref{RulesLoopReplacements}.

Ради простоты, договоримся, что в \textit{правиле} (FOR) \textit{ранее известных количеств итераций} $n$ переменная $i$ является свежей, т.е. неиспользуемой в  предикате $P$ или $Q$.
Если переменная не свежая, то согласно \textit{диапазону годности} переменных, вводится свежая переменная. С проблемами \textit{индексации термовых выражений} и подхода избежания  коллизии наименований можно ознакомиться в \cite{pitts02}. То есть, правило (REPEAT) можно преобразовать в (LOOP) и обратно, однако, в общем (FOR) можно только в (LOOP) преобразовать, т.к. для обратного шага всегда необходимо заранее знать количество итераций, что не всегда известно. С вопросами преобразования \textit{примитивной рекурсии}, \textit{$\mu$-рекурсии} и общими видами взаимной формы рекурсии, а также с вопросами свойств терминации при  \textit{взаимной рекурсии} можно ознакомиться в \cite{bekic84}.

Правило для оператора  условного перехода здесь отдельно не вводится потому, что он без ограничения общности может быть сопоставлен  правилом цикла (LOOP). Для полного понимания преобразуемости, рекомендуется ознакомиться с ми\-ни\-маль\-ным языком с точки зрения   вычислимости «\textit{PCF}» \cite{plotkin77}, \cite{cohn83}. Основной идеей вычислимости является симуляция Тьюринг-вычислимых функций с помощью ми\-ни\-маль\-но\-го набора программных операторов.
Особенность циклов, в отличие от других \textit{линейных программных операторов}, заключается в том, что один и тот же блок повторяется множество раз, при этом, состояние переменных обычно меняется. Чтобы зафиксировать совокупность всех изменений блока цикла, необходимо про\-ан\-ал\-из\-и\-ро\-вать  \textit{зависимость данных} и все изменения переменных до выхода из цикла. Повторение цикла необходимо абстрагировать, обычно вручную, подставляя новые искусственные переменные, для определения наиболее обобщённого уравнения целевых переменных (ср. рисунок \ref{fig:CFGEx1}), которые формируют \textit{инвариант цикла}. На\-хож\-де\-ние  инварианта может часто оказаться трудным, по крайней мере не тривиально и чисто автоматически не может быть выявлен, поэтому требует построение спе\-ци\-фи\-ка\-ции вручную. Постусловие блока цикла представляет собой формулу, которая содержит инвариант. Аналогично к фиксированному отображению в  проективной геометрии, когда одна точка при трансформации остается фиксированной, тогда инвариант блока цикла это утверждение, которое остается верным при любом множестве раз итераций. Итак, данная инвариантная формула $\Phi$ должна соблюдать равенство $\Phi \circ Y = Y \circ \Phi \circ Y$, где $Y$ является некоторым \textit{комбинатором плавающей точки}, а $\circ$ является бинарным оператором применения функций. $Y$ является \textit{некоторым} синтаксическим методом симуляции повтора, либо его цель может определиться как «\textit{поисковиком минимума}», который широко применяется в  $\lambda$-вычислениях \cite{barendregt93}. Можно ис\-поль\-зо\-вать альтернативную нотацию вычислимости \textit{$\mu$-оператора Клини}, указав в качестве ми\-ни\-ми\-зи\-ру\-ю\-щих параметров меняющиеся переменные.

Рассмотрим пример из рисунка \ref{fig:ExampleSimpleGCD}.

Инвариантом здесь может быть $a \cdot y + b = x^b \ge 0$, т.к. равенство  остатка при  делении на $y$ не меняется циклом. $y$ является делителем целого числа $x\ge 0$, $a$ целое частное число, а $b$ это остаток при делении $x$ на $y$.
 Правило оператора присвоения (ASN) означает: если переменной $x$ присваивается годное значение $e$, то, до и после присвоения состояния $P$ остается без изменений. Состояние до присвоения среды присвоенных символов необходимо расширять, включив $x$. В случае возникновения коллизии с именем, необходимо сначала в спецификации провести переименование  конфликтующих переменных.\\

 \textit{Полнота правил} зависит от  \textit{полноты троек Хора} в виде $\{P\}C\{Q\}$, которая за\-ви\-сит от покрытия всех программных операторов $C$ вместе со всеми допускаемыми пред\-у\-сло\-ви\-я\-ми $P$ (см. опр.\ref{def:HoareCalculusCompleteness}). Постусловия $Q$ являются лишь логическими последствиями  тро\-ек, которые выводимы из $P$ и $C$.
Так как правила для любой годной программы могут применяться потенциально в любом порядке, возникает вопрос, а может ли одно правило нечаянно или преднамеренно исключить любое другое данное правило? Найти ответ наивным подходом может оказаться сложным делом.
Очевидно, что если имеется постусловие $Q$ и данная программа $C$ выводит различные $P_1$ и $P_2$, то это явно показывает на \textit{некорректность} правил Хора. Кроме полноты и корректности, согласно  лестнице качества из рисунка \ref{fig:QALadder}, оптимальность ресурсов также важна. Вопрос, насколько эффективны или «\textit{удобны}» правила Хора, затрагивает также вопрос о компактном, но понятном для пользователя представлении. Если идет вопрос об автоматизации, то это также затрагивает вопрос о вычислительной технике.
В этом разделе мы увидели, что задача постановления инварианта может оказаться сложной попыткой, т.к. для разумного и обобщённого вывода, необходимо включить все переменные блока цикла, включая все переменные памяти. Обратим внимание на то, что автоматически выделенные переменные имеют  диапазон видимости, а у динамически выделенных переменных диапазон отличается. Кроме того, необходимо заметить, что сравнение равенства между имеющейся и ожидаемой спецификацией может потребовать некоторый консенсус по представлению  состояний вычисления.

\subsubsection*{Логический вывод}

Правило $(P1)$ из рисунка \ref{fig:HoareCalc} представляет собой самую обобщённую форму логического правила. Верификация,  это проверка данной программы $C$ с условием, что при начальном предусловии $P$, следует постусловие $Q$. Верификация, это формализованный про\-цесс (см. рисунок \ref{obs:DeductionWithBacktracking}), который начинает проверку последовательности программных неделимых операторов с предусловием $P$ и с постусловием $Q$.  \textit{Ре\-зуль\-тат ве\-ри\-фи\-ка\-ции} либо верный, либо неопределённый, когда постусловие отсутствует или оператор не терминирует, либо отрицательный. Применяя правила, может воз\-ни\-кать необходимость  разветвления доказательства на под-доказательства, в итоге, \textit{струк\-ту\-ра доказательства является деревом}.

\begin{rudefinition}{Входной язык программирования}
 \textit{Язык программирования} является  \textit{формальным языком}, чьи  слова соответствуют программам, которые являются последовательностью про\-грам\-мных операторов. После запуска каждого из  программных операторов, ме\-ня\-ет\-ся  состояние памяти. В качестве входного доказуемого языка прог\-рам\-ми\-ро\-ва\-ния, рассматривается по умолчанию \textit{императивный язык}, близкий к под\-мно\-жес\-тву Си с объектным расширением.
\end{rudefinition}

Императивный язык программирования выбирается по целому ряду причин. Во-первых, императивные диалекты, как Си или Ява, довольно популярны, и необходимость введения всё новых формализмов к большой части отпадает. Во-вторых, в этой работе выбирается прототипно Си диалект, который имеет возможность удобно и просто обсуждать синтаксис и семантику операций над динамической памятью. Естественно, Си имеет моменты, которые зависят от одной или иной платформы, однако это обсуждается и важно понять, например, для дальнейшего применения предложенных подходов.

\begin{rudefinition}{Язык спецификации}
 \textit{Язык спецификации} является формальным языком, который ссылается на переменные и \textit{единицы данной входной программы},  \textit{символьные вы\-ра\-же\-ния}, \textit{кванторы} и \textit{вспомогательные единицы} для проведения до\-ка\-за\-тель\-ств. Язык спецификации подлежит некоторой согласованной \textit{формальной логике}. Спецификация, в отличие от входного языка программирования, несет  \textit{декларативный} характер, а не императивный.
\end{rudefinition}

Язык спецификации в каждом блоке графа потока управлений (см. рисунок \ref{fig:CFGEx1}) описывает состояние вычисления, опираясь на состояние памяти (см. рисунок \ref{fig:ProcessSectionLoader}).
Рассмотрим свободно выбранное  \textit{дерево вывода} из рисунка \ref{fig:ProofTreeEx1} со следующими утвер\-жде\-ни\-ями $\{A_1,A_2,A_3,B_1,B_2,B\}$. Изначально требуется доказать  тройку $B$, согласно опр.\ref{def:HoareTriple}. Для этого применяется данное правило, в антецеденте которого должно иметься $A_3$ и $B_2$, оба из которых следует отдельно доказать с соответствующими сопоставлениями так, чтобы имелось соответствие строго по правилу (например, пре\-об\-ра\-зо\-ва\-ние локальных символов) с \textit{консеквентом} $B$. Далее, применив некоторые име\-ющ\-ие\-ся правила, мы показываем, что $B_1$ является предусловием для $B_2$, а $B_1$, согласно правилу, всегда верное утверждение, например, $\{n=0 \wedge n \ge 0\}a=5;\{n=0\}$ если очевидно, что $a$  не связанная, т.е. \textit{свободная} переменная с $n$. Далее доказывается $A_3$, в результате чего, получается, что $A_2$ противоречит самому себе. Это условие достаточное, чтобы $A_3$ вычислялось как «\textit{ложь}», а следовательно и $B$. То есть, мы только что доказали, что тройка утверждения Хора $B$ неверна и причина тому $A_2$. Поэтому в данном примере нет необходимости дальше доказывать $A_1$, независимо, верно оно или нет.

\begin{rudefinition}{Логическое следствие}
\textit{Логическое следствие} $A\vdash B$ обозначается как утверждение $A$, к которому применяется некоторое данное правило один раз. Оно приводит к ут\-вер\-жде\-нию $B$ (по Фреге \cite{frege}). Если $B$ получаем после применения правил несколько раз подряд, включая ни разу, то следствие получается $A \vdash^{*} B$. Если мы хотим выразить, что некоторая тройка $A$, согласно данному \textit{набору правил Хора} $\Gamma$ всегда истина, то мы это обозначим как $\models A$, либо как $\models_{\Gamma} A$ если ударение поставить на выбранный набор правил из набора $\Gamma$.
\end{rudefinition}

\begin{rudefinition}{Доказательство как поиск}
Доказательство в вычислении Хора, при данном наборе правил $\Gamma$ и данного следствия $B$, является \textit{поиском аксиом}, т.е. $\models_{\Gamma} B$.
\end{rudefinition}

В данном определении мы сознательно допускаем неточность в связи с набором правил и вычислением Хора. Например, ссылаясь на $\Gamma$, мы ссылаемся на  \textit{формальную логику}, которая состоит из замкнутого по зависимости подмножества данных правил Хора, как \textit{множество носителя} и базисных логичных констант. 
По умолчанию мы согласуем, что для любого данного набора правил, для логического вывода, мы подразумеваем присутствие корректно определённого  вычисления Хора, согласно тройке Хора из опр.\ref{def:IncomingProgrammingLanguage}, опираясь на опр.\ref{def:LogicalJudgement}.

\begin{rudefinition}{Корректность вычисления Хора}
Вычисление Хора является \textit{корректным}, если исключен случай, когда синтаксически подлинная программа $C$ и данный набор правил $\Gamma$ выводят различные противоречивые результаты.
\end{rudefinition}

Если один логический вывод приводит к одному результату $B_1$, а второй также допущен согласно $\Gamma$ и вывод приводит к другому результату $B_2$, который не выводим из $B_1$ или наоборот. т.е. $\{P\}C\{Q\} \vdash^{*} \{P1\}C_1\{Q1\}$ и $\{P\}C\{Q\} \vdash^{*} \{P1\}C_2\{Q2\}$ но, при этом, $\{P1\}C1\{Q1\} \nvdash^{*} \{P2\}C_2\{Q2\}$ и $\{P2\}C2\{Q2\} \nvdash^{*} \{P1\}C_1\{Q1\}$ (см. рисунок \ref{fig:CRTonHoareTriples}), то правила являются \textit{не корректными}. Это означает, что свойство корректности данного вычисления Хора соблюдает  \textit{свойство диаманта/ромба}, т.е.  \textit{теорема Чёрча-Россера} применяется к тройкам Хора (см. \cite{peirce10}).
Если хотя бы одно утвер\-жде\-ние при выводе противоречит результату другого вывода, то правила Хора являются \textit{не корректными} и \textit{не полными} (см. опр.\ref{def:HoareCalculusCompleteness}).
 \textit{Сходимость} --- более жесткое требование, чем кор\-рект\-ность и не всегда соблюдается например, из-за \textit{незавершения вычисления} (более подробнее можно ознакомиться в \cite{steinbach94}). Однако, можно заметить, что корректность обязательное условие для сходимости. То есть, если вычисление не корректно, то имеется хотя бы один случай, когда выводятся два или более различных результата и это обязательно не корректно.
Согласно Штайнбаху \cite{steinbach94} и его рассматриваемой  \textit{системе переписки термов} (с англ. «\textit{term rewriting system}») \cite{baader98}, \textit{терминация программ} сильно определяет сходимость. Он представляет правила вывода в качестве правил переписки термов. С помощью \textit{ограниченности упорядоченных цепочек}, он может в частных случаях доказать тер\-ми\-на\-цию системы переписки. Аппроксимация вер\-хне\-го порога выводов системы в общем из-за теоретической нерешимости не рас\-прос\-тра\-ня\-ет\-ся на  самосодержащие термы или на неограниченные символы \cite{plaisted85}. Идея  \textit{нисходящей цепочки}   тесно переплетается с \textit{теорией доменов} \cite{scott76}. Штайнбах использует цепочки для решения  \textit{лимита} выводов, т.е. для решения вопроса терминации, которая является условием корректности.

Кук \cite{cook78} рассматривает корректность и полноту различных вычислений Хора. Отмечается, что оба свойства существенно могут меняться, например, применив лишь несколько модификаций к переменным в процедурах. Общее понятие полноты по Куку определяется, как \textit{тотальная функция} покрывая все входные программы, учитывая ранее упомянутые свойства в \cite{apt81}.
Корректность определяется, как эк\-ви\-ва\-лент\-ность между \textit{наблюдаемым и должным поведением}, как это предлагается в \cite{davis94}, \cite{nielson99} с помощью  \textit{операционной семантики} \cite{plotkin81} над тройками Хора. Вычисления интерпретируются с помощью  \textit{абстрактного автомата}. Для до\-сти\-же\-ния «\textit{удоб\-но\-го}» --- здесь подразумеваются, либо полные, либо корректные вычисления, вводятся до\-пол\-ни\-тель\-ные ог\-ра\-ни\-че\-ния в императивном языке программирования, такие как:

\begin{itemize}
 \item \textbf{Огран. №1} разрешается использовать  глобальные переменные в процедурах, однако, за\-пре\-ща\-ет\-ся их передавать в качестве актуальных параметров.
 \item \textbf{Огран. №2} запрещаются   параметры по вызову (или по ссылкам).
 \item \textbf{Огран. №3} рекурсивные процедуры и  \textit{функционалы} (\textit{функции высшего порядка})
\end{itemize}

Почему такие ограничения, как только что были показаны, могут привести к некорректности или к неполноте? 
Классический стек при передаче \textit{параметров по ссылкам} нуждается в обратной связи сквозь стек-окон вызовов. Как только, прекращается вызов, то так прекращают существовать и значения. Передача по ссылке разрешает манипуляции единого содержимого в других  \textit{стековых окнах}, но адрес  \textit{указателя} на объект может меняться в каждом стековом окне.
Если допускать \textit{параметры по вызову} или рекурсивные функции, то это может привести к из\-ме\-не\-ни\-ям вне соответствии \textit{вызову стека}, как например, \textit{глобальные переменные}, они практически могут меняться везде. В таких случаях «\textit{наивные}» спецификации могут быть просто неправильными в общих случаях. Этому можно противостоять, ограничив модус входных и выходных параметров процедур.  Иерархические спецификации \cite{schwinghammer09}, \cite{birkedal06} могут сильно раздуть спецификации, и из-за широкой периферии возможных мест в программе, где может поменяться один указатель, польза при этом, явно ог\-ра\-ни\-чи\-ва\-ет\-ся.
В итоге, они также могут оказаться мало эффективными, т.к. все встроенные процедуры специфицируются --- при этом, исходя из наиболее общего сценария. В частности, объектные экземпляры могут потерять свойства идентичности и целостности, т.к. по обобщённой схеме передача процедур в качестве параметра, может привести к не непредвиденным действиям, а следовательно, может кардинально поменять состояние вычисления.\\

Кларк \cite{clarke79} выделяет исключительно актуальные ог\-ра\-ни\-че\-ния вычисления Хора, которые остаются до сегодняшнего дня. Острые ог\-ра\-ни\-че\-ния касаются:

\begin{itemize}
 \item \textbf{Огран. №4}  выразимость и  неполнота \textit{языка утверждений}
 \item \textbf{Огран. №5} ограничения в связи с  инвариантами циклов (преобразование, и т.д.)
 \item \textbf{Огран. №6}  рекурсивные процедуры (не) использующие глобальные и \textit{статические пе\-ре\-мен\-ные}
 \item \textbf{Огран. №7}  со-процедуры в качестве входного и выходного параметра
 \item \textbf{Огран. №8}  динамическое выделение и освобождение ячеек памяти (см. \cite{raman12})
\end{itemize}

Также как и Кук, Кларк подразумевает под корректностью гарантии о том, что все синтаксически корректные теоремы выводятся верно, а все не корректные те\-о\-ре\-мы, как ложные.
Как любая формальная система, вычисления Хора тоже подлежат теоретическим ограничениям.
Например, если рассмотреть  \textit{теорию целых чисел}, то можно всегда придумать всё новые теоремы над целыми числами, которые явно корректны, но которые нельзя доказать данным замкнутым вычислением. Этот феномен лучше известен, как  \textit{теорема Гёделя о неполноте}.
Кларк также замечает, как это ранее до него заметил Кук, что если вычисление Хора содержит не завершающийся цикл, то причиной служит рекурсия в правилах, которая применяется сама к себе, поэтому верификация может не завершаться.
Он предлагает запретить  \textit{псевдонимы ссылок} (см. далее), а ради корректности исключить неограниченную рекурсию.
Про\-бле\-му Кларка можно частично разрядить, если правила сопоставления при\-ме\-ня\-ют\-ся сначала к крайне внешнему терму, вместо к крайне внутреннему терму.
Это необходимо учесть при реализации правил Хора в программных средах, где сначала вычисляются параметры процедур, а только затем, передаётся  содержимое параметров.
Языки программирования без  \textit{ленивого вычисления}, как например,  «\textit{OCaml}», в отличие от \textit{ленивого вычисления}, как например,  «\textit{Хаскель}», должны, если этого требует ленивый алгоритм, добавить ленивое вычисление искусственным добавлением дополнительных проверок на местах использования параметров.
По Кларку выразимость всегда касается языка утверждений, который может быть представлен в виде термов. Ограничение №6 касательно рекурсивных процедур можно исключить, если: (i) рекурсия завершается, т.е. некоторая нисходящая  цепочка вывода всегда существует (см. \cite{steinbach94}) и/или (ii) последовательность параметров и  ти\-пи\-за\-ция па\-ра\-мет\-ров \cite{cardelli96-2} при вызове процедур должна точно совпадать и исключать  переменные, выделенные в актуальном стековом окне, т.е. не глобальные, не ста\-ти\-чес\-кие пе\-ре\-мен\-ные, и т.д.

Кофмэн \cite{kaufmann04} выделяет проблемы выразимости и возможность при\-ме\-ни\-мос\-ти на практике, как наиболее важные, которые сильно противодействуют популяризации \textit{формальному методу} верификации --- языки спецификации и языки про\-грам\-ми\-ро\-ва\-ния.
Наиболее важными практическими проблемами выразимости он считает  ин\-дук\-цию и абстракцию. Для того, чтобы простым образом различать ошибки некорректного вычисления от не\-до\-сто\-вер\-ной спецификации, он также замечает необходимость, простого но обобщённого подхода для генерации   \textit{контр-примеров} --- пример, который доказывает неверность данной формулы с помощью конкретных входных символов.

Герхарт \cite{gerhart76} анализирует существующие к тому времени подходы для решения проблем полноты, которые к сегодняшнему дню решены не в полном объёме и решены не удовлетворительно: (i) наиболее обобщённый подход в решении  терминации прог\-рамм, (ii) решение открытых вопросов верификации переменных в различных об\-лас\-тях памяти, как например,  статические переменные (см. \cite{clarke79}), (iii) вопросы удобства наиболее обобщённого представления и использования систем верификации. Она отмечает, что  \textit{индуктивные определения} являются одним ключевым методом в ре\-ше\-нии проблем выразимости, и поэтому их рассматривает как многообещающий технический аппарат. 
Герхарт выдвигает требование: простые доказательства дол\-жны ссылаться на универсальные и глобальные требования. Из \cite{clarke79} также следует, что проблемы из (ii) можно считать трудными и глобальными, что вряд-ли ми\-ни\-маль\-ная модификация вычисления Хора позволит их решить.

В случае нехватки хотя бы одного правила до завершения верификации, вы\-чис\-ле\-ние считается \textit{неполным}, а сама верификация \textit{неопределённой}.
На практике, уже маленькая модификация программных операторов или свойств единиц спе\-ци\-фи\-ка\-ции, может привести к существенному изменению системы вычисления \cite{clarke79, cook78, cook71}. Это свидетельствует о большой сложности системы верификации.
Пример 1: согласно ограничению № 6 отсутствие статических переменных или рекурсивных процедур может всё равно привести к полной программе, в зависимости от того, какая реальная часть программы рассматривается, и какие комбинации допускаются.
Пример 2: полнота верификации программ с  внутренними про\-це\-ду\-ра\-ми в общем недействительна, когда хотя бы одно из ограничений 6 или 7 не соблюдается. Если два правила приводят к различным результатам, т.е. из данного состояния $A$ вы\-во\-ди\-тся $A \vdash B_1$  и $A \vdash B_2$, при этом $B_1$ и $B_2$ синтаксически различны, но  свойство диаманта соблюдается, то $B_1$ и $B_2$ являются лишь промежуточными состояниями и оба состояния сходимы. Проблема проверки  сходимости методом конечного от\-сле\-жи\-ва\-ния может оказаться неэффективным, т.к. во время ве\-ри\-фи\-ка\-ции необходимо проводить экспоненциальное количество под-доказательств. Эту про\-бле\-му можно избежать, если детерминировать все правила.
Также Кларк при\-во\-дит следующее ограничение для устранения неполноты:

\begin{itemize}
 \item \textbf{Огран. №9} частично-вычисляемые структуры данных.
\end{itemize}

У  частично-вычисляемых структур данных все  поля вычисляются в момент дос\-ту\-па. Типичный пример взят из \cite{thompson97} на языке  \textit{Хаскель} о потенциально бесконечных  линейных списках.

\begin{verbatim}
 take 10 [ (i,j) | i <- [1..], let k = i*i, j <- [1..k] ]
\end{verbatim}

вычисляет

\texttt{[(1,1),(2,1),(2,2),(2,3),(2,4),(3,1),(3,2),(3,3),(3,4),(3,5)]},

од\-на\-ко, определение  линейного списка как второй аргумент от функции \texttt{take} не имеет лимита. Данная структура определяет множество, которое имеет только нижний порог (это целое число 1), но не имеет верхнего порога. С помощью  частично-вычисляемых структур, можно проверить, насколько  \textit{строго вычисляет} данная процедура.
«\textit{Строго}» подразумевает, что данная процедура сначала вычисляет все  входные параметры, а затем их  записывает в память, а \textit{не строгие} процедуры означают, что  входные параметры вычисляются частично и только тогда, когда они требуются на данном этапе алгоритма (см. \cite{thompson91}, \cite{thompson97}).
Таким образом, из этого можно сделать вывод, на\-при\-мер, проверить  терминацию программы можно, используя  бесконечно определённую структуру данных в качестве быстрого теста.

Уэнд \cite{wand76} подразумевает под  \textit{полнотой функции} определения Кука, уточняя, что каждый верный параметр на входе соответствует верному параметру на выходе, а каждый не верный параметр соответствует ошибке. То есть, переходная функция должна быть  тотальной, и все  не терминирующие функции не определены по умолчанию (см. опр.\ref{def:HoareCalcCorrectness}).
Уэнд показывает, что спецификация программы, представленная  графом потока управления тогда не определена и не полная в общем случае, когда для описания графа используется  \textit{логика высшего порядка}.
По Уэнду  квантифицируемые предикаты должны давать «\textit{понятное оп\-ре\-де\-ле\-ние}» самым лучшим  образом так, чтобы человек прочитав лишь спе\-ци\-фи\-ка\-цию, мог бы сразу и интуитивно охватить полностью замысел предиката.

Кук \cite{cook71} сравнивает решение проблемы  \textit{выполнимости булевых формул} с тео\-ре\-ти\-чес\-ки\-ми оценками решения проблем верификации. Статья представляет тео\-ре\-ти\-чес\-кие пороги сложности. С практической точки зрения, статья не пригодна по двум причинам. Первая причина - пороги слишком грубые и поэтому для при\-ме\-не\-ния недостаточны. Вторая причина - основной характер статьи --- это философский дискурс познавательного характера.

Лэндин \cite{landin64} предлагает элементы формального вычисления преобразовать в вы\-ра\-же\-ния  $\lambda$-термов, как это было изначально  предложено Чёрчом. Под фор\-маль\-ным вычислением можно также подразумевать вычисление Хора. Лэндин на примерах  условных переходов и  рекурсии показывает, что термы полностью покрываются. Более того, он показывает, что программы на основе   функциональной парадигмы \cite{thompson97}, \cite{bird88} могут быть представлены также в программе на императивном языке про\-грам\-ми\-ро\-ва\-ния с помощью  \textit{замыканий} (с англ. «\textit{closure}») и на основе  операционной семантики. То есть, представляются обобщённые  \textit{модели вычислимости}, о которых важно знать при моделировании систем верификации.\\

Исходя из опыта последних декад, Апт \cite{apt81} определяет некоторый обобщённый диалект  Си с редуцированным множеством  программных операторов для анализа  полноты и  корректности. В дополнении ранее обсуждаемых работах, Апт видит общую рекурсию, как наиболее важную проблему, которую трудно прогнозировать в отношении следующего состояния программы. Поэтому, он предлагает ог\-ра\-ни\-чи\-ться  примитивной рекурсией. Также Апт предлагает допускать только те вызовы про\-це\-дур, у которых актуальные параметры совпадают с  \textit{формальными па\-ра\-мет\-ра\-ми}, например,  «\textit{инкорректные параметры}» исключать, т.к. они могут послужить це\-ло\-му ряду аномалий. Например, неверное разделение,  неинициализированные поля, а также полученный  функционал в связи с  отсечением параметров при вызове, могут полностью поменять семантику процедуры, но принципиально не решить ни одну проблему. С помощью урезания параметров можно сильно варьировать семантику и заметно изменять синтаксис.

Кук \cite{cook78} и другие упомянутые авторы рассматривают две основные проблемы полноты:

\begin{itemize}
 \item \textbf{Полнота №1}  Незавершение вычисления процедуры.
 \item \textbf{Полнота №2}  Ограничение выразимости  языка утверждений (например, предикаты и  ин\-ва\-ри\-ан\-ты).
\end{itemize}

В виде примера корректных и  полных правил Хора, Кук выделяет рисунок \ref{fig:SoundNCompleteExample} (Кук использует обратную запись скобок, как и Хор).
На рисунке $A_j$ представляют программные операторы, $D$ является блоком декларации пе\-ре\-мен\-ных, $\sigma$ является переменной средой, а $\star$ обозначает \textit{звезду Клини}. $\sigma$ имеет тип:
$$\textnormal{имя переменной} \rightarrow \textnormal{значение}$$

В дополнении к ограничению Хора 6, надо отметить принципиальное ограничение:

\begin{itemize}
 \item \textbf{Огран. №10} не автоматически выделенные переменные
\end{itemize}

Ранее уже упоминалось, что подключение глобальных, статических и динамически выделенных переменных, может аннулировать правила Хора.
Далее,  локальные переменные используемые в некоторых  нитях одновременно, могут также ан\-ну\-ли\-ро\-вать правила, т.к. аспекты параллельного вычисления в общем не могут быть покрыты, например, правилами Кука. Тема этой работы посвящается однопроцессорному вычислению, а не параллельному запуску программы.

Для полной индустриальной применимости, необходимо включить обработку  \textit{ис\-клю\-че\-ний}, которые просто так не вписываются в существующие правила и довольно трудно формально корректно и полностью охарактеризовать не только, но и в целом из-за возможного изменения  стека при каждом  программном операторе по-разному (см. \cite{dedinechin00}, \cite{goodenough75}). До сегодняшнего дня не было найдено предложение о  вычислении Хора, которое позволило бы (корректно и/или пол\-но\-стью)  \textit{отматывать ди\-на\-ми\-чес\-кую память}, аналогично \textit{стеку} \cite{goodenough75}.

Хор \cite{hoare69} считает  абстракцию спецификации самой главной преградой для ве\-ри\-фи\-ка\-ции мало тренированным инженерам, которые желали бы быстро научиться верифицировать простые примеры. По Хору тяжело формализовать утверждения для программных меток, безусловных переходов и передач параметров по имени.
Он не настаивает и не опровергает использование любой  логики, любого порядка, однако, он считает декларативную спецификацию утверждения решительным фак\-то\-ром успеха.


\subsubsection*{Автоматизация логического вывода}
Цель этого раздела заключается в введении и демонстрации практических и тео\-ре\-ти\-чес\-ких проблем  автоматизированной верификации.
В этом разделе мы рас\-смо\-трим примеры в системе верификации  «\textit{Coq}». В Coq \cite{bertot04} можно доказывать заранее специфицированные утверждения с помощью теорем, различных ин\-дук\-тив\-но определённых структур и различных команд на основе типизированного $\lambda$-вы\-ра\-же\-ний второго порядка.
Утверждения задаются на функциональном языке  «\textit{Gallina}», а последовательность доказательств задается языком последовательных команд  «\textit{Vernacular}».
«\textit{Coq}» является ассистентом доказательств потому, что часто он сам не в состоянии полностью и самостоятельно, даже для простых примеров, находить доказательства. Вместо этого, в «\textit{Coq}» имеется множество узкого круга поддерживаемых  теорий, которые можно подключать в работающее  ядро ассистента.  Ассистент позволяет записывать и отслеживать доказательство и выявить про\-ме\-жу\-точ\-ные состояния доказательства.

В последовательность команд включается набор так называемых  «\textit{тактических команд}». Они пробуют текущее представление программного состояния упростить полу\--авто\-ма\-ти\-чески, используя рассматриваемую теорию.
Теория  ве\-щес\-твен\-ных чисел, например, используется для ускорения сходимости, результатом чего является, либо завершение доказательства, либо упрощение состояния вычисления.
«\textit{Coq}» --- довольно мощная и широко используемая  платформа для верификаций, что в области авто\-ма\-ти\-за\-ции доказательств теорем, можно встретить не так часто. Применение также включает в себя верификацию корректности фреймворков компиляции \cite{rideau08}, \cite{leroy09}, \cite{blazy06}, \cite{blazy05}, \cite{leroy06}, \cite{leroy09-2}, \cite{raman12}.

Формулы из  логики предикатов задаются языком «\textit{Vernacular}».

«\textit{Coq}» использует при редукции нормализованные формулы, которые могут быть не определены или  не полностью определены. «\textit{Coq}»-схемы  основаны на \textit{ленивой редукции} на вычислительной модели  $\lambda$-вычисления второго порядка (см. \cite{cardelli96-2}, \cite{peirce10}, \cite{mitchell96}). Разумеется, что из-за произвольного вида правил, естественно не может быть гарантии касательно  полноты.  Типизация $\lambda$-вычислений позволяет избегать целый ряд  \textit{парадоксов типизации}, в связи с \textit{рекурсивными определениями Кан\-тор\-ских множеств} (см. \cite{barendregt93}, \cite{bird97}), как  термы содержавшие сами себя. От\-сутс\-твие типизации может приводить, к  \textit{парадоксу Расселя о брадобрея}.

Примерами корректно определённых $T_{\lambda2}$-типов, являются например, $\forall a.a$ или $(\forall a_1.a_1\rightarrow a_1)\rightarrow (\forall a_2.a_2\rightarrow a_2)$.

Корректно определёнными $\Lambda_{T_{\lambda2}}$-типами являются, например, $\Lambda a.\lambda x: a.x$ или 
$$\lambda x:(\forall a.a\rightarrow a).x (\forall a.a\rightarrow a) x.$$
Редукция ($\beta$-редукция, см. \cite{barendregt93}) $\lambda$-термов проводится как применение возможно неопределённого терма к данной $\lambda$-абстракции, например: 
$$(\Lambda a.\lambda  x: a.x) \ Int \ 3$$
можно редуцировать к $(\lambda x: Int.x)\ 3$, далее редуцируется к «$3$», при этом $Int$ тип целых чисел, т.е. множество $V$.

Задача  редукции $T_{\lambda2}$-термов заключается в: (1) вычислении результата и (2)  про\-вер\-ке типов вычисления. Проверка типа отличается от задачи верификации, отсутствием состояния.

Ради простоты, в определении базовые правила были упущены, т.к. уни\-вер\-саль\-ность разрешает  их определить более обобщёнными не типизированными $\lambda$-вы\-чи\-сле\-ни\-я\-ми \cite{barendregt93}, например, аксиома $\overline{\Gamma \vdash x:t}$ или правила ($\exists$-Intro) и ($\exists$-Elim), которые определяются аналогично ($\forall$-Intro) и ($\forall$-Elem).

Так как Coq основан на $T_{\lambda2}$ и редукция  \textit{редексов} производится снаружи во внутрь, то проблема проверки типов решима. Сложность проверки линейная. Для \textit{стратегии редукции с внутренней стороны к внешней}, в общем случае, проверка не решима (см. \cite{peirce10} и \cite{bertot04}).\\

Рассмотрим относительно простой пример «\textit{исключённого третьего}» на рисунке \ref{code:CoqPeircesTautology}. Пример тавтологии $p \vee \neg p$ можно считать интуитивно понятным, но при отсутствии  ло\-ги\-чес\-ких таблиц и при использовании только правил импликации --- так оно положено в  \textit{интуиционистском суждении логических утверждений}, задача может оказаться го\-раз\-до труднее. В зависимости от набора правил, \textit{тавтология Пирса} может быть в принципе не выводима.

Рассмотрим пример из рисунка \ref{code:CoqPeircesTautology}.
Нам необходимо доказать, что первое определение \texttt{peirce}, которое обходится без дизъюнкции и отрицания, но содержит импликацию «\texttt{$\rightarrow$}» может быть преобразовано в определение \texttt{lem}. Обратим внимание, что оба определения содержат кван\-ти\-фи\-ци\-руе\-мые утверждения. Доказательство является последовательностью команд, начиная после  ключевой команды  \texttt{Proof} и заканчивая перед ключевым словом  \texttt{Qed} (перевод c латинского «\textit{quod erat demonstrandum}» означает, «\textit{что и требовалось доказать}»). Сначала имеется лишь теорема о равенстве обоих теорем, затем обе стороны равенства развёртываются.
Затем \texttt{firstorder} пробует сопоставить $\forall$-кван\-ти\-фи\-ци\-ро\-ван\-ные утверждения преобразованные в  \textit{нормальную форму Сколема}. Те\-перь необходимо доказать на правой стороне определения \texttt{lem}, что для любого предположительного верного утверждения $p$,  $p \vee \neg p$ также верно.
В этом случае, левая сторона является доказуемой  гипотезой $H = \forall p,q. ((p\rightarrow q)\rightarrow p)\rightarrow q$, которую необходимо доказать для любых утверждений $p$ и $q$. Сейчас $p \vee \neg p$ заменяется $p$, $q$ сопоставляется $p \vee \neg p$. Таким образом, мы получаем $(p \vee \neg p \rightarrow \neg (p \vee \neg p)) \rightarrow p \vee \neg p$ --- в качестве доказуемой текущей гипотезы, по-прежнему условию, что $p$ является верным утверждением. Теперь, чтобы доказать верность гипотезы, необходимо следствие \texttt{peirce} отделить от предусловия. Необходимо текущую гипотезу аб\-стра\-ги\-ро\-вать и затем опять преобразовать в нормальную форму Сколема, чтобы $q$ более не являлась квантифицируемой переменной. Затем мы получаем новую более простую гипотезу $H_0 = (p \rightarrow q) \rightarrow p$, которую нам удастся доказать, если предположить, что из $H = \forall p. p \vee \neg p$ правая часть дизъюнкции верная. То есть, мы вводим новую гипотезу $H_1 = \neg p$. Такой выбор произвольный и применив $\neg p$ к $H_0$, нас сразу приведет к тому, что $p$ верное, потому, что  импликация всегда верна, как только левая сторона импликации ложная. Так как мы предположили изначально, что $p$ является верным утверждением, мы доказали правоту теоремы.  Тактика  \texttt{tauto} обязательна для успешного завершения доказательства теоремы, которая из данных существующих гипотез и верных утверждений пытается простейшим образом, механически найти завершение доказательства без каких-нибудь дополнительных знаний о теореме или используемых лемм.

Обратим внимание на то, что, хотя пример простой и интуитивен, успешное доказательство все-таки требует довольно не малых ресурсов для  преобразования в нужную форму. Для применения одной или иной тактики требуется также применение, не очевидных  формул абстракций. Не трудно заметить, что автоматически такого рода \textit{нестандартные преобразования} будет очень тяжело выявить и рас\-поз\-нать.

Если  правила Хора  полны и удастся провести доказательство из аксиом до доказуемой теоремы, то доказательство может быть \textit{автоматизировано}. Если пра\-вила Хора полны, то любая теорема может автоматически доказываться -- ответ, либо положительный, либо отрицательный.
Чтобы выразить  \textit{формальную теорию}, необходимо определить:  \textit{семиотику},  \textit{синтаксис} (см. например опр.\ref{def:FirstOrderPredicateLogicFormula}),  \textit{семантику}. Когда речь идёт о языках, а формальная теория обозначается именно выражениями некоторого языка, целесообразно определить  \textit{прагматику}. Семантики, например, \textit{аксиоматичная семантика}, обозначают в формальной системе, например в вычислении Хора, какое выражение выводимо или нет.  Знаки и взаимосвязи формул являются абстракцией некоторых реальных предметных объектов или физики (классический термин был введён в аналогии реальных объектов, которые при\-над\-ле\-жат естественным правилам природы). Поэтому, описание элементов «\textit{физики}» исторически часто называется  \textit{метафизикой}, т.е. абстрагированной физикой, либо  логикой. Необходимо отметить, что любая  \textit{формальная логика} также является по определению специфической  \textit{формальной алгеброй}.

Аналогично  модульному программированию, доказательства могут строиться с целью упрощения и выявления основных мыслей доказательств. Более подробное обозначение находится в следующих главах. Для аналогии послужат \textit{единицы доказательств}  \textit{леммы} (вспомогательные теоремы),  \textit{теоремы} и  \textit{ин\-дук\-тив\-ные определения}. Самым простым примером индуктивного определения можно считать  естественные числа (см. рисунок \ref{code:NaturalNumbers}).

 Мы не будем отдельно описывать тактики, т.к. тактики лишь вспомогательные образцы последовательности команд при доказательствах. Они лишь имеют не\-ко\-то\-рый абстрактный характер для упрощения написания доказательств. Абстракция пре\-ди\-ка\-тов, для написания и восприятия человеком, имеет большое значение. Можно в общем считать: чем проще доказательство, тем оно лучше.
Символы и предикаты могут быть использованы и их определение необходимо  раз\-вёр\-ты\-вать, лишь при необходимости.
Исходя из текущего состояния вывода и со\-че\-та\-ем\-ых правил, желательно было бы автоматизировать принятие решения системой ве\-ри\-фи\-ка\-ции --- когда развёртывать определение и когда свёртывать часть данной формулы обратно к определению.
Количество и порядок развёртываний и свёртываний заранее не предсказуемо.
По ранее упомянутым причинам, лучше редуцировать термы  лениво и снаружи во внутрь, иначе текущая редукция может приостановиться, хотя имеются редексы.

Из рисунка \ref{code:CoqPeircesTautology} можно выявить: следующие проблемы касательно автоматизации доказательств.

\begin{enumerate}
 \item Описание проблемы соперничает с выразимостью. Это может привести к се\-рье\-зным ограничениям выразимости и к раздутым описаниям.
 \item Упрощение равенств косвенной теории раздувает объём и количество правил Хора, хотя, например,  арифметическая теория не связана на прямую с со\-сто\-я\-нием памяти или программными операторами. Это также препятствует автоматизации.
 \item Объяснения при отрицательном выводе, либо отсутствуют полностью, либо желают лучшего, например, интуитивный и обобщённый метод  генерации контр\-примера.
\end{enumerate}

Подробнее ознакомиться с проблемами автоматизации доказательства теорем мо\-жно в статьях Воса \cite{wos91}, \cite{wos88}, несмотря на возраст статей, до сих пор проблемы в основном остаются актуальными. Проблемы Воса можно разбить на три класса: (i) представление  модели утверждений, (ii) представление правил вывода, (iii) выбор оптимальной  стратегии и тактик логического вывода. Для повышения эффективности верификации, Вос предлагает вводить па\-рал\-ле\-ли\-за\-цию, индексацию баз данных и знаний для более быстрого доступа и обработки данных, а также анализировать модифицированные технологии вывода. В статьях можно обратить внимание на следующие, часто неявные, замечания Воса:

\begin{itemize}
 \item  Проверка типов может быть полезной с целью избежания ста\-ти\-чес\-ких ошибок, но на самом деле бесполезной для верификации.
 \item Нотация формул, возможно, не столь важна, но охватить  семантику (им\-пе\-ра\-ти\-вную) программы важно.
 \item Хотя  метод резолюции широко обсуждается в статье,  \textit{естественный вывод} все-таки рассматривается как более эффективный, особенно на практике, чем, например \textit{метод резолюции}. Причину этому, наверное можно искать только в близости к классическим математическим доказательствам.
 \item Существует соперничество между локальным и глобальным поиском оптимума в доказательствах.
\end{itemize}

Из всех проблем, которые Вос выявляет \cite{wos91}, наиболее важной можно считать избавление от избыточных формул при спецификации, а при верификации от из\-бы\-точ\-ных частей повторного вывода.
Главный лозунг Воса: «\textit{Лучше упрощать утверждения, чем перебирать различные варианты}». Эвристика Воса считает в худшем случае лучше с полиномиальной частотой проводить упрощения насколько возможно, чем рисковать экспоненциальный взлет поискового пространства про\-вер\-ки. Вос исходит из 90\% сбережений. Престон \cite{preston88}, ссылаясь на Воса, ожидает, что удовлетворительное исследование каждой из первоначальных проблем Воса \cite{wos88} займёт объём работы, как одна диссертационная рукопись.
Овербейк \cite{overbeek88}, ссылаясь на \cite{wos88}, даёт каждому из перечисленных проблем обзор практических примеров. Мекок \cite{macock75} является старой, но не устаревшей в целом, обзорной статьёй. Галлье \cite{gallier03} является более актуальной и подробной монографией по теме автоматического доказательства. Его проблемы совпадают большой частью с обсужденными. В \cite{leroy11} Леруа задаёт вопросы к общему процессу верификации и даёт критические оценки с практической точки зрения, его оценки совпадают в основном с упомянутыми. Он, Аппель и Докингс \cite{appel07} предлагают обществу разработчиков и исследователей соблюдать общие принципы при разработке верификаторов для простого сравнения между собой.\\\\

\textbf{Абстрактная интерпретация.} Кусо(-вые) \cite{cousot77} \cite{cousot92} предлагают пер\-выми  формальный метод (см. опр.\ref{def:FormalProof})  «\textit{абстрактной интерпретации}», как универсальный  ста\-ти\-чес\-кий метод, ос\-но\-ван\-ный на аппроксимации пошагово вычисляемых ин\-тер\-пре\-та\-ций данной прог\-рам\-мы. Метод анализирует граф потока управлений \cite{nielson99} данной программы.
Если граф не нормализован, т.е. либо нет ровно одного со\-сто\-я\-ния входа, либо одного состояния выхода, то граф преобразуется именно в такой, объединив входные и выходные состояния.  Программные операторы упорядочиваются не строго согласно  алгебраической решётке по порядку операторов.
С помощью абстракции, ин\-тер\-пре\-та\-ция пытается приблизить лимит \cite{abramsky94} ради введения новых зависимых параметров c помощью ограниченного случая (\textit{су\-же\-ния диапазона видимости}) и с помощью общего случая (\textit{расширения диапазона}). Таким образом, неинициализированные переменные имеют порог $[- \infty \ .. \ \infty]$.
Процесс  по\-ша\-го\-вой аппроксимации интерпретации считается завершенным, как только, две последовательные интерпретации равны. Ин\-тер\-пре\-та\-ции обновляются после каж\-до\-го программного оператора.
Из-за \textit{проблемы приостановки} интерпретации могут иметь не предсказуемые, т.е. арифметически очень большие, либо очень маленькие  лимиты, на\-при\-мер, в циклах с неопределёнными значениями переменных при входе в цикл.
\textit{Статическое вычисление допустимых порогов зна\-че\-ний} \cite{nielson99} является одним клас\-си\-чес\-ким применением.  Методы Кусо могут быть модифицированы и применены в самых различных областях, например, в \cite{gcc15} для улучшения быстродействия в связи с предсказыванием более оптимального ветвления запуска кода.
Чтобы сравнить равенство между двумя абстрактными интерпретациями, вводится условие для  изоморфизма изображений.
Две ин\-тер\-пре\-та\-ции $I_1, I_2$ считаются эк\-ви\-ва\-лент\-ными, когда существуют два  отображения $\alpha, \alpha^{-1}$, которые  \textit{инъективны} и  \textit{сюръ\-ек\-ти\-вны}, c которыми интерпретация $I_1$ преобразуется с помощью $\alpha$ в $I_2$, и $I_2$ обратно в $I_1$ с помощью $\alpha^{-1}$.
Хотя метод Кусо(-вых) универсален, его можно было бы автоматически с трудом преобразовать в правила Хора и наоборот из-за целого ряда причин. Во-первых, метод Кусо предполагает  аппроксимацию, а правила  Хора знают только точные результаты, либо символьные значения. То есть, можно было бы расценивать систему правил Хора, как частный случай метода Кусо(-вых), хотя цель метода направлен все-таки на арифметическую аппроксимацию интервалов. Во-вторых, для верификации вычисления Хора, ис\-поль\-зу\-ют\-ся термовые утверждения некоторого языка спецификации и т.д. 
Ком\-мен\-ти\-ро\-ван\-ный пример  абстрактной ин\-тер\-пре\-та\-ции находится на рисунке \ref{fig:CFGEx1}.

Штайнбах \cite{steinbach94} демонстрирует наглядно, что  терминация является жест\-ким ус\-ло\-ви\-ем,  предусловием  сходимости (например, применив  \textit{алгоритм Кнута-Бендикса}) и  полноты, а сам вопрос терминации в частных случаях является  решимым.

Кроме упомянутых ограничений, отдельно от этого, имеется теоретическое ограничение в связи с  \textit{арифметикой Пресбургера}.
Арифметика Пресбургера яв\-ля\-ет\-ся настоящим подмножеством более известной  \textit{арифметики Пиано} над  натуральными числами, которая знает только об операции «+» (или инверсная ею операция минус) \cite{presburger29}, и где любые другие операции отсутствуют.
Это приводит к тому, что «\textit{простые входные программы}» содержавшие плюс, можно в выражениях за конечное время решить в  предикатной логике первого порядка (на\-при\-мер, в решении офсета для ссылочных указателей, см. позже), в отличие от выражений по ари\-фме\-ти\-ке Пиано или сложнее.
В частности, это имеет значение для ре\-ши\-мос\-ти автоматического доказательства теорем \cite{cooper72}, \cite{cherniavsky76} (см. набл.\ref{obs:ModelOfComputationVerification}). Решимость выражений из арифметики Пресбургера ограничена $\Theta(m,n)=2^{2^{n \cdot m}}$, где $n$ ми\-ни\-маль\-ное количество пред\-ла\-га\-емых разветвлений поиска, а $m\geq 1$ некоторый константный фактор \cite{fisher74}. Сложность разрастает очень быстро для $n$. Важ\-но заметить, что несоблюдение арифметики Пресбургера, например, включение даль\-ней\-ших операторов, на практике не обязательно приводит к нерешимости частных случаев, однако, соблюдение ариф\-ме\-ти\-кой может сильно снизить  верхний порог решимости доказуемых теорем.

Поммерель \cite{pommerell94} оценивает теоретические критерии сложности и при\-хо\-дит к вы\-во\-ду, что неэффективные реализации прикладных  теорий являются са\-мым большим барьером эффективной верификации. К примеру, он приводит состояния моделирование  троек Хора в качестве огромных линейных систем уравнений, строки чьи, почти все пустые.
Ситуацию можно переломить, если прикладные теории изолировать от правил Хора.
Ре\-ко\-мен\-ду\-ет\-ся использовать   \textit{решатель}, который упростит анализ  полноты ос\-нов\-ных (\textit{логические}) и прикладных правил (\textit{структурные}). Из практического опыта, решатель резко повысит эффективность решения прикладных теорий \cite{nelson78}. Ре\-ша\-те\-ли запускаются во время верификации при необходимости.
Примерами ре\-ша\-те\-лей являются  функциональные языки  «\textit{Why}» \cite{why15} и  «\textit{Y-not}» \cite{nanevski08-2}.

\subsubsection*{Альтернативные подходы}

Как было упомянуто в начале главы,  верификация является  формальным методом с помощью которого, можно проверить, соблюдает ли данная программа спе\-ци\-фи\-ка\-цию или нет.
Верификация  троек Хора является статическим методом без запуска программы.
В зависимости от точности спецификации для достижения максимально качественного уровня ПО при разрастающейся сложности и функ\-ци\-о\-наль\-ности данной системы, на практике может потребоваться добавление, уточнение и сокращение специфицируемых свойств. Необходимо заметить, что сокращение правил происходит всегда тогда, когда правила обобщаются, т.е. отпадает не\-об\-хо\-ди\-мость специфицировать детально.
Кроме представленного подхода в прошлых разделах, упоминались ряд иных  статических подходов, как, например,  \textit{проверка моделей} и  локализации проблемы с помощью удаления синтаксического и окружающего кода. Также имеются динамические подходы, как, например: тестирование, безошибочная разработка программного обеспечения и расширенная  проверка типов.

\textbf{Подход №1 -- Тестирование.}  Тестирование является динамическим подходом проверки ранее подготовленных сценарий, в которых данная программная система или модуль (не) вычисляет ожидаемый результат. Сценарии это тесты, которые задаются разработчиком вручную и описывают свойства по отдельности. Когда тест успешен, следующий тест проводится до тех пор, пока все тесты выполнены.
Тест, это критерий качества, который (не) соблюдается при запуске программы.
Если хотя бы один тест не успешен, то тестирование провалилось. Когда сценарий теста известен, то, все, входные данные и выходные результаты, сравниваются согласно данному модулю. Если при проведении теста требуется ручное вмешательство, то тест не автоматизирован.  Автоматизация теста (АТ) \cite{beck15} дает существенные преимущества: высокая эффективность, т.к. за короткий промежуток времени можно проверить большое количество тестов. Далее, исключаются ошибки из-за неточных входных данных. Все внешние зависимости должны описываться тестом. Если тесты простые, то модули можно легко понять и вероятность выше, что ошибки отсутствуют.

Для данной программы $p$ и набора тестов $\forall i.t_i(p)$, каждый из тестов запускается для $p$ и наблюдается, либо успех, либо провал. Каждый из тестов можно рас\-сма\-три\-вать как утверждение одного аспекта функциональности ПО. Недостатками АТ можно считать ручную постановку, в зависимости от сложности графа потока управления $p$. Число $n$ необходимых тестов $p_1,p_2,...p_n$ для покрытия всех  ветвей может сильно возрастать. Чем сложнее получается  граф в частности, когда граф содержит обратные связи и циклы, тем сложнее ПО в общем.

\textbf{Подход №2 -- Автоматическая проверка моделей.}
Основной проблемой подхода №1 является экспоненциально возрастающий  объём тестов, который трудно авто\-ма\-ти\-зи\-ро\-вать для представительного множества тестов. Идея проверки модели заключается в описании данной программы с помощью утверждений и временных выражений, которые затем проверяются пошагово, изучив выполнимые входные. Проблема заключается в том, что решения дискретных проблем часто не три\-ви\-аль\-ные и могут быть выявлены, либо с большими затратами, либо вовсе не могут быть выявлены. Если уравнения сильно ограничиваются, то вероятность высокая, что результаты пропускаются. Если уравнения слишком обобщённые, то ре\-зуль\-та\-ты тоже могут теряться или быть не найденными из-за отсутствия быстрого нахождения, либо отсутствия решения в принципе. К уравнениям \textit{модели}, которые описывают  булевую формулу выполнимости и при правильных результатах всегда верна, при\-ме\-ня\-ет\-ся стратегия поиска для расширения найденных результатов \cite{clarke99}, сог\-лас\-но данной формуле.
Отсутствие эффективной стратегии поиска равномерно к не\-за\-вер\-ше\-нию или к очень медленному  завершению проверки \cite{kohli94}.
Пред\-ла\-га\-ет\-ся целый ряд оптимизаций, как например  \textit{символьное использование} \cite{clarke99} при работе с  \textit{моделями} или статистические методы над контролем расширения \cite{kwiatkowska07}.
Прог\-рам\-мные инструменты включают в себя, например  \textit{«VD\-M++»} \cite{weissenbacher01}, \cite{weissenbacher04}.

Альтернативно к аксиоматическим правилам можно назвать (автоматическое) тестирование \cite{marinov12} и проверку моделей, как это было впервые введено Пе\-ле\-дом и Кларком \cite{clarke99} (\cite{kroening09} \cite{weissenbacher04} представляют собой обзоры современных инструментов).
Преимущество тестирования, в отличие от проверки моделей, является простота проверки поведения программы без дополнительных формул, но, проблема полного перебора всех возможных тестов, остается без из\-ме\-не\-ния в обоих случаях.

\textbf{Подход №3 -- Безошибочная разработка программного обеспечения.} Аль\-тер\-на\-ти\-вой ранее представленных подходов можно считать \textit{«прог\-рам\-ми\-ро\-ва\-ние вообще без ошибок»}, при котором не требуется статическая или динамическая фаза проверки, точнее фаза проверки проводится одновременно с моделированием ПО, упираясь на знакомые и  «\textit{хорошие}» паттерны \cite{kerievsky05}. Если ошибок мало или ошибки незначимые с точки зрения быстродействия в не критическом месте, то естественно количество ошибок будет стремиться к нулю. Если количество программных строк приближается к нулю, то и количество возможно совершенных ошибок будет приближаться к нулю -- эти утверждения всегда в силе, независимо от конкретной программы и не требуют дополнительного объяснения: пустая программа содержит минимальное количество ошибок. 
Ошибки в программе \textit{создаются} человеком. --- Почему? При создании программы полностью исключить  «\textit{человеческий фактор}» по определению не возможно: ПО и спе\-ци\-фи\-ка\-ция задаются человеком. Ошибки в программах тоже создаются человеком. Отладка и обнаружение возможных ошибок проводится человеком.  Абстрактные типы данных, алгоритмы и интуиция --- всё это свойственно для человека. Разработка и описание модулей и интерфейсов ПО проводится человеком. Следовательно, не удастся исключить человека при написании ПО, но могут иметься подходы, которые пытаются исключить ошибки на более ранней стадии при создании ПО.

\textbf{Разработка ПО.} Вопрос об отсутствии ошибок в ПО на раннем этапе нужно искать при построении и моделировании, например с по\-мо\-щью   языка моделирования  «\textit{UML}» и  «\textit{OCL}». На практике много раз доказано, что с помощью тщательного моделирования можно избежать множество типичных ошибок, но увы, не всегда и не все виды ошибок это покрывает потому, что:
\begin{itemize}
 \item индивидуальности предполагаемого разработчика, создание ПО является твор\-чес\-ким трудом человека.
 \item трудности в предсказывании ошибок со ссылками и в том числе, из-за трудности локализации проблем.
\end{itemize}

\textbf{Преобразования.} Задача проверки корректности  \textit{транс\-фор\-ма\-ций струк\-тур данных} заключается в соблюдении корректного перехода от одного графа к другому \cite{dodds08}, соблюдая программные операторы. Подход похож на позже пред\-став\-лен\-ную модель динамической памяти в отрезках. Однако, внутри описывается и работает трансформация как   «\textit{система переписывания}», поэтому в даль\-ней\-шем она не будет рассматриваться.
Системой переписывания является, например  «\textit{Stratego XT}» \cite{johann03} или \cite{haberland08-1}, \cite{dodds08}. \cite{katz73} предлагает использовать эвристики систем переписывания для лучшей сходимости верификации программ.

\textbf{Подход №4 -- Проверка типов.}  Проверка типов исключает большинство оши\-бок, которые могут возникать из-за некорректного применения типов переменных и выражений. Например, присвоение  32-разрядного целого числа, 8-разрядному числу плавающей точки, в лучшем случае, может «\textit{проглатить}» или быть без\-обид\-ным, а в худшем случае, может привести к совершенно иному числу, если интерпретируются и копируются только все передние 8 от 32 битов.

Задача проверки типов заключается в выявлении проверки совместимости данных и выявленных типов в выражениях программных операторов.
Однако, если про\-вер\-ка типов соблюдается, то можно сказать, что семантическая сходимость типов соблюдается. Из совместимости типов не следует утверждать, что данный цикл «\textit{правильный}». Для того, чтобы можно было это утверждать, необходимо сначала определить свойство, а затем проверять. Увы, проверка типов является лишь предыдущим шагом перед верификацией. Проверка типов не со\-дер\-жит ин\-фор\-ма\-цию о  зависимостях данных и тем более не содержит никакой информации о том --- какие, на каком этапе и сколько объектов будет выделено в динамической памяти. Эта информация проверяется на этапе  «\textit{ста\-ти\-чес\-ко\-го ана\-ли\-за}» и является независимой от этапов построения программы:\\

Проверку типов можно задать как «$\Gamma \vdash e:t$?», где $\Gamma$ переменная среда содержавшая термы вместе с типами, $е$ является проверяемым термом и $t$ является проверяемым типом (см. \cite{barendregt93}). Для проверки, имеет ли данная последовательность программных операторов $e$ тип $t$, используя аксиомы и правила типизации $\Gamma$, необходимо проверить каждый из операторов в $e$. В отличие от троек Хора, термам пред\-ста\-вля\-ю\-щие программные операторы, присваивается тип, т.е. множество до\-пус\-ти\-мых значений.
В отличие от  вычисления Хора,  проверка типов только ус\-та\-на\-вли\-ва\-ет принадлежности к множеству возможных значений, т.е. к некоторому типу и проверяет сходимость типов. Частное значение некоторого объекта не специфицируется.
Имеются исключительно  программные операторы и типизированные выражения.
При про\-вер\-ке типов отсутствует также понятие как  «\textit{состояние вычисления}», при корректной типизации состояния, либо выводимы из \textit{набора правил} $\Gamma$  типизации, либо нет.

Аналогично  вычислению Хора,  проверка типов проводится снизу-вверх, т.е. про\-вер\-ка начинается с данного $e$ и успешно завершается в случае вывода.
Проверка типов является проверкой узкого круга свойств программ, т.е. является сильно ограниченной формой верификации троек Хора. Проверка типов программы яв\-ля\-ет\-ся обязательной семантической проверкой программы, которая должна про\-во\-дить\-ся до любых других семантических проверок, как например, верификации свойств  динамической памяти. Как мы увидим позже, представленные модели динамической памяти, также как и другие семантические фазы исходят из того, что выражения при присвоении годные, иначе нельзя считать входную программу корректной, как тот язык программирования интерпретируется \textit{строго} согласно спецификации, например \cite{isocpp14}.

Если далее сравнивать проверку  типов по Хинд\-лей и Милл\-не\-ру \cite{hindley69}, \cite{peirce10}, \cite{bruce02} с  трой\-ка\-ми Хора, то необходимо заметить, что при проверке типов, предусловие всегда самое универсальное, а значит верное. Далее, при проверке типов, имеются фундаментальные проблемы, которые являются следующими эк\-ви\-ва\-лент\-ны\-ми проб\-ле\-ма\-ми при проверке упрощённых троек Хора (см. глава \ref{chapter:logical} на рисунке \ref{fig:HoareCalcVSTypeSystem}).

Были предприняты расширения, которые позволяют: проверять сов\-мес\-ти\-мость  записей, ужесточение типов, свойства неприсвоенности и статические лимиты объёмов базисных структур данных \cite{flanagan02}. Но для проверки свойств  ди\-на\-ми\-чес\-кой памяти, увы, даже те упомянутые расширения, совершенно здесь не достаточны. Например, приведем \textit{правило последовательности}:

\begin{center}
\begin{tabular}{c}
  \inference[($\rightarrow$-elim)]{\Gamma \vdash e_1:(s\rightarrow t)\ \Gamma \vdash e_2:s}{\Gamma \vdash (e_1 e_2):t}
\end{tabular}
\end{center}

Правило ($\rightarrow$-elim) наглядно демонстрирует, что указание типа (например, $s \rightarrow t$) может служить как элементарное утверждение процесса верификации (см. \cite{chatterjee00}, \cite{nanevski08-1}).
В \cite{barendregt93} широко представлены доказательства всех важ\-ных теорем и  лемм  $\lambda$-вычисления в связи с определением его синтаксиса и семантики, в том числе, кратко о проблемах и ограничениях не типизированного и типизированного $\lambda$-вычислений. Статья \cite{plaisted85} пос\-вя\-ща\-ет\-ся вопросам теоретической  нерешимости редукции $\lambda$-термов в общности без явной типизации (например, в  вы\-чис\-ле\-ниях Карри, в отличие от  вычислений Чёрча). Например, терм $\lambda x.x x$ не может быть решён в общем, но при типизации становится очевидно, что данный терм не принадлежит регулярной  типизации. Если бы $\lambda x.x x$ был типизирован, то допустимый второй терм $x$ имел бы некоторый тип $t_2$, а первый $x$ обязательно должен иметь некоторый тип  функционала $t_0 \rightarrow t_1$, что невозможно приравнять к $t_2$ в типизированном $\lambda$-вычислении первого порядка, в котором  типы далее не специфицируются. Следовательно, $\lambda x.x x$  не типизируемый терм. Однако, можно сконструировать в качестве  бесконечно приближенного типа лимит \cite{macqueen84}. Этот тип соответствует рекурсивно определённому  аб\-стракт\-ному типу данных, который имеет в общем случае зависимость (возможно, ре\-кур\-си\-вную) к другим типам данных и представляет собой тип данных записи.

\textbf{Подход №5 -- Автоматическая редукция проблемы:}
\textit{Автоматическая ре\-дук\-ция проблемы}    пытается редуцировать строки программы, чтобы данная ошибка, либо иное поведение программы, например \textit{«результат функции возвращает 5»}, сохранялось. То есть, если некоторая программа вычисляет некоторую таблицу, используя различные формулы, а нас интересует только одна ячейка в ней (на\-при\-мер, единственная с ошибкой), то можно весь код удалить, который не нужен для вы\-чис\-ле\-ния той самой \textit{критической} ячейки. Таким образом, получается ре\-ду\-ци\-ро\-ван\-ная и более простая программа, которую можно легче анализировать.
Очень часто бывает, что ошибка в программе связана лишь с маленькой частью изначальной программы. Увы, особенно в больших или незнакомых программах, локализация программ занимает очень много лишнего времени.

Допустим, мы строим проект одной командной строкой. Вместо  «\textit{GNU make}» \cite{make16} ради простоты, используется сильно  упрощённая программа для пос\-тро\-ен\-ия любого ПО   «\textit{builder}» \cite{haberland15-4}. Для автоматизации и редукции входной программы используется программа  «\textit{shrinker}» \cite{haberland15-3}. Про\-грам\-ма работает циклически: строит ПО, анализирует или запускает ПО, сравнивает  наблюдаемое поведение программы с ожидаемым результатом. Если программа после сокращения по-прежнему стро\-ит\-ся, запускается и сравнение успешное, то редукция продолжается, иначе редукция про\-ва\-ли\-ва\-ет\-ся. Если редукция про\-ва\-ли\-ва\-ет\-ся, то, либо берётся другая редукция, либо прежняя (успешная), как окончательная.
Программа при запуске выдаёт всё, что важно на  «\textit{stdout}» и  «\textit{stderr}». Уговаривается по умолчанию, что любое другое поведение можно записывать в «\textit{stdout}» или «\textit{stderr}», в том числе, интересующее нас поведение программы (так называемый  «\textit{симптом}»), без ог\-ра\-ни\-че\-ния общности.
Если симптом всё ещё на\-блю\-да\-ет\-ся при запуске программы, то поведение  инвариант и мы «\textit{пробуем}» сокращать, далее пока не останется возможности.
Предложенный подход описывается на рисунке алгоритм №\ref{algo:AlgorithmProblemReduction}.
Представленный алгоритм наивный и не оптимальный, т.к. выбор $linesseq$ не де\-тер\-ми\-ни\-ро\-ван\-ный и может содержать любое количество строк, равно хотя бы одной или более строк.
Нужно заметить, что объём выбранных строк может расти, но также уменьшаться со временем, когда варианты редукции отклоняются несколько раз подряд, в зависимости от стратегии редукции.

\subsection*{Объектные вычисления}
  Объектные вычисления представляют собой формализм для различных вычислений с объектами. Вычисления могут в себя включать, например, совместимость типов, проверку верности условий. Объектные вычисления можно разделить на  два вида, по Абади-Карделли и по Абади-Лейно (см. рисунок \ref{fig:ObjectCalculi}).

Итак, зачем на самом деле нужна формализация объектов? Почему нельзя со\-по\-ста\-вить объекты простыми переменными?
%
Отвечая на второй вопрос первым: да, можно сопоставить на самом деле.
 Однако, в связи с введением объ\-ек\-тов поля, больше абстрагируются. Вследствие этого, алгоритмы упро\-ща\-ют\-ся и обобщаются.
Класс объектного экземпляра, это прежде всего,  «\textit{абстрактный тип данных}» (АТД), а не только хранитель данных. АТД также является носителем множества объектов, предусмотренные для решения комплексной проблемы (см. \cite{leino02}). Операции над объектом замкнуты, т.е. сам объект не может распадаться на некоторые другие объекты при запуске операций. Поэтому программист, при написании об\-я\-зу\-ет\-ся, определять в объектах лишь те операции, которые к нему подходят и которые действуют только на внутреннем состоянии всех полей самого объекта -- поля других объектов не доступны. Это подразумевает разделение данных аналогично записям.
Операции объектов на практике, это методы классов, а объект, это объектный экземпляр. 

  \textit{Объектно-ориентированная парадигма} популярная, и успешно использовалась в ин\-дус\-трии на протяжении последних десятилетий, опираясь на АТД. Одним из пре\-им\-ущ\-еств АТД является --- возможность строить надёжное, гибкое, интуитивное и быстрое ПО за счёт технологии моделирования, как например «\textit{объ\-ек\-тно-\-ор\-и\-ен\-ти\-ро\-ван\-ное моделирование}» (ООМ) \cite{dsouza98}, \cite{rumbaugh91} или «\textit{паттерны}» \cite{kerievsky05}. Перечисленные методы и идея широко применяются в индустрии до ныне.
Нельзя недооценивать важность ис\-поль\-зо\-ва\-ния паттернов на практике, которые способствуют к существенному снижению ошибок при написании кода, благодаря четкому распределению ролей вовлечённых объектов. Паттерны прог\-рам\-ми\-ро\-ва\-ния являются  \textit{стереотипами}, т.е. эпистемологическим механизмом -  идиомой, которая способствует человеку «\textit{быстрее}» воспринимать во времени вовлечённые объекты и связывать их, для решения поставленной задачи, не читая весь программный код полностью. Многие классические, философские идеи и течения вовлечены в представление объектов, как, например  атомизм,  кон\-нек\-ти\-визм и т.д.  Паттерны можно интерпретировать, как сравнение объектов ООМ с ролями  \textit{актёров} сценария, для решения определённого сценария: каждый актёр в определённой ситуации ведёт себя так, как этого требует «манускрипт», точнее спецификация.  Прагматизм паттерна повышается, если сценарий можно объяснить лучше и понятнее.

 Объектное вычисление может быть использовано для  семантических анализов, например для проверки типов или для верификации, что будет представлено чуть позже. Кроме того, оно может послужить анализу зависимости  полей данного объекта, например для эффективного  \textit{распределения процессорных регистров}. В частных случаях может иметь смысл, все объектные поля преобразовать в  массив или отдельные  локальные переменные, если граница массива известна и не велика \cite{bornat00} и размер объекта не велик.
При распределении регистров, зависимость данных на прямую определит какие  ре\-гис\-тры будут использованы (например, регистры общего пользования) или стековые поля. Быстрые ресурсы, т.е. процессорные регистры, в разы быстрее, чем доступ к операционной памяти, однако, они сильно ограничены по объёму и ёмкости. С другой стороны, процессорные регистры, в зависимости от архитектуры ЭВМ, могут, согласно   \textit{двоичному интерфейсу приложений} «\textit{ABI}» (см.  \cite{gcc15}, \cite{llvm15}, \cite{leroy11}), временно \textit{хранить} только несколько \textit{слов}. Слова ограничены и тем не менее, имеющийся \textit{ABI} обычно не делит общие блоки памяти представляющие объ\-ект\-ные эк\-зем\-пля\-ры потому, что при компиляции трудно предугадать точно, когда и где, какие поля будут или не будут использованы. Однако, если возможно было бы это уточнить, стоило бы модуль предсказывания «\textit{GCC}» в этом плане улучшить, т.к. на данный момент всегда принимается отрицательный прогноз. Реальность такая, что объекты остаются неделимыми и если памяти не остаётся, то всё выкладывается на стек, что тормозит и желает иметь лучшее.

Вычисление записей \cite{ehrig80} отличается в основном от объектов тем, что они являются чистыми хранителями данных. Запись можно рассматривать, как  кортеж раз\-лич\-ных типов, что может быть полезно для описания вычислений одновременных процессов. Скотт \cite{scott76} характеризует любой тип данных (т.е. в обобщённом виде, в качестве кортежа) как  \textit{алгебраическую решётку}, которая имеет  \textit{инфимум} как изначальное  (возможно не инициализированное) значение и окончательный ре\-зуль\-тат (возможно параметризованный), как  \textit{супремум} и все состояния между обо\-ими  \textit{экстремальными точками} объясняются, как состояния кортежей, которые соблюдают  \textit{порядок вычисления}.
Для более подробного ознакомления применения и обоснования верификации с объектами, особенно в связи с нединамическими пе\-ре\-мен\-ны\-ми, можно ознакомиться в \cite{chatterjee00} и \cite{mueller02}.
Для дальнейшей дискуссии в качестве примера мы при\-ве\-дем работу \cite{barnett04}: условия ве\-ри\-фи\-ка\-ции задаются формулами  логики предикатов первого порядка.
 Атрибутные поля моделируются, либо от\-дель\-но, т.е. атрибуты выделяются в отдельные регионы памяти, либо встроено в содержащий объект, т.е. все поля и типы известны. Наи\-более комфортабельным вкладом работы можно считать предикаты  \texttt{pack} и  \texttt{unpack}, которые свёртывают, либо развёртывают предикат вручную.
Аналогично к \cite{barnett04} и \cite{chatterjee00}, рекомендуется мо\-де\-ли\-ро\-вать зависимые поля, как от\-дель\-ные зависимые объекты. Обе работы пред\-ла\-га\-ют ввод новых утверждений для инвариантов объектов, которые всегда верны до тех пор, пока некоторый объект жив. Никакой из упомянутых подходов не имеет отношения к  \textit{указателям} (см. глава \ref{chapter:DynMemProblems}).\\

Вычисление объектов классных экземпляров упоминаются только коротко. Этот вид вычисления доминирует современные  языки программирования, как, например,  Ява или   Си(++) \cite{gcc15}, \cite{isocpp14}. Класс используется для генерации объектных  эк\-зем\-пля\-ров при запуске программы. В этом разделе нас пока не интересует область выделенной памяти. Используются командные операторы  \texttt{new},  \texttt{malloc} и  \texttt{calloc}. После короткой характеризации, мы переходим к обыкновенному виду вычисления объектов.

Оба вида вычислений представляют возможность проверки типов и спе\-ци\-фи\-ка\-ции/\-ве\-ри\-фи\-ка\-ции программных операторов.  Абади и Карделли мотивируют пред\-ло\-жен\-ное вычисление необходимостью формализовать  вычисление с объектами из-за различных эффектов, как  \textit{подклассы},  \textit{полиморфизм} и  \textit{замкнутость/рекурсия объектных типов}. Оба автора ссылаются на систему  типизации  лямбда-вычислений как одним примером второго порядка. Нужно заметить, что  $\lambda$-термы, пред\-став\-ляю\-щие  объекты, могут быть использованы, как квантифицируемые типы (т.е. типовое множество классов, которое выражается знаком $\forall$, см. \cite{peirce10}, \cite{bruce02}) с помощью неопределённых символов. Далее, ре\-кур\-сив\-ные и нерекурсивные классы требуют уточнения структуры. Другими словами, «\textit{зависимые от новых параметров типы}», которые представляются аб\-страк\-ци\-ей $\lambda$-типов высшего порядка дополнительную абстракцию над  объ\-ек\-тны\-ми типами по Абади-Карделли.

Без ограничения модели памяти, например, в каком регионе  опе\-ра\-ци\-он\-ной памяти содержится объект, или, где объект расположен в соответствии ма\-тер\-ин\-ско\-го объекта: внутри и снаружи, любой объект по модели Абади-Карделли имеет указатель. На рисунке \ref{fig:ObjectPointsTo} $x$ указывает на объект, который имеет один внутренний объект, а он опять же имеет два содержания $abc$ и $def$ и некоторое содержимое $ghi$. Кроме того, имеется указатель $y$, который ссылается на объект содержавший $jkl$, при этом, объект $x$ содержит некоторое поле (или любое из подполей одного из полей), которое ссылается на содержимое от $y$, т.е. $jkl$.

Классная иерархия наследования представляет собой   \textit{упорядоченное множество},  инфимум чей по конвенции пустой класс, например  \texttt{Object} или \texttt{[]}. То есть, любые два класса из иерархии можно сравнить  с самим собой, два класса, которые ничего не связывает и не связаны между собой, т.е. имеются следующие сравнения порядка наследованности: $\sqsubseteq$, $\not\sqsubseteq$.

Абади и Карделли лишь предоставляют возможность наследования в своей модели вычисления, также как и  \textit{делегация} или \textit{внутреннее} представление  полей и  объектов. Использование ключевых слов  \texttt{self} и  \texttt{super} в  языке Cи++ \cite{isocpp14} допускается, однако, использование подлежит к заранее проводимому семантическому ан\-ал\-изу кон\-тек\-ста ссылаемого класса. Эта предосторожность сильно увеличивается, вместе с целым рядом других способов вычисления, при использовании об\-ык\-но\-вен\-но\-го  объектного вычисления.
Карделли замечает, что дополнительные накладки в связи c проверками актуального и ссылочного класса, например, при использовании  \texttt{\texttt{self}}, приводят к принудительным проверкам всей иерархии классов, которые можно было бы избежать, если контекст достаточно однозначен.
Нужно отметить, что на практике это замечание не столь важное потому, что определение  класса по  иерархии не является критической проблемой при ком\-пи\-ля\-ции программы. Поиск наследованности по иерархии слияния фактически является поиском по дереву, что можно сделать за счёт $\Theta(n)=log(n)$ в худшем случае. Это вполне достаточно.
Серьезным ограничением наследованности, является угадывание, какой именно метод будет запускаться. С вопросами прак\-ти\-чес\-кой и теоретической реализации наследованности множества классов на  языке Cи++ можно ознакомиться в работе Рамана \cite{raman11}.\\

Методы объектов содержат код, который доступен только объекту. Ради простоты, мы смоделируем автоматически выделенные методы, которые в отличие от  ста\-ти\-чес\-ких методов, означают, что методы доступны исключительно тогда, когда соответствующий объект был ранее выделен оператором  \texttt{new}. В случае статичных классов, нет необходимости доступа к заранее выделенной памяти объекта, т.к. переменные в стековом окно выделяются глобально независимо от объекта.
Далее, обозначим по умолчанию, что любой объект, если не уговорено по другому, всегда имеет один метод  «\textit{конструктор}» для построения объекта и инициализации нейтральным значением всех полей (что записывается в отдел  \texttt{.ctor} на  ассемблере, в случае языка  Cи++ \cite{isocpp14} при переводе) и один аналогичный метод утилизации объекта, который записывается в  «\texttt{.dtor}». Таким образом, можем определить следующие фазы объектов (см. рисунок \ref{fig:ObjectLifeCycles}), которые нам послужат в качестве спецификации:

Только что были обсуждены пункты 1 и 2. Пункт 3 касается любого состояния «\textit{немёртвого объекта}». Объект немёртвый, когда он «\textit{жив}», это когда прог\-рам\-мный оператор вызывается после момента  конструкции, но до  утилизации объ\-ек\-та. Пункты 4 и 5 подразумевают ту самую спецификацию, как у процедур. Ради применимости пытаемся избежать принудительную спе\-ци\-фи\-ка\-цию всей программы. Обозначим это «\textit{лёгким}» подходом, но пока что не будем рассматривать такой вариант. В целях увеличения гибкости и  локализации проблем, уговаривается по умолчанию, что любые «\textit{дополнительные}» ут\-вер\-жде\-ния разрешаются, которые также могут быть упущены (пункт 6).

Методы в обоих вычислениях изначально не ограничиваются. Однако, не ог\-ра\-ни\-чен\-ные методы также могут включать в себя изменение методов в любой момент запуска программы, удаление существующих методов и добавление но\-вых методов. Очевидно, что при неограниченном вычислении  спецификация ста\-но\-вит\-ся всё сложнее, и возможные изменения трудно, или даже теоретически и прак\-ти\-чес\-ки не решимы. Следовательно, максимальная возможность приводит к жест\-кому ограничению верификации.

Как было упомянуто, неограниченные методы не дают и не расширяют с тео\-ре\-ти\-чес\-кой точки зрения  «\textit{Тьюринг-вычислимость}», изменения методов при запуске («\textit{динамические методы}») и умолчании далее не рассматриваются. Когда речь идёт о динамических методах, только тогда, они могут рассматриваться.

Системы объектных вычислений имеют следующее логическое суждение:

$$E \vdash obj : A :: B$$

где, $E$ объектная среда, содержавшая существующие объекты в момент актуального состояния вычисления программы, $obj$ является объектным представлением (т.е. переменной в вычислении №1, а целым объектом в случае вычисления №2). $A$ является типом $obj$ (т.е. наименование класса объектного экземпляра $obj$). $B$ является спецификацией объекта $obj$, которая содержит, например,  состояние объекта. Если вместо $obj$ проводится вызов метода $obj.m()$, то $B$ содержит состояние $obj$ до и после вызова $m$, но $B$ также может далее содержать утверждения из рисунка \ref{fig:ObjectLifeCycles}, которые не противоречат данной фазе программного запуска. На результаты до и после запуска программного оператора или метода, можно ссылаться с помощью встроенных вспомогательных операторов и  регистра специального назначения, ко\-торый содержит, либо весь объект (в случае №2), либо объектное поле. В центре объектного вычисления стоит объектное суж\-де\-ние вместе с типом и утверждением.
Отметим, что $B$ может меняться и при этом $A$, т.е. типизация, остаётся без изменения. Это становится актуальной проблемой, когда  сигнатура/тип дальнего метода знакома, но само состояние нет. Это тот случай, когда используется архитектура веб-сервера или драй\-ве\-ра, когда известно как метод должен выглядеть, но детали функциональности не (полностью) известны. 

Отметим, что обе модели,  Абади-Карделли и  Абади-Лейно, не поддерживают изначально ссылочные указатели на любые объекты, которые были бы выделены во время запуска программы. Указатели будут рассматриваться позже, однако, уже сейчас обсудим, что избегая лишние операции копирования, можно существенно улучшить быстродействие, если гарантированно, что некоторые объекты разрушаются и эти объекты не будут использоваться далее, или все копии объекта остаются без пользы.

 Классовые типы $T$ определяются рекурсивно. $Т$ Тип --- либо целое число, либо состоит из  классов. Классовый объект содержит поля $f_i$ и методы $m_j$ и все наименования отличаются друг от друга согласно конвенции. Все поля имеют некоторый тип $T$. Методы имеют $T_j \rightarrow T_{j+1} \rightarrow \cdots \rightarrow T_k$ в качестве типа сигнатуры.\\

Далее рассматривается вычисление №2, объектное вычисление по Абади-Лейно. 

Язык программирования  «\textit{Baby Modula 3}» \cite{abadi93} был предложен Абади и ком\-па\-ни\-ей «\textit{Di\-gi\-tal Equ\-ip\-ment Cor\-po\-ra\-ti\-on}» в качестве экспериментального модулярного языка. Он содержит минимальный набор операторов языка  \textit{Modula 2}, который очень близок к синтаксису и семантике  языка Паскаль, с дополнениями для про\-то\-ти\-пи\-за\-ции языка объектного вычисления обыкновенного вида. Язык похож на предложенные и далее расследованные языки Абади, Карделли, Лейно и содержит все необходимые единицы для  $\mu$-рекурсивных схем: присвоение и рекурсию. С более подробными дискуссиями на тему модели объектов и выразимость объектов можно ознакомиться в \cite{gunter94}, где имеются чуть устарелые, но значимые статьи по данному вопросу.
Целью ввода иного вычисления является альтернативное представление фор\-ма\-ли\-за\-ции. Однако, имеется эквивалентность обоих видов, которую можно доказать методом  «\textit{пол\-ной аб\-страк\-ции}» \cite{mitchell96}, \cite{plotkin77}, \cite{cohn83}, \cite{honda05}, при которой необходимо доказать эк\-ви\-ва\-лент\-ность  денотационной и  операционной семантик \cite{allison89}, \cite{lavrov01}, \cite{abramsky94}, \cite{winskel93}, \cite{tennent76}, \cite{bird97}, ка\-са\-тель\-но наблюдаемого поведения объектов. Доказательство изложено в \cite{reus02}, \cite{reus05} и следует отметить, что в целях простоты модель Абади-Карделли описывается в разы проще. Кроме того, попытка доказать полную абстракцию увенчается успехом, но за счёт очень сложной операционной семантики в объектном вычислении. Это можно обосновать тем, что возможные рекурсивные типы гораздо проще вывести, если рекурсия содержится лишь в наименованиях типов, т.е. классов, чем определяется рекурсией объектных эк\-зем\-пляр\-ов.

Синтаксис языка «\textit{Baby Modula 3}» определяется с помощью термовых выражений $a$ как определено на рисунке \ref{fig:ObjectTermsAbadiLeino}, где $A$ является типовым выражением.

Термовое выражение эквивалентно к выражениям языков  «\textit{Modula}» или  «\textit{Паскаль}» и не нуждается в дополнительных объяснениях, кроме: fun($x:A$)$b$ определяет  ано\-ним\-ную процедуру (эквивалентно к $\lambda$-абстракциям) с одним входным параметром $x$ с типом $A$. В общем разрешаются анонимные процедуры с нуля или более параметрами. Переменная $x$ является в теле процедуры $b$ связанной переменной. Вызов процедур, например $b(a)$, подразумевает, что типы последующих аргументов совпадают с типами декларации процедуры. Пустое значение \texttt{nil}  совместимо со всеми другими типами указателей. Например, не инициализированные объектные поля равны \texttt{nil}. Объектное расширение добавляет поля с указанным значением к предлагаемому объекту, таким образом, что объекту добавляется новое поле. Методы также расширяемы. Метод может обновляться, т.е. код метода может также как и переменная меняться во время запуска программы.  \texttt{wrong} является резервированным ключевым значением, который показывает на неверное вычисление. Данным определением объекта всегда является вектор, который состоит из лексикографически  упорядоченного множества пар из наименований полей/методов вместе с содержанием. Содержание опять же может быть объектом, которое определяется аналогично определению из рисунка \ref{fig:ObjectTermsAbadiLeino}.

Семантика  «\textit{Baby Modula 3}» определяется аксиоматически и проводится двумя шагами: сначала проводится  проверка типов (один набор правил), затем верификация с другим набором правил. Проверка типов объектов также требует определение на подкласс, которое может проводиться по-компонентно: $A<:B$ тогда и только тогда, когда $B$ содержит все поля и методы с наименованиями как в $A$. Из-за отсутствия классовых идентификаторов в объектном вычислении, рекурсивные объекты необходимо симулировать подобно случаю  «\textit{комбинаторам плавающей точки}» в $\lambda$-вычислениях, которые имеют похожее ограничение. Итак, смешано-рекурсивные объектные оп\-ре\-де\-ле\-ния выявляются с помощью искусственного комбинатора $\mu(X)$ для любого объекта $X$ к данному объектному выражению согласно рисунку \ref{fig:ObjectTermsAbadiLeino}, который приписывается после $\mu(X) \ \tau$, таким же образом, как аппликация $\lambda$-термов, например, $\lambda x.x \tau$.

Следовательно, для проверок объектов имеются: переменные среды,  подтипы (в том числе подклассы) и общие структурные правила. Ими производятся проверка типов и верификация объектов (см. позже, ср. минимальную логику вычислимости «\textit{LCF}» \cite{plotkin77}). Ради простоты (см. \cite{leino98}) часто ссылаются на структурные операционные семантики \cite{plotkin81}, хотя, как было упомянуто ранее, преобразование в/из денотационной семантики(-e) в общем случае остается тяжелой проблемой.

Однако, $\mu$-определения объектов --- существенная проблема, т.к. для одних и тех же начальных данных вывод может быть различным, т.е.  некорректность очевидна. Приведем в качестве примера правила из рисунка \ref{UnsoundRules1}.

То есть, правила с рекурсивными определениями могут быть подорваны и не\-до\-ка\-зу\-емы из-за возможных расходящихся доказательств, при этом, лишь тривиальное решение, т.е. неподвижная точка, может действовать единственно «\textit{надёжным}» путём. Такая обстановка, увы, не удовлетворительна. Это как раз и есть причина, почему объекты в вычислениях Абади-Лейно в общем не могут быть  типизированы в вычислении типов первого и второго порядка. Разрешить эту проблему возможно, если правила верификации и проверки типов обогатить дополнительными правилами, которые сильно зависят от контекста. Это резко ухудшит простоту правил.
По Абади-Лейно классы как таковы, не существуют. Объекты могут иметь произвольное число полей других объектов, следовательно, классы могут зависеть от других классов, включая от  собственного класса. Это поражает тип третьего порядка, а вычисление эквивалентно $\lambda^{\rightarrow}_3$ (см. опр.\ref{def:TypeChecking}). Поэтому, сходимость и замкнутость по типу имеет большое значение для решения при статическом анализе классового типа \cite{abadi93}. Если сходимость конечная и процесс верификации пе\-ре\-чис\-ля\-ем, то таким образом, можно сложить все перечисленные типы вместе, в одну  запись. Таким образом, новой записью является тип объединяющий все  зависимые типы.
Авторы \cite{abadi93} обращают внимание на то, что любые операции полученные объединением типа замкнуты и все полученные (под)-типы также могут ис\-поль\-зо\-вать\-ся как базисный объектный тип. То есть, полученный объектный тип является  идеалом, опираясь на сравнение объектов  «$<:$», см. позже.\\

Язык программирования представленный в \cite{abadi97} (см. рисунок \ref{fig:ObjectTermsAbadiLeinoSimplified}) является упрощением языкового представления рисунка \ref{fig:ObjectTermsAbadiLeino}, однако, присвоение отсутствует, а вместо этого имеется символьное присвоение и некоторый дополнительный  \textit{син\-так\-си\-чес\-кий сахар}  условного перехода. Далее, динамические методы за\-пре\-ща\-ют\-ся.  Синтаксис оп\-ре\-де\-ля\-ет\-ся следующим образом:

  Переименование во время запуска разрешается, если оно соблюдается вызывающей и вызванной сторонами.  Локальные переменные в процедурах за\-пре\-ща\-ют\-ся. Также запрещаются параметры методов, которые мен\-я\-ют\-ся внутри методов. Для  выразимости это не является ограничением, т.к. это всегда можно симулировать с вводом дополнительных полей. Особенность  присвоений за\-клю\-ча\-ет\-ся в том, что присвоения полей возвращают в качестве результата объект, в котором некоторому существующему полю было присвоено значение. Важным ограничением в отличие от \cite{abadi93} является по Абади-Лейно в частности, исключение рекурсивно-определённых объ\-ек\-тов.

Отношение подтипов определяется в рисунке \ref{DefinitionClassSubtype}.

 где $A'$ это некоторый класс, а $A$ соответствующий подкласс.

Соотношение $<:$ позволяет вместе с другими правилами проводить  проверку типов, например, самое главное --- это  \textit{правило объектного построения}:

\begin{center}
\begin{tabular}{c}
\inference[(CONS)]{A \equiv [f_i::A_i^{i=1..n},m_j::B_j^{j=1..m}]\\ E \vdash \diamond \quad E \vdash x_i::A_i^{i=1..n}\\ E,y_j::A \vdash b_j::B_j^{j=1..m}}{E \vdash [f_i=x_i^{i=1..n},m_j=\psi (y_j)b_j^{j=1..m}]:A}
\end{tabular}
\end{center}

 где $A$ некоторый  объектный тип (не класс), $E$ это объектная среда, $x_i$ некоторые значения полей, а $b_j$ некоторые значения методов, т.е. тела процедур.
Проверка верности программы по упрощенному Абади-Лейно проводится согласно следующему суждению:

\begin{center}
  $\sigma,S \vdash b \rightsquigarrow r,\sigma'$
\end{center}

где $\sigma$ является  начальным состоянием стека, $S$ описывает состояние  стека, $b$ данная последовательность программных операторов («\textit{программа}»), $r$ является результатом, который сохраняется в  \textit{виртуальном регистре} неограниченной ёмкости, и $\sigma'$ описывает финальное  состояние памяти.
 Спецификации  объектов определяются рекурсивно по компонентам:

$$[f_i:A_i^{i=1..n},m_j:\psi (y_j)B_j::T_j^{j=1..m}] ,$$

где $A_i,B_j$ это спецификации, $y_j$ параметры  анонимных процедур $\psi$ используемые в $B_j,T_j$, а $T_j$ определяет спецификации перехода памяти из одного состояния в следующее.

  Суждение спецификации определяется по $E \Vdash a:A::T$, где $E$ это объектная среда, $а$ программа, $A$ тип, $T$ переходное описание. То есть, проверка типа по $A$ и переходные состояния записаны в $B$ и применяются одновременно, если даже тех\-ни\-чес\-ки по разным этапам, всё равно не трудно увидеть, что $B$ окажется сильно раздутым и незначительное изменение памяти может привести к большим последствиям. Пред\-ло\-жен\-ное  соотношение $<:$ \cite{abadi97} аналогично применяется к спецификациям. Аналогичные проблемы распространяются на спецификации. Например, требуется доказать очевидную программу:
$\emptyset \Vdash ([f=\texttt{false}].f:=\texttt{true}).f:\texttt{Bool}::(r=\texttt{true})$.
 Дерево доказательства с аннотациями использованных правил находится в рисунке \ref{ExampleProofDerivation}.

Определения использованных правил находятся в рисунке \ref{ExampleRulesetObjectCalculus}.

Утверждение $\diamond$ обозначает истину.
 Корректность представленного набора правил можно прочитать подробнее в статье \cite{abadi93} и в её сопровождающей технической статье. В подходе  Абади-Лейно не\-об\-хо\-ди\-мо отметить, что ситуация с  абстракцией желает иметь лучшее. Па\-ра\-мет\-ры в методах видны только снаружи во внутрь, но не наоборот, например в $b_1 \equiv \texttt{let }x=(\texttt{let }y=\texttt{true in }[m=\psi (z)y]) \texttt{ in }x.m$ внутренний $y$ не может быть проверен снаружи, согласно правилам из \cite{abadi93} и разница между предикатами (какого именно порядка и ограничения) и  утверждениями всё-таки не достаточно ясна. Абади и Лейно предлагают в качестве дальнейшей работы следующее: улучшение абстракции спецификации, исследование меняющихся параметров, а в связи с этим, вопросы о полноте, использовании указателей на объ\-ек\-ты и на поля объектов, а также более детально разобраться с рекурсивно-определёнными объектами. На данный момент не вычисление по Абади-Карделли и по Абади-Лейно не обращают внимание на указатели или  \textit{псевдонимы} (см. позже).

Оба вида вычислений страдают от того, что отсутствует поддержка указателей. В \cite{cardelli96} перечислены важные и нужные расширения: поддержка па\-рал\-лель\-нос\-ти, возможность ссылаться на адресные  поля и использования абстракции в спецификациях, так как впервые это уже заметил Хор \cite{hoare69}.


Бан\'{е}ри\`{й} \cite{banerjee08} представляет язык, который поддерживает объекты, разместившиеся в  стеке и в том же регионе памяти (см. \cite{tofte97}). Подход в \cite{banerjee08} перемещает все  локальные переменные в стек, но  висячие указатели по определению языка не допускаются. Рекурсивные предикаты над объектами запрещены. Особенность  глобальных ин\-ва\-риан\-тов заключается в не меняющихся зависимостях между объек\-та\-ми. Он предупреждает о сильно возрастающей проблеме абстракции и поддерживает ин\-и\-циа\-ти\-ву по-объектного взгляда индивидуальных предикатов.


\textbf{Программные нити}.\\
В \cite{abadi97}  объекты квалифицируются как  абстрактные типы данных, но только не как обыкновенные вычисления записей \cite{ehrig80}. Отличие определяется в первую очередь не по имеющимся полям, а по структуре данных. Записи имеют любые зависимости к иным записям.

Переходы из одного состояния записи в другое, можно получить с помощью при\-ме\-не\-ния правила редукции. Таким образом,  \textit{трансформации графа} можно рас\-сма\-три\-вать как  \textit{редукции с записями}. Эриг и Роузен предлагают подкласс процедур, который безопасен тем, что он не смешивает зависимые поля объектов вместе с анализом с помощью теории категории. Их подход можно считать «\textit{безопасным}» при запуске нескольких нитей одновременно, если доступ к подмножеству полей объекта не зависит друг от друга. Таким образом, авторы дают возможность к более эф\-фек\-тив\-но\-му доступу полей, без необходимости полной  блокировки при много-поточных программах. Поэтому, процедуры порождения и  утилизации, при соблюдении условия, могут запускаться одновременно без изменения по\-ве\-де\-ния и  корректно. Похожие подходы к доступу объектов представляются более подробнее  в \cite{jones11}.

Хойсман \cite{huisman07} задаётся также вопросом параллельного доступа к объектам, также как и Хурлин \cite{hurlin09} и \cite{ehrig80}. Хойсман пытается с помощью «\textit{кон\-трак\-тов между объ\-ект\-ным взаимодействием}»  дать возможность не только определить полный или никакой доступ к  методу классу, но также подключить  временные условия доступа. Хойсман предлагает подключать в спецификацию историю объектов  и вызовы методов. Также предлагается объединить описание различных объектов при отдельных  сте\-рео\-ти\-пах и совместных действиях объектов, т.к. предложенный вариант видимо страдает от необходимого уровня  абстракции, в первую очередь, для параллельного запуска методов объектов.

\textbf{Фреймворки}.\\
Для достижения цели проверки спецификаций целых программных комплексов, представляются иные  формальные подходы (см. опр.\ref{def:FormalProof}). Один из наиболее широко известных и популярных подходов, является подход Мейера  «\textit{о делении ответственности}»  \cite{meyer98}. Мейер \cite{meyer92} предлагает, как и многие до него, методологию  «\textit{модулярный подход}», аналогично принципу Лискова и принципу скрытия по  Парнасу. В алгоритмах связанных между собой,  актёрам ответственности приписываются  стереотипы, согласно которым, выполняются ранее определённые роли. Определение ролей в соответствующей  \textit{он\-то\-ло\-гии} участников и взаимосвязей позволит лучше охватить и по\-ни\-мать объ\-ект\-ную структуру.
Принцип Мейера можно вкратце охарактеризовать как:

\begin{center}
 Чем сильнее деление ответственности [каждого из единиц онтологии], тем более таким объектам, можно доверять.
\end{center}

Замысел заключается в том, что, чем меньше становится программа, тем сильнее редуцируется её  сложность, если изначальную программу не менять (кроме сокращения). Например,  метрика, в том числе по  Мак-Кейбу, вы\-чис\-ля\-ет показательное число, которое оценивает сложность  графа управления данной программы. Чем сильнее разбита программа на маленькие легко-контролируемые и логически связанные между собой единицы, тем проще получается граф потока управлений. Метрики можно сравнивать, т.е. и сложности двух программ.
Заметим, что объект, который кажется на первый взгляд простым, может в реальности иметь зависимости c любыми другими объектами, которые заранее неизвестны. Такие «\textit{неожиданные зависимости}» часто на первый взгляд скрыты и их трудно обнаружить. Неожиданные связи уже являются признаком тому, что в отличие от данной спецификации, данный объект вероятно содержит ошибку или серьезную не выявленную проблему. Неожиданные связи соединяют часто отдаленные объекты между собой, что для простых моделей и алгоритмов маловероятно и приводят к ситуации, когда два объекта связаны между собой, хотя на самом деле не должны быть связаны.

Рилэ \cite{riehle08} предлагает вводить тесты, покрывающие  спецификации для об\-на\-ру\-же\-ния отклонений объектов и взаимосвязей между объектами, так называемые  «\textit{коллаборации}».
Проверяются взаимосвязи между объектными модулями с по\-мо\-щью спецификаций, например,  «\textit{модульной алгеброй}» \cite{bergstra90} или  «\textit{компонентной алгеброй}» \cite{feijs02}, \cite{sifakis05}, \cite{tonella01}.
Эти подходы формализованы и описывают интерфейсы между классовым объектным  и иными экземплярами.
Описание интерфейсов, увы, ог\-ра\-ни\-чи\-ва\-ет\-ся лишь входными и выходными параметрами методов.
В \cite{fischer00} пред\-ла\-га\-ет\-ся подход навигации по объектам, согласно данной спецификации. В \cite{bergstra90}, \cite{feijs02} и \cite{assman06} предлагается добавлять последовательность вызовов методов в качестве утверждения, для избежания ошибок в связи с не правильным порядком вызовов методов. Это предложение похоже на подход Хойсманой.
Далее, программы, часто страдают от того, что их необходимо спе\-ци\-фи\-ци\-ро\-вать  целиком. Это очень неудобно, когда речь идет о средних и больших программах. Фактически, все представленные подходы исключают  указателей из  языка программирования, т.к. любые утверждения о программе могут просто ока\-за\-ться ложными.
Порядок вызовов методов может оказаться практически значимым не только внутри  одного объекта, но а также для целой сети объектов, например: исключение возможности вызова метода $m_2$ первым объектом, до тех пор, пока не будет вызван метод $m_1$ третьего объекта.
Программисту важно помнить в какой момент \textit{безопасно} вызывать метод, а когда требуются инициализации или иные вызовы и в каком порядке.
Это нужно рассматривать как постоянную «\textit{классическую}» проблему при изучении любого нового  фреймворка. Вместо де\-таль\-ной и длинной спецификации с прак\-ти\-чес\-кой точки зрения, хорошо бы иметь спе\-ци\-фи\-ка\-цию, которая статически обнаруживает соответствующие зависимости между объектами и порядком загрузки.

Язык объектных ограничений  «\textit{OCL}» \cite{oclspec}  является расширением языка  «\textit{UM\-L}». Он графический и текстовой де-факто стандарт по статическому и динамическому мо\-де\-ли\-ро\-ва\-нию ПО. «\textit{OCL}» позволяет добавлять формулы взаимосвязей, например, ограничить связь типов связанных между собой объектов. Формулы описывают в основном  лямбда-термы второго порядка предикатной логики (см. опр.\ref{def:TypedLambda2ndOrder}), т.е. с атомными типами. Кроме ранее упомянутых ограничений в разделах, остаётся отсутствие возможности спе\-ци\-фи\-ци\-ро\-вать указатели.

Сафонов \cite{safonov10} представляет и анализирует целый ряд примеров из области ком\-пи\-ля\-ции и связанные с ней проблемы. Цель анализа заключается в постановлении критерий  \textit{доверительных компиляторов}. Необходимо заметить, что автор выбирает компиляторы преднамеренно, т.к. они являются по сложности одними из наиболее сложными ПО. В этом можно убедиться, например, с помощью тщательного анализа кода, особенно переходы и ответвления, например используя метрики. Сложность проверки для данного ком\-пи\-ля\-то\-ра заключается в том, как определить, что любая преобразованная программа верна и оптимальна? Для получения этого, Сафонов ссылается на метод Мейера о \textit{делении ответственности}. В качестве контр-аргумента можно привести, например подход Леруа из проекта «\textit{CompCert}» \cite{bertot06}, который также исследует корректность, но также частично быстродействие. Увы, подход не имеет, большого прак\-ти\-чес\-ко\-го значения из-за слишком резких практических ограничений. --- По\-эт\-ому, по Сафонову, можно выявить следующие критерии значимые на практике и которые решают уровень приемлемости:

\begin{itemize}
 \item Верный и понятный диагноз ошибки, включая  генерацию контр-примеров.
 \item  Трансформация моделей, включая объекты, предпочитается переход между моделями из-за относительно малых затрат спецификации.
 \item Возможность расширять и модифицировать  объектные модели вычисления.
 \item Использовать формальные описания, если применимость от этого не страдает. В частности, Сафонов считает \textit{полноту покрытия} формальной модели и  \textit{ло\-каль\-ность} спецификации важными критериями.
 \item  Абстрактные типы данных описывают собственные инварианты объектов.
 \item Промежуточные представления должны быть достаточно гибкими и аб\-стра\-ги\-ро\-ван\-ны\-ми так, чтобы они могли быть использованы на различных этапах верификации.
\end{itemize}


\subsection*{Модели динамических куч}

\subsubsection*{Преобразование в стек}

В разделе \ref{sect:TheoryOfObjects} в связи с объектными вычислениями уже велась небольшая дискуссия о преобразовании в стек.

В \cite{tofte97}, \cite{tofte94}, \cite{regionmem10} расследуется  функциональный подход, в котором все переменные перемещаются в  стековое окно (в \cite{calcagno01} приводится формальное доказательство корректности подхода Тофти и Тальпина \cite{tofte97}). Реализован подход Гросманов диалектом Си «\textit{Cyclone}» при поддержке языка «\textit{ML}» (также см. \cite{grossman02}). Регоин определяется как совокупность связанных элементов динамической памяти, которые, например, при удалении полностью могут утилизироваться в один раз. Стековое преобразование, в частности, должно следить за диапазоном видимости, чтобы исключить нечаянного уничтожения, согласно автоматическому выделению и утилизации стека. В \cite{fluet06} обсуждаются диапазоны видимости за пределами стекового окна и демонстрируется расширение в системе «\textit{Cyclone}».
При перемещении переменных, необходимо следить за  интервалами годности, иначе полученные  «псевдонимы» и «возможно-псевдоним» могут стать ложными. Функциональный подход исключает  ди\-на\-ми\-чес\-кие списки как параметры, рекурсивные структуры данных и процедуры, которые возвращают  функционал. Все типы функций должны быть определены при компиляции. Мейер \cite{meyer1-03}, \cite{meyer2-03} считает, что кроме корректности программы, важ\-ным является быстрый  сбор мусора (см. \cite{larson77}). Поэтому он предлагает, по-возможности, цел\-и\-ком избавляться от динамической памяти и переоформить ал\-го\-рит\-мы для работы только над стеком. Как уже было предложено в \cite{tofte97},  вталкивание в стековое окно не занимает большие за\-тра\-ты при минимальных объёмах. Он требует, что к абстракции объектов необходимо уделять больше внимания и предлагает ис\-поль\-зо\-ва\-ние вспо\-мо\-га\-тель\-ных переменных для уменьшения спе\-ци\-фи\-ка\-ций.  Корректность присвоений в статье полностью не доказана, а только для полного класса данных, как утвер\-жда\-ет\-ся в статье. На самом деле,  присвоение может оказаться ложным, когда, например, допускаются изменения программного кода во время запуска программы.
Доступ к содержимому в стеке может быть быстрее, чем в динамической памяти, но всегда требует дополнительные затраты на вталкивание и выталкивание в стек и из стека, которые, в зависимости обсуждаемого алгоритма, могут резко снизить общее быстродействие так, что использование динамической памяти может быть быстрее \cite{appel87}.

\subsubsection*{Анализ образов}

Главной целью  анализа образов \cite{sagiv02, nielson99, pavlu10, distefano06, shapeanal10} является выявление  инвариантов в  кучах памяти при запуске программы, например, для об\-на\-ру\-же\-ния  псевдонимов. Выявляются фигуры описывающие инвариантную и гибкую часть динамической памяти \cite{shapeanal10}, например, для линейного списка \cite{distefano06} или двусвязного списка \cite{cherem07}.
 Граф зависимостей между образами всегда описывается полностью с помощью  \textit{трансферных функций}, таких как: пустой образ «\textit{пуст}», при\-сво\-ен\-ие  полю или указателю и выделение новой  динамической памяти. \cite{sagiv02} вводит основные соотношения между двумя  указателями:  «\textit{псевдонимы}»,  «\textit{не псевдонимы}» и «\textit{возможно псевдонимы}».

\cite{pavlu10} замечает, что \cite{sagiv02} и \cite{nielson99} могут привести в зависимости от данной прог\-рам\-мы, к не точному, а следовательно, к не правильному выводу, если для «\textit{если-тогда}»-команды в одном случае вычисляется «\textit{возможно псевдонимы}», а в аль\-тер\-на\-тив\-ном случае «\textit{псевдонимы}», тогда результатом вычисления послужит «\textit{псев\-до\-ни\-мы}», хотя правильный ответ «\textit{возможно псевдонимы}».

Более того, \cite{pavlu10} содержит подробное сравнение подходов \cite{sagiv02} и \cite{nielson99}. \cite{pavlu10} оценивает \cite{nielson99}, как более точный метод и предлагает оптимизацию путей по  графам образов с одинаковыми началами и псевдонимами, в качестве более эффективного описания, которое одновременно и короткое. Предложенная оптимизация имеет верхнюю сложность $\Theta(n^2)$ и сокращает вычисление в среднем, при\-мер\-но на 90\%. Павлу предлагает симулировать более сложный анализ псевдонимов за пределами процедур, преобразуя, насколько это возможно, вызовы процедур и  гло\-баль\-ные переменные в локально  интра-процедуральные элементы через пе\-ре\-и\-ме\-но\-ва\-ние. Подход в его работе был, на самом деле, уже ранее успешно применен в системе «\textit{GCC}» для решения комплекса иных проблем. Павлу, так же как и я, считает, что контекст-независимые подходы в точном анализе псевдонимов, мало\-пер\-спек\-тив\-ны (ср. \cite{hind01}), сравнив характеристики. Далее он предлагает оптимизацию отделения объе\-ди\-ня\-ю\-щих вер\-шин, которые образуют подграфы, от вершин, которые твердо не содержат псевдонимы.

\cite{parduhn08} представляет собой среду для визуализации образов куч для бы\-стро\-го ана\-ли\-за и обнаружения  инвариантов, редко используемых связей между об\-ра\-за\-ми. Для навигации по графу используются операции  «\textit{абстракция графа}» и  «\textit{конкретизация графа}» с помощью чего, удаётся подграфы  свёртывать и  развёртывать. Ви\-зу\-а\-ли\-за\-ция  трансферных функций, которая приводит к развёртыванию/\-свёртыванию более од\-но\-го  подграфа, желает быть лучше, однако, ограничение принципиальное и сле\-до\-ва\-те\-льно улучшение не ожидается.

\cite{calcagno09} представляет собой подход, основанный на  распределении куч по образцам. Подход сравнивает предположительно подходящие начала правил по длине и вы\-би\-ра\-ет наиболее длинное правило первым. Особенность подхода за\-клю\-ча\-ет\-ся в  аппроксимации обоих сторон рассматриваемых правил для выбора принимаемых правил с помощью ограниченной  абдукции.

\subsubsection*{Ротация указателей}

\cite{suzuki82} предлагает  «\textit{ротацию указателей}», которую можно считать безопасной, если: (1) содержание куч не меняется, (2) все элементы ротации годны до и после ротации (3) количество переменных не меняется. Пре\-иму\-щес\-твом безопасной ротации можно считать отсутствие нужды сбора мусора, который определяется просто, но также эффективные операции над списками, как, например, копирование. Хотя ротация указателей в явной спе\-ци\-фи\-ка\-ции не нуждается, всё-таки минимальное изменение параметров может привести к трудно-прогнозируемому поведению программы, особенно часто, если указатели нео\-жи\-дан\-но оказываются  псевдонимами. \cite{suzuki82} предлагает базисный объём ротаций, однако, на практике этого далеко не достаточно, в следствии чего, необходимо компонировать индивидуальные ротации из базисных.
Далее, ротации очень чувствительны к минимальным изучениям вызовов. Например, поменять местами аргументы, на первый взгляд --- безопасная операция, но она может привести к полной утилизации всех входных списков. Неожиданность также возникает часто с безобидными данными, например, когда один входной список пуст или ротации комбинируются.

Ротация может быть представлена как пермутация, где все её компоненты $x_j$ являются указателями:

\[
  \begin{pmatrix}
   x_1 & x_2 & x_3 & \cdots x_{n-1} & x_n\\
   x_2 & x_3 & x_4 & \cdots x_n & x_1
  \end{pmatrix}
\]

К примеру рассмотрим из рисунка \ref{ExamplePointerRotationSuzuki} модифицированный пример ротации из \cite{suzuki82}.

Это соответствует линейному списку с тремя элементами следующей сложной трансформации, которая состоит из шагов (№1--№4), см. рисунок \ref{HeapWatchOnProgrammExecutionPointerRotation}.

В зависимости от входных данных, можно выявить отдельные свойства, которые иногда тяжело обобщать, но иногда просто, как на примере \texttt{rotate(x,x,y)} является тождественным отображением, т.к. \texttt{tmp:=x; x:=y; y:=x; x:=temp;}. Очевидно, даже маленькая модификация может привести к негодности свойств, а следовательно свойства симметрии могут быть нарушены, а следовательно доказуемые свойства больше неверны, например, \texttt{rotate(x,y,x\^.next)}, т.к. $x$ и $x^{\wedge}.next$ связаны и пересечение между $y$ и $x^{\wedge}.next$ не полностью исключено априори.


\subsubsection*{Файловая система}

Граф кучи отображается в файловую систему так, что вершинами являются файлы, а зависимости между ними --- ярлыки (например с помощью команды \texttt{ln -s} под  Линуксом). Папки могут быть использованы как хранитель объектного экземпляра. Содержимое вершины находится внутри файла. Таким образом, файловая структура имитирует графовую кучу. Теперь можно описать схему файловой системы, не отслеживая символьные ссылки. Полученный перечень имеет структуру дерева и может быть записан в качестве слабоструктурированной единицы, например, в  XML-документ. Под Линуксом все операционные средства преобразованы в виде файла. Операции над файлами теперь соответствуют операциям над кучами. Задача ве\-ри\-фи\-ка\-ции теперь может быть задана как проверка данного состояния файловой системы. С другой стороны, операции над файловой системой можно проверить на корректность с помощью (например,  XML-) схемы, что явно может упростить ошибки или дефекты с большими и многими файлами. Во избежание траты времени на сравнение многих и больших файлов и папок, можно ис\-поль\-зо\-вать  команду \texttt{sha1sum} или \texttt{md5sum}, т.к. писание и чтение на носителе часто ограничивается физическим барьером.

Для проведения операций \cite{haberland08-1} и описания \cite{haberland08-2} может быть использован де\-кла\-ра\-ти\-вный язык, основанный на Прологе. В выделенном примере из \cite{haberland08-1} высчитываются из верхней папки ровно два файла \texttt{a} (например, файлы с указанным префиксом) и высчитывается содержимое, которое должно совпадать в обоих случях.

\begin{center}
\begin{minipage}{6cm}
\begin{verbatim}
template(element(top,_,[A,A]),[text(T)]):-
  A=element(a,_,_),transform(A//p#1,T).
\end{verbatim}
\end{minipage}
\end{center}

Приближенный аналог на  языке XSL-T является:

\begin{center}
\begin{minipage}{12cm}
\begin{verbatim}
<xsl:template match="top[count( child::*)=2] and 
   a[1] and a[2]">
  <xsl:text>
    <xsl:value-of select="//a//p"/>
  </xsl:text>
</xsl:template>
\end{verbatim}
\end{minipage}
\end{center}


\subsection*{Побочные области}

Основные определения псевдонимов, мусора и обсуждения находятся в разделе \ref{sect:HeapModels}. Далее представлены побочные области, которые в принципе могут быть добавлены к предложенному фреймворку из раздела \ref{sect:ArchitectureVerificationSystem} и характеризуются как анализ указателей.

\subsubsection*{Анализ псевдонимов}

Вайль \cite{weihl80} представляет наглядный обзор по те\-ма\-ти\-ке, несмотря на возраст статьи. По Вайлю анализ псевдонимов является по\-буж\-да\-ющ\-им процессом  приближения указателей, ссылающихся на одну и ту же ячейку памяти, которая может быть изменена из-за того, что создаёт множество псевдонимов с экспоненциальным ростом вариантов. Анализ псевдонимов за пре\-де\-ла\-ми процедуры гораздо тяжелее, потому, что необходимо анализировать все возможные вызовы и продолжения, в котором далее будут существовать переменные. Естественно, что в продолжении переменные так же могут меняться, и это является причиной трудного анализа. Чтобы упростить анализ, Вайль вводит ярусы псев\-до\-нимов, по которым оценивается сложность каждой программной команды.

\cite{muchnick07} содержит полный обзор всех нынешних и прошлых  анализов псевдонимов, в том числе свои собственные. Статья Мучника вместе со статьёй Вайля представляются самыми важными в области. Он делит анализы на зависимые от  графа потока управлений и независимые, на «\textit{псевдонимы}» и «\textit{возможно псев\-до\-нимы}», а также на замкнутые подпрограммы и вызовы подпрограмм. Анализы псевдонимов по Мучнику можно уточнять, однако, классификация по сложности Лэнди \cite{landi91} тоже ориентируется по диапазону видимости указателей.

В отличие от Мучника, Купер \cite{cooper89} предлагает, одним из первых, подход анализа псевдонимов независимый от графа потока управления. Главной мотивацией тому служит резкое упрощение алгоритмов, т.к. необходимо следить за указателями и операциями использования. Недостатком является неточность.

Стоит отметить, что система  «\textit{GCC}» \cite{gcc15} разрешает программисту языков  Си ус\-та\-на\-вли\-вать директиву для компиляции вручную, по которой возможно твёрдо исключать псевдонимы для определённой функции. Таким образом, генерируемый код становится более эффективным.

\cite{horwitz89} является расширением подхода Мучника. С помощью  битовых век\-то\-ров, подход становится более эффективным. Общий подход улучшения анализа следует также искать в \cite{khedker09}. Более того, представленный подход анализа присвоений и использований может быть обобщён в подходе к  «\textit{SSA}» \cite{cytron91}, \cite{ssabook15}, \cite{vert13}, \cite{sethi75}. Наим \cite{naeem09} утверждает, что улучшение доступа можно достичь через хеширование подмножеств псевдонимов. Остаётся отметить, что подход Наима содержит предпосылки и намерения переписать проблему анализа  псевдонимов как проблему «\textit{SSA}». В общем следует отметить, что указатели не единственные переменные, которые трудно выявить при статическом анализе \cite{srivastava92} за пределами процедур. Они являются одними из самых трудных проблем. Подход Сривастава полезен, независимо от всех остальных обсужденных работ, по причине, что глобальная оптимизация на уровне связывания кода, например, передвижение мало-эффективного кода или устранение ненужного кода часто не может полностью быть исключено во время предыдущих фаз компиляции.

\cite{pavlu10} делит  анализ  псевдонимов на две методологии: подход с помощью  «\textit{уни\-фи\-ка\-ции}» \cite{steensgaard96}, который находит больше случаев  «\textit{псевдоним}» и на подход с помощью  «\textit{плавления}», который ради скорости слабее.

\cite{landi92} и \cite{ramalingam94} задаются вопросом, почему сегодня анализ  псевдонимов остаётся в общем не решённым. Ответ лежит в сложности выявления псев\-до\-ни\-мов за пределами процедур и в ограниченности решаемости для случая «\textit{псев\-до\-ни\-мы}». В сжатых подграфах зависимостей наименование вершин остаётся общей проблемой в \cite{sagiv02} и \cite{nielson99}. Далее, эти подходы имеют целый ряд ог\-ра\-ни\-че\-ний, как например: доступ к указателям с индексом, разрешает лишь не переменные выражения; исключаются объектные указатели; запрещаются массивы с гибкой шириной; исключаются пе\-ре\-се\-ка\-ем\-ые в памяти структуры данных (например,  «\texttt{union}»-структуры на языке Си \cite{isocpp14}).

Клинт \cite{clint73} занимается верификацией программ, и его центральная проблема заключается в доказательстве сложности  корректности  псевдонимов в связи с со-процедурами. Работу Клинта можно считать одной из первых в этой области. Обобщённой областью Клинта можно считать верификацию с функционалами, в которой необходимо упомянуть работу \cite{mehta05} для указателей с логикой высшего порядка, и работу Поулсона \cite{paulson93}, который пробует применить абстракцию процедур для принятия более эффективного правила при верификации с указателями.

Готсман \cite{gotsman06} представляет подход приближения  псевдонимов между процедурами с помощью модели ЛРП. Хотя статья представляет характеристики запуска и сравнения базисных оценок, для сравнения было бы интересно узнать о соотношении подходов Мучника и Купера.

\subsubsection*{Сбор мусора}
Джоунс и соавторы \cite{jones11} представляют, более чем полный, обзор на тему  \textit{сбора мусора} и детально обсуждают актуальные подходы с большим вни\-ма\-нием на параллельные алгоритмы, однако, это не цель данной статьи.
Помимо первого выпуска \cite{jones11}, второй выпуск одноимённой монографии не только сильно отредактирован и представляет целый ряд новых параллельных подходов, но практически одновременно является совершенно другой книгой. Стоит лишь упомянуть Уитингтона, \cite{withington91} в качестве обзора на тему, который сфокусирован в основном на многопоточные реализации сбора мусора, также как и Долигез \cite{doligez93}. Блэкбёрн \cite{blackburn04} приводит сравнения наиболее важных сборщиков мусора, что также упомянуто в \cite{jones11}.

На мой взгляд, Аппель \cite{appel87} прежде всего опасается на то, что на практике из-за ар\-хи\-тек\-ту\-ры фон-Неймана и \textit{программного интерфейса} «\textit{ABI}» на кодовом уровне имеются ограничения с обращением к стеку, связи с копированием данных в и из стека. Аппель предполагает, что, в таком случае, выгоднее использовать динамическую память, но при условии, что либо вообще ничего не потребуется, либо меньше ресурсов, чем для  стека (см. \cite{tofte97}, \cite{meyer2-03}). Подход Аппеля заключается в многопоточной реализации с   «\textit{ко\-пи\-ру\-ющ\-им сбор\-щи\-ком му\-со\-ра}» \cite{jones11}, который действует только, если было совершено изменение в данных, а зарезервированный превышает в семь раз объём свободных куч.

Несмотря на возраст \cite{larson77}, деление памяти на быстрый  кэш и медленную, большую память принципиально остаётся в силе. Ларсон рассматривает   «\textit{уплотняющий сборщик мусора}» в зависимости от объёма освобождающей области ($R$), объёма активных данных ($A$) и объёма быстрой памяти ($H$). Он постулирует две оптимальные  стра\-те\-гии для быстродействия и выделения/очистки динамической памяти: стратегия №1) Мак\-си\-ми\-зи\-ро\-вать $R$, если $A\ll H$ не соблюдается. Стратегия №2)  Приравнять $R=H$, если $A \ll H$.

Кроме упомянутых в \cite{appel87} ограничений, сбор мусора ещё имеет ог\-ра\-ни\-че\-ние в связи с адресацией. Если речь идет о данных   «\textit{XOR}»-связанных структурах, то  мусор не может быть локализован классическим методом потому, что адреса принадлежащих объектов (например, поля объектов или ссылки на по\-сле\-ду\-ющий элемент в списке) получаются не абсолютными, а относительными. То есть, для  дважды связанного списка с помощью одного  «\textit{офсета}», можно выявить сле\-ду\-ю\-щий и предыдущий элемент, если таков имеется, вместо двух отдельных указателей, которые содержали бы два абсолютных адреса.

\cite{sun06} предлагает  сбор мусора по возрасту выделяемых ячеек, т.е. в зависимости от того, как часто выделенный объект употребляется и как долго он остаётся. Если объект используется редко, то через несколько итераций он уже может оказаться в менее быстром сегменте памяти. Если объект используется часто, то он, наоборот, перемещается в более быстрый  регион памяти --- это касается не только сбора мусора, а также запуска программ в целом \cite{tristan09}, \cite{muchnick07}, \cite{tristan08}, \cite{reif80}. Быстродействие в среднем приближено к оптимуму, исходя из большого практического опыта и множества экспериментов.

\cite{hu10} даёт оценки к нынешним технологиям «\textit{SSD}»-дисководов. Операции до\-сту\-па к «\textit{HDD}»-дискам похожи на динамические  кучи. Писание превышает дли\-тель\-ность чтения в 10 раз. Сбор мусора на «\textit{SSD}»-дисках при «\textit{жадной}» стратегии, становится наихудшей при задержке на 45\%. Самая худшая  стратегия по производительности --- писание всё новых блоков. А лучшая стратегия --- писать по\-сле\-до\-ва\-тель\-ные данные без сбора мусора, по отдельности.

Кальканё \cite{calcagno03} замечает, что программы высокого абстрактного уровня не всегда имеют преимущества по сравнению с программами, которые манипулируют динамической памятью. Например, замечается, что сбор мусора во многом неэффективен в функциональных языках программирования. Выходит, что маленькие программы «\textit{опасны}» потому, что неверное использование с указателями может иметь побочные эффекты, но с другой стороны эффективны.

Подход Уэйт-Шора \cite{schorr67} классический и первозванный в данной области. Он характеризуется тем, что все элементы динамической памяти имеют счётчик активных указателей. Если, по каким бы то ни было причинам, ячейка вдруг более не жива, т.е. количество ссылающихся указателей становится ноль, тогда система управления (обычно ОС) активируется и принимает соответствующие шаги по утилизации ненужного объекта.

Ситуацию вокруг сбора мусора на данный момент лучше всего можно охарактеризовать: подходов имеется больше, чем достаточно, в общей сложности, как минимум 50 различных. Выигрыш через новооткрытие на сегодняшний день кажется мало вероятным и бесперспективным, т.к. на данный момент сборщики мусора в системах эксплуатации могут быть запрограммированы так, чтобы они не мешали запуску программы с помощью нитей.

\subsubsection*{Интроспекция кода}

Интроспекция означает, что при запуске программы единицы модуля определяются, например, код вставляется, меняется или добавляется во время запуска. Очевидно, что интроспекция фундаментально усложняет статический анализ, ради проблемы приостановки, а значит, и верификацию программы. Принцип интроспекции основан на решимости, которая приводит к загрузке, либо известного, либо экстерного неизвестного кода. В любом случае должен существовать механизм в заимосвязи, который определяет коммуникацию между вызывающей и вызванной сторонами. Например, в Си++ интроспекция основана на «\textit{RTTI}» \cite{isocpp14}, в Яве \cite{flanagan02} каждый объект содержит ещё, кроме полей и адресов, идентификатор в качестве смежной записи, которая читается при запуске.

Формэн \cite{forman04} определяет интроспекцию программы на Яве, как возможность вычитания и модификации данных и самой программы во время запуска. Это включает в себя загрузку кода, который может определиться даже только во время запуска. Следовательно, это потребует, кроме запуска, ещё компиляцию кода во время запуска. Запуск совершается объектами из определённого загрузочного контейнера. Манипуляция и доступ к классам, объектам и их компонентам осуществляется с помощью зарезервированных операторов.

Чеон \cite{cheon04} дает краткий обзор текущих и предложений вопросов интроспекции в области Ява. Вопросы частично ещё актуальны на данный момент. Так как запуск неизвестного кода представляет собой не только возможную опасность, но и сильно могут пострадать корректность и полнота. Печально было бы, если вдруг метод, именно для одного входного значения, вообще не работает, либо вычисляет неправильный результат. Для этого Чеон формулирует список предложений и важных  обстоятельств, как например: (1) загрузка и избегание конфликтующих имен в контейнере, (2) при запуске доступ к объектам должен соблюдать типизацию, (3) спецификацию наследованных объектов очень тяжело понять и отследить, если иерархия наследованности велика, (4) спецификации обычно ссылаются на объекты в стеке и куче, как в таком случае быть при интроспекции с окнами стека, которые просто разваливаются, т.к. объектная замкнутость может нарушаться при интроспекции?

Главными обсластями применения интроспекции являются загрузка приложений или методы веб-серверов, загрузка и доступ к динамическим библиотекам, например \cite{zakharov15}.

\subsection*{Существующие среды} 
 
\cite{reynolds02}, \cite{berdine05-2}, \cite{reynolds09} представляют собой хорошее введение в  ЛРП, в том числе при поддержке программных средств. ЛРП является  подструктурной логикой \cite{restall00}, \cite{restall94}, \cite{dosen93}, которая избавляется от констант, например,  булевых значений и использует символы для структурных замещений, в качестве констант. В ЛРП структурными правилами служат, согласно \cite{restall00}, правила  \textit{сужения},  \textit{контракции} и  \textit{сопоставления}, а константами служат ячейки  ди\-на\-ми\-чес\-кой памяти (см. главу \ref{chapter:expression}). В правилах «\textbf{,}» заменяется на «$\star$», которая разделяет две не\-пе\-ре\-се\-ка\-ем\-ые и различные друг от друга кучи, кроме того случая, когда уговаривается что-нибудь иное. Кучи в динамической памяти определяются индуктивно. Оператор «$\rightarrow$», например в выражении «$a\rightarrow b$», определяет соотношение между переменным символом на левой стороне и вы\-ра\-же\-ни\-ем (например объект) на правой. Когда мы рассматриваем примеры ЛРП, то подразумеваем, что эти конвенции вместе с правилом фрейма соблюдаются.  \textit{Правило фреймов} означает, что если вызов подпроцедуры не меняет части куч, а именно раму обозначенной $F$, то в  антецеденте достаточно доказать, что тройка Хора без $F$ верна. Рассмотрим к примеру правило фрейма над кучами, которые определяются позже:

\begin{center}
\begin{tabular}{c}
\inference[(FRAME)]{\{P\}C\{Q\}}{\{P\star R\}C\{Q\star R\}}
\end{tabular}
\end{center}

Здесь $P$ и $Q$ являются пред- и постусловием, а $R$ является кучевым фреймом, т.е. сложное утверждение, которое не зависит от выполнения $C$, а также, на чью верность $C$ не имеет эффекта. Этот принцип является основой модуляризации и мы будем всегда исходить из того, что принцип в силе, кроме отдельно уговариваемых случаев.
Если допускаются рекурсивные спецификации вместе с функционалами, то основные свойства фреймового правила могут нарушаться в связи с контекстом, который может меняться любой вызывающей инстанцией \cite{birkedal04}, \cite{pottier08}. Следовательно, рекурсивные спецификации, в таком случае, должны обобщаться.

\cite{berdine05-2} вводит вычисление на основе ЛРП с безграничной арифметикой в  офсетах над  указателями, с возрастающими  массивами и рекурсивными про\-це\-ду\-ра\-ми. Например, \texttt{p+x}, где \texttt{p} указатель, а \texttt{x} переменное целое число, не решимо в общём. Оно представляет собой попытку определить области динамической памяти рекурсивно, с помощью замкнутого объёма встроенных правил. Бёрдайн и соавторы, лишь задают вопрос, не является ли  типизация достаточной верификацией (см. набл.\ref{obs:NonReducibilityToTypeChecking})? Из статьи например следует, что нерешаемость безграничного использования ука\-за\-те\-лей, приводит к не точному определению момента  cбора мусора,  даже для самых безобидных выражений в качестве офсетов памяти и к не точному выбору правил для верификации, которые управляются жадной  эвристикой.

Борна \cite{bornat00} предлагает модель очень похожую на  ЛРП, которую он на\-зы\-ва\-ет «\textit{даль\-нее разделение}» и которая преобразует объекты данных в массив. Таким образом, любое объектное поле получается индивидуальным указателем с конвенцией для наименования и идентичности. Для общего определения куч он постулирует, что необходимо использовать предикаты первого порядка.

Главной заслугой Хурлина \cite{hurlin09} в ЛРП, является нововведенный  паттерн для ве\-ри\-фи\-ка\-ции доступа к кучам с многими потоками. Он предлагает разблокировать функции, которые улучшают производительность за счёт частичной блокировки сложных ячеек. Все кучи не нуждаются в полной спецификации, которую можно сократить, используя анонимный оператор «\_».

Паркинсон \cite{parkinson05}, \cite{parkinson06}, \cite{parkinson05-2} представляет объектно-ориентированное расширение подхода ЛРП \cite{reynolds02} до Хурлина, используя  Ява в качестве входного языка. Модульность и наследование мо\-де\-ли\-ру\-ю\-тся с помощью  «\textit{передачи собственности над вызовами}» и  «\textit{семейством абстрактных предикатов}». Модель доступа Борна \cite{bornat00} при\-мен\-я\-ет\-ся, т.к. свойство непрерывности наблюдается у   правила фреймов для объ\-ек\-тов. Он замечает, что зависимость между предикатами создает порядок вызовов предикатов. Из ста\-тьи можно выявить, что его предикаты разрешают определить всю динамическую память и стек, но они не разрешают, например, определить любые предикаты первого порядка. Предикаты утверждения, которые сильно от\-ли\-ча\-ют\-ся синтаксисом и се\-ман\-ти\-кой от входного языка, не ограничены в типах, однако, использование сим\-во\-лов имеет целый ряд ограничений, которые связаны с несимвольной реализацией сопоставления переменных символов. Фактически, предикаты используются как  локальные переменные в императивных языках программирования --- это существенное ограничение. Для даль\-ней\-ше\-го расследования, он предлагает: вызовы методов из родительских клас\-сов,  статические поля,  интроспекцию объектов,  вну\-трен\-ние классы и кванторы над пре\-ди\-ка\-та\-ми.

Система верификации  «\textit{Smallfoot}» \cite{berdine05-2} является первым  верификатором на основе  ЛРП. Она работает чисто экспериментально и имеет маленький объём  встроенных предикатов для определения куч. Система сильно ограничена в представленных и предложенных к применению структур данных. Она очень часто выходит из строя из-за быстрой прототипизации и связанных с ней множеством ошибок. Часто наблюдается нетерминация всязи с неограниченной тактикой применения подходящих правил. Система  «\textit{SpaceInvader}» \cite{calcagno09} является наследником «\textit{Smallfoot}» и включает в себя простые элементы  абдукторного вывода.  «\textit{jStar}» \cite{distefano08}, \cite{parkinson06} можно рассматривать, как объ\-ек\-тно-\-ор\-и\-ен\-ти\-ро\-ван\-ное расширение «\textit{Smallfoot}», которое уже поддерживает классные ин\-ва\-ри\-ан\-ты и ограниченные \textit{(абстрактные) предикаты}. «\textit{jStar}» преобразует все про\-грамм\-ные команды в форму  «\textit{JIMPLE}», которая очень близка к  IR от «\textit{GCC}», язык который называется  «\textit{GIMPLE}» \cite{merrill03} и с помощью которого, проводится верификация поблочно-управляющим командам. Для индустриального применения, это решение является наи\-бо\-лее приемлемым.  «\textit{Verifast}» \cite{jacobs11} работает с входным  диалектом Си и разрешает про\-из\-воль\-ные определения  абстрактных предикатов. При не\-воз\-мож\-нос\-ти вывода, пользователю приходится вручную добавлять команды и подсказки к правилам для завершения доказательства. Программная среда Хурлина \cite{hurlin09} очень похожа на \cite{distefano08}.

 «\textit{Cyclone}» \cite{grossman02} является верификатором  динамической памяти на основе  ВР (см. раздел \ref{sect:StackAlignment}).  «\textit{SATlrE}» \cite{pavlu10} верификатор, который работает на основе \textit{анализа образов} \cite{sagiv02} и реализован в Прологе. «\textit{SATlrE}» отличается тем, что образцы и зависимости между ними вычисляются не каждый раз заново, а только меняющиеся пути.  «\textit{Ynot}» \cite{nanevski08-2} является  «\textit{SMT}»-решателем на основе  «\textit{OCaml}».

Далее, имеются следующие подходы и программные среды: (1)   «\textit{KeY/VDM++}» \cite{barnett04}, \cite{weissenbacher01} способствуют применению в индустрии объектно-ориентированные спе\-ци\-фи\-ка\-ции и интеграции с языком моделирования  «\textit{UML}», которое не допускает доказательства  динамической памяти, (2) отслежка ячеек памяти при запуске программы (динамический подход), например  «\textit{Valgrind}» \cite{valgrind} или  «\textit{ElectricFence}» \cite{sun06}, \cite{electricfence} \cite{kirsch03}, (3) программные среды интегрированные (статический анализ) в пакете компиляции  «\textit{LLVM}», как например,  «\textit{SAFECode}» \cite{safecode14}, (4) преобразование программы и ут\-верж\-де\-ний в вид  слабоструктурированных данных \cite{badros00} и возможной трансформацией ею \cite{johann03}, \cite{dodds08} (см. раздел \ref{sect:HeapModels}). Раздел \ref{sect:LogicalReasoningAutomation} содержит подробнее обзор по теме «\textit{Абстрактной интерпретации}». Пункт 2 это динамический подход, который можно коротко описать на примере «\textit{Valgrind}». Оно загружает данную программу с аргументами запуска в память и сопоставляет все доступы динамической памяти своими командами. Таким образом, каждый запуск записывает все неверные доступы к памяти, а также утечки памяти.

«\textit{GCC}» \cite{gcc15} и «\textit{LLVM}» \cite{llvm15} являются фреймворками компиляции. Обе содержат модули по статическому анализу \cite{lattner03}, \cite{safecode14}. Хедкер \cite{khedker09} предлагает статический фреймворк, основанный на вычислении транзитивного замыкания достижимости в графах для анализа потока данных. Проект «\textit{CompCert}» \cite{blazy09}, \cite{leroy12} предлагает чисто академический фреймворк для анализа корректности трансформаций кода для процессорной архитектуры «\textit{PowerPC}» при поддержке верификатора «\textit{Coq}».

В качестве верификаторов общего назначения можно образцово привести уже упомянутую систему «\textit{Coq}», но также имеется ряд иных верификаторов, например, «\textit{PVS}»,  «\textit{Proof General}», «\textit{Isabelle}» и многие другие. Сравнение выбранных верификаторов можно найти в \cite{emanuell08}, а для верификаторов динамической памяти в \cite{haberland16-4}. Аппель \cite{appel12} представляет перечень, какие проблемы за последние годы были хорошо решены и какое имеется сравнение. Статьи из немецкого популярного технического журнала «\textit{iX}» \cite{kirsch95}, \cite{kirsch94} посвящаются наиболее удобными динамическими программами, ранее представленными, для анализа проблем динамической памяти (см. главу \ref{chapter:DynMemProblems}).

Статический анализатор «\textit{ESC Java}» (с англ. «\textit{Extended Static Checking for Java}») \cite{flanagan02} вводит модульную спецификацию в Ява (см. \cite{giorgetti10}, \cite{berg01}), но увы, указатели исключаются и модель памяти исходит из того, что все элементы существуют в динамической памяти, а интроспекция не учитывается. Спецификация Ява-программ предлагается подходом «\textit{Java-ML}» в \cite{badros00}.

Система «\textit{Jahob}» представленная в \cite{zee08} предлагает прототипную верификацию линейных списков для Ява-программ на основе функционалов на функциональном языке программирования близок к языку «\textit{LISP}». Используется эвристика, которая выбирает правила с более жёсткими требованиями. Цель проекта заключается в поиске автоматизации или ускорения за счёт абстракции спецификации функциями высшего порядка (см. подход Поулсона \cite{paulson93}).

Система верификации объектно-ориентированных систем, как в этой главе уже было изложено, которые более близки к индустриально применимой верификации являются, например, также «\textit{KeY}» \cite{mueller02}. Система «\textit{Baby Modula 3}» \cite{abadi93}, \cite{cardelli97} может быть применима, но страдает от множества ограничений (см. раздел \ref{sect:TheoryOfObjects}).


\section*{Проблемы динамической памяти}


В этом разделе имеется короткое введение в типичные и актуальные проблемы в работе с динамической памятью. Затем дается короткая мотивация, почему изучение и исследование этих проблем актуальны с теоретической и практической точки зрения.

Под  \textit{динамической памятью} подразумевается (см. рисунок \ref{fig:ProcessSectionLoader}) та часть  операционной памяти  процесса, которая выделяется во время загрузки по запросу \cite{levine99}, \cite{love10}.  Куча является  неорганизованной частью памяти (см. рисунок \ref{fig:ExampleHeap}), в отличие от  организованной части памяти, т.е.  стека. Организованность подразумевает автоматическое выделение и утилизацию  локальных переменных по  диапазону видимости (см. рисунок \ref{fig:ExampleStack}), с чёткой пропиской связи между элементами. В случае стека, при компиляции «\textit{ABI}» вынуждает к строгому порядку помещения в текущее стековое окно локальных переменных.   
Например, задекларированные локальные переменные, при  условном переходе видны в нём и могут перекрываться другими локальными переменными с таким же наименованием.
При выполнении программы, локальные переменные не доступны за пределами блока видимости. Такое же наблюдается с процедурами и подпрограммами: локальные переменные, а также  параметры по значению вталкиваются  при запуске процедуры в видимое  окно на стеке, т.е. актуальное. При выходе из процедуры, стековые  переменные выталкиваются из стека, окно утилизируется (подробные реализации можно найти в \cite{gcc15}). На рисунке \ref{fig:ExampleStack} имеется стековое окно с локальной переменной $o$, которая в данном случае, --- целое число $1$. Также имеется на рисунке  массив, (локально, не обозначен) с указателями, которые ссылаются на $o$ и на высший элемент. Все элементы \textit{стекового окна} компактно помещены в стек, т.е. между ячейками обычно не имеются дыры. В отличие от этого, запись не обязательно должна быть компактной, точнее, если не упоминается ключевое слово «\texttt{packed}», то часто это и не происходит, благодаря генерации наиболее оптимального кода по  быстродействию или  размеру кода.

В отличие от этого,  ячейки \textit{кучи процесса} могут быть разбросаны как угодно. Видимость ячеек кучи не зависит от синтаксических блоков, а зависит исключительно от момента выделения динамической памяти до явной  утилизации памяти. Если куча не утилизируется программным оператором, то по умолчанию все кучи утилизируются глобальным  деструктором программы перед завершением процесса операционной системы. С этой целью кодовый сегмент \texttt{.dtor} в программах Си++ содержит все адреса деструкторов переменных объектных экземпляров, которые перед передачей контроля ОС вызываются. Стек и куча определяются операционной системой,  они находятся в  \textit{виртуальной памяти}. Подробные механизмы памяти выделения и утилизации в данный момент нас пока не интересуют. Для  стека и  кучи нет фиксированного порога при запуске программы, а имеется плавающий порог (штрихованный, см. рисунок \ref{fig:ProcessSectionLoader}), который может, либо расти, т.е. стек увеличивается за счёт уменьшения кучи, либо наоборот, в зависимости от объёма употребления памяти.  «\texttt{bss}» обозначает сегмент памяти, который содержит все  глобальные (часто могут включаться иные, не локальные, в зависимости от конкретных реализаций при компиляции \cite{isocpp14}) переменные, которым не было присвоено изначальное значение. Все остальные, не  локальные и не  динамические переменные выделяются в сегмент «\texttt{data}». 
Загружается программный код. В нём содержится объектный код с адресом точки запуска. Все относительные адреса заменяются абсолютными адресами. На практике программный код может и должен содержать различные фазы запуска программы, например, деструкторы.

Любая ячейка любого сегмента из рисунка \ref{fig:ProcessSectionLoader} может быть  адресована линейно. Это означает, что доступ к содержимому памяти можно получить по последовательному увеличивающемуся порядку. Любая ячейка памяти имеет последовательную по адресу $+\underline{1}$, где $\underline{1}$ является символическим числом, которое представляет собой размер одного элемента соответствующего типа. Однако, согласно рисунку \ref{fig:ProcessSectionLoader}, имеется плавающая граница, которая практически никогда не достижима. Например, если целое число имеет тип \texttt{uint16_t}, то размер $\underline{1}$ станет $2$. Обратим внимание, что с помощью  \textit{эксклюзивного бинарного оператора «\textit{ИЛИ}»} (XOR, $\oplus$) можно моделировать два  указателя в двусвязанном списке с помощью лишь одного  «\textit{поля прыжка}», если в качестве содержимого поля записывается $a \ xor \ b$, где $a$ является адресом начального поля, а $b$ следующее поле, т.к. действует $a \oplus (a \oplus b) \equiv b$, а также $(a \oplus (a \oplus b)) \oplus (a \oplus b) \equiv b \oplus (a \oplus b) \equiv a$ (см. \cite{parlante01}, \cite{sinha04}, \cite{haberland16-6}).

Таким образом, можно сэкономить один указатель, хотя, утилизация ненужных ячеек памяти не может быть решена стандартно. Допустим, производится манипуляция двусвязного списка, тогда необходимо проверять достижимость с помощью сумм всех адресов списка.

Памятные модели между языками  Cи(++) \cite{gcc15}, \cite{llvm15} и  Ява \cite{sun06} сильно отличаются, даже между Си++ и Си, если даже на первый взгляд это не очевидно.
Так например, у языков программирования, которые совместимы с  ISO Cи(++)  сбор мусора ненужных элементов часто не проводится. C одной стороны, это экономит дополнительные расходы, с другой стороны, это перекладывает большую ответственность на программиста. В языке Ява, сбор мусора очень часто приводит к дополнительным затратам, к счастью, сбор мусора работает часто в параллельной нити внутри  Явы в  виртуальной машине, что способствует к избеганию больших дополнительных расходов. Ява, как компромиссное решение, использует подход  «\textit{сбора мусора по генерациям}», который можно считать, приближенным к субоптимальным решениям со стороны практических порогов и объёмов мусора. В Яве все переменные и данные хранятся в  кучах. Диалекты Си допускают  \textit{слабую типизацию} \cite{isocpp14}. Не совместимые типы при различной интерпретации битов могут быть преобразованы, например,  байт в вещественное число с помощью  Си оператора  \texttt{union}.

\subsection*{Мотивация}
В техническом докладе \cite{miller90} на протяжении десятилетий анализируются совершённые и отслеженные ошибки при разработке открытых и коммерческих программных систем обеспечения. Результаты \cite{miller90} практически из года в год без изменения подтверждаются многочисленными новыми исследованиями, как например \cite{miller95}, \cite{hind01}, \cite{abiteboul05}. В \cite{miller90} демонстрируется, что в приложениях для коммерческих дистрибутивов операционной системы  «\textit{UNIX}», содержатся около 23\% ошибочного кода, а код открытых проектов содержит лишь около 7\% на платформах  «\textit{GNU}». Авторы осторожны и оптимистичны. По Миллеру наиболее часто встречаются следующие ошибки:

\begin{enumerate}
 \item Ошибки с указателями и полями. Авторы обращают внимание, что для коммерческих приложений, ошибки либо вообще не обнаруживаются во время разработки программ и начального тестирования из-за недостаточного  покрытия тестов, либо обнаруживается лишь при \textit{портировании} продукта на другую  платформу. Авторы более всего обеспокоены этим видом, т.к. их трудно обнаружить и исправить так, чтобы другие модули не пострадали. Чаще всего замечаются следующие трудности: неверный доступ к памяти, 
  неинициализированные ячейки данных или  утечки памяти, которые рано или поздно становятся  не доступными.
 \item  Неверный доступ к  массивам с преодолением доступных границ, либо в связи с изменением структуры данных, хотя  указатель остаётся без изменений.
 \item Недостаточная проверка работы с файлами. Часто при чтении, достижение конца файла вообще не проверяется.
\end{enumerate}

Верифицировать динамическую память тяжело потому, что  описывается состояние куч, а данная программа с описанием корреспондирует \cite{jones75} «\textit{неявным}» образом. Если интерпретировать кучу, как граф (см. следующие главы), а программа --- это последовательность инструкции построения того графа, то описание одной и той же кучи может быть сделано разными путями. При проверке, обе стороны должны корреспондировать.

Хинд \cite{hind01} задаётся вопросом, почему анализ  псевдонимов до сих пор не решён. Он призывает заняться расследованием  корректности и  быстродействия (см. рисунок \ref{fig:QALadder}), т.к. \textit{не выявленные псевдонимы}  означают возможную деградацию откомпилированного кода. Если во время компиляции известно, что некоторая пара локальных переменных обязательно ссылается друг на друга, то при  \textit{выделении регистров} \cite{muchnick07}, \cite{kennedy02}, \cite{hack11}, \cite{sethi75},  \cite{rideau10}, можно  гарантировано использовать один регистр, тогда отпадает необходимость синхронизировать два или более регистров (см. \cite{ssabook15}, \cite{lattner03}, \cite{cytron91}, \cite{muchnick07}).

Хинд \cite{landi91} убеждён в том, что для эффективного выявления  псевдонимов нужно анализировать входную программу. Лэнди \cite{landi91} вводит классификацию анализа  псевдонимов. Анализы являются  NP-твёрдыми. Они делятся на: (1) псевдонимы, которые  «\textit{должны ссылаться}» или  «\textit{могут ссылаться}» и на (2)  псевдонимы внутри  процедуры или за её пределами. Анализ внутри процедуры считается эффективным \cite{hind99}. Анализ за пределами процедуры считается неэффективным и нуждается в улучшении. Хинд \cite{hind01} предлагает использовать маленькие отрывки области видимости с локальными переменными при использовании эвристического подхода для решения вопроса  псевдонимов.

Ху \cite{hu10} представляет статью, которая посвящена техническим границам циклов писания и чтению актуальных  флэш-памятей. Флэш-память как постоянный накопитель, здесь упоминается потому, что виртуальная память ОС при необходимости использует её и это быстрее чем использование винчестерского диска. Писание в память в среднем занимает приблизительно в десять раз больше времени, чем чтение, а блочная запись имеет наибольший эффект. Результаты Ху можно растолковывать так:  сбор мусора не нужен вообще, в связи с продолжительностью жизни флэш-устройства, т.к. сбор мусора сильно задерживает процессы писания и чтения остальных процессов. Кроме  \textit{постоянного накопителя}, флэш-память также может быть использована как  виртуальная память во  встроенных системах. Оно может критически повлиять на быстродействие целого приложения.

Бэссей рекомендует из-за неточности сред верификации для решения различных проблем с динамической памятью: (1) избавляться от программ, которые при запуске  меняют программный код (2) соблюдать лозунг: «\textit{обыкновенные ошибки должны быть найдены обыкновенно}» (3) учесть, что провождение анализов программы не обязательно приводит к улучшению качества программы. Отметим, что пункт (1) лишь меняет момент определения программного кода. C одной стороны --- имеется увеличение гибкости, с другой стороны --- имеется возможность ограничения семантического анализа, например, с проверкой типов, но также дополнительные расходы на динамические проверки и генерации кода во время запроса.
Естественно, верификация не в состоянии угадывать поведение статически из-за проблемы приостановки.
Из-за упомянутых недостатков (см. \cite{reus06-2}, \cite{birkedal08}, \cite{schwinghammer09}, \cite{honda05}) динамически меняющегося программного кода далее динамические изменения не рассматриваются. В связи с ошибочным обращением к динамической памяти подразумеваются:

\begin{itemize}
 \item доступ к  недоступной памяти
 \item доступ к неинициализированной памяти
 \item нехватка динамической памяти  при востребовании
 \item снижение быстродействия
 \item утечка памяти  (см. далее).
\end{itemize}

Далее все эти последствия рассматриваются. Ошибки могут привести к самым непредсказуемым ситуациям, в худшем случае вплоть до неверных дальнейших ответов программы, а к  приостановке в лучшем случае. Приостановку можно без преувеличения считать «\textit{наилучшим}» вариантом, т.к. выход в определённой точке программы предпочтительно к неопределённому выходу. Факт приостановки программы говорит о неисправности, а далее коррумпированное состояние не является предпочтительным в различных смыслах корректности и целостности.

Бесси \cite{bessey10} приводит итоги анализа успешных верификаторов в общем, и объясняет актуальные причины и принципы. Наиболее важными пунктами Бесси считает:

\begin{itemize}
 \item Верификация   теорем всегда является точной наукой: эвристики могут применяться, но в конце интересует --- следует ли предполагаемый результат из  аксиом и
  правил или нет? Поэтому, давать квантифицированный результат трудно. Малое количество обнаруженных ошибок может означать, что верификатор плохой или хуже других. Бесси обращает внимание, что 
  эвристика «\textit{бери самое крупное правило первым}» на практике даёт хорошие результаты. Рекомендуется выбирать представительные примеры из \cite{zakharov15}, как например  операционные системы, либо программное обеспечение в публичном доступе.
 \item Нетривиальные доказательства должны быть стандартизированы насколько это возможно. Если имеется возможность, то посторонние подпроцессы должны независимо от правил доказательств приводить к состоянию вместе с программным представлением, в наиболее нормализованный и упрощенный вид, с которым можно продолжать верификацию. Конкретные специфические правила должны быть логично отделены от остальных структурных правил. Общим лейтмотивом должен послужить: «\textit{простые ошибки необходимо обнаруживать простым способом}». Если лейтмотив не соблюдается, то можно считать верификацию мало полезной.
 \item По возможности как можно чаще и раньше  исключать «\textit{дополнения}» входного языка, которые можно исключать без большой потери выразимости, как например, динамическое обновление кода. Специфицировать всё подряд или то, что не связано со спецификацией (см. опр.\ref{def:SpecificationLanguage}) или верифицируется с большими затратами в ущерб читаемости и при необоснованной потере ключевых свойств правил данного вычисления, не доступно.
 Всё должно быть простое и проверяемое, иначе доказательство может быстро оказаться неприемлемым, ради исключения редких случаев.
\end{itemize}

Как было ранее упомянуто, программирование с  динамической памятью может быть накладным со стороны быстродействия \cite{larson77}. Бывают случаи, когда использование куч быстрее \cite{appel87}, чем стека, в зависимости, как активно и эффективно работает  сбор мусора. Нельзя говорить обобщёно (см. главу \ref{chapter:intro}) нужно разбираться индивидуально, чтобы принять решение, какой алгоритм лучше. Эффективность куч также в значительной степени зависит от свободных/занятых списков куч, которые контролируются  операционной системой. Торможение из-за стека происходит в связи с генерацией ненужных инструкций  вталкивания и  выталкивания. Проблема увеличивается с большими объектами, которые не помещаются в регистры.     
Каждый процесс копирования структуры данных, будь-то слово процессора, либо комплексный тип, является потенциально лишним, как только речь идет о ссылках. Если значение передаётся, тогда любые дубликаты являются лишними. Увы, устранение по этому принципу не всегда наблюдается в  «\textit{GCC}» \cite{gcc15} или   «\textit{LLVM}» \cite{llvm15} из-за сильно консервативного подхода.
Замечаем, что часто любой алгоритм, который реализован с помощью куч, можно записать при использовании  стека последовательно, как было частично предложено в \cite{meyer1-03}, \cite{meyer2-03}.
Адреса любого объекта куч можно свободно передвигать, присваивая зафиксированный адрес при условие, что каждый элемент стека можно однозначно адресовать. Для этого потребуется дополнительная конвенция идентификации последовательных типов (т.е. различия размеров) элементов на стеке. Важно соблюдать максимальный порог вталкиваемых объектов в стек, который увеличивается при уменьшении кучи (см. рисунок \ref{fig:ProcessSectionLoader}). Элементы массивов, т.е. кучевые объекты одинарного типа, можно вталкивать, просто копируя элементы последовательно. Важно, сохранять на стеке информацию о том, что следует массиву. Таким образом, единственными открытыми вопросами остаются: (1) Действительно ли так нужно поступать, ради читаемости, эффективности алгоритма и т.д.? (2) Всегда ли применим такой вариант или нет? Первый вопрос довольно спорный, сравнив например, элегантность деревьев в кучах (см. \cite{parlante01}). Второй вопрос не всегда решим, в частности, когда лимиты устанавливаются только при запуске программ, прочитав например информацию из файла. Конечно, можно попытаться установить некоторый максимальный технический массив для покрытия большинства случаев, однако, таков алгоритм далеко не эффективен, а тем более неполон. Всё, что можно заранее вычислить и статическим образом установить, это очень полезно для вычисления только со стеком. Увы, часто бывают ситуации, когда это не приемлемо по техническим или другим причинам. Нельзя не заметить, что стек и куча конечны.

Далее рассмотрим некоторые конкретные проблемы  динамической памяти на образном диалекте  Си.

\subsection*{Проблемы в связи с корректностью}

\textbf{Пример 1-- Утечка динамической памяти.}
 Утечка памяти является одной из проблем, которая встречается очень часто. При утечке выделяется объект в динамической области памяти, которая, в конце концов, не освобождается. Часто это не приводит к краху загруженной системы, однако, нельзя сказать, что это удовлетворительно. Программа раньше выйдет из строя без предупреждений, либо в любой неопределённый момент.
Это может произойти тогда, когда ОС вдруг обнаруживает нехватку доступной памяти, либо ОС не может разрешить доступ к памяти
из-за специфических правил  ОС по безопасности, либо просто в ОС не осталось достаточно иных ресурсов.
Типичный сценарий такой, что программа работает пару минут или даже три недели подряд без событий, а затем, либо из-за внешнего события, либо сообщения, программа неожиданно выходит из строя или просто  терминирует без указания ошибки. Часто отслеживание, если оно возможно, может не привести к самой ошибке, т.к. выделение памяти (если обнаружить) может являться только \textit{симптомом}, но не настоящей причиной ошибки. Выявить настоящую ошибку крайне тяжело. Ошибкой программы из рисунка \ref{ExampleObjectInstantiation1}

является тот факт, что содержимое от \texttt{object1} не  утилизируется после повторного присвоения. Если предположить согласно ISO Си++ \cite{isocpp14}, что ОС при завершении программы освободит все выделенные ячейки, то уже до завершения программа может выйти из строя из-за нехватки памяти. Проследить такой тип ошибок будет крайне тяжело или практически не возможно, в частности потому, что при каждом запуске пороги ответственные за провал могут кардинально отличаться. Проблема заключается в анализе «\textit{опасных}» мест в программе, которые могут не освобождать ранее зарезервированные ячейки.\\

\textbf{Пример 2 -- Неверный доступ к памяти.}
 Неверный доступ к динамической памяти является ошибкой, которая встречается относительно часто. Причиной служит тот факт, что некоторый объект не инициализирован, либо имеет не правильное значение. Отметим, что ОС по-разному относится к объектам, которые находятся в  «\texttt{bss}». Например, ОС  «\textit{Windows}» часто не инициализирует локальные переменные. Это приводит к ячейкам с  неопределённым значением, тогда продолжение программы не определено, т.к. вычисляются альтернативные пути программы.
Это может  уязвить процессы.
Причиной является неправильное или неприсвоенное значение переменных в рассматриваемом алгоритме (см. рисунок \ref{ExampleObjectInstantiation2}).

Анализ ошибки должен сосредоточиться на все присвоения с момента первой декларации переменной \texttt{object1}, включая все присвоения, если значение было передано. Из всех присвоений необходимо выявить то присвоение, которое «\textit{неправильное}» согласно формальной/неформальной спецификации данного алгоритма.  Неверный доступ к памяти можно считать как попытку доступа к объектам, которые были нечаянно  утилизированы. Неверным доступом можно считать множество ошибок связанных с неправильной разрядностью или шириной  процессорного слова в связи с неправильным применением преобразования  типизации  указателей. Например, в связи с принудительной  \textit{конверсией типов} между \texttt{(int*)} или \texttt{void*}, либо \texttt{(char*)} и \texttt{(int8_t*)}, которые в зависимости от компилируемой платформы могут различаться. Также в зависимости от платформы:
  $$\texttt{sizeof(struct(int a, char b))}, \ \texttt{sizeof(int)+sizeof(char)}$$
не должны быть идентичны при типизированном доступе к адресу памяти. Размеры структур и её  разрядность могут тоже отличаться. Если предположить, что $a$ занимает 2 байта, а $b$ занимает 1 байт (что на разных платформах может быть не так), то содержимое, может быть одним из вариантов битовых масок из рисунка \ref{ExamplesBitmask}.

Поэтому, предпринимать какое-либо утверждение, зависимо от платформы, может быть даже ошибочным, если например речь идёт, об архитектуре  «\textit{Intel}»,  «\textit{Pow\-er\-PC}» 64-битных,  «\textit{ARM}» 32-битных или иных процессорах.

Если, по какой-то причине не инициализируется некоторое поле, то безобидная программа из рисунка \ref{ExampleUninitialisedFails} может  не терминировать или терминировать не правильно, а также может выдать неверную информацию -- любое из побочных эффектов является недопустимым.

\textbf{Пример 3 -- Висячие указатели и псевдонимы.}

 Висячие указатели (см. \cite{afek07}) получаются, когда несколько указателей ссылаются на один объект и операции над другими указателями, возможные  псевдонимы, приводят к тому, что хотя бы один указатель «\textit{нечаянно}» ссылается по ошибке на пустое место или операции над одним указателем меняют  связанную кучу. Хотя феномен интуитивно понятен, но на практике определение  псевдонимов может оказаться довольно трудным и неточным. Необходимо анализировать не только одну процедуру с указателями параметров, но все возможные вызовы --- т.е. одним вызовом множество содержимых указателей может поменяться, а в других случаях  может ничего не поменяться.
 Содержавшие объекты висячих указателей подлежат утилизации, т.к. по определению они являются мусором. Указатели, ставшие висячими, могут стать невисячими в ходе запуска программы, а также наоборот. Однако, содержимые, ставшие мусором, навсегда утеряны.
Феномен висячих указателей можно достичь именно таким способом, но также при использовании различных интерпретаций данных ячеек, например, предусмотренные структурой объединения c помощью  «\texttt{union}».\\

\textbf{Пример 4 -- Побочные эффекты.}

Вместо ненужного копирования структур данных, часто можно при вызовах процедур лучше использовать ссылки. Однако, этот подход содержит опасность, что нечаянно могут пострадать посторонние данные и переменные. Эта проблема является обобщением примеров №2 и №3: указатели не меняются в ходе запуска, но содержимое неожиданно меняется. Далее можно  обобщить: модификация неожиданно меняет другие переменные.
Модификация одной кучи не должна отражаться на другую.

\subsection*{Проблемы в связи с полнотой}

\textbf{Пример 5 -- Проблемы в связи c выразимостью.}

Несколько проблем с  выразимостью уже были представлены в этой главе. Основные проблемы выразимости заключаются в: (1) можно ли все допустимые кучи  специфицировать при условии, что все корректные кучи действительно синтаксически верны, исходя из данного набора правил? (2) Выводится ли, что все выбранные неверные кучи верифицируют как неверные, согласно правилам? (3) Какими можно использовать предикаты? (4) Каким условиям и свойствам должны придерживаться предикаты? (5) Каковы взаимосвязи между кучами и насколько адекватно это отражается в формулах? (6) Какие ограничения имеются в связи с использованием символами в описаниях куч? (7) Как лучше описать множество и отдельную кучу? (8) Как можно выразить зависимость между  псевдонимами? (9) Имеются ли многозначимые кучи или их описания, если да, то почему и можно ли их эффективно исключать? (10) Какой уровень  абстракции нужно вводить для удобного описания куч и для решения задачи верификации? (11) Имеется ли возможность абстрагировать настолько, чтобы, интуитивно стало ясно, о чём идёт речь, и пользователь, без особого труда, имел бы возможность лучше «\textit{понять}» спецификацию?  (12) Можно ли формулы для описания куч преобразовать так, чтобы не было необходимости специфицировать повторно? (13) Можно ли, если потребуется дополнительное  преобразование, использовать «\textit{что-то более знакомое}» программисту для спецификации и верификации чем искусственно определённые и ограниченные подвыражения логики предикатов (например, первого порядка), которые вычисляются и поддерживаются не полностью и часто не интуитивно?\\

\textbf{Пример 6 -- Проблемы с полными представлениями.}

Согласно рисунку \ref{fig:QALadder} в статье Сузуци \cite{suzuki82} были предложены операции над указателями, которые можно считать  «\textit{безопасными}». С использованием нужно быть крайне аккуратно потому, что  \textit{ротация} одной структуры данных по часовой может очень быстро привести к  уничтожению или фальсификации куч. Кроме этой проблемы корректности, часто неявные условия не очевидны.  Подход Сузуци, а также другие подходы страдают практически всегда от неполного набора правил и не полностью описанных куч. Часто имеется ситуация: дан набор 25 правил. Вопрос: (1) достаточно ли этих 25 правил или требуется ещё добавлять или даже необходимо удалять правила? 
(2) Как быть с  «\textit{правилами дубликатами}»?
Состояния куч связаны с  программными операторами. То есть, необходимо описать целые  подмножества куч и их сравнивать. (3) Имеет ли смысл ограничить выразимость куч так, чтобы приостановка была решимой с приемлемыми ограничениями? (4) Имеется ли возможность только часть кучи специфицировать и доказывать? (5) Можно ли предложить или использовать простую модель памяти так, чтобы доказательство данной простой структуры данных была простой, как например,  реверс  линейного списка (см. \cite{reynolds09})?

Желательно, чтобы для верификации не было необходимости постоянно все методы полностью специфицировать. Выработка корректной и полной спецификации на практике означает большие затраты рабочего времени инженера, в чем часто нет необходимости. На практике часто специфицировать достаточно лишь некоторые процедуры или объектные классы. Для этого необходимо отдельные фрагменты программы оставлять не специфицируемыми. Практическим требованием для инженера является возможность добавления  вспомогательных утверждений в те места программы, где инженер желает подробнее проанализировать некоторую нестыковку со спецификацией для  локализации и выявления ошибки (см. алгоритм №\ref{algo:AlgorithmProblemReduction}). Разработчик также может быть заинтересован в добавлении проверок в произвольных местах программы дополнительно к пред- и постусловиям.\\

\textbf{Пример 7 -- Проблемы в связи со степенью автоматизации.}

Согласно  лестнице качества из рисунка \ref{fig:QALadder}, эти проблемы входят во вторую категорию.
Главные проблемы автоматизации в ранее упомянутых разделах связаны с необходимостью определять аксиомы и правило динамической памяти от общих логических и иных правил, которые не связаны с преобразованием элементов динамической памяти.
Если установить  формальную теорию, основанную на равенствах и неравенствах куч, и эту теорию записать в отдельный набор правил, то набор уменьшается и верификация упрощается. Такая попытка значительно уменьшила бы численность и сложность данных правил. Необходимо улучшить сравнение спецификации с данным состоянием куч, которое к большому сожалению проводится в существующих подходах практически вручную (см. далее разделы и главы). Когда данную  теорему нужно использовать, а когда сопоставить с нужными символами --- трудно предсказать. Проблема также существует при преобразовании из одного состояния кучи в другое. Локальное оптимальное решение доказательства может всё равно привести к полной нерешимости.
При преобразовании куч с помощью  дедуктивного метода, также стоит задуматься об улучшении сходимости доказательства, подключив например абдукцию. Основным теоретическим ограничением может выступать выразимость формул. Термы могут быть, либо не полностью определены во время статического анализа, либо они нерешимы в принципе. 
Ограничение офсетов в арифметических выражениях приводит с одной стороны к ограничению выразимости, с другой стороны к повышению уровня автоматизации.
Возникает практический вопрос, насколько полезны определения алгебры куч? Насколько достаточна строгая типизация во входном языке программирования?

\subsection*{Проблемы в связи с оптимальностью}

\textbf{Пример 8 -- Проблемы в связи с быстродействием.}

Проблемы быстродействия конкурируют напрямую с  корректностью (см. рисунок \ref{fig:QALadder}).
Выявление утверждений  «\textit{указатели обязательно ссылаются}» или  «\textit{обязательно не ссылаются}» важнее, чем  «\textit{указатели могут ссылаться}», но их одновременно труднее выявить. Первое и второе утверждения имеют более сильный эффект на  генерацию кода.
Чем больше сокращается время запуска соответствующего кода, тем эффективнее он, т.к. отсутствие необходимости сохранения  процессорных регистров в стек, означает ускорение. Увы,  анализ зависимостей с указателями данных сложнее, чем с локальными переменными потому, что  содержимое указателя $p$ может меняться не только там, где имеются присвоения к $p$, но теоретически в любом другом программном операторе.

 Анализ псевдонимов является трудной частью, т.к. необходимо отслеживать все использования и вызовы процедур, что может привести к самым различным результатам. Часто, в  фреймворках с указателями наблюдается, либо наиболее  обобщённые утверждения, которые могут оказаться полностью без эффекта, либо утверждения вообще не рассматриваются.

Также важно заметить, что если структура данных используется только в одном месте и дубликаты отсутствуют, то требуется меньше расходов для непосредственной манипуляции.
В реализациях наблюдаются в основном два подхода: либо вручную передаются простые и объектные переменные (но возможно с вспомогательными замечаниями, например, в Си с помощью ключевого слова  \texttt{register} \cite{gcc15}), либо в исключительных случаях проводится  анализ псевдонимов (в основном исключительно внутри процедур \cite{llvm15}).
Этот случай не исключается в языках  Cи \cite{gcc15}, используя ключевое слово  \texttt{const}. В качестве мотивирующего алгоритма, например,  реверса списка с изменением существующей структуры \cite{reynolds09}, \cite{parlante01}, изначальный список исчезает, но это в зависимости от контекста может вполне устраивать.    Алгоритм Рейнольдса только один раз проходит через линейный список (без явных и обратных указателей). Таким образом, отпадает необходимость копировать список. Все операции производятся по данному списку, что сильно ускоряет. Если бы знать заранее, что структура данных будет меняться и оригинал будет не нужен, почему бы и не забросить старый список и таким образом резко ускорить алгоритм?
Данные компиляторы \cite{gcc15}, \cite{llvm15} пока не в состоянии проводить такой анализ, по крайней мере, не детально. Далее, можно ли поменять  «\textit{ABI}» при компиляции так, чтобы не используемые объекты удалялись из стека  вызывающей стороны \cite{isocpp14}, \cite{gcc15} безусловно, и уже существующие объекты в  динамической памяти были бы использованы непосредственно, если объекты строго не меняются при вызове? Надо отметить, что  модель указателей в \cite{sinha04} не стандартная, затраты для линейных списков сильно сокращаются,  сбор мусора изменен до неузнаваемости.

В частных случаях, когда заранее известно число  итераций циклов статическим анализом (см. \cite{appel87}), то можно оптимизировать место расположения ячеек памяти.

Проблемы оптимальности сбора мусора, начиная с  \textit{алгоритма Уэйт-Шора} \cite{schorr67} доныне, можно считать, более чем достаточно исследованы \cite{jones11}. 
Микрооперации в связи с динамической памятью производятся операционной системой, которая следует за свободными ресурсами, в том числе, куча и стек.\\

\textbf{Пример 9 -- Проблемы в связи с целостностью и безопасностью программы.}

В качестве оптимизации (см. рисунок \ref{fig:QALadder}) также рассматриваются проблемы в связи с анализируемой программой, где предполагаются корректность и полнота.

Аналогично к  \textit{переполнению стека} \cite{Kim15}, когда стек наполняется нежелаемыми данными, с целью передвижения актуального указателя за пределы актуального стекового окна при вызове или возврате с процедуры --- имеется такая же попытка вталкивания обратной метки, например, при сборе мусора в куче \cite{kaempf06}, \cite{afek07}.
Очевидно, что в отличие от стека, куча не содержит адреса программного кода, следовательно, атака включения вредного кода прямым образом не сможет сработать априори. «\textit{Переполнение куч}» может привести к потенциальной уязвимости процесса, либо целой системы.

Особенно критично стоит вопрос о безопасности с интерфейсами дальних серверов и служб, а также со спецификациями, когда вызывается некоторый системный доступ к драйверу \cite{corbet05}. Так как, драйвера могут запускаться несколькими инстанциями одновременно, то неисправность в связи с ошибкой в динамической памяти является особенно критической. 
В худшем случае нестабильность может привести к краху ядра ОС, как это имело место быть в случае с монолитной архитектурой ОС   «\textit{GNU Линукс}» с одним ядром.

\section*{Выразимость формул куч}

Предположим, что имеется некоторая   императивная программа, где  граф потока управления \cite{khedker09} выглядит, как представлено на рисунке \ref{fig:ControlFlowGraph}. Пример из \cite{cytron91} послужит нам образцом выявленных разниц между  автоматически и  динамически выделенными переменными. Особенность автоматически выделенных переменных заключается в том, что они выделяются и уничтожаются автоматически открытием и закрытием  стекового окна. Стековое окно содержит все  локальные переменные и параметры, которые поступают во внутрь и выходят наружу. Нельзя это путать с «\textit{fan-in}» и «\textit{fan-out}».

Согласно рисунку \ref{fig:ControlFlowGraph} определяются  \textit{интервалы видимости}. Интервал видимости всей процедуры, например [\textit{Вход,Выход}] означает, что  передаваемые переменные видны на всех вершинах между начальным блоком «\textit{Вход}» и конечным блоком «\textit{Выход}».  \textit{Блок} определяется как объединение последовательных неразветвляющихся программных операторов. Далее, во избежание  коллизий между различными переопределениями, предпочитается форма блоков в виде  «\textit{SSA}» \cite{cytron91}, \cite{hack11}, \cite{ssabook15}. Разветвлением может быть любой  условный или безусловный переход. Программные операторы безусловного перехода исключаются.

Допустим, в некоторых  блоках определены  локальные переменные по  «\textit{SSA}»-форме как имеется в рисунке \ref{ExampleSSABlocks}.

Предположим, все остальные блоки, либо искусственные, пустые, либо не содержат определения только что введенных переменных.
В блоке № 5: \texttt{f}, является процедурой, которая принимает одну переменную. На первый взгляд в этом нет ничего не обычного, однако,  если внутри \texttt{f} производится доступ к содержимому параметру, в Си это производится с помощью \texttt{\&}, тогда переменные могут меняться за пределами \texttt{f}. К счастью этот сценарий можно выявить с помощью предыдущего анализа всех входных и выходных переменных от \texttt{f}. Хотя описанный сценарий на практике может встречаться не часто, сценарий является причиной, почему множество оптимизаций не могут совершаться, спасая корректность в общности.
Если в блоке № 6 \texttt{a} определяется заново, то он выглядит так: $a_2= \phi(a_1,a_0)$. Функция $\phi$ является вспомогательной и означает объединение различных определений одной переменной.  $\phi$ неявная функция (отсюда и название с англ. «\textit{\underline{ph}ony}», что означает \textit{ненастоящая} или \textit{поддельная}). Концепция неявно определённой функции интересна, с ней мы встретимся позже при определении куч. Такого рода определения данных переменных программ является определением  зависимостей данных и может быть применено к любым автоматически выделенным данным в   императивных программах. Структура зависимостей в общем случае не может быть деревом, это запрещают  $\phi$-функции когда имеются зависимости, которые показывают на более ранний блок, который снаружи от циклов.

Как бы ни было, нас интересует, какие подходы имеются для вычисления интервалов видимостей с помощью $\phi$-функций \cite{ssabook15} локальных переменных. Мы обходимся тем замечанием, что имеются  максимальные границы, которые определяют интервалы. Эти границы могут определяться рекурсивно по стеку. Границы вычисляются с помощью графа потока данных и $\phi$-функций для каждого из локальных переменных (алгоритмы при поддержке вершин доминаторов представлены, например, в \cite{ssabook15}, классический подход смотри в \cite{cytron91}). Очевидно, что если переменные определяются в двух разных ветвях заново, то содержание после ветвления может отличаться, а индекс повышается.

Если попытаться применить  «\textit{SSA}»-форму к динамически выделенным переменным, тогда такая попытка не удастся из-за следующих причин: (1) существуют программные операторы, которые  выделяют и  уничтожают ячейку в динамической памяти явным образом. Эти операторы могут лежать за пределами блоков и процедур. Ячейки могут, безусловно, существовать за пределами процедур и после уничтожения переменной указателя. Это означает, что место определения (в том числе переопределений и уничтожения) и использования сильно отличаются от автоматически выделенных переменных, т.е. места не обязательно совпадают с местами употребления в программе. Даже могут меняться содержимые  указателей, когда на указатели, вообще, ничего не ссылается и указатели уже давно  утилизированы. (2) размер, частота и контекст выделения памяти в куче в общности не всегда определены (см. рисунок \ref{fig:ProcessSectionLoader}).

Аналогично к графу потока данных, можно приписывать не только содержимое переменных к вершинам графа аннотации о свойствах программы \cite{floyd67}, но также, например,  утверждения о динамической памяти. Примером тому, является  фреймворк представленный в \cite{khedker09}, который анализирует для каждого указателя при сильно консерваторском подходе -- возможность о псевдониме для всех остальных указателей (через расширенный алгоритм вычисления транзитивного замыкания). Следовательно, после каждого программного оператора высчитывается не только явно выявленный указатель, но также возможно всё связанное с ним. В качестве структуры данных, используется длинное  битовое поле, алгоритм подлежит улучшению, но это не делается, потому, что битовое поле является ключевой структурой данных. Принудительные проверки всех взаимосвязей почти нельзя упростить из-за транзитивности операций сравнения и выявления возможной связанности. Сравнение  псевдонимов основано на методе Хорвицы \cite{horwitz89} и Мучника \cite{muchnick07}.

Далее, можно привести аналогию, при которой каждое состояние  динамической памяти записывается как одно состояние. Очевидно, могут иметься любые переходы состояний, но всегда имеется начальная и конечная точка, когда все  кучи пусты. При выходе можно смоделировать вспомогательное  состояние такого рода, что все конечные состояния присоединяются к общему выходу.

Например,  функциональные языки программирования исходят из  \textit{принципа независимости данных} как главной концепции. Она гарантирует новый и всё свежий контекст. В нём находятся начальные параметры без взаимосвязей. Затем результат присваивается и передаётся высшей инстанции при возврате.
 Продолжение (с англ. «\textit{continuation}») \cite{joel70}, \cite{wiki23-02-2010}, \cite{dargaye07}, \cite{thompson97} может быть характеризовано как состояние вычисления, которое передается другой инстанции. При переходе состояние не прерывается, т.к. контекст не меняется. Когда речь идет об одинаковом состоянии, то подразумевается всё состояние вычисления, т.е. прежде всего, нас интересует стек и динамическая память (см. рисунок \ref{fig:ProcessSectionLoader}). Продолжение хорошо характеризуется  денотационной семантикой в случае  функций высшего порядка, но денотационная семантика в общем означает вычисление как функции, т.е. без взаимосвязей с окружающей средой. Отмечается, что продолжения сохраняют и обрабатывают стековые окна.
Джоэль \cite{joel70} рассматривает  \textit{продолжения} для функционально-логически смешанного  языка «\textit{LISP}». Основные итоги таковы: (1) продолжения повлекут за собой копирование, вталкивание и выталкивание регионов памяти, адреса переходов и возврата из стека, (2) читаемость и моделирование алгоритмов может быть существенно улучшено. Пункт (1) не может не вызывать озабоченность в связи со скоростью \cite{appel87}. Пункт (2) приоритетный и противоположен к первому пункту, поэтому на практике необходимо находить разумный компромисс.

В примере на рисунке \ref{fig:ExampleHeap}, если \texttt{free(o.B)} приводит к тому, что объект $C$ утилизируется из  динамической памяти, то доступ к $C$ через $o2$ недоступен. Если $o.B$ снова присвоить новый выделенный объект, то $C$ об этом не заметит, кроме, если новый объект находится по тому же адресу где находился старый объект. Ради формализма можно согласовать, что  висячие указатели присваиваются ради простоты и универсальной модели \texttt{nil}, если даже в реальности указатель содержит устаревший адрес. Это означает, что при анализе зависимых указателей необходимо рассматривать также все другие указатели, которые могут быть  псевдонимом  вершин пути доступа, т.е. $o$ или $o.B$ (см. проблемы из главы \ref{chapter:DynMemProblems}).
 Интерпретация ячеек динамической памяти определяется в зависимости от типа указателя, который проверяется согласно определению во время  семантического анализа. Тип не меняется, ради исключения подклассов (см. главу \ref{chapter:logical}).

\subsection*{Граф над кучами}

Теперь, когда основные аспекты динамической памяти были подробно обсуждены (в том числе набл.\ref{obs:OrganizedMemoryFreshContext} и набл.\ref{obs:UnorganizedMemoryUniqueContext}), пора задуматься о представлении графа. Заранее оговаривается, что уточнённые модели  динамической памяти будут вводиться в главах \ref{chapter:stricter} и \ref{chapter:APs}. Основными операциями манипуляции динамической памяти являются \texttt{malloc}, \texttt{free} и манипуляция с указателями.\\
\textbf{Граф динамической памяти как регулярное выражение.}    Для начала рассмотрим простой граф $A_1$ с упомянутыми ранее условиями, который мог бы быть описан регулярным языком и, следовательно, распознан простым конечным автоматом (см. рисунок \ref{ExampleNFA1}).

С помощью  леммы Ардена \cite{davis94} этот детерминированный  автомат может быть представлен следующим регулярным выражением: $b^{*}(a^{+}b)^{+}b$. Что теперь означают этот граф и соответствующее выражение? Граф содержит вершины и грани. Вершины представляют собой не пересекаемые ячейки памяти. Имеется некоторое объединённое финальное состояние $\{q_F | \forall q_F \in F \subseteq Q \}$. Естественно, все конечные состояния можно обозначить одним состоянием. В графе вершины $D$ и $E$ означают, например, некоторые состояния, которые объединяются в $\{\forall q_F\}$.  Грани означают ссылки, которые записаны в качестве адреса в размере процессорного слова в источнике грани. Возникает первый вопрос: что представляют собой наименования над  гранями? Это могут быть  указатели или  поля объектов, т.е. локации. Оба случая не очень удобны: во-первых, указатели должны выделяться отдельно от  ячеек, т.к. они расположены в  стеке. Во-вторых, наименования всех граней должны различаться друг от друга. Допустим это так, тогда  регулярное выражение уже никак не вписывается в данное компактное представление (см. далее). Это означает, что высокая изначальная компактность сильно страдает. В-третьих, не совсем ясно, что всё-таки означают «\textit{начальные}» и «\textit{конечные}» состояния? Начальное состояние можно всегда обозначить переходом одной стековой переменной, это всегда допустимо. Конечные состояния обозначить гораздо тяжелее и неординарно: является ли это обозначение конечным состоянием для вычисления? Если опустить $q_F$, то выражение просто не определено. Можно ввести новое состояние, которое принимает от всех состояний те переходы, которые сигнализируют окончательное вычисление структуры данных.

Допустим, все обозначенные проблемы в данный момент соблюдаются с достаточно удовлетворительным способом и мы продолжим вопрос о внесении изменения  динамической памяти с помощью программных операторов после выявления (не-)удобств  регулярных выражений. Если запись окажется компактной, то надо проследить, насколько она стабильная при манипуляции. Если имеется маленькая манипуляция, например, меняется только одна грань, то выражение не должно сильно меняться, и тогда можно было бы нотацию считать практичной. Если вдруг запись не устраивает, то необходимо выявить причину и искать другую модель представления памяти. К примеру, для последнего графа добавляется новая грань $b$, после ввода граф выглядит как $A_2$ в рисунке \ref{ExampleNFA2}.

Это эквивалентно выражению $b^{*}a^{+}b((a^{+}b)^{*}+bba^{*}b(a^{+}b)^{*})^{*}b$. Теперь мы удаляем  грань $a$ и получаем $A_3$ (см. рисунок \ref{ExampleNFA3}).

и получаем регулярное выражение $b^{*}a(a^{*}bb(\varepsilon + ba^{*}bb))^{+}$.
Допускается, что все три выражения могут быть переписаны и далее упрощены, но здесь это не решающий фактор. Проблема заключается в том, что если грань вставляется в любое место графа, а из этого надо исходить, то данное  регулярное выражение может очень сильно поменяться, в реальном и худшем случаях практически полностью. Чем больше граф, тем труднее записывать регулярное выражение, которое было бы оптимальным и как можно больше похоже на предыдущее выражение. Причина лежит в том, что система линейных уравнений  по Ардену при манипуляции подвергается малым изменениям одной грани. Например, первый граф описывается как в рисунке \ref{ExampleArdenLemma}.

Нетрудно убедиться в том, что система линейных уравнений имеет решение, но оно сильно отличается от предыдущего. При этом, уравнения почти не меняются. Однако, соответствующая регулярная грамматика сильно отличается.\\
\textbf{Граф при манипуляции.} Сначала необходимо рассмотреть последовательность типичных  программных операторов, чтобы обсудить некоторое «\textit{адекватное}» представление  динамической памяти, а также ситуацию с описаниями указателей и переходов.
Рассмотрим  инверсию  списка из \cite{reynolds09}. \cite{reynolds09} и особенно \cite{parlante01} содержат многие примеры, но остановимся на выбранном примере, который можно вполне считать представительным.
Дана следующая программа из рисунка \ref{ExampleListInversion} на  диалекте Си со специфическим синтаксисом касательно  указателей.

В программе $i$, $j$, $j$ и $k$ обозначают указатели. Доступ к последующему элементу в данной программе реализуется к примеру с помощью неявного оператора $*(i+1)$. Подразумевается, что элементы связаны между собой, а не только являются единым, монолитным, непрерывным  регионом в  динамической памяти, даже если об этом напоминает похожий  синтаксис. Семантику программу можно пояснить на примере рисунка \ref{ExampleListInversionSteps}, где номер, означает шаг итерации до посещения цикла.

До входа в цикл, $i$ содержит  список, при выходе $i$ пуст, а обратный список содержится в $j$. Копии не создаются. Входной список итерируется ровно один раз. Замечаем, что грани в примере не подписаны, но это в связи с выбранным неявным определением оператора над указателями. Конечно, в общем  грани могут быть подписаны. Однако, существуют указатели, которые являются  локальными переменными. Они присваиваются к вершинам графа, либо не присваиваются, тогда, когда  указатель  неинициализирован. Не инициализированным указателем является $k$ до вступления в цикл.

Нетрудно заметить, что компактная нотация даже при простых манипуляциях, например от (1) к (2), совершенно не приспособлена --- главная причина лежит в  выразимости и адекватном представлении графа. Хотя представленная регулярная запись для данной проблемы не пригодна, всё равно вопрос о компактном представлении также сыграла роль при формулировке автоматизированного подхода в главе \ref{chapter:APs}.

Также нетрудно понять, почему другие глобальные подходы, которые были введены в главе \ref{chapter:intro}, например  ВР,  АО и т.д. не являются настолько успешными. Причина та же самая, которая была продемонстрирована в последних двух примерах. Подход Доддса \cite{dodds08} далее не рассматривается, хотя он описывает графы до и после  трансформации и сосредоточен на описание трансформаций графов, но подход не специфичен для указателей и не рассматривает входной императивный язык программирования (см. главу \ref{chapter:intro}).

Исходя из наблюдения,  граф можно описывать различными способами: (1) или каждую компоненту по отдельности, (2) или предпочесть смешанную форму. Очевидно, что подход (1) не целесообразен потому, что описывать вершины по отдельности (при этом независимо от наименований указателей) может быстро оказаться нечитаемым и не удобным, а спецификация к графу должна быть удобной, короткой и адекватной. 
Подход (1) требует все вершины графа обозначить по отдельности и затем включить их в  спецификацию. Как было обнаружено в примерах введения, такого рода подходы подвергаются целому ряду недостатков и поэтому, в наших целях это не допустимо.
Подход (2) подразумевает, либо (2а)  \textbf{описать вершины}, либо (2b)  \textbf{грани}, а другую компоненту выявить из данной. Проблемой при описании только вершин (2a), являются дубликаты в  спецификации потому, что вершина полностью идентифицируется  гранями. Каждая грань имеющая отношение к данной вершине должна приписываться. Это равнозначно тому, что к данной вершине надо иметь список всех соседних вершин. Увы, такой подход очень неудобен. Например, одна вершина связана с двумя или более того вершинами. Это означает, что все соседние вершины также должны вписывать вершину в  свои списки. Кроме того, если происходит манипуляция кучи, то,  спецификация подвергается сильному изменению, а этого нужно обязательно избегать.

Подход (2b) наоборот описывает только  грани, а  вершины в них включаются. Этот подход содержит меньше  дубликатов и ближе к программе, т.к.  ссылки проводятся над реально существующими указателями программы. Если только направленные грани разрешаются, то проверять нужно на половину меньше. Несколько исходящих граней от одной и той же вершины запрещается потому, что один указатель не может ссылаться одновременно на несколько ячеек. Кроме того,  программные операторы при более внимательном наблюдении меняют состояние вычисления чаще, чем вершины. --- Перемещение,  утилизация и  выделение являются дорогими операциями, которые включают в себя вызовы  операционной системы, а манипуляция указателями является недорогой. В худшем случае, упомянутый подход только  выделяет и  утилизирует элементы, а указатели мало или вообще не меняются. Тогда принципиально важно задаться вопросом, не подлежит ли подход исправлению?

На рисунке \ref{fig:GraphIsomorphisms} (a) указан  регулярный граф, вершины которого имеют  степень «$3$», при этом подразумевается, что каждая вершина представляет собой объект, который содержит ровно три  указателя. Для подхода (2a) необходимо специфицировать 11 вершин, каждая из которых имеет три грани, при этом, количество входящих и выходящих граней может различаться и это необходимо рассматривать отдельно. В общем, имеется 18 граней. При подходе (2b) необходимо специфицировать только 18  граней. При этом вершины связанные больше чем с одной гранью могут обозначаться символами. Чем больше данный граф отличается от  полного графа, тем меньше граней необходимо специфицировать. Если меняется грань, то меньше нужно менять в  спецификации, точнее, только меняющуюся грань. Если  меняется вершина, то необходимо проверить все связанные грани. Направленный граф можно обыскать за линейное время все левые и правые стороны граней. Согласно подходу (2b) и данному набору вершин с помощью предикатов, проводить доказательства будет удобнее. Для описания  вершин используются  указатели или выражения  доступа к  полям (см. главу \ref{chapter:logical}). Не доступные поля при  спецификации нас не интересуют, т.к. такие  ячейки по определению являются  мусором (см. главу \ref{chapter:intro}), навсегда потеряны и не подлежат восстановлению, однако, при  спецификации мы заинтересованы выявить такие места, если таковы имеются.\\

Джоунс \cite{jones11} определяет  \textit{кучу} как непрерывный сегмент  операционной памяти размером $2^k$ с $k\ge 0$, который представляет некоторую  структуру данных --- альтернативно, как по\-сле\-до\-ва\-тель\-ность прерывных блоков непрерывных слов. Например, дерево может иметь между вершинами свободные элементы, но отдельные вершины не интерпретируются иначе, чем как данным(и)  процессорным(и) словом(-ами). По Джоунсу  объект является множеством  ячеек памяти, которые не обязательно связаны между собой, но чьи  поля  адресуемые. Каждая выделенная ячейка памяти имеет  указатель и с момента  выделения до момента  утилизации возникает вопрос, а жив ли содержимый объект или нет? Трудно не согласиться с Джоунсом в том, что фрагментация является проблемой, однако фрагментация внутри объекта исключается.

Коурмен \cite{cormen09} на стр. 151 определяет, также как и Бурстолл \cite{burstall72} любую  структуру данных в  динамической памяти, обязательно как  деревом. Дерево не только подразумевает некоторую связь «$\le$» к дочериным вершинам $V_j$, а также обязуется к соблюдению  упорядоченности $f(V_{parent}) \le f(kid(V_{parent},j)), \forall j$.
Аталлах \cite{atallah98} рассматривает  кучу как массив, интерпретируемый как дерево с более широким использованием памяти, но с более высокой гибкостью. В работе Атталаха можно заметить некоторые различающиеся определения  куч. Сначала куча определяется как реализация  «\textit{очередь с приоритетом}» (на стр. 79). Приводятся и обсуждаются  «\textit{кучи Фибоначчи}» \cite{fredman87} как специализированные и эффективные очереди для добавления и удаления. Далее, на стр. 105 куча определяется как  двоичное дерево, сохраняемое все элементы очереди с приоритетом. В отличие от свободного доступа к  операционной памяти, Атталах подчёркивает на стр. 111 важность доступа исключительно через имеющийся указатель. Таким образом, произвольная  адресация исключается. Ответственность деления  куч ложится на мало связанные кучи неявным образом, исключительно, на программиста и на моделирование ПО. Мало связанные кучи можно хорошо делить и обрабатывать эффективными методами. С Коурменом можно не соглашаться касательно произвольной адресации, из-за ранее упомянутых постановлений выразимости. Однако, с ответственностью программиста за создаваемую структуру данных нельзя не согласиться, если даже имеются строгие предпосылки в том, что все структуры данных должны являться деревьями, а не графами --- если даже на практике это часто так. Слитор \cite{sleator86} предлагает, для ускорения доступа к куче и минимизации операций сбора мусора, балансировать деревья равняя соседние вершины в отличие от других сбалансированных деревьев, что позволяет произвести быстрый поиск за $\Theta(n)_{min}=1$ и удаление за $\Theta(n)=log(n)$. Хотя предложение интересное, всё равно далее подход не будет рассматриваться в данный момент из-за отсутствия острой необходимости реализации, в связи с относительно маленьким объёмом конъюнктов и возможности линейного поиска по локации.

Рейнольдс определяет множество куч, как объединение всех отображений от адрес\-ного множества на не пустое значение  ячеек памяти. Следуя этому определению, одна куча --- это некоторое множество адресов, которые ссылаются на некоторую определённую  структуру данных (без дальнейшего уточнения). Рейнольдское опре\-де\-ление  структуралистское, т.к. куча, как отдельная, отличающаяся и независимая единица просто не существует (см. главу \ref{chapter:stricter}). Павлу \cite{pavlu10} определяет кучу как любой  граф, который не обязательно связан --- с этим трудно не согласиться.


\subsection*{Предикаты}

Кроме ряда замечаний и конвенций, наиболее важными требованиями остаются: (1) после выполнения каждого  программного оператора имеющего отношение к  динамической памяти граф кучи меняется наименьшим образом, (2) спецификация соответствующая куче должна также меняться минимальным образом.

Для описания состояния, соблюдая все требования  куч, возникает вопрос, как это лучше описать, если очевидно, что сделать это не так просто?

В древней Греции в философской школе  Платона эпистемология некоторого описываемого объекта предлагалось известной аллегорией при чётко урегулированном порядке задачи вопросов и ответов между двумя сторонами: человеку, который имеет объект и находится на свободе и человеку, который заперт в пещере и желает понять сущность того объекта с ограниченными возможностями. Запертая персона коммуницирует исключительно речью, а также имеется факел, который горит не бесконечно. Свет факела попадает на обсуждаемый объект и оставляет за собой на пещерной стене тень и силуэты --- это неточное изображение и речь, всё, что воспринимает персона в пещере. Эта аллегория лучше известная как  «\textit{миф о пещере}». Она предлагает описание объекта через двустороннюю коммуникацию с целью последовательного выявления непосредственных свойств при имеющихся внешних преградах. Существует множество философских «\textit{Геданкеншпилей}», например, лишённый естественной речи диалог, эксперимент искусственного интеллекта «\textit{Китайская комната}» философа Серл. Мы ограничимся мифом о пещере ради классического и древнего характера, который содержит всё, что необходимо для понятия предикатов и куч.
Казалось бы, интуитивно понятно, но абстрактное объяснение, получает конкретное присвоение входящих и не входящих свойств (так называемая  «\textit{ре-ификация}» --- концепция от абстрактной мысли к конкретной реализации). Более современный философический дискурс по течению идеализма наблюдается в классическом подходе  Гегеля: тезис, который устанавливается из отмеченных наблюдений, затем выводятся для более тщательного анализа и определения свойств противоположных тезисов, что часто из-за не правильного или неточного определения тезиса приводит к конфликту, который затем разъясняется и выводится общий вывод ---  синтезированное утверждение. На данном этапе можно считать, что выявленные свойства и требования касательно куч проводились достаточно тщательным образом, чтобы предложить первые предложения. Для обнаружения неточностей и выявления более тщательного определения куч применяется подход, близок к этой концепции.

Что касается  реификации, то  кучи должны иметь  синтаксическое и  семантическое обоснование, и они будут основываться на  предикатах. Предикаты уже были предложены  Аристотелем, которые у него назывались  «\textit{силлогизмом}». Силлогизм, это единство трех компонентов логического правила формой: если «$A$» и «$B$», то следует «$C$». Здесь нечего добавить, кроме того, что большинство логических систем основаны на слегка модифицированной модели.

Предыдущая попытка представить кучи с помощью  регулярных выражений не увенчалась большим успехом потому, что изменения происходят в переходах графа. Минимальное требование применяется к правилам, но не к выражению, т.к. представление не слишком стабильное для данной проблемы. Но, в общем, выявление и преобразование представления из одной формы в другую очень широко обсуждается и расследуется. Довольно наглядно на примерах клеточных автоматов \cite{wolfram02} можно наблюдать за установлением инвариантов выражений. В главе \ref{chapter:APs} наблюдается обратный подход: наблюдаются образцы, из которых выводятся свойства о правилах.

Предикаты как связывающие вершины графа единицы языков, будь-то   формальных/естественных, по  семиотике имеют два значения: (i) интуитивное значение, это касается вопроса ---  \textit{что собой представляет на самом деле «куча»?}, (ii)  коннотативное значение --- \textit{с чем «куча» ассоциируется}? Далее, задаётся вопрос о представлении кучи: должна/может ли куча использовать  символы и реляции и что они означают для ее представления? Может ли куча определяться не  полностью?

Оценив (i) нужно заметить, что куча представляет собой множество  указателей, которые «\textit{как-то}» связаны между собой и указывают на объекты, которые находятся в  динамической памяти. Так как в прошлом большие трудности могли быть выявлены в связи с  многозначностью и резкими ограничениями, поэтому необходимо будущие определения редуцировать к минимуму. Оценив (ii) можно выявить, что связь всех компонентов задается данной программой, и куча не организована, это означает:  вершины графа могут поступать в любом порядке, в любом месте и непрерывность  сегмента памяти естественно не должна соблюдаться.  Кучи могут быть связаны с другими кучами.

Важно заметить, что  предикат должен иметь не только возможность выразить связь между вершинами графа кучи, но также должен существовать эффективный способ выразить, что две кучи не связаны. Если такой возможности нет, то  разделение автоматически получается неявным результатом анализа всех куч, что плохо из-за эффективности. Также у определения  связанности имеются различные модусы: «\textit{связан}», «\textit{возможно связан}», «\textit{не связан}», «\textit{возможно не связан}».  Модальность кучи, также как указание времени, не имеет большого значения: во-первых, «\textit{связан}» и «\textit{не связан}» можно проверить за линейное время, из графа кучи. Во-вторых, время отслежки дискретно и до/после каждого  программного оператора.
В предикатах вариации куч с помощью  логических операторов необходимо учесть, что логическая  дизъюнкция, несмотря на принцип  неповторимости и  отрицания предиката, должна быть выразительной --- этот вопрос решается в главе \ref{chapter:logical}, где доказательство куч будет основываться на  логическое программирование. Нам нельзя упустить, что в качестве указателей могут действовать любые допустимые указатели, это  локальные и  динамические переменные, а также  поля  объектных экземпляров.

В первоначальной работе по  «\textit{логике распределённой памяти}» \cite{reynolds02} вводится оператор $\star$ над  кучами и устанавливает законы следующим образом:

 Несжимаемость подразумевает, что из $p$ нельзя следовать $p$ дважды -- это совпадает с запретом  \textit{повторимости описаний} \cite{restall94}  кучи. Однако, $p \star q \not \Rightarrow p$, в отличие от  классической логики утверждений, не подразумевает строго $p$ как отдельное утверждение. При\-чи\-на не в том, что $p$ обязательно связана с $q$ -- как это можно предполагать, исходя из описаний и интуиции о делящем операторе, как было введено и про\-де\-мон\-стри\-ровано. Это лишь как частный случай. Проблема заключается в описании нахождения элементов в  динамической памяти. Это означает, что из утверждения о двух кучах, как бы они не были связаны между собой, не следует, что одна куча вдруг может исчезнуть. Такую разницу обстоятельств надо иметь ввиду.

Правила (2-4) понятные и нетрудно обосновать  графом кучи. В верности правила (5) можно убедиться, используя  индукцию к остальным правилам, при этом, различив два случая, когда $p_1$ и $p_2$ конфликтуют и когда не конфликтуют с $q$.

Правила (6) кажутся понятными. Эквивалентности с обеих сторон могут казаться очевидными. Предпосылками являются возможные конфликты термов утверждений согласно конфликтным ситуациям, разрешив заранее проводимое  переименование, что согласно  $\lambda$-вычислению конгруэнтно  $\beta$-конверсии.

Однако, ни правило (1), ни (6) и ни другие, не исключают возникновения  свободной переменной в двух кучах, которые связаны с помощью оператора  $\star$, что и наблюдается в разработках,  «\textit{smallfoot}» или  «\textit{jStar}». То есть, одна переменная может быть использована, например, в качестве  указателя в одной  куче, а в качестве используемой ссылки в подтерме, в другой куче.\\

Рассмотрим кучу из рисунка \ref{ExampleHeap1}.

В куче имеются указатели $x$,$u$,$y$. В прямоугольниках находятся содержимые $a_1$, $a_3$ и $a_4$.  Бурстолл предлагает обозначить содержимое не напрямую, а лишь на адрес. Также он предлагает записывать в минимальную модель все промежуточные ссылки или наименования, таким образом, можно большую часть кучи описать лишь одним выражением \xymatrix{ x \ar[r] ^{a_1,a_2,a_3} & y}. Это выражение означает многое: существуют указатели $x$ и $y$, существует гарантированный путь между ними и естественно весь путь описывает отрывок  графа кучи. В линейных списках, а также в деревьях указатели \texttt{next}, если таковы имеются, упускаются по умолчанию. Фактически одно выражение по Бурстоллу описывает  весь линейный список. Совокупность таких выражений описывает граф кучи. Рассмотрим следующую структуру данных в ввиде ``кактуса'' (см. рисунок \ref{ExampleCactusExample}):

Для описания понадобятся лишь два выражения. $x_1$ является  указателем, который выделен на  стеке. Допустим, все остальные ячейки не имеют указателей, следовательно, она расположена в  динамической памяти. Удобная, хотя и компактная, запись ломает обозначенные ранее критерии минимальности. Поэтому, запись Бурстолла далее не используется, но, рассматривается возможность  абстракции с помощью предикатов для более гибкого описания, которое основано на  гранях графа.\\

Для полноценного анализа указателей и  динамической памяти со стороны  выразимости нет необходимости разрешать доступ к  динамической памяти с любыми выражениями вычисления  адреса. Все операции доступа целиком могут быть выявлены писанием и чтением динамических ячеек. Если по причинам быстродействия нужен произвольный, последовательный доступ к памяти, то это является типичным сценарием применения  стека. Однако, выделение любого доступного размера, допускается как для стека, так и для динамической памяти. Надо понимать, что память выделяется как один непрерывный сегмент и проверка на «\textit{неверные}», «\textit{висячие}» ячейки и на свойство «\textit{мусора}» там, естественно, не проводится. Нетрудно себе представить  вспомогательные функции, которые могут произвести любые преобразования типов с данным сегментом памяти, поэтому этот вопрос далее не рассматривается. Обоснования и   путь доступа к объектам имеют только  указатели. Объектные экземпляры рассматриваются как  структуры, выделенные в динамической памяти, в отличие от классических структур, например в  Си \cite{isocpp14}, которые, либо расположены на  стеке, либо в  процессорных регистрах. Второй вариант принципиально не исключает подключение динамической памяти. Нужно  отметить, что $e_1.f_1 \mapsto val_1$ может означать, что объект $e_1$ может иметь ровно два или более  полей и в  откомпилированном коде $e_1.f_1$ совершенно иным путем будет реально обрабатываться и располагаться, чем $e_1.f_2$ в процессе различных оптимизаций кода \cite{kennedy02}, \cite{muchnick07}, \cite{gcc15} и, несмотря на введенные конвенции, всё равно это так. Это не означает раздробление на уровне  входного языка и  спецификации, где  объект должен моделироваться как единый не делимый  регион памяти, где  поля могут иметь  ссылки иных экземпляров.\\

Как во введении уже было изложено, в  ЛРП были внесены целый ряд предложений \cite{reynolds02}, \cite{reynolds09}, \cite{burstall72}, \cite{hurlin09}, \cite{parkinson05-2}, \cite{parkinson05}, \cite{parkinson06}, \cite{bornat00}, \cite{yang02}, \cite{berdine05-2}, \cite{berdine05}, \cite{ohearn04}, \cite{scholz99} (см. главу \ref{chapter:intro}).
Рейнольдс \cite{reynolds02} вводит   \textit{оператор последовательности} «\textbf{,}» для неявного определения  линейного списка. «\textit{Неявно}» означает, что не уточняется, каким образом последовательные содержимые связаны между собой, лишь уговаривается, что элементы связаны. Таким образом, из раннего примера  кактус можно определить $x_1 \mapsto x_2,x_3,x_4,x_7 \wedge x_5 \mapsto x_6,x_7$. Альтернативы неявному определению Рейнольдса можно считать, либо явное определение, которые всегда допустимое и полностью покрывает неявный оператор, либо по определению ограниченные по длине списки Бозга \cite{bozga08}.  

Если мы хотим  параметризовать кучу, необходимо вводить  символьные переменные в  утверждениях. Утверждение верное или неверное для данной  кучи.  Символы априори не типизируются, как термы  по Чёрчу, а информация о типе поступает из окружения, таким образом, оно более похоже на  типизацию по Карри. Вводя (символьные) переменные, определение предиката относится как параметризованный терм в $\lambda$-вычислении, т.е. предикат абстрагирован и подлежит к применению с другими  предикатами. Однако, предикат не возвращает иного результата как «\textit{истина}» или «\textit{ложь}», а присутствие входных и выходных данных отличается от классических функций (см. главу \ref{chapter:logical}). Рассмотрим рекурсивный пример  двоичного дерева из \cite{reynolds02}:

$$tree(l)::=\texttt{nil} \ | \ \exists x.\exists y:\ l \mapsto x,y \ \star \ tree(x) \ \star \ tree(y)$$

Согласно определению  Рейнольдса, оператор  $\star$ определит  кучу, которая состоит из  двух разделяющихся куч. Нужно отметить, как было упомянуто ранее, что изначальное определение $\star$-дизъюнкции может иметь соединяющие элементы. Таким образом, от одного дерева $x$ всё-таки можно попасть в соседнее дерево $y$, несмотря на то, что имеется $tree(x) \star tree(y)$ и предполагается, что весь регион под $x$ действительно не пересекается с регионом под $y$.
Нужно, чтобы без изменения данного предиката это могло оказаться невозможным. Однако, при дальнейшей параметризации и при манипуляции предиката (ср. предикат \texttt{tree} в главе \ref{chapter:logical}) проблема принципиально остается. Более наглядно это можно увидеть в рисунке \ref{ExampleSchemaHeapSeparation}.

На основе определения по Рейнольдсу, Бердайн \cite{berdine05} вводит соотношения выполнимости куч в опр.\ref{def:HeapSatisfactionRelation}.

 Денотационная функция $\llbracket . \rrbracket$ имеет  тип $\Phi \times \Sigma \rightarrow Bool$, где $\Phi$ множество утверждений о куче, а $Bool$  булевое множество. Для  линейного списка $s,h \models E_0 \mapsto t_1,\cdots , t_k$ с соответствующими типами $\forall i,j \in \mathbb{N}_0.E_j$, $r(t_i)=\llbracket E_i \rrbracket s$, $1 \le i \le k$. Слева от $\models$ записывается состояние вычисления, которое имеет тип $\Pi \times \Sigma$, справа имеется любое  булевое утверждение.

В \cite{berdine05-2} Бердайн указывает на проблемы, что  фрейм может дистанционно поменяться (см. набл.\ref{observation:RemoteAlternation}). Это является проблемой, которую предикатам необходимо учесть. Однако, удаление содержимого ячейки памяти, на которой  ссылается глобальный указатель, является отдельной проблемой (см. главу \ref{chapter:intro}). Аналогичное касается  подпроцедур, которые далее здесь не рассматриваются.  Необходимо отметить, что встроенные процедуры с точки зрения  выразимости в общем случае могут лишь усложнить  спецификацию и верификацию, но вычислимость они не увеличивают.
Бердайн справедливо обращает внимание на то, что применение правила фрейма в общем случае может привести к определённым трудностям в связи с  недетерминированностью сопоставлений  символов. Однако, когда речь идёт о кучах, недетерминированность можно ограничить наименованиями и дополнительными конвенциями (см. главы \ref{chapter:stricter},\ref{chapter:APs}). Более того, логические конъюнкты вписываются прямо в  язык логического программирования (см. главу \ref{chapter:logical}).\\

Далее уточняем  определение графа кучи согласно \cite{haberland16-2} и затем вводим  синтаксическое и семантическое обозначение согласно предыдущему анализу.

Неявные определения, например, в  «\textit{SSA}»-форме, пользуются успехом, несмотря на то, что чёткого определения, например  «\textit{зависимости данных}» нет и не нужно. В изначальных определениях также имеются неявные определения, которые касательно  куч уточняются в этой работе. Соотношения, пространственные операторы и частично неявное определение касаются вершины графа,  типизацию  символьных переменных и т.д. В главе \ref{chapter:stricter}  пространственные операторы куч  ужесточаются и для  константных функций вводится дополнительное обозначение для  объектов.\\

Из трм.\ref{theo:ReynoldsHeapProperties}, опр.\ref{def:finiteHeapGraphDefinition} и предыдущих конвенций (см. \cite{haberland16-2}) терм кучи может быть определен следующим минимальным образом:

$\underline{true}$ обозначает  тавтологию независимо от того, как выглядит данная куча. Обратное действительно для $\underline{false}$.  Предикат $\underline{emp}$ верный только тогда, когда данная куча пуста, во всех остальных случаях ложна. В главе \ref{chapter:stricter}  конъюнкция ужесточается и распадается на две операции. Для логических утверждений вводятся логические конъюнкции в рекурсивное определение $T$.

Логические конъюнкции «$\wedge, \vee, \neg$» не нуждаются в объяснении. Вывоз предиката подразумевает, что соответствующий предикат определен в $\Gamma$ (при опр.\ref{def:PredicateRuleSetDefinition} и закл.\ref{corollary:PredicateEnv}). При запуске предиката с целью сравнения с актуальной  кучей, все  свободные символы должны быть  унифицированы термами не содержащие свободные переменные, иначе данный запуск не определён (см. главу \ref{chapter:logical}).\\

Если в связи с  символьными переменными использовать  логический язык программирования, то ограничения как одностороннее присвоение и невозможность использования символа вместо значения и многие другие ограничения \cite{berdine05}, \cite{berdine05-2}, \cite{parkinson05}, \cite{parkinson05-2} можно будет снять. Если унифицировать термы, то сравнение простое и «\textit{дыры}» наполняются нужным содержанием, иначе нужно вручную все подтермы сравнивать и вставлять необходимые термы в нужные места подтермов. Это возможно, но необходимы дополнительные условия, вследствие чего, вводятся всё новые ошибки и ограничения. На практике ограничения наблюдаются в основном тогда, когда ради используемого нелогического языка упускаются полные сравнения или деградируется символьное использование полностью, как вызов по значению. Обычно это наблюдается в  императивных и большом количестве  функциональных языках программирования.

Возьмем к примеру вызов предиката (подробно о логическом представлении в главе \ref{chapter:logical})

\begin{center}
\begin{tabular}{c}
«\texttt{?-pred1(s(s(zero)),\_)}»
\end{tabular}
\end{center}

где ради простоты первый  терм входной, а второй  выходной. Запрашивается, существует ли таким образом, анонимная переменная «\texttt{\_}», чтобы предикат \texttt{pred1} был выполним для входящего терма \texttt{f(a)}? Если ответ верный, то подцель успешная и результат забрасывается. Если нет, то предикат \texttt{pred1} не соблюдается. \texttt{s(s(zero))} представляет собой целое число Чёрча «$2$». Если например, вместо \texttt{s(s(zero))} представить \texttt{s(s(\_))} и возможно предъявить результат, например\\
\texttt{s(s(s(zero)))}, то без изменений  запрос (см. опр.\ref{def:QueryToProlog}) можно поменять на\\
«\texttt{?-pred1(s(s(\_)),} \texttt{s(s(s(zero))))}», подразумевая, что предикат определён двунаправлено. Принципиально это касается не только двух, а нескольких направлений. Пролог имеет  строгий порядок присвоения и вычисления термов слева направо. Это означает, что символы могут замещать конкретные кучи, но если подцель потребует конкретную кучу, а куча присваивается только в одном из следующих подцелях, то можно, либо порядок подцелей поменять, либо вычисление не завершается успехом (см. главу \ref{chapter:logical}).\\

Теперь необходимо рассмотреть свойства отображения между определением куч и графом куч, а также свойства отдельных ссылок.

\textbf{Свойство 1 -- Корректность.}
 Если из синтаксического опр.\ref{def:HeapTermDefinition} следует строгое различие между связанной и  несвязанной кучей (см. главу \ref{chapter:stricter}), то  синтаксическое описание охватывает любой граф кучи, а также любой граф кучи может быть представлен данным синтаксическим определением. Если нормализовать согласно правилам трм.\ref{theo:ReynoldsHeapProperties}, например, по \textit{пренекс-нормальной форме}, то таким образом все, полученные формулы коммутируют.  Константные предикаты являются исключением, и поэтому являются односторонним укрупнением: множество выполнимых куч  отображается на один представитель множества. Не трудно убедиться в том, что такое отображение не  обратимое. Если ещё исключить  анонимные символы, то синтаксическое описание полностью соответствует  графу куч. Исключив единственные очаги  недетерминированного представления, легко убедиться в том, что отображение теперь  изоморфное, а, следовательно,  не могут быть выражены два различных графа куч из одного описания и наоборот.

\textbf{Свойство 2 -- Полнота.}
 Очевидно, что граф кучи полностью описывается  базисными кучами и аннотируется ссылками. Представление значения вершин графа может привести к  синтаксическому парадоксу, если не различать адреса ссылаемого объекта. Например, если $a \mapsto 3$ а также имеется $b \mapsto 3$, то на практике это вовсе не означает, что обе ячейки памяти содержавшие «$3$» идентичны.  Для моделирования это именно то и означает. Если именно новый объект не выделяется и указатель на эту же ячейку не ссылается, то указатель становится  псевдонимом. Избежать этой ситуации можно с помощью аннотации в объектном виде терма.

Указатели на указатели содержат  целое число как адрес, и поэтому не отличаются от других указателей. Различие между целым числом и адресом производится за счёт типа переменной. При отображении от графа к формуле порядок вычисления и структура предикатов естественно не могут быть выведены однозначно. Отображение может быть сгенерировано, но оно может различаться. Если допустить, что дан набор определений  абстрактных предикатов, то отображение в общем случае не решимо из-за  проблемы приостановки. Отображение от графа кучи к $T$ полностью определено, соблюдая упомянутые особенности.
Обратное отображение очевидно полное. Если в $loc\mapsto val$ левая сторона не указатель, то аннотацию можно принципиально решить дополнительным полем $f$: $loc\mapsto_{f} val$. Для общей структуры графа кучи дополнительная аннотация не имеет большого значения, поэтому дополнительные поля умалчиваются по определению.

\textbf{Свойство 3 -- Отношение эквивалентности.}
 Для сравнения может быть использовано, что «$\mapsto$» бинарный  функтор, а $a\mapsto b$ куча. Таким образом, можно убедиться в том, что отношение эквивалентности «$\sim$» может быть обосновано, показав свойства (i-iii). Оговаривается, что «$\mapsto$» имеет выше приоритет присваивания, чем «$\sim$».  (i) Рефлексивность: $a\mapsto b \sim a \mapsto b$. (ii) Симметричность: $a \mapsto b \sim c \mapsto d$, то $c \mapsto d \sim a \mapsto b$. (iii) Транзитивность: $a \mapsto b \sim c \mapsto d$ и $c \mapsto d \sim e \mapsto f$, то $a \mapsto b \sim e \mapsto f$.

\textbf{Свойство 4 -- Локальность.}
 При удалении, изменении, добавлении  указателя или его содержимого, граф кучи и соответствующий терм меняются минимально. Удаление грани приводит к редукции на одну базисную кучу, либо к  параметризации или изменению используемых  абстрактных предикатов, т.к. соотношение между обоими, формулой и графом, в общем случае не решимо (см. ранее), поэтому, необходимо предикаты рассматривать отдельно. В свободном от предикатов случае максимальное изменение может повлечь за собой удаление  вершины, потому, что это означает удаление вершины и всех  граней, которые с этой вершиной связаны. Аналогичное добавление не является «\textit{сложной}», потому, что добавление грани производится пошагово. В худшем случае для предикатов --- использование предикатов может привести к формуле, которая совершенно не похожа на предыдущую. Поэтому,  рекурсивные структуры эффективно описываются рекурсивными описаниями и изменение одного элемента не затрагивает остальные элементы, кроме соседних. В общем случае, это лишь  эвристика и зависит от конкретного алгоритма и от делимости графа на наиболее независимые подграфы. Также как и эвристика: «\textit{граф кучи описывается абстрактными предикатами лучше, чем компактными описаниями}». Делимость проблем остается общей проблемой далеко за пределами этой работы, но возможность делить кучи на «\textit{удобные}» кучи. Это интересно с точки зрения подключения  логических решателей (см. главы \ref{chapter:stricter}, \ref{chapter:APs}, см. \cite{distefano06}).\\

В заключение этого раздела, хотелось бы сказать о  предикатах высшего порядка.  Они не имеют теоретическую значимость для сравнения куч с практической точки зрения. Принципиально предикаты, которые используют предикаты в качестве параметра, интересны с точки зрения  выразимости простых и коротких описаний. Однако, рассматриваемые предикаты  куч имеют  индуктивное определение, а произвольные предикаты высшего порядка в состоянии сломать эти свойства, если не вводить дополнительные ограничения. Поэтому, считается не целесообразно использовать иное, чем  формально-грамматическое представление (см. главу \ref{chapter:APs}). Нужно отметить, что при включении  предикатов высших порядков меняются в общности существенные свойства доказательств, как например, поток и вызовы с продолжениями, порядок вычисления и обработки  подцелей, а также статическая  типизация, но с точки зрения выразимости кучи ничего не меняется, по крайней мере после обсуждения такая необходимость отпадает.

\section*{Логическое программирование и доказательство}

В этой главе рассматривается вопрос, как куча (см. главу \ref{chapter:expression}) может быть представлена в Прологе, а затем теоремы о кучах логически выведены с помощью Пролога. Для этого берётся Пролог и определяется синтаксис термов и правил, обсуждается отсечение и применимость рекурсивных определений. Это способствует логическому выводу, который будет решать задачи верификации. Более детально анализируется декларативный характер абстрактных предикатов, который в главах \ref{chapter:stricter} и \ref{chapter:APs} используется для сближения языков спецификации и верификации. Представление языков с помощью Пролога выявляется на основе выразимости реляций, а также на основе метрик. Предлагается система, основанная на Прологе. Представление и интеграция объектных экземпляров обсуждается.

\subsection*{Пролог как система логического вывода}

Данный раздел не представляет собой введение в  Пролог, а лежит в основе построения архитектуры верификации куч на основе Пролога. Тем не менее, раздел ссылается на первоисточники Пролога, как \cite{sterling94} и \cite{bratko01}, которые рекомендуется изучить досконально, прежде чем, читать далее. Пролог уже оправдался решением трудных теоретических и практических проблем. Например, при доказательстве теоремы Фейгенбаума \cite{koch96}, при доказательстве ошибки с делением  вещественных чисел в процессорах «\textit{Intel Pentium}» первого поколения с помощью логических диалектов  «\textit{ACL2}» \cite{kaufmann00}/«\textit{HOL Light}» \cite{price95} или при обработке и проверке  слабоструктурированных данных \cite{haberland08-1}.

Программа в Прологе задаётся  базой знаний, которая задаётся  правилами Хорна. Запрос к базе знаний можно задавать одной или более  подцелями. Для определения правил и подцелей необходимо ввести определение прологовского выражения терма.

 Символ представляет некоторый логический объект, например: «\textit{я}», «\textit{дед мороз}», число «$33$» или некоторая  локальная переменная. В  Прологе символ начинается всегда с маленькой буквы, а далее следуют любые буквы или цифры.

В отличие от символа,  символьная переменная всегда начинается с большой буквы, либо является зарезервированным знаком  «\_».
Например, переменной $X$ присваивается  (\textit{унифицируется}) некоторое значение, например «$33$», после чего $X$ может быть использован далее в сложных термах или подцелях (см. опр.\ref{def:PrologRule}). Когда используется «\_», тогда ссылаться на это же значение будет не возможно. Поэтому, «\_» используется исключительно в тех случаях, когда должен приниматься ровно один терм, но дальнейшее использование этого значения не предусмотрено, как это обычно бывает в случае  \textit{сопоставления с образцами} термов.  Видимость переменных ограничивается данным правилом.

 Списки, которые определяются с помощью бинарных  функторов «\texttt{,}» или «\texttt{|}» должны иметь хотя бы два компонента. Проверка, является ли список линейным, проводится самими  предикатами, которые принимают термы. Примерами списков являются: \texttt{[12|[]]} или \texttt{[1,2,[4|5]}. Таким образом, базисный тип списков является записью.

Функтор является структурным оператором, который связывает термы и обозначает новое сложное значение. Значение может быть символьным, аналогично  объектному экземпляру или записи. Например, функтор списка конструктор  «\textbf{.}», который, применив к голове $H$ и  списку $Hs$ работает аналогично $[H|Hs]$. Иной пример, это наследник  натурального числа $succ$, который имеет арность $1$, либо принимает терм Чёрча по  арифметике, который опять же снаружи имеет функтор $succ$, либо терм $zero$ с  арностью $0$, т.е. является  константной (см. набл.\ref{obs:SimplificationByGeneralisation}).

На вопрос «\textit{да/нет}» предикат даёт ответ в зависимости от того, связаны ли некоторые термы согласно данному соотношению или нет --- если  предикат  тотален (см. следующие главы). Например,  высказывание $older(plato,aristotle)$ означает   утверждение «\textit{Платон старше Аристотеля}».

Синтаксис правила $pred$ в расширенной форме  Бэккуса-Наура можно определить как:

\begin{center}
\begin{tabular}{l}
  \begin{minipage}[t]{7.3cm}
  \begin{grammar}
<head> ::= <ID> ‘(’ <term> \{ ‘,’ <term> \} ‘)’

<rel> ::= ‘=’ | ‘!=’

<call> ::= <ID> ‘(’ <term> \{ ‘,’ <term> \} ‘)’
  \end{grammar}
  \end{minipage}\\\\
  \begin{minipage}[t]{6.5cm}
  \begin{grammar}
 <goal> ::= <term> <rel> <term> | <call>

 <body> ::= \{ <goal> ‘.’ \} <goal>

 <pred> ::= <head> [ ‘:-’ <body> ] ‘.’
  \end{grammar}
  \end{minipage}
\end{tabular}
\end{center}

Здесь $ID$ идентификатор, который является символом, но не символьной переменной. Бинарные операторы «\texttt{=}» и «\texttt{!=}» обозначают унификацию, либо утверждение о невозможности унификации данных термов. «\textbf{.}» обозначает конец определения правила. Символьные переменные видны только внутри одного определения правила. Разделитель «\textbf{:-}» определяет голову $head$ от тела $body$ правила.

Для иллюстрации, рассмотрим примеры из рисунка \ref{fig:PrologExample1}. Рисунок \ref{fig:PrologExample1} a) содержит два очевидных факта из древнегреческой мифологии: (1) Сократ человек и (2) Цевс бессмертен. Аналогично можно определить иные факты, за достоверность, за что отвечает создатель  базы знаний. Фактом по умолчанию является только неоспоримое  утверждение из проводимой области дискурса. Третье правило гласит: «\textit{любой человек смертный}». Технически более подробно это означает: если некоторый терм $X$ имеет предикат «\textit{человечно}», то терму $X$ безусловно приписывается предикат «\textit{смертно}». «\textit{Безусловно}» подразумевает в прямом смысле отсутствие дальнейших подцелей, кроме подцели $human(X)$.

 Унификация термов из опр.\ref{def:PrologTerm} является в частном случае  инфикс-нотации, опираясь на \texttt{rl}. Более обобщённое соотношение записывается с помощью предиката $p(T_0, \dots, T_n)$. Далее, рисунок \ref{fig:PrologExample1} b) на примере  \textit{функции Аккерманна} демонстрирует, что любая   рекурсия (левая, правая, примитивная,  взаимная, и т.д.) может быть определена в Прологе --- не только примитивная. Оператор  \texttt{is} является арифметическим функтором. Функторы по Карнапу \cite{carnap68} являются арифметическими вычислениями термов и не являются логичными. Функция Аккерманна является определённой и  тотальной, однако, на практике, из-за ограничения памяти и операционных средств вычисление может быть прервано. По этой причине, предикат полученный из функции может оказаться полезной для проверки терминации алгоритмов для индуктивно-определённых  входных данных. Примером могут послужить  натуральные числа или  списки.

 Подцели $goal$ как пересчитываемые подусловия выполнимости предиката обозначаются следующим образом:

 Интроспекция в  Прологе \cite{diaz12} разрешает проверку и определение  типы классов, так называемые «виды», данного терма, которые могут быть:  \texttt{var},  \texttt{atom},  \texttt{number},  \texttt{compound} и иные мало значимые  встроенные  предикаты.
Слияние общих случаев термов к одному виду разрешается, а следовательно, предикаты могут сильно упростить определение предикатов. \texttt{atom} проверяет свойство символа, например \texttt{atom(a)} или \texttt{atom([])} верны, но \texttt{atom([1])} или \texttt{atom([1,2])} не верны.  \texttt{var} проверяет, является ли данный терм  символьной переменной, которая  свободная. Поэтому, \texttt{var(X)} верно, но \texttt{X=1,var(X)} не верно. \texttt{list} проверяет, является ли данный терм  (слабо-типизированным) списком (как это принято считать в  Прологе, ср. \cite{diaz12} с \cite{isocpp14}). Таким образом,  \texttt{list([])} или \texttt{list([1,2,3])} верны, а \texttt{list(a)} не верен. Необходимо отметить, что для определения свойства структуры списка, также как и для  встроенного предиката  \texttt{compound}, используется встроенный предикат  «\texttt{=..}», который разбивает данный  функторный  терм на сам функтор и на список передаваемых термов.

Вызов предикатов рассматривается чуть позже. Однако,  унификация термов по определению не производится в  Прологе по умолчанию из-за необходимости полного анализа всех  подвыражений термов. Поэтому, можно по определению унифицировать \texttt{X=X}, но \texttt{X=f(g)} нельзя. Пролог, в зависимости от реализации, может замечать  рекурсию в термах, т.к. иногда производится унификация исключительно на верхнем термовом уровне. Но, в примере \texttt{X=f(g(X,X))} часто унификация приведёт к провалу или приостановке в лучшем случае, а в обычном случае к выходу  WAM из строя (из-за переполнения стека в версии 1.3.0 «GNU Prolog» \cite{diaz12}). По осторожным оценкам на практике в 95\% всех случаев не требуется проверка  унифицируемости термов, однако, в общем случае необходимо рассматривать именно это, когда речь идёт об обобщённых утверждениях оценок стабильности, например преобразователей. Также нужно рассматривать границы ресурсов, например с помощью  предиката Аккерманна. Поэтому, далее даётся алгоритм, который позволит полностью исключить те 5\%, которые корреспондируют с проблемой, например, с определением объектного экземпляра из раздела \ref{sect:TheoryOfObjects}, см. рисунок \ref{CodeUnificationWithOccursCheck}.

 Термы структуры и списки необходимо определить как изложено в рисунке \ref{CodeUnificationList}.

В главе \ref{chapter:APs} и на рисунке \ref{fig:mapFunctionalExample} дан пример, когда предикат используется в качестве передаваемого терма. Стоит заметить, что аналогично анализу  функтора,  вызов предиката также производится для данного  стекового окна. Разница заключается в представлении вычислительной модели реализации Пролога и синхронизации стека (см. следующий раздел) между  вызванной и  вызывающей сторонами.

Далее рассмотрим пример из рисунка \ref{fig:PrologExample1} b) с  подцелями и их  вызовами. Дана программа и дана подцель «\texttt{?-a2(X,1,3)}» (см. рисунок \ref{fig:PrologExample2}), которая задается интерактивным интерпретатором. Сначала берётся правило № (1) с сопоставлением $\sigma_1$. Это приводит к противоречивости,  т.к. \texttt{Res=2}. На рисунке \ref{fig:PrologExample3} определена система сопоставлений для данного примера дерева вывода). Поиск по данному предикату приостанавливается, т.к. завершился провалом и берется следующий  альтернатив (см. набл.\ref{obs:ProofAsSearching}). Выбрав третье правило с сопоставлением $\sigma_2$, а затем, применив второе правило с $\sigma_5$ и наконец, применив первое правило с $\sigma_9$, успешно находит решение. Так как  отсечение (см. далее) не производится, проверяются также  альтернативы и наконец находится ещё одно положительное решение данной  подцели. Путь от начальной подцели через все новые подцели до успешного решения, которые на рисунке обозначены чёрными квадратиками, назовём «\textit{путь к решению}». Для решения данного примера существуют два пути, при этом, результаты не отличаются друг от друга. Нужно обратить внимание, что если даже пути к решению доказательства различны, то после вычисления первого, решение можно было бы полностью приостановить. Однако, это нельзя обобщать.

Рассмотрев пример из рисунка \ref{fig:PrologExample3}, замечаем, что  стратегия вычисления может кардинально измениться, если в  декларативной парадигме использовать механизм, который позволит  отсекать и менять путь расследования, по крайней мере, по каждому уровню   подцели. Например, если приостановить логический вывод на примере из рисунка \ref{fig:PrologExample2} после подцели «\texttt{?-a2(N1,1,Res2)}» и мы уверены в этом, то всю ветку $(3) \cdot \sigma_3$ можно не проверять. Для этого мы вводим определение  отсечения.

Отсечение не означает, что любой алгоритм станет эффективнее или \textit{лучше}, это вовсе не так. Отсечение даёт лишь возможность, в зависимости от предложенного описания решения проблемы, ускорять нахождение предложенного решения, либо исключать  дубликатные решения. Отсечение вставляется в тело предиката как  подцель. Рассмотрим теперь  семантику «\texttt{!}» на примере  факториала в предикате \texttt{fact} из рисунка \ref{CodeExampleFactorial}.

Первый терм представляет входное  целое число, второй  терм представляет результат факториала. Предикат представляет соотношение между  входным и  выходным вектором, но в данном примере определён только порядок \textit{вход-на-выход}. Предикат \texttt{fact2} почти ничем не отличается от первого предиката, разве что, в базисном случае имеется  отсечение как единственная  подцель бывшего факта. Отсечение приводит к тому, что первый альтернатив \texttt{fact2} только тогда вызывается, когда  входной и  выходной терм  унифицируются с правилом при  запросе согласно опр.\ref{def:QueryToProlog}, например, «\texttt{?-fact(5,R).}» --- это происходит лишь один раз при  индуктивно-определённом спуске, когда входное число равно нулю. Когда эта ситуация возникает,  отсечение гарантирует, что во всех обработанных  подцелях альтернативы далее не обыскиваются. При  рекурсивном подъёме альтернативы дальше не рассматриваются. Если при рекурсивном подъёме нет совпадающей подцели, то альтернативные правила должны рассматриваться, кроме отсечённых альтернатив. Поэтому,  отсечение в примере \texttt{fact2} не лишнее. Оно улучшает понятие вычисления  подцелей и ее оптимизацию. Обратим также внимание на то, что порядок правил сильно влияет на  корректность. Например, поменяв порядок правил или  отсечения, см. рисунок \ref{CodeExampleFactorial2}.

В этом примере  базисный случай не достижим потому, что 0 и 1 также  унифицируются в первом альтернативе. Отсечение приводит к тому, что второй альтернатив не будет рассмотрен. Далее \texttt{N1} присваивается -1 при вызове «\texttt{?-wrong_fact2(0,1).}». В итоге, имеется левая  рекурсия, которая никогда не  остановится. В данном примере  отсечение даже не важно, т.к.  левая рекурсия не разрешает рассмотрение второго альтернатива, хотя все логические  факты и правила явно определены. Дерево вывода имеет ветвь, которая бесконечна. В Прологе эта проблема лучше известна, как  проблема приоризации. Поэтому, ради читаемости правила должны определяться не по любому порядку, а строго по порядку базисного определения: частные и специализированные правила и факты должны определяться первыми, иначе, они могут быть поглощены более общими случаями.

С другой стороны, не поглощённые правила означают, что правила альтернативы могут меняться по позициям. Ради простоты программы, предпочтение всегда должно быть за простыми правилами и одновременно без дополнительных условий касательно порядка. Для достижения этого, необходимо, насколько это возможно, высчитывать входные вектора (см. позже касательно  высчитывания куч).

Нужно отметить, что  отсечение является  синтаксическим сахаром, потому, что отсечение всегда возможно удалить из  предиката (см. дискуссию из главы \ref{chapter:intro}, переписав его). Доказать корректность исключения отсечения возможно средством дополнительного аргумента предиката, который запоминает достижимость результата. Если результат достижим, то имеется  базисный случай, который должен возникать только один раз. Во всех остальных случаях, имеется строго  возрастающая цепочка. Поиск результата может быть обобщён и выражен как  поиск плавающей точки для $\mu$-выражения.
 В логиках, в которых допускается отсечение, соблюдается правило (CUT):

\begin{center}
\begin{tabular}{c}
 \inference[(CUT)]{\Gamma \vdash \Phi,\Omega \quad\Phi,\Delta \vdash \Lambda}{\Gamma, \Delta \vdash \Omega,\Lambda}
\end{tabular}
\end{center}

В контексте  Пролога верхние  консеквенты формой $A \vdash B$ могут рассматриваться как $A$ подцель  тела правила, а $B$  голова предиката, т.е. логический консеквент (следствие). Для дальнейшего обсуждения не будут использоваться консеквенты.
 Отсечение, возможно, рассматривать, как попытку  \textit{детерминации}. При детерминации множество возможностей отсеивается, так именно и происходит при  «!». Отсечение не гарантирует терминации (см. ранее упомянутый пример) и полной детерминации, потому, что всё равно могут возникать альтернативные ответвления, например последующих «\texttt{!}» подцелях. Когда отсечение происходит намеренно и не угрожает потерям годных решений, то речь идет о  «\textit{зелёных отсечениях}»  дубликатных решений, в противном случае о  «\textit{красных отсечениях}».

Обсуждённые решения, ограничения, моделирование правил и подцелей будут применены позже, в частности в главе \ref{chapter:APs}.
В связи с вопросом об  отрицании, необходимо обсудить вопрос представления  литералов, т.е.  утверждений и  отрицание утверждений (если будет необходимо для определения выразимости утверждений о кучах). Когда речь идёт об отрицании, необходимо рассмотреть выражения, которые строятся термами на основании опр.\ref{def:PrologTerm}. 
Отрицание переменной может создать практическую проблему, т.к. оно определяется контекстом. В отличие от единственного случая, отрицание множества (бесконечного, но индуктивно-определённого), в общем случае не решимо. Нужно отметить, что метод представленный в главе \ref{chapter:APs} --- решим, если даже допускаются  индуктивные определения, благодаря обсуждённым свойствам, в частности, из-за \textit{неповторимости} и локальности куч. Очевидно возникает похожая проблема нерешимости для предикатов. Ситуация для функторов в общем однозначно не определена: если дана функция, которая отображает из домена в определённую ко-область, тогда, каким образом можно универсально определить отрицание? --- Это лишь гипотетичный вопрос потому, что отрицание по определению применимо только к  булевым множествам. Следовательно, необходимо определить  отрицание для  предикатов в  Прологе.

Отрицание в Прологе определяется как  провал, т.к.  Пролог ищет решение и должен его найти только отрицательным. Поэтому, в Прологе согласно обработке (см. следующий раздел) вводится на уровне тела  встроенная зарезервированная подцель \texttt{fail}, которая сигнализирует прекращение поиска альтернатив и возврат к предикату вызова. В общих случаях уговаривается явное прологовское определение отрицания предикатов.

\subsection*{Логический вывод как поиск доказательства}

С помощью  Пролога было продемонстрировано множество применений в области обработки  языков, формальных и  естественных \cite{pereira12}, \cite{matthews98}, \cite{kulik01}.
Как будет показано позже, слово данного языка необходимо сгенерировать и проверять согласно правилам.
Естественные языки характеризуются мутациями \cite{pereira12}, \cite{kallmeyer10} в связи с культурно-социологическим развитием. Таким образом, они сильно отличаются от  формальных и формализованных языков. Многозначность  является основной разницей между естественными и формальными языками. Тем не менее, изучены методы распознавания естественных предложений \cite{kallmeyer10}, \cite{pereira80}, которые приводят к интересным итогам в связи с вопросами  искусственного интеллекта и  обработки человеческой речи, в общем. Однако, данная работа, сосредоточена на сравнительно простом уровне  лексики и растолкования. Задача будет заключаться в нахождении максимально простой и адекватной модели  динамической памяти и её проверки. Для реализации прототипа, а также для расследования ключевых задач, будут рассматриваться  правила Хорна. Предложенный  Уорреном подход «\textit{использование предикатной логики для решения логических задач}» воплощается частично в жизнь, хотя Уоррен сам предупреждает о том, что его реализация Пролога не в состоянии полностью отобразить логическую «\textit{мощь}» предикатной \cite{kowalski74} логики, из-за теоретических ограничений в связи с  вычислимостью. Простой пример проиллюстрирует одну фундаментальную сложность --- допустим, имеется функция, где  входной и  выходной вектор записывается как предикат, т.к. это может быть реализовано в Прологе. Тогда, возможно, для данного выходного вектора, не решимо или очень трудно решить обратный результат, например в случае  разовых функций для шифровки сообщений, либо функции, которые не полностью определены (в обе стороны) и имеют, например, двухмерное отображение в качестве эллипса.
 \cite{pereira80} расследует подробно методы  перебора, но, увы, теоретические ограничения синтаксического анализа не принимаются (полностью) к сведению, поэтому перебор оказывается сильно ограничен, в связи с распознавателями категории  LL(1) или слабо расширенные от него анализаторы. Это очень ограничивает выразимость в целом.
\cite{haberland08-2}, \cite{haberland07-1} показывают, что если речь идёт о генерации и проверке данных (в статье представлены исключительно как термы), то:

\begin{enumerate}
 \item  Синтаксический анализ является непосредственно ограничивающим фактором выразимости. Если выражения расширены и искусственные ограничения, а также упомянутых, а также и существующих подходов исключаются, то синтаксис и  семантика всего процесса проверки могут быть сильно упрощены без потери общности.
 \item Упрощение также может быть достигнуто ради обобщённой формы  промежуточного представления. Однако, в статье речь не идёт об  императивных языках.
 \item В частности, из-за выбранной   в статье функциональной парадигмы для сравнения, возникают проблемы с незначительно раздутым описанием. Отметим, что  функциональное описание  состояния вычисления всё равно удобнее и проще любого  императивного представления. В общем используются лишь 5 функционалов  высшего порядка, как например,  \texttt{concatMap},  \texttt{foldl}. Если бы использовалось логическое представление, то можно было бы семантику процесса проверки более близко сформулировать в стиле математико-логических термов, а это более важно в этой работе.
\end{enumerate}

\cite{haberland08-1}, \cite{haberland07-1}, \cite{haberland07-2}, \cite{haberland08-2} и \cite{haberland19-1} показывают, что на первый взгляд нестандартный путь, может привести к очень простому, но эффективному решению проблемы проверки с помощью термов. Причина разносторонняя: представление трансформации термов в виде правил и компактность (проводилась широкая экспертиза по отношению к иным  функциональным языкам, т.к.  императивные языки сильно ухудшают все показательные метрики измерения качества и компактности записи). Причина простоты лежит в сравнении данной программы и её описания, которая вписывается в  \textit{регулярное выражение} в более обобщённом виде, даже в  контекст-зависимое выражение, в частных случаях, как было продемонстрировано на примерах вызовов процедур. Проводятся качественные анализы по каждому из входных операторов и количественные сравнения, основанные на  метриках. Полученный результат неожиданный потому, что почти по всем сравниваемым параметрам представление и решение в  Прологе торжествует к сравнению с   функциональным представлением \cite{haberland08-1}. Далее результаты обсуждаются в разделе \ref{sect:LanguageCompatibility}.


Предпосылкой этому тезису является то, что метод(-ы)  синтаксического перебора позволяет решить проблемы доказательства  динамических куч. Как это решить реально, какие для этого имеются условия и почему должны соблюдаться, объяснения даются в этой работе.

Позже вводится модифицированный диалект  Пролога, который способствует к решению целого ряда проблем в связи с  автоматизацией доказательства  теорем (ср. главу \ref{chapter:intro}). Таким образом, язык похожий на Пролог, квалифицируется не только как язык, а как теоретический метод и инструмент к целому ряду проблем. Особенность  Пролога заключается в том, что доказуемо только то, что выводимо (см. набл.\ref{obs:ModelOfComputationVerification}). Если что-либо не выводимо, то причина может лежать в правилах, в представлении правил (например, в  абстракции), в ограниченности основной выводимости, а также в формулировке  подцелей. Также могут иметься практические ограничения со стороны  синтаксического анализа.

Идея использовать синтаксический анализ, в качестве  решателя куч, появилась после некоторых наблюдений (см. главу \ref{chapter:APs}):
\begin{enumerate}
 \item  Прологовские правила довольно компактны и очень хорошо приспособлены к решению логических задач.
 
 \item Программа Пролога представляется правилами. Вывод правил генерирует  дерево. Доказательство  теорем имеет структуру аналогично программе (изоморфизм  обсуждённый ранее), это же относится к запросу или запуску программ, которые тоже генерируют дерево.
 
 \item При необходимости выражения можно  синтаксически анализировать.  Локальность ссылочных моделей памяти ограничивает  выразимость соответствующих генерируемых  грамматик так, что ранг класса формальной грамматики не должен превышать ранг  контекст-свободных грамматик. Это очень полезно с практической точки зрения.
 
 \item Однажды, на одном мало интересном докладе пропагандист  «\textit{Semantic Web}» пробовал рекламировать свой новый подход, как подход будущего, как минимум в следующие 25 лет. В качестве защиты использовался аргумент, что «\textit{Semantic Web}» решит почти все актуальные проблемы, технические и научные. Также упоминалось, что «\textit{Semantic Web}» высшие по  абстракции и устаревшие методы, как например, синтаксические, давным-давно решены и не имеют будущего. Как я сегодня понимаю и думаю --- докладчик немного преувеличил, если даже считать, что синтаксический перебор можно действительно рассматривать как очень хорошо изученным. Несмотря на это, всё равно синтаксические методы могут пригодиться и сегодня, а также помочь решить реальные проблемы в области автоматизации доказательств.
\end{enumerate}

Теперь необходимо заметить и сравнить цели дисциплин:

\begin{center}
\begin{tabular}{l}
 1. Программирование\\ $\longrightarrow$ нахождение решения\\
 2. Логическое программирование\\ $\longrightarrow$ нахождение (логического) решения\\
 3. Синтаксический анализ\\ $\longrightarrow$ нахождение (синтаксического) решения
\end{tabular}
\end{center}

Если отображения пунктов 2 и 3 типизировать, то получаем характеристики из рисунка \ref{CharacteristicsTyping23}.

То есть, если удастся сблизить входные множества, то решение одним подходом может быть будет сопоставлено решению другим подходом.

Упомянутые ранее подходы для распознавания  естественных языков отличаются от  формальных языков тем, что в них имеется некоторая степень неверности или отклонений. В рассматриваемых языках, отклонения практически отсутствуют.

Согласно лем.\ref{lem:DoubleSemanticsOfProlog}. Семантика (1) по  вызовам  (\textit{декларативное свойство предиката}) означает, что предикат рассматривается строго как процедура  со стеком, которую можно выразить. То есть, создаётся  стековое окно (см. рисунок \ref{fig:ExampleStack2}) при вызове предиката, а после вызова окно  утилизируется. Аналогично  \textit{параметрам по ссылкам} в  императивных языках программирования,  унифицируемые термы, которые используются в различных слоях  предикатов (т.е. которые расположены в последовательных стековых окнах -- нарушение последовательности без глобальных хранилищ приводит к потере термов) передаются вызывающим предикатам. В отличие от императивных языков программирования, переменный  символ может передаваться не только непосредственно, но а также  как частица сложного  унифицируемого терма. Это означает, что передача из одного слоя предиката в другой по умолчанию, может практически привести к построению сложного  объектного экземпляра, за нулевую стоимость и преобразовать довольно сложный процесс трансформации объектов, что очень замечательно. Обратим внимание, что благодаря анонимному оператору  «\_», встроенному предикату  \texttt{concat} и другим  предикатам высшего порядка, нетрудно создать мощную  выразимость. Выразимость резко повышается ради простоты выражения и сравнения данных частей  функтора и структур в общем (включая  списки), потому, что исключается необходимость определять весь перечень элементов списка, но, в отличие от  сопоставления с образцами, остаток не теряется, а лишь остаётся без изменений, что упрощает описание простых операций. Напомним, что  верификация  куч происходит  пошагово после каждого  программного оператора, обычно меняется только маленькая часть всех куч.

 Вызов предикатов из рисунка \ref{fig:BoxModelPredicateCall} осуществляется согласно модели «\textit{чёрного ящика}», который можно отследить в Прологе, включив трассировку. Вызов предиката, т.е.  подцели, обрабатывается независимо от содержимого предиката и имеет четыре возможных результата: (а) $call$, вызов, создаётся  стековое окно, передаются присвоенные  термы, (б) $exit$, возврат при успехе, когда все подцели тела выполнимы и  голова предиката  унифицируема, (в) $fail$, провал, когда хотя бы одна  подцель  невыполнима, или (г) $redo$, повторная попытка, когда (в) происходит в одной  подцели, то находится альтернатив данному предикату, если он существует. Опасность не аккуратно сформулированных предикатов является не эффективной из-за широкого перебора альтернатив. Другой экстремальный случай возникает, когда применяется  отсечение настолько часто, что годные результаты исчезают. Оба случая необходимо устранять. Предикаты \texttt{call} и \texttt{fail} из рисунка \ref{fig:BoxModelPredicateCall} действительно существуют в  «\textit{GNU}»  Прологе \cite{diaz12}, пятый случай  «\textit{исключения}» удалён из-за причин, обсуждённых в главе \ref{chapter:intro}.

Семантика (2) интерпретаций предикатов подразумевает, что предикат является генератором недетерминированности. Общий  предикат может иметь (как обсуждено в главах \ref{chapter:intro} и \ref{chapter:APs}) а) различные определения для различных комбинаций  входных и  выходных векторов, а также б) различные выводимые результаты для одной  подцели. Например, если имеется  предикат \texttt{mortal} из рисунка \ref{fig:PrologExample1} (a) и к нему ещё добавить несколько  фактов, то согласно б) из подцели «\texttt{?-mortal(X)}»  выводимы следующие результаты: \texttt{X=socrates}, \texttt{X=plato} и т.д. Если для  функции Аккерманна вызывать «\texttt{?-acker(0,X,15)}» или «\texttt{?-acker(1,1,Res)}», то в зависимости от реализации, предикат, либо определён (но, возможно, очень долго занят или не подлежит практическому вычислению в связи с ограничением памяти), либо не (полностью) определён, как, например, на рисунке \ref{fig:PrologExample1} (b). В примере для данного  выходного \texttt{Res} предикат не определяет входной вектор, хотя математико-логическое определение явно существует согласно данной схеме, которая практически без изменений стоит в данных прологовских правилах (см. набл.\ref{obs:ProofAsSearching},набл.\ref{obs:ModelOfComputationVerification}).
Дискуссия касательно этого ограничения, также в связи с обратимостью и представлением  реляционной модели, продолжится позже в данной главе.

Как уже упоминалось,  Пролог не является каким-то «\textit{универсальным решением всех задаваемых логических проблем}» априори \cite{warren83}, \cite{warren84}. Это правда, что не всякое доказательство можно решить прологовскими подцелями в связи с ранее обсуждёнными фундаментальными ограничениями. Однако, обратное всегда возможно. Задача заключается в минимизации проблем, которые будут лучше решаться в связи с сближением языков. Подмножество  языков Пролога используется для решения целого ряда проблем  верификации в связи с представлением и методом  логического вывода. Охарактеризуем подцели и аргументы термы:

\begin{itemize}
 \item  Встроенные предикаты, как например  \texttt{concat}, могут быть использованы везде в программах. Кроме того, имеются предикаты, которые могут добавляться в  формальные теории, написанные в  Прологе. Зарезервированные предикаты и функторы могут быть определены мульти-парадигмально.
 \item Отдельное и  перегруженное значение. Отдельные предикаты могут быть перегружены так, что предикат с подходящей  арностью, который определен ранее, используется первым. Этот механизм позволяет подключение  прологовских теорий, в том числе  \textit{мульти-парадигмальность}. Полученный эффект, это  расширяемость,  вариабельность, гибкость и  полиморфизм. Кроме того, перегруженные предикаты, как например \texttt{concat}, позволяют гибкий порядок сопоставлений входных и выходных термов.
 \item Операции над  кучами являются термами. Вместе с  объектными экземплярами, соотношения местоположения куч может быть представлено термами. По сути, граф кучи представляется полностью термом кучи. Термы являются аргументами подцелей, т.е. (под-)\-доказательств теорем и лемм о кучах.
\end{itemize}

Из универсального описания механизма вызовов из рисунка \ref{fig:BoxModelPredicateCall} следует  \textit{поиск с возвратом} (с англ. «\textit{backtracking}»). Эта стратегия просто реализуема   рекурсией над предикатами, и точно описывает деревья  логического вывода (см. контекст-свободность прологовских правил с главой \ref{chapter:APs}). Описанные ранее ограничения необходимо обязательно учесть. Когда поиск завершается успехом, то ветка считается доказанной. Когда ветка  опровержима, то она, либо не доказана или не доказуема в принципе, либо приводит к явному противоречию. Оба варианта достаточны для установления провала (под-)\-доказательства, т.к. всегда необходимо доказывать универсальность данной формулы. Если интерпретировать предикаты как  декларативные, то предикаты можно с легкостью растолковывать как «\textit{процедуры}» в императивных языках программирования, т.е. если даже это формально не совсем совпадает, то интуиция совпадает с подходом в решении задачи. Следовательно, можно в действительности рассматривать логическое программирование в Прологе максимально приближено к доказательству. Хотя специфицируются и доказываются свойства о программе, входная программа при этом не отлаживается и не запускается.
 Как будет показано в более поздних главах, особенность Пролога хорошо распространяется именно для решения постановки проблем  динамической памяти. Представление модели куч, абстракции правил, рекурсивные предикаты, логические  символы в формулах в итоге увеличивают  выразимость. Например, в отличие от обобщённых формул логики  предикатов первого порядка, нам удастся найти компромисс таким образом, чтобы построение графа кучи проводилось последовательно. При этом, описание графа проводится \textit{снизу-вверх}, используя отдельные  простые кучи. Логическое представление выражений в формулах разрешает широкую гибкость самих куч, но а также при реализации стратегий над кучами, в том числе и определения правил вывода.

Структура доказательства является  деревом, если получается цикл внутри доказательства, тогда получается предыдущее состояние вычисления, которое означает ввод регрессии. Регрессия является индикатором того, что применённое правило не привело к новому состоянию прогресса, т.е. все следующие регрессии состояния подлежат к удалению. Ненужные состояния могли появиться только из-за неверных правил. Если доказательство правил завершается успешно, то и соответствующая  подцель доказана, т.к. предикаты в обоих случаях не отличаются (кроме возможно необходимых переименований  предикатов и используемых  символов). Поиск Пролога опирается исключительно на  дедукции (ср. главу \ref{chapter:intro}): выводимо только то, что следует из  фактов и правил, иные выводы, например  \textit{модус толленс} недопустимы. На практике это означает, что  фальсификация предикатов может быть решимой в общности только, если рассматриваемые множества конечны (например, вводя общие символы для определения класса объектов) и полностью определены. То есть, исключаются случаи, которые могли бы сами себе противоречить. Следовательно, доказательства могут быть раздутыми или классической дедукцией просто недоказуемыми (см. пример  исключённого третьего из \ref{chapter:intro}). Предикаты могут быть определены конечным образом, хотя входное множество бесконечное. Тогда,  отсечение (см. опр.\ref{def:CuttingSolutions}), как провал, может представлять собой отсечение предиката, либо иное явное определение.
Идея  поиска с возвратом при провале заключается в  \textit{продолжении} следующего неверного предиката по родительской оси за счёт минимальной записи с тактикой переключения стека. В отличие от метода  «\textit{возврат при провале}», метод «\textit{возврат с направленным фокусированием}» можно считать обобщённым, когда в самом простом случае фокус определяется некоторой целой отметкой и тактика переключения ориентируется на нахождение экстремума этой отметки. Отметка определяет, какая подцель будет доказываться следующей. Если внимательно задуматься, то такого рода задача может быть редуцирована без потерь в общую стратегию вычисления «\textit{ветвей и границ}» \cite{mitchell96}. Обобщением отметки является терм, который также отлично вписывается в головы левых сторон правил с помощью сопоставлений образцами.

 Тройку Хора в  Прологе можно представлять в качестве предиката с минимальной  арностью «$3$», при этом, первые две компоненты записываются как входные термы, а третья --- в качестве  выходного терма. Конечно, в общем, предикат может быть  обратим, но в данный момент мы не рассматриваем вопрос, что же является объяснением данному следствию и правилу (см. раздел \ref{sect:LanguageCompatibility} и следующие). Причины зависимостей в дальнейшем подчёркиваются.\\

\textbf{Обсуждение декларативных стратегий основанных на стеке}

\begin{itemize}
 \item \textbf{Быстродействие.}  В частности, речь идёт об улучшении производительности «\textit{реактивных систем}» (систем под наблюдением и контролем другой системы). Хотя Уоррен \cite{warren83} подчёркивает, что быстродействие не решающий фактор, однако, медлительность может иметь причину в самом алгоритме (проблема \underline{определения} пра\-вил), например в  копировании часто используемых сегментах  стековых окон и построении сложных объектов из более простых. Имеются две причины: во-первых, часто копировать, т.е. использовать  подвыражения --- показывает на (очень) высокую  зависимость данных, чего в общем можно избежать с помощью улучшения алгоритма и \underline{определения правил}. Во-вторых, изменения обработки с поддержкой  быстродействия извне (т.е.  мульти-парадигмальной) адаптацией можно при таком подходе существенно ускорить медленные точки, несмотря на дополнительные затраты в связи с загрузкой. То есть, вторая проблема разрешима с помощью  \underline{вариабельности}. Это означает, что  предикат может быть заменён иным поведением, но одинаковой  спецификацией. Языковые  расширения и вариабельность, тоже входят в поле рассмотрения, но уже в разделе \ref{sect:LanguageCompatibility}, поэтому, их всегда стоит рассматривать, независимо от данных пунктов.
 
 \item \textbf{Порядок вычисления.}   Бесконечные списки (как например  \texttt{take} из главы \ref{chapter:intro}) согласно данной  операционной семантике \cite{warren84} полностью исключаются. Операционная семантика полностью помещает  список в  стек, поэтому, частичное исследование для бесконечных списков отпадает. Частичное помещение  в стеке всё равно, но только, если сравниваемый  терм уже имеется в  стеке, а разница между подвыражениями двух термов минимизируется. Изменение  операционной семантики возможно, но требует полного изменения семантики, хотя порядок изменения и эффект вычисления сам по себе маленький. Надо отметить, что если прологовские правила унифицируются, то сравняются разницы между  термами (точнее  кучами, см. главу \ref{chapter:APs}). Естественным дополнением было бы вычисление снаружи внутрь (см. главу \ref{chapter:intro}) термов параметров, но такое требование можно избежать, благодаря ранее  не полностью вычисленным  символам (под-)термов и не требует никаких изменений или дополнительных затрат, без ограничения общности. Единственный, возможно критический этап, касается обратимости предикатов. Следовательно, единственным фактором изменения  порядка вычисления является само \underline{определение правил}.
 
 \item \textbf{Семантический контекст.}  Предикаты могут быть перегружены или определены  мульти-парадигмально. В обоих случаях проблема зависит исключительно от конкретного данного \underline{определения правил}.
 
 \item \textbf{IR.}  Промежуточное  термовое представление может в ходе разработки оказаться устаревшим или несовместимым требованием к   императивной программе. В этот момент необходимо иметь возможность добавлять и менять существующие термы (в том числе удаление). Поэтому  \underline{вариабельность} и  \underline{расширяемость} считаются ключевыми свойствами, которые  Пролог допускает, в чём можно легко убедиться. Например, если ради локальной оптимизации и  быстрого доступа к памяти, необходимо ввести  B-деревья \cite{cormen09}, \cite{atallah98}, то это возможно с помощью индексации встроенных и/или скрытых предикатов, при этом, реализация предиката может по различным причинам проводиться на другом  языке программирования.
\end{itemize}

Не использовать  стек, не удастся, т.к. языки полностью свободные от  стека и основаны только на  рекурсивных схемах, на практике оказались совершенно не пригодными, тем более нисколько не приспособлены к модуляризации и следовательно не рассматриваются далее по практическим причинам. Современные ЭВМ настолько оптимизированы, что процессоры в состоянии ускорить выделение и утилизацию  стековых окон практически без больших затрат \cite{intel11}, \cite{raman12}.
 Описывающая парадигма состояния вычисления кучи должна быть логической для решения её проблем верификации.\\

\textbf{Обсуждение поиска доказательства над реляциями}\\

Предикаты связывают  термы, которые состоят из  символов, переменных и других термов, в любое определённое соотношение. Не трудно убедиться в том, что количество  отображений между $A$ входных и $B$ выходных термов равно $B^A$. Множество отображений могут быть объединены в одном предикате. Вопрос о том, определено ли и решимо ли данное отображение для данного входного/выходного/смешанного вектора является отдельным и сложным вопросом (смешанный вектор ради простоты не рассматривается). Эта проблема может быть редуцирована на проверку  выполнимости логической формулы. Естественно, с практической точки зрения только разработчик несёт ответственность за перегруженные по количеству  отображений предикаты. Чем больше количество отображений одного предиката, тем выше его применимость. Для перегруженных предикатов вопрос об обратимости становится всё важнее, т.к. не совершение автоматически приведёт к нетерминации, а, следовательно, к не завершению доказательства. Позже это надо обязательно учесть.
 С точки зрения  канторовых множеств предикат является перечисляемым множеством (возможно очень большое, но дискретное), которое объединяет элементы доменов с элементами ко-области --- т.е. похожая концепция. С практической точки зрения это не маловажный универсальный вопрос. Если нам удастся показать универсальные свойства  \textit{реляционной алгебры по Кодду}, то мы её сможем сымитировать. Это фундаментальное свойство, которое позволит достичь уровня выразимости, хотя бы реляционных и объектно-реляционных языков баз данных на практике, что уже покрывает огромную и значимую часть выразимости.

\begin{proof}
 Полное доказательство тому содержится в моей работе \cite{haberland08-1}. Доказательство простое и прямое: свойства Коддской алгебры, как например ---  проекция, связывание и т.д., можно заменить в правилах  Пролога один к одному на каноническую запись без коллизий (см. рисунок \ref{RelationalAlgebraOverProlog}).

Также обратно доказательство производится. При этом могут потребоваться шаги  канонизации и удаления  отсечений, что в общем случае можно всегда совершить, как это было ранее продемонстрировано.
\end{proof}

Здесь нельзя не заметить, что например \cite{laemmer02} задавался частично проблемами более эффективного представления входных и выходных данных. Итог расследования приводит к тому, что удобное представление может быть  дано логическим видом, но не функциональным.
Работа \cite{pitts96} расследует фундаментальные свойства  доменов как реляции. Она посвящается большей частью доказательству существования  инварианта реляций. Использование инварианта остаётся интересным и частично открытым вопросом в связи с улучшением  кэширования  куч в  спецификациях (см. главу \ref{chapter:stricter}). Примеры, где  инварианты реляций пока что имели наиболее широкое поле применения, это  «\textit{нумеральная логика}» \cite{pitts02}, \cite{debruijn72}.

После того, как мы ввели реляции и обосновали их свойства, сразу наблюдаются приятные свойства:

\begin{itemize}
 \item \textbf{Декларативность.} При логическом выводе, арифметические вычисления не столь важны. Важны объекты, т.е.  кучи и их взаимосвязи. Это не только вопрос «\textit{вкуса}», но особенно является фундаментальным свойством именно логической системы. Пролог представляет собой среду представления и обработки знаний. Кучи являются атомами или сложными термами, а правила предикатами куч.  Псевдонимы являются  символьными переменными, которые используются в других местах или более того. Этого достаточно, чтобы определить теории куч.
 \item \textbf{Терм-Дерево.} По  \textit{теореме Биркгоффа} (о термовых продуктах) \cite{pierce91}, \cite{davis94} из абстрактной алгебры следует, что каждый терм корреспондирует с представлением в виде  дерева. Преобразование, увы, не обязательно однозначное, если не проводить  нормализацию. Таким образом, обратное преобразование однозначно определено. Отсюда сразу возникает необходимость, либо определить  канонизацию для выравнивания деревьев, либо устранить, например  ассоциативность, по которой выравнивание деревьев было бы дано неявно (см. главу \ref{chapter:stricter}). Мы не хотим сами себя ограничивать в том, что кучи могли бы быть только деревьями. Мы хотим, чтобы куча могла быть любым графом (см. главу \ref{chapter:expression}), поэтому допускается описывать кучу только одной вершиной. При этом, всё равно, вершину представляет  простая или  сложная куча, т.е. она является обыкновенной кучей или  объектным экземпляром.
 \item \textbf{Генерация термов.} Тематическое исследование \cite{haberland08-1} показывает на то, что Пролог отлично приспособлен для обработки термовых структур, в отличие от функциональных/императивных языков программирования и трансформации. В исследовании использовалось большое количество примеров. Также проводился количественный анализ. Использовались  метрики как компактность,  уровень выразимости и  интеллектуального уровня языка и многое другое. Для широкого объёма примеров, практически без всяких исключений,  Пролог превосходствует чётко с большим отрывом.
 \item \textbf{Проверка термов.} Тематические исследования \cite{haberland08-2}, \cite{haberland19-1} показывают, что если процессы генерации и проверки термов сблизить, то основным твёрдым ограничением является выразимость  языка проверки. В исследовании, без потерь общности, рассматривается обобщённый регулярный язык. Процесс сравнения термов слабо структурируемых данных можно интуитивно понять, либо как сравнение одинарных элементов, либо иерархических элементов с возможными дырами, которые наполняются данными во время запуска.
 Терм, как обобщённое представление, является уникальным IR и может быть широко использовано в различных областях, например для  верификации. Операторы описывают не программу, а структуру генерируемого документа. Статически, цикл не всегда может быть ограничен, поэтому цикл вершины $a$ некоторого дерева в  XML-документе описывает лишь $a^{*}$.
 Отсюда ясно, условия цикла могут быть представлены и проверены только самым общим видом. Естественно,  конечным автоматом так и не удастся распознать $a^nb^n$, но это и не главная цель исследования. Выходит, что главным ограничением  проверки всегда является  выразимость языка утверждений (выражений). Далее, главным результатом вместе с \cite{haberland08-1} является то, что реляции лучше приспособлены для представления знаний с термами и логическими правилами  трансформации, чем функции, которые вычисляют для каждой «\textit{дыры}» необходимые данные. Получается, логические соотношения ссылаются на имеющиеся компоненты.
 В \cite{haberland08-1} под логическими правилами трансформаций в основном подразумеваются $\tau \rightarrow \sigma \rightarrow \tau$, где $\tau$ представляет некоторый терм  IR, а $\sigma$ среда, содержавшая символьные присваивания. Если эту трансформацию расценивать как  реляцию в качестве  предиката, то трансформацию можно расширить как $\tau \rightarrow \sigma \rightarrow \tau \rightarrow \mathbb{B}$, где $\mathbb{B}$ булевое множество.
 Теперь верификация куч может быть представлена таким же семейством трансформаций, как и верификация куч. Её главная разница заключается в использовании  (абстрактных) предикатов и в методах автоматизированного вывода.
 При трансформациях используемые термы --- модели куч, различаются.
 В отличие от  регулярных выражений и возможных  распознавателей \cite{brzozowski64}, схемы  спецификаций в основном контекст-свободные. Предсказывания следующего элемента в обоих методах являются одной из центральных операций, которые могут существенно отличаться.
\end{itemize}

Решение и доказательство этому тезису изложено не только в этой главе, но также и в последующих. Кроме терма можно использовать и другие  промежуточные представления, как например  тетрады,  польско-инверсную запись,  триады и другие. Преимуществом термов при описании состояний куч, как было упомянуто, в первую очередь является простота и максимальная  выразимость. Термовое представление  программных операторов может быть записано в Прологе непосредственно. Преобразование в другие IR отпадает \cite{opaleva05}, \cite{muchnick07}.
Однако, термовое представление в  Прологе имеет то преимущество, что все термы и подтермы не нуждаются в дополнительных контекстах, конвенциях и дополнительных фазах преобразований. --- Их «\textit{можно написать просто так}». Это не только облегчает возможное использование в учебных целях, в быстрой прототипизации, но, а также облегчает преобразование и переписывание термов за счет прямого представления и анализа правил переписывания (см. \cite{baader98}, \cite{dodds08}).   
Аналогичные реализации на более реальном уровне являются, например,  «\textit{LLVM-биткод}» \cite{llvm15}, GCC  «\textit{GIMPLE}» \cite{merrill03} или аннотированные объекты в качестве IR в проекте «\textit{ROSE}» \cite{rose17}. Во всех случаях, которые используют  тетрады, необходимо предварительное IR входной программы преобразовать в  синтаксическое дерево преобразуемое в тетрады.  Синтаксический перебор в  «\textit{LLVM}» производится более гибко, чем в  «\textit{GCC}» с помощью представленного этапа и может быть совершен с помощью среды синтаксического анализа  «\textit{CLANG}» \cite{clang17}. Дерево перебора может теоретически быть введено без  «\textit{clang}», но на практике это совершенно немыслимо, потому, что, даже крайне простое дерево, всё равно может и будет представлено очень большим  промежуточным представлением, если «\textit{LLVM}» заставить вручную ограничиваться неэффективным и полным отсутствием всех дальнейших  трансформаций IR. Несмотря на раздутые наименования и на первый взгляд  «\textit{ненужные}» синтаксические определения, гибкость и  расширяемость сильно повышены в отличие от «\textit{GIMPLE}».

Ради ограничений и простоты, в отличие от  «\textit{биткод}», реализация в  Прологе исключает  безусловные переходы к любому  программному оператору. В реализации не стоит приоритет обеспечить максимальный объём  программных операторов, если в будущем имеется возможность подключения любых иных программных операторов. Более важным вопросом является  расширяемость и  вариабельность модели  кучи: «\textit{можно ли простым образом модифицировать кучу так, чтобы имитировать любую пошаговую манипуляцию кучи?}», «\textit{Можно ли добавлять всё новые фазы и правила логического вывода?}».

Принципиально, нужно заметить, что выбранное  промежуточное представление естественно может быть преобразовано из  Пролога, например в  биткод или  «\textit{GIMP\-LE}», но практическая реализация не стоит вопросом исследования. Прологовские  термы представляют собой  программные операторы (а также спецификацию и правила вывода) и как ранее в этой главе обсуждалось, могут быть представлены в качестве дерева. Да, можно выбрать  тетрады инструкций (например, близки  ассемблеру некоторой  целевой машине), но, первым итогом  синтаксического анализа всегда является  синтаксическое дерево (см. рисунок \ref{fig:PhasesOfCodeGeneration}). Выбрать другую модель означает, что одна и более фазы из упомянутых на рисунке \ref{fig:PhasesOfCodeGeneration} просто пропускаются.
Это простое замечание, но, увы, этот принцип часто (не умышленно) нарушался и нарушается в истории проекта  «\textit{GCC}»,  «\textit{LLVM}», а также в проекте  «\textit{jStar}» \cite{parkinson05-2}, где верификация проводится на уровне оптимизируемых  триад, коротко до и во время  генерации кода и многих других проектов в области  статических анализаторов --- совершенно независимо друг от друга. Хотя замечание простое, последствия могут приводить к необходимости определять точно, в какую фазу анализ  псевдонимов всё-таки нужно включать и это обычно является очень непростым вопросом. Важность заключается в следующем: (1) иметь вообще возможность расширять и менять существующие этапы анализа  динамической памяти, а (2) возможность адекватного представления для того, чтобы избегать раздутые и сложные семантические контексты и множество экстерных хранителей.   Отсутствие возможности №2 является признаком тому, что описание модели сильно усложняется. Простота модели зависит от полного представления всех необходимых данных. Прежде всего, это касается термовых  представлений входных программ и при необходимости  семантических полей содержавшие данные из рисунка \ref{fig:PhasesOfCodeGeneration}. Дополнительные данные не упомянуты на рисунке \ref{fig:PhasesOfCodeGeneration}, но всё равно могут потребоваться на локальных фазах, а следовательно, не вовлекают за собой модификацию общей модели  вычислений для работы с  динамической памятью (см. рисунок \ref{fig:PipelineArchitecture}, см. набл.\ref{obs:TypeCheckingPhases}).

Преимуществом термов является их явное представление и явная манипуляция ими. Отсутствие их явного представления приводит к конструкции вспомогательных подтермов и к введению  семантических полей (например, для  аппроксимации какого бы ни было  лимита).

\subsection*{Совместимость языков}

В введении было описано, как замечают критики  верификации, что часто дисциплина характеризуется «\textit{чисто академической, без практического применения}». Основными причинами тому, являются: перечень конвенций к отдельным моделям, которые часто имеют резкие ограничения при сильно раздутом формализме. Из предыдущего раздела, особенно из тез.\ref{thes:PrologMakesHeapSpecSimpler}, можно заметить сильные обобщения относительно  языков спецификации и  верификации для проблем  динамической памяти (см. главу \ref{chapter:DynMemProblems}, см. набл.\ref{obs:ComparisonDeclarativeParadigms}):

Наблюдение на первый взгляд может казаться малозначимым. Однако, оно означает, что для верификации  троек Хора удобнее использовать логический язык программирования, спецификации и верификации троек, что на первый взгляд является очевидным. В реальности немногие системы верификации были построены на основе логической парадигмы, поэтому они, слишком ограничены или замкнуты (см. главу \ref{chapter:intro}).

 Логические предикаты описывают  состояния вычислений, именно поэтому необходимо проверять. Например, предикаты в первую очередь не нуждаются в  побочных эффектах, чтобы  выразить состояние, а в императивных языках они являются основными. Замысел  предикатов заложен в том, что состояние можно было бы определить непосредственно без манипуляции каких бы то ни было  глобальных переменных. То есть,  диапазон видимости зависит от  символьных переменных предиката, но не от экстерных хранителей памяти.

С другой стороны, на примере модификации  Пролога, единицы логики могут быть сопоставлены один к одному элементами  логического языка программирования. С помощью программирования можно решить вопросы верификации куч.

Не связанное с  динамической памятью, но похожее применение, наблюдается в рукописи Леммера \cite{laemmer02}, в которой предлагается логический аппарат для верификации  сходимости программных компонентов \cite{feijs02}. Идея его работы заключается в предложении перехода на логические  утверждения для  спецификации и верификации, которая опирается на (различную) систему  логического вывода из-за целого ряда проблем в связи с объектно-ориентированным программированием, точнее его спецификации.

Упрощение определений какой бы то ни было математико-логической проблемы, иногда означает упрощение или решение проблемы, а, как правило, бывает --- наоборот. Часто бывает именно так: чем проще определение, тем меньше существует частных случаев, для которых необходимо вводить отдельные определения. Следовательно, объём годных единиц растёт. Например, сопоставление  символам для данного случая $\forall a$ просто, а соблюдение всё новых конвенций усложняет в принципе. Ограничивая $a$, требуется хотя бы один дополнительный предикат. Это означает, использование  квантифицированных переменных может резко увеличить  выразимость даже маленьких формул. Аналогичное действует обратно: при инвариантной длине формулы несоблюдение утвержденного, приводит к спаду  выразимости --- это как раз те проблемы, о которых говорилось в этой и предыдущих главах.

Также можно заметить: если выводить предикаты, которые содержат  символьные переменные вместо конкретного атома, то следовательно, решение будет более общим (в  Прологе это происходит автоматически  унификацией термов, а на уровне вывода, как метод  нормализованной и финитной  резолюции \cite{diaz12}).
Задача верификации заключается в проверке программы генерируемой структуры данных. Как было обсуждено в предыдущем разделе, проверка ограничивает сильнее, чем построение структуры. Оба процесса обычно различаются. Сложность проверки заключается в  выразимости.

\begin{rucorollary}{Минимизация разницы между языками}
 Проверку  куч можно упростить и расширить тогда, когда выражения описываются на одном и том же языке во время (1)  спецификации, (2)  верификации и (3) во  входном языке.
\end{rucorollary}

Использование одного и того же языка является экстремумом по задаче минимизации разницы между языками, которое рассматривается до тех пор, пока не будет обнаружен аргумент нарушавший гипотезу, если таковой имеется.
Когда ликвидируется разница между (1) и (3), либо будь-то  входной язык логический, либо полученное  IR в качестве термов из  входной программы, то  утверждения записываются в качестве  подцелей, а входная программа как термы. Утверждения о программе ссылаются на термы входной программы (см. главы \ref{chapter:APs}, \ref{chapter:stricter}), обратное не допускается. Так как верное предложение является конгруэнтностью, достаточно ликвидировать разницу между (1) и (2). Формулы и любые необходимые им индуктивные определения задаются прологовскими  фактами и  правилами. Верификация полностью упирается на  факты и правила, которые заданы спецификацией. Кроме того, в качестве  тактик, правила спецификации и иных  вспомогательных предикатов, предусмотрена возможность включать новые правила в качестве  прологовской теории.

Задачи (2) и (3) соперничают следующим образом: (2) устанавливает, как должен выглядеть  процесс построения  графа куч, а (3) конкретно, исходя только из данных утверждений, пытается доказать верность. Для этого описание должно быть сфокусировано на граф кучи, где процесс верификации является процессом «\textit{понимания}» и анализа. В главе \ref{chapter:APs} анализ опирается на  синтаксическое определение. Предпосылкой тому являются граф кучи и  унификация синтаксических, семантических и прагматических определений куч и их проверок.

Ли \cite{lee96} справедливо замечает, что  логики высшего порядка очень важны как критерий применимости на практике. Как было обсуждено в начале этой главы,  Пролог поддерживает такую возможность. Правила могут даже меняться, но нет необходимости менять правила во время  интерпретации правил (см. главу \ref{chapter:intro}). Поэтому, определённые правила доступны в обоих  процессах (2) и (3).

Модель кучи описывается спецификацией $S$ и обрабатывается языком верификации $V$. Пользователь использует $V$ для изменения состояния и следит за результатами через $S$ (см. опр.\ref{def:SpecificationLanguage}). Однако, в классическом паттерне по Реенскаугу $V$ непосредственно манипулирует $P$, что здесь варьирует. $P$ может иметь различные графовые представления $S$ с обсуждёнными ранее свойствами. Сближение приводит также к тому, что основные интерфейсы коммуницируют на одном языке.

\subsection*{Представление знаний}

Правила могут в  декларативной парадигме быть представлены, в  функциональном либо логическом виде. На примерах  языков XLS-T и  Пролог в \cite{haberland08-1} проводился количественный анализ по обработке  слабо структурируемых данных (также как и термы в общем, см. предыдущий раздел). Для качественного сравнения эквивалентных программ проводилось множество проверок более 80 выбранных типичных и образцовых примеров, аккуратно вручную подобраны из множества учебников, монографий и онлайн ресурсов. Прямое сравнение показало (см. рисунок \ref{fig:HalsteadMetrics}):

\begin{enumerate}
 \item  Пролог во всех примерах (кроме одного) превосходит в среднем на более чем 30\%  XSL-T, что можно вывести из соотношения  $N_T : N$ и $\eta_1 : \eta_2$.
 \item В среднем описание той же самой функции в Прологе на 50\% короче, а часто даже ещё короче. Обосновано такое решение на $N$, $\lambda$ и $\Delta_N = \| N_T - N \|$.
 \item Функциональный язык страдает от замкнутости, т.к. могут быть использованы только встроенные операторы. В отличие от этого, когда логический язык предусматривает определять  термовое  IR, любые иные операторы и правила.
\end{enumerate}

Более того, качественный анализ показывает и объясняет, почему выражения и правила в логическом представлении можно выразить намного проще. Это в основном из-за логического представления термов и  реляций. В отличие от функциональных языков (имеющие компактную  денотационную семантику), логически основаны на атомах, термах и правил с приоритетами (см. тез.\ref{the:ExpressibilityRelationsInProlog}). Денотационная семантика является  декларативной, однако, полный перебор логических взаимосвязей редуцируется только на одно более оптимальное отображение, а результат высчитывается на основе  входного вектора. Переменные не являются символьными, а лишь переменными, которые даже в  $\lambda$-абстракциях используются только в одностороннем порядке: присваивание значения (даже если по-ленивому) при использовании и вычислении функции. Подставкой параметризованных функций взамен реляций проблему не решить, как это наблюдается, например, в \cite{birkedal07}. Когда необходимо выразить  атомы, выражения и термы в общем, а также определить потенциально любые соотношения между ними, тогда разумнее ориентироваться на  аксиоматическую семантику или семантику, основанную на соотношениях. То есть, успех представления знаний трансформаций и сравнения с существенной частью зависят от выбранной парадигмы, как было продемонстрировано в работах \cite{haberland08-1}, \cite{haberland07-1}.

Кроме ранее упомянутых особенностей, имеются следующие, которые необходимо учесть при верификации  динамической памяти:

\begin{itemize}
  \item \textbf{Правила компактны} и могут быть вызваны из любой позиции интерпретатора. Цели и подцели могут быть любыми. Нетерминация в общем неизбежна и искусственно не ограничивается. Однако, ради более эффективной обработки, в главу \ref{chapter:APs} может вводиться не значимое ограничение.
  
  \item Используется  \textbf{строгое вычисление термов}, ленивое вычисление исключается. Это приводит к тому, что некоторые ситуации, например, передача рекурсивных данных  Аккерманна (см. рисунок \ref{fig:PrologExample1}) не может осуществляться до тех пор, пока не будут вычислены все аргументы. Это ограничение практически не является значимым, т.к. всегда возможно написать программу без употребления незавершённых данных таким образом, что контроль записывается в  тело правила. Более того,  термовая унификация уже приводит к тому, что неопределённые частицы термов, т.е.  символьные переменные присваиваются и при определении нет необходимости вычислять всё заново потому, что присваиваются лишь ранее неизвестные  подтермы, а сам терм не меняется. Это возможно согласно принципу из рисунка \ref{fig:ExampleStack2}.
  
  \item \textbf{Базовый тип} верификации является  термом. Далее он может быть рассмотрен как параметризованный тип   $\lambda$-вычисления третьей степени, т.е. тип из $\Lambda_{T_{\lambda3}}$, который допускает  квантифицированные переменные. Тип, который построен из других типов, это  зависимый тип (см. опр.\ref{def:TypedLambda2ndOrder}). Построение проводится согласно опр.\ref{def:PrologTerm} с помощью  функтора. Таким образом, сравнение термов может быть формализовано как проверка термов, однако, термы могут теоретически содержать сами  себя в Прологе, благодаря  символьным переменным. Практически этого можно избежать, если использовать проверку на самосодержащие термы (см. предикат  \texttt{unify_with_check} вначале главы). Чем выше уровень  $\lambda$-вычисления, тем сильнее действуют ограничения, а это означает, что меньше ожидается  парадоксов.  Функторы могут быть использованы для моделирования любых сложных  структур данных, в том числе списков, деревьев и  объектных экземпляров.
  
  \item \textbf{При успехе подцелей}, выдается подходящее множество сопоставлений, например: $$\texttt{?-H=pointsto(a,2),VC=pointsto(a,X),H=VC.}$$ приводит к результату:
  \texttt{H = pointsto(a,2)}, 
  \texttt{VC = pointsto(a,2)} и \texttt{X = 2}. 
   \textbf{Ге\-не\-рац\-ию контр-примера} можно встроить в процесс  унификации, например  \texttt{unify_with_check}, если в отрицательных возможностях добавить разъяснение, используя \texttt{write} \cite{sterling94}, т.к. прямое присвоение некоторых  символьных значений невозможно и не имело бы смысла. Необходимо выявить сверху-вниз самый ближайший терм, который несопоставимый со сравниваемым  термом. Принципиально, отслеживание модели вызовов можно совершить с помощью встроенной трассировки \cite{diaz12}. Если верификация проходит пошагово, разумно каждый шаг  верификации записывать в виде  «\textit{DOT}»-файла для понятия и документации доказательства.  Контр-пример предоставляет хотя бы один случай, когда верификация отклоняется.  Унификация термов в Прологе приводит к наиболее общему сопоставлению, поэтому несовпадающие  функторы по имени,  арности или неунифицируемые переменные, являются сценариями отказа. В общем, частный пример означает терм, который может быть приведён в качестве отказа, всегда тогда, когда, атом не унифицируем c функтором, либо  атомы или функторы различаются.
  
  \item \textbf{Снятие ограничений символьного характера} при определении куч \cite{berdine05-2} для запуска (см. главу \ref{chapter:DynMemProblems}), в том числе и для анализа правил верификации с обоих сторон при дедуктивном выводе (\textit{(би-)абдукция}) \cite{calcagno09}, \cite{pottier08}, можно осуществить, если правила вывода удобнее моделировать в качестве хорнских правил. Правила Хорна позволят, во-первых,  квантифицировать $\forall, \exists$ переменные в зависимости от того, где символьная переменная определяется и как она связана. Во-вторых, перебор правил и альтернатив может привести к успеху при поиске без дополнительных затрат. Однако, ради применимости, нужно объём  альтернативов ограничивать. Если выбирается правило, то  \textit{абдукцию} можно имитировать следующим образом: дано  правило Хорна «\texttt{b:-a1,a2,...,an.}». Если некоторые из $\texttt{a}_j$ оставлять неопределёнными, т.е. имеют  символьную зависимость, то, таким образом, могут быть выбраны различные \texttt{b}, если имеются альтернативы. Для этого строго требуется, чтобы начальные термы, аргументы  головы \texttt{b} не исключались.
  Необходимо заметить, что Пролог не нуждается при выбранной модели кучи в дополнительных правилах, потому, что  лексикографический порядок по  указателю  простых куч и возможностей уникального выбора (простых куч)  встроенных предикатов по работе со  списками, как например  \texttt{concat}, уже дают широкий круг выбора и преобразований новых термов из старых  подтермов. \texttt{concat} и иные сканирующие одним ходом предикаты эффективны благодаря свойству неповторимости.
  Поэтому, целый набор преобразовательных и вычислительных правил отпадает. Без ограничения общности, правила вывода обратимы, если между всеми подтермами предусловия и постусловиями, либо существует  изоморфизм, либо необратимые встроенные предикаты декларативно в  обратном случае не вызываются. В случае, когда изоморфизм нарушается, то, либо в отображении от входного вектора на выходной, либо в обратном случае производится расширение (см.  абстрактная интерпретация из главы \ref{chapter:intro}). Расширение является проблематичным при обратных отображениях, т.к. относительно  кообласти происходит сужение  домена, т.е. отображение становится прерывным.

  \item \textbf{Перегрузка значений}  правил может осуществляться и пополняться различными телами, которые внутри могут реализоваться, например, на  языке Ява, до тех пор, пока коммуникация осуществляется с помощью  входящих,  выходящих и комбинированных термов. Для анализа  прологовских правил не достаточно использовать  «\textit{DCG}» (см. главу \ref{chapter:intro}), потому, что необходимо изменить контроль и  тактику выбора. Лучший пример может обсуждаться при принятии стратегий восходящих и нисходящих  синтаксических анализаторов (см. главу \ref{chapter:APs}).
\end{itemize}


\subsection*{Архитектура системы верификации}

В \cite{haberland14-2} и \cite{haberland14-1} предлагается архитектура верификации с  динамической памятью на основе  Пролога. Архитектура представляется на рисунке \ref{fig:HeapVerificationArchitecture}.
Архитектура следует ключевым принципам, которые были обсуждены в главе \ref{chapter:intro}, это: (1)  автоматизация, (2)  открытость, (3)  расширяемость и (4)  обоснованность. (1) гласит от том, что доказательство находилось  автоматически. В Прологе решение будет найдено, в зависимости от данных правил, если в правилах исключаются циклы и даны все необходимые правила, иначе, тогда доказательство  не терминирует, либо завершает верификацию преждевременно. Далее, с помощью подхода, в главе \ref{chapter:APs} происходит автоматизация. (2) означает, что искусственные ограничения между термами, правилами и возможными реализациями должны отсутствовать. С одной стороны  Пролог является открытым, т.е. могут быть добавлены всё новые правила, а имеющиеся могут быть обновлены, если их определить ранее. С другой стороны, Пролог замкнут тем, что только то выводимо, что следует из данных правил согласно  дедукции. Пункт (2) относится к архитектуре и используемым моделям представления памяти. (3) означает, что модель памяти может быть  расширена и при необходимости изменены данные правила. Изменения всегда разрешаются благодаря переопределению  правил Хорна. Расширяемость термов и правил также обсуждались детально в предыдущем разделе. (4) означает, что любой шаг  верификации можно отслеживать и проверять, как обоснованные шаги логического вывода. Генерация  «\textit{DOT}»-файла визуализирует вычисление проверки, а при отказе генерация  контр-примеров даёт более подробные результаты и предпосылки. Возможность интерактивно без соблюдений каких бы то ни было конвенций сильно упрощает и способствует проверке на  обоснованность принятых решений.

На рисунке \ref{fig:HeapVerificationArchitecture} на вход поступает данная программа, предпочтительно на  языке Си (или другом императивном), который вместе с утверждениями о программе преобразуется в  прологовские термы и правила. Синтаксический, а затем семантический анализ проверяет и исключает недопустимые  типовые ошибки и основные ошибки нотаций. Термы всегда можно визуализировать в файловом формате  «\textit{DOT}».  Утверждения могут ссылаться на  леммы и  теоремы, которые могут быть записаны непосредственно на языке Пролог и при необходимости могут быть использованы при  верификации. При верификации включаются различные правила, которые задаются в теориях Пролога и загружаются  динамически при интерпретации, например средой \cite{denti01}, \cite{denti05}. Для решения отдельных  теорий могут быть использованы, либо экстерные средства механизмом  мульти-парадигмальной системы, либо подключением некоторых произвольно определённых  SAT-решателей в самом  Прологе. Напомним, Пролог широко используется в области  решателей и  «\textit{Constraint Programming}». Новые этапы решения проблем  динамической памяти могут быть подключены согласно принципу из рисунка \ref{fig:PipelineArchitecture}. Переходы между маленькими и большими фазами совершаются, благодаря передаче  состояний вычислений, т.е. процесс работает согласно  потоку данных (см. рисунок \ref{fig:HeapVerificationArchitecture}) и может быть сравниваемым с общей архитектурой существующих  конвейеров \cite{kennedy02}, \cite{gcc15}. Зависимость данных отличается от классического потока данных \cite{khedker09}, из-за  блока  видимости динамически выделенных данных (ср. главу \ref{chapter:expression}), который обычно не соответствует  автоматически выделенным переменным на  стеке. Однако, инфраструктура предложенного  конвейера принципиально может быть использована при условии, если анализ  псевдонимов установит отдельные  интервалы видимости, см. набл.\ref{obs:VariablesScope}.

Далее в \cite{haberland15-1} обсуждаются и предлагаются основные критерии для  расширяемой и модифицируемой архитектуры. В первую очередь это относится к минимальности  IR  входного языка, который к сравнению с  «\textit{PCF}» \cite{mitchell96} допускает присвоение для основы объектного  вычисления по Абади-Карделли (см. раздел \ref{sect:TheoryOfObjects}), неограниченный цикл и вызов процедур. Так как представленная модель вычисления, согласно  классному виду вычисления, может быть  типизирована по Абади-Карделли (обновления кода во время запуска исключаются, например, передача аргументов производится выборочно, либо  по значению, либо по вызову с особенностью вставленной конструкции более широкого  функторного объекта), то и свойства, согласно второй и третьей степени  $\lambda$-вычисления, (см. опр.\ref{def:TypedLambda2ndOrder} и опр.\ref{def:Lambda2ndOrderTermTypes}) применяются непосредственно.  Расширение может быть применено к  входному языку программирования, к фазам  статического анализа и правилам (включая имеющиеся).

В \cite{haberland15-2} вводятся  фрейм, куча и их  интерпретации, а также даётся предложение практического представления в Прологе. Так как куча предположительно представляется как терм, который принадлежит предикатной интерпретации в Прологе, тогда: либо верно и меняются все неопределённые символьные переменные подцелей, либо даётся отказ с предположительной причиной. Верификация похожа на такой же процесс сравнения, как представленный в \cite{haberland08-2}, \cite{haberland19-1}, т.е. автоматом  сравнения  по образцам и деревьям  (с англ. «\textit{tree graph matcher}» \cite{comon07}). Хотя изначальная модель постановления процесса очень похожа, всё равно её сравнение тяжелее, например: за счёт  вызовов процедур,  абстрактных предикатов,  пред- и  постусловий и моделей графа кучи. Таким образом, из аналогии следует: если утверждения являются схемой/типом, а программа пошагово строит выражение, т.е. граф кучи, тогда верификация является проверкой типов. --- В том случае, если означает содержимое (см. рисунок \ref{fig:HoareCalcVSTypeSystem})?

\begin{rucorollary}{Минимизация входной программы}
 Результатом проверки содержимого является минимальная  входная программа, которая манипулирует  динамической памятью.
\end{rucorollary}

Ответ кажется простым:  входная программа. Однако, соотношение между  «\textit{типом}» и  «\textit{выражением}» не может быть однозначным. Возникает вопрос, а какая действительная программа генерируется? Суть в том, что программа генерируется пошагово только при необходимости  в соответствии с  «\textit{утверждениями}». Программа строится минимальным образом, потому, что каждая  грань графа куч соответствует всё новым программным операторам. Цикл в графе кучи не обязательно равен циклу в  программных операторах, а, например, может быть равен двум операторам.  Бесконечно много граней в графах исключается, поэтому только похожие цепочки могут соответствовать циклу в качестве  программного оператора, условие которое определяется  графом. При конструкции  входной программы  минимальность означает лишь то, что множество операций, которые не имеют прямого отношения к итоговому графу, удаляются и остаются только те операторы, которые действительно необходимы для построения финального графа кучи.     
Таким образом, проверка \textit{содержимого выражения данного типа} может сгенерировать  минимальную программу, которую можно сравнить с данной программой.

Архитектура в Прологе без дополнительных затрат разрешает следующее:

\begin{itemize}
 \item \textbf{Любой входной язык}  допускается, если   термовое IR соблюдается, которое также может быть изменено и добавлено в правилах.  Входная программа может быть даже пуста и процесс  синтаксического анализа пропущен, если IR программы или её части вводятся вручную  для соответствующей части  верификации.
 Архитектура, представленная на рисунке \ref{fig:HeapVerificationArchitecture} позволяет исключить синтаксические и семантические ошибки с помощью гибких фаз из рисунка \ref{fig:PipelineArchitecture}.
 
 \item \textbf{Скромные спецификации} разрешают избегать  полную спецификацию, поэтому специфицируются только те  модули, которые нуждаются в верификации. Кроме полной спецификации, никаких альтернатив не было, что раньше приводило к большим затратам и трудно читаемым утверждениям. Этот подход называется  «\textit{footprint}» и наблюдается в области верификации куч с помощью ЛРП впервые в «\textit{Smallfoot}». 
 Принцип в  Прологе прост, если имеется  утверждение, то оно проверяется на верность, если нет, то верификация по умолчанию продолжается. Верификация применяется только для тех модулей, которые содержат утверждения. Чтобы избежать  полную спецификацию кучи, надо использовать  вспомогательные предикаты, такие как \textit{\underline{true}} или \textit{\underline{false}} (см. главу \ref{chapter:stricter}), а также предположить данные спецификации выборочно и не полностью.
\end{itemize}

 Полиморфизм исключен ради простоты из корневых  программных операторов (см. дискуссии в главе \ref{chapter:intro},\ref{chapter:expression} и далее).  Граф зависимостей заданных правил и  лемм  утверждений анализируются при запросе интерпретатором  Пролога, а также во время  синтаксического анализа, например, при построении  компиляции (см. главу \ref{chapter:APs}).

\subsection*{Объектные экземпляры}

В разделе \ref{sect:TheoryOfObjects} подробно обсуждались два вида вычислений с объектами ---  по Абади-Карделли и  Абади-Лейно. Из-за широкого распространения в  императивных языках программирования с  объектно-ориентированным расширением используется первый вид. Как было продемонстрировано и обсуждено в ранних главах,  объектный вид тяжелее прослеживать в связи с  верификацией  корректности и  полноты. С теоретической точки зрения оба вида имеют одинаковую  выразимость \cite{leino98}, \cite{reus02} и полная абстракция может быть всегда найдена (см. ранее). Однако, написание программ может сильно отличаться и тогда доказательство  полной абстракции, т.е. равенства  операционной и  денотационной семантики между обоими видами очень сильно расходится из-за трудной формализации относительно простых свойств  объектного вида вычисления (см. раздел \ref{sect:TheoryOfObjects}). Так как простота  спецификации имеет наиболее важный эффект на общую простоту доказательств, тогда выбирается именно классный вид вычисления.

 Объект --- это, прежде всего, экземпляр некоторого класса. Ради простоты,  полиморфизм исключается, т.к. он не имеет прямого отношения к корневой функциональности вычисления (настоящий полиморфизм переменных не рассматривается, т.к. в объектно-ориентированных языках полиморфизм выражается \textit{спонтанным} полиморфизмом \cite{cardelli96-2} исключительно с помощью  подклассов \cite{bruce02}), и представляет собой только более удобный способ вызова подходящего метода во время запуска (см. раздел \ref{sect:TheoryOfObjects}). В вычислении Хора полиморфизм выразим, но в наших целях предлагает лишь дополнительный эффект без увеличения выразимости (см. главу \ref{chapter:intro}, \cite{nanevski06}).
Поэтому, объект, моделируемый  кучу, является замкнутым  регионом памяти без дыр, т.е. определённого  типа, который содержит лишь  поля. Методы не сохраняются вместе в  динамической памяти, потому, что они статические и не меняются во время запуска. Изменение кода во время запуска также исключается из-за минимального успеха и огромных проблем свойств  корректности и  полноты (см. дискуссию из раздела \ref{sect:TheoryOfObjects}). Так как объект присвоен данному  типу класса, методы полностью определены. Наследственные поля и методы также полностью определены. Для наиболее легкого сравнения объектных экземпляров, соблюдается конвенция, что пары  (наименование поля $\times$ содержимое) отсортированы по  лексикографическому порядку. Таким образом, проверку и преобразование экземпляра на  подкласс, можно реализовать двумя способами: (1) каждое поле проверяется согласно соотношению  «$>:$» (см. опр.\ref{def:TypeChecking}), при этом, множество полей в верхних и нижних подклассных  экземплярах расходятся или (2) все поля группируются согласно идентификатору  наследованности. Таким образом, в памяти необходимо эффективно сравнивать экземпляры нижних классов только с маленьким подмножеством, которое представляет экземпляр верхнего класса. Чтобы продемонстрировать (2), возьмем пример: «\texttt{SubClass1 s1; SuperClass o1=(SuperClass1)s1;}». Допустим, \texttt{s1} содержит \texttt{[o1,o2,o3]}, тогда вычисление \texttt{o1}, где верхний класс в \texttt{SuperClass1} наследует только поля \texttt{o1} и \texttt{o2}, может производиться преобразование с помощью копирования первых двух полей, т.е. начальным сегментом \texttt{s1}.

Поля  объектных экземпляров записываются в список  кортежом \textit{(наименования, значение)}. Ссылки на объектные экземпляры (в том числе циклические), выражаются  символьными переменными. \texttt{A=object(A,A)} запрещается и может быть исключено с помощью \texttt{unify_with_check}.\\
\texttt{A=object([(a,A),(b,A)])} разрешается. Предполагается использовать упрощенную форму:\\
\texttt{A=object((a,A),(b,A))}, т.к.  функтор \texttt{object} внутри  Пролога уже строит список головы, которая содержит \texttt{object}. Так, как кортеж содержит ровно два элемента, то сложный терм будет всегда чётко определён и однозначен в отношении объектной сети.

Объектные поля доступны с помощью  «\textbf{.}»-оператора и могут быть использованы в  программных операторах и  утверждениях. Исключение неверных доступов обнаруживается во время  семантического анализа. Спецификация (всех) полей данного объекта производится на уровне  абстрактного предиката, которые также могут быть определены, в том числе частично (см. главу \ref{chapter:stricter} и далее).\\\\
Конвенции из главы 1, обсужденные в этой главе, а также кон.\ref{conv:RestrictedObjects} и конв.\ref{conv:HeapAlignment} вводятся, во избежание парадоксов, с целью приближения к «\textit{UML}».


\section*{Ужесточение выразимости куч}

В этой главе рассматривается  \textit{ЛРП} и анализируются проблемы в связи с  пространственными операторами. Получается, что один и тот же оператор приспособлен, согласно  \textit{графу куч}, разделять и связывать между собой кучи, в зависимости от состояния указателей и их содержания. Получается, оператор имеет различные аспекты в зависимости от контекста используемой формулы. Другими словами, единый пространственный оператор в классической ЛРП является  \textit{многозначимым} \cite{haberland16-3}. Многозначимость, это удобная запись, но влечёт за собой недостатки. Наиболее важными недостатками являются  \textbf{контекст-зависимость}. Зависимость, прежде всего, означает, необходимость анализировать всю формулу, что может быть (не) связано с данной кучей и все её содержавшие кучи. Чтобы определить независимый граф кучи (т.е. се\-ман\-ти\-чески контекст-независимо), требуется использовать зависимую нотацию того же  графа кучи (т.е. син\-такс\-и\-чески контекст-зависимо). Это не является  парадоксом, однако, желает иметь лучшее. Далее мы покажем, что синтаксически возможно определить граф кучи контекст-независимо без ограничения общности, исключить целый ряд проблем и улучшить процесс  автоматизированного доказательства. Для автоматизации  синтаксический анализ при  интерпретации формул, которые описывают  кучи, является накладным и избыточными, если пространственное отношение между кучами удастся явным образом выразить. Соотношение между кучами может быть связанное или не связанное. Переписывание многозначной формулы кучи в  однозначную (единственную) формулу может быть не тривиально, т.к. необходимо рассматривать все переходы от одной кучи к другой, либо проверить отсутствие любых переходов из одной кучи в другую, на что может потребоваться значительное время. И наоборот, если имеется однозначная формула, то не ожидается сюрпризов в связи с соотношениями куч. (Не-)связанная куча с некоторой другой кучей сохраняется, при этом не имеет значения, какой предшествует контекст или следует иерархически определённой куче. Используя синтаксически контекст-независимую формулу для описания семантически контекст-независимой модели памяти графа кучи объединяет понятие о том, что такое куча.

Глава разбита на семь разделов. В первом разделе анализируются  ЛРП и последствия многозначности. Во втором разделе проблема многозначности локализуется, и обсуждаются подходы к преодолению проблемы. В третьем разделе рассматривается  ужесточение многозначности как решение проблемы. В четвёртом разделе  в частности, рассматриваются объектные экземпляры  классового вычисления.  Объект рассматривается как комплексная единица ЛРП, на который распространяются те же самые свойства ужесточения и для простых ссылок. В связи с ужесточением операторов в пятом разделе обсуждается возможность специфицировать лишь часть  динамической кучи, благодаря свойствам строгого пространственного соотношения  подкуч. В частности, обсуждается  модульность спецификации и улучшения качества программного обеспечения в связи с объектами. В шестом разделе подробнее обсуждаются возможности применения формальных свойств ужесточенной модели памяти. В последнем разделе обсуждаются возможности и ограничения предложенной модели.

\subsection*{Мотивация}

Возьмём следующее синтаксическое определение  термовых выражений $E$ над целыми числами в классической  арифметике целых чисел в качестве рассматриваемой проблемы многозначности:

 \begin{grammar}
<E> ::= <k> | <E> ‘$\otimes$’ <E>
 \end{grammar}
 
Нетрудно убедиться в том, что синтаксис по  Бэккуса-Науру представляет собой ин\-дук\-тив\-но-определяемые термы, где начальное определение любое, но определённое  целое число $k$. Допустим, $\otimes$ является некоторым  бинарным оператором, который  полностью определён для целых чисел, например сложение. Если мы имеем ситуацию, когда для выражения $e_1,e_2,e_3$: $E_0 \cdots \otimes e_1 \otimes e_2 \otimes \cdots E_n$ и $n \in \mathbb{N}_0$ один раз вычисляется как $e_{1,2}$, а при $E'_0 \cdots \otimes e_1 \otimes e_2 \otimes \cdots E'_n$ вычисляется как $e'_{1,2}$, при этом $e_{1,2} \ne e'_{1,2}$, то либо правила вычисления не корректны (возможно, ошибка совершена в стадии разработки; далее исключается), либо вычисление зависит от контекста, т.е. зависит от $E_0$ и $E_n$, либо $E'_0$ и $E'_n$. Необходимо заметить, что если $E_0 \equiv E'_0$ и т.д. до $E_n \equiv E'_n$, то проблема различия всё-таки совпадает с проблемой  (не-)корректности вычисления. Исходя из стандартного случая, т.е.  $e_{1,2} \ne e'_{1,2}$ при $E_0 \ne E'_0$, $E_n \ne E'_n$, можем утверждать, что обе $E_0$ и $E_n$ одновременно не пусты. Следовательно, зависимость означает при $(e_1 \otimes e_2) \otimes e_3$, что $e_3$ содержит синтаксическую информацию, которая влияет на результат $e_1 \otimes e_2$. А поэтому, для каждого $j$ умножения $\otimes_{\forall 0\le j}^{n} e_j$ в худшем случае означает полный синтаксический перебор всех остальных факторов. Сложность ограничивается  рангом полинома $n \choose 2$. Какое отношение эта граница имеет к кучам?

Из набл.\ref{obs:OverloadedSpatialOp} следует, что определение «$\star$» из ЛРП очень близко к определению «$\otimes$» вверху (см. опр.\ref{def:ReynoldsHeapDefinition}). Поэтому при анализе каждую из куч необходимо внимательно проверять (что означает $E$ в верхнем примере). Поэтому, имеется следующее предложение:

Доказательства этому и последующим тезисам будут следовать в этой главе.

Из этого тезиса следует, что  контекст-независимость позволит определить  формальные теории о равенствах и неравенствах куч, которые далее можно будет автоматизировать ради подключения  SMT-решателя.

\begin{proof}
 Идея заключается в расширении пространственным соотношением, ссылаясь на трм.\ref{theo:GeneralizedHeapConjunctionTheorem} и лем.\ref{lem:HeapConjunctionMonoid}, которое может связывать, либо разделять.
 
 Язык моделирования «\textit{UML/OCL}» основан на типизированном лямбда-вычислении второго порядка, следовательно, эквивалентен опр.\ref{def:TypedLambda2ndOrder}, следовательно может быть выражен в типизированном лямбда-вычисление третьего порядка, следовательно может быть представлен в прологовских термах как описано в главе \ref{chapter:logical}.
\end{proof}

Данные наблюдения и тезисы следуют анализам предыдущих глав и замечаний. Из предыдущих наблюдений и анализов можно заметить следующее:

\begin{enumerate}
 \item Простая модель должна быть представлена простым способом.  \textit{Контр-при\-ме\-ром} здесь может послужить \cite{suzuki82}. Там, на первый взгляд неполное множество представляет на самом деле полное множество правил, которое может совершать очень сложные операции с  указателями.
 Самые незначительные изменения могут легко привести к иному или не предсказуемому поведению. То есть, в сильно динамической системе минимальные изменения не должны менять весь характер поведения, особенно не должны менять \textit{далекие} пространственные части куч.
 \item Различные предыдущие модели памяти, точнее, их конвенции, не столь важны, как может показаться на первый взгляд.
 Показано, что ввод всё новых конвенций не расширяет, а наоборот, ограничивает выразимость дополнительными условиями. Предлагаемые новые возможности входных языков или языков спецификации на столько специфичны, что применение и метод верификации не устойчивы к малейшим модификациям и расширениям (см. главу \ref{chapter:logical}).
 С практической точки зрения гораздо важнее описать точно и адекватно ту модель памяти, которая имеется, вводя как можно меньше искусственных ограничений и описывая только основное. Для описания, включая все возможные ограничения, берётся непосредственно  \textit{граф кучи}. Это является  \textit{утилитаристским подходом}. Эпи\-стем\-о\-ло\-ги\-чес\-кое определение термина «\textit{кучи}» дается в главе \ref{chapter:expression}.
\end{enumerate}

В итоге реализации получается  формальная грамматика операторов, с одним бинарным оператором для слияния и одним оператором для разделения, которая контекст-свободная (например,  подграмматика $E$).
Предложенная ужесточённая модель предусмотрена для более эффективного провождения верификации, отделив правила теории куч от общих логических правил. Правила верификации представляются в качестве правил Хорна, и интерпретация правил совершается с помощью Пролога \cite{haberland14-1} (см. тез.\ref{thes:PrologMakesHeapSpecSimpler}). На следующем этапе ужесточённые операторы заменяют оператор $\star$ так, чтобы  абстрактные предикаты могли быть автоматически распознаны при  синтаксическом анализе (см. главу \ref{chapter:APs}).

\subsection*{Многозначимость операторов}

В качестве наиболее точного  языка спецификации в вычислении Хора изначально предлагалось использовать математику, как наиболее точный формализм. Позже математика уточняется  логикой предикатов в самом общем виде. В области  верификации и  спецификации куч можно использовать специальные логики, что довольно успешно применяется на практике (см. раздел \ref{sect:HoareCalc}). Однако, неограниченные формулы могут оказаться более удобными при автоматизации, которые увидим позже. Одно из таких «\textit{более приемлемых}» условий автоматизации может оказаться \textit{однозначное представление пространственных операторов}. Здесь необходимо пояснить возникающий парадокс: ужесточение операторов является условием расширения выразимости, что будет продемонстрировано позже. Ужесточение условий формул куч естественно приводит к ограничениям.

Проблема точности и  выразимости является фундаментальной проблемой не только в области логики, но также в науке в самых различных областях, начиная от восприятия, до записи по согласованным конвенциям до выражений. Надо искать причину возникновения многозначности. Неудивительно, если язык выражений программных операторов и  декларативный язык спецификации различны, то могут появляться разрывы в семиотике (см. главу \ref{chapter:logical}) объектов и их взаимосвязи. Элементарный вопрос о равенстве двух различных представлений о куче (см. главу \ref{chapter:expression}) может существенно усложняться, если не использовать единую, либо согласованную форму --- это причина, которая лежит в определении куч.

Итак, вопрос об  изоморфизме двух графов куч в обобщённом виде обсуждается в отображении на рисунке \ref{fig:GraphIsomorphisms}, и может быть оценён как тяжёлый. Решение изоморфизма может быть оценено, в общем, с экспоненциальной сложностью, для довольно плохих прогнозов. На практике имеются экспоненциальные алгоритмы, которые приближаются к полиному третьего ранга для небольшого объёма входных вершин.
Если куча из рисунка \ref{fig:GraphIsomorphisms} (a) содержит только сплошные линии, а при дальнейшем анализе выходит, что также имеется в частности соединение между вершинами №$0$ и №$5$, когда обе вершины кучи уже были специфицированы, то выявление изоморфизма сильно усложняется. Однако, сложность сужается ради типов и наименований, которые не меняются в итоге. Проблема сложности изоморфизма сохраняется лишь тогда, когда имеется набор указателей вместе с графом кучи, которые можно преобразовать в другой граф, который отличается от предыдущего только множеством наименований вершин. Для рисунка \ref{fig:GraphIsomorphisms} это может быть совершено с помощью пермутации (0 11)(1 8 2 10 3 9)(4 7)(5 6), исходя из графа на рисунке \ref{fig:GraphIsomorphisms} b). С практической точки зрения вопрос изоморфизма стоит только тогда, когда необходимо проверить, может ли в принципе данная структура быть преобразована в другой граф, если допустить, что наименования могут меняться.

В главе \ref{chapter:APs} рассматриваются абстрактные предикаты более детально, однако, идея абстракции предикатов лежит в свёртывании и развёртывании графа. К примеру рассмотрим рисунок \ref{ExampleConnectedHeapGraph}.

Данный граф согласно минимальной зависимости можно разбить на подграфы вдоль мостика $v_1 \mapsto v_3$, см. рисунок \ref{ExampleSplitHeapGraph}.

Граф можно будет описать отдельными предикатами $\pi_0(v_0,v_1), \pi_1(v_3,v_4)$ и предикатом $\pi_2(v_4,v_5)$, либо более абстрактно как: $\pi_0(v_0,v_1), \pi_{1,2}(v_3,v_5)$, при этом, графы представленные предикатами связаны между собой и видимые вершины снаружи появляются в качестве аргументов неявным определением $\pi_j$, см. рисунок \ref{ExampleSplitVariantsHeapGraphs}.

Развёртывание согласно определению $\pi_j$ приводит к обратному.\\

Теперь можно вывести более обобщённые вопросы в связи с адекватным представлением кучи следующим образом: 

\begin{enumerate}
 \item Как специфицировать однозначные формулы и проводить максимально детерминированную верификацию?
 \item Как решить вопросы об изоморфизме, локальности и абстракции графов простым образом (представление кучи)?
 \item Как ограничить принудительную проверку объектов в программных операторах кода (представление объектов)?
 \item Как эффективно решать равенства с кучами, если не все вершины (и грани) графа кучи полностью определены (\textit{частичная спецификация})?
 \item Как можно избавиться от повторного анализа куч (пошаговая верификация)?
\end{enumerate}

В главе \ref{chapter:expression} уже была представлена  модель кучи, которая была предложена Рейнольдсом, Бурсталлом и другими. В этой главе итоги и эффекты определения Рейнольдса рассматриваются и проводятся дискуссии о выводимом графе, а также графах получившие от модификации различных параметров. Наблюдаются свойства и выразимость, что и является главным замыслом этой главы.

Переменные, также как и  указатели, хранятся в стеке, а содержимое указателей хранится в  динамической памяти. Следующее доменное равенство согласно \cite{berdine05} действительно: $Stack=Values \cup Locals$. Утверждения меняются  программными операторами и генерируются при верификации, при проверке куч, исходя из программных операторов. Утверждения о кучах, либо \textit{верны}, либо \textit{ложны}, в зависимости от конкретной кучи. Синтаксис утверждений определяется опр.\ref{def:HeapTermExtendedDefinition}.
Из опр.\ref{def:ReynoldsHeapDefinition} следует, что  бинарный оператор «$\star$» может быть использован двумя способами: для того, чтобы выразить две кучи не пересекаясь, а также, чтобы две кучи делили между собой один общий символ. Оператор «$\star$» используется как логическая конъюнкция для связывания истины о кучах. Кроме того, он является, пространственным оператором, который выражает место нахождения и связанность. Она выражается тем, что связывающие формулы о кучах определяют, как две кучи расположены в некотором адресном пространстве касательно друг друга. Пространство подразумевает, что кучи занимают некоторые поля динамической памяти. Если определить связанность между двумя кучами как  \textit{двудольный граф}  (биграф), то имеется левая сторона указателей и правая сторона множеств содержимого. Для того, чтобы связанный граф можно было полностью описать, высчитав  \textit{максимальное паросочетание}  с целью уменьшения количества  конъюнкций для формулы куч, соответствовала бы полностью \textit{графу куч}. Такой подход на практике очень не практичен, т.к. нет необходимости и желания, со стороны разработчика, описывать \textit{максимально сжатое представление графа куч} целиком (см. раздел \ref{sect:HeapGraph}).
Но, если полученный граф кучи сильно отличается от ожидаемого графа, то неожидаемые  «\textit{лишние части}» кучи являются показателем возможных очагов ошибок в программе.
Важны и другие критерии, например, адекватное соотношение между синтаксическим представлением и графом, с целью нахождения вершин и граней, а также навигации по граням, и т.д. Компактное представление абсолютно не даёт преимуществ в данном случае, а наоборот, трудно читаемо. Необходимо сравнивать конкретные состояния ячеек в динамической памяти. По этой причине \textit{регулярные выражения}, как удобный вариант отпадают. Регулярные выражения страдают проблемой  \textit{нелокальности}: как только граф куч локально меняется в одном месте, то может последовать изменение целого выражения. Желаемое поведение, должно быть таким, что добавление одной грани в граф кучи не должно менять всю формулу, а лишь ту часть подвыражения или части формулы, которая непосредственно связана с меняющейся частью. Не причастные   (под-)кучи не должны меняться.

Сравнив с опр.\ref{def:ReynoldsHeapDefinition}, а также определения и дискуссию из главы \ref{chapter:expression}, можно заметить, что оно довольно трудное, а данные формулы, использующие это определение, могут быть многозначными, если всегда анализировать только часть от формулы. Для полного решения взаимосвязей всегда необходимо полностью анализировать все  конъюнкты. Данное определение одной кучи, является результатом, если попытаться определить отдельную кучу. Напомним, что Рейнольдс определяет только множество куч, а отдельная куча у него, так и не определена.
Увы, другие авторы (см. главу \ref{chapter:intro}) также определяют только множественные кучи, но не определяют  единственную форму кучи, если даже между строками авторы дают неполные и неформальные предпосылки на неё.
Надо обратить внимание на то, что выведенное определение кучи в опр.\ref{def:ReynoldsHeapDefinition} является явным определением одной кучи, при этом, множеству куч не противоречит определению. Отсюда --- берётся мотивация необходимости строже и явным образом решить многозначность «$\star$». Когда имеются однозначные операции, тогда можно обращаться к отдельной куче с помощью одного символа. Фактически модель Рейнольдса (и других) анализирует и подразумевает только смесь куч. Представление об отдельной куче отсутствует. Куча имеет только тогда значимое объяснение, когда оно задаётся вместе с чем-то. Ввод  строгих операторов позволяет выразить семантику и замысел одной кучи, которая как таковой субъект естественно существует независимо от других куч. Следовательно, куча имеет идентичность. Таким образом, определение можно избежать лишь через значение нескольких куч, т.е.  «\textit{структуралистские семантики}» меняются на «\textit{не (строго) структуралистические семантики}».
 Как только два оператора будут определены ($\circ$, $||$), далее свойства и равенства могут быть исследованы, вследствие чего,  термовые алгебры можно будет определить для решения прогрессивной  сходимости верификации.  Термовые алгебры разрешат установить всё новые и новые  формальные теории над  кучами, которые практически можно будет использовать на прямую, в качестве правил Хорна в рамках Пролог программы (см. тез.\ref{thes:PrologMakesHeapSpecSimpler}).

\subsection*{Ужесточение операторов}

Из-за многозначимости, оператор $\star$ может быть использован как конъюнкция (\textit{слияния куч}), а также как дизъюнкция (\textit{деление куч}).
Более того, строгие различия (трм.\ref{thes:StricterOpsExpressibility}) в реализациях  ЛРП часто $\star$ используются равномерно логической конъюнкцией.
Решение этих проблем является задачей данного раздела. Вводится формальное определение \textit{конъюнкции кучи} и свойства этой операции, затем вводится дизъюнкция. Будет показано, что общность выразимости при обеих операциях не ограничивается.

Здесь $\beta' = vertices(\beta) \subseteq V$ определяет все вершины графа куч, которые указываются из $\beta$. В случае если $\beta$ является  объектным экземпляром, тогда также рассматриваются все  поля объекта, которые являются указателями.
Так как $\alpha$ может ссылаться лишь на одну уникальную вершину графа (например,  \textit{путь доступа} к некоторому объектному экземпляру), тогда и существует не более одной совпадающей вершины в $isFreeIn$ для данной кучи $H$.
Общее предположение высшего определения заключается в том, что при  пошаговом построении графа при использовании конъюнкции, всегда существует одна подходящая вершина, иначе, две кучи нельзя связать вмес\-те.

Например, нужно связать три пары  указателей (указатель ссылается на некоторое содержимое) $a,b,c$ (см. рисунок \ref{fig:HeapGraphConjBeforeAfter}).
Сначала необходимо выразить кучу $a$, которая может быть любой, либо как $\underline{emp} \circ a$. Только когда $a$ существует, $a$ ссылается на некоторое содержимое, которое эквивалентно началу кучи $b$ и не связано (на данный момент мы подразумеваем именно такое состояние как начальное), только тогда обе кучи связываются.
Если предположить обратное, т.е. в графе кучи имеются две одинаково содержимые, то по определению это исключено, т.к. допускается только одно по-настоящему уникальное содержимое (см. позже), но с любым количеством псевдонимов. Допустим, имеются одинаковые, но не по-настоящему одинаковые содержимые, тогда путь доступа должен отличаться, иначе, создаётся противоречие. В обоих случаях исключается возможность конъюнкции некорректного связывания.
Итак, мы имеем связанную кучу $a \circ b$. Сейчас можно продолжить конъюнкцию под теми же условиями, как только что было изложено для $c$. При успехе мы получим граф кучи как на рисунке \ref{fig:HeapGraphConjBeforeAfter}.
Так как мы заинтересованы в конъюнкции любых графов, например двоичных деревьев, мы допускаем конъюнкцию в любых частях связанного графа. К примеру, $a \mapsto 5$ является допустимой конъюнкцией графа куч, однако, $a \mapsto 5 \circ b \mapsto 5$ не является. Таким образом, мы сможем выразить  псевдонимы, например, граф кучи $\xymatrix{ x \ \circ \ar[r] & \circ^z &  \ar[l] \circ \ y }$ может быть выражен как терм кучи $x \mapsto z \circ y \mapsto z$.

Кучи могут быть связаны различными способами, когда вершина является объектом. Например, можно договориться, что при присвоении объекта меняется только указатель, либо одно идентифицируемое поле. Можно также договориться о различных присвоениях. Например, когда все поля одновременно присваиваются некоторым входным вектором (например, массив или строка с определённым разделителем), либо  присваивается только одно поле, а все остальные сбрасываются и т.д. (см. главу \ref{chapter:expression}). Ради простоты, более популярный из языков программирования  Си(++), далее рассматривается только присвоение «\textit{один-на-один}» (см. раздел \ref{sect:ClassObjectAsHeap}).\\
\textbf{Замечание:}
Пусть $\varPhi_0$ некоторая куча, тогда следует $\varPhi = \varPhi_0 \circ a_0 \mapsto b_0 \Leftrightarrow \exists (a_m \mapsto b_m) \in \varPhi_0 \wedge (a_m=a_0, b_m\neq b_0 \vee a_m \neq a_0, b_m=b_0)$.

\begin{proof}
Теорема является обобщением опр.\ref{def:HeapConjunctionDefinition}. Обе, $H_1$ и $H_2$ могут быть обыкновенными кучами видом $a_1 \mapsto b_1 \circ a_2 \mapsto b_2 \circ \cdots \circ a_n \mapsto b_n$. Чтобы доказать корректность теоремы, сначала необходимо показать, что если не существует общий элемент в обоих графах, то согласно опр.\ref{def:HeapConjunctionDefinition} получаем $\underline{false}$, что совпадает с ожидаемым от  конъюнкции. В противном случае, если имеется хотя бы один общий элемент, то согласно  индукции, выбираем один элемент и тогда обе кучи связываются. Нужно отметить, что конъюнкция исходит лишь от возможности связывать графы и нас не интересует количество более одного. Все вершины кроме той, которая используется для  слияния графов, могут также гипотетично быть использованы для слияния. В таком случае, полученный граф всё равно остаётся  просто связанным. В противном случае начало, либо конец слитого графа куч находится, только в $H_1$ или только в $H_2$, а также в обоих графах куч $H_1$ и $H_2$ одновременно, но это исключается. Из-за этого противоречия следует годность теоремы.
По определению, $a\circ a$ равно \textit{\underline{false}}, что необходимо фильтровать для всех конъюнктов.
Обсуждение  $a\circ a$-решателя будет проводиться позже и может быть реализован с помощью активного множества, которое содержит статически все успешно обработанные базисные кучи.
\end{proof}

Лево-ассоциативность означает, что терм $object1.field1.field2.field3$ по умолчанию равен:
$$(((object1).field1).field2).field3$$
-- таким образом, части выражения доступа к полю могут сопоставляться символьными переменными. Это повышает модульность, а в частности, повышает гибкость выражений.

\begin{proof}
Чтобы доказать, что $G$ является  \textit{моноидом}, необходимо доказать: (i) $\Omega$ замкнут  под $\circ$, (ii) $\circ$ является ассоциативной операцией, а также (iii) $\exists \varepsilon \in \Omega. \forall m \in \Omega: m \circ \varepsilon = \varepsilon \circ m = m$.

$\omega \in \Omega$ связной граф кучи, полученный с помощью бинарного функтора «$\mapsto$» в соответствии с опр.\ref{def:HeapTermDefinition}. Согласно опр.\ref{def:HeapConjunctionDefinition} $\forall \omega \in \Omega: \omega \circ \omega = \underline{false}$ в силе. В противном случае, для $\omega_1, \omega_2 \in \Omega$ могут быть только два случая: если $\omega_1$ и $\omega_2$ имеют, хотя бы одну, объединяющую вершину, тогда согласно трм.\ref{theo:GeneralizedHeapConjunctionTheorem} соответствующий граф куч определён, иначе, результат $\underline{false}$ (обозначив, $\omega_1$ и $\omega_2$ не пересекается). Таким образом, мы показали, что $\Omega$ замкнуто над $\circ$ и что граф кучи может быть получен в результате конъюнкции. В таком случае, соединение успешно установлено.

Далее, ассоциативность должна быть доказана, а именно, что: $m_1 \circ (m_2 \circ m_3) = (m_1 \circ m_2) \circ m_3$ в силе.

Рассмотрев рисунок \ref{fig:HeapGraphConjBeforeAfter}, можно сразу констатировать верность равенства с обоих сторон, так как не имеет значения $a$ и $b$ связаны первыми, либо $a$ связывается с $b\circ c$, потому, что соединяющая вершина $b$ остаётся  инвариантом, когда порядок  конъюнкций меняется.

$G$ создает  полугруппу. Для этого остаётся доказать существование  нейтрального элемента $\varepsilon$ так, что (iii) остаётся в силе. Однако, это следует из обобщённой теоремы о кучах (трм.\ref{theo:GeneralizedHeapConjunctionTheorem}).
\end{proof}

\textbf{Замечание:} Из (i) следует: $c \not \in b \wedge c \neq a$: $a \mapsto b \ \circ \ c \mapsto d \equiv \underline{false}$, и $a\mapsto b \ \circ \ a \mapsto d \equiv \underline{false}$ в силе. Очевидно, если имеется выбор, то безразлично какие две вершины связывать первыми -- результат тот же самый, благодаря  \textit{сходимости} из-за (ii) свойства, которое доказывается позже в лем.\ref{lem:HeapConjunctionGroupProperty}.

\textbf{Замечание:} Замкнутость (i) показывает на  \textit{свойство неповторимости}  подструктурных логик (ЛРП рассматривается как такова), которое остаётся в силе и будет продемонстрировано позже.


\begin{proof}
Из-за лем.\ref{lem:HeapConjunctionMonoid} $G$ является  моноидом. Поэтому нам остаётся показать: (i) существование обратного к любому элементу из множества носителя графа кучи, так, чтобы соблюдалось:

и (ii) $\circ$ является коммутирующим оператором.

Начнём доказательство с (ii): в базисном случае «$loc_1 \mapsto var_1 \circ loc_2 \mapsto var_2 = loc_2 \mapsto var_2 \circ loc_1 \mapsto var_1$» индуктивного опр.\ref{def:HeapTermDefinition} равенство, очевидно, соблюдается. Также соблюдается индуктивный случай до тех пор, пока условия от $\circ$ соблюдаются. Условие индукции можно получить, рассмотрев рисунок \ref{fig:HeapGraphConjBeforeAfter}, если на данный момент предположить, что для любых двух связанных куч оператор $\circ$ коммутирует. Но, как только, речь идёт об абстрактных предикатах, понятие о $\circ$-связанных термах может ограничиваться границами предикатов и не могут быть разбросанными как угодно, как подцель в абстрактных предикатах. С практической точки зрения, это не страшно, а наоборот, призывает к лучшей модульности, но с этим необходимо считаться при написании спецификации.

Доказательство продолжается показом свойства (i). Чтобы доказать  обратимый элемент, это когда куча, существует, нужно задаться вопросом: а что с практической точки зрения означает  «\textit{обратимая куча}»? Когда речь идет о естественных числах, то обратимостью сложения будет вычитание на том же множестве носителя. То же самое происходит для поля  комплексных чисел, которые являются расширением  поля вещественных чисел. И хотя вещественные или комплексные числа неперечислимы, тем не менее, на практике расширение  арифметических полей приводит к значительному упрощению вычислительных задач. Хотя ответ отсутствует для множества вещественных чисел на вопрос: «\textit{а что такое} $i$»?  При решении целого ряда задач всё равно $i$ может быть полезным, зная о равенствах: $i^2=-1$ и $e^{i\pi}=-1$.  В связи с этим, было бы справедливо поставить вопрос: почему бы не предположить, что на данный момент существуют кучи, и мы постулируем урав.\ref{eqn:InverseExists}, хотя бы до тех пор, пока не будет доказано обратное?

Итак, что \textit{интуитивно} подразумевается под  \textit{обратимой кучей} или  «\textit{кучей инверс}»? Если речь идёт об естественных и вещественных числах, то имеется модель числовой оси: числа возрастают/уменьшаются относительно нулю в зависимости от направления оси. Как быть с кучами, например, с обыкновенными, формой $a\mapsto b$? Можно ли обратимой куче (инверсия) присвоить инверсию сопровождаемого предиката? --- Это будет не точно. Может ли быть присвоено инверсии кучи пустое значение --- возможно это будет не правильно, так как любая не  пустая куча будет определяться как пустая. Как же будет определяться  инверсия пустой кучи, и т.д.? Такое наивное определение тоже не целесообразно, так как не полностью определено для всех куч. Что, если стороны граней в графе куч просто поменяют стороны, например, из $a\mapsto b$ становится $b\mapsto a$? Это является лишь интересной идеей, но не практикуемо, потому, что  конъюнкция не противоречит такому определению, а надо, чтобы при конъюнкции получалась пустая куча. При таком подходе не определена левая сторона в случае объекта. Можно представить, что слияние кучи удаляет  «\textit{положительную кучу}», если её соединить вместе с «\textit{отрицательной кучей}». То есть, инверсия означает «\textit{трансцендентную}» операцию удаления кучи из памяти, при этом оператор может быть применен к любой, в том числе и сложной куче. Можно обозначить $(a\mapsto b)^{-1}$ как «$a$ не указывает на $b$» или лучше, как «$a$ инверсно указывает на $b$». Первое объяснение неудачное потому, что  «\textit{не указывает}» по ошибке может расталкиваться, как $a\mapsto c$, при этом $\exists c \in \Omega, c\ne b$ -- это подразумевает, что вершина $a$ всё-таки существует и более того, между $a$ и некоторым элементом, чьё содержимое отличается от $b$ --- всё это ложь, потому, что таких предположений не имеется.
Поэтому, гипотетичный случай «\textit{инверсно указывает}», точнее «\textit{инверсно имеется ссылка на}», требует некоторое удаление, несмотря на странное значение, которое позволило бы чистое удаление всех «\textit{лишних элементов}», которые требуют дальнейшую проверку. На первый взгляд это выглядит странно. Пока что, мы специфицировали только части кучи, которые реально существуют.
Ввод инверсии теперь меняет ситуацию.
 Инверсию также нельзя путать с тем, что не должно быть в куче и это есть отрицание предиката, а инверсия кучи --- это операция. В данный момент, мы допускаем и концентрируемся на урав.\ref{eqn:InverseExists}.

Данное уравнение означает, что «\textit{отрицаемое указывает на}» $a \not \mapsto b$ -- это прежде всего предикатное отношение между $a$ и $b$, а в обобщённой форме  отрицательная куча $H^{-1}$, для которой в силе по определению: $a \mapsto b \circ a \not \mapsto b = \underline{emp}$ а также $a \not \mapsto b \circ a \mapsto b = \underline{emp}$, а более обобщёно $H \circ H^{-1} = H^{-1} \circ H = \underline{emp}$. Это означает, что $\omega \circ \omega^{-1}$ «\textit{чисто}» удаляет кучу, т.е. при необходимости также удаляется ненужная грань и вершина из  графа кучи, если больше не остаётся входящих или выходящих граней касательно ещё существующих вершин графа. Таким объяснением данное равенство об обратимости ссылки становится понятным и сейчас не трудно проверить равенство $H \circ H^{-1} \circ H \equiv H$. 
В примере на рисунке \ref{fig:HeapGraphInversion} до применения  инверсии состояние кучи является таким: $d\mapsto a \ \circ \ a\mapsto b \ \circ \ c \mapsto b$, но когда применяется инверсия $\circ (a\mapsto b)^{-1}$, то получаем $d\mapsto a \ \circ \ a\mapsto b \ \circ \ c \mapsto b \ \circ \ (a\mapsto b)^{-1}$, что равно к $d\mapsto a \ \circ \ a\mapsto b\ \circ \ (a\mapsto b)^{-1} \ \circ \ c \mapsto b$ равно к $d\mapsto a \ \circ \ c \mapsto b$, что не полностью очевидно, т.к. оба  указателя не пересекаются. То есть, такое состояние пока очищено не полностью и нуждается в дополнительных шагах для исправления ненужных элементов. Поэтому, следует пройти ещё два шага нормализации.

 \textbf{Нормализация -- Первый шаг}: Шаг является генеричным (общим). Если между рассматриваемыми графами куч существует  мост, как единственная связь между ними, то оператор должен быть заменён на  дизъюнкцию (см. позже).
 
Теперь, когда обнаружен  \textit{мост} между $a$ и $b$, $\circ$ заменяется на $\|$ в оставшимся терме. Результат можно снова считать обоснованным.
Но, возможно, ради полноты вершины, должны быть полностью удалены из соответствующего  графа кучи. Это требуется тогда, когда речь идёт о  локациях  объектных полей. 
 
 \textbf{Нормализация -- Второй шаг}: Удаление вершины $a$ может быть произведено полностью, когда больше не имеются ссылки на $a$ в оставшимся графе куч.

Применив эти два шага  нормализации, можно избежать проблему упомянутых исключений (ср. обобщённые кучи с набл.\ref{obs:SimplificationByGeneralisation}).
\end{proof}

\textbf{Замечание:}  Обобщённые кучи не были обсуждены. Для доказательства корректности необходимо показать $H \circ H^{-1} \equiv \underline{emp}$. Доказательство нужно проводить индуктивно над $\circ$ при использовании $(g_1 \circ g_2)^{-1} \equiv g_1^{-1} \circ g_2^{-1}$, так, что существует  гомоморфизм для «$.^{-1}$» касательно $\circ$ (см. лем.\ref{lem:HeapInversionHomomorphism}).

\textbf{Объяснение:} $H \circ a\mapsto b \circ (a\mapsto b)^{-1}$ означает:

\begin{enumerate}
 \item удалить грань между $a$ и $b$
 \item удалить вершину $a$, если на неё в графе $H$ не имеется больше ссылок 
 \item также удалить вершину $b$, если на неё в графе $H$ не имеется больше ссылок
\end{enumerate}

Свойства  группы позволяют нам установить равенства над термами куч, например, для ускорения  сходимости доказательства или для преобразования в нормальную форму (см. трм.\ref{thes:SimplificationByDiffingHeaps}). Таким образом, можно будет проводить сокращение раздутых термовых представлений куч. Частичные спецификации разрешат сокращённое представление правил (см. раздел \ref{sect:PartialSpec}). Дальнейшая работа включает в себя подключение  решателей для преобразования простейших термов куч.
Необходимо рассмотреть случай, когда содержимым является указатель, например, $a\mapsto o.f \circ  (a\mapsto o.f)^{-1}$, где $o$ объект содержавший поле $f$. Очевидно, $o.f$ не удаляется из динамической памяти, иначе, целостность объектных экземпляров нарушается --- дальнейшее исследование приветствуется, но находится за пределами этой работы (см. дискуссию в разделе \ref{sect:ClassObjects}). Поэтому, по умолчанию уговаривается, что поле остается в памяти как единица целого объекта, но ссылается на \texttt{nil}. Таким образом, замеченные шаги остаются без изменения в силе.

\begin{proof}
 Если нам удастся доказать более обобщённую форму: $G=g_1\circ g_2\circ \cdots \circ g_n$, то проблема будет решена. Для этого необходимо показать $G \circ G^{-1} = \underline{emp}$. Доказать это можно индуктивно, используя $n$. В базисном случае ($n=1$) получаем $g_1\circ g_1^{-1} \equiv \underline{emp}$, что естественно в силе из-за необходимости существования обратного элемента.
 Для индуктивного случая предположим:
 $$G=\underbrace{(g_1\circ g_2 \circ \cdots \circ g_k)}_{G_k} \circ g_{k+1}$$
 тогда для $$G\circ G^{-1} = (G_k \circ g_{k+1}) \circ (G_k \circ g_{k+1})^{-1}$$ обратная часть кучи является настоящим расширением кучи. Правая часть равенства равна

$$\underbrace{G_k\circ G_k^{-1}}_{\underline{emp}}  \circ \underbrace{g_{k+1} \circ g_{k+1}^{-1}}_{\underline{emp}} \ \equiv \ \underline{emp}$$
(из-за индуктивного свойства  обратимости, соблюдая конв.\ref{conv:EmptyHeapInversion}).
\end{proof}


Поэтому, $x.b \ \| \ x.c$ не действительно для любого объекта $x$ с полями $b$ и $c$, если существует, хотя бы одна общая вершина для любого пути, начиная с $x.b$ или $x.c$.

Допустим, $\Sigma = X_0 \| X_1 \| \cdots \| X_n$ с $n>0$ и $X_j$ имеет форму $x_j \mapsto y_j$, тогда $\Sigma = \Sigma_0 \ \| \ a_0 \mapsto b_0 \Leftrightarrow \forall (a_j \mapsto b_j) \in \Sigma_0: a_j \neq a_0 \wedge b_j \neq b_0$.

\begin{rutheorem}{Моноид дизъюнкция}
$G=(\Omega, \|)$ является  моноидом и  группой, если $\Omega$ является множеством графов куч и $\|$ является  дизъюнкцией  кучи.
\end{rutheorem}
\begin{proof}
В аналогии к предыдущей лемме, прежде всего, $\forall m_1,m_2 \in \Omega: m_1 \| m_2$, только тогда, когда $m_1$ и $m_2$ не имеют  общую вершину. Это всегда так, когда нет пути от $m_1$ до $m_2$ и не существует граф окружающий оба, $m_1$ и $m_2$. Если $m_1$ и $m_2$ различны, то $m_1 \| m_2$ снова являются годной кучей в $\Omega$, потому, что $m_1$ от другой части графа кучи, чем $m_2$ и наоборот, поэтому следует замкнутость.
 Ассоциативность следует очевидно.
$\underline{emp}$ может послужить нейтральным элементом, тогда $\underline{emp} \| m_1 = m_1 \| \underline{emp} = m_1$. Пусть по определению $\underline{emp} \| \underline{emp} = \underline{emp}$ будет в силе.
Последним, уговаривается $s \| s^{-1} = s^{-1} \| s = \underline{emp}$. Это похоже на $\circ$. В общем, кучи следуют этому правилу.
\end{proof}

 Конъюнкция и  дизъюнкция частей кучи могут быть выражены правилами с помощью дуальных к нему правил следующим образом:

\begin{center}
\begin{tabular}{lcl}
 \inference[$\circ_{[B,C]}$]{U \circ B \ \| \ C}{U\circ B\circ C} & \qquad \qquad &
 \inference[$\|_{[B,C]}$]{U\circ B\circ C}{U \circ B \ \| \ C}
\end{tabular}
\end{center}

 Операции $\|$ и $\circ$ --- дуальные, и они могут быть преобразованы друг в друга, используя дуальную операцию, получив из урав.\ref{eqn:HeapDisjunctionInvariant} и урав.\ref{eqn:HeapConjunctionInvariant}, где «\textbf{;}»  оператор последовательности.\\\\

Равенства в силе, из-за  самообратимости операции и поставленного утверждения о существовании обоих вершин кучи $B$ и $C$.

\textbf{Замечание:} Так как  нейтральным элементом для операций $\circ$ и $\|$ является $\underline{emp}$, то нельзя определить  \textit{(алгебраическое) поле} (например  поле Галуа) над этими операциями из-за того, что лем.\ref{lem:HeapConjunctionMonoid}, лем.\ref{lem:HeapConjunctionGroupProperty} и лем.\ref{lem:DistributivityForConjDisj} в силе и несмотря на то, что множество носителем $\Omega$ конечное, поэтому любая  куча конечная и все операции над ними также производят конечную кучу.

\textbf{Замечание:} В аналогии к логическим конъюнктам $\wedge$ и $\vee$, нормализованная форма с помощью $||$ всегда существует, применив ранее упомянутые равенства и лем.\ref{lem:HeapInversionHomomorphism} для того, чтобы произвести  инверсию обобщённых куч.\\

Для оптимизации логического вывода с помощью уменьшения объёма графа, необходимо попытаться вывести операцию $\|$ как можно дальше наружи в термах кучи, применив  дистрибутивность, либо переставляя несложные кучи так, чтобы левые части ссылок были отсортированы по имени локации по лексикографическому порядку. Идея заключается в сокращении повторных поисков, например, для  \textit{пошаговой верификации}, так, что только меняющиеся части кучи принадлежат повторному вычислению.

Частично упорядоченное множество  (ЧУМ) можно установить над графами для подмножества графов, операторам слияния ЧУМ   служит $\circ$. Минимальный элемент является  пустой кучей, а максимальный является  полным графом кучи. Например, закон  \textit{абсорбции} не действительный для  куч. ЧУМ не может быть полной решёткой.
На рисунке \ref{fig:HeapGraphPoset} \textit{ЧУМ} $G$ содержит $\{G_1,G_2,G_1',G_2',G_1''\}$, которые соблюдают следующие неравенства по возрастающему порядку $G_1 \sqsubseteq G_1' \sqsubseteq G_1''$, $G_2 \sqsubseteq G_1'$ и $G_2 \sqsubseteq G_2' \sqsubseteq G_1''$. Кучи $\bot$ и $\top$ всегда присутствуют и при необходимости могут быть добавлены, поэтому каждый раз по умолчанию могут быть не указаны. $G_1''$ является максимальным элементом, а $inf(G)=\underline{emp}$ минимальным элементом, где $\sqsubseteq$ определяет подмножественное  соотношение графа. Нетрудно убедиться в том, что две не соединённые кучи при конъюнкции (т.е. равны $\underline{emp}$ из-за опр.\ref{def:HeapConjunctionDefinition}) в соответствующей  диаграмме Хассе всегда остаются несвязанными.
Слияние всегда верное, потому, что: (первое противоречие) $a\mapsto b \ \circ \ a\mapsto d$ не может следовать первой  конъюнкции, либо другой  сложной куче. Из-за (второго противоречия) $a\mapsto b \| b\mapsto d$ противоречит определению $\|$. Однако, необходимо учесть, что порядок ЧУМ может пострадать после ввода  инверсии кучи (ср. ранее), если использовать инверсию без соблюдения ограничений. На данный момент нас вполне устраивает использование  инверсии для «\textit{более удобного}» сравнения куч, в частности с проверяемой спецификацией, поэтому свойство  локальности из главы \ref{chapter:expression} остаётся.

\subsection*{Классовый экземпляр как куча}

Рассмотрим подробнее, как объектные экземпляры классов моделируются как граф куча согласно рисунку \ref{fig:ObjectAttributesScheme}.

\def\gA#1{\save
[].[]!C="gA#1"*[F]\frm{}\restore}

\def\gB#1{\save
[].[dd]!C="gB#1"*[F]\frm{}\restore}

\def\gC#1{\save
[].[ddr]!C="gC#1"*[F]\frm{}\restore}

На рисунке \ref{fig:ObjectAttributesScheme} а) отдельное присвоение к существующему графу представлено (без подробностей), объект в прямоугольной рамке является некоторым  объектным экземпляром. На рисунке \ref{fig:ObjectAttributesScheme} b) только второе  поле объектного экземпляра присваивается некоторое значение  сходимого типа данных, все остальные поля присвоены \texttt{nil}. Рисунки \ref{fig:ObjectAttributesScheme} c) и \ref{fig:ObjectAttributesScheme} d) имеют тоже самое значение как и c), только присвоение производится для всех полей, например по  \textit{лексикографическому порядку}. Далее вводим дополнительные ограничения, но без ограничения общности, которые будут полезными для, позже вводимых, операторов:

Чтобы не противоречить дальнейшим определениям с одной стороны, необходимо иметь возможность быстро проверять соотношения между двумя вершинами. Данная не пустая куча состоит из одной или более того  простых куч. Для проверки, связаны ли две вершины графа кучи, необходимо проверить в худшем случае все левые стороны простых  конъюнктов, т.е. необходим перебор всех граней графа куч.

С другой стороны, чтобы получить для одной вершины все соседние вершины, нам выгоднее иметь модель, которую можно было бы итерировать по вершинам, а не по граням. Для быстрого определения соотношения (например соседства), предлагается например, \texttt{reach\-es(x,y)}, \texttt{reach\-es(x,Y)}, \texttt{reach\-es(X,y)}, \texttt{reach\-es(X,Y)}, где \texttt{x} является определённой вершиной, а \texttt{X} является (под-)множеством вершин (аналогичное для \texttt{y} и \texttt{Y}). Модель, основанная на вершинах, отражена на рисунке \ref{fig:InvertedVertexGraph} маленькими закрашенными квадратиками, соответственно, позволит получить более эффективную итерацию. Обе модели a) и b), основанные на вершинах и гранях, являются  дуальными и они могут быть преобразованы друг в друга: вершины по серединам граней кодируют начало и конец (см. d)), как одна вершина с объединяющим именем и связываются с соседними вершинами (см. рисунок \ref{fig:InvertedVertexGraph} a) и b)). Очевидно, преобразование из одной модели в другую и обратно, согласно схеме преобразования на рисунке \ref{fig:InvertedVertexGraph} c)  возможно без потерь, т.е. отображение между обеими структурами является биективным независимо от направления граней (см. c)).\\

Поэтому, нам разрешается преобразовать графовое представление с конъюнкциями или дизъюнкциями, так нам удобнее для решения.

\subsection*{Частичная спецификация куч}

Как было сказано ранее, в опр.\ref{def:HeapTermDefinition} объектные экземпляры можно рассматривать, как хранители полей данных $obj.f_1 \mapsto .. \circ obj.f_2 \mapsto .. \circ obj.f_n \mapsto .. $. Все поля создают некоторую кучу по отрывкам, но в отличие от абстракции, это преобразование можно охарактеризовать как  \textit{конкретизацию}.
Поля классовых объектов имеют ограничения в связи с наименованием и типами, которые должны совпадать с соответственным классом. Все поля одного объекта существуют независимо друг от друга, поэтому они являются простыми кучами, которые связаны между собой с помощью $\circ$-конъюнкции как это уже было упомянуто. Поля объектов не могут быть утилизированы по отдельности (см. набл.\ref{obs:GraphIRHeap} и последующую дискуссию).
Локальные переменные поля объектов также должны иметь возможность специфицировать части, т.е. подклассовые объекты. В аналогии к необъектным указателям можно определить константные функции над кучами, например,  $\underline{true}(obj)$ или $\underline{false}(obj)$ из опр.\ref{def:HeapTermExtendedDefinition} и опр.\ref{def:HeapTermDefinition}. В отличие от необъектных указателей, \textit{\underline{true}}(obj) вводит объектный терм в качестве дополнительного параметра.
Таким образом, абстрактные предикаты могут быть использованы для дальнейшей модуляризации спецификации, которая синтаксически и семантически не значительно отличается от необъектного случая.\\

В соответствии с предложенным методом распознавания решателя $a \circ a$ на основе  вталкивания и  выталкивания в/из стека согласно данному уровню абстрактного предиката (см. главу \ref{chapter:APs}), теперь входящие и выходящие термы абстрактных предикатов, возможно, отслеживать и пропускать при повторных или ненужных вычислениях. Сравнение может производиться стеком для одинакового уровня, либо отметками между любыми уровнями предикатов с помощью транслирующих правил.

Подробности следуют в главе \ref{chapter:APs}. Нетрудно убедиться в верности $\wedge,\vee,\neg$-конъ\-юнк\-ций для \textit{\underline{true}}(obj) и \textit{\underline{false}}(obj). Эти константные функции могут принимать любое число объектных полей (см. трм.\ref{thes:SimplificationByDiffingHeaps}) и их булевое обозначение не зависит от конкретных полей. Однако, сложные кучи в комбинации с ними, могут иметь неожиданное поведение, которое исключаемо при анализе контекста. Частичные спецификации, которые используют константные функции, частично описывают поля, но могут покрывать гораздо больше. Такми образом, спецификации частичные, но охватывают больше куч (см. трм.\ref{thes:IncompletenessForCompleteness}). Более того, высчитывание позволяет сравнивать имеющие кучи и выявить те кучи, которых не хватает. Это позволяет полностью определить все входные кучи в правилах.

\begin{ruexample}{Неполный предикат №.1}
Дан объект $a$, который имеет три поля $f_1$, $g_1$ и $g_2$. \comp{.}{} означает  семантическую (неявную) функцию над термами куч типом $ET \rightarrow ET \rightarrow \mathbb{B}$, где $ET$ из опр.\ref{def:HeapTermExtendedDefinition}, $\mathbb{B}$ булевое множество, где первая расширенная куча является ожидаемой, а вторая является полученной кучей, то тогда:
 \begin{eqnarray*}
   & \comp{$a.f_1 \mapsto x\circ \underline{true}(a)$}{} = & \comp{$a.f_1 \circ a.g_1 \circ a.g_2$}{}\\   
   = & \comp{$\underline{true}(a) \circ a.f_1 \mapsto x$}{} \neq & \comp{$p(a) \circ a.f_1 \mapsto x$}{}
 \end{eqnarray*} 
где, $p$ абстрактный предикат означает $\underline{true}(a)$ (см. опр.\ref{def:IncompletePredicates}).

Однако, \comp{$a.f_1 \mapsto x \circ p(a)$}{} означало бы равенство потому, что благодаря распознаванию, используя актуальное  стековое окно, находит все оставшиеся поля, если даже ниже спрятано несколько уровней абстрактных вызовов. \comp{.}{} является  гомоморфизмом касательно обсуждённых константных функций и конъюнкции $\circ$.
\end{ruexample}

\begin{ruexample}{Неполный предикат №.2}
 \comp{$\underline{true}(a) \circ \underline{true}(a)$}{} = \comp{$a.f_1 \circ a.g_1 \circ a.g_2$}{} $\circ$ \comp{$\underline{true}(a)$}{} = \comp{$a.f_1 \circ a.g_1 \circ a.g_2$}{} $\circ \ \underline{emp}(a)$.
\end{ruexample}

\subsection*{Обсуждения}

Одна пространственная операция была заменена на две строгие. Изначальные основные свойства ЛРП не поменялись, кроме неограниченной инверсии кучи, которая тщательно обсуждалась ранее. Если в трм.\ref{theo:ReynoldsHeapProperties} заменить «$\star$» на «$\circ$», а в случае дизъюнкции «$\star$» на «$||$», то верность аксиом следует непосредственно, ради исключения аксиомы №5 для конъюнкции как раз из-за ужесточения «$\star$».

Имея форму, которая позволяет нормализовать термы кучи, специфицируемые кучи можно теперь анализировать линейно (см. с разделом \ref{sect:PartialSpec}).
Из-за требования о  неповторимости простых куч, комплексные кучи могут быть эффективно исключены с помощью   «\textit{мемоизатора}» (с англ. «\textit{memoiser}»). На практике, необходимо исключать и обнаруживать повторяющиеся локации (возможно с различными, но одинаковыми значениями), как это было упомянуто в разделе \ref{sect:ConjunctionAndDisjunction}. Иначе, кучи и граф кучи становятся противоречивыми и свойства ЛРП нарушаются. Это приведёт к полной неповторимости теорем. Принцип должен соблюдаться: одна простая куча специфицируется один раз.
Простая куча, специфицируемая в одном месте не должна специфицироваться заново, аналогично принципу локальности: одно изменение локально производится только в одном месте.
Если принципы не верны, то и применение правил не верное, а следовательно, из ложного предусловия могут выводиться любые результаты, поэтому повторные кучи должны исключаться.

Однако, эту проблему в общем удастся решить только динамически, но без запуска программы. «\textit{Динамически}» при статическом анализе означает, что во время верификации,  абстрактные предикаты естественно могут быть произвольными. Абстрактные предикаты могут интерпретироваться процедурально (см. главу \ref{chapter:APs}). Следовательно, используется  \textit{стековая архитектура} для их анализа (ср. главу \ref{chapter:logical}). Она очень близка к  операционной семантике предложенной Уорреном \cite{warren83}, где стек имеет ссылки на предыдущие стековые поля. Эта семантика отличается от семантик классических языков программирования, которая ради исключения, при использовании  параметров по вызову, более близка к реализации логического  языка Пролог. Также представление и переход предикатов более похожи на правила Пролога. Незначительно модифицируемая  семантика машины Уоррена может быть применена к интерпретации абстрактных предикатов, с помощью строгих операций  конъюнкции и  дизъюнкции, с помощью которой удастся распознавать, например, $\forall a \in \Omega.a \circ a$ для соблюдения неповторимости.

 Мемоизатор может  кэшировать только те вызовы  абстрактных предикатов, которые не меняют глобальные состояния. Мемоизатор может запоминать, что имеются (i)  входные или (ii)  выходные или (iii)  входные-выходные переменные термы. Если далее, неограниченные символы внутри одного абстрактного предиката поздним подвызовом ограничиваются, то это необходимо учесть при порядке вычисления подвызовов (определение должно следовать  порядку слева направо, чтобы разрешить подобные конфликты, см. главу \ref{chapter:APs}).

Пролог \cite{sterling94}, как общий логический язык программирования может быть использован (см. тез.\ref{thes:PrologMakesHeapSpecSimpler}), как платформа, основанная на  рекурсивно-процедуральных правилах и термах для  логического вывода, используя теоремы о расширенных термов куч и абстрактных предикатов.
В Прологе  обобщённая схема рекурсивной индукции Пиано может быть всегда определена как «\texttt{p(0).}» для базисных случаев, а «\texttt{p(n):-n1 is n-1,p(n1).}» для  индуктивных случаев, используя  вспомогательный предикат \texttt{is} для высчитывания \texttt{n1}. Не трудно убедиться в том, что в Прологе можно выразить любую обобщённую  $\mu$-рекурсивную схему предикатов. В общем, предикаты могут быть не определены (например, когда процедурный вызов не приостанавливается).
Преимущество  правил Хорна  Пролога, в отличие от классических  разовых функций (как они были использованы, например в \cite{parkinson05-2}), это способность рассматривать термы предикатов как (i),(ii) или даже (iii), объединившие таким образом экспоненциальное количество различных классических разовых функций. Не каждый аспект может быть определён, поэтому вопрос об  обратимости функции требует дополнительного внимания.  Арифметические вычисления, а также  \textit{зелёные} и  \textit{красные отсечения} \cite{sterling94} поисковых пространств, являются одной возможной причиной, почему предикат может быть не обратим. Арифметические выражения в Прологе вычисляются с помощью  оператора \texttt{is}. Обратимость арифметических выражений можно частично восстановить, если заменить  натуральные числа Чёрческими числами (см. \cite{haberland08-2}), а фундаментальные операции обосновать только, если использовать константу,  \textit{одинарный функтор} и  унификацию, как универсальную  \textit{монаду} базовой операции. Говоря обобщённо, необходимо гарантировать сильную корреляцию между входным и выходным результатом, которая должна стать изоморфизмом отображением для полной обратимости. Более того, прологовские отсечения могут быть заменены без потери общности и выразимости, так как отсечения являются лишь  синтаксическим сахаром (см. главу \ref{chapter:expression}).\\\\
Язык  «\textit{Object Constraint Language} (OCL)» \cite{oclspec} является языком спецификации для объектных экземпляров. «\textit{OCL}» является расширением языка «\textit{UML}», и существует, как графическая нотация для утверждений, либо как формулы. Формулы «\textit{OCL}» выражают часть предикатной логики. Имеется: --- поддержка квантификации  переменных, поддержка массивов и абстрактных типов данных/классов и  полиморфизм с помощью подклассов. «\textit{OCL}» разрешает описывать цикл жизни объектов и методов. Однако, «\textit{OCL}» не знает об  указателях или  псевдонимах. Утверждения об указателях отсутствует, поэтому на этапе моделирования и   быстрой прототипизации имеются ограничения, как было описано в главе \ref{chapter:intro} (см. \cite{haberland14-1}, закл.\ref{cor:StrengtheningOCL}).

Предлагается, чтобы указатели записывались в качестве множества наименований объектного экземпляра. Таким образом,  псевдонимы записаны в одном месте, альтернативно вводится множество указателей, которое существует независимо от существующих экземпляров объектов. Состояние объектов совпадает с состоянием вычисления стека/кучи. Сложные объекты имеют поля, которые просты или ссылаются на любые другие объектные экземпляры. Абстрактный предикат предлагается логическим предикатом (см. главу \ref{chapter:logical}), возможно, с символами. Нет обязательства, что в одном предикате описывается только один объект. Но рекомендуется определять одним абстрактным предикатом один объект целиком зависевшие объекты рекомендуется описывать отдельными предикатами. Ситуация, когда одним предикатом описываются два или более того объектных экземпляров не исключена, но не приветствуется ради модульности. Пространственное соотношение между объектами описывается операторами $\circ$ и $||$. Соотношения не требуют обязательного обозначения пространственности, если из контекста ясно, что речь идёт расширении представления «\textit{OCL}».

Будущая работа может заключаться в расширении с  абстрактными предикатами \cite{haberland16-1}. Начатое предложение следует расследовать дальше, особенно, учитывая постановления из конв.\ref{conv:RestrictedObjects} и опр.\ref{def:HeapTermExtendedDefinition}.
Ожидается повышенная выразимость, модульность и абстракция.
Хороший обзор нынешних попыток расширить существующие вычисления с указателями, можно найти в главе \ref{chapter:intro}, а также в \cite{parkinson05-2}.

\section*{Автоматическая верификация с предикатами}

Рассмотрим простой пример из области вычислительной геометрии  \cite{deberg08}. \textit{Двусвязный список}  содержавший грань  данного многогранника,  послужит примером, в котором каждая грань  связывает две вершины некоторого 2 или 3-мерного пространства. Сеть многогранника   после триангуляции  распадается на  треугольники. Каждая вершина  дважды связана с двумя соседними вершинами. Каждая грань начинается и заканчивается определёнными гранями. Для определения нормального вектора, достаточно иметь  треугольник.
В простой куче  локализатор  указывает на содержимое.  Для данного примера это могут быть вершины или грани. Согласно упомянутому двусвязному списку, каждый элемент имеет ссылку вперёд к следующей грани и назад к предыдущей грани.
Простая куча содержит связанные между собой грани. Не связанные между собой грани по определению не указываются. Уговаривается, что принудительное отсутствие кучи (см. главу \ref{chapter:stricter}) исключается, либо отдельно не рассматривается в связи с Абельской группой из-за ранее упомянутых ограничений.
Представим себе, что каждый раз как ссылаться на указатели, копии  всех трёх вершин заносятся в память. Нетрудно убедиться в том, что такой подход малоэффективен. Если работать с указателями,  то эта проблема отпадает, особенно когда вводится дополнительный уровень  абстракции. Эти абстрактные предикаты позволяют более интуитивно выражать  сложные кучи. Например, прологовская подцель \texttt{face(p1,p2,p3)} может означать, что три вершины \texttt{p1,p2,p3} связаны между собой в одном треугольнике вместо того, чтобы каждый раз полностью специфицировать $\exists v1.v2.v3$, при \texttt{p1.data $\mapsto$ $v1$} $\star$ \texttt{p2.data $\mapsto$ $v2$} $\star$ \texttt{p3.data $\mapsto$ $v3$} $\star$ \texttt{p1.next $\mapsto$ p2} $\star$ \texttt{p2.next $\mapsto$ p3} $\star$ \texttt{p3.next $\mapsto$ p1} $\star$ \texttt{p1.prev $\mapsto$ p3} $\star$ \texttt{p3.prev $\mapsto$ p2} $\star$ \texttt{p2.prev $\mapsto$ p1}.

Прежде всего, абстракция означает обобщённость, вводя дополнительные параметры. Под абстрактным предикатом подразумевается правило Хорна с произвольным количеством параметров. Хотя некоторые авторы стремятся использовать «\textit{Абстрактный Предикат}»  как новый термин \cite{parkinson05}, нужно заметить, что абстракция  не нуждается в дополнительном определении (см. главу \ref{chapter:expression}), как новой концепции. Аналогично распространяется и на предикаты. Ясны концепции абстракции и предикатов, поэтому считается не целесообразно заново обозначать термины, тем более, имеются случаи, когда к предикатам добавляются  параметры, например, в логике предикатов  первого порядка. Это и является причиной, почему предикаты по-прежнему рассматриваются как классические, а «\textit{абстрактные}» как прилагательные к предикату. «\textit{Абстрактный Предикат}» не отличается от термина предиката, поэтому отдельно взятое семантическое определение просто не существует. Абстрактный предикат имеет любое (включая ноль) количество термовых параметров и может содержать любую последовательность (включая ноль) подцелей ранее декларированных предикатов.
В данном примере \texttt{face(p1,p2,p3)} равен той самой развёрнутой  $\star$-формуле подцелей, которая была указана ранее.
В зависимости от того, в каком состоянии находится вычисление подцели \texttt{face}, либо свёртывание, либо развёртывание, может быть целесообразным. Предикат \texttt{face} может зависеть от других предикатов.
Однако, на данный момент просто не достаточно известно о том, когда необходимо свёртывать или развёртывать определение абстрактного предиката. Если развёртывание проваливается, то это может быть потому, что это в принципе не возможно, а также потому, что развёртывание и свёртывание были совершены в не  правильный момент, либо в неправильной последовательности. 
Далее, будет рассматриваться новый подход, который позволит решить проблему автоматизации с кучами.

Уоррен \cite{warren83} использует термин «\textit{программирования через доказательство}»  для того, чтобы выразить, что Пролог может быть использован как язык программирования, который используется для нахождения решения формулированного запроса  правил Хорна. Изоморфизм  \textit{Карри-Хауарда} \cite{mitchell96} гласит о взаимосвязи между доказательством и программированием. Касательно куч, философский лозунг данного подхода можно охарактеризовать как «\textit{доказательство является проблемой синтаксиса}», это означает, что с помощью синтаксического перебора  можно доказать корректность специфицируемой кучи и представительство кучи близко к моделям по программированию, т.е. к  прологовским правилам. Главным наблюдением этой главы является: абстрактные предикаты описывают на самом деле  формальный язык (см. тез.\ref{thes:PrologMakesHeapSpecSimpler}). Позже мы убедимся в том, что формальный язык является логическим языком программирования. Следовательно, проблемы свойств куч, которые являются семантическими, можно решить синтаксическим распознаванием.

\subsection*{Сжатие и развёртывание}

Представленный в этой главе подход сильно отличается от существующих традиционных.

По Рейнольдсу  \cite{reynolds02,parkinson05-2} пространственный оператор $\star$  связывает две раздельные кучи, при котором основы \textit{подструктурной логики}  остаются в силе, а  \textit{правило сужения} не в силе. Как было изложено в главе \ref{chapter:expression}, классический оператор $\star$ многозначен, поэтому, далее используется только строго соединяющий оператор «$\circ$» в качестве пространственной конъюнкции.

Абстрактные предикаты  задаются пользователем, как это было предложено в «\textit{Verifast}»  \cite{jacobs11}. Вывод  может осуществляться быстрее с помощью «\textit{тактик}», которые могут определяться индуктивно в системе «\textit{Coq}»  \cite{bertot04} и выводиться в  \textit{полуручном режиме}. Системы основаны на принципе «\textit{сжать/распаковать}» (fold/unfold)  \cite{hutton98}, могут при полной подсказке полностью независимо от начала до конца вывод осуществить логически. Подсказки доказательства  показывают на тот  \textit{редекс}, который необходимо предпринять для выхода из неопределённого состояния и для продвижения верификации в целом.

Пролог  здесь используется как  язык утверждения. \cite{kowalski74} наглядно демонстрирует пригодность к доказательству формул в предикатной логике с помощью  правил Хорна. \cite{kowalski74} в прошлом предлагал использовать Пролог в качестве  языка программирования, что, увы, не всегда возможно из-за вычислимости (см. главу \ref{chapter:logical}). \cite{warren83} содержит определение реализации логического вывода, опираясь на  операционную семантику. Каллмайер \cite{kallmeyer10} демонстрирует использование Пролога для распознавания  \textit{морфем} и  \textit{мутаций грамматики} в  естественных языках. В частности,   «\textit{сопряжённые деревья}» предлагаются как механизм обработки мутаций в естественных языках на основе явных определённых $\lambda$-термов, которые исключаются в  формальных языках, в частности, языках программирования во избежание  многозначности. Примеры показывают многозначную перегруженность и трудность анализа из-за экспоненциального роста необходимых проверок посторонних условий. \cite{matthews98} на примерах поэтапно излагает, как Пролог может способствовать к решению проблем многозначности синтаксического перебора. Мэттьюс широко использует  рекурсивные распознаватели, которые работают с деревьями  и в Прологе реализованы эффективно и просто. Реализации являются стековыми автоматами, распознающие  LL(k)-грамматики с модификациями: конечные состояния  выделяются явным образом, а рекурсивные прологовские правила имитируются стеком. Мэттьюс использует «\textit{списки разниц}»   для реализации распознавателя  регулярных языков, который на самом деле основан на «\textit{автомате частичных производимых регулярных выражений}»  \cite{brzozowski64}. Бжозовский предлагает «\textit{прологовские формальные грамматики DCG}»    и встроенные команды Пролога  для изменения баз знаний  во время запуска для увеличения гибкости, которую он считает необходимо расширять. Однако, подход Бжозовского имеет недостаток в том, что выразимость ограничена регулярностью (см. главу \ref{chapter:expression}).
Работу Перейры \cite{pereira12} можно оценивать как классическую монографию по Прологу и обработку естественных языков. Однако, подход Перейры имеет фундаментальные ограничения, которые всё-таки легко можно устранить среди множества образцовых и отличающихся примеров. Невозможность разрешить лево-рекурсию  правил Хорна, хотя решение в принципе существует, т.к.  LL(1)-распознаватель не в состоянии опознать всех предшественников  для решения принадлежности правила. Далее, Перейра вводит \textit{$\lambda$}-вычисления над деревьями для распознавания естественных языков. Таким образом, морфемы  и лексемы  связываются и получают зависимое значение от введённых параметров.
Далее, введём первое прототипное определение кучи, учитывая опр.\ref{def:FirstOrderPredicateLogicFormula} и главу \ref{chapter:expression}.

\subsection*{Предикатное расширение}

Утверждение \textbf{emp} означает  пустую кучу, которая верна, если данная куча пуста. Пустая куча является нейтральным элементом относительно пространственным связям между кучами (см. главы \ref{chapter:expression},\ref{chapter:stricter}). «$\star$» разделяет две кучи на две  независимые кучи. 
В этой главе мы не будем различать ужесточение между «$\circ$» и «$\star$» (см. главу \ref{chapter:stricter}) --- ради общности мы исходим из конъюнкции куч.
Утверждение \textbf{true}  означает, любая куча (в том числе пустая) разрешается, а \textbf{false} означает, любая куча не разрешается. Эти определения близки к определению по Рейнольдсу \cite{reynolds02}. Основой всех определений по Рейнольдсу является  \textit{обыкновенная куча}: $x \mapsto E$, где $x$ некоторый  \textit{локализатор} (например доступ к объектному полю $o1.field1$), а $E$ является некоторым допустимым выражением, которое присваивается ячейке памяти обозначающейся локализатором $x$. Проверка совместимости типов проводится на более раннем этапе \cite{haberland14-1}. На данный момент безразлично, явное значение, либо ссылка на ячейку в памяти содержится в качестве содержимого по выражению (см. \cite{burstall72}).
Рассмотрим две любые сложные кучи  на  Прологе в рисунке \ref{ExampleComplexHeaps1}.

Здесь \texttt{p2} означает некоторый предикат с двумя  символами \texttt{X} и \texttt{Y}, которые представляют собой некоторые значения, которые указываются локализаторами \texttt{loc2} и \texttt{loc3}. В отличие от этого, \texttt{p1} определяется через предикат \texttt{p2}.
Как только мы вызываем \texttt{p2} с двумя синтаксически корректными аргументами, так мы имеем одну форму $a(\vec{\alpha})$.
Вспомним, Пролог не может найти решения для синтаксически корректных, но семантически некорректных термов потому, что «\textit{семантически некорректно}» означает, нахождение не выводимых термов для данных прологовских правил.

Интерпретация  формулы $H$ для данной кучи означает отображение от двух куч, т.е. данной и сравниваемой кучи, в  булевую ко-область. Это означает, что если две данные кучи совпадают, то интерпретация успешна, в противном случае, наоборот. Для данных интерпретаций рассматривается только дедуктивный вывод (см. разделы \ref{Intro:HoareTriple} и \ref{Intro:LogicalReasoning}). Поиск вывода завершается успехом тогда, когда запрос успешен, во всех остальных случаях завершается провалом. Несомненно, это точно то, что ожидается получить от предлагаемого поведения (см. тез.\ref{thes:ProvingEqualsParsing}).\\
Ради простоты согласуем, что формулы куч должны быть  нормализованы в форму
$$a_0 \star a_1 \star \cdots \star a_n   \equiv \prod^n_{\forall j} a_j,  n \ge 0$$
Далее, $\wedge$ и $\vee$-связанные графы куч задаются в  Прологе в виде запросов формой\\
 $s_j, s_{j+1}, \cdots , s_{j+k}$. Альтернативно можно  дизъюнкцию прологовских целей разбить далее на некоторые  альтернативные правила, головы чьи различаются с помощью оператора «\textbf{;}». Отрицание утверждения  рассматривается как  отрицание предиката. В общем, отрицание последовательности не означает отрицание предиката потому, что последовательность для всех может быть просто не определена, кроме некоторого домена, это надо учесть. Двойное отрицание в общем случае не действительно при вызовах подцелей потому, что предикат может быть не тотален. Экзистенциальные переменные могут быть введены в любом месте правила Пролога, однако ожидается, что все введённые переменные когда-то присваиваются и используются.

Константные функции,  как например, \textbf{true} и \textbf{false}, являются «\textit{синтаксическим сахаром}», т.к. они могут быть заменены на любые другие кучи, которые квалифицируются в качестве вставных куч. Использование константных функций упрощает в спецификациях все возможные кучи. \textbf{true} может быть сопоставлена  булевой истине, \textbf{false} наоборот   противоречием.

\begin{rucorollary}{Корректность кучи}
 Любая синтаксически корректная формула кучи описывает соответствующий  граф кучи. Более того, любой граф кучи может быть представлен соответствующей  формулой кучи. В общности обе стороны действительны ради исключения бесконечных куч.
\end{rucorollary}

Опр.\ref{def:NonformalHeapGraph} является уточнением впервые введённого графа кучи из набл.\ref{obs:GraphIRHeap}.

\subsection*{Предикаты как логические правила}

Абстрактные предикаты  позволяют абстрагировать от обыкновенных куч к более сложным кучам, но более интуитивным человеку, используя выражения и формулы.
Например, \cite{parkinson05-2} вводит абстрактные предикаты, которые аннотируют данную входную программу и переводятся вместе с программными операторами до уровня ассемблера.
Представленный подход автоматизации является попыткой преодолеть разрыв между спецификацией и логическим выводом.
Пролог используется в этой главе как язык программирования, в котором утверждения и абстрактные предикаты о кучах специфицируются.
Однако, между программой и языком утверждений существует семантический разрыв: одновременно имеются два параллельных формализма и реализации. Они приводят к всё более различающимся нотациям и представлениям.
Естественно, язык программирования может и должен различаться, когда речь идёт о возможном  императивном языке программирования. В согласии с  вычислением Хора, логические формулы описываются \textit{логически}.
Увы, иногда это нарушается (см. главу \ref{chapter:intro}), а также имеются ограничения вычислимости (см. главу \ref{chapter:logical}).
Так, например, переменные (объекты) используются как  локальные переменные вместо термов, вследствие чего, имеется целый ряд ограничений. Таким образом,  язык спецификации, точнее ее частицы, «\textit{деградируются}» в последовательность команд и больше не имеется ничего общего с изначальным замыслом вычисления Хора. Частицы не имели бы ничего общего с «\textit{декларативно-логической парадигмой}» (см. набл.\ref{obs:EqualityOfProofElements} и набл.\ref{obs:StackBasedCalls}): необходимо описать состояние вычисления, используя символы и предикаты. Символ и его  диапазон годности всё-таки отличается от локальных переменных. Если символы «\textit{вдруг переписываются}», то вряд ли это можно считать символом. 
По определению символам не приписываются новые значения заново, а переменным приписываются.
Разница    может казаться не слишком большой, но для определения и описания это имеет очень серьезные последствия. Вычисления часто (но не всегда) описываются символами, без переменных. При описании центральной концепцией являются  термы,  предикаты и рекурсии, а не цикл или условный переход, это нужно учесть. В обеих моделях можно установить минимальный набор Тьюринг-вычисляемых программ. Для решения верификации необходимо сравнивать состояния вычислений для того, чтобы определить, вычисление было совершенно правильно или нет. Проблема сравнения не исключает возможность и необходимость вычислять  арифметические или  алгебраические выражения, но при этом, подход и замысел вычисления Хора декларативен.

Подход представленный в \cite{berdine05-2} вводит  символы в кучах, но с ограничениями. Например, отсутствует возможность описывать целые кучи, как например $X \star Y$. В отличие от этого, мы допускаем символы без ограничений полностью как они допускаются в Прологе (см. главу \ref{chapter:logical}). В различных вариантах мы вынуждены будем выбирать только между теми правилом, которое имеет более длинное совпадающее предусловие -- как это было реализовано, например, в \cite{berdine05-2}. Мы принципиально следуем поиску Пролога, что в обобщённом случае может быть слишком много, но   «\textit{метод ветвей и границ}» \cite{sterling94}, \cite{bratko01} позволяет нам произвольно сужать поисковое пространство. Таким образом, мы не ограничиваем себя в некоторой методологии, либо \textit{эвристике}, либо  \textit{тактике}. Любая методология может меняться частично, либо полностью --- мы в состоянии всё это учитывать.

Пролог используется для того, чтобы определить, какие существуют подлежащие в кучах и какие связи между ними, для этого мы используем предикаты. Поэтому, можно считать, например \cite{warren83} более близким подходом, чем функциональный или  императивный. Когда речь идёт о проверке состояния куч, в общем легче описать, опираясь на  факты и  правила, чем на последовательность инструкций. С помощью абстрактных предикатов описываются кучи. За компактность, представление фактов и правил, ответственный --- разработчик программы. Следующие формализмы помогут преобразовать абстрактные предикаты в  формальную грамматику для дальнейшего представления. Все перечисленные преимущества характерны для Пролога и используются далее в предложенном методе.

Неслучайно опр.\ref{def:AbstractPredicateRule} близко к опр.\ref{def:PrologRule}. Первое определение используется для описания и вызова предикатов куч.
Уговаривается по умолчанию, что $a$ действительно всегда тогда, когда все  \textit{подцели} $q_{k,j}$ в $a$ в силе для $0 \leq j \leq n$. 
Синтаксис предикатного правила определяется расширенной формой Бэккуса-Наура как показано на рисунке \ref{fig:PrologRulesEBNF}.
\texttt{<number>} определяет любое прологовское число, а \texttt{<atom> '(' <arguments> ')'} определяет некоторый  функтор с простым именем \texttt{<atom>} и любым количеством аргументов. \texttt{<var>} означает некоторый переменный символ, который начинается с большой буквы, например \texttt{X}. Рисунок \ref{fig:mapFunctionalExample} в главах \ref{chapter:expression} и \ref{chapter:logical} представляет пример полезного предиката функционала \texttt{map} на Прологе.

Допустим, имеется некоторый предикат $a$, тогда подцели $q_{k,j}$ вычисляются слева направо для $j\ge 0$. Символьная среда $\sigma$  внутри тела предикатного определения обновляется после каждого вызова, каждого из подцелей  тела согласно рисунку \ref{fig:BoxModelPredicateCall}. Не присвоенные символы остаются, но могут присваиваться после подцели. Термы результаты ранних целей не нуждаются в обновлении, так как символам присвоено неопределённое символьное значение. Семантика  вызова предиката определяется следующим образом:

\begin{center}
\begin{tabular}{c}
$C(a)\llbracket a(\vec{y}):-q(\vec{x}_{k,n})_{k \times n} \rrbracket \sigma = D \llbracket q_{k,n} \rrbracket \sigma(\vec{x}_{k,n})$\\
$ \circ \cdots \circ D \llbracket q_{k,1} \rrbracket \sigma(\vec{x}_{k,1})$
\end{tabular}
\end{center}

По определению  вектор термов $\vec{y}$ может содержать общие элементы с вектором $\vec{x}_{k,n}$, $\forall k,n$ и $C\llbracket . \rrbracket$ имеет тип \texttt{atom} $\rightarrow$ \texttt{predicate} $\rightarrow \ \sigma \ \rightarrow \sigma$, $D\llbracket . \rrbracket$ имеет тип \texttt{subgoal} $\rightarrow \sigma \rightarrow \sigma$, и $\sigma$ имеет тип $\texttt{term}^\star$ $\rightarrow$ \texttt{term}, где $\star$ означает \textit{звезда Клини}  (см. \cite{davis94}).

Подцель $q_{k,j'}$ не обязательно должна определять  связанный граф изначально. Но, если это так, то это признак тому, что подмножество кучи определено целиком, по крайней мере, единая по интуиции структура данных. 
При разработке ПО  модульность и  «\textit{разделение забот}» можно всегда считать хорошим признаком.
Таким образом, абстрактный предикат вынуждает к описанию целой структуры данных. Следовательно, можно вывести следующий лозунг: «\textit{один абстрактный предикат должен корреспондировать с одной кучей}», где под подкучей подразумевается не пустой граф кучи, который содержит настоящее подмножество вершин и представляет собой кучу. Далее, добавляя все более $\star$-конъюнктов, соответствующий граф кучи растёт непрерывно.
 Набор $\star$-конъюнктов образует кучи, возможно связанные между собой, которые соответствуют абстрактным предикатам.
Когда речь идёт о «\textit{сжатии или раскрытии}»   абстрактных предикатов (похоже на вызов метода), существуют параметры, точнее вершины графа куч, которые стоят на обеих сторонах: со стороны вызова и со стороны вызванного предиката. Также могут существовать вершины, которые видны только изнутри предиката, которые не могут быть использованы извне предиката (по крайней мере, явным образом).
 
Без ограничения общности мы согласуем, что  доступ к объектным полям с помощью функтора «\textbf{.}» разрешается, например \texttt{a.b} (см. риcунок \ref{fig:PrologRulesEBNF}) или \texttt{oa(object5, fld123)} \cite{haberland14-1}.
Ради примера и модульности, мы согласуем далее, что объекты, а также объектные поля, передаются в качестве параметров предикатам. Отличие между объектами и простыми  стековыми локальными параметрами отсутствует, подробное объяснение будет дано позже.

По умолчанию согласуется, что для последовательности $q_{k,0}, q_{k,1}, \ldots , q_{k,m}$ из $q_{m \times n}$ любая строка находится в  нормализованном виде, так, что для $s\leq m$ нетривиальных элементов первые $s$ подцели располагаются, а остальные $m-s$ подцели являются  тавтологиями в качестве подцелей, чей  домен полностью определён как истина ($\top$).
Далее, согласуется, что: $$\exists k.a:-q_{k} \preceq a:-q_{k+1}$$ в силе, означая, что предикат, появляющийся ранее, в $\Gamma_a$ имеет выше приоритет, чем предикат, который определён позже.

\textbf{Замечание:} Очевидно, из-за проблемы приостановки, вызов предиката из раздела в общем  не решим. Далее рассматривается выразимость предикатов.

\textbf{Замечание:} Разрешение конфликтов с именами в $\Gamma$ может быть разрешено, если закодировать  локацию предиката в имя, как например класс, для которого предикат предусмотрен, тогда станет возможно различать предикаты. По определению предикаты с одинаковой локацией являются частью предикатного раздела, а следовательно  не конфликтуют.

Ради полноты синтаксического определения из рисунка \ref{fig:PrologRulesEBNF} и перевода представленного в следующем разделе, необходимо задуматься о том, как правильно перевести прологовские  операторы последовательности «\textbf{;}» и  отсечения «\textbf{!}».
Если тело предиката содержит «\textbf{;}», тогда всю последовательность после «\textbf{;}» необходимо перенести в новый предикат с той же левой стороной. Так, например:
 $$b :- a_0, a_1, ..., a_m; a_{m+1}, ... , a_n$$ 
для $\exists m.0\le m \le n$ разбивается на: 
 $$b :- a_0, a_1, ..., a_m. \quad \ b :- a_{m+1}, ... , a_n.$$
Если аналогично встречается оператор отсечения «\textbf{!}» в: $$b :- a_0, a_1, ..., a_m,!, a_{m+1}, ... , a_n$$, то $a_0$, $a_1, ...,  a_m$ может содержать альтернативы, которые будут рассматриваться в случае провала. «\textbf{!}» утверждает, что если только одна  подцель от $a_{m+1}$ до $a_n$ проваливается, то $b$ полностью проваливается без дальнейшего поиска  альтернатив. Все альтернативы могут быть  факторизированы cлева от «\textbf{!}» так, чтобы иные альтернативы исключались. В кратце, это является причиной, почему «\textbf{;}» и «\textbf{!}» могут быть исключены из Пролога в общности без потери выразимости (см. набл.\ref{obs:SimplificationByGeneralisation}). Вопрос обобщения предикатов, безусловно, интересен, но в целях задач поставленных в этой работе далее не рассматривается. Подход Поулсона \cite{paulson93} задаётся вопросом достижения обобщения с помощью введения функционалов на уровне абстракции и логических правил \cite{plaistead79}, \cite{menzies96}, \cite{giunchiglia89}, \cite{degtyarev01} для метода резолюции \cite{kalinina01}, \cite{bratchikov98}, \cite{gast08} --- которые здесь далее не рассматриваются. Модули тактик в системе «\textit{Coq}» \cite{bertot04} основаны на принципе, который также описывается Поулсоном.

Значение лем.\ref{lem:APsAreFOPs} и лем.\ref{lem:APsAreSOPs} таково, что можно выразить любые предикаты эквивалентности в Прологе беспрепятственно и без явной рекурсии. Мы не ограничиваем себя и допускаем  $\mu$-рекурсивные предикаты ценой частичной корректности в связи с не определением приостановки интерпретации предикатов ради беспрепятственной выразимости.

В случае «$\Rightarrow$» верхнего равенства $a(\vec{\alpha})$ предикат развёрнут. В случае «$\Leftarrow$» правая сторона определения предиката свёртывается в  вызов предиката. «$\approx$» означает   \textit{унификацию термов}.

\subsection*{Интерпретация предикатов над кучами}

Предложенный в этом разделе универсальный, но нестандартный, подход зависит от следующих этапов:

\begin{enumerate}
 \item Преобразование входной программы и аннотированные утверждения в прологовские термы, которые затем вписываются в логическую систему на основе Пролога (см. рисунок \ref{fig:HeapVerificationArchitecture}, \cite{haberland14-1}).
 
 \item Определение абстрактных предикатов в предусмотренной части прологовской программы вместе с представлением входной императивной программы. Правила принадлежат интерпретации и следовательно нуждаются в синтаксической корректности. Также как и пункт 1, этот пункт подробнее рассматривается в разделе \ref{sect:ArchitectureVerificationSystem}.
 
 \item Определение формальной грамматики для данных абстрактных предикатов. Обработка грамматики языковым процессором, которая при успехе генерирует конкретный синтаксический анализатор.
 
 \item Во время доказательства использование и подключение ранее преобразованных в синтаксический анализатор абстрактных предикатов при вычислении подцелей.
\end{enumerate}

Простое утверждение «$\mapsto$» становится   терминалом (см. опр.\ref{def:PredicateRuleSetDefinition}). Абстрактный предикат становится нетерминалом, точнее его вызовом. Терминалы могут связываться последовательно с помощью бинарного оператора  $\star$, который коммутативен (см. трм.\ref{theo:ReynoldsHeapProperties}). Терминал получает значение единицы некоторой подграмматики --- это эквивалентно грани некоторой данной куче. Утверждения «$\mapsto$» могут эффективно связываться, когда левые локации сортируются по  лексикографическому порядку. Когда происходит конфликт с одинаковыми именами, то $\alpha$-преобразование, учитывая местонахождения в данном модуле, может разрешиться, например, вводя  префикс.

\begin{rucorollary}{Контекст-свободность выражений куч}
 Раздел абстрактного предиката описывает систему правил, которая является контекст-свободной  формальной грамматикой.
\end{rucorollary}

\begin{proof} 
Левая сторона предиката не может по определению содержать более одного нетерминала, терминалы очевидно также не допускаются. Следовательно, только один нетерминал разрешается на левой стороне.
Отсутствие требования, где правая сторона должна быть строго  право-рекурсивной, т.е. отсутствие требования регулярной грамматики с правилами формой «$S\rightarrow aA$», приводит к контекст-свободности.
Допускается распознавание грамматик, которые эквивалентны к «\textit{скобочным грамматикам}» \cite{grune90} (КС-грамматики, чьи правила могут иметь вид $S\rightarrow ( S )$), а именно, когда имеется $n \in \mathbb{N}_0$ для некоторых терминалов $a$,$b$,$x_0$,$x_1$ и $x_2$, которые могут быть «$\mapsto$»-утверждением, так, чтобы $x_0 a^n x_1 b^n x_2$ и $x_0 \ne a$, $x_0 \ne x_1$ и $x_1 \ne b$, $b \ne x_2$. Если  голова предиката содержит аргументы, то это всё равно не меняет  статическую зависимость между определениями предикатов. Каждому разделу предиката можно приписывать один  начальный нетерминал.
\end{proof}

Из этого следует: выведенная и ожидаемая куча, обе они могут содержать свёрнутые предикатные определения, которые должны быть развёрнуты для окончательного определения равенства. Не трудно увидеть, что этот процесс двунаправленный, так как свёрнутые подкучи могут содержаться в обеих кучах. Важно заметить, что этого рода проблема редуцируется к  «\textit{проблеме корреспонденции Поста}», которая в общем случае теоретически не является  решимой, но только в частных случаях. Общая проблема для куч сформулированная Постом не рассматривается.\\\\
Набл.\ref{obs:FormalLanguageObservation} вместе с набл.\ref{obs:HeapReduction} может быть рассмотрено как предпосылка на формулировку: \textit{дана $\star$-связанная куча. Вопрос: совпадает ли она или нет с данной спецификацией кучи}? А также можно задать вопрос --- \textit{какая куча самая близкая к корректной куче, так, чтобы спецификация соблюдалась}? Решение вопроса способствует к решению проблемы  \textit{контр-примера}.

\subsection*{Перевод правил Хорна}

В этом разделе рассматривается перевод абстрактных предикатов, которые даны в качестве  прологовских правил, в обобщённые правила контекст-свободной грамматики с атрибутами.
Перед этим необходимо преобразовать  обыкновенные утверждения формой $loc \mapsto val$ в  токены, как принудительный лексико-аналитический шаг интерпретации Пролога. Применение  \textit{мульти-парадигмального программирования} \cite{denti05} разрешает интерпретировать прологовские правила во время запуска различными  языковыми процессорами. Это позволяет достичь максимальную расширяемость, например, подключение написанных процедур на различных языках программирования и запуск в третьей загрузочной системе.
Таким образом, процесс верификации может быть инициирован, контролирован и приостановлен входной программой. Процесс трансляции из правил Пролога в формальную грамматику удивительно прост, ради интерпретации правил Пролога. Однако, правила Пролога могут иметь аргументы с обеих сторон от знака определения правила «\textbf{:-}». Параметры и аргументы генерируемой грамматики можно моделировать как  атрибуты  формальной грамматики. Следовательно,   процесс трансляции $C\llbracket \rrbracket$ можно характеризовать как:

В отличие от ранее введённых  нотаций далее вводятся подцели и теперь входной вектор $\vec{x}$ содержит все переменные символы, ради удобной записи внутри каждого предиката. Если некоторая подцель $q_j$ для $j\ge 0$ не нуждается во всех компонентах $\vec{x}$, то подцель не нуждается в них.
$\dot{\cup}$ является множественным объединением, с учётом последовательности и сохранения дубликатов. Нетрудно заметить, что обе записи сопоставимы и сильно похожи друг на друга. Абстрактный предикат описывает кучу. Таким образом, $C\llbracket \rrbracket$ преобразует кучу, т.е. является интерпретацией ассоциативной кучи, см. лем.\ref{lem:HeapAsWord}.
 $C^{-1}\llbracket . \rrbracket$ преобразует атрибутируемую грамматику обратно в  Пролог:

Ещё осталось расследовать $C \llbracket \rrbracket$ и $C^{-1} \llbracket \rrbracket$ касательно корректности и полноты.

\begin{rucorollary}{Корректность и полнота преобразований}
$C \llbracket \rrbracket$ и $C^{-1} \llbracket \rrbracket$ полны и корректны.
\end{rucorollary}

\begin{proof}
 Не трудно убедиться в том, что $C \circ C^{-1} \circ C \equiv C$ и $C^{-1} \circ C \circ C^{-1} \equiv C^{-1}$, сопоставляя верные определения. Обсуждения из раздела \ref{sect:PointsToHeaplets}, ни «\textbf{!}» ни «\textbf{;}» не влияют на выразимость. 
Если для любого элемента из $C\llbracket \rrbracket$ преобразование не приостанавливается, то   кообласть также не полностью определена. Тоже самое распространяется на $C^{-1}\llbracket \rrbracket$.
\end{proof}

Нетрудно заметить, что прологовские правила не единственная форма, но очень близка к формальной грамматике. Более обобщённо можно включить императивные языки программирования с процедурами на основе автоматического запуска со стеком.

\subsection*{Синтаксический перебор как верификация куч}

В целях простого и интуитивно понятного алгоритма, константы из опр.\ref{def:HeapAssertionDefinition} далее пока рассматриваются. Касательно классных объектов, предикат \textbf{true} может означать, например, принимать все «$\mapsto$»-утверждения до отметки, которую необходимо согласовать, в зависимости от данного правила. В данной отметке, которая представляет собой  \textit{безопасный пункт синхронизации} в качестве  «\textit{правил генерации ошибок}»  \cite{grune90}, может продолжаться  синтаксический перебор, если перебор застрянет из-за ошибочной последовательности в  потоке токенов, либо из-за не ожидаемого состояния на стеке при переборе. Предполагается, что входное слово, представляющее кучу, всегда конечное. Предположение основывается по той причине, что память --- это линейно адресуемое конечное пространство. Чисто гипотетично, число совершённых развёртываний и свёртываний стремится к бесконечности. Позже мы покажем, что для $\exists j$ функции $\pi_j$ и $\sigma_j$ ограничиваются полиномиальной сложностью.

Этот раздел предлагает и обсуждает основные конвенции, необходимые для реализации общего подхода синтаксического перебора, в целях верификации куч. Это рассматривается на примере  LL(k)-перебора. LL(k)-анализатор, является синтаксическим анализатором, который может смотреть вперёд любое количество токенов для разрешения многозначности одинаковых правил. LL(k)-анализатор выбирается образцово. Кроме LL(k)-анализатора, могут быть использованы другие анализаторы синтаксиса, как например,  LALR,  SLR,  Эрли и другие. Журдан \cite{jourdan12} с помощью «\textit{Coq}» доказывает корректность LR(1)-анализатора --- вопрос корректности анализатора, безусловно, важен, но в рамках этой работы отпадает, т.к. анализатор конструируется автоматически из данной формальной грамматики.
Далее, первым определяется формальное предложение, как композиция отдельных «$\mapsto$»-утверждений и подцелей как нетерминалы. Вторым и третьим, в аналогии к LL(k)-анализатору, где отдельные терминалы представляются как обыкновенные утверждения операции, вводятся «\textit{first}» и «\textit{follow}». Четвёртым, обе операции \textit{сдвиг}  (SHIFT) и  \textit{свёртка}  (REDUCE) представляются, чтобы иметь представление для более обобщённых анализаторов.

Например, $\alpha::=$ \texttt{[ pointsto(x,nil),} \texttt{pointsto(y,1), member(x,[y])]} описывает актуальное состояние кучи при верификации данной императивной программы. $\star$ заменяется в предыдущем списке запятой. Спецификация правил может зависеть от\\
\texttt{[pointsto(Y,1),member(X,[Y|_]),pointsto(X,_)]}.\\
Поэтому, для проверки абстрактного предложения (спецификация) для данной программы (генерируемая куча), необходимо сравнить, выводима ли одна из сторон из другой или нет.

Абстрактное предложение может также содержать  унификацию термов, как например,\\
\texttt{pointsto(X,5),X=Y}. Унификацию термов необходимо тщательно проверять и отделять от: «$\mapsto$»-утверждений и от вызовов предикатов, т.е. нетерминалов. Надо учесть, что  неограниченная рекурсия термов должна ограничиваться так, чтобы терм сам себя не содержал по определению. Поэтому,  \textit{самосодержащие термы} необходимо  ограничить, так как они часто имеются по определению в Прологе (см. раздел \ref{sect:PrologAsReasoningSystem}), где проверка самосодержимости по умолчанию отсутствует. Анализаторы, которые вынуждены анализировать неограниченные термы, могут попадать в тупиковую ситуацию, если сам терм циклически определяется самим собой. Поэтому, во избежание проблемы принудительной работы, уговаривается, что проверка на присутствие циклов проводится отдельно от верификации, либо по умолчанию исключается.

Рассмотрим теперь «\textit{наивный}» подход для сравнения равенства двух абстрактных предложений. Рассмотрим алгоритм № \ref{AlgorithmEqualityCheck}. $\pi$ означает функцию начальных терминалов для данного правила и данного положения, как было определено ранее. Проблема в этом подходе  (развёртывая фактически принудительно предикаты всё далее и далее, возможно бесконечно) заключается в неопределённости, когда и сколько раз применить точно развёртку и  свёртку, тем более не известно в случае нахождения одного решения, не является ли решение оптимальным, что, безусловно, зависит от данных правил. Предположим, имеется: 

$$\alpha_1 = [\underbrace{a \mapsto b}_{i_0},i_1,\cdots, i_{m_1}, \underbrace{q_1(x)}_{p_0}],$$ 
$$\alpha_2 = [\cdots, \underbrace{a \mapsto b}_{j_3},\cdots, \underbrace{q_1(x)}_{q_7}]$$

необходимо сравнить сходимость термов обоих выражений. Сдвиг термов (SHIFT-TERMs) приводит к унификации $i_0$ и $j_3$ и продолжению сравнения остальных термов.
Сначала свёртка (REDUCE-PREDs) проверит, подходящие и применяемые ли предикаты для расширения.
Поэтому, первый терминал предиката может быть запрошен первым. Развёртывание $expand(p_k,\alpha_1)$ сопоставит подцель телом определения предиката $p_k$ (см. опр.\ref{def:PredicateFolding} и рисунок \ref{fig:PrologRulesEBNF}) и преобразует новое абстрактное предложение $\alpha'$, которое можно описать в Прологе как  \texttt{concat($\alpha$,$[i_7,i_8,i_9],\alpha'$)}, если $q_1(x)$ развёртывается в список $[i_7,i_8,i_9]$.

Независимо от конкретных аргументов, $\pi$ определяет все те терминалы, которые являются «$\mapsto$»-утверждениями, либо являются первым терминалом предикатных  подцелей.
Мы подразумеваем, что подцели унификации и вызовы встроенных (т.е. «\textit{встроенных}») подцелей фильтруются и не рассматриваются при вычислении $\pi$ и $\sigma$.

Неформальное объяснение таково: множество последующих  терминалов определяет все те терминалы, которые могут следовать данному актуальному «$\mapsto$» утверждению или данной подцели в случае, когда терминал находится в конце правила. Согласно \cite{grune90} мы теперь обладаем возможностью определить LL(k)-распознаватель с помощью определений $\pi$ и $\sigma$.

\begin{ruexample}{Многозначимость правил}
Даны следующие правила нетерминалов  $ q_1 \rightarrow a, \  q_2 \rightarrow a q_2 \ | \ q_3 b, \ q_3 \rightarrow \varepsilon \ | \ q_3 a$. Нетрудно заметить, что эти правила  многозначны, например из-за $\pi(q_2)=\{a\}, \sigma(a)=\{ \varepsilon\} \cup \pi(q_2) \cup \pi(q_3) \cup \sigma(q_3)$.
\end{ruexample}

\begin{ruexample}{Корректность}
Дана следующая конечная спецификация: 
$$[ (loc1,v1), p1(loc1,loc2), (loc2,v2) ]$$ и подходящая к нему условная цепочка $[ (loc1,v1), (loc2,v2)  ]$. Цепочка является корректным словом данной спецификации, лишь в том случае, когда $p1$ генерирует только  пустую кучу.
\end{ruexample}

В случае, если данное слово не является просто последовательностью терминалов, но окажется  «\textit{абстрагируемым}» предложением, т.е. содержит произвольную последовательность терминалов и нетерминалов, то проблема сравнения может возникнуть в различных местах последовательности. Одной из этих проблем может оказаться вычисление повторяющихся куч и абстрактных частиц предикатов. Поэтому выгодно, когда вызовы абстрактных предикатов запоминаются в  кэш  (мемоизатор, например в \cite{warren99}), а затем подцели могут быть сравнены гораздо быстрее, тем более, из-за декларативного характера предикатов, где символьные значения не присваиваются заново. Поэтому, предикаты могут быть сравнены чисто процедурно, без «\textit{опасных}»  побочных эффектов и изменения состояния вычисления несколько раз подряд. При ускорении кэширования необходимо запоминать наименование предиката и все термовые аргументы при вызове подцели.

Отрицания предикатов далее не рассматриваются из-за подробностей представленных в разделах \ref{sect:ConjunctionAndDisjunction},\ref{sect:PartialSpec},\ref{sect:StricterDiscussions},\ref{sect:PrologAsReasoningSystem} и \ref{sect:LogicalReasoningAsProof}.

\subsection*{Свойства}

На рисунке \ref{fig:heapConfiguration} показан пример конфигураций одной кучи, в которой любая вершина $v_j$ при $j \ne 2, j \ne 5$ с отходящими гранями для классного экземпляра, более одного указателя в качестве атрибута. 
Конфигурация состоит из  8 треугольников, каждый из которых ограничивается сплошной, пунктирной или извилистой линией. Линия, начиная с середины между $v_0$ и $v_3$ доходящая до $M_1$, выделяет треугольник $\Delta(v_0,M_1,v_1)$. Таким образом, рисунок \ref{fig:heapConfiguration} демонстрирует разновидность одной и той же структуры в динамической памяти, используя различные описания. Предполагается, что два и более выходящих указателей подразумевает соответствующее количество различных полей в объединённом экземпляре (см. главу \ref{chapter:expression} и рисунок\ref{fig:GraphIsomorphisms}).  То есть, вершины могут охватывать те же самые графы различными  абстрактными предикатами, например, треугольниками с пунктирными линиями или извилистыми линиями. Однако, при различных представлениях, необходимо учитывать все вершины --- это является обязательным критерием, но не достаточным. Вопрос равенства двух различных куч, эквивалентен к вопросу  \textit{изоморфизма} двух данных направленных графов.

Когда анализируется входной поток токенов, последовательное состояние анализатора можно определить с помощью проверки нескольких токенов вперёд. В худшем случае, придётся проверить все входные токены, но решение будет принято.
Это в случае, когда куча описывает  объектный экземпляр и необходимо определить границы объекта. Если ограничиться тем, что один объект может быть описан только в одном предикате, либо поля указатели описываются зависимыми предикатами, то такая конвенция позволит избежать описанную проблему и провести генерическую канонизацию по предикатам. Синтаксический перебор, соблюдая данную конвенцию, имеет полиномиальную сложность \cite{grune90}. Проводится  синтаксический перебор без дополнительных условий. Если объектные границы не соблюдаются, то перебор может повлечь за собой пересечение куч, которые могут относиться к другому объекту. На практике это может означать, \textit{частичный и параллельный перебор} различных куч одновременно, что желательно избежать по различным соображениям. Во-первых, проблема плохо делима из-за множества взаимосвязей, определений и из-за иерархии вызовов и фреймов. Во-вторых, надо задаться вопросом о преимуществе такого подхода, который кажется довольно сомнительным на данный момент потому, что не предвидится никакого ощутимого выигрыша, но дополнительные затраты на параллелизацию поглощают все возможные преимущества. Поэтому, такой вопрос с теоретических и практических сторон пока не задаётся. С практической точки зрения необходимы параллельные кучи, которые наполнялись бы последовательно обработкой спецификации (см. \textit{стек Трибера} \cite{bornat00}).

Если все разделы предикатов можно перебрать синтаксически, учитывая упомянутую конвенцию, то перебор «$\mapsto$»-утверждений решим и завершается с ответом или с отказом. При отказе, синтаксическая ошибка содержит неверный токен и/или данный сегмент в нетерминале (т.е. состояние перебора). Таким образом, она показывает на состояние кучи, которая не совпадает с имеющейся кучей.
Также необходимо заметить, что сходимость вычисления соблюдается при сборе LL-/LR-анализаторов потому, что состояние вычисления записывается как кортеж (состояние анализа, входная цепочка) и для каждого состояния переход детерминирован.
В итоге, главная модель памяти не была кардинально изменена, а лишь расширена абстрактными предикатами.

\subsection*{Реализация}

Система реализована на основе  «\textit{GNU Prolog}» \cite{diaz12} при поддержке «\textit{ANTLR}» версии 4 \cite{parr12}. Работа с абстрактными предикатами предусмотрена для работы \cite{haberland14-1}, \cite{haberland15-2}. Изначально, обе работы, основанные на Прологе, были выбраны ради простоты пользования и в преподавательских целях. Для максимальной поддержки пакетов, написанные на Прологе, была использована дистрибуция «\textit{GNU Prolog}», для максимального сходства с генеричной версией Пролога. Для расширения и возможности поддерживать различные библиотеки на различных языках, была использована  \textit{многоцелевая парадигма} \cite{denti05}, для динамического подключения загрузочного кода на разных языках программирования. Таким образом, программная библиотека Денти \cite{denti05} разрешает подключать, например, Ява-библиотеку через соответствующую графическую оболочку. Процедуры на языке Ява или иных языках через соответствующий интерфейс при запуске прологовского запроса могут включить дальние вызовы. Слой вызовов «\textit{tuProlog}» выступает в качестве  «\textit{медиатора}» и  «\textit{адаптера}». При этом, вызов и результаты могут передаваться с обеих сторон с помощью  «\textit{прокси}», как будто процедура дальнего вызова библиотеки является локальным правилом Прологу. Таким образом, не только на языке Ява можно создать всё новые вспомогательные и встроенные предикаты, но также ново\-со\-зданные предикаты могут снова вызывать предикаты. Так вызов при запуске Пролог программы согласно принципу ящика из рисунка \ref{fig:BoxModelPredicateCall}, производится полностью динамически.

Реализация преобразует входную программу в  IR, которое состоит из термов Пролога. Затем, утверждения копируются отдельно в теорию Пролога, а абстрактные предикаты преобразуются в грамматику «\textit{ANTLR}», как это было изложено ранее. Затем, запрашивается перебор, синтаксический анализ с помощью того  распознавателя подключается, который генерируется после определения «\textit{ANTLR}» грамматики (часто на языке Ява). Затем, выход, контроль возвращается вызывающей стороне. При необходимости, абстрактные предикаты могут проверяться автоматически. В случае синтаксической ошибки, выдаётся соответствующая ошибка.

 «\textit{ANTLR}» использует для разрешения синтаксической многозначности две технологии: «\textit{синтаксические предикаты}» и «\textit{семантические предикаты}» \cite{parr12}. Термины очень похожи на проблемы данной работы, но их следует внимательно различать и не путать. Кроме этих двух технологий, всегда имеется возможность переписать правила, так, чтобы многозначность не возникала, например, с помощью \textit{факторизации правил}. На практике  «\textit{ANTLR}», увы, не покрывает полный класс LL(k)-распознавателей. Предикаты ограничены и в случае конфликта часто нуждаются в \textit{переписке (термовых) правил}. «\textit{ANTLR}» часто не в состоянии различать самостоятельно, особенно регулярные, сложные выражения. Следовательно, приводит к полной генерации кода для необходимого распознавателя. Такая же проблема с ограниченностью LL(k) наблюдается во множестве других генераторов компиляции, как например, в  «\textit{yacc}» или  «\textit{bison}» \cite{levine09}. С практической точки зрения это означает, что данная  грамматика, хотя формально и корректна, не может быть перебрана из-за ограничений в реализациях распознавателей. К счастью, на практике переписка грамматики часто разрешает этот конфликт в практических целях. Также необходимо заметить, что более мощные распознаватели, по принципу, «\textit{снизу-вверх}» имеют возможность избежать различные ограничения за счёт сложных реализаций и объёма, генерируемых распознавателем, например,  «\textit{bison}»
работающий по принципу  «\textit{сдвиг-свёртка}». Обзор технологий синтаксического анализа можно найти в \cite{opaleva05}, \cite{grune90}.

В качестве примера, приведём необходимые трансформации на более подробном уровне для обработки лексемов и токенов. Сначала, $bar \mapsto foo$ преобразуется в 
$pt\_3bar\_3foo$, где число «$3$» это длина наименований, либо сложное выражение целой локации, которая требуется для различения последовательных наименований. Если локация сложная, например $b.f.g$, то локация вместе с длиной определяют выражение утверждения в  смежной записи (с англ. «\textit{mangled}» \cite{levine99}). Например, \texttt{pointsto(X,2)} преобразуется в прологовский  атом $p\_X\_2$. Цель преобразований заключается в получении одного атомного символа, который представляет собой простое утверждение о куче, это терминал. При необходимости терминал без потерь может быть полностью восстановлен обратно в утверждение о куче, т.к. для этого все данные имеются и данные компоненты строго различаются в прологовском атоме. Процессы прямого и обратного преобразования могут быть реализованы в качестве встроенного предиката, например на языке  Ява. Затем, используются в главных частях верификации, либо могут быть использованы другими встроенными предикатами на Яве, либо на другом подключенном языке к Прологу (см. предпосылки, \cite{denti05}).

Далее, левая сторона  (де-)канонизации (на Прологе)

$$\texttt{p1(X,[X|Y]):-... .}$$

преобразуется в:

$$\texttt{p1(X1,X2):-X1=X,X2=[X|Y],... .}.$$

Правило «\texttt{p(X,Y):-$\alpha$.}» из Пролога можно перевести в следующую «\textit{ANTLR}»-грамматику:

\begin{center}
\begin{tabular}{c}
\texttt{p[String X,String Y]:} $\alpha$.
\end{tabular}
\end{center}

Таким образом, все  \textit{синтезируемые атрибуты} можно передавать сверху-вниз. \textit{Порождаемые атрибуты} можно приписывать к соответствующему предикату \texttt{p}, добавляя в «\textit{ANTLR}»  ключевое слово \texttt{returns} вместе с наименованием атрибута перед двоеточием. Поэтому, необходимо определить, синтезируемы или порождаемы ли они и затем все атрибуты перевести и вписывать в соответствующее «\textit{ANTLR}»-правило. Правила, конфликты и ограничения Пролога очень похожи на правила «\textit{ANTLR}».

Когда абстрактные предикаты преобразуются в конкретную грамматику, например  «\textit{ANTLR}»-грамматику, то возникает проблема. Унифицированные термы,  «$\mapsto$»-ут\-вер\-жде\-ни\-я (терминалы) и нетерминалы следует преобразовать согласно данному синтаксису в грамматику вместе с аннотациями. Также «\textit{ANTLR}» разрешает  \textit{транслирующие правила}, которые определяют семантику самой программы и записываются в «\textit{ANTLR}» в фигурные скобки. Отрицание предложений и их под-предложений можно сформулировать в «\textit{ANTLR}»  с помощью «$\sim$» и скобок, которые означают, что данное  регулярное выражение из «$\mapsto$»-терминалов должно или не должно следовать. Далее, элементы перевода грамматик не обсуждаются, т.к. с помощью атрибутов и транслирующих правил, можно имитировать все остальные выражения и предложения, в том числе и несовпадение предикатов (см. \cite{opaleva05}, \cite{lavrov01}, \cite{grune90}, \cite{parr12}, \cite{gcc15}, \cite{mitchell96}).

\subsubsection*{Графическая оболочка}

Графическая оболочка (см. рисунок \ref{ScreenshotGUI}) состоит из трёх важных частей: (i) из левой части, которая содержит программу Си-диалекта, (ii) из правой верхней части, которая при выполнении программы содержит актуальное содержание динамической памяти и (iii) из правой нижней части содержавшая программное представление в качестве термов Пролога. Содержание памяти и программное представление можно преобразовать в графическое представление. Кроме того, внизу имеются окна для ошибок и предупреждений, а также запись текущих операций верификации.

Разработка графической оболочки оказалась наиболее полезной в использовании, т.к. для автоматизации преобразования, отслеживания доказательств и представлений динамической памяти и правил верификации, вместе с программой, легче отслеживать, когда рядом имеются все необходимые данные.

\subsubsection*{Use Cases}

Программист в данной оболочке задаёт и редактирует программу (см. рисунок \ref{UMLUseCaseProgrammer}). Для этого имеются синтаксические проверки, а также программное представление может отобразить в графическом представлении в формате ``DOT'' в виде дерева представить.

Роль персоны, которая специфицирует программу (см. рисунок \ref{UMLUseCaseSpecifier}), отличается от программиста --- формальными, точнее логическими, формулами описывается поведение программы. Спецификации  можно в систему заносить, редактировать до и после загрузки программы. При запуске данные спецификации, которые размещены вместе с программой, проверяются. В случае возникновения ошибки, ошибка выделяется в правой нижней части. Проверка спецификации включает в себя при- и постусловия процедур, абстрактные предикаты, классные инварианты, но может также включать проверку полноты набора правил одного правила Пролога, касательно входной кучи.

Верификация кучи (см. рисунок \ref{UMLUseCaseVerifier}) в отличие от редактирования программы заключается в вычислении и проверке имеющийся при выполнении программы состояния памяти с спецификацией. Для верификации необходимо проводить проверку сходимости куч, а также преобразование в нормализованную форму. Для упрощения и объяснения текущего доказательства имеются вспомогательные элементы, прежде всего, это визуализация дерева доказательства, генерация контр-примера, а также использование предыдущих доказательств для более быстрой сходимости.

\textbf{Пример) Реверс линейного списка}

Дан список \{1,2,3\}, имеется указатель $y$ на последний элемент линейного списка. Тогда реверс списка можно описать рекурсивно, отделив последний элемент и присоединив его к началу отставшего списка. Таким образом, получается список как указано на рисунке \ref{HeapExampleAP1}.

Далее, получаем список как указано на рисунке \ref{HeapExampleAP2}.

\textbf{Классовые определения}
Классовые экземпляры относятся в основном к классам Си++ или Ява и представляют упрощенный образ. Классы имеют разовый идентификатор, поля и методы. Поля и методы определяются в любом порядке и являются разовыми. Видимость для представления универсальности модели необязательна, но разрешается. Встроенные методы запрещаются, доступ к собственному классному экземпляру осуществляется с помощью \texttt{this}.

\begin{grammar}
<class_definition> ::=  [ <class_modifier> ] 'class' <ID> '\{' <class_fields_methods> '\}'

<class_fields_methods> ::= \{ <class_field> | <class_method> \}*

<class_field> ::= [ <class_modifier> ] <variable_declaration> ';'

<class_method> ::=
  [ <class_modifier> ] <function_definition>
  \alt [ <class_modifier> ] <function_declaration> ';'

<class_modifier> ::= ( 'private' | 'public' | 'protected' ) ':'
\end{grammar}

Декларация типов следует в основном декларации по стандарту ISO-C++, которая разрешает неинициализированные и инициированные переменные.

\begin{grammar}
<variable_declaration> ::=
  <type> <ID> \{ ',' <ID> \}*
  \alt <type> <ID> \{ ',' <ID> \}* '=' <expression>
\end{grammar}

Определение функции осуществляется простым образом, т.е. неограниченные синтаксисом (variadic) параметры не допускаются. Конвенция вызовов подпроцедур следует схеме вызовов в языке Си. Тип возврата метода является либо \texttt{void}, либо базисным типом в данный момент. В ближайшее будущее объектные типы будут возвращаться, а до этого все изменения состояния вычисления передаются только с помощью параметров.

\begin{grammar}
<function_definition> ::= <function_declaration> <block>

<function_declaration> ::=
  <type_or_void> <ID> '(' ( <formal_parameters> | ) ')'

<type> ::= 'int'

<type_or_void> ::= <type> | 'void'
\end{grammar}

\textbf{Встроенные правила для автоматизированного логического вывода}

\textbf{Нормализация куч} осуществляется с помощью встроенного предиката Пролога\\
\texttt{simplify/2}. Этот предикат принимает кучу в виде терма и возвращает терм кучи, в котором $||$ оформляется крайней снаружи. Предполагается, для быстрой обработки, что входная куча уже $||$-нормализована, т.е. в виде $q_0 || q_1 || ... || q_k$, где $\forall j.q_i$ в виде $X_i\circ X_{i+1}\circ ... \circ X_{k}$.

\textbf{Высчитывание} (``\textit{множественное сравнение}'') осуществляется с помощью \texttt{sub\-stract}. Этот встроенный предикат принимает список термов Пролога и генерирует список не хватающих термов кучи. Предикат может быть использован для установления полноты данного семейства предиката (например, важно для упрощения, трансформации и для обработки с тройками Хора в общем).

\textbf{Сравнение} производится покомпонентно с помощью двух термов: имеющейся и ожидаемой кучей.


\textbf{Слойная архитектура верификатора}

Предложенная архитектура верификатора ``ProLogika'' \cite{haberland19-1} состоит из 6 главных слоёв (см. рисунок \ref{LayeredArchitectureVerifier}), которые на периферии сотрудничают с библиотекой логического вывода ``\textit{tu\-Pro\-log}'' (2p), а также с генератором синтаксических анализаторов ANTLR в версии 4. Система реализована на языке Ява, хотя одновременно она тесно связана с реализацией Пролога и максимально близка к GNU Prolog. ANTLR может генерировать выходной код не только Ява, но также разных других языков программирования. Таким образом, обеспечены расширяемость и вариабельность системы ``\textit{ProLogika}''. ``\textit{ANTLR}'' принципиально может быть заменено на любой другой синтаксический анализатор эквивалентной мощностью, под условием, что не только стартовый нетерминал может быть распознан, но также подграмматики используемых нетерминалов.

Система верификатора динамической памяти ``ProLogika'' состоит из следующих пакетов: \texttt{internal}, \texttt{output}, \texttt{parsers}, \texttt{prolog}, \texttt{frontend} и \texttt{gui}. \texttt{gui} предоставляет графическую оболочку и управленческие к ней модули. Результаты, промежуточные представления вычисления и визуализации выявляются в самой оболочке или могут быть запущены с ней.

Пакеты \texttt{frontend} и \texttt{parsers} тесно взаимосвязаны. Пакет \texttt{frontend} содержит интерфейсы и модули различного уровня абстракции для лексического и синтаксических анализаторов. Параметризация синтаксического анализа является одной из основных верификации абстрактных предикатов.
\texttt{parsers} cодержит изначально генерированные лексические и синтаксические анализаторы, а также анализаторы, генерируемые при запуске верификации. Интерпретация абстрактных предикатов входит в пакет \texttt{ProLogika.inter\-nals.aps}.

Пакет \texttt{prolog} обогащает встроенную систему логического вывода с дополнительными предикатами Пролога, например, для обработки термов куч, анализ и выявление ошибки, а также для реализации дальнейших побочных эффектов. Реализация дополнительных встроенных предикатов осуществляется таким дизайном, который гибок и легко может быть расширен. Правила Пролога могут быть всегда добавлены и расширены во время запуска программы верификации.

Пакет \texttt{output} предоставляет вспомогательные функции для писания термовых представлений, например, в DOT-файл или на консоль. На данный момент предусмотренная генерация ОCL-кода не реализована.

Пакет \texttt{internal} производит саму спецификацию и верификацию связанных функций. Пакет \texttt{core} содержит различные модули, они связаны с обработкой термовых представлений куч. Пакет \texttt{aps} содержит все необходимые модули, которые связаны с абстрактными предикатами, следовательно, он связан с пакетом \texttt{core}.
\texttt{checkers} производит проверку куч до, после и во время запуска программного оператора. Остальные пакеты подлежат сильному исправлению и находятся в настоящее время на стадии разработки.

\textbf{Компоненты}

В соответствии с разделом \ref{sect:Layers} можно выявить 6 главных компонентов верификатора (см. рисунок \ref{VerifierComponents}):
(i) \texttt{ProLogika.GUI} отвечает за графическую оболочку,
(ii) \texttt{Pro\-Log\-ika.par\-sers} отвечает за всё, что связано с анализатором синтаксиса, включая проверки условий синтаксиса и семантики, а также за обработку абстрактных предикатов (пакеты \texttt{parsers}: \texttt{incoming}, \texttt{intermediate} и \texttt{AbstractPredicates}),
(iii) библиотека Java \texttt{SWT} тесно связана с компонетом \texttt{ProLogika.GUI} и введён для поддержки платформы независящей от ОС,
(iv) \texttt{ANTLR} компонент для построения языковых процессоров,
(v) \texttt{alice.tuProlog} представляет библиотечный компонент для реализации генеричного логического вывода, построен на основе Пролога, 
(vi) \texttt{ProLogika.internal} компонент представляющий Прологовские термы (\texttt{PTerm}) и правила (\texttt{PRule},\texttt{PSubgoal}), которые используются другими компонентами.

\textbf{Пакеты}

Пакет \texttt{alice.tuProlog} (см. рисунок \ref{PackageAlice2p}) содержит: (i) класс \texttt{Term}, который представляет собой общий Прологовский терм, который не является абстрактным, (ii) \texttt{Library} является Прологовской библиотекой, которая может содержать целую формальную теорию Прологовских правил. \texttt{BuiltinLibrary} является реализацией класса \texttt{Library}. \texttt{Engine} контейнер, который может производить логический вывод правил Хорна согласно данной теории. \texttt{ResultInfo} представляет совокупность любого количества результатов данному запросу (подцелей).\\[0.7cm]

\texttt{Пакет parsers:}
 
Пакет \texttt{parsers} (см. рисунок \ref{PackageParsers}) содержит пакеты: \texttt{incoming}, \texttt{intermediate}, \texttt{pro\-log}, а также два класса \texttt{MetaType} и \texttt{MetaTypeManager}. \texttt{MetaType} представляет определение типа во входной программе, например, класс или целое число (integer). \texttt{Me\-ta\-Type\-Ma\-na\-ger} является наблюдающей за определениями инстанций.

\texttt{Пакет internal:}
Пакет содержит пакеты: \texttt{core}, \texttt{checkers}, \texttt{aps}, \texttt{spec}.\\\\

\texttt{Пакет internal.core:}

Пакет \texttt{core} (см. рисунок \ref{PackageCore}) содержит классы, которые представляют ядро модели логического вывода: термы, правила, символьные переменные, подцели, и т.д. Префикс -\texttt{P} ссылается на ``\textit{Prolog}'', классы строго соответствуют элементам ISO-Пролога.

Главными базисными элементами являются: \texttt{PTerm}, \texttt{PRule} и \texttt{PSubgoal}. Остальные P-классы соответствуют Прологу, ради исключения классу \texttt{Unparsed}, который частично вычисляемый терм, который полностью определён на более поздней стадии вычисления.

\texttt{Пакет internal.checkers:}

Проверки проводятся вместе с нормализацией и высчитыванием правил во время верификации, например во время анализа данного абстрактного предиката (\texttt{APCom\-plete\-ness\-Checker}, \texttt{Heap\-Com\-plete\-ness\-Chec\-ker}, \texttt{Pro\-log\-APs\-Chec\-ker}) (см. рисунок \ref{PackageInternalCheckers}). Синтаксические анализы также проводятся во время анализа входной программы и проверки вставленных в программу спецификатором утвреждения (например с помощью \texttt{ANTLR\-Gram\-mar\-Checker}). Все проверки могут осуществляться, если реализовать интерфейс \texttt{Checker\-In\-ter\-face} конкретными методами.

\texttt{Пакет internal.aps:}

Пакет \texttt{aps} (см. рисунок \ref{PackageInternalAPs}) содержит \texttt{ANTLRBuil\-der}, \texttt{ANTLRLauncher}, \texttt{Grammar\-Buil\-der}, а также все остальные классы, которые обрабатывают абстрактные предикаты (APs) согласно формальной атрибутируемой грамматике (как было введено Кнутом и Вегнером). \texttt{ANTLBuilder} является специализацией класса \texttt{GrammarBuilder}. ``\textit{ANTLR}'' используется изначально лишь как один пример генератора синтаксического анализа. Любой генератор или вручную написанный анализатор может послужить заменой, если предоставить необходимый интерфейс.

Аналогично выше сказанному, \texttt{ANTTLRGrammar} является специализацией класса \texttt{Formal\-Grammar}. \texttt{ANTLRBuilder} способствует к построению анализатора анализаторов.

\texttt{ANTLRLauncher} производит вызов драйвера ANTLR-генератора (для произведения различных функций ANT\-LR).

\subsubsection*{Синтаксические анализаторы}
Класс \texttt{Main} (см. рисунок \ref{PackagesMain}, связан с \texttt{Parser} из рисунка \ref{PackagesParser} и \texttt{SyntaxAnalyzer} из рисунка \ref{PackagesSyntaxAnalyzer}) управляет тремя инстанциями синтаксического анализа согласно формальным грамматикам: (1) ClassBased.g4, (2) IRProlog.g4, и (3) AbstractPredicates.g4.
 
Замечения:
\texttt{IRPrologCommand.execute()} возвращает \texttt{IRPrologDescription object}.

Имеется паттерн:

\begin{verbatim}
ADAPTER::(CLASS) ADAPTER:
  SyntaxAnalyzerCommand

ADAPTER::ADAPTEE:
  ClassBasedCommand, IRPrologCommand,
    AbstractPredicatesCommand
\end{verbatim}

\texttt{Package parsers.incoming:}

Следующая структура применима не только для \texttt{parsers.incoming} (см. рисунок \ref{PackagesParsersIncoming}), а также для 
\texttt{parsers.intermediate} и \texttt{parsers.prolog}.

Чтобы модифицировать синтаксический анализ при входе и выходе из правила нетерминала (\textit{срёртка}) были введены классы с суффиксом \texttt{Listener}.


\texttt{Пакет parsers:} см. рисунок \ref{PackagesParsers}.

Класс \texttt{MetaType} представляет встроенный или само-определяемый тип, который применяется при проверке типов над выражениями для входной программы. Каждый тип разово идентифицируется с помощью поля \texttt{typeID}. Однако, при построении типов \texttt{typeID} не присваивается изначально (только имя, но без полной сигнатуры, т.к. рекурсивные типы также разрешаются). Статически определёнными, встроенными типами являются: \texttt{INTTYPE} и \texttt{VOIDTYPE}.

Элементы пакета \texttt{parsers.intermediate} выполняют паттерн-роль ``\textit{директора}'' для класса \texttt{Me\-ta\-Type\-Ma\-na\-ger}.\\[0.7cm]

\texttt{Пакет frontend} см. рисунок \ref{PackageFrontend2}.

Замечания:

\begin{verbatim}
STRATEGY::TEMPLATE:
  SyntaxAnalyzer.launchMainNT()

STRATEGY::CONCRETE STRATEGY:
  ClassBasedDescription.generateAST()
  IRPrologDescription.generateAST()
  AbstractPredicates.generateAST() 
\end{verbatim}

Метод \texttt{SyntaxAnalyzer.launchMainNT} выглядит следующим образом:
\begin{verbatim}
prelaunch();
this.tree=generateAST();
postlaunch();
\end{verbatim}

В классе \texttt{IRPrologDescription} \texttt{get}-методы означают доступ к синтезированным атрибутам соответствующей формальной грамматики.

Класс \texttt{ParserRuleContext} содержит всё то, что употребляется наследованнымм и синтезируемымм атрибутами.

\subsubsection*{Прологовские термы}

\texttt{Пакет internal.core} см. рисунок \ref{PackageInternalCore}.

Этот пакет содержит все представления всвязи с термами, подцелями и правилами.

\begin{verbatim}
COMPOSITE::COMPONENT:
 PTerm
COMPOSITE::COMPOSITE:
 PTerm
COMPOSITE::CONCRETE COMPOSITE:
 PAtom, PVariable, PFunctor, PNumber, PList
\end{verbatim}

\texttt{PAtom.getChildren} возвращает \texttt{null}.
\texttt{PAtom.setChildren} игнорирует множество и при необходимости выделяет предупреждение на консоль.
\texttt{PSubgoal} представляет подцель, которая далее может содержать \texttt{PTerm} как составляющее (не подцелей).


Замечания:

В пакете \texttt{internal.core} (см. рисунок \ref{PackagesFrontendInternalcoreInternalaps}) нет необходимости в \texttt{PSubgoal}, т.к. оно уже содержится в \texttt{PRule}, а также потому, что подцели являются частями запросов (которые не являются частью грамматики построенных абстарктных предикатов).

Метод \texttt{GrammarBuilder.loadRules} читает все \texttt{PRule} и заносит их в формальную грамматику, при этом, атрибуты остаются как в Прологе.
Класс \texttt{GrammarBuilder} реализует оба интерфейса, \texttt{PTermVisitor} и \texttt{PRuleVisitor}, т.к. необходима интерпретация правил и должен иметься один нетерминал в качестве стартового символа грамматики.

\texttt{PTerm} и \texttt{PRule}-экземпляры генерируются в зависимости от экземпляров описаниях (Description-objects).


\subsubsection*{Абстрактные предикаты}

Абстрактные предикаты необходимы для представления и логической обработки в рамках верификации правил Хорна.

Используемые анализаторы имеют один начальный символ нетерминал, однако интерфейс анализатора должен предоставлять возможность распознавать одно или более того нетерминалов данной формальной грамматики, в данном случае (ANTLR-)грамматики.

\texttt{Пакет internal.aps} см. рисунок \ref{PackagesInternalAPs}.

Замечания:

Все классы вовлечены в представление формальной грамматики (далее рассматривается возможность реализации на примере ANTLR).

\texttt{PRule} может быть представлено в виде \texttt{FG\_AP\_Rule}, а также обратно, следующим образом:

\begin{center}
\begin{tabular}{c}
   $\texttt{p1(X,pointsto(X,value1)):- ... .}$\\
   $\Leftrightarrow$\\
   $p1_{x1,x2} \rightarrow \{X_1 \approx X\}\{X_2 \approx pointsto(X,value1)\}, ... .$
\end{tabular}
\end{center}

Далее, \texttt{FG\_NT} имеет атрибуты \texttt{List<PTerm>}, но только не List<FG\_Att>, т.к. оно ссылается только на существующий нетерминал с конкретной последовательностью \texttt{PTerm}, согласно данному порядку определения атрибутов.


\subsubsection*{Термы кучи}

Прологовские термы определены индуктивно над операторами $\{ \circ, || \}$. Полученный терм является итогом предыдущего синтаксического анализа.\\

\texttt{Пакет internal.ht} см. рисунок \ref{PackagesInternalHT}.

Подцели в абстрактных предикатах могут создавать связанный (под-)граф (ради простоты на данный момент, в общем случае с помощью проверок свойства $a\circ a$).

$a_1 \circ a_2 \circ a_3$ может быть записан непосредственно, используя $a_1,a_2,a_3$ (как часть абстрактного предиката) или как абстрактные предикаты
$a_x :- a_1, a_2.$, где оно же далее может опять же выступить в качестве некоторой подцели: $..., a_x, a_3, ...$.

Пример 2) $a_1 \circ (a_2 || a_3)$ можно перезаписать, как абстрактный предикат как: $a_x :- a_2,disj,a_3.$, где имеется некоторая последовательность как часть абстрактного предиката $...,a_1,a_x,...$.


Замечания:

\texttt{HeapTerm} является внутренним представлением кучи на основе ЛРП.

Термы кучи над $\{\circ, ||\}$ может также содержать частичные спецификации. Терм кучи строится пошагово при анализе данных абстрактных предикатов.

\texttt{HeapTerm.simplify} ссылается на \texttt{BuiltinLibrary.simplify_1()}.
Упрощения термов кучи осуществляется с помощью перегруженных встроенных предикатов Пролгоа\\
\texttt{simplify\_2}.

Граф кучи представленный термом кучи, может также описывать не полностью специфицированные части с помощью Пролог-оператора ``\texttt{\_}''. Нормализация данных конъюнктов заключается в лексикографической сортировке порядка всех левых сторон базисных куч.

\texttt{DotGenerator} генерирует векторную графику на основе DOT-формата или в качестве его представления в качестве одной строки, которую позже можно записать в файл, например.


Замечания:

Проводится проверка корректности, например (i) соблюдается ли неповторимость кучи, (ii) связана ли куча, и т.д.\\[0.7cm]

\texttt{Пакет prolog} см. рисунок \ref{PackagesProlog}.

Библиотека \texttt{BuiltinLibrary} содержит все Прологовские предикаты, которые становятся доступными для логического анализатора на основе Пролога.
Если распознавание одного абстрактного предиката будет завершено целиком, тогда проводится проверка $a\circ a$.

Замечания:\\

\texttt{BuiltinLibrary:disj\_0} совершает один шаг в верификации динамической памяти. Для этого могут вводиться специальные метки в случае не достижения границ куч.

\texttt{BuiltinLibrary:unify} предоставляет замену классической унификации термов (\texttt{=/2}) с помощью и без проверки повторений.

Класс \texttt{HeapTerm} ссылается на \texttt{BuiltinLibrary::simplify\_1}.

\texttt{BuiltinLibrary::serialize\_1} почти эквивалентна к \texttt{serialize\_2}, ради исключения, что содержимое выдается на консоль.

\texttt{StackState} необходимо для проверки диапозонов годности (например, является ли символ частью данного контекста, когда все остальные семантические проверки уже успешно завершились). Сигнатур имеет тип: $\sigma:: String \rightarrow Type \rightarrow Value$.


\subsection*{Выводы}

Представленный подход представляет собой новую технологию для  автоматической верификации «$\mapsto$»-утверждений в абстрактных предикатах синтаксическим путём, избегая ручные предикаты  «\textit{сложить}» и  «\textit{раскрыть}».
Если раздел предикатов может быть представлен как корректное множество правил,  синтаксических правил перебора, то ответ будет найден, совпадает ли данная спецификация с данной конфигурацией кучи или нет. Используется атрибутируемая грамматика \cite{grune90} в качестве соединяющей модели представления правил в Прологе и конкретное множество правил перебора.  Атрибутируемая грамматика поддерживает наследованные и синтезированные атрибуты, которые переводятся как параметры головы правила Хорна. Кроме удобного представления значений, с помощью Пролога обработка сверху-вниз и с символами также совпадает с порядком вычисления и вывода. Без модификации логического вывода Пролога \cite{warren83} совершается процесс верификации.
Расширения, например, представленные преобразования термов и атомов, моделирование куч и т.д. не противоречит корневым свойствам  ЛРП (см. главу \ref{chapter:expression} и \ref{chapter:stricter}). Поэтому, свойства не меняются, если соблюдать  локальность спецификации объектных экземпляров. 
Не ссылочные поля объекта необходимо специфицировать только в одном предикате один раз.
Замечание о том, что прологовские правила могут обрабатываться  WAM и поэтому могут пострадать в скорости, нельзя считать справедливым, т.к. интерпретация в принципе может быть заменена компиляцией оптимизированного кода для любой целевой машины. Однако, вопрос быстрой верификации стоит не на первом месте. На первом месте стоят вопросы  сложности и выразимости. Таким образом, умышленно используется не самый быстрый способ верификации.

В отличие от предыдущих подходов, представлен новый подход, отличающийся резко тем, что  символы больше не являются просто локальными переменными.
Символы как логические используются произвольно в логических термах и предикатах. Язык утверждений  и верификации  теперь совпадают при использовании диалекта «\textit{Пролог}». Теперь дополнительные преобразования для обеих сторон отпадают, также как и имеющиеся ограничения в предыдущих подходах (см. \cite{bertot04}).

Открытым вопросом остаётся, как  \textit{правила ошибки} могут быть эффективно подключены к  генерации  контр-примеров и какие имеются способы для отслеживания ошибок в связи с этим? Из-за приостановки при первой ошибке, все остальные ошибки не рассматриваются, если \textit{настоящей} ошибкой по происхождению является одна из последовательных. Поэтому предлагается рассматривать  «\textit{конечный автомат}» для сравнения и минимизации  «\textit{прописной дистанции}» между имеющейся и ожидаемой кучами, аналогично глобальному  \textit{алгоритму Левенштейна} \cite{levenshtein65}, \cite{wagner74}, \cite{andoni12} для вычитания минимального количества операций замены за $\Theta(n^3)$, удаления и вставления подкуч.
Метод Левенштейна может быть не обобщён, но модифицирован и расширен к деревьям \cite{america89}, \cite{zhang89}, \cite{rutten96}.
«\textit{Метод паники}» \cite{grune90}, \cite{parr12} который при возникновении ошибки во время синтаксического анализа пытается продолжить перебор с помощью обнаружения самого ближнего следующего состояния, гарантированно верное и стабильное при удалении всех «\textit{лишних}» токенов.

Далее предлагается рассматривать вопрос о  частичных спецификациях, как константных функций \textbf{emp}, \textbf{true} или \textbf{false} для достижения упрощённой верификации (см. следующие главы). В частности, задаётся вопрос, можно ли с помощью частичных спецификаций выявить недостижимые мосты, которые предположительно связывают независимые кучи для исправления имеющейся спецификации.

\section*{Заключение}

«\textit{\textbf{Значимые результаты}}» Во время ра\-бо\-ты были по\-лу\-че\-ны результаты, которые можно классифицировать:
\underline{в области автоматизации доказательств:}
\begin{enumerate}
  \item Анализ разрыва между языками спецификации и верификации динамической памяти и анализ его сближения.
  \item Анализ и измерение выразимости трансформаций термов в логической парадигме на примере диалекта Пролога. Выявление, что логическое описание утверждениями из символов, переменных и термов в общем, а также реляций,
естественно лучше приспосабливает и ближе к верификации задач динамической памяти в Прологе.
  \item Термы Пролога как главное промежуточное представление на протяжении всего конвейера статического анализа проблем динамической памяти.
  \begin{itemize}
    \item Обоснование и использование логического языка в качестве спецификации и верификации. Главная мысль --- логическое программирование для решения верификации теорем.
    \item Предложенная платформа, изначально построена как открытая для расширений и вариаций. Платформа позволяет включение дополнительных экстерных библиотек и логических решателей в Прологе.
    \item Исследование демонстрирует преимущество Пролога.
   \end{itemize}
  \item Предложение об автоматизации верификации куч, как синтаксический перебор абстрактных предикатов, основано на пошаговой обработке граней графа кучи.
\end{enumerate}

\underline{выразимости языков спецификации и верификации:}
\begin{enumerate}
  \item Теперь доказуемые абстрактные предикаты могут без ограничения содержать символы и переменные согласно правилам Хорна, что расширяет сегодняшние ограничения. Теперь предикаты могут иметь потенциально любые рекурсивные определения, где выбранный метод синтаксического перебора может ограничить рекурсию, возможно потребовав переоформление правил.
  \item Повышение выразимости, благодаря ужесточению операций над кучами. Имеется строгое соотношение между выражением кучи и ее графом. Теперь пространственные операторы куч формируют однозначные термы утверждения. Доказываются алгебраические свойства моноида и группы, которые позволяют определение вычислений над кучами. Предложение о расширении и возможной стандартизации «\textit{UML/OCL}» указателями.
  \item Повышение выразимости и полноты, благодаря частичной спецификации куч переменных и объектных экземпляров. Сравнимость теперь позволяет все входные кучи одного набора сравнить и «\textit{ямы}» или «\textit{пересечения}» эффективно
вычислить. Теперь правила могут быть написаны, охватывая потенциально больше случаев.
  \item Теоретическая возможность, исходя из ужесточенного представления, выявить и проверить данную входную программу на минимальность для покрытия граф кучи полностью («\textit{проблема содержимого}») при исключении определений абстрактных предикатов в общем случае.
\end{enumerate}

\underline{Выводы:}
\begin{enumerate}
  \item Единый язык спецификации и верификации. Унификация языков спецификации и верификации к логическому языку программирования диалекту Пролога представляется как одна из крайних мер борьбы с многозначностью и позволяет решить поставленные перед собой проблемы выразимости и полноты. Чем меньше имеется языков и особенностей, тем меньше требуется рассматривать. Особенно, когда языки спецификации и верификации являются по синтаксису и семантике простым единым языком. Процесс верификации представляется как сравнение имеющейся и должной кучи.
  
  По ходу исследований существующих средств верификации динамической памяти и соответствующих проектов в качестве примеров, мною были разработаны «\textit{shrinker}» и «\textit{builder}» в качестве вспомогательных программных систем. Итогом расследования оказалось, что проблемы решённые верификаторами на практике крайне редкие или применимы только в случаях, для которых они были разработаны, а применение в других областях практически исключено.

  \item «Нет» новым и дубликатным определениям. Использование Пролога позволяет избежать введение всё новых конвенций, определений, языков спецификации, программирования, верификаций и даже сред. Отсутствует необходимость введения определений, которые оказывались в прошлом лишними и дубликатными. Выявлены важные свойства и критерии, которые учтены при построении платформы. Теперь кучи и части могут быть выбраны термами и символами. Абстрактные предикаты представляются обыкновенными прологовскими правилами. Используется промежуточное термовое представление, в качестве входного языка, а также разрешается Си-диалект, либо любой другой императивный язык программирования. Также допускается полное отсутствие входного языка, в таком случае, разрешается непосредственная передача промежуточного представления в ядро верификации.

  \item Повышенная выразимость в связи с символами на основе Пролога. Он основан на и реализует предикатную логику в качестве модели и реализации. Предикаты куч стали по-настоящему логически абстрактными (и не «\textit{Абстрактными}»), а именно прологовскими предикатами. В предикате символы допускаются в любых местах и ограничения касательно символьного использования отсутствуют. Логический вывод производится символами, логические выводы не возвращаются, благодаря унификации и реализации стека с возвратом на основе машины Уоррена. Проводилось обсуждение, что граф кучи можно полностью выразить. Объекты представляются как кучи, методы классов специфицируются не обязательно пред- и постусловием аналогично к процедурам. Правило фрейма способствует аналогичному поведению у процедур и позволяет включение в фрейм утверждения по всему циклу существования объекта. Объекты могут быть переданы в качестве символьного параметра абстрактным предикатам, таким образом, повышается модульность и благодаря неполным константным функциям неполный объект необходимо специфицировать.

  \item Реляционная модель правильна и нужна для эффективного описания абстрактных предикатов. Реляционная модель в отличие от других представлений была в итоге выбрана как наиболее удобная для описания и проверки куч. Также упоминают источники из списка литературы. Хотя численный анализ трудный, из-за выбора адекватных примеров и численной значимости, всё равно, проведённая экспертиза показала, что  запись в прологовских термах для проверенного набора примеров всегда лучше, а в среднем арифметическом на 30\% компактнее и более гибкая, чем например эквивалентная функциональная запись. Приведены типичные примеры, которые не могут быть эффективно или точно представлены в функциональных или императивных парадигмах.

  \item В Прологе предложена платформа конвейера. Платформа предлагает крайне простую архитектуру по обработке входного языка и правил верификации. Платформа основывается на Прологе, на конвейере фаз по обработке динамической памяти, которые могут индивидуально меняться и добавляться. Платформа отличается принципами: (а) автоматизацией, (б) открытостью, (в) расширяемостью и (г) понятностью. Выводимо только то, что выводимо из данного набора прологовских правил — это предоставляет известно резкое, но умышленное, ограничение. Правила верификации, леммы и индуктивно определённые абстрактные предикаты, всё записывается на Прологе. Модификации в структуре доказательства получаются задачами программирования и могут быть проверены на каждой фазе.

  \item Упрощение спецификации и верификации ради ужесточения. Нашлось ужесточенное определение одночленной кучи, вместо множественного числа куч. Ужесточение заключается в атомизме куч. Благодаря этому, чётче, проще и короче можно специфицировать кучи. Принудительное сравнение куч со всеми остальными кучами отпадает, основные кучи сохраняются. Сейчас кучи можно эффективно сравнивать. Благодаря групповым свойствам, теперь можно высчитывать, устанавливать в наборе правил не хватающую кучу и таким образом, можно проверять полноту. Далее, благодаря частичным  спецификациям, сейчас можно уменьшать и редуцировать количество и объём специфицируемых правил с кучами —  исключения исключаются, полнота доказывается с помощью неполноты. На практике ужесточение означает, что подвыражения кучи больше не должны полностью анализироваться.
  
  Так как «\textit{UML/OCL}» не содержит указателей, но представленные условия и критерии совместимы, то предлагается соответствующее расширение.

  \item Возможность определения всё новых теорий над кучами. Выявленные алгебраические свойства позволяют определять равенства и неравенства между кучами. Эти формальные теории могут непосредственно задаваться в Пролог, либо соответствующим SAT-решателям (возможно, также на основе Пролога). Теперь правила и теории проверяются проще.

  \item Абстрактные предикаты определяют атрибутируемую грамматику. Аналогично к проверкам схем при слабо структурированных данных, абстрактные предикаты являются шаблонами, которые содержат «дыры». Они наполняются во время верификации. Эта мета-концепция приводит к синтаксическому анализатору. В итоге логический вывод абстрактных предикатов можно интерпретировать как синтаксический анализ. Сравнение актуальной кучи с ожидаемой, можно расталкивать как проблему слов в формальных языках. Естественно, этот универсальный метод имеет ограничения касательно распознавателя, однако, подход очень практичен, несмотря на теоретический формализм, на котором основан процесс перевода.

  \item Универсальный подход к автоматизации утверждений. Предикаты и доказательство производятся на основе Пролога и могут мульти-парадигмально включать любые другие библиотеки. Доказательство приближается к программированию. Необходимость повторно определять процесс перевода и описания правил отпадает потому, что Пролог всё это включает в себя, хотя сам язык минимален.
 Необходимость проводить свёртывание и развёртывание вручную отпадает. Если куча не выводима по данным правилам, то можно их переписать. Синтаксический анализ решим и терминация гарантирована, например, на основе LL-распознавателей за линейное время. Процесс генерации анализатора производится лишь после изменения правил. В зависимости от доступности начальных нетерминалов, распознавание подграмматик избегает необходимости принудительного построения всё новых анализаторов. На практике все нетерминалы могут теоретически быть использованы после построения анализатора. Синтаксические анализаторы не нуждаются в дальнейших исследованиях, т.к. они отлично изучены на сегодняшний день, следовательно, не требуется дальнейшее исследование анализатора, который используется в данной реализации.
 
 В случае ошибки, автоматическая генерация и выдачи контр-примера, производятся анализатором без дополнительных затрат.
\end{enumerate}

«\textit{\textbf{Дальнейшее исследование}}» 
В числе дальнейших исследовательских и практических работ можно выявить следующие (не отсортировано по важности):

\begin{enumerate}
 \item \textit{Интеграция решателя}. После определения теории куч, для ускоренного логического вывода необходимо подключение «\textit{SAT}»-решателя. Решатель также может быть написан в самом Прологе.
 \item \textit{I Интеграция в существующие среды}. Помимо упомянутых ограничений, необходимо на практике преобразовать термовое IR в IR пактов «\textit{GCC}», либо «\textit{LLVM}»  для индустриального применения.
 \item \textit{II Интеграция в существующие среды}. Точечный анализ мест «\textit{разрастания}» (образов) может также привести к более выгодному выделению и расположению памяти, подключая аппроксимацию метод абстрактной интерпретации. Также необходимо рассмотреть возможность выражений с статически выявляемыми офсетами ради широкой практической применимости.
 \item \textit{Изоморфизм куч}. Проверка изоморфизма графов в случае возникновения вопроса: может ли куча в принципе быть представлена данными указателями, при этом наименования кучи не уточняются?
 \item \textit{Нахождения минимального отклонения от спецификации}. При синтаксическом переборе может иметь смысл найти глобальный минимум прописной дистанции для проверки и возможной редакции спецификации.
 \item \textit{Анализ зависимостей указателей.} В отличие от анализа псевдонимов, рекомендуется исследовать расширение SSA-формы для динамических переменных, что до этого ещё никем не рассматривалось. Ожидается расхождение, т.к. понятие о зависимости различается. Возможность подключения может также подтвердить более безопасный код при оптимизации, т.к. на первый взгляд несвязанные ячейки связываются, либо точно не связываются. Это также способствует эффективному коду.
 \item \textit{Верификация кодов баз}. В этой работе после устранения технических ограничений в связи с быстрой прототипизацией, предлагается проводить анализ имеющихся кодов баз. Для этого предлагается использовать в открытом доступе пользования, потому, что они также значительно различаются.
 \item \textit{Исследование применимости языков трансформаций на практические нужды}. Обсуждённые языки трансформации могут иметь компактную запись и также часто записываются с помощью правил состояний до и после трансформации, однако ожидается большое расстояние между используемой моделью кучи, моделью выразимости и содержанием динамических куч, входным языком.
 \item \textit{Абстракция вывода.} Некоторые диалекты Пролога поддерживают абдуктивный вывод. В общем, абстракция механизма, который позволяет правила Пролога упростить, сравнить с подходящими пред- и постусловиями и выбрать наиболее подходящее правило. Порядок вывода можно упростить, если соотношения между термами обратимы. Это как раз было предложено и обсуждено в этой работе.
\end{enumerate}

«\textit{\textbf{Рекомендации для применения}}» 
В числе рекомендаций по итогам работы можно вывести следующие:

\begin{itemize}
 \item Практически спецификация куч стала (немного) проще, если даже не больше. Данная программа аннотируется спецификацией. 
 \item Абстрактные предикаты должны быть синтаксически разборчивыми, тогда подцели соответствующей формальной грамматики можно распознать. В зависимости от мощи анализатора часто переписывание правил приводит к разборчивости.
 \item Использование ужесточенных операторов даёт возможность в спецификациях\\ «\textit{UML\-/OCL}» пространственность объектных экземпляров выражать.
 \item Введение всё новых языков программирования не обязательное. Язык программирования может даже полностью отсутствовать. Расширяемое и модифицируемое промежуточное представление позволяет максимальную гибкость. Архитектура верификации имеет те же самые свойства.
 \item Сравнение куч с данными выражениями в правилах верификации способствует выявить не специфицированные кучи, а более широко способствует к решению вопроса полноты.
 \item Перегрузка значений реляций упрощает гибкость логического вывода, например, в связи с абдукцией, но также может быть использована различными способами.
 \item Использование SAT-решателя для задаваемых теорий куч, будь-то в Прологе или мульти-парадигмально, может привести к резкому ускорению преобразования куч в нормализованную и упрощённую форму.
 \item Использование мемоизатора (в некоторых диалектах Пролога «\textit{tabling}») для абстрактных предикатов может привести к резкому ускорению логического вывода, в частности при пошаговой верификации.
 \item При введении и модификации фаз работы с динамической памятью рекомендуется придерживаться к термовому представлению.
\end{itemize}

  \section*{Nomenclature} 
\addcontentsline{toc}{section}{Nomenclature}

\begin{tabular}{rcl}
 \textbf{AA} & \qquad & Alias Analysis\\
 \textbf{ABI} & \qquad & Application Binary Interface\\
 \textbf{ACC} & \qquad & Abadi-Cardelli Calculus/(-i)\\
 \textbf{ADT} & \qquad & Abstract Data Type\\
 \textbf{ALC} & \qquad & Abadi-Leino Calculus/(-i)\\
 \textbf{ANTLR} & \qquad & a compiler-compiler\\
 \textbf{AP} & \qquad & Abstract Predicate\\
 \textbf{API} & \qquad & Application Programming Interface\\
 \textbf{AST} & \qquad & Abstract Syntax Tree\\
 \textbf{AT} & \qquad & Automated Testing\\
 \textbf{BB} & \qquad & Basic Block\\
 \textbf{BISON} & \qquad & a compiler-compiler\\
 \textbf{BNF} & \qquad & Backus-Naur Form\\
 \textbf{BSS} & \qquad & Block Started by Symbol\\
 \textbf{CF} & \qquad & Context-free\\
 \textbf{CI} & \qquad & Code Introspection\\
 \textbf{Clang} & \qquad & LLVM's compiler frontend\\
 \textbf{CompCert} & \qquad & Leroy's Compiler Prover\\
 \textbf{Coq} & \qquad & a theorem prover\\
 \textbf{CPU} & \qquad & Central Processing Unit\\
 \textbf{CFG} & \qquad & Control-flow Graph\\
 \textbf{CRT} & \qquad & Church-Rosser Theorem\\
 \textbf{DCG} & \qquad & Definite Clause Grammar\\
 \textbf{DEC} & \qquad & Digital Equipment Corporation\\
 \textbf{DOT} & \qquad & a textual vector graphics format\\
 \textbf{ESC} & \qquad & Extended Static Checker (Java)\\
 \textbf{GC} & \qquad & Garbage Collection\\
 \textbf{GCC} & \qquad & GNU Compiler Compiler\\
 \textbf{GIMPLE} & \qquad & GNU tree IR of Programming units\\
 \textbf{GNU} & \qquad & GNU is Not Unix\\
 \textbf{GUI} & \qquad & Graphical User Interface\\
 \textbf{HDD} & \qquad & Hard Drive Disk\\
 \textbf{EBNF} & \qquad & Extended BNF\\
 \textbf{IR} & \qquad & Intermediate Representation\\
 \textbf{ISO} & \qquad & Intl. Standardization Organization\\
 \textbf{LCF} & \qquad & Logic of Computable Functions (Scott)\\
 \textbf{LISP} & \qquad & List Processing (language)\\
 \textbf{LL} & \qquad & Left-to-Left parser\\
 \textbf{LLVM} & \qquad & Low-level Virtual Machine\\
 \textbf{LR} & \qquad & Left-to-Right parser\\
 \textbf{ML} & \qquad & Meta-Language\\
 \textbf{MVC} & \qquad & Model-View-Controller\\
 \textbf{NT} & \qquad & Non-Terminal\\
 \textbf{OC} & \qquad & Object Calculus\\
 \textbf{Ocaml} & \qquad & Oca-ML\\
 \textbf{OCL} & \qquad & Object Constraint Language\\
 \textbf{OOM} & \qquad & Object-oriented Modelling\\
 \textbf{OS} & \qquad & Operating System\\
 \textbf{PCP} & \qquad & Post's Correspondence Problem\\
 \textbf{PCF} & \qquad & Programming Computable Functions\\ 
 \textbf{PL} & \qquad & Programming Language\\
 \textbf{Prolog} & \qquad & Programming in Logic (language)\\
 \textbf{ProLogika} & \qquad & a Prolog-dialect heap verifier\\
 \textbf{PVS} & \qquad & a theorem prover\\
 \textbf{RC} & \qquad & Region Calculus
\end{tabular}

\begin{tabular}{rcl}
 \textbf{RAM} & \qquad & Random-Access Memory\\
 \textbf{ROSE} & \qquad & a compiler-compiler\\
 \textbf{RTTI} & \qquad & Run-time type identification\\
 \textbf{SA} & \qquad & Shape Analysis\\
 \textbf{SL} & \qquad & Separation Logic\\
 \textbf{SAT} & \qquad & SATisfaction of a logical formula\\
 \textbf{SMT} & \qquad & Satisfactority Modulo Theory\\
 \textbf{SSA} & \qquad & Single Static Assignment\\
 \textbf{SSD} & \qquad & Solid State Drive\\
 \textbf{tuProlog} & \qquad & a Prolog backend\\
 \textbf{TIOBE} & \qquad & The Importance of Being Earnest (index)\\
 \textbf{UML} & \qquad & Unified Modeling Language\\
 \textbf{VM} & \qquad & Virtual Machine\\
 \textbf{WAM} & \qquad & Warren's Abstract Machine\\
 \textbf{XOR} & \qquad & eXclusive OR\\
 \textbf{YACC} & \qquad & a compiler-compiler\\
 \textbf{XML} & \qquad & eXtensible Meta Language\\
 \textbf{iff} & \qquad & if and only if\\
 \textbf{algo} & \qquad & algorithm\\
 \textbf{cf} & \qquad & compare\\
 \textbf{conv} & \qquad & convention\\
 \textbf{cor} & \qquad & corollary\\
 \textbf{def} & \qquad & definition\\
 \textbf{eg} & \qquad & for instance\\
 \textbf{eqn} & \qquad & equation\\
 \textbf{fig} & \qquad & figure\\
 \textbf{lem} & \qquad & lemma\\
 \textbf{obs} & \qquad & observation\\
 \textbf{poset} & \qquad & partially-ordered set\\
 \textbf{stderr} & \qquad & console standard error\\
 \textbf{stdout} & \qquad & console standard output\\
 \textbf{theo} & \qquad & theorem\\
 \textbf{thes} & \qquad & thesis\\
 \textbf{wlog} & \qquad & without loss of generality\\
 \textbf{wrt} & \qquad & with respect to
\end{tabular}

  \newpage

\section*{Glossary}
\addcontentsline{toc}{section}{Glossary}


\myglsentry{Absorption}
\myglsdesc{
In a boolean algebra, given some \gref{propositions} $p$ and $q$ the following holds: 
$p \wedge (p \vee q) \equiv p \vee (p \wedge q) \equiv p$.
}

\myglsentry{Abstract Data Type}
\myglsdesc{
Precursor concept of the \gref{class} concept where operations define interfaces, \gref{methods} and \gref{fields}.
Defines operations for both, which change its interior behaviour and communication with external objects.
}

\myglsentry{Abstract Interpretation}
\myglsdesc{
\gref{Static analysis method} based on interval calculi and extended denotations.
}

\myglsentry{Abstract Predicate}
\myglsdesc{
In this work, a heap predicate with term parameters.
Heaps are composed of simple heaps that are atomic \gref{points-to} \gref{assertions}.
}

\myglsentry{Alias}
\myglsdesc{
A \gref{pointer}, which points to the same content as another pointer.
}

\myglsentry{(Algebraic) Field}
\myglsdesc{
A discrete structure with two operations (a special ring), for which hold, e.g. rules over $+$ and $\cdot$, with two neutral and inverse elements.
If the field is finite, then the field is Galois.
}

\myglsentry{(Algebraic) Group}
\myglsdesc{
A \gref{monoid} in which for each element the inverse element exists.
}

\myglsentry{Antecedent}
\myglsdesc{
$A$ is the antecedent for any rule of kind: "\textit{if $A$ then $B$}".
}

\myglsentry{Assertion (about a Program Property)}
\myglsdesc{
Assertion about a calculation state described by some \gref{formal language}, e.g. by first-order \gref{predicate} logic.
}

\myglsentry{Atomism}
\myglsdesc{
An analytical (philosophical) concept suggests breaking complex problems down into small individable units, namely atoms.
Has found wide adaption in many branches of science.
Here a discussion of heaps and compositions were discussed.
}

\myglsentry{Attributed Grammar}
\myglsdesc{
A \gref{formal grammar} with rules that are extended by attributes may either be \gref{inherited} or \gref{synthesised}.
Attributes may be considered as parameters in their rule bodies.
}

\myglsentry{Automated Theorem Proving}
\myglsdesc{
Proof of a \gref{theorem} using formal apparatus, s.t. the proof runs through from the beginning till the end without the need for human interaction. This is in contrast to \gref{partially automated proofs} or \gref{interactive proofs}, which both require interaction.
}

\myglsentry{Automatically (Stack-) allocated variable}
\myglsdesc{
See \gref{stack}
}

\myglsentry{Calculation State}
\myglsdesc{
Description of \gref{stack} and \gref{heap}, so a list of all \gref{variables}, which during a calculation step are allocated having some \gref{content} assigned.
}

\myglsentry{Church's (Term) Arithmetics}
\myglsdesc{
Arithmetic of natural numbers over terms instead of numbers.
The mapping between both domains is defined by $0 \mapsto z$, $1 \mapsto s(z)$, $2 \mapsto s(s(z))$, and so on.
So, $m+n$ may be defined as $s^m + s^n = s^{m+n}$.
Minus as plus' inverse can be defined in analogy to that by structural induction and by term unification and \index{pattern matching} pattern matching.
The advantage of terms over natural numbers is \gref{invertibility} of terms due to term unification.
}

\myglsentry{Class}
\myglsdesc{
A record (struct) type uniting \gref{fields} and \gref{methods}.
}

\myglsentry{Class Field}
\myglsdesc{
Memory cell of arbitrary type (e.g. a class) of a class-object.
}

\myglsentry{Class Inheritance}
\myglsdesc{
Inheritance of \gref{fields} and \gref{methods} from a super-class into a sub-class according to defined \gref{visibility levels}.
}

\myglsentry{Class Object}
\myglsdesc{
A variable referring to some allocated, atomic and non-overlapping memory region, instantiated by a corresponding \gref{class} or subclasses.
}

\myglsentry{Class Polymorphism}
\myglsdesc{
An \gref{object instance} is of some \gref{class}.
Type membership may change.
It may be assigned to subclasses which would cause a different runtime behaviour if a subclass' method is overridden and called.
}

\myglsentry{(Class) Visibility Mode}
\myglsdesc{
\gref{Fields} and \gref{methods} of a \gref{class} may have an explicit visibility mode assigned and be inherited.
Unlike C structs, fields and methods are private by default.
The default class visibility is public.
Visibility inherited in Java stays the same or is reduced by one level.
In C, it is similar to Java, except for inheritance modifiers.
The visibility levels are usually: \texttt{private, public, protected}.
}

\myglsentry{Clause}
\myglsdesc{
Disjunctive normal-form in which all \gref{literals} are interconnected.
}

\myglsentry{Co-Routine}
\myglsdesc{
A (sub-)procedure call simultaneously to some main procedure.
\gref{Functionals} may often run as co-routine.
}

\myglsentry{(Codd's) Relational Algebra}
\myglsdesc{
Here relational "\textit{functions}" as predicates, particularly as \gref{Prolog predicates}.
The base operations in the original algebra are defined over sets: union, intersection, Cartesian product, projection, selection and renaming.
}

\myglsentry{Code Introspection}
\myglsdesc{
Modification and reading of any object's \gref{fields} and \gref{methods} during runtime.
Code to be loaded may only be available during runtime, e.g. constructed from external data sources.
}

\myglsentry{Control Inversion}
\myglsdesc{
Usually, control is passed from the calling side (caller) to the called side (callee).
Inversion flips the call direction and is occasionally needed in order to \index{Information Hiding Principle} stick to the information hiding principle.
It is an essential technique for \gref{framework} design, particularly for all callbacks.
}

\myglsentry{Consequent}
\myglsdesc{
$B$ is the consequent for any rule of kind: "\textit{if $A$ then $B$}".
}

\myglsentry{Content Of Pointer}
\myglsdesc{
Data in the memory cell pointed by some \gref{pointer}.
}

\myglsentry{Counter-Example}
\myglsdesc{
A single example showing that a certain \gref{assertion} does universally not hold.
}

\myglsentry{Counter-Example Generation}
\myglsdesc{
See \gref{counter-example}.
Generation is due to Prolog term unification or abstract predicate derivation of calculated and expected heap terms.
}

\myglsentry{Data Dependency Analysis}
\myglsdesc{
Analysis based on the places in code for variables, namely its definition and uses and how these are influencing other variables, e.g. in a SSA-form.
}

\myglsentry{Declarative Programming Language}
\myglsdesc{
Logical or functional programming languages.
"\textit{Imperative}" solves a problem by an instruction list, where "\textit{declarative}" only describes.
}

\myglsentry{Dependent Type (also Recursive or Generalised type)}
\myglsdesc{
Type depending on type(s).
E.g. records (struct).
}

\myglsentry{Dynamic Memory}
\myglsdesc{
The unorganised memory segment as \gref{stack}, allocated by a process spawn by the OS.
Dynamic memory is synonym of \gref{heap} and contains all dynamically allocated variables.
}

\myglsentry{Dynamic Memory Issues}
\myglsdesc{
Problems such as \gref{dangling pointers}, \gref{memory leaks}, \gref{invalid memory access}.
}

\myglsentry{Dynamic Variable}
\myglsdesc{
Allocation and deallocation of \gref{heap} memory are specified by reserved statements, like \texttt{new} or \texttt{free}.
Non-specified deallocations are freed when the OS releases the primary process.
}

\myglsentry{Earley-Parser}
\myglsdesc{
Some CF-parser pushing in rule definitions into the \gref{stack}.
}

\myglsentry{Edge-based Heap Graph}
\myglsdesc{
\gref{Heap graph} in which edges are a list with two columns: source and destination.
}

\myglsentry{Erroneous Code Localization}
\myglsdesc{
An \gref{(erroneous) symptom} search process terminating in the genuine reason a problem occurs.
A manual search is called debugging.
}

\myglsentry{Error-Production Rules}
\myglsdesc{
Rules in case of a syntax error during parsing that define the further steps, including a possible error messaging, recovery or further restoration tasks to be triggered.
}

\myglsentry{Evaluation Ordering}
\myglsdesc{
The ordering in which (sub-)expressions are evaluated, e.g. left-to-right, outermost-first, \gref{lazy evaluation}.
}

\myglsentry{Fail-As-Error}
\myglsdesc{
Prolog built-in \gref{predicate} "\texttt{fail}" in combination with \gref{cut} alternatives.
}

\myglsentry{Finite Automaton}
\myglsdesc{
A simple directed graph representing \gref{calculation states} and whose edges represent state transitions from the source state to the destination state.
An automaton has one starting state and one or more final states.
Transitions are rules which must be right-recursive, so of kind $A \rightarrow a B$, where $a$ denotes a terminal, and $A, B$ are non-terminals.
Non-terminals are replaced with their right-hand side definition.
}

\myglsentry{Fixpoint Combinator}
\myglsdesc{
A syntactic, artificial operator is introduced and is no element of any programming language.
It is to simulate recursion.
Combinators are elements of $\lambda$-calculi.
}

\myglsentry{Formal Grammar}
\myglsdesc{
A formal notation proposed by Noam Chomsky to generate non-ambiguous formal languages, in analogy to natural language's grammars.
Programming, specification, and verification languages are examples of formal languages.
A formal language is defined as $(S,T,NT,P)$, where $S \in NT$ denotes some starting non-terminal, $T$ a set of terminals, $NT$ a set of non-terminals, and $P$ a set of production rules.
The expressibility level of a formal language depends on the $P$.
}

\myglsentry{Formal Language}
\myglsdesc{
Generated language for a given \gref{formal grammar}.
}

\myglsentry{Formal Language Mismatch}
\myglsdesc{
Formal languages may have different purposes, they differ in syntax, semantics and intuitive meaning.
Symbols and alphabet may have different connotations with different level of ambiguity.
The differences cause problems with expressibility, comparability, and further.
}

\myglsentry{Framework}
\myglsdesc{
A special library, also a non-standalone software, with dedicated functions.
Framework functions often are due to a very special purpose only and have to obey dedicated instantiation constraints and are specially designed with software extension and variability points.
A framework defines roles of interaction between several actors as abstract data types.
The interaction points are designed by contract.
A user does not have to know all details functions are implemented, but the user must know how to obtain and modify functionality as well as how to extend behaviour.
}

\myglsentry{Garbage Collection}
\myglsdesc{
E.g. \gref{generational}, \gref{mapped}.
See \gref{heap garbage}.
}

\myglsentry{Generalized Heap}
\myglsdesc{
Is either a \gref{simple heap}, a composition of simple heaps, or a complex \gref{heap} possibly containing \gref{abstract predicates} calls as \gref{subgoals}.
}

\myglsentry{Generational Garbage Collection}
\myglsdesc{
\gref{Garbage collection} where pointers with an older timestamp or content since the last modification is moved or removed entirely from \gref{dynamic memory}.
}

\myglsentry{Global Variable}
\myglsdesc{
Global variables are accommodated once to the \texttt{.bss} memory segment managed by the OS containing uninitialised data.
A global variable may be covered by other non-global variables, e.g. by \gref{local variables} with the same qualified name.
}

\myglsentry{(Graph) Dominator (Vertex)}
\myglsdesc{
Vertex in a directed graph relative to another vertex, so no alternative path exists, except all passing the dominator vertex.
}

\myglsentry{Heap Address Space}
\myglsdesc{
Linearly addressable memory chunk in virtual memory allocated by an OS process occupied by the \gref{heap} segment.
}

\myglsentry{Heap Assertion}
\myglsdesc{
In terms of points-to model, an assertion about \gref{pointers}, their interdependencies and content in \gref{heap}.
}

\myglsentry{Heap Cell}
\myglsdesc{
A contiguous memory leap whose size is defined by the corresponding \gref{pointer's} type.
}

\myglsentry{Heap Configuration}
\myglsdesc{
Some concrete ground \gref{heap graph}.
}

\myglsentry{Heap Conjunction}
\myglsdesc{
Two \gref{heaps} are connected, implying at vertex is shared between both.
}

\myglsentry{Heap Disjunction}
\myglsdesc{
Two \gref{heaps} are separate from each other, so there is no connection between both.
}

\myglsentry{Heap Inversion}
\myglsdesc{
\gref{Heap conjunction} together with its inverse \gref{heap} is defined as the \gref{empty heap}.
}

\myglsentry{Heap Formula}
\myglsdesc{
Some \gref{heap assertion} has no unground free \gref{variables}.
A heap formula is subject to interpretation, so it is true or false for some concrete heap.
}

\myglsentry{Heap Frame}
\myglsdesc{
Invariant \gref{heap} parts used to specify procedure calls.
}

\myglsentry{Heap Garbage}
\myglsdesc{
\gref{Heap} parts that are not pointed by \gref{pointers} nor locations.
Heap parts are subject to \gref{garbage collection}.
}

\myglsentry{Heap Graph}
\myglsdesc{
A graph representation of a \gref{heap}, where edges define dependencies between heaps and vertices are pairs $loc \mapsto val$.
}

\myglsentry{Heap Interpretation}
\myglsdesc{
Comparison between a given \gref{heap} and a specified.
}

\myglsentry{Heap Structure}
\myglsdesc{
Spatial connected data structure located in the \gref{heap} segment.
Its location may either be in \gref{stack} or \gref{heap}, and where its content is in heap.
}

\myglsentry{Higher-Order Function}
\myglsdesc{
Function accepting functions as input and output.
}

\myglsentry{Hindley-Millner Type Systems}
\myglsdesc{
\gref{Type checking} with typing rules, where the major inference rule is: Given some type $a$ and some \gref{functional} (e.g. some unary operator) of type $a \rightarrow b$, then the result will have type $b$.
}

\myglsentry{Hoare Calculus}
\myglsdesc{
A formal method proving the presence of properties for a given program using mathematical-logical formulae.
The effect of a given statement for some imperative programming language is caught by a \gref{calculation state} before execution, called \gref{pre-condition}, and after execution, called \gref{post-condition}.
}

\myglsentry{Hoare Triple}
\myglsdesc{
A triple of kind $\{P\}C\{Q\}$, where (1) \gref{calculation state} $P$ before $C$ is executed, (2) statement $C$, (3) calculation state $Q$ after $C$ is executed.
}

\myglsentry{Homomorphism}
\myglsdesc{
If there is one function (or operation) $g$, then there is a homomorphism w.r.t. to yet another function $h$, if $h(g(x_0,x_1,\cdots , x_n)) \equiv g(h(x_0), h(x_1), \cdots , h(x_n))$ holds, where $\forall j.x_j$ are term arguments.
}

\myglsentry{Howard-Curry Isomorphism}
\myglsdesc{
Postulate according to which proving and programming are interconnected.
}

\myglsentry{I/O-Term (in Prolog)}
\myglsdesc{
A Prolog term that is used as \gref{I-term} and \gref{O-term} at the same time.
This is possible, for instance, in unground composed terms.
}

\myglsentry{I-Term (in Prolog)}
\myglsdesc{
A Prolog term that can, but not necessarily has to, be ground.
Applied to a subgoal, expects term is used but remains untouched.
}

\myglsentry{Incomplete Heap}
\myglsdesc{
Partially defined heap, which may contain unassigned symbols.
Incomplete heaps may match patterns for short specifications.
}

\myglsentry{Inductive Structure Definition}
\myglsdesc{
\gref{Linear lists}, binary tree, \gref{heap}, etc.
}

\myglsentry{Information Hiding Principle}
\myglsdesc{
States by default \gref{abstract data types} need to restrict \gref{visibility levels} as much as possible in order to avoid too lax permission granting, so for instance, communication interfaces may adequately be defined, varied and extended later on.
}

\myglsentry{Inherited Attribute}
\myglsdesc{
An attribute in an \gref{attributed grammar} is passed from the calling rule down to the called rule.
Here \gref{predicates} may act like rules.
}

\myglsentry{Initial Denotation}
\myglsdesc{
Denotation assigned to variables.
Uninitialised variables may lead to a series of problems, especially when reading an uninitialised variable.
\gref{Heap variables} and data on the \texttt{.bss}-segment are not initialised.
}

\myglsentry{Input Token Stream}
\myglsdesc{
A stream of \gref{tokens} obtained during lexical analysis.
}

\myglsentry{Intermediate Representation}
\myglsdesc{
\gref{Terms}, triads, tetrads and further notations representing some incoming program and annotated verification conditions and abstract \gref{predicates}.
}

\myglsentry{Inter-Procedural Alias-Analysis}
\myglsdesc{
Calculation of \gref{alias} parameters and \gref{global variables}, which may change between procedure calls.
Compared to the \gref{intra-procedural} analysis, inter-procedural analyses are more complicated and usually are bound by exponential complexity. 
}

\myglsentry{Intra-Procedural Alias-Analysis}
\myglsdesc{
See \gref{inter-procedural alias analysis}.
It is only inside a procedure and often analyses are bound by polynomial complexity.
}

\myglsentry{Intuitive Reasoning}
\myglsdesc{
Judgements on reasoning that rely only on \gref{facts} and rules of form \gref{if $A$, then $B$} (so-called "\textit{modus ponens}").
}

\myglsentry{Invalid Memory Access}
\myglsdesc{
May read unintended data and cause a page fault.
In consequence, may corrupt execution, crash or even compromise the running process.
}

\myglsentry{Lambda Abstraction}
\myglsdesc{
A theoretical abstraction of a classic procedure, as known in Pascal or C.
It was proposed in the 1930s by Alonzo Church in order to reason about computability.
}

\myglsentry{Lazy Evaluation}
\myglsdesc{
Expression evaluation from outside-in.
Initially, parameters in a procedure call are passed non-interpreted and are interpreted in a procedure only if referenced concretely.
Internally lazy expressions are assigned a symbol and passed that until some arithmetic or logical operation has to be performed.
}

\myglsentry{Levenshtein's Algorithm}
\myglsdesc{
The core algorithm for determining the minimal edit distance, an integer denoting the total edits required to equal two strings.
The algorithm is of cubic complexity and may for the same complexity by enriched by path segment tracing.
}

\myglsentry{Linear List}
\myglsdesc{
A list whose elements have no pointers or are all pointing to nil. In Reynolds' memory model the "," initialises a linear list.
}

\myglsentry{Literal (Assertion)}
\myglsdesc{
An atomic \gref{assertion} or its negation.
}

\myglsentry{LL-Parser}
\myglsdesc{
Parsing where \gref{input tokens} are processed from left to right, and rules of given corresponding \gref{formal grammar} are processed top-down.
}

\myglsentry{Local Variable}
\myglsdesc{
\gref{Automatically managed variable}
}

\myglsentry{Locality Principle}
\myglsdesc{
Principle after which local changes to some \gref{heap} must not affect the whole \gref{heap}.
If modelled correctly, then this may heavily improve heap analyses.
}

\myglsentry{Location}
\myglsdesc{
A variable, pointer or any path to any reachable object field.
}

\myglsentry{Logical Heap Operator}
\myglsdesc{
Logical implication, disjunction, conjunction, negation.
}

\myglsentry{Loop Invariant}
\myglsdesc{
A condition formula, which holds for the time a loop is executed.
The condition does not hold anymore when that loop is quit.
A full and exact invariant refers to all visible \gref{variables} from a loop's body in a single generalised formula.
The loop invariant shall not contain unused variables as well as redundant cases.
An invariant shall not contain superfluous cases.
Hence, invariants are challenging to derive automatically in a constructive way.
}

\myglsentry{LR-Parser}
\myglsdesc{
Parsing where \gref{input tokens} are processed from left to right, and rules of given corresponding \gref{formal grammar} are processed bottom-up.
}

\myglsentry{Mapped Garbage Collection}
\myglsdesc{
Regions of \gref{dynamic memory} are mapped and processed afterwards in parallel on the same principle as z-buffering works.
Regions that often are altered are scattered along axes to the map.
Closer regions are garbage collected depending on the chosen strategy.
}

\myglsentry{Memoization}
\myglsdesc{
Caching \gref{predicate} calls and functions.
In Prolog, it is called "\textit{tabling}".
}

\myglsentry{Memory Cell Overlapping / Sharing}
\myglsdesc{
When some memory cell is entirely or \gref{partially} interpreted differently, e.g. by varying \gref{pointers (types)}, or memory alignment, byte and bit-ordering, or programming language constructs, like \index{\texttt{union}} "\texttt{union}" in C-dialects.
}

\myglsentry{Memory Leak}
\myglsdesc{
\gref{Memory cells} in \gref{dynamic memory} unreachable from \gref{stack variables}.
}

\myglsentry{Memory Location}
\myglsdesc{
Memory addresses replacements of program variables in \gref{stack} and \gref{heap}.
}

\myglsentry{Memory Model (by Burstall)}
\myglsdesc{
\gref{Simple heaps} in SL have the form: $location \mapsto address$.
}

\myglsentry{Memory Model (by Reynolds)}
\myglsdesc{
\gref{Simple heaps} in SL have the form: $location \mapsto value$.
}

\myglsentry{(Memory/Heap) Pointer}
\myglsdesc{
Pointer is of pointer type, where its memory cell contains an address to some content in virtual memory, which is then interpreted according to the pointer type from its declaration.
Dereferencing a valid memory cell accesses the pointer's \gref{content}.
Pointer may be of type \gref{pointer} of some further type.
}

\myglsentry{Meta Scheme}
\myglsdesc{
In contrast, to \gref{fold}/\gref{unfold}, it is a universal schema in proofs similar to, e.g. \gref{analogy}, \gref{induction}, \gref{abduction}.
}

\myglsentry{Method Signature}
\myglsdesc{
Consists of \gref{method name}, if within a class then a prefixed \gref{class} name, as well as all types of formal parameters (maybe further distinguished in incoming and outcoming parameters).
}

\myglsentry{Model Checking}
\myglsdesc{
Model described by (in-)equalities for a given \gref{(formal) theory}.
\gref{Calculation state} is described by a formula that needs to be checked by a \gref{(SMT) solver} to be constructed.
}

\myglsentry{Monoid}
\myglsdesc{
A discrete structure with one \gref{total} operation for which closure, associativity, neutrality are obeyed.
Half-group is a synonym.
}

\myglsentry{Natural Deduction}
\myglsdesc{
Logical proving scheme based on assumptions and consequences or its negations.
The structure of a naturally deducted proof is very close to \gref{Hoare calculi} in general.
An alternative major to natural deduction is Robinson's resolution method.
}

\myglsentry{Non-Interleaving Heap Cells}
\myglsdesc{
A \gref{heap graph} cell is non-interleaving between two \gref{heap} graphs, if both graphs do no share a common vertex at the same time.
}

\myglsentry{O-Term (in Prolog)}
\myglsdesc{
A Prolog term that can, but not necessarily has to, be ground.
Applied to a subgoal, it is expected the term is bound afterwards.
}

\myglsentry{Object Calculi}
\myglsdesc{
(1st kind) class-based calculus after Abadi-Leino using \gref{classes}, very close to classes in Java and C++, or (2nd kind) object-based calculus after Abadi-Cardelli, where everything is an object, a record of methods and fields with possibly nested further objects all without classes, very close to objects in Baby Modula 3 and a few less known mostly experimental scripting languages from Cardelli.
}

\myglsentry{Object Field}
\myglsdesc{
A memory location for a given \gref{class} field will be assigned a memory address during runtime belonging to a surrounding object memory layout.
A field maybe added, inserted and deleted statically by \gref{class inheritance} or dynamically by \gref{code introspection}.
}

\myglsentry{Object Invariant}
\myglsdesc{
An invariant \gref{assertion} (cf. \gref{loop invariant}), which shall always hold and regularly be checked during the lifespan of some \gref{object instance}, particularly before and after construction, destruction, arbitrary \gref{method} calls and \gref{field} modifications.
}

\myglsentry{Object Life-Cycle}
\myglsdesc{
The period between object creation until destruction.
}

\myglsentry{Object Method}
\myglsdesc{
A procedure within the scope of some \gref{class}, which may refer to the class \gref{fields} as well as to the \gref{this} keyword.
In contrast to fields, in classic \gref{object calculi} without \gref{introspecting} code may not be modified during runtime.
}

\myglsentry{(Full) Object Specification}
\myglsdesc{
Includes \gref{object invariant}, \gref{pre-condition}, \gref{post-condition} of all methods and all inner specifications inside methods.
}

\myglsentry{Object Type}
\myglsdesc{
\gref{class}
}

\myglsentry{Partial Correctness}
\myglsdesc{
Correctness without the claim to terminate.
}

\myglsentry{Partial Heap Specification}
\myglsdesc{
\gref{Heap} specification containing special operands, some anonymous placeholder which is existentially quantified for each occurrence.
It is used to summarise unspecified heap fragments. 
}

\myglsentry{Partial(-ly defined) Hoare Triple}
\myglsdesc{
If execution of some program statement for some \gref{Hoare triple} not necessarily terminates, then its \gref{post-condition} is undetermined.
}

\myglsentry{Peano Axioms}
\myglsdesc{
Given the operations "$+$", "$\cdot$" and some \gref{ordering on} (in the common case some lifted and equivalent to) a set of natural numbers, on which popular axioms of addition are established.
For example, equality can be defined as equality of the set's first element and (inductively continued) equality of the following numbers (cf. with \gref{Church's arithmetics}).
}

\myglsentry{Pointer to a Pointer}
\myglsdesc{
\gref{Pointer} whose content is pointer again.
}

\myglsentry{Pointer Rotation}
\myglsdesc{
Approach based on tiny pointer operation within the programming language to be reasoned, e.g. clock-wise permutation.
Rotations seem very easy, but in fact, may become very difficult to maintain and apply due correctly to numerous not apparent exclusions.
}

\myglsentry{Points-To Heap Model}
\myglsdesc{
A \gref{dynamic memory} model consists of base \gref{assertions} of kind: "$location \times content$".
}

\myglsentry{Polymorphism}
\myglsdesc{
A dynamic effect when a \gref{class} method may on runtime be different from the declared class.
In contrast to polymorphism by subclassing in C++ and Java (so-called "\textit{ad-hoc}), some \gref{functional programming languages} provide "\textit{real polymorphism}" evaluating symbols and all-quantified variables.
}

\myglsentry{Post-Condition}
\myglsdesc{
The third component $Q$ of a \gref{Hoare triple}.
}

\myglsentry{Pre-Condition}
\myglsdesc{
The first component $P$ of some \gref{Hoare triple}.
}

\myglsentry{Predicate Cut (in Prolog)}
\myglsdesc{
A reserved built-in \gref{subgoal} "\textbf{!}" with zero arguments for cutting the \gref{proof search} on further alternatives.
If the cut is known to be safe applied to some algorithm, it is called "\textit{green cut}", otherwise, called "\textit{red cut}".
}

\myglsentry{Predicate Fold}
\myglsdesc{
A predicate contains a predicate head with a name and term arguments on the left-hand side, and a predicate's body containing its definition refers to arguments from the left-hand side.
Fold describes within some \gref{assertion} a predicate's body can be unified and replaced by its predicate head.
If the substitution is reversed, it is a \gref{predicate unfold}.
It is called fold since a predicate's body often is longer than its head, and therefore the assertion after a fold is shorter than before.
}

\myglsentry{Predicate Invertibility}
\myglsdesc{
If a Prolog predicate is known to be sound and complete for a certain term arguments placement, then that predicate is invertible when term arguments are flipped, s.t. input term become output, and the predicate will still be terminated.
As input parameters may commute and output parameters too, permutations of flipped arguments also need to be considered.
An example is ISO-Prolog compliant predicate \texttt{append/3}.
}

\myglsentry{Predicate Unfold}
\myglsdesc{
Unfold describes within some \gref{assertion} its predicate body replaces a matching predicate's head.
It is the opposite of \gref{predicate fold}.
}

\myglsentry{Primitive Recursion}
\myglsdesc{
Recursion with apriori bound number of iterations.
}

\myglsentry{(Principle of) Non-Repetitiveness}
\myglsdesc{
Simplification principle after which parts of some \gref{heap} may not spatially re-occur.
}

\myglsentry{Post's Correspondence Problem}
\myglsdesc{
Given two \gref{formal grammars}.
The question of whether the two \gref{formal languages} generated by both are in general undecidable.
}

\myglsentry{(Programming) Paradigm}
\myglsdesc{
E.g. \gref{declarative}, imperative, multi-paradigmal \cite{denti05}.
}

\myglsentry{Programming By Proving}
\myglsdesc{
(Philosophical) concept stating programming can be done by proving.
}

\myglsentry{(Proof) Abduction}
\myglsdesc{
Logical reasoning (proof) \gref{meta-scheme} based on \gref{facts} and implications.
It guesses from a conclusion a potentially "\textit{missing bit}" in the antecedent.
}

\myglsentry{Proof Abstraction}
\myglsdesc{
Parameterising some proof in order to reuse it or parts of it w.r.t. planning.
}

\myglsentry{Proof Analogy}
\myglsdesc{
During \gref{theorem} proving a \gref{meta-scheme} that \gref{abstracts} from already accomplished proofs and tries to apply by tiny modification to a current problem that seems similar to a previously accomplished.
}

\myglsentry{Proof Coherence}
\myglsdesc{
If a proof fibres according to proper proof rules, application and each fibre succeeds with the same result, then the proof rules are coherent.
}

\myglsentry{Proof Deduction}
\myglsdesc{
As logical reasoning (proof) \gref{meta-schema} as \gref{proof abduction}.
It implies from two antecedents a conclusion.
}

\myglsentry{Proof Fact}
\myglsdesc{
A logical fact is believed to be true within some discourse, particularly an axiomatic rules system.
}

\myglsentry{Proof Heuristics}
\myglsdesc{
An assumption made, often due to too high practical complexity, eases and approximates some (sub-)optimal solution, which is often sufficient, however, but not guaranteed.
In terms of formal proofs, heuristics (as one form of a \gref{meta-scheme}) do not accomplish proofs, but they may occasionally be used to transform a non-matching \gref{(calculation) state} into a matching.
}

\myglsentry{Proof Induction}
\myglsdesc{
As logical reasoning (proof) \gref{meta-schema} as a \gref{proof deduction}.
Induction is based on a generalised, often not directly provable, but not contradicting, assumption that is considered a true \gref{assertion}, unless found differently, and used in a proof.
}

\myglsentry{Proof Lemma}
\myglsdesc{
An auxiliary \gref{theorem} without a separate unique meaning often used as an intermediate step in a proof.
Lemmas may also be \gref{folded} or \gref{unfolded}.
}

\myglsentry{Proof Refutation}
\myglsdesc{
A possible negative result of a theorem-\gref{proof} indicating it is not valid.
Often some \gref{counter-example} as a single case explains why a theorem does not hold.
}

\myglsentry{Proof Search}
\myglsdesc{
The search for some positive proof for a given \gref{theorem}.
\gref{(Proof) tactics} may be applied sometimes in order to speed up the search.
}

\myglsentry{(Proof) Tactics}
\myglsdesc{
A \gref{heuristics} of a template sequence of predefined fixed verification steps applied to assist during a \gref{proof search}.
}

\myglsentry{(Proof) Theorem}
\myglsdesc{
An \gref{assertion} that can be proven using a selected set of axioms and rules.
}

\myglsentry{Proof Theory}
\myglsdesc{
A set of axioms and sets for some considered discourse in which proofs are taken out.
}

\myglsentry{Proof Tree}
\myglsdesc{
The structure of some proof. In this work, we are only interested in \index{Hoare calculus} Hoare calculi. Any proof tree is a tree whose vertices represent a \gref{verification state}.
Edges are transitions labelled with the rule applied.
All leaves in a prove tree are the result of axioms.
}

\myglsentry{Prolog Query}
\myglsdesc{
One or more \gref{subgoals} in Prolog to defined \gref{Prolog rules} separated by a comma and ending with a period.
A query may yield zero or more solutions, depending on how many solutions are derivable for the given rules.
}

\myglsentry{Prolog Rules}
\myglsdesc{
See \gref{Prolog predicates}
}

\myglsentry{Prolog Predicate}
\myglsdesc{
A Prolog notation for predicates in first-order predicate logic, which consist of head and body.
It is often called Horn-rules.
It consists of \gref{ingoing} and \gref{outgoing} terms as argument lists in the predicate head.
The body has zero or more subgoals that may refer to the predicate's arguments.
If a predicate has zero subgoals, the Prolog predicate may also be called \gref{fact}.
}

\myglsentry{Prolog Subgoal}
\myglsdesc{
See \gref{Prolog predicate}.
}

\myglsentry{Rapid Prototyping}
\myglsdesc{
A prototype whose only purpose is to document or reject if doable in practice or not.
Its main purpose is to get at least some exploratory estimation.
Often suffers from hard limitations and countless errors.
}

\myglsentry{(Recursive/Non-Recursive Term) Occurs-Check}
\myglsdesc{
Prolog does not check by default if a term contains self-applications, e.g. \texttt{X=s(1,s(2,X)}.
}

\myglsentry{Resolution Method (by Robinson)}
\myglsdesc{
In contrast to \gref{natural deduction}, resolution proves a theorem by showing the falsified theorem refutes.
The given theorem has to be in conjunctive normal-form.
After negation, a disjunctive normal-form is obtained where each disjunct/clause has to be proven false separately.
A modified resolution method is Prolog's default reasoning scheme over predicates.
}

\myglsentry{Safe Pointer Operations}
\myglsdesc{
See \gref{pointer rotation}.
Safe rotations are to distinguish from operations with flaky and hard to predict operational behaviour.
}

\myglsentry{Scope}
\myglsdesc{
A program slice in which \gref{stack-local variables} are pushed to \gref{stack} on entry and popped on exit.
}

\myglsentry{Self-Applicable Terms}
\myglsdesc{
See \gref{recursive terms}
}

\myglsentry{Self-Inverse Operation}
\myglsdesc{
Operation, when being applied twice, returns the origin argument.
Often \gref{totality} helps in showing generalised properties.
}

\myglsentry{Semantic Function}
\myglsdesc{
Defined functions for denoting program properties.
}

\myglsentry{Semi-Automated Theorem Proving}
\myglsdesc{
A proof either terminates successfully or stops unfinished, requiring further input, which might either be missing lemmas, equalities (see \gref{proof theory}).
If nothing is missing and the proof still does not finish, then the considered theorem just cannot be proven.
}

\myglsentry{Separation Of Concerns}
\myglsdesc{
Separation of roles and interfaces (e.g. into objects and procedures) by responsibilities.
}

\myglsentry{Separation Logic}
\myglsdesc{
See \gref{points-to heap model}.
}

\myglsentry{Simple Heap (=Simplex)}
\myglsdesc{
A simple assertion of form: $a \mapsto b$ in Reynolds' \gref{memory model}.
Others are \gref{complex heap}, \gref{abstract predicate}.
}

\myglsentry{ShapeAnalysis}
\myglsdesc{
A \gref{dynamic memory} model in which elements are described by its (geometric) shape.
}

\myglsentry{SMT-solver}
\myglsdesc{
An automated solution to a formula of (in-)equalities.
}

\myglsentry{Spatial Heap Operator}
\myglsdesc{
Describes a binary operator expressing how two heaps are spatial to each other in \gref{heap address space}.
}

\myglsentry{SSA}
\myglsdesc{
\gref{IR} for data dependency, which for each local variable strictly has one defining value and several references per basic block.
}

\myglsentry{Stack Allocation}
\myglsdesc{
See \gref{stack}
}

\myglsentry{Stack}
\myglsdesc{
Organised memory segment allocated by the OS.
It accommodates all \gref{local variables}.
When a procedure is called, local variables are put to the actual reserved \gref{stack window} and freed when leaving.
}

\myglsentry{Stack(-ed) Variable}
\myglsdesc{
See \gref{stack}
}

\myglsentry{Stack Unwinding}
\myglsdesc{
Restoring \gref{stack} required by exception-handling in the event of failure.
}

\myglsentry{Static Analysis}
\myglsdesc{
A code analysis performed without running a program.
}

\myglsentry{Static Variable}
\myglsdesc{
Allocation happens once on process creation.
Deallocation happens once on process release.
}

\myglsentry{Strong Post-condition}
\myglsdesc{
Strengthening some \gref{assertion} $Q'$ of post-condition $Q$ restricts the latter. So, $Q \Rightarrow Q'$ implies $Q' \subseteq Q$. 
}

\myglsentry{Structural Rules}
\myglsdesc{
In contrast to \gref{subtraction rules}, are rules focused on the second component of \gref{Hoare triples}.
}

\myglsentry{Struct-Alignment}
\myglsdesc{
Padding fields in contiguous blocks for structs or \gref{objects} in order to achieve higher access throughput.
}

\myglsentry{Subtraction Rules}
\myglsdesc{
Once applied, rules reduce the remaining proof for a given \gref{theorem} (cf. with \gref{structural rules}).
}

\myglsentry{Subtyping}
\myglsdesc{
A class subtype is any inherited \gref{class} according to its class hierarchy.
}

\myglsentry{Symptom (of an Error)}
\myglsdesc{
See \gref{erroneous code localization}.
}

\myglsentry{Synthesized Attribute}
\myglsdesc{
An attribute in an \gref{attributed grammar} is passed from the called rule up to the calling rule.
See also \gref{inherited attribute}.
}

\myglsentry{Tableaux Method (by Beth)}
\myglsdesc{
Logical reasoning where all derivations may only be applied that are encoded in a given table.
}

\myglsentry{Token}
\myglsdesc{
An atomic unit resulting from lexical analysis.
For example, the assignment symbol \texttt{:=}.
}

\myglsentry{Total Function}
\myglsdesc{
A function that accepts any element from a well-defined domain and terminates for each.
}

\myglsentry{Trieber's Stack}
\myglsdesc{
A \gref{(execution) stack} model supported by \gref{separation logic} in multi-threading environments.
}

\myglsentry{Type Checking}
\myglsdesc{
Checks for a given program all assignments and expressions are valid w.r.t. applied variable types according to its declaration.
}

\myglsentry{Undermined Specification}
\myglsdesc{
A specification containing unbound symbols, an abstract specification.
}

\myglsentry{Variable Mode}
\myglsdesc{
A variable's mode depends on its life-span and its scope of visibility.
Modi might be \gref{automated}, \gref{static}, \gref{dynamic}, \gref{global}, register, thread-local or others.
}

\myglsentry{Warren's Abstract Machine}
\myglsdesc{
In addition to traditional operational semantics WAM defines an abstract machine based on an extended execution stack, where elements are (Prolog) terms with forward and backward references allowed towards (sub-)terms across several stack windows.
Unification makes sure linked terms are interpreted as parameters by reference.
}

\myglsentry{Weak Pre-condition}
\myglsdesc{
Weakening some pre-condition $P$ yields $P'$, s.t. $P'$ is a generalisation of $P$, $P' \Rightarrow P$ implies $P \subseteq P'$.
}

  \glsaddall
  \printglossary[title={Glossary}]




  
  \newpage
  \listoftheorems
  \addcontentsline{toc}{section}{List of theorems}

  \listoffigures
  \addcontentsline{toc}{section}{List of figures}

  \addcontentsline{toc}{section}{Index}
  \printindex

\end{document}